\newcommand*{\geburtsdatum}[1]{\def\@geburtsdatum{#1}}  %
\newcommand*{\geburtsort}[1]{\def\@geburtsort{#1}}      %
\newcommand*{\thesentyp}[1]{\def\@thesentyp{#1}}        %
\newcommand*{\gutachtera}[1]{\def\@gutachtera{#1}}      %
\newcommand*{\gutachterb}[1]{\def\@gutachterb{#1}}      %
\newcommand*{\abgabe}[1]{\def\@abgabe{#1}}
\newcommand*{\erscheinungsjahr}[1]{\def\@erscheinungsjahr{#1}}
\renewcommand\maketitle{
\KOMAoptions{twoside = false}   %
\begin{titlepage}               %
    \begin{center}
        \vspace{4ex}
        \parbox{14cm}{\centering\huge\bfseries \@title \par}\\[18ex]      %
        {\Large \@thesentyp}\\[4ex]
        zur\\
        Erlangung des Doktorgrades (Dr.\ rer.\ nat.)\\
        der\\
        Mathematisch-Naturwissenschaftlichen Fakult\"at\\
        der\\
        Rheinischen Friedrich-Wilhelms-Universit\"at Bonn\\[10ex]
        vorgelegt von\\[4ex]
        {\LARGE\@author}\\[4ex]
        aus\\[4ex] \@geburtsort\\[8ex]
        {Bonn, \@abgabe}
    \end{center}
    \vspace*{\fill}             %
    \clearpage                  %
    \newpage\null\thispagestyle{empty}\newpage %
    \begin{center}
      Angefertigt mit Genehmigung der Mathematisch-Naturwissenschaftlichen Fakult\"at\\
      der Rheinischen Friedrich-Wilhelms-Universit\"at Bonn
    \end{center}
    \vspace*{\fill}
    1. Gutachter: \@gutachtera\\[2ex]
    2. Gutachter: \@gutachterb\\[2ex]
    Tag der Promotion: 06. Juli 2018\\[2ex]
    Erscheinungsjahr: \@erscheinungsjahr
\end{titlepage}         %
\newpage\null\thispagestyle{empty}\newpage %
\KOMAoptions{twoside}   %
\global\let\maketitle\relax
\global\let\@author\@empty
\global\let\@geburtsdatum\@empty
\global\let\@geburtsort\@empty
\global\let\@thesentyp\@empty
\global\let\@gutachtera\@empty
\global\let\@gutachterb\@empty
\global\let\@date\@empty
\global\let\@title\@empty
\global\let\@erscheinungsjahr\@empty
\global\let\@abgabe\@empty
\global\let\title\relax
\global\let\author\relax
\global\let\geburtsdatum\relax
\global\let\geburtsort\relax
\global\let\thesentyp\relax
\global\let\gutachtera\relax
\global\let\gutachterb\relax
\global\let\date\relax
\global\let\erscheinungsjahr\relax
\global\let\abgabe\relax
\global\let\and\relax
}
\author{Johannes Holke}
\date{\myformat\today}
\title{Scalable Algorithms for Parallel Tree-based Adaptive Mesh Refinement
with General Element Types}
\definecolor{linkcolor}{rgb}{0,0,0} %
\UseAllTwocells \SelectTips{eu}{10} %
\UseAllTwocells \SelectTips{eu}{10} %
\newif\ifANMan
\newif\ifDEBUG
\newcommand{\set}[1]{\left\lbrace\, #1 \,\right\rbrace}
\newcommand{\setm}[2]{\left\lbrace\, #1 \,\middle|\, #2 \,\right\rbrace}
\newcommand{\IL}{{\mathbb{L}}}
\newcommand{\IR}{{\mathbb{R}}}
\newcommand{\IZ}{{\mathbb{Z}}}
\newcommand{\IN}{{\mathbb{N}}}
\newcommand{\abst}[1]{\, #1 \,} %
\newcommand{\mytabvspace}{\vphantom{${X^X}^X$}} %
\newcommand{\myhugetabvspace}{{\huge\vphantom{${X^X}^X$}}} %
\newcommand{\forest}[1]{{\mathscr{#1}}} %
\newcommand{\Comment}[1]{\tcc*[r]{#1}}   %
\newcommand{\IfComment}[1]{\tcc*[f]{#1}} %
\let\oldnl\nl%
\newcommand{\nonl}{\renewcommand{\nl}{\let\nl\oldnl}}%
\newcommand{\algomod}{\abst{\scalebox{0.9}{\%}}} %
\newcommand{\algoand}{\textbf{ and }} %
\newcommand{\algor}{\textbf{ or }} %
\newcommand{\algotrue}{\textbf{True }} %
\newcommand{\algofalse}{\textbf{False }} %
\newcommand{\algoresult}{\nonl\textbf{Result:\quad }} %
\newcommand{\algofor}[1]{\For{\upshape #1}} %
\newcommand{\algoforcom}[2]{\For(#1){\upshape #2}} %
\newcommand{\algoif}[1]{\If{\upshape #1}} %
\newcommand{\algoifcom}[2]{\If(#1){\upshape #2}} %
\newcommand{\algoeif}[1]{\eIf{\upshape #1}} %
\newcommand{\algoeifcom}[2]{\eIf(#1){\upshape #2}} %
\newcommand{\bitwand}{\;\&\;} %
\newcommand{\aorgl}{\abst{|\hspace{-0.5ex}=}} %
\newcommand{\cfbox}[2]{%
    \colorlet{currentcolor}{.}%
    {\color{#1}%
    \fbox{\color{currentcolor}#2}}%
  }
\newcommand{\e}{\text{e}} %
\newcommand{\ie}{i.e.\xspace} %
\newcommand{\adef}{\, {:=} \,}
\newcommand{\agl}{\abst{=}}
\newcommand{\indexset}[1]{\set{0,\dots,#1-1}} %
\newcommand{\ohne}{\backslash}
\newcommand{\pforest}{\texttt{p4est}\xspace} %
\newcommand{\tetcode}{\texttt{t8code}\xspace} %
\newcommand{\dtype}[1]{{#1}\xspace}
\newcommand{\Int}{\dtype{int}}
\newcommand{\Tet}{\texttt{Tet}\xspace}
\newcommand{\aTet}{{\Tet}} %
\newcommand{\ghostb}{\texttt{Ghost\_v1}\xspace} %
\newcommand{\ANM}[1]{\footnote{ANMERKUNG: #1}}  %
\newcommand{\ANMtext}[1]{\footnotetext{ANMERKUNG: #1}}
\newcommand{\ANMmarkn}[1]{\footnotemark[#1]}           %
\newcommand{\ANMtextn}[2]{\footnotetext[#1]{ANMERKUNG: #2}}
\newcommand{\ANM}[1]{}  %
\newcommand{\ANMtext}[1]{}
\newcommand{\ANMmarkn}[1]{}          %
\newcommand{\ANMtextn}[2]{}
\newcommand{\figlabel}[1]{\caption*{\color{red} #1}\label{#1}}
\newcommand{\figlabel}[1]{\label{#1}}
\definecolor{mygreen}{rgb}{0,0.7,0}
\newcommand{\red}[1]{{\color{red}#1}}     %
\newcommand{\green}[1]{{\color{mygreen}#1}} %
\DeclareMathOperator{\dist}{dist}
\DeclareMathOperator{\vol}{vol} %
\DeclareMathOperator{\inter}{\dot\perp} %
\DeclareMathOperator{\type}{type}				%
\DeclareMathOperator{\sign}{sign}				%
\DeclareMathOperator{\cid}{\textrm{cube-id}} %
\DeclareMathOperator{\sfcf}{\mathcal I} %
\newcommand{\ainter}{\abst{\inter}}
\newcommand{\lowlevel}[2]{
\vspace{3ex}
\begin{minipage}{\dimexpr\textwidth-3ex}
\texttt{#1}

#2
\end{minipage}
\vspace{3ex}
}
\newcolumntype{L}[1]{>{\raggedright\arraybackslash}p{#1}}
\newcolumntype{C}[1]{>{\centering\arraybackslash}p{#1}}
\newcolumntype{R}[1]{>{\raggedleft\arraybackslash}p{#1}}
\theoremstyle{plain}
\newtheorem{theorem}{Theorem}[chapter]
\newtheorem{proposition}[theorem]{Proposition}
\newtheorem{lemma}[theorem]{Lemma}
\newtheorem{pardgm}[theorem]{Paradigm}
\newtheorem{property}[theorem]{Property}
\newtheorem{corollary}[theorem]{Corollary}
\newtheorem{conjecture}[theorem]{Conjecture}
\theoremstyle{definition}
\newtheorem{example}[theorem]{Example}
\newtheorem{remark}[theorem]{Remark}
\newtheorem{definition}[theorem]{Definition}
\begin{document}
\maketitle

\chapter*{Summary}

In this thesis, we develop, discuss and implement algorithms for scalable
parallel tree-based adaptive mesh refinement (AMR) using space-filling curves
(SFCs). We create an AMR software that works independently of the used element
type, such as for example lines, triangles, tetrahedra, quadrilaterals,
hexahedra, and prisms. Along with a detailed mathematical discussion, this
requires the implementation as a numerical software and its validation, as
well as scalability tests on current supercomputers.

For triangular and tetrahedral elements (simplices) with red-refinement
(1:4 in 2D, 1:8 in 3D), we
develop a new SFC, the tetrahedral Morton space-filling curve (TM-SFC). Its
construction is similar to the Morton index for quadrilaterals/hexahedra, as it
is also based on bitwise interleaving the coordinates of a certain vertex of
the simplex, the anchor node.
Additionally, we interleave with a new piece of information, the so called
type. The type distinguishes different simplices with the same anchor node.
To store the necessary information of a $d$-dimensional simplex, we require
$10$ bytes per
triangle and $14$ bytes per tetrahedron, which is only one byte more than used
in the classical Morton index for quadrilaterals ($9$ bytes) and hexahedra
($13$ bytes).
For these simplices, we develop element local algorithms such as constructing the
parent, children, or face-neighbors of a simplex, and show that most of them
are constant-time operations independent of the refinement level.

With SFC based partitioning it is possible that the mesh elements that are
partitioned to one process do not form a face-connected domain. The amount of
parallel communication among processes with neighboring domains rises with the
number of face-connected components. 
We prove the following upper bounds for the number of face-connected
components of segments of the TM-SFC: With a maximum refinement
level of $L$, the number of face-connected components is bounded by $2(L-1)$ in
2D and $2L+1$ in 3D. Additionally, we perform a numerical investigation of the
distribution of lengths of SFC segments.

Furthermore, we develop a new approach to partition and repartition a coarse
(input) mesh among the processes. Compared to previous methods it optimizes for
fine mesh load-balance and reduces the parallel communication of coarse mesh
data. We discuss the coarse mesh repartitioning algorithm and demonstrate that
our method repartitions a coarse mesh of $371\e9$ trees on 917,504 processes
(405,000 trees per process) on the Juqueen supercomputer in $1.2$ seconds.

We develop an AMR concept that works independently of the element type;
achieving this independence by strictly distinguishing between functions that
operate on the whole mesh (high-level) and functions that locally operate on a
single element or a small set of elements (low-level).
We define an application programming interface (API) of low-level functions and
develop the high-level functions such that every element-local operation is
performed by a low-level function. Thus, by using different implementations of
the low-level API for different meshes, or different parts of the same mesh, we
are able to use different types of elements.

Many numerical applications, for example finite element and finite volume
solvers, require knowledge of a layer of ghost elements. Ghost elements of
a process are those elements that lie on a different process but are 
(face-)neighbors of a process local element.
We discuss a new approach to generate and manage these ghost elements that fits
into our element-type independent approach. We define and describe the
necessary low-level algorithms. Our main idea is the computation of
tree-to-tree face-neighbors of an element via the explicit construction of the
element's face as a lower dimensional element.
In order to optimize the runtime of this method we enhance the algorithm with a
top-down search method from Isaac, Burstedde, Wilcox, and Ghattas, and
demonstrate how it speeds up the computation by factors of 10 to 20
achieving runtimes comparable to state-of-the art implementations with
fixed element types.

With the ghost algorithm we build a straight-forward ripple version of the
2:1 balance algorithm. This is not an optimized version but 
it serves
as a
feasibility study for our element-type independent approach.

We implement all algorithms that we develop in this thesis in the new
AMR library \tetcode, using the TM-SFC for simplicial and tetrahedral elements.
Our modular approach allows us to reuse existing software, which we demonstrate
by using the library \pforest for quadrilateral and hexahedral elements.
In a concurrent Bachelor's thesis by David Knapp (INS, Bonn) the necessary
low-level algorithms for prisms were developed.
With \tetcode we demonstrate that we can create, adapt, (re-)partition,
and balance meshes, as well as create and manage a ghost layer.
In various tests we show excellent strong and weak scaling behavior of our
algorithms on up to 917,504 parallel processes on the Juqueen and 
Mira supercomputers using up to $858\e9$ mesh elements.

We conclude this thesis by demonstrating how an application can be coupled with
the AMR routines. We implement a finite volume based advection solver 
using \tetcode and show applications with triangular,
quadrilateral, tetrahedral, and hexahedral elements, 
as well as 2D and 3D hybrid meshes, the latter consisting of tetrahedra,
hexahedra, and prisms.

Overall, we develop and demonstrate a new simplicial SFC and create a fast and
scalable tree-based AMR software that offers a flexibility and generality that
was previously not available.

 \section*{Acknowledgements}
First of all, I want to thank my advisor Prof.\ Dr.\ Carsten Burstedde for his
constant support and feedback during the development of this thesis, various
helpful discussions and lessons in parallel C programming.
I would like to thank Prof.\ Dr.\ Michael Griebel for being the second reviewer
of this thesis. Thank you to Prof.\ Dr.\ Matthias Kreck for being a
great mentor. 
I appreciate the enlightening atmosphere at the INS and thank all my current and
previous colleagues. Special thanks to Jose Alberto Fonseca.

Additional thanks to Tobin Isaac for providing the
interface to the MPI-3 shared array functionality in the \texttt{sc} library. 

I appreciate the financial support by the Bonn International Graduate
School of Mathematics (BIGS) and travel support by the Bonn Hausdorff Center
for Mathematics (HCM).

I gratefully acknowledge the Gauss Centre for Supercomputing e.V.
(\url{www.gauss-centre.eu}) for funding this project by providing computing
time through the John von Neumann Institute for Computing (NIC) on the GCS
Supercomputer JUQUEEN at Jülich Supercomputing Centre (JSC).
Additionally, this research used resources of the Argonne Leadership Computing
Facility, which is a DOE Office of Science User Facility supported under
contract DE-AC02-06CH11357.

Danke an meine Familie f\"ur ihre Unterst\"utzung.
Vielen Dank an Felix Boes f\"ur viele produktive und lustige Nachmittage
und f\"ur au\ss ergew\"ohnliche Freundschaft. 

Mein gr\"o\ss ter Dank gilt
Clelia. Danke f\"ur alles, insbesondere f\"ur deine bedingungslose
Unterst\"utzung, viel Geduld und dass du immer die richtigen Worte zur
Motivation findest. \emph{Non ti ringrazier\`{o} mai abbastanza per tutto
il bene che mi hai fatto.}
 \tableofcontents
\clearpage
\chapter{Introduction}

Numerical simulations for the solution of partial differential equations (PDEs)
have a wide range of applications in scientific and industrial computing.
Almost all of the common methods, including the finite difference method (FD),
the finite element method (FEM), the finite volume method (FV), and the
discontinuous Galerkin method (DG) use meshes to discretize a given domain on
which to solve a PDE, e.g.~\cite{Braess97,SuliSchwabHouston00,LeVeque02}.

Most commonly, meshes consist of quadrilaterals or triangles in 2D, and
hexahedra or tetrahedra in 3D. Triangles and tetrahedra are in general better
suited to accurately approximate complex domains~\cite{Shewchuk96, Si06}.
However, with quadrilaterals and hexahedra we can use less elements to model the
same domain size and many numerical schemes can exploit a tensor product
structure and are thus easier to implement on these element types
\cite{Peano90, StewartEdwards04, TuOHallaronGhattas05, SundarSampathBiros08,
Zumbusch02}.
In some use cases, especially in industrial engineering applications, both
advantages are needed, which motivates the use of hybrid meshes with multiple
element types. In this case, it is necessary to use prisms or pyramids to
transition between hexahedra and
tetrahedra~\cite{Marcum95,VinchurkarLongest08,MartineauStokesMundayEtAl06,KirbyBrazellYangEtAl17}.
Less common but also in use are hybrid meshes with two element types, either
hexahedra and prisms, or tetrahedra and
prisms~\cite{De11,KhawajaKallinderis00}. 

One of the most successful tools to improve the performance of numerical
simulations is the concept of adaptive mesh refinement (AMR),
e.g.~\cite{BabuskaRheinboldt78,Dorfler96,SteinmanMilnerNorleyEtAl03}. 
For all numerical methods described above, a finer mesh resolution results in a
reduction of the computational error and AMR describes the concept of changing
this resolution locally; thus, it is possible to maintain fine resolutions in
those parts of the mesh where a smaller error is needed while keeping the mesh
resolution coarse elsewhere. For this decision process, we may utilize local
error
estimators~\cite{BraessVerfurth96,BabuskaRheinboldt78,ZienkiewiczKelleyGagoEtAl82,VendittiDarmofal00}.
Hence, with AMR we can reduce the number of mesh elements---and thus
the memory footprint and runtime of numerical solvers---significantly compared
to uniform meshes, which use the same mesh resolution in the whole 
mesh~\cite{Klein99,BursteddeStadlerAlisicEtAl13,MuellerBehrensGiraldoEtAl13}.
However, using AMR, especially in a parallel high-performance computing (HPC)
environment, introduces a substantial overhead in mesh management.
Particularly demanding tasks include: mesh refining and coarse\-ning,
(re-)\-par\-titioning in parallel, the creation of (off-process) ghost elements,
random access of mesh elements, and more.
Therefore, the implementation of complete AMR frameworks is often outsourced into
AMR-specialized software libraries, such as 
\texttt{BoxLib} and its
 successor \texttt{AMReX}~\cite{boxlib,amrex},
 \texttt{Chombo}~\cite{ColellaGravesKeenEtAl07},
 \texttt{Enzo}~\cite{BryanNormanOSheaEtAl14},
\texttt{libMesh}~\cite{KirkPetersonStognerEtAl06},
\texttt{\pforest}~\cite{Burstedde10a},
\texttt{ParFUM}~\cite{LawlorChakravortyWilmarthEtAl06},
\texttt{Peano}~\cite{WeinzierlMehl11},
\texttt{PUMI}~\cite{IbanezSeolSmithEtAl16},
\texttt{PYRAMID}~\cite{NortonLyzengaParkerEtAl04}.
 \texttt{SAMRAI}~\cite{WissinkHornungKohnEtAl01},
 \texttt{Uintah}~\cite{BerzinsLuitjensMengEtAl10}, and others.

A main challenge for AMR on HPC systems is storing and load-balancing the mesh
in parallel, in particular when the mesh changes frequently during the
computation.
Some examples are the simulation of earth mantle convection~\cite{BursteddeStadlerAlisicEtAl13,KronbichlerHeisterBangerth12}, blood flow~\cite{De11,MuellerSahniLiEtAl05},
two-phase flow and level-set methods~\cite{Albrecht16,MirzadehGuittetBursteddeEtAl16,KosterGriebelKevlahanEtAl98,Klitz14,DruiFiklKestenerEtAl16,DonathMeckeRabhaEtAl11},
and molecular dynamics~\cite{ArnoldLenzKesselheimEtAl12,PhillipsBraunWangEtAl05}.

One common method for storing the mesh is unstructured meshing, where the
element connectivity is modeled as a graph. Partitioning the mesh among
parallel processes is then delegated to graph based partitioning methods such
as ParMETIS or Scotch~\cite{KarypisKumar98, DevineBomanHeaphyEtAl02,
ChevalierPellegrini08}.
Well-known open source application codes that use unstructured meshes are
FEniCS \cite{LoggMardalWells12}, PLUM
\cite{OlikerBiswas98,OlikerBiswasGabow00}, OpenFOAM \cite{OpenCFD07}, or MOAB
from the SIGMA toolkit \cite{TautgesMeyersMerkleyEtAl04}.
While these methods allow for approximately optimal mesh partitions 
(i.e.\ in terms of surface-to-volume ratio) and provide maximum flexibility of
element-to-element connections, their complexity results in slow runtimes 
of the meshing routines and a high demand in memory.

A different approach to efficiently store and partition meshes are space-filling 
curves (SFCs)~\cite{Bader12, Zumbusch00, Sagan94, HaverkortWalderveen10}.
Instead of heuristically solving NP-hard graph partitioning problems, SFCs are
used to approximately solve the problem in linear runtime.
Starting with a single coarse element and a recursive refinement rule,
prescribing how we may replace an element with finer elements covering the same
area, we obtain a tree structure of all constructable elements. The tree's root
is the coarse element. A particular adaptive mesh resulting from the coarse
element corresponds to the leaves of a subtree. An SFC is a linear order of
these leaves. It allows us to efficiently store elements and element data in a
linear array according to these SFC indices. Partitioning to $P$ processes
reduces to splitting up the linear order into $P$ equally sized pieces. This
approach was used for quadrilaterals and hexahedra for example in the
\texttt{Octor}~\cite{TuOHallaronGhattas05} and
\texttt{Dendro}~\cite{SampathAdavaniSundarEtAl08} codes, which demonstrated
scaling to tens of thousands of processes.

The SFC approach has then been extended to multiple coarse elements, forming a
forest of trees~\cite{BangerthHartmannKanschat07,BursteddeWilcoxGhattas11,StewartEdwards04}.
With this technique it is possible to model any complex geometry as
computational domain~\cite{BangerthHartmannKanschat07, StewartEdwards04}.
The \pforest code~\cite{Burstedde10a} is a particular scalable
example of this approach.
\pforest uses quadrilateral and hexahedral elements with the Morton SFC
and was shown to scale to up to 3.1 million
processes~\cite{RudiMalossiIsaacEtAl15,MullerKoperaMarrasEtAl15,BursteddeGhattasGurnisEtAl10}.

In the $\textrm{\texttt{sam(oa)}}^2$ framework tree-based AMR with SFCs has been
implemented for triangles using the Sierpinski SFC~\cite{BaderRahnemaMeister11}
and was shown to scale to over a hundred thousand MPI
ranks~\cite{MeisterRahnemaBader16}. Extensions of the 2D Sierpinski curve to
3D tetrahedra exist~\cite{Bader12,Sagan94}, but we are not aware of a software
utilizing them for tree-based AMR.

\vspace{3ex}

We begin this thesis with developing a new SFC index for triangular and
tetrahedral elements (simplices) with red-refinement~\cite{Bey92}, the
tetrahedral Morton index (TM-index). Its construction is similar to the Morton
SFC index for
quadrilaterals/hexahedra~\cite{Morton66}, as it is also based on bitwise
interleaving the coordinates of a certain vertex, the anchor node.
Opposed to quadrilaterals/hexahedra, a simplex is not uniquely
determined by its anchor node and level. 
We identify a distinguishing property, the type of a simplex.
The TM-index is formed by bitwise interleaving the anchor node
coordinates and type of the simplex and all of its ancestors. 
To store the necessary information of a simplex, we require only the
coordinates of the anchor node plus the type of the simplex, resulting in $10$
bytes per triangle and $14$ bytes per tetrahedron. We demonstrate that
in terms of memory and runtime the TM-SFC is comparable to the Morton SFC
for quadrilaterals and hexahedra.
We prove upper bounds for the number of connected components of any segment
of the TM-SFC in 2D and 3D.
The simplicial meshes obtained via red-refinement contain so called hanging nodes,
faces and edged. These occur at neighboring elements of different refinement levels,
where for example a node of the finer element does not coincide with a node of
the coarser element, but lies on an edge/face of it.
So far, hanging nodes are relatively uncommon for simplicial meshes.  However,
since they are being successfully used with quadrilateral/hexahedral
AMR~\cite{BursteddeWilcoxGhattas11,SolinCervenyDolezel08,AinsworthPinchedez03},
we are certain that applications are able to incorporate them in their solvers
when using the TM-index. 

Continuing, we introduce a new approach for partitioning the coarse mesh among
the parallel processes. With coarse mesh partitioning we are able to scale the
size and complexity of the geometry representation with the number of parallel
processes. Typically, such coarse meshes are generated using external tools,
such as for example \texttt{Gmsh}~\cite{GeuzaineRemacle09} or
\texttt{TetGen}~\cite{Si06}.  Opposed to previously used methods, we allow
multiple owner processes for each tree.  We develop the necessary communication
pattern with a minimal number of senders, receivers, and messages.
This technique enables us to minimize the parallel communication during (re-)partitioning
of the coarse mesh while simultaneously maintaining an optimal load-balance of
the leaf elements.
We also discuss the identification and communication of ghost trees. These are
trees that are not local to a process, but are face-neighbor to a process's
local tree.

Another contribution of this thesis is the development of a new algorithm
to create the ghost elements of the mesh. The main advantage of our
method is that it works independently of the used type of elements and thus
readily applies, for example, to triangular, tetrahedral, quadrilateral, hexahedral,
and hybrid meshes. The core idea is to construct $d$-dimensional tree-to-tree
face-neighbor elements by explicitly building the $(d-1)$-dimensional face
and transforming its coordinates before building the neighbor element from it.
We optimize our algorithm by utilizing previously published recursive search
patterns~\cite{IsaacBursteddeWilcoxEtAl15} and obtain runtimes that are
comparable to state-of-the-art implementations with fixed
element-type~\cite{IsaacBursteddeWilcoxEtAl15}.

We implement the methods from this thesis in an element-type independent
scalable tree-based AMR library, which we call \tetcode\footnote{\url{https://github.com/holke/t8code}}. 
The core mesh handling algorithms that we discuss are \texttt{New} for creating
a uniform mesh on a given coarse mesh, \texttt{Adapt} for refining and/or
coarsening elements, \texttt{Partition} for redistributing the elements to
maintain a balanced load, \texttt{Ghost} for creating a layer of off-process
ghost elements, and 2:1 \texttt{Balance} to modify a mesh such that every
element's refinement level differs from the level of any face-neighbor by at
most $\pm 1$; see e.g.~\cite{BursteddeWilcoxGhattas11,
IsaacBursteddeWilcoxEtAl15}. We reformulate these algorithms in such a way
that they do not explicitly use
geometric or topological information of the individual elements, thus being
independent of the element-type.
A key technique is to strictly distinguish between functions that operate on the
whole mesh (high-level) and element-local functions (low-level).

We define a collection of low-level functions and develop the high-level functions
such that every element-local operation is performed by a low-level function.
Thus, by using different implementations of the low-level algorithms for different
meshes, or different parts of the same mesh, we are able to use different types
of elements or SFCs.
We also directly obtain a dimension independent formulation. For example, the
difference between 2D computations with triangles and 3D computations with
tetrahedra is merely a change of the low-level implementation.
The modular approach allows us to reuse existing software for the low-level
implementations, which we demonstrate by using the library \pforest for
quadrilateral and hexahedral elements.

We perform extensive strong and weak scaling tests of our methods and verify
optimal scalability on up to 917,504 parallel processes on the Juqueen and
Mira supercomputers using up to 858e9 mesh elements.

\vspace{3ex}
This thesis is organized as follows:

In Chapter~\ref{ch:AMR} we give an overview of currently existing
adaptive mesh refinement (AMR) techniques with emphasis on tree-based AMR.
We continue this overview in Chapter~\ref{ch:sfc}, where we discuss
SFCs and their connection with tree-based AMR.
We present a novel approach to the theory of discrete SFCs that is 
suited for the purpose of AMR.

In Chapter~\ref{ch:tetSFC} we introduce a new SFC, the tetrahedral Morton (TM-)
SFC for triangular and tetrahedral red-refinement. The TM-SFC is similar to
the Morton SFC for hypercubes in that it takes advantage of bitwise
interleaving techniques. This chapter was first published
in~\cite{BursteddeHolke16} in 2016 by the Society for Industrial and Applied
Mathematics (SIAM).

We continue the discussion of the TM-SFC in Chapter~\ref{ch:conncomp}, where we
prove bounds for the number of connected components of segments of the TM-SFC.
We show that the number of connected components in a refinement with elements
of maximum level $L$ is at most $2(L-1)$ in 2D and at most $2L+1$ in 3D
and conclude with a numerical investigation of the distribution of the 
different numbers of connected components occuring.

At the time of this writing the results from Chapter~\ref{ch:conncomp} have
been submitted for publication and are available as a
preprint~\cite{BursteddeHolkeIsaac17b}.  This publication also contains results
that were obtained by Burstedde and Isaac. However, Chapter~\ref{ch:conncomp}
contains those excerpts that are the author's own work.

Chapter~\ref{ch:cmesh} is devoted to the discussion of the coarse mesh of trees
with particular emphasis of a partition technique for it. We describe an
approach for the distribution of the trees to the processes that minimizes
communication and maintains an optimal load-balance of forest elements. This
chapter was first published in~\cite{BursteddeHolke17} in 2017 by the Society
for Industrial and Applied Mathematics (SIAM).

In Chapter~\ref{ch:ghost} we discuss a version of the \texttt{Ghost} algorithm
to create a layer of ghost elements that works with our framework for arbitrary
element types. We then discuss a first non-optimized version of
\texttt{Balance} to establish a 2:1 balance condition among the elements
in Chapter~\ref{ch:balance}.
We conclude these two chapters with numerical experiments of \texttt{Ghost} and
\texttt{Balance} on hexahedral and tetrahedral meshes. We demonstrate scaling
to the full system of JUQUEEN using 458k MPI ranks.
We are currently preparing Chapters~\ref{ch:ghost} and~\ref{ch:balance} for future
publication.

We conclude this thesis by presenting a numerical solver in
Chapter~\ref{ch:app} in order to demonstrate the usability of the presented AMR
algorithms for applications. We solve the advection equation with a finite
volume scheme on 2D and 3D non-hybrid and hybrid meshes and show excellent
strong scaling on up to 458k MPI ranks.

 \chapter{Adaptive Mesh Refinement}
\label{ch:AMR}
In this chapter we introduce uniform and adaptive mesh refinement (AMR).
We discuss advantages and disadvantages of both methods and motivate
the usage of AMR.
We carry on by giving a brief overview of unstructured and block-structured AMR,
and conclude this chapter with a more detailed elaboration of tree-based AMR.

\section{Uniform and adaptive mesh refinement}

Most numerical solvers use meshes to discretize an analytical domain on which
to solve a PDE. Common methods are finite differences, the finite
element method (FEM), the finite volume method (FV), and the discontinuous
Galerkin method (DG). See~\cite{Braess97,SuliSchwabHouston00,LeVeque02} for an
overview.

For all these methods a finer mesh resolution results in a smaller
computational error.
Thus, in order to increase the accuracy of the numerical solution, the
computational mesh is replaced with a finer mesh.

One method is to entirely remesh the whole domain with smaller elements with no
particular parent-child hierarchy between the old and new mesh. Another
refinement method is to replace mesh elements by smaller elements covering the
same area as the initial element. In this case, we have a refinement hierarchy
between the meshes.
The level of an element denotes the number of refinements that were performed
from a certain root mesh to that element.

We distinguish between uniform mesh refinement and adaptive mesh refinement (AMR).
The former means that each element is refined to the same level, resulting in 
an evenly structured mesh of same-level elements.
With AMR we refine or coarsen individual mesh elements to different levels
according to a given refinement criterion. In this way it is possible to
concentrate the mesh elements in areas where the computational error is large
and maintain a coarse resolution in those regions where the error is already
small.
In order to decide where to refine or coarsen the mesh, local error estimators
are usually taken into
account~\cite{BraessVerfurth96,BabuskaRheinboldt78,ZienkiewiczKelleyGagoEtAl82,VendittiDarmofal00}.
See also Figure~\ref{fig:uniada} for an example of uniform and adaptive meshes.

\begin{figure}
  \center
  \includegraphics[width=0.4\textwidth]{./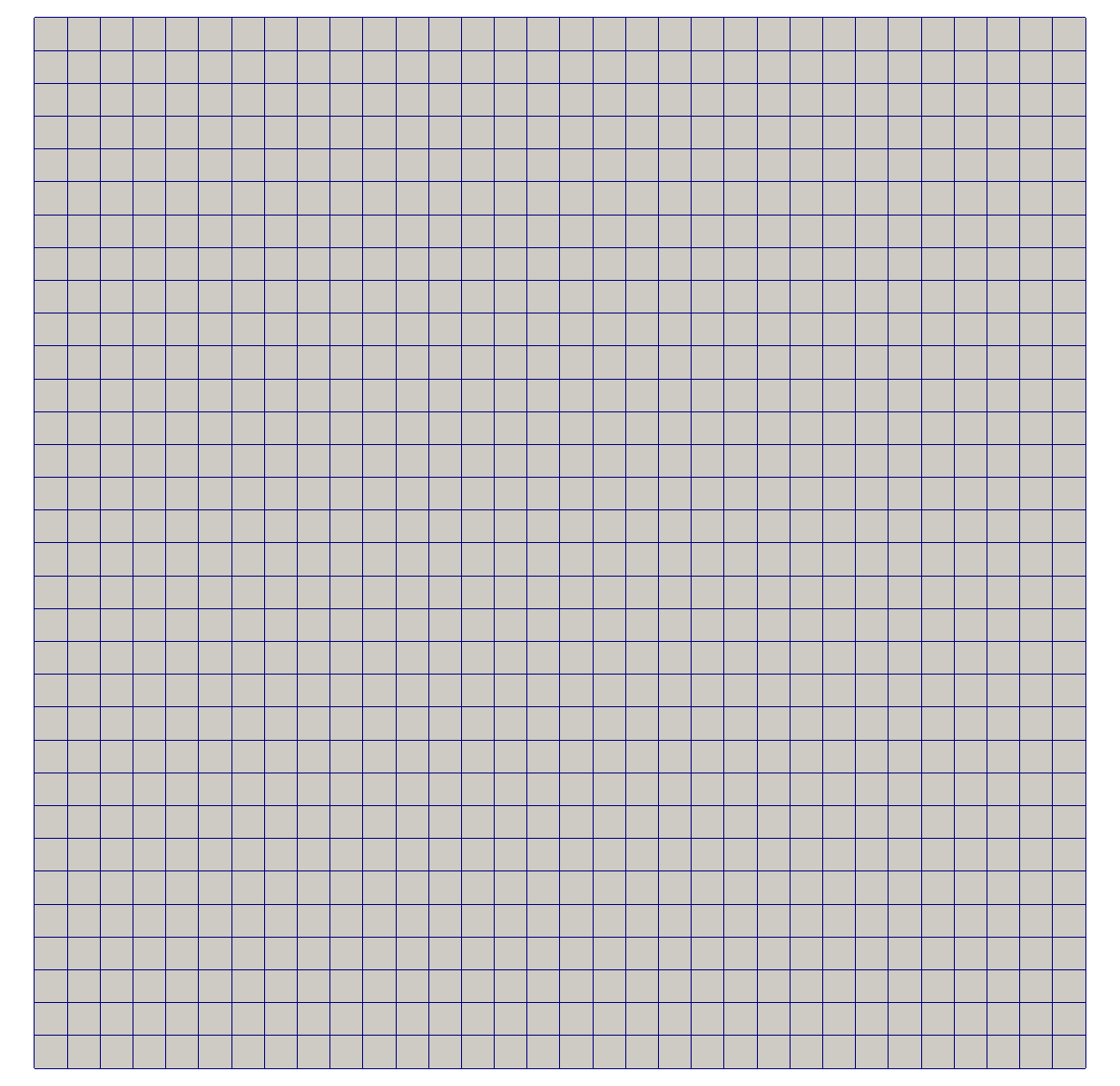}\hfill
  \includegraphics[width=0.4\textwidth]{./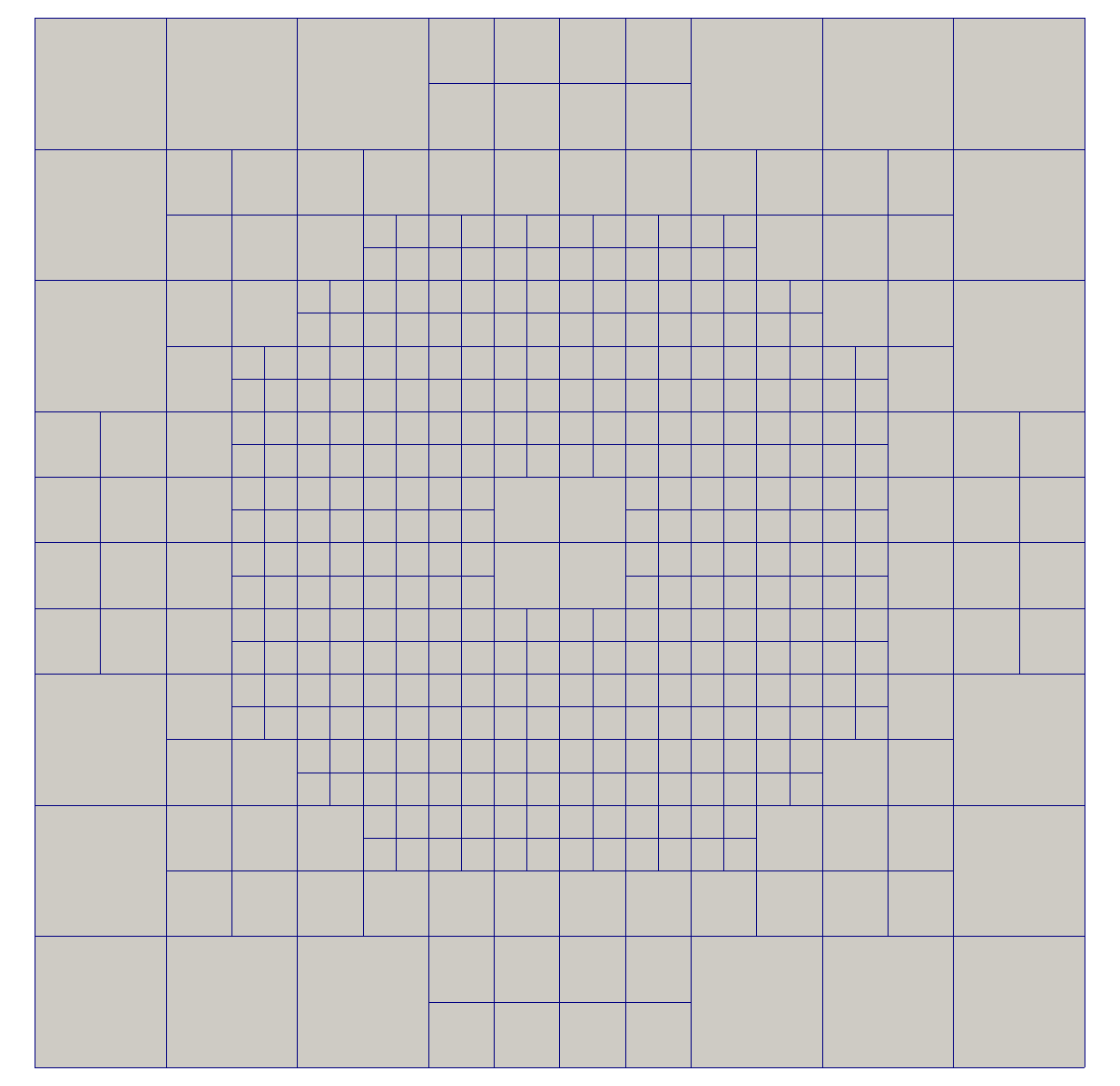}
  \caption[Uniform and adaptive mesh]{A uniform mesh (left) and an adaptive mesh (right).
  In a uniform mesh, all elements have the same size or refinement level, while in adaptive
  meshes the element's refinement levels can differ.
  }
  \figlabel{fig:uniada}
\end{figure}

When we consider time-dependent solvers, such as for example for the
Navier-Stokes equations \cite{Chorin68,KosterGriebelKevlahanEtAl98}, uniform
meshes do not change between time steps in general.
With AMR, on the contrary, the mesh often changes from one time step to the next.
In this case we also speak of dynamic AMR.
Dynamic AMR also occurs in static solvers when, for example, the problem
is first solved on a coarse mesh in order to estimate the local error according
to which the mesh is then refined and the solver started again.

First applications of AMR came up in the mid-80s~\cite{BergerOliger84}, and
using AMR in parallel started in the late 90s, e.g.~\cite{GriebelZumbusch98}.

\section{Motivation for AMR}

The main advantage of uniform meshes is that they are easy to implement.
The uniform structure makes iterating over the mesh and finding neighbor
elements, etc., near trivial tasks.
In parallel computations, load-balancing is straightforward and does not change 
in time-dependent simulations.

A disadvantage of uniform meshes is that resolving to higher accuracy needs
exponentially more elements. Such high element counts increase the memory
demand and runtime of an application. It may also happen that the number of
elements needed for a desired accuracy results in meshes that are too large to
fit into memory.

AMR overcomes these disadvantages of uniform meshes by design. With adaptive
meshes it is often possible to use less mesh elements and thus less memory to
reach the same accuracy as with a uniform
mesh~\cite{GermaschewskiBhattacharjeeGrauerEtAl03}. This permits
simulating problems with AMR to an accuracy that cannot be reached with uniform
meshes~\cite{Klein99,BursteddeStadlerAlisicEtAl13}.

Another gain of AMR is that computations are faster in comparison to uniform
meshes, or reach an increased accuracy while maintaining the same 
runtime~\cite{MuellerBehrensGiraldoEtAl13}. 

The main disadvantage of AMR is an increase in algorithmic complexity. For
example, for parallel uniform meshes assignment of elements to the processes
can be computed straightforwardly and usually does not change during a
computation.  Using AMR, the local count of elements on each process changes
with each adaptation step, which can be as frequent as every time step in a
time-dependent simulation.
These changes make it necessary to repartition the mesh in order to
load-balance the parallel computation. Likewise, computing ghost elements and
iterating through the mesh are straightforward operations for uniform meshes, but
pose a significant challenge to implement for adaptive meshes.

Furthermore, in variations of AMR so called hanging nodes, edges, and faces
may occur. These are nodes (or edges, or faces) of elements that do
not align with all nodes of the neighboring elements. Hanging nodes rather lie
on an edge or face of at least one neighbor element as in the middle and right
pictures in Figure~\ref{fig:unbltrAMR}. We call meshes with hanging nodes
non-conforming. 
When using AMR with hanging nodes, the numerical solvers have to be adapted 
to handle them properly.
See for example~\cite{RheinboldtMesztenyi80,
FischerKruseLoth02, KoprivaWoodruffHussaini02, AkcelikBielakBirosEtAl03,
LahnertBursteddeHolmEtAl16}.

By design AMR introduces a computational overhead, since an application cannot
spend 100\% of the computing time solving the actual problem, but has to invest
resources in managing the mesh. However, many experiments show that with
modern computers the overhead introduced by AMR is significantly smaller than
the gain in
runtime~\cite{DruiFiklKestenerEtAl16,RipleyLienYovanovich04,BursteddeGhattasGurnisEtAl10}.

In conclusion, using AMR certainly involves more elaborate algorithms, and a
significant amount of extra research has to be put into the meshing routines
compared to uniform meshes. This is a main reason why AMR routines are often
outsourced to external libraries and provides a core
motivation to implement the AMR library \tetcode in the context of this thesis.

\section{Unstructured AMR}
The three most commonly used methods for AMR are unstructured, block-structured,
and tree-based AMR, which we briefly compare now.

\begin{figure}
\center
\includegraphics[width=0.325\textwidth]{./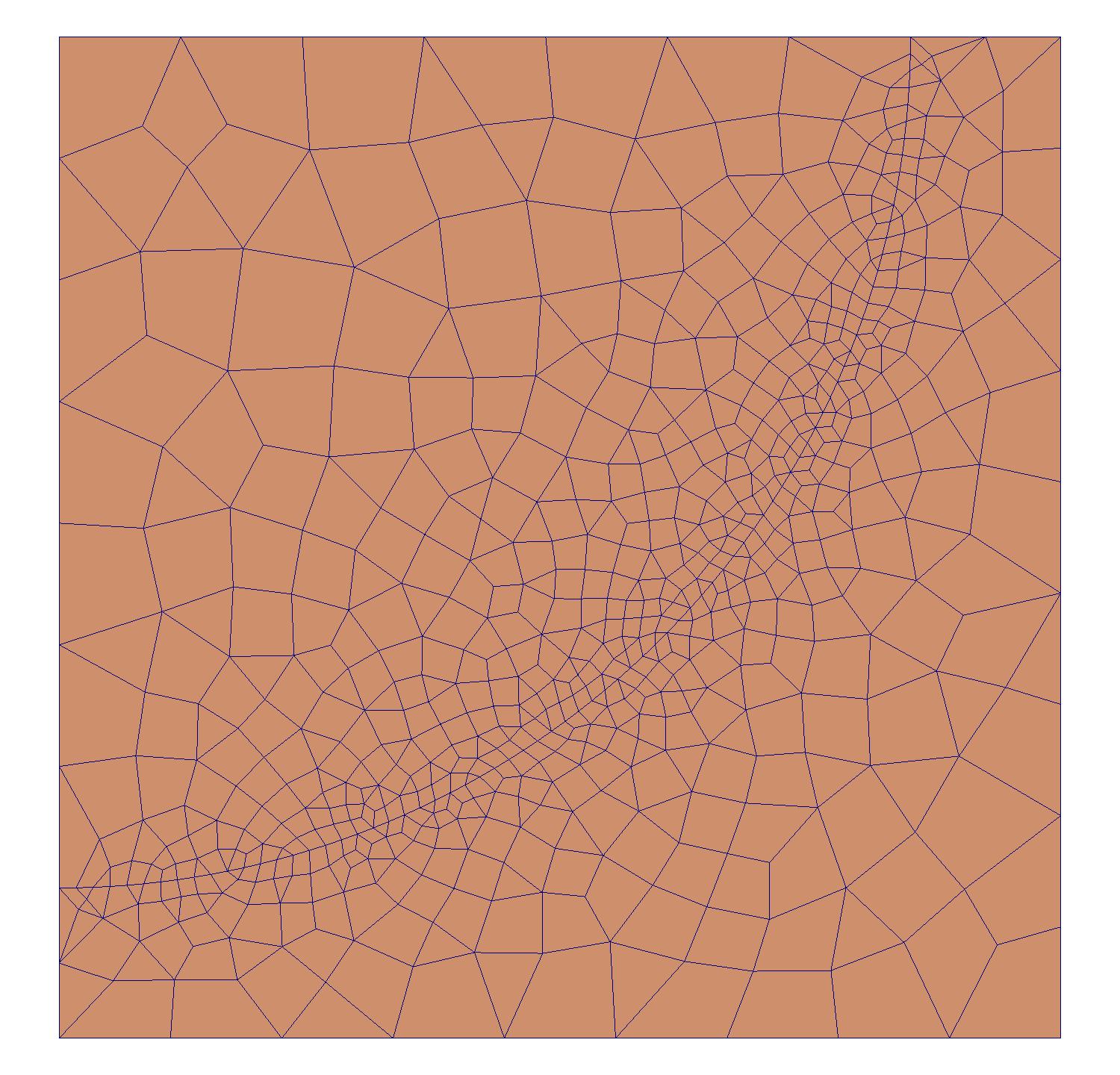}
\includegraphics[width=0.3215\textwidth]{./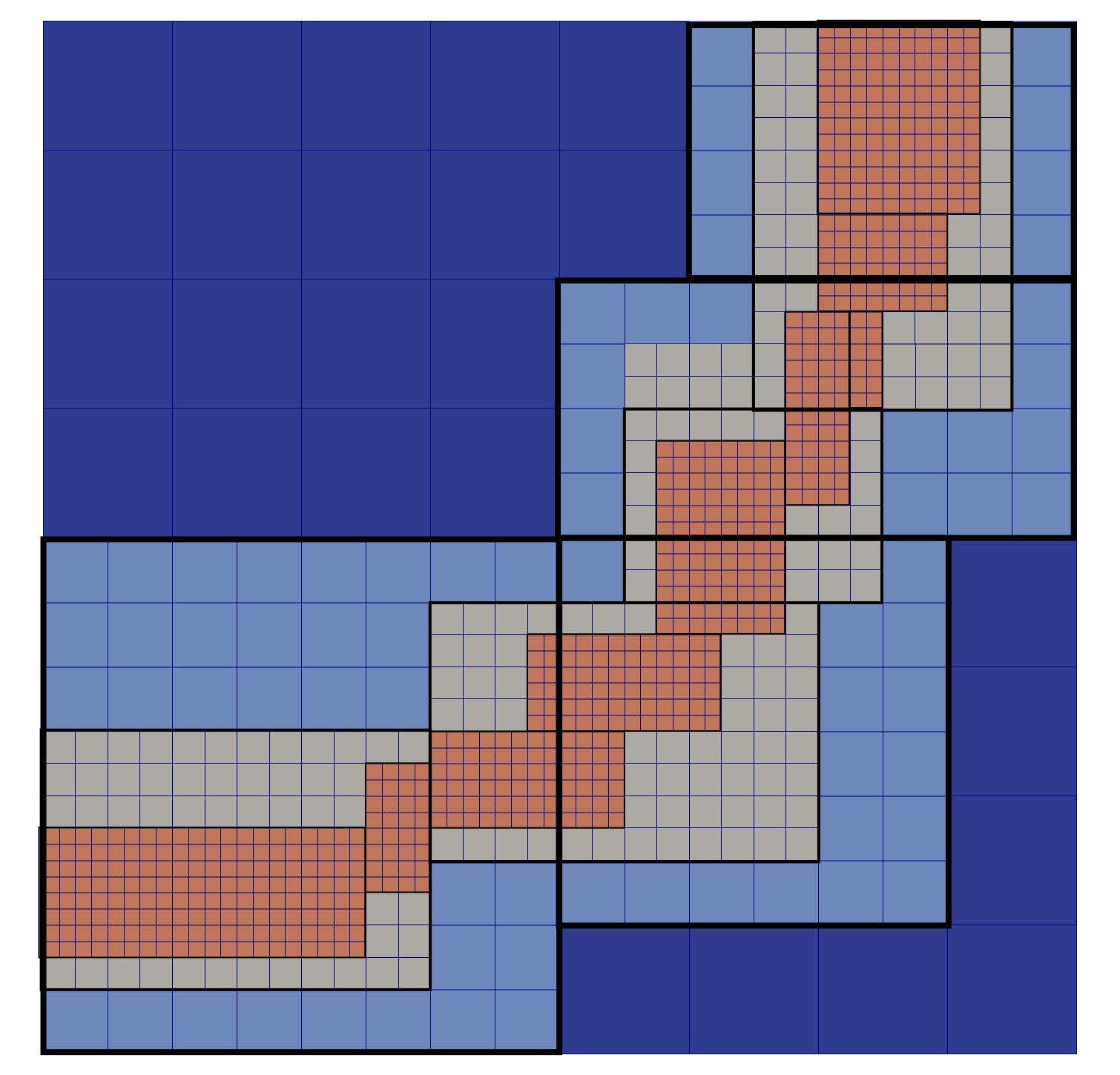}
\includegraphics[width=0.31\textwidth]{./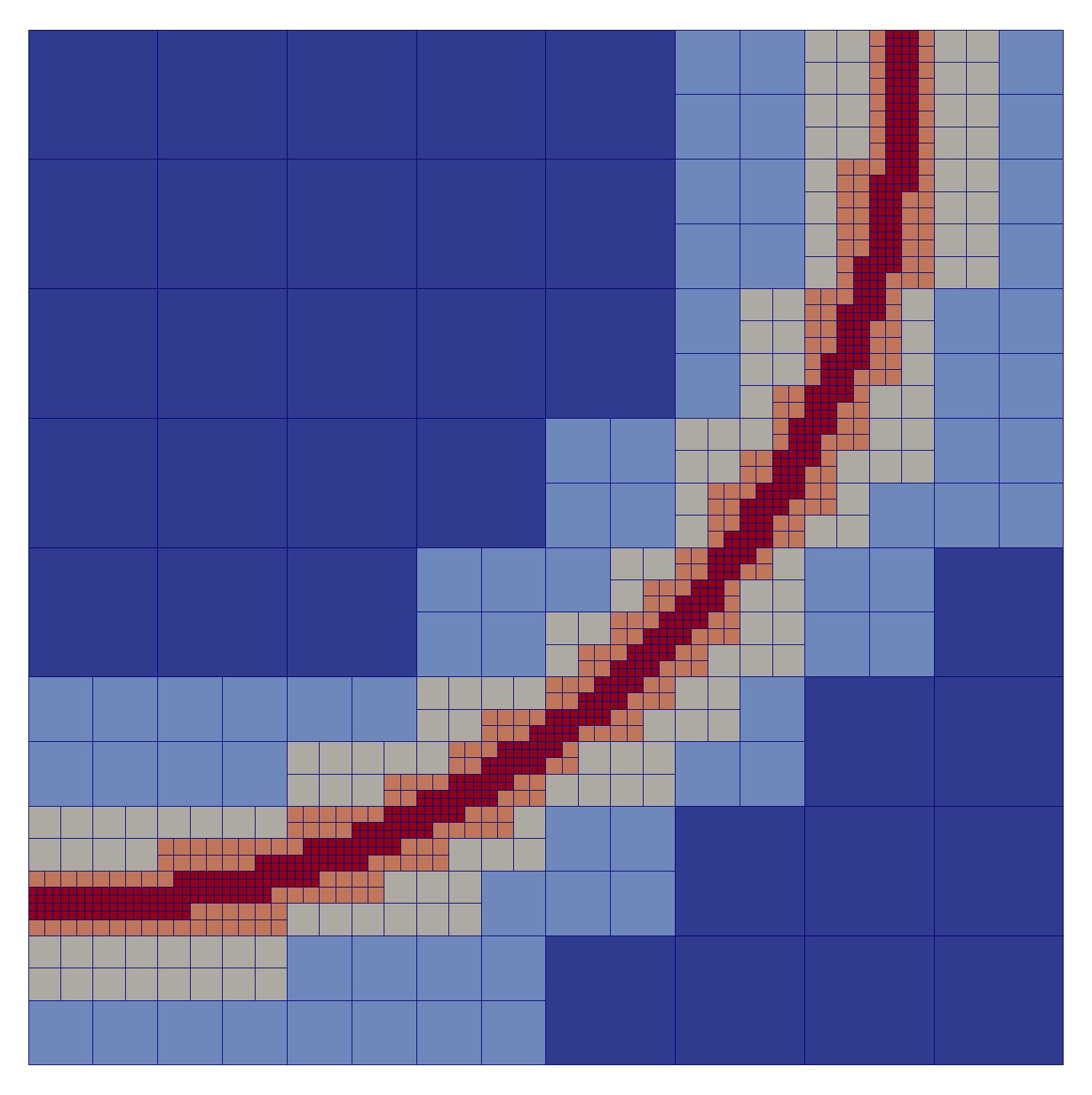}
  \caption[Three different AMR methods]
{Three different AMR methods for refining a quadrilateral mesh along a circular
arc. From left to right: unstructured AMR, block-structured AMR, and tree-based
AMR. For block-structured AMR we outline the boundaries of the rectangular
uniform patches in black. For block-structured and tree-based AMR, color refers
  to the refinement level of an element. In these two methods hanging nodes
  regularly occur when two elements of different refinement levels are
 neighbors.
}
\figlabel{fig:unbltrAMR}
\end{figure}

In unstructured AMR there is no regular pattern of connections between the
elements, meaning that the connectivity relations among elements are
arbitrary; see also Figure~\ref{fig:unbltrAMR} on the left.
These meshes are
usually conforming; that is, there are no hanging nodes/faces/edges.
Unstructured meshes can adapt arbitrarily well to domain
geometries. Therefore, they are often used when the domain is somewhat
complex~\cite{RasquinSmithChitaleEtAl14}.

However, the memory requirements for unstructured meshes are higher than for
other types of meshes. This is because 
for each element a list of its neighbors has to be
stored, as well as the coordinates of each vertex in the mesh.
Memory usage can be orders of magnitude higher compared to tree-based or block
structured AMR, where we store coordinates and connectivities only for the
elements of a coarse root mesh of trees that has significantly less elements
\cite{BaderBungartzWeinzierl13}.

Solvers using unstructured meshes often apply graph-based partitioning methods
to load-balance the mesh. Among the most common partitioner libraries are
ParMETIS~\cite{KarypisKumar98} and Scotch~\cite{ChevalierPellegrini08}.
Although these methods have been applied to billions of elements on millions
of processes, their runtime can be orders of magnitude slower compared to
partitioning structured
meshes~\cite{CatalyurekBomanDevineEtAl07,RasquinSmithChitaleEtAl14,SmithRasquinIbanezEtAl15}.

The most commonly used types of elements are triangles in 2D and tetrahedra in
3D, using the properties of Delaunay
triangulations~\cite{GoliasDutton97,Shewchuk96,HjelleDaehlen06}, but there
also exist unstructured mesh codes that use quadrilateral or hexahedral
elements~\cite{PeterKomatitschLuoEtAl11,KirkPetersonStognerEtAl06}.

Commonly used libraries for unstructured AMR are
\texttt{libMesh}~\cite{KirkPetersonStognerEtAl06},
\texttt{ParFUM}~\cite{LawlorChakravortyWilmarthEtAl06},
\texttt{PUMI}~\cite{IbanezSeolSmithEtAl16}, and
\texttt{PYRAMID}~\cite{NortonLyzengaParkerEtAl04}.
Some notable application codes are FEniCS~\cite{LoggMardalWells12},
PLUM~\cite{OlikerBiswas98,OlikerBiswasGabow00}, OpenFOAM~\cite{OpenCFD07}, and
MOAB from the SIGMA toolkit~\cite{TautgesMeyersMerkleyEtAl04}.

\section{Block-structured AMR}

In block-structured (or patched) AMR---see~\cite{BergerColella89,SkamarockOligerStreet89,RheinboldtMesztenyi80} for
example---the mesh is refined in rectangular patches as in the middle picture in
Figure~\ref{fig:unbltrAMR} resulting in a hierarchy of uniform rectangular
meshes. Refinement to level $l+1$ is only allowed inside a patch of level $l$
elements.
Thus, with block-structured AMR one can use some of the advantages of a uniform 
mesh within the patches. This comes at the cost of flexibility of the refinement
and possibly using more mesh elements than mathematically required.

Block-structured AMR is mainly used with quadrilateral and hexahedral meshes
and
some common software libraries and applications using block-structure AMR are
 \texttt{BoxLib} and its
successor \texttt{AMReX}~\cite{boxlib,amrex},
\texttt{Chombo}~\cite{ColellaGravesKeenEtAl07},
\texttt{Enzo}~\cite{BryanNormanOSheaEtAl14},
\texttt{SAMRAI}~\cite{WissinkHornungKohnEtAl01}, and
\texttt{Uintah}~\cite{BerzinsLuitjensMengEtAl10}.

\section{Tree-based AMR}

In this section we provide an overview over tree-based AMR. It is the
AMR technique that we use for the algorithms in this thesis.

We start with describing the main idea of tree-based AMR. We then present the
core mesh management algorithms and demonstrate in which order they may be
called by an application. At the end of this section, we introduce our notion
of high- and low-level algorithm and the API for general element types in
\tetcode.

\subsection{Overview}
The core idea of tree-based AMR is the following: Consider the unit square
as computational domain, meshed with a single quadrilateral coarse element.
This element represents the root of a refinement tree. We can refine it by
replacing it with four children and iterate this operation arbitrarily. Thus,
the resulting mesh can be represented as a refinement tree; see
Figure~\ref{fig:triangle-reftree}. 
Assigning each element a unique index, we encode the elements in the tree along
a so-called space-filling curve (SFC), 
We can thus store the elements linearly in an array in order of their SFC
index. 
Replacing an element with its children, or four children with its parent,
changes the order only locally. 
We discuss the theory of space-filling curves in more detail in
Chapter~\ref{ch:sfc}.

\begin{figure}
  \center
  \includegraphics[width=0.3\textwidth]{./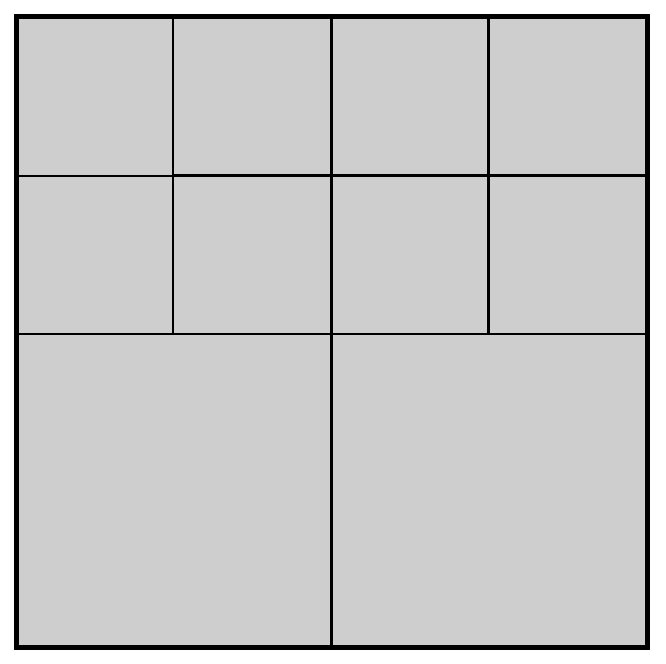}
\hspace{4ex}
  \raisebox{8ex}{
  \begin{tikzpicture}[yscale=0.4, xscale=0.4, %
                              level 1/.style={sibling distance=12em},%
                              level 2/.style={sibling distance=3em},%
                              level 3/.style={sibling distance=1.8em},%
                              every node/.style = {shape=rectangle,draw, %
                              minimum size=0.2em,anchor=north},%
                              leaf/.style = {draw,fill=blue!20}%
                              ]
   \node {}
   	child { node[leaf](j){}  }
   	child { node[leaf](j){}  }
    child { node{}  
    	child { node[leaf](a){}  }
    	child { node[leaf](b){}  }
    	child { node[leaf]{}}
    	child { node[leaf](e){}  }
    }
    child { node{}  
    	child { node[leaf](k){}  }
    	child { node[leaf]{}  }
    	child { node[leaf]{}  }
    	child { node[leaf](l){}  }
    };
   \end{tikzpicture}
  }
  \caption[A refined quadrilateral and the associated refinement tree.]
{
A refined quadrilateral and the associated refinement tree.
  Starting with the coarse quadrilateral (the root of the tree), we subdivide it
  into four children (the second row in the tree). We then further subdivide two
  of these level $1$ children into four level $2$ children each (the third row
  in the tree). The final mesh elements correspond to the leaves of the refinement
  tree.}
  \figlabel{fig:triangle-reftree}
\end{figure}

Suppose now that the computational domain is more complicated, for example the
wing of an airplane, or the mantle of the
earth~\cite{BursteddeGhattasGurnisEtAl10,RasquinSmithChitaleEtAl14}. There are
two approaches to model such domains using tree-based AMR.

In the first approach, we completely embed the domain inside a single
refinement tree and then refine the elements along the domain boundary up to a
desired accuracy. All elements that lie outside of the domain are then not
considered in the computation~\cite{LiIto06,GuntherMehlPoglEtAl06}. See also
Figure~\ref{fig:onetree-vs-cmesh} on the left. 

The second approach is to model the domain with an unstructured mesh of coarse
elements and then understand each coarse element as the root of one refinement
tree, giving rise to a forest of
elements~\cite{BangerthHartmannKanschat07,SteinmanMilnerNorleyEtAl03,StewartEdwards04}.

We call the unstructured mesh of tree roots the coarse mesh, which is created a
priori to map the topology and geometry of the domain with sufficient fidelity. 
Elements may then be refined and coarsened recursively, changing the
mesh below the root of each tree.
See Figure~\ref{fig:onetree-vs-cmesh} on the right.
In extreme cases, for example in industrial and medical applications, the
coarse meshes may consist of billions of trees
\cite{IbanezSeolSmithEtAl16,RasquinSmithChitaleEtAl14,FengTsolakisChernikovEtAl17}. 
The accuracy of the approximation of the domain may be further increased by
using curved tree edges with the help of higher order geometry functions; see
for example~\cite{HughesCottrellBazilevs05a, WilcoxStadlerBursteddeEtAl10,
ZhangGuChenEtAl09}.

When an application uses the second approach with a large coarse mesh,
this mesh has to be partitioned between the processes in order to decrease
the memory used by each process.  We will discuss coarse mesh partitioning in
Chapter~\ref{ch:cmesh}.

\begin{figure}
\center
  \raisebox{2.2ex}
  {\includegraphics[width=0.46\textwidth]{./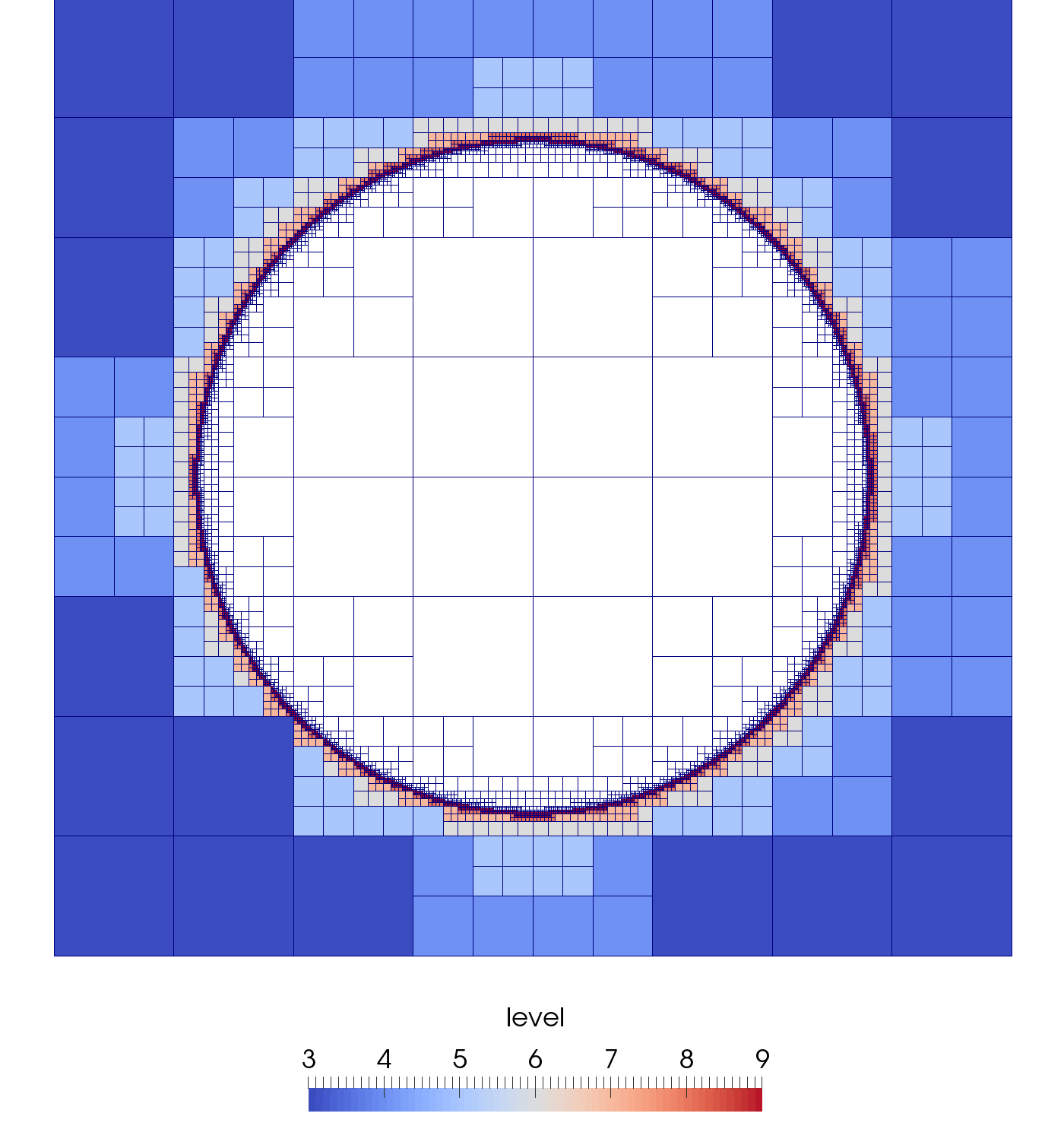}}\hfill
\includegraphics[width=0.446\textwidth]{./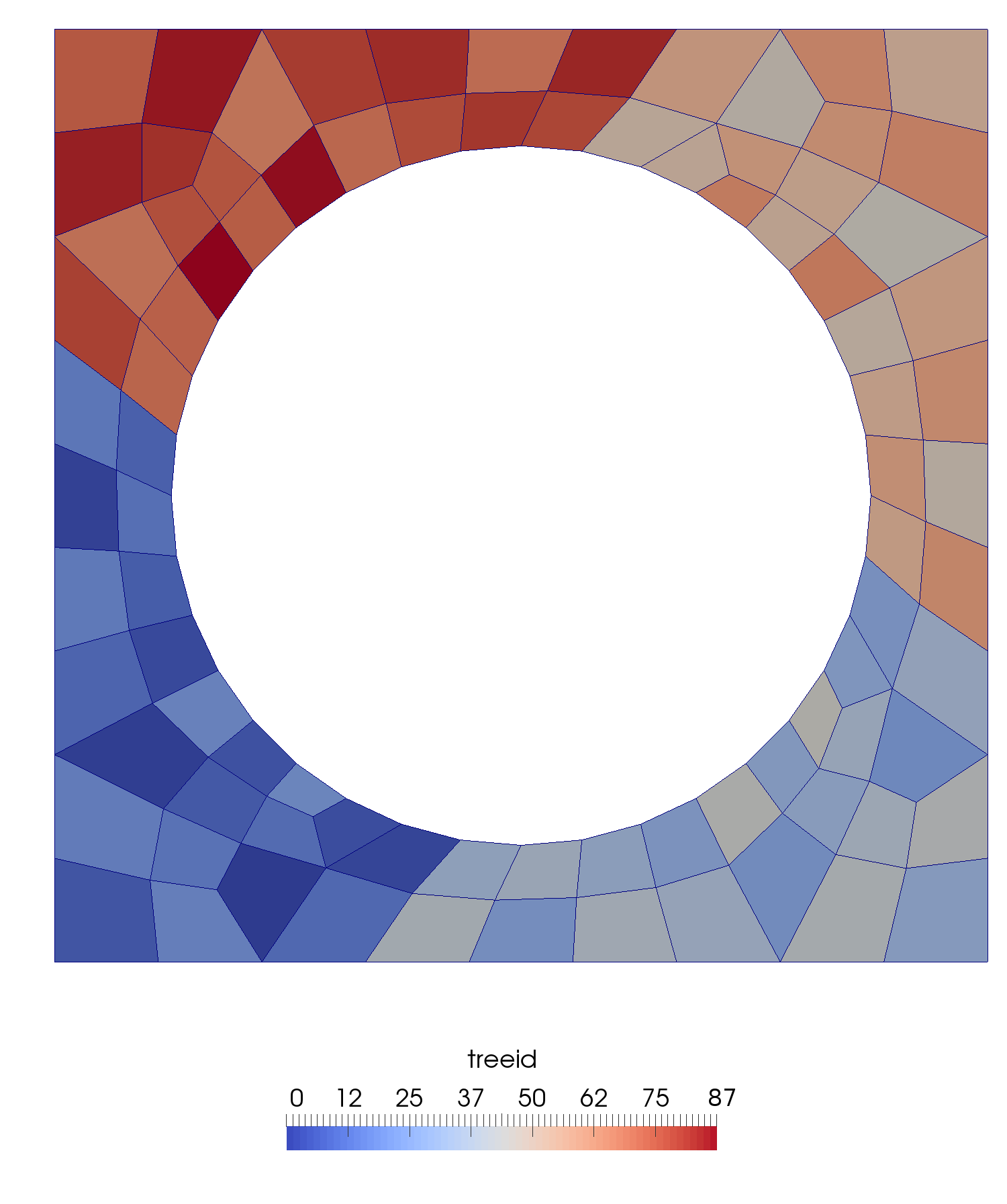}
\caption[Two ways to approximate a non-trivial domain.]
 {Two ways to approximate a non-trivial domain using tree-based AMR, in this
  case the region between
  a circular disk and a unit square. Left: The domain is embedded in a single
  refinement tree. This tree is then refined at the domain boundary. For a
  computation only the elements within the domain 
  and intersecting the domain boundary are considered. Color
  represents the refinement level. Right: The domain is modeled with an
  unstructured coarse mesh of trees. Each element in this coarse mesh
  represents one refinement tree. Color represents a consecutive enumeration
  of the trees. We generated this unstructured mesh with the \texttt{Gmsh}
  mesh generator
\cite{GeuzaineRemacle09}.}
\figlabel{fig:onetree-vs-cmesh}
\end{figure}

Thus, we have in fact two meshes:
The (unstructured) coarse mesh that stores the domain topology and the
tree-to-tree connections and the fine mesh that stores the actual computational
mesh of refined trees. Usually, the coarse mesh is obtained as the output of a
mesh generator, or it is constructed manually for small tree numbers. 

Since each coarse mesh element represents a refinement tree, we call the fine 
mesh forest mesh. Consequently, the tree-based AMR approach is also often
called forest-of-trees approach~\cite{BergerOliger84,BursteddeWilcoxGhattas11}.
The forest mesh stores an array of (process local) trees, and for each tree it stores 
the linear array of the process local fine elements.
We only store the finest elements of a forest and do not store any intermediate
elements between the coarse element and the finest elements. Since these
elements form the leaves of the refinement trees, we call them leaf elements.

Since each tree stores its own array of elements, the particular linear order
(SFC) used is local to the trees.
It is thus possible to combine trees of different kinds to form hybrid meshes,
for example quadrilaterals and triangles in 2D, or hexahedra, tetrahedra, prisms
(and pyramids) in 3D.
In theory it is also possible to use different types of linear orders for the
same kind of trees, for example Hilbert SFC~\cite{Hilbert91} in one quadrilateral tree and
Morton SFC~\cite{Morton66} in another. However, we do not discuss such applications in this
thesis.

\subsection{Core algorithms}
\label{sec:hlalgos}
We identify several core algorithms for tree-based AMR. 
\begin{itemize}
  \item \texttt{New} -- Generate a partitioned uniform mesh on a given geometry.
  \item \texttt{Adapt} -- Refine and coarsen mesh elements according to a
          given criterion.
  \item \texttt{Partition} -- Redistribute mesh elements among the processes in
          order to maintain a balanced load. This has a version for the coarse 
          mesh and one for the fine mesh.
  \item \texttt{Ghost} -- Construct and communicate a layer of ghost (halo) 
          elements for each process.
  \item \texttt{Balance} -- Establish a 2:1 balance in a mesh. That is, after
          \texttt{Balance}, each mesh element only has neighbors of the same 
          refinement level or at most one level higher or lower.
  \item \texttt{Iterate} -- Iterate through the mesh, executing a
          callback on each element and inter-element interface 
          (faces, edges, or vertices).
\end{itemize}

We discuss the typical pipeline of these algorithms with a solver application
as in Figure~\ref{fig:pipeline}.
In a preprocessing step, the geometry of the domain is meshed with a coarse mesh.
For small coarse meshes we can construct them in the AMR library itself
by explicitly giving the tree-to-tree connections.
This approach is not practical for larger coarse meshes, however, which is why we usually
use an external mesh generator for this task and feed its output into the AMR
library. Among the open source mesh generators we advise the reader to consider
\texttt{enGrid}~\cite{engrid}, 
\texttt{Gmsh}~\cite{GeuzaineRemacle09}, 
\texttt{NETGEN}~\cite{Schoeberl1997},
\texttt{TetGen}~\cite{Si06}, and
\texttt{Triangle}~\cite{Shewchuk96}.
In this step, we may already distribute large coarse meshes among several processes.

After preprocessing, the first step is to create a partitioned uniform forest
mesh on top of the coarse mesh with \texttt{New}. The initial uniform
refinement level depends on the application. Since this mesh is uniform, the
global number of elements and the number of elements per tree is known.  From
this, each process $p$ can calculate the index $E_p$ of its first element and then
locally create the elements with indices $E_p,\dots,E_{p+1}-1$. Thus,
\texttt{New} is completely distributed and does not involve communication.

The next step is usually an initial call to \texttt{Adapt} to create a first
adapted mesh according to an application's initial refinement criterion.
If a 2:1 balance condition is necessary, we call \texttt{Balance}.
We then often repartition the mesh with \texttt{Partition}. This is necessary
to maintain a balanced load, since the local element count of several (or all)
processes might change during \texttt{Adapt} and \texttt{Balance}. As a final
AMR step, we may call \texttt{Ghost} to create a layer of ghost elements.

At this point, the application carries out one or several solver steps, possibly
using \texttt{Iterate} to iterate through the mesh, for example to identify
degrees of freedom to assemble or apply matrices.
After solving, we may continue again with \texttt{Adapt} using an error
estimator and start a new cycle.

Throughout this thesis, we describe theoretical concepts and implementation
details of these algorithms. We describe our versions of \texttt{New} and
\texttt{Adapt} in Sections~\ref{sec:new} and \ref{sec:adapt}.
\texttt{Partition} is well-understood and we describe it in Section~\ref{sec:partitionsfc}.
\texttt{Ghost} and \texttt{Balance} are more complex than the other algorithms
and therefore we devote a whole chapter for each of them. We describe different
versions of \texttt{Ghost} in Chapter~\ref{ch:ghost} and discuss
\texttt{Balance} in Chapter~\ref{ch:balance}.
For \texttt{Iterate} see our remarks in Chapter~\ref{ch:app}.

\begin{remark}
  Typically, the application stores data for each mesh element, for example
  point values of an interpolated function.
  If the mesh changes because of adaptation, the data has to be projected
  onto the new mesh, which for example involves interpolating the function
  values to new finer mesh elements, or averaging function values of elements
  that are coarsened into the new value for the parent. 
  See Chapter~\ref{ch:app} for a more elaborate discussion.
\end{remark}
\begin{remark}
  The enumeration of the trees in the coarse mesh determines the order in which
  they are stored and thus their possible partitioning to the processes. The
  initial enumeration is given by the mesh generator and may be far from optimal
  for partitioning. For example, if the order is completely random, the
  partitions on the processes may contain arbitrarily many connected components.
  In order to reduce the number of connected components and obtain more optimal
  partitions, we may carry out one run of an unstructured mesh
  partitioner on the coarse mesh, for example ParMETIS~\cite{KarypisKumar95} or
  Scotch~\cite{ChevalierPellegrini08}, as an additional preprocessing step.
  We discuss the (re-)partitioning of the coarse mesh in detail in Chapter~\ref{ch:cmesh}.
\end{remark}

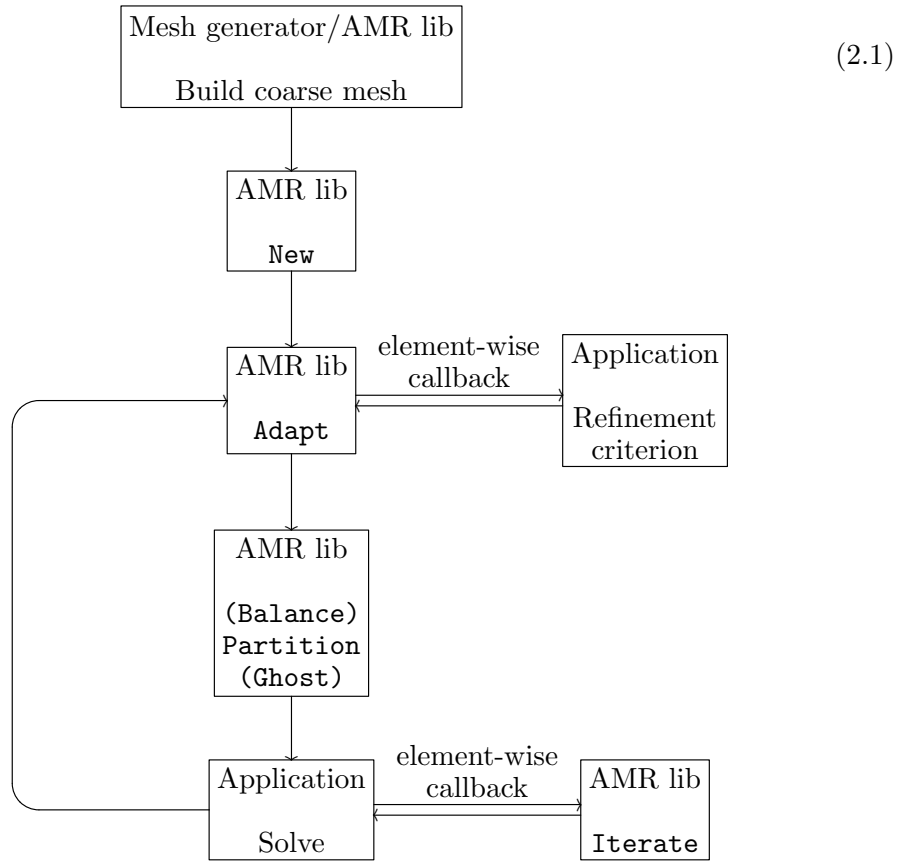
\begin{figure}
\begin{equation}
\xymatrix@C=8ex{
   &*+[F]{\txt{Mesh generator/AMR lib\\ \\Build coarse mesh}} \ar[d]
\\ &*+[F]\txt{AMR lib\\ \\\texttt{New}}\ar[d] 
\\ &*+[F]\txt{AMR lib\\ \\\texttt{Adapt}}\ar[d]\ar@<+2pt>[r]^-{\txt{element-wise\\callback}}
&*+[F]\txt{Application\\ \\Refinement\\criterion}\ar@<+2pt>[l]
\\ &*+[F]\txt{AMR lib\\ \\\texttt{(Balance)}\\\texttt{Partition}\\\texttt{(Ghost)}}\ar[d]
\\ &*+[F]\txt{Application\\ \\Solve}\ar `l[l]`[luu][uu]
\ar@<+2pt>[r]^-{\txt{element-wise\\callback}}  
& *+[F]\txt{AMR lib\\ \\\texttt{Iterate}}\ar@<+2pt>[l]
}
\end{equation}
  \caption[AMR algorithms pipeline]
  {This diagram represents a typical call pipeline for the high-level AMR
 algorithms of a solver application.
 At first a coarse mesh is constructed which models the domain geometry.
 On top of this coarse mesh we construct a distributed uniform forest mesh,
 which is then adapted according to a refinement criterion provided by
 the application. We can now balance and partition the mesh and create a layer
 of ghost elements. Before the application starts the solver.
 If the solver requires to readapt the mesh, the adapt/balance/partition/ghost 
 cycle starts again.}
\figlabel{fig:pipeline}
\end{figure}
\subsection{Independence of the element type}
\label{sec:highlow}

We distinguish between global (high-level) algorithms that operate on the
complete mesh, such as the core algorithms above, and local (low-level) algorithms
that operate on a single element. Examples for the latter type are computing
the children or parent of an element, computing the SFC index of an element, or
computing face-neighbors of an element.

A main strategy for the algorithms and the code developed in this thesis is 
the separation of the high-level algorithms from the low-level ones. Take for
example the algorithm \texttt{Adapt}. In \texttt{Adapt}, we iterate over all
leaf elements and call a user-provided function for each element and each family
of elements.
If the function returns positive, we refine the (first) element, thus constructing
its children. If the function returns negative on a family of elements, we
coarsen them, hence we construct the parent for one of the elements.
For an implementation of \texttt{Adapt}, however, it is irrelevant how exactly
the children or the parent are constructed. We only need to know when to call
the corresponding low-level algorithm and can use it as a blackbox.
In particular, the implementation of \texttt{Adapt} does not change for
different types of elements, for example 2D triangles and 3D hexahedra, as long
as we provide the appropriate low-level algorithms for the element type.

We carry out the separation of high- and low-level algorithms in the
implementation by defining a low-level API, a set of low-level algorithm
that define an element type. In \tetcode this low-level API is 
encoded in form of an abstract \texttt{C++} base class.
Each element type, such as quadrilaterals with Morton SFC index, or tetrahedra
with the TM-index, is an implementation of this base class.
Hence, introducing a new type of elements of SFC is achieved by simply
implementing their low-level algorithms. See for example the Bachelor's
thesis~\cite{Knapp17} in which prisms with a Morton type SFC are implemented.
 For a detailed
list of all low-level functions in \tetcode see Appendix~\ref{ch:appendix}.

Because of our separation of high- and low-level algorithms, the methods and
core algorithms can easily be applied to various kinds of elements and SFCs.
Furthermore, mixing elements of different types in the same mesh is possible, allowing
us to use hybrid meshes.

\chapter{Space-Filling Curves}
\label{ch:sfc}

In this chapter we introduce our notations for space-filling curves and
demonstrate their application to tree-based AMR. We discuss how SFCs offer us
a way to efficiently store and access the elements of a mesh in an array using
the element's SFC index. Furthermore, we show how to partition a mesh in
linear runtime to maintain an ideal load-balance using SFCs.

When talking about SFCs, we have to distinguish between SFCs in the analytical sense
and SFCs in the numerical (or discrete) sense.
In the analytical sense an SFC is a continuous mapping $f\colon \mathrm
I\rightarrow \IR^n$ from a compact set $\mathrm I \subset \IR$ into $\IR^n$
whose image $f(\mathrm I)$ has a positive $n$-dimensional volume
\cite{Bader12, Zumbusch00}. In most cases, $\mathrm I$ is the unit interval.

However, when SFCs are used in a numerical sense they are considered as maps
from or into a finite index set, which is a discrete version of the analytical
SFC. A typical approach is to define an analytical SFC as a limit of some
iteration rule and then define  an SFC in the numerical sense as a map resulting
from only finitely many iteration steps; see for example
\cite{Bader12,Sagan94,HaverkortWalderveen10} and the references therein.

In contrast to this indirect approach, we present a new formal definition of
(numerical) SFCs that is self-consistent and readily applies to adaptive mesh
refinement.

\section{Refinement spaces and refinements}

In order to define SFCs, we introduce refinement spaces and refinements.
As we have seen in the previous chapter, in tree-based AMR we associate the
elements resulting from a root element via refinement with the leaves of a
refinement tree.
In order to be independent of a particular geometric embedding of the mesh
elements, we define SFCs to be mappings on this refinement tree.
More precisely, we start with the set of all possible mesh elements, which 
naturally form the vertices of a tree.

\begin{definition}
  \label{def:refspace}
 A \textbf{refinement space} is a rooted tree with countably many vertices.
 Thus, it is a connected graph without circles and a distinguished vertex
 $\mathcal E$, the root.
 We call the vertices of the tree the \textbf{elements} of the refinement space.
\end{definition}
\begin{definition}
  \label{def:sfclevel}
 The \textbf{level} $\ell(E)$ of an element $E$ of a refinement space is its
 distance from the root vertex.
 By $\mathcal S^l$ we denote the set of all elements of level $l$. 

 The \textbf{parent} of an element $E$ is the unique element 
 $P$ with which $E$ shares an edge and for which  $\ell(E) = \ell(P)+1$.
 The root element is the only element that has no parent.
 
 Vice-versa, we say that $E$ is a \textbf{child} of $P$ if and only if $P$ is
 the parent of $E$.
\end{definition}
\begin{definition}
  \label{def:sfcrefmaps}
  Let $l\geq 0$.
 The $l$-th \textbf{refinement map} $R^l$ is a map
  $R^l\colon \mathcal S^l \rightarrow \mathcal P(\mathcal S^{l+1})$,
  mapping an element $E$ of level $l$ to the set of its children
  $R^l(E)$ (here $\mathcal P$ denotes the power set).
  We say that $E$ is refined into the elements $R^l(E)$.
\end{definition}

In fact, we can identify a refinement space solely from its elements, their
levels and the refinement maps.
\begin{proposition}
  \label{prop:refspace}
  A refinement space is equivalent to a triple $(\mathcal S, \ell, \mathcal
  R)$, where $\mathcal S$ is a set, 
  $\ell\colon\mathcal S\rightarrow \IN_0$ a map, and
  $\mathcal R=\setm{R^l}{l\in\IN_0}$ is a set of maps,
  $R^l\colon \mathcal S^l \rightarrow \mathcal P(\mathcal S^{l+1})$
  with $\mathcal S^l = \ell^{-1}(l)$,
  such that
  \begin{itemize}
    \item there exists exactly one element $\mathcal E\in\mathcal S$ with $\ell(\mathcal E) = 0$, and
    \item the image of $R^l$ is a partition of $\mathcal S^{l+1}$:
     \begin{align}
       R^l(E) \cap R^l(E') &= \emptyset \textrm{ for } E\neq E'\in\mathcal S^l,
        \label{eq:refinementa}\\
     \bigcup_{E\in\mathcal S^l} R^l(E) &= \mathcal S^{l+1}.\label{eq:refinementb}
     \end{align}
  \end{itemize}
\end{proposition}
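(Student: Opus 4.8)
The statement is an equivalence, so the plan is to give constructions in both directions and check that they are mutually inverse. Passing from a refinement space to a triple is essentially unwinding the definitions: given the rooted tree with root $\mathcal E$, set $\mathcal S$ to be its vertex set (countable by Definition~\ref{def:refspace}), let $\ell$ be the level of Definition~\ref{def:sfclevel}, and let $R^l(E)$ be the set of children of $E$ of level $l+1$ as in Definition~\ref{def:sfcrefmaps}. Then $\mathcal E$ is the unique level-$0$ element, because in a tree the only vertex at distance $0$ from the root is the root itself. For the partition property I use that every non-root vertex $E'$ has a unique neighbour on the unique path to $\mathcal E$, and that neighbour has level $\ell(E')-1$; by Definition~\ref{def:sfclevel} this is precisely the parent of $E'$, so $E'$ lies in $R^{\ell(E')-1}(E)$ for exactly one $E\in\mathcal S^{\ell(E')-1}$, which gives both the disjointness~\eqref{eq:refinementa} and the covering~\eqref{eq:refinementb}.

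For the converse, given a triple $(\mathcal S,\ell,\mathcal R)$ satisfying the two axioms, I build a graph $G$ on vertex set $\mathcal S$ by joining $E$ to $E'$ whenever $E'\in R^{\ell(E)}(E)$. By~\eqref{eq:refinementa}--\eqref{eq:refinementb} each $E'\in\mathcal S^{l+1}$ lies in $R^{l}(E)$ for exactly one $E$, which I call the parent $p(E')$; thus every non-root vertex has exactly one neighbour of smaller level, and $\mathcal E$ has none. Iterating $p$ lowers the level by one at each step, so it reaches the unique level-$0$ element $\mathcal E$ after exactly $\ell(E)$ steps; this shows $G$ is connected and $\dist_G(\mathcal E,E)\le\ell(E)$, while the reverse inequality is immediate since every edge of $G$ changes the level by exactly one. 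Hence the level function recovered from $G$ is again $\ell$, and the children of $E$ in $G$ are by construction $\{\,E':p(E')=E\,\}=R^{\ell(E)}(E)$, so $\mathcal R$ is recovered as well; together with the obvious fact that the edges of the original tree are exactly its parent--child edges, this shows the two passages are inverse to one another.

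It remains to see that $G$ is actually a tree, i.e.\ that it contains no circle, which for a connected graph is equivalent to every edge being a bridge. Fix an edge $e=\{w,E'\}$ with $w=p(E')$. Call $x$ a \emph{descendant} of $E'$ if $E'$ occurs in the ancestor chain $x,p(x),p(p(x)),\dots$; one checks that the set $D$ of descendants of $E'$ contains $E'$ but not $w$, and that the only edge of $G$ with exactly one endpoint in $D$ is $e$ itself (parents of descendants other than $E'$ are again descendants, children of descendants are descendants). Hence in $G-e$ the set $D$ is a union of components not containing $w$, so $e$ disconnects $E'$ from $w$ and is a bridge. I expect this acyclicity argument to be the main obstacle: the forward direction and the round-trip verification are bookkeeping, whereas ruling out circles forces an essential use of the partition property through the uniqueness of parents and the ``no edge leaves $D$'' claim. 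A viable alternative is a direct minimum-level argument on a putative circle—at a vertex of least level on it both circle-neighbours must be children, and tracing the circle around forces a repeated vertex—but the bridge formulation seems cleaner to write down rigorously for a countably infinite tree.
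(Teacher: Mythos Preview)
Your proof is correct and follows the same construction as the paper: build a graph on $\mathcal S$ by joining each element to its children, then argue it is a connected acyclic graph with root $\mathcal E$. The paper's own proof is much terser---it simply asserts that properties~\eqref{eq:refinementa} and~\eqref{eq:refinementb} ``directly imply that the graph has no circles and is connected'' and that the forward direction follows from Definitions~\ref{def:sfclevel} and~\ref{def:sfcrefmaps}---whereas you spell out the connectedness via iterated parents, the acyclicity via a bridge/descendant argument, and you explicitly verify that the two passages are mutually inverse (recovering $\ell$ and $R^l$), which the paper leaves implicit.
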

\begin{proof}
  Given such a triple, we build a graph by connecting two elements of $\mathcal S$ if one is the
  refinement of the other.
  Properties~\eqref{eq:refinementa} and~\eqref{eq:refinementb} then directly
  imply that the graph has no circles and is connected. It is thus a tree and
  $\mathcal E$ is a possible choice for the root. The other direction of the
  equivalence follows from Definitions~\ref{def:sfclevel}
  and~\ref{def:sfcrefmaps}.
\end{proof}

By abuse of notation we also write $\mathcal S$ instead of $(\mathcal S,\ell,
\mathcal R)$.

\begin{remark}
  If a maximum level $\mathcal L = \max_{E\in\mathcal S}\ell(E)<\infty$ exists,
  then we obtain
  \begin{equation}
    \mathcal S^l = \emptyset \textrm{ for all } l > \mathcal L,
  \end{equation}
  and because of \eqref{eq:refinementb}, we get
  \begin{equation}
    \mathcal S^l \neq \emptyset \textrm{ for all } l\leq \mathcal L
  \end{equation} 
  as well as $R^\mathcal L(E) = \emptyset$ for each level
  $\mathcal L$ element $E\in\mathcal S$.
  In particular, such a maximum level always exists if there are finitely many
  elements.
  
  If a maximum level does not exist, then the set of elements must be infinite and
  we can also conclude
  \begin{equation}
    \mathcal S^l \neq \emptyset \textrm{ for all } l \in\IN_0.
  \end{equation} 
\end{remark}

The refinement maps define a hierarchy of the elements within a refinement space.
We introduce several notations for the relationships among elements.

\begin{definition}
  If $E \in R^l(E')$ in a refinement space, then we say that $E$ is a
\textbf{child} of $E'$ and $E'$ is the \textbf{parent} of $E$.
\end{definition}
Note that because of~\eqref{eq:refinementb} each element except the root element has a parent
and because of~\eqref{eq:refinementa} it is unique.
It is generally not true that each element of level $l<\mathcal L$ 
has a child, since we do not exclude the case $R^l(E)=\emptyset$.

\begin{definition}
  We say that $E$ is a \textbf{descendant} of $E'$ if $E$ results from $E'$ via
  successive refinement, thus $E=E'$, or $\ell(E)\geq \ell(E')$ and $E\in
  R^{\ell(E)-1}\circ\cdots\circ R^{\ell(E')}(E')$, 
  Furthermore, $E'$ is an \textbf{ancestor} of $E$ if and only if $E$ is a descendant of $E'$.
\end{definition}

\begin{definition}
 A \textbf{refinement} of a refinement space $\mathcal S$ is a subset
$\mathscr S\subset\mathcal S$ of elements that can be constructed from the 
root element via successively replacing a parent with its children. 
Thus, starting with the set $\mathcal S^0 = \set{\mathcal E}$,
a refinement is a set that can be constructed from it by applying the
following rule recursively:
\begin{itemize}
 \item If $\mathscr S$ is a refinement and $E\in\mathscr S$, then
  $\mathscr S\ohne \set{E} \cup R^{\ell(E)}(E)$ is a refinement.
\end{itemize}
\end{definition}

\begin{remark}
\label{rem:reftree}
Thus, in the language of trees, a refinement $\mathscr S$ consists of the leaves
of a subtree of the refinement space that contains the root element and for
each element either none or all of its children.  
\end{remark}
\begin{definition}
We call the tree from Remark~\ref{rem:reftree} the
\textbf{refinement tree} of $\mathscr S$, and the elements of $\mathscr S$ are
the \textbf{leaves} of the refinement.
\end{definition}

With this notion, we think of a refinement as an adaptive mesh, as it
could arise from any tree-based adaptive mesh application.
We display examples for 1:4 quadrilateral refinement in
Figures~\ref{fig:refspace-ex1} and~\ref{fig:refspace-ex2}.
\begin{remark}
  Each set $\mathcal S^l$ of level $l$ elements is a refinement, since we can construct it 
  from $\mathcal S^{l-1}$ by applying the refinement rule to each element. This
  iteration starts with $\mathcal S^0=\set{\mathcal E}$.
  We call the refinement $\mathcal S^l$ the \textbf{uniform refinement of level
  $l$}.
\end{remark}

We illustrate this definition with a geometrical example.
\begin{example}
 \label{ex:quad14}
  As an example we discuss the 1:4 refinement of quadrilateral elements.
  We fix a maximum level $\mathcal L$ and consider the scaled unit square 
  $[0, 2^\mathcal L]^2$. Starting with the root element $\mathcal E = [0,
  2^\mathcal L]^2$, all other elements in our refinement space are constructed
  by dividing an element into four subelements as in
  Figure~\ref{fig:refspace-ex1}, increasing their level by one. 
  The resulting refinement tree is a quadtree; each vertex
  of level $l<\mathcal L$ has exactly four children.
  To be formally correct, we describe the refinement space in terms of the
  level and the refinement maps.
  Let $\mathcal S^l$ be the set of all subsquares of $\mathcal E$ of side
  length $2^{\mathcal L -l}$ with coordinates an integer multiple of
  $2^{\mathcal L -l}$. We can identify such an element with its lower left
  corner $C=(m2^{\mathcal L -l},n2^{\mathcal L-l})$
  and its level $l$. Here, $m,n\in\IN_0,\, 0\leq n,m < 2^l$.
  The set $\mathcal S$  is defined as the union of all $\mathcal S^l$, $0\leq
  l\leq\mathcal L$.
  We describe the refinement maps $R^l$ in terms of the corner coordinates by
  \begin{equation}
   \begin{split}
    R^l((m2^{\mathcal L -l},n2^{\mathcal L -l})) =
    {\left\lbrace \vphantom{2^\mathcal L}\right.}%
    &(2m2^{\mathcal L -l-1},2n2^{\mathcal L -l-1}),\\
    &(2m2^{\mathcal L -l-1},(2n+1)2^{\mathcal L -l-1}),\\
    &((2m+1)2^{\mathcal L -l-1},2n2^{\mathcal L -l-1}),\\
    &((2m+1)2^{\mathcal L -l-1},(2n+1)2^{\mathcal L -l-1})
    {\left. \vphantom{2^\mathcal L}\right\rbrace}.%
   \end{split}
  \end{equation}
\end{example}

\begin{figure}
 \center
\def\svgwidth{0.6\textwidth}
\begingroup%
  \makeatletter%
  \providecommand\color[2][]{%
    \errmessage{(Inkscape) Color is used for the text in Inkscape, but the package 'color.sty' is not loaded}%
    \renewcommand\color[2][]{}%
  }%
  \providecommand\transparent[1]{%
    \errmessage{(Inkscape) Transparency is used (non-zero) for the text in Inkscape, but the package 'transparent.sty' is not loaded}%
    \renewcommand\transparent[1]{}%
  }%
  \providecommand\rotatebox[2]{#2}%
  \ifx\svgwidth\undefined%
    \setlength{\unitlength}{540.42788086bp}%
    \ifx\svgscale\undefined%
      \relax%
    \else%
      \setlength{\unitlength}{\unitlength * \real{\svgscale}}%
    \fi%
  \else%
    \setlength{\unitlength}{\svgwidth}%
  \fi%
  \global\let\svgwidth\undefined%
  \global\let\svgscale\undefined%
  \makeatother%
  \begin{picture}(1,0.43632568)%
    \put(0,0){\includegraphics[width=\unitlength,page=1]{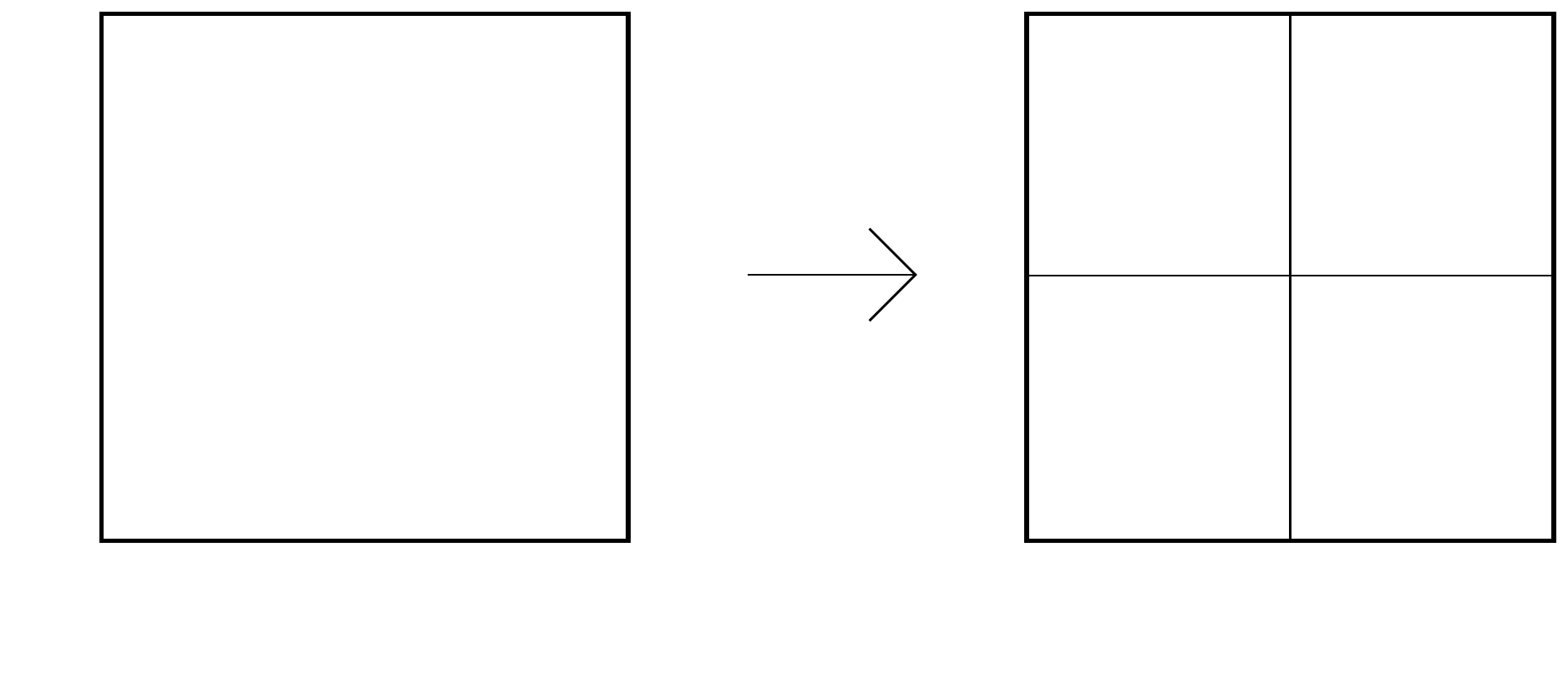}}%
    \put(0.50572059,0.29923624){\color[rgb]{0,0,0}\makebox(0,0)[lb]{\smash{$R^l$}}}%
    \put(0,0){\includegraphics[width=\unitlength,page=2]{refinesquare_tex.pdf}}%
    \put(0.00641853,0.06254479){\color[rgb]{0,0,0}\makebox(0,0)[lb]{\smash{$C$}}}%
    \put(0,0){\includegraphics[width=\unitlength,page=3]{refinesquare_tex.pdf}}%
    \put(0.16686816,0.00193135){\color[rgb]{0,0,0}\makebox(0,0)[lb]{\smash{$2^{\mathcal L - l}$}}}%
    \put(0,0){\includegraphics[width=\unitlength,page=4]{refinesquare_tex.pdf}}%
    \put(0.67581298,0.00193135){\color[rgb]{0,0,0}\makebox(0,0)[lb]{\smash{$2^{\mathcal L - l-1}$}}}%
    \put(0,0){\includegraphics[width=\unitlength,page=5]{refinesquare_tex.pdf}}%
    \put(0.596902,0.06290403){\color[rgb]{0,0,0}\makebox(0,0)[lb]{\smash{$C$}}}%
  \end{picture}%
\endgroup%
\hfill
\def\svgwidth{0.27\textwidth}
\begingroup%
  \makeatletter%
  \providecommand\color[2][]{%
    \errmessage{(Inkscape) Color is used for the text in Inkscape, but the package 'color.sty' is not loaded}%
    \renewcommand\color[2][]{}%
  }%
  \providecommand\transparent[1]{%
    \errmessage{(Inkscape) Transparency is used (non-zero) for the text in Inkscape, but the package 'transparent.sty' is not loaded}%
    \renewcommand\transparent[1]{}%
  }%
  \providecommand\rotatebox[2]{#2}%
  \ifx\svgwidth\undefined%
    \setlength{\unitlength}{240.24689941bp}%
    \ifx\svgscale\undefined%
      \relax%
    \else%
      \setlength{\unitlength}{\unitlength * \real{\svgscale}}%
    \fi%
  \else%
    \setlength{\unitlength}{\svgwidth}%
  \fi%
  \global\let\svgwidth\undefined%
  \global\let\svgscale\undefined%
  \makeatother%
  \begin{picture}(1,0.98134721)%
    \put(0,0){\includegraphics[width=\unitlength,page=1]{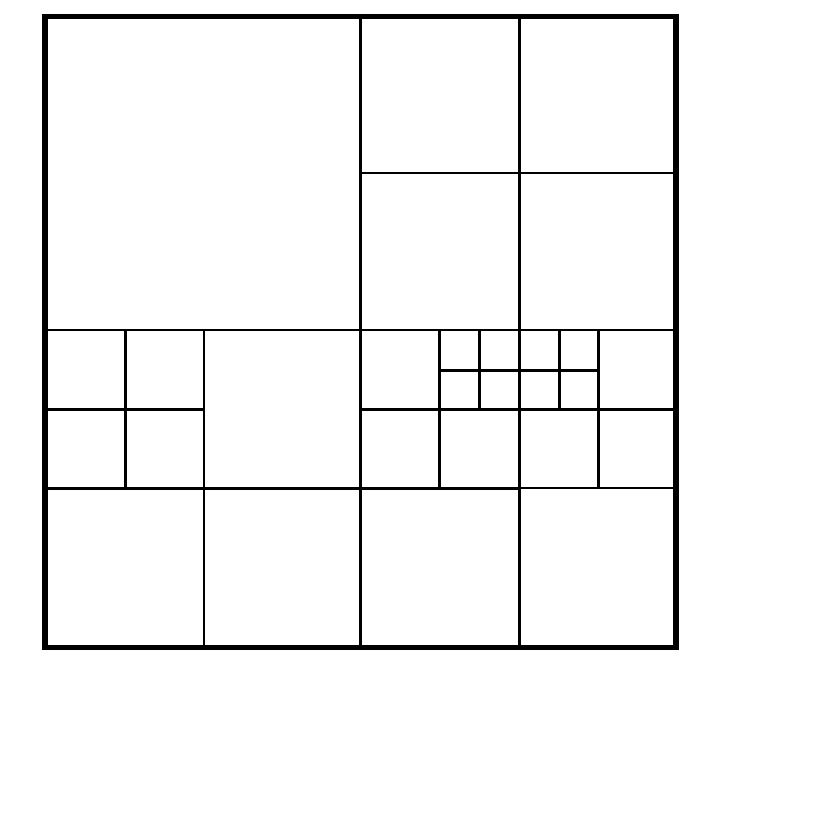}}%
    \put(0.01443828,0.09094114){\color[rgb]{0,0,0}\makebox(0,0)[lb]{\smash{$0$}}}%
    \put(0.77199233,0.09049284){\color[rgb]{0,0,0}\makebox(0,0)[lb]{\smash{$2^\mathcal L$}}}%
  \end{picture}%
\endgroup%
   \caption[1:4 refinement of a quadrilateral element.]
  {1:4 refinement of quadrilateral elements.
  Left: The refinement rule for an element of level $l<\mathcal L$ with
  anchor node $C$. The $x$ and $y$ coordinates of $C$ are integer multiples
  of $2^{\mathcal L - l}$ and lie within $[0,2^\mathcal L]^2$.
  Right: An example for a refinement in the refinement space.}
  \figlabel{fig:refspace-ex1}
\end{figure}

\section{Space-filling curves}

In short,  an SFC index is a map from a refinement space into the natural
numbers that fulfills certain properties.

\begin{definition}
\label{def:sfcindex}
  A \textbf{space-filling curve index} on a refinement space $\mathcal S$ is a map
  \begin{equation}
    \mathcal I\colon \mathcal S \rightarrow \IN_0
  \end{equation}
  that fulfills the following properties for any $E, E',\hat E\in \mathcal S$:
  \begin{enumerate}[(i)]
    \item The map $\mathcal I \times \ell\colon\mathcal
     S\rightarrow\IN_0\times\IN_0$ is injective. Thus, the index and the level
     uniquely determine an element of
      $\mathcal S$.
    \item If $E$ is an ancestor  of $E'$ then $\mathcal I(E)\leq \mathcal I(E')$.
         Hence, refining an element cannot decrease its SFC index.
    \item If $\mathcal I(E) < \mathcal I(\hat E)$ and $\hat E$ is not a descendant of $E$,
      then $\mathcal I(E) \leq \mathcal I(E') < \mathcal I(\hat E)$ for all descendants $E'$ of $E$.
      Therefore, refining an element is a 'local' operation in terms of the index.
  \end{enumerate}
\end{definition}

Restricted to the leaves of a refinement,  an SFC index becomes injective:

\begin{proposition}
  \label{prop:sfcinj}
  Let $\mathscr S$ be a refinement in a refinement space $\mathcal S$ with SFC index
  $\mathcal I$. Then any two leaves in $\mathscr S$ have different SFC indices, thus
  \begin{equation}
  E\neq E'\in\mathscr S \Rightarrow \mathcal I(E) \neq \mathcal I(E').
  \end{equation}
  \begin{proof}
    Let $E\neq E'\in\mathscr S$.
    Let $P$ be the nearest common ancestor of $E$ and $E'$, that is, the element
    in $\mathcal S$ of greatest level that is both an ancestor of $E$ and $E'$.
    Since the root element is ancestor of all elements, $P$ must exist and because
    each element has a unique parent, $P$ is also unique.
    Now $P\neq E$ and $P\neq E'$ since otherwise $E'$ would be a descendant of $E$ or
    vice-versa, which is a contradiction with $\mathscr S$ being a refinement.
    Furthermore, since $P$ is the nearest common ancestor of $E$ and $E'$, $P$
    must have two children $P_E\neq P_{E'}$ that are ancestors of
    $E$ and $E'$. Since $\ell(P_E) = \ell (P_{E'}) = \ell(P) + 1$, we know 
    from the injectivity of $\mathcal I\times\ell$ that the indices 
    of $P_E$ and $P_{E'}$ are not equal. We assume $\mathcal I(P_E)<\mathcal
    I(P_{E'})$ without loss of generality. 
    From property (ii) of Definition~\ref{def:sfcindex} we
    conclude that $\mathcal I(P_{E'})\leq \mathcal I(E')$ and with property
    (iii) follows
    \begin{equation}
      \mathcal I(P_E)\leq\mathcal I(E)<\mathcal I(P_{E'}),
    \end{equation}
    and thus $\mathcal I(E)<\mathcal I(P_{E'}) \leq\mathcal I(E')$,
    proving the claim.
  \end{proof}
\end{proposition}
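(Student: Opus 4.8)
The plan is to locate, for two given distinct leaves, their nearest common ancestor in the refinement space and then descend along the subtree to separate their indices. First I would fix two leaves $E\ne E'\in\mathscr S$ and let $P$ be their nearest common ancestor in $\mathcal S$, that is, the element of greatest level that is an ancestor of both. Such a $P$ exists because the root $\mathcal E$ is an ancestor of every element, and it is unique because each non-root element has exactly one parent, so the set of common ancestors of $E$ and $E'$ forms a chain and therefore has a unique element of maximal level.

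Next I would observe that $P$ is distinct from both $E$ and $E'$: if, say, $P=E$, then $E$ would be a proper ancestor of $E'$, which is impossible since a refinement is the leaf set of a subtree and hence an antichain under the descendant relation (Remark~\ref{rem:reftree}). Because $E$ is then a proper descendant of $P$, the path in the tree from $P$ to $E$ starts with a unique child $P_E$ of $P$, which is an ancestor of $E$; similarly there is a unique child $P_{E'}$ of $P$ that is an ancestor of $E'$. Moreover $P_E\ne P_{E'}$, since otherwise that common child would be a common ancestor of $E$ and $E'$ of level strictly greater than $\ell(P)$, contradicting the maximality of $\ell(P)$.

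Then I would run the index comparison through the three axioms of Definition~\ref{def:sfcindex}. Since $\ell(P_E)=\ell(P_{E'})=\ell(P)+1$, injectivity of $\mathcal I\times\ell$ (property (i)) gives $\mathcal I(P_E)\ne\mathcal I(P_{E'})$; assume $\mathcal I(P_E)<\mathcal I(P_{E'})$ after possibly swapping the roles of $E$ and $E'$. Property (ii) applied to the ancestor $P_{E'}$ of $E'$ yields $\mathcal I(P_{E'})\le\mathcal I(E')$. Since $P_{E'}$ is not a descendant of $P_E$ (the two are distinct siblings) and $\mathcal I(P_E)<\mathcal I(P_{E'})$, property (iii) applied to the descendant $E$ of $P_E$ gives $\mathcal I(P_E)\le\mathcal I(E)<\mathcal I(P_{E'})$. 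Chaining the inequalities produces $\mathcal I(E)<\mathcal I(P_{E'})\le\mathcal I(E')$, so in particular $\mathcal I(E)\ne\mathcal I(E')$, as claimed.

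I expect the only genuinely delicate step to be the combinatorial setup in the first two paragraphs — pinning down the nearest common ancestor $P$ and the two distinguished children $P_E$, $P_{E'}$, and recording that a refinement contains no element together with a proper descendant of it. After that, the proof is a short and essentially forced chase through properties (i)--(iii) of the SFC index.
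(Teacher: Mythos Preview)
Your proof is correct and follows essentially the same approach as the paper's: locate the nearest common ancestor $P$, pass to its two distinct children $P_E$ and $P_{E'}$, and then chain properties (i)--(iii) of Definition~\ref{def:sfcindex} to separate $\mathcal I(E)$ from $\mathcal I(E')$. If anything, you are slightly more explicit than the paper in justifying why $P_E\neq P_{E'}$ and why $P_{E'}$ is not a descendant of $P_E$.
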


Proposition~\ref{prop:sfcinj} gives us the theoretical justification for the desired
one-dimensional storage scheme of a refinement.  Since for any refinement the
SFC index is injective on the leaves, we can uniquely arrange the leaves in an
array such that the order induced by the SFC index is preserved.
\begin{corollary}
  \label{cor:consind}
  Let $\mathscr S$ be a refinement of $\mathcal S$ with SFC index $\mathcal I$
and let $N=|\mathscr S|$ be the number of leaves.  Then there exists a unique
bijective map
  \begin{equation}
    \label{eq:consindex}
    \mathcal I_{\mathscr S}\colon \mathscr S \rightarrow \set{0,\dots,N-1},
  \end{equation}
  that is monotonous under $\mathcal I$, thus
  \begin{equation}
    \mathcal I_{\mathscr S} (E) < 
    \mathcal I_{\mathscr S} (E') 
    \Leftrightarrow
    \mathcal I (E) < 
    \mathcal I (E').
  \end{equation}
  The statement remains true if $N=\infty$, in which case 
  the right-hand side of equation~\eqref{eq:consindex} is $\IN_0$.
\end{corollary}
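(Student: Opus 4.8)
The plan is to construct $\mathcal I_{\mathscr S}$ explicitly as the "rank function" of $\mathcal I$ restricted to $\mathscr S$, and then verify the three required properties (well-definedness, monotonicity, uniqueness) in turn. By Proposition~\ref{prop:sfcinj}, $\mathcal I$ is injective on $\mathscr S$, so the set $\mathcal I(\mathscr S)\subseteq\IN_0$ is in bijection with $\mathscr S$, and since it is a subset of the well-ordered set $\IN_0$, it inherits a canonical order-isomorphism onto an initial segment of $\IN_0$. Concretely, for $E\in\mathscr S$ I would define
\begin{equation}
  \mathcal I_{\mathscr S}(E) := \bigl|\setm{E'\in\mathscr S}{\mathcal I(E')<\mathcal I(E)}\bigr|.
\end{equation}
When $N=|\mathscr S|<\infty$ this counts at most $N-1$ elements, so the image lands in $\set{0,\dots,N-1}$; when $N=\infty$ the image lands in $\IN_0$. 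The first step is to argue this is well-defined: in the finite case there is nothing to check, and in the infinite case one must confirm each such counting set is finite, which follows because $\mathcal I$ is injective on $\mathscr S$ and the values $\mathcal I(E')$ below the fixed value $\mathcal I(E)$ form a subset of $\set{0,\dots,\mathcal I(E)-1}$, hence are finite in number.

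Next I would establish monotonicity. If $\mathcal I(E)<\mathcal I(E')$, then every element counted for $\mathcal I_{\mathscr S}(E)$ is also counted for $\mathcal I_{\mathscr S}(E')$, and $E$ itself is counted for $\mathcal I_{\mathscr S}(E')$ but not for $\mathcal I_{\mathscr S}(E)$, giving $\mathcal I_{\mathscr S}(E)<\mathcal I_{\mathscr S}(E')$. The reverse implication $\mathcal I_{\mathscr S}(E)<\mathcal I_{\mathscr S}(E')\Rightarrow\mathcal I(E)<\mathcal I(E')$ follows by contraposition using injectivity of $\mathcal I$ on $\mathscr S$: if $\mathcal I(E)\geq\mathcal I(E')$ then either $\mathcal I(E)=\mathcal I(E')$, forcing $E=E'$ and equal ranks, or $\mathcal I(E)>\mathcal I(E')$, which by the forward direction gives $\mathcal I_{\mathscr S}(E)>\mathcal I_{\mathscr S}(E')$. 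This also shows injectivity of $\mathcal I_{\mathscr S}$.

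For surjectivity onto $\set{0,\dots,N-1}$ (resp.\ $\IN_0$), I would order the finitely many (resp.\ countably many) values of $\mathcal I$ on $\mathscr S$ increasingly as $v_0<v_1<v_2<\cdots$ — possible because any subset of $\IN_0$ is well-ordered and, in the infinite case, unbounded so the enumeration exhausts all of $\IN_0$ in index — and observe that the element $E_k\in\mathscr S$ with $\mathcal I(E_k)=v_k$ satisfies $\mathcal I_{\mathscr S}(E_k)=k$. Hence $\mathcal I_{\mathscr S}$ is a bijection onto the claimed codomain. Finally, uniqueness: any order-preserving bijection between $\mathscr S$ (with the order induced by $\mathcal I$, which is a total order by injectivity) and an initial segment of $\IN_0$ must send the $k$-th smallest element to $k$, so it coincides with $\mathcal I_{\mathscr S}$; this is the standard fact that order-isomorphisms between well-ordered sets are unique.

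I do not expect any genuine obstacle here — the statement is essentially the observation that a subset of $\IN_0$ is order-isomorphic to an initial segment of $\IN_0$ in a unique way, transported across the bijection $\mathcal I|_{\mathscr S}$. The only point requiring a moment's care is the infinite case, where one must note that $\mathcal I(\mathscr S)$ being an infinite subset of $\IN_0$ is automatically cofinal (unbounded), so its increasing enumeration is indexed by all of $\IN_0$ and the rank function is onto $\IN_0$ rather than onto a proper initial segment.
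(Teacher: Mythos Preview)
Your proof is correct and follows the approach implicit in the paper: the corollary is stated without proof, since it is a direct consequence of Proposition~\ref{prop:sfcinj} (injectivity of $\mathcal I$ on $\mathscr S$) together with the standard fact that any subset of $\IN_0$ admits a unique order-isomorphism onto an initial segment of $\IN_0$. You have simply spelled out this standard fact carefully via the rank function, which is exactly the intended argument.
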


\begin{definition}
  \label{def:consindex}
  We call this $\mathcal I_\mathscr S$ the \textbf{consecutive Index} of
  $\mathscr S$ with regard to $\mathcal I$.
\end{definition}

\section{The Morton space-filling curve} 
Most SFCs in use arise from a recursive pattern. The root element is 
a polytope; most common are lines, triangles, quadrilaterals, tetrahedra,
hexahedra, prisms, and pyramids. This root element is then subdivided
into children of the same polytope type~\footnote{With the exception of pyramids,
which may be subdivided into pyramids and tetrahedra.},
which are then provided with a local order, i.e.\ per parent.
 This refinement pattern is then applied recursively to the children, with the
possibility of rotations and/or reflections. See also
Figure~\ref{fig:mortonchildids}. 

For a detailed overview of SFCs we refer the reader
to~\cite{HaverkortWalderveen10,Haverkort17} and the references therein.
Here, we discuss the Morton SFC for quadrilateral/hexahedral elements as an
example.

\label{sec:mortonsfc}
The Morton curve (or $Z$-curve) for quadrilateral (2D) and hexahedral (3D)
elements was first described by Lebesgue~\cite{Lebesgue04} and its applications
to data storage were discussed by G.\ M.\ Morton~\cite{Morton66}. The
refinement space for the Morton index results from 1:4 refinement in 2D and
1:8 refinement in 3D.  In general, the Morton index can be defined in any
space dimension $n$ on the $n$-dimensional hypercube with 1:$2^n$ refinement.
Let us now consider dimension $n = 2$. We already discussed in
Example~\ref{ex:quad14} that each element $Q$ of the refinement space $\mathcal
S$ is identified with its level and its lower left corner coordinates $(x,y)$.
We call this corner the anchor node of $Q$.
These coordinates are integers in $[0,2^\mathcal L)\cap \IN_0$ and as such
they posses a binary representation of length $\mathcal L$:
\begin{equation}
x = \sum_{j = 0}^{\mathcal L - 1} x_j2^j 
= ( x_{\mathcal L -1}  x_{\mathcal L -2}  \dots x_0)_2,
\quad y = \sum_{j=0}^{\mathcal L - 1}y_j2^j 
= ( y_{\mathcal L -1} y_{\mathcal L -2} \dots y_0)_2,
\end{equation}
with $x_j,y_j\in\set{0,1}$.
The Morton index $m\colon \mathcal S \rightarrow \IN_0$ is defined by mapping
these coordinates to their bitwise interleaving,
\begin{equation}
\label{eq:mortoninter}
m(Q) := (
y_{\mathcal L -1}x_{\mathcal L-1}
y_{\mathcal L -2}x_{\mathcal L-2} \dots
 y_0x_0
)_2\in[0,2^{2\mathcal L}).
\end{equation}
See also Figure~\ref{fig:excomputemorton}. This scheme extends to higher dimensions $n>2$
via bitwise interleaving all $n$ coordinates of the anchor node. The Morton
index is  an SFC index in the sense of
Definition~\ref{def:sfcindex}~\cite{SundarSampathBiros08}.

\begin{figure}
  \center
  \includegraphics[width=0.3\textwidth]{./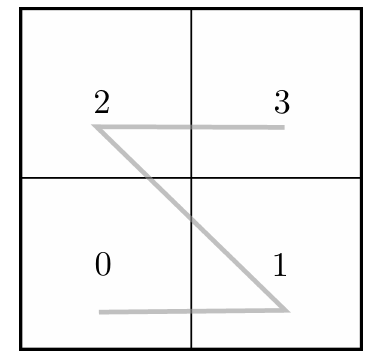}%
  \hspace{5ex}
  \includegraphics[width=0.3\textwidth]{./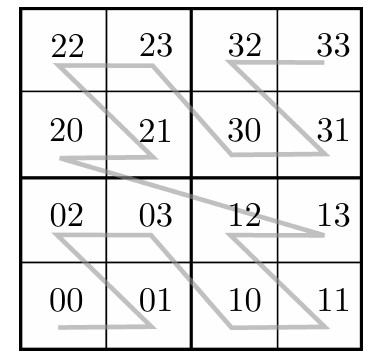}%
  \caption[Child-ids for the cubical Morton index]
  {The child-ids for the Morton index. 
   We label the four children of a quadrilateral with $0, 1, 2, 3$ in $Z$-order
   (left). We recursively apply this scheme to all descendants, appending the
   child-id on each level (right) to compute the Morton index. This is an alternate method
   to the bitwise interleaving~\eqref{eq:mortoninter}.
   See also
   equation~\eqref{eq:mortonviacid} and Figure~\ref{fig:excomputemorton}.
 }
  \figlabel{fig:mortonchildids}
\end{figure}

\begin{figure}
 \center
 \begin{minipage}{0.4\textwidth}
  \def\svgwidth{\textwidth}
\begingroup%
  \makeatletter%
  \providecommand\color[2][]{%
    \errmessage{(Inkscape) Color is used for the text in Inkscape, but the package 'color.sty' is not loaded}%
    \renewcommand\color[2][]{}%
  }%
  \providecommand\transparent[1]{%
    \errmessage{(Inkscape) Transparency is used (non-zero) for the text in Inkscape, but the package 'transparent.sty' is not loaded}%
    \renewcommand\transparent[1]{}%
  }%
  \providecommand\rotatebox[2]{#2}%
  \ifx\svgwidth\undefined%
    \setlength{\unitlength}{227.34702148bp}%
    \ifx\svgscale\undefined%
      \relax%
    \else%
      \setlength{\unitlength}{\unitlength * \real{\svgscale}}%
    \fi%
  \else%
    \setlength{\unitlength}{\svgwidth}%
  \fi%
  \global\let\svgwidth\undefined%
  \global\let\svgscale\undefined%
  \makeatother%
  \begin{picture}(1,0.92789623)%
    \put(0,0){\includegraphics[width=\unitlength,page=1]{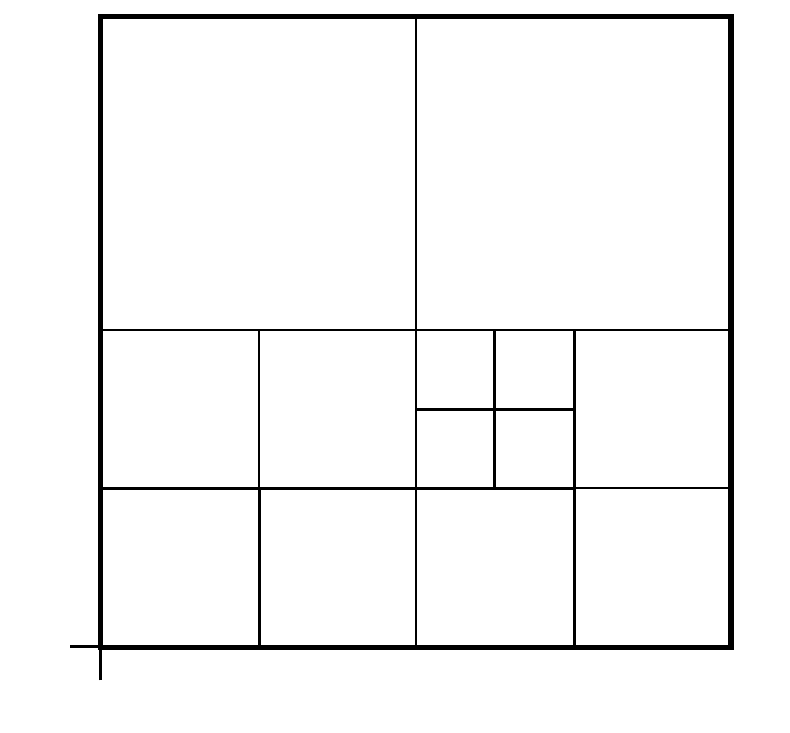}}%
    \put(0.01496542,0.09672958){\color[rgb]{0,0,0}\makebox(0,0)[lb]{\smash{$0$}}}%
    \put(0.0904747,0.02357696){\color[rgb]{0,0,0}\makebox(0,0)[lb]{\smash{$0$}}}%
    \put(0,0){\includegraphics[width=\unitlength,page=2]{mortoncode_tex.pdf}}%
    \put(0.80425054,0.02427781){\color[rgb]{0,0,0}\makebox(0,0)[lb]{\smash{$2^\mathcal{L} = 2^4$}}}%
    \put(0.6473233,0.33661513){\color[rgb]{0,0,0}\makebox(0,0)[lb]{\smash{$Q$}}}%
    \put(0,0){\includegraphics[width=\unitlength,page=3]{mortoncode_tex.pdf}}%
    \put(0.5902079,0.02224879){\color[rgb]{0,0,0}\makebox(0,0)[lb]{\smash{$10$}}}%
    \put(0,0){\includegraphics[width=\unitlength,page=4]{mortoncode_tex.pdf}}%
    \put(0.03995419,0.2994739){\color[rgb]{0,0,0}\makebox(0,0)[lb]{\smash{$4$}}}%
    \put(0,0){\includegraphics[width=\unitlength,page=5]{mortoncode_tex.pdf}}%
    \put(0.15003019,0.02567488){\color[rgb]{0,0,0}\makebox(0,0)[lb]{\smash{$1$}}}%
    \put(0.21051875,0.02567488){\color[rgb]{0,0,0}\makebox(0,0)[lb]{\smash{$2$}}}%
    \put(0.26136558,0.02567488){\color[rgb]{0,0,0}\makebox(0,0)[lb]{\smash{$3$}}}%
    \put(0,0){\includegraphics[width=\unitlength,page=6]{mortoncode_tex.pdf}}%
  \end{picture}%
\endgroup%
  \end{minipage}
 \begin{minipage}{0.54\textwidth}
   \begin{equation}
    \label{eq:mortonviainter}
     \begin{split}
      (\green x,\red y) &= (\green{10},\red 4) = ((\green{1010})_2,(\red{0100})_2)\\
   \Rightarrow  m(Q) &= (\red 0\green1\red1\green0\red0\green1\red0\green0)_2 = (100)_{10}\\
     \end{split}
   \end{equation}
    \begin{equation}
      \label{eq:mortonviapath}
          m(Q) = (1210)_4 = (100)_{10}
    \end{equation}
 \end{minipage}
\caption[Computing the Morton index of a quad.]{Two ways of computing the
Morton index $m(Q)$ of a quadrant $Q$ in the quadrilateral refinement with
maximum level $\mathcal L = 4$. The quadrant's anchor node has coordinates $x =
10$ and $y = 4$.  We can compute the Morton index via bitwise interleaving as
in equation~\ref{eq:mortonviainter}.
A second way to compute the $m(Q)$ is via $Q$'s refinement path.
We can construct $Q$ from the root element by taking its first child, 
then the second child of that first child, and finally taking the first child
of this quadrant. This leads to the sequence $(121)$ of child-ids.
We append zeroes until we reach the length $\mathcal L$ and interpret it
as a quarternary number, leading to $m(Q) = (1210)_4$.}
\figlabel{fig:excomputemorton}
\end{figure}

We also discuss a recursive way to describe the Morton index.
 We label the four children of a single quadrant
with 0, 1, 2, and 3 in Z-order, the so-called child-id. Hence, the lower left
child has id 0, the lower right child 1, the upper left child 2, and the upper
right child 3; see Figure~\ref{fig:mortonchildids}.

Each element $E$ in the refinement space is constructed from the root element
via successive refinement. These refinements are unique and thus we obtain a
sequence of child-ids $(c_1,c_2,\dots c_{\ell(E)})$ describing this
refinement process. It can be read as: Start with the root element, take its
child $c_1$, from this child take the child $c_2$, and so forth. The element
itself has child-id $c_{\ell(E)}$ with regard to its parent.

We can now extend this sequence up to length $\mathcal L$ by filling up with zeroes
and since $c_i\in\set{0,1,2,3}$, we can interpret it as a quarternary number:
\begin{equation}
\label{eq:mortonviacid}
  (c_1c_2\dots c_{\ell(E)} 0 \dots 0)_4 \in [0,4^\mathcal L) = [0,2^{2\mathcal L}).
\end{equation}
It is straightforward to show that this is exactly the Morton index of $E$, \cite{TropfHerzog81,BursteddeWilcoxGhattas11}:
\begin{equation}
  m(E) = (c_1c_2\dots c_{\ell(E)} 0 \dots 0)_4.
\end{equation}
From this we obtain the correspondence
\begin{equation}
  (y_ix_i)_2 = c_{\mathcal L - i},
\end{equation}
meaning that we can read of the child-id of $E'$s ancestor at level $i+1$ from
the $(\mathcal L-i)$-th bits of the anchor node coordinates. See also
Figure~\ref{fig:excomputemorton}.
We refer to Figure~\ref{fig:refspace-ex2} for an example of the SFC arising
from the Morton index.

Due to the bitwise interleaving, the Morton index of an element can be computed
in constant time and is efficient to implement. Since the index is given
implicitly by the anchor node coordinates, most algorithms that operate with
the Morton code
do not have to carry out the actual bit interleaving but work with the
coordinates instead; see for example~\cite{BursteddeWilcoxGhattas11}.
Another advantage of the Morton index is that it is memory efficient, storing only
the anchor node coordinates and level of an element.
In addition, many low-level algorithms such as finding an element's parent, refining
an element, or constructing its face-neighbors, etc.\ can be computed in
constant time, independent of the element's level. Since the logic of
the Morton code is the same for 2D and 3D, one can implement both versions with
the same source code, carrying out all operations on the $z$-coordinate within an
\texttt{if}-clause or a preprocessor macro, such as it is implemented in the
\pforest library for example \cite{Burstedde10a}.
Due to these advantages, the Morton index is chosen as SFC index by various AMR
packages and solvers~\cite{SundarSampathAdavanietal07, AkcelikBielakBirosEtAl03, BursteddeWilcoxGhattas11}.

The greatest perceived disadvantage of the Morton index are the jumps in
the SFC. The Morton index does not possess the same locality
properties such as for example the Hilbert curve~\cite{Hilbert91}, which may
lead to an increased runtime when iterating over the mesh elements and can
result in cache misses when locating neighbor elements
\cite{CampbellDevineFlahertyEtAl03b}. However, practical experiments show
that this effect is not measurable~\cite{BursteddeBurtscherGhattasEtAl09}.
Also, the Morton SFC may produce disconnected partitions, though the number of 
connected components is shown to be at most two \cite{Bader12, BursteddeHolkeIsaac17b}.

It is notable that the Morton index is not 
only used for AMR applications.
Examples for other use cases include fast matrix multiplication with the
Strassen-Algorithm \cite{ValsalamSkjellum02}, image encryption
\cite{ChangLiu97}, and databases \cite{RamsakMarklFenkEtAl00}.

In Chapter~\ref{ch:tetSFC} we develop a SFC for triangles and tetrahedra whose
core idea is based on the Morton SFC.

\begin{figure}
 \center
\begin{minipage}{0.35\textwidth}
\def\svgwidth{\textwidth}
\begingroup%
  \makeatletter%
  \providecommand\color[2][]{%
    \errmessage{(Inkscape) Color is used for the text in Inkscape, but the package 'color.sty' is not loaded}%
    \renewcommand\color[2][]{}%
  }%
  \providecommand\transparent[1]{%
    \errmessage{(Inkscape) Transparency is used (non-zero) for the text in Inkscape, but the package 'transparent.sty' is not loaded}%
    \renewcommand\transparent[1]{}%
  }%
  \providecommand\rotatebox[2]{#2}%
  \ifx\svgwidth\undefined%
    \setlength{\unitlength}{191.56700439bp}%
    \ifx\svgscale\undefined%
      \relax%
    \else%
      \setlength{\unitlength}{\unitlength * \real{\svgscale}}%
    \fi%
  \else%
    \setlength{\unitlength}{\svgwidth}%
  \fi%
  \global\let\svgwidth\undefined%
  \global\let\svgscale\undefined%
  \makeatother%
  \begin{picture}(1,0.9982147)%
    \put(0,0){\includegraphics[width=\unitlength,page=1]{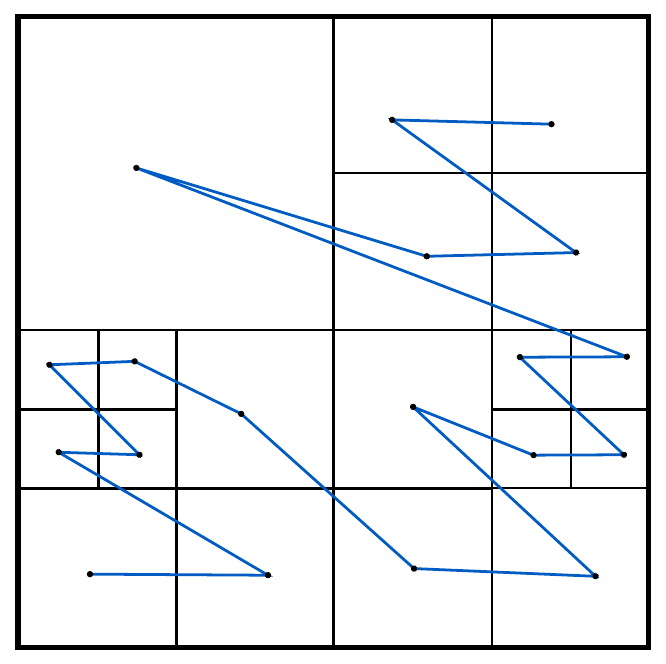}}%
  \end{picture}%
\endgroup%
 \end{minipage}\hfill
\begin{minipage}{0.6\textwidth}
\center
\begin{tikzpicture}[yscale=0.5, xscale=0.5, %
                            level 1/.style={sibling distance=12em},%
                            level 2/.style={sibling distance=3em},%
                            level 3/.style={sibling distance=1.8em},%
                            every node/.style = {shape=rectangle,draw, %
                            minimum size=0.2em,anchor=north},%
                            leaf/.style = {draw,fill=blue!20}%
                            ]
   \node {}
    child { node{}  
    	child { node[leaf](a){}  }
    	child { node[leaf](b){}  }
    	child { 
        node{}
        child { node[leaf](c){}  }
        child { node[leaf]{}  }
        child { node[leaf]{}  }
        child { node[leaf](d){}}  
      }
    	child { node[leaf](e){}  }
    }
    child { node{}  
    	child { node[leaf](f){}  }
    	child { node[leaf]{}  }
    	child { node[leaf](g){}  }
      child { node{}  
       	child { node[leaf](h){}  }
       	child { node[leaf]{}  }
      	child { node[leaf]{}  }
        child { node[leaf](i){}  }
      }
    }
   	child { node[leaf](j){}  }
    child { node{}  
    	child { node[leaf](k){}  }
    	child { node[leaf]{}  }
    	child { node[leaf]{}  }
    	child { node[leaf](l){}  }
    };
    \draw[arrows=->,very thick,blue] (a.center)--(b.center)--(c.center)--%
    (d.center)--(e.center)--(f.center)--(g.center)--(h.center)--(i.center)--%
    (j.center)--(k.center)--(l.center);
 \end{tikzpicture}\\[3ex]
\def\svgwidth{\textwidth}
\begingroup%
  \makeatletter%
  \providecommand\color[2][]{%
    \errmessage{(Inkscape) Color is used for the text in Inkscape, but the package 'color.sty' is not loaded}%
    \renewcommand\color[2][]{}%
  }%
  \providecommand\transparent[1]{%
    \errmessage{(Inkscape) Transparency is used (non-zero) for the text in Inkscape, but the package 'transparent.sty' is not loaded}%
    \renewcommand\transparent[1]{}%
  }%
  \providecommand\rotatebox[2]{#2}%
  \ifx\svgwidth\undefined%
    \setlength{\unitlength}{222.27009277bp}%
    \ifx\svgscale\undefined%
      \relax%
    \else%
      \setlength{\unitlength}{\unitlength * \real{\svgscale}}%
    \fi%
  \else%
    \setlength{\unitlength}{\svgwidth}%
  \fi%
  \global\let\svgwidth\undefined%
  \global\let\svgscale\undefined%
  \makeatother%
  \begin{picture}(1,0.09219034)%
    \put(0,0){\includegraphics[width=\unitlength,page=1]{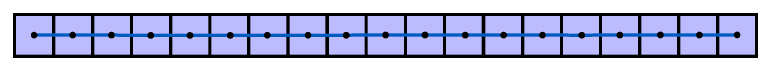}}%
  \end{picture}%
\endgroup%
 \end{minipage}
  \caption[The SFC curve arising from the Morton code for quadrilateral
1:4 refinement]
  {The SFC curve arising from the Morton code for quadrilateral
1:4 refinement. Left: A refinement of the root element
together with the SFC starting in the lower left corner and ending in 
the upper right corner.
Right: The associated refinement tree with the SFC passing the leaves (top).
Via the consecutive index, the SFC order induces a linear order of the elements
in the refinement, such that we can store these elements in an array (bottom).}
\figlabel{fig:refspace-ex2}
\end{figure}

\section{Space-filling curves on forests of trees}
\label{sec:SFConforest}

We do not want to restrict our meshes to single refinement trees, but include
the possibility to patch trees together to form more complex coarse meshes.
Thus, from the SFC point of view, we have a collection of refinements with
individual SFCs on them, which we consider together as a so-called forest.

\begin{definition}
Let $\set{\mathcal K_0,\dots,\mathcal K_{K-1}}$ be refinements
of respective refinement spaces $\set{\mathcal S_0,\dots, \mathcal S_{K-1}}$.
Then the \textbf{forest} $\mathscr F$ with trees $\set{\mathcal K_k}_{k<K}$ is the
set of all leaves of the individual refinements paired with their tree number $k$:
 \begin{equation}
 \mathscr F : = \bigcup_{k=0}^{K-1}\set{k}\times\mathcal K_k.
 \end{equation}
The elements of $\mathscr F$ are the \textbf{leaves} of the forest.
\end{definition}

\begin{remark}
The refinement spaces $\mathcal S_k$ do not have to be different from each
other. In most cases, all $\mathcal S_k$ are the same or there are only
a few different types of refinement spaces.
\end{remark}
\begin{remark}
A forest with a single tree is isomorphic to a refinement.
\end{remark}

\begin{definition}
If for a forest $\mathscr F$ each refinement space $\mathcal S_k$ has  an SFC
index $\mathcal I_k$, then we extend these to an index $\mathcal I$
on the leaves of $\mathscr F$ by
\begin{align}
 \mathcal I\colon \mathscr F & \longrightarrow \set{0,\dots,K-1} \times \IN_0\\
  (k,E) & \longmapsto (k, \mathcal I_k(E))
\end{align}
with the order 
\begin{equation}
\label{eq:forestorder}
 (k, I) < (k',I') :\Leftrightarrow k < k' \textrm{ or } 
 (k=k' \textrm{ and } I < I')
\end{equation}
on $\set{0,\dots,K-1} \times \IN_0$, which extends the individual SFC orders across the
trees.
By extension of notation we call $\mathcal I$ an SFC index of $\mathscr F$.
\end{definition}
Similarly, the levels of the individual refinement spaces extend to a level
map for the forest by $\ell(k,E):=\ell(E)$.

With these definitions we form the analogon to Corollary~\ref{cor:consind}
for forests.
\begin{lemma}
\label{lem:forestconsindex}
 Let $\mathscr F$ be a forest with SFC index $\mathcal I$ and
 finite number $N$ of leaves, then there exists a unique bijective map 
 \begin{equation}
  \mathcal I_{\mathscr F}\colon \mathscr F\longrightarrow \set{0,\dots, N-1}
 \end{equation}
that is monotonous under $\mathcal I$, thus
 \begin{equation}
  (k, \mathcal I_k(E)) < (k',\mathcal I_k(E')) \Leftrightarrow 
  \mathcal I_{\mathscr F} (k,E) < \mathcal I_{\mathscr F} (k',E').
 \end{equation}
We call this map the \textbf{consecutive} index of $\mathscr F$.
\end{lemma}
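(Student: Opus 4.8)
The plan is to follow the proof of Corollary~\ref{cor:consind} essentially verbatim, once we know that the forest index $\mathcal I$ is injective on the leaves of $\mathscr F$. So the first step is to prove this injectivity. Take two distinct leaves $(k,E)\neq (k',E')$ of $\mathscr F$ and split into two cases. If $k\neq k'$, then the images $\mathcal I(k,E)=(k,\mathcal I_k(E))$ and $\mathcal I(k',E')=(k',\mathcal I_{k'}(E'))$ already differ in their first component. If $k=k'$, then $E\neq E'$ are two distinct leaves of the refinement $\mathcal K_k$ inside the refinement space $\mathcal S_k$, which carries the SFC index $\mathcal I_k$; hence Proposition~\ref{prop:sfcinj}, applied to $\mathcal K_k$, gives $\mathcal I_k(E)\neq \mathcal I_k(E')$, so the images differ in their second component. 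In either case $\mathcal I(k,E)\neq\mathcal I(k',E')$, so $\mathcal I$ restricted to the leaves of $\mathscr F$ is injective.

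Next I would use that $(\set{0,\dots,K-1}\times\IN_0,<)$ with the order~\eqref{eq:forestorder} is a total order. Since $\mathscr F$ has $N<\infty$ leaves and $\mathcal I$ is injective on them, the image $\mathcal I(\mathscr F)$ is an $N$-element subset of this totally ordered set, and every finite subset of a totally ordered set admits a unique order isomorphism onto $\set{0,\dots,N-1}$, namely the rank map $\rho$ with $\rho(v)=\lvert\setm{w\in\mathcal I(\mathscr F)}{w<v}\rvert$ (send the $i$-th smallest element to $i-1$). I would then define $\mathcal I_{\mathscr F}:=\rho\circ\mathcal I$; as the composition of an injection with a bijection onto $\set{0,\dots,N-1}$ it is a bijection $\mathscr F\to\set{0,\dots,N-1}$.

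For the monotonicity statement, $\rho$ is strictly order preserving by construction, and for distinct leaves the two values of $\mathcal I$ are distinct and comparable under~\eqref{eq:forestorder}; therefore $(k,\mathcal I_k(E))<(k',\mathcal I_{k'}(E'))$ holds if and only if $\rho(\mathcal I(k,E))<\rho(\mathcal I(k',E'))$, which is exactly $\mathcal I_{\mathscr F}(k,E)<\mathcal I_{\mathscr F}(k',E')$. Uniqueness follows by the standard argument: if $\mathcal I'$ and $\mathcal I''$ are two monotonous bijections $\mathscr F\to\set{0,\dots,N-1}$, then $\mathcal I'\circ(\mathcal I'')^{-1}$ is an order-preserving self-bijection of $\set{0,\dots,N-1}$, hence the identity, so $\mathcal I'=\mathcal I''$.

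I do not expect a real obstacle here; the proof is a routine lift of Corollary~\ref{cor:consind} to the product order on tree indices. The only point that needs a moment's care is the case split in the injectivity argument, specifically that the same-tree case is genuinely an instance of Proposition~\ref{prop:sfcinj} — which it is, since by definition $\mathcal K_k$ is a refinement of $\mathcal S_k$ equipped with the SFC index $\mathcal I_k$. Note also that, unlike Corollary~\ref{cor:consind}, the lemma only asserts the finite case, so no separate treatment of $N=\infty$ is needed.
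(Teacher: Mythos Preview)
Your proof is correct. It differs from the paper's in presentation: the paper gives an explicit formula,
\[
\mathcal I_{\mathscr F}(k,E) := O_k + \mathcal I_{\mathcal K_k}(E),\qquad O_k := \sum_{i=0}^{k-1}\lvert\mathcal K_i\rvert,
\]
invoking the per-tree consecutive index $\mathcal I_{\mathcal K_k}$ from Corollary~\ref{cor:consind}, and simply asserts that it has the required property. You instead argue abstractly via the rank map on the image of $\mathcal I$ in the totally ordered set $\set{0,\dots,K-1}\times\IN_0$. The two constructions of course yield the same map (as your uniqueness argument confirms). The paper's version has the advantage of an explicit, directly computable formula---which is what one actually implements---while your version is more self-contained: you spell out injectivity and uniqueness, both of which the paper leaves implicit.
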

\begin{proof}
 Let $O_k$ be the number of leaves in all trees of indices smaller
 than $k$:
 \begin{equation}
  O_k := \sum_{i=0}^{k-1}|\mathcal K_k|,
 \end{equation}
 and let $\mathcal I_{\mathcal K_k}$ be the consecutive index of the $k$-th tree.
 Then 
 \begin{equation}
 \mathcal I_{\mathscr F} (k,E) := O_k + \mathcal I_{\mathcal K_k} (E)
 \end{equation}
 satisfies the desired property.
\end{proof}
We illustrate this concept in Figure~\ref{fig:sota-twotreeforest}.

\begin{figure}
\center
\includegraphics[width=\textwidth]{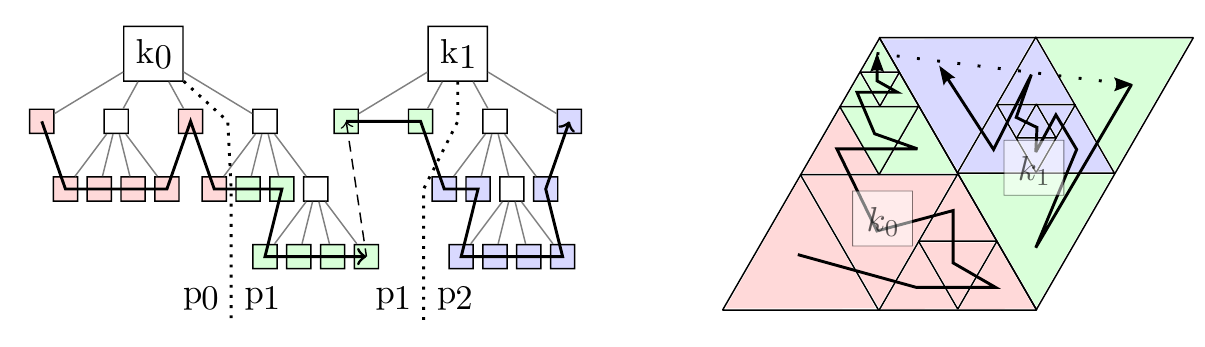}
\caption[Complex geometries with multiple trees]
{SFCs naturally extend to multiple trees to model complex geometries.
Here, we show two trees $k_0$ and $k_1$ with an adaptive refinement.
To enumerate the forest mesh, we establish an a priori order between the two
trees and use an SFC within each tree.
On the left-hand side of the figure the refinement trees and their linear storage
are shown. When we partition the forest mesh to $P$ processes (here, $P=3$),
we cut the SFC in $P$ equally sized parts and assign part $i$ to process $i$.
}
\figlabel{fig:sota-twotreeforest}
\end{figure}

\section{Partitioning with space-filling curves} 
\label{sec:partitionsfc}
In order to ensure the scalability of an application, it is necessary that 
each parallel process is assigned (approximately) the same amount of work.
This operation is called load-balancing.

While SFCs offer us a way to effectively store and access the leaves of a
forest, they also suggest a straightforward method to load-balance these
leaves across multiple processes.

Supposing a load-balanced mesh changes due to adaptation, then the workload
may not be balanced anymore. It is thus necessary to migrate elements from 
those processes with more elements to those with less. This procedure is 
called repartitioning. In order to fairly repartition the mesh, it is necessary
to know which elements have to be mapped to which process.

Let $\mathcal I_{\mathscr F}$ be the consecutive index of a forest $\mathscr F$
with $N$ leaves that are to be distributed among $P$ processes with ranks $0,
1,\dots, P-1$.
The partition should respect the SFC order, thus if $(k,E)$ is assigned to a
process $p$ and $\mathcal I_{\mathscr F}(k,E')
> \mathcal I_{\mathscr F}(k',E)$ then we demand that $(k',E')$ is assigned to a
 process $q\geq p$.

We use the scheme presented in e.g.\ \cite{BursteddeWilcoxGhattas11},
 assigning to process $p$ the set of leaves
\begin{equation}
 \mathscr F (p):=\setm{(k,E) \in\mathscr F}{\left\lfloor\frac{pN}{P}\right\rfloor\leq
    \mathcal I_{\mathscr F}(k,E) < \left\lfloor\frac{(p+1)N}{P}\right\rfloor}.
\end{equation}
This ensures that the count of elements on different processes differs by at most
one.
See also Figure~\ref{fig:sota-twotreeforest} for an illustration of the 
partitioning.

In some cases the computational load differs between elements.  We then may
assign a non-negative weight $w(k,E)$ to each element that is proportional to
the computational load and demand from the assignment of elements to processes
that the sums of all weights of the processes' elements are approximately equal,
rather than the count of elements. This is known as the weighted partitioning
problem to which several approximative solutions using SFCs exist 
\cite{BursteddeWilcoxGhattas11,DevineBomanHeaphyEtAl05}.

An example where weighted partitioning occurs is the $hp$-adaptive finite
element method. Here, different elements have different numbers of degrees of
freedom such that elements with smaller polynomial degree $p$ cause less
computational load than those with a larger degree \cite{LaszloffyLongPatra00}.

In contrast to load-balancing using graph-based
methods~\cite{KarypisKumar95,DevineBomanHeaphyEtAl02}, the partitions resulting
from SFCs can have more surface area and may be disconnected resulting in an
increase in parallel communication.
However, the partition quality was shown to be acceptable~\cite{BursteddeGhattasGurnisEtAl10}
and partitioning with SFC can be orders of magnitude faster, since it reduces
the NP-hard problem to an approximation using linear runtime. For this reason
SFC partitioning is a common choice when the mesh is repartitioned frequently,
e.g.\ in adaptive solvers for time-dependent PDEs; see for
example~\cite{BungartzMehlWeinzierl06,
GriebelZumbusch99,GuntherMehlPoglEtAl06,Bader12,Sagan94,Zumbusch99}.

 \chapter{The Tetrahedral Morton Index}
\label{ch:tetSFC}
This chapter is based on the paper~\cite{BursteddeHolke16}.
We edited it slightly in order to fit into the general notations of this
thesis, without changing its mathematical content.
Copyright \copyright\xspace by SIAM. Unauthorized reproduction of this chapter
is prohibited.
\vspace{2ex}

In this chapter, we develop a new SFC index for triangular and
tetrahedral mesh refinement that can be computed using bitwise interleaving
operations similar to the Morton index for cubical meshes; see
Section~\ref{sec:mortonsfc}. We demonstrate that this index has many of the
favorable properties known for hexahedra.
To store sufficient information for random access, we define a low-memory
encoding using 10 bytes per triangle and 14 bytes per tetrahedron.

Our starting point is to divide the simplices in a refined mesh into two (two
dimensions, 2D), respectively six (three dimensions, 3D), different types and
selecting for each type a specific reordering for Bey's red-refinement
\cite{Bey92}.
This type and the coordinates of one vertex serve as a unique identifier, the
Tet-id, of the simplex in question.
In particular, we do not require storing the type of all parent simplices to
the root, as one might naively imagine.
We then propose a Morton-like coordinate mapping that can be computed from the
Tet-id and gives rise to an SFC.
Based on this logic, we develop constant-time algorithms (independent of an
element's refinement level) to compute the Tet-id of any parent, child,
face-neighbor, and SFC-successor/predecessor and to decide whether for two
given elements one is an ancestor of the other. We conclude with scalability
tests of mesh creation and adaptation with over $850 \e9$
tetrahedral mesh elements on up to 131,072 cores of the JUQUEEN supercomputer
and 786,432 cores of MIRA.

\section{Mesh refinement on simplices}
\label{sec:BeyRef}

Our aim is to define and examine a new SFC for triangles and
tetrahedra by adding ordering prescriptions to the nonconforming
Bey-refinement (also called red-refinement) \cite{Bey92,Bey00,Zhang95}.
We briefly restate the red-refinement in this section and contrast it with the
well-known conforming (or red/green) refinement.

We refer to triangles and tetrahedra as $d$-simplices, where $d\in\set{2,3}$
specifies the dimension.
It is sometimes convenient to drop $d$ from this notation.
A $d$-simplex $T\subseteq \IR^d$ is uniquely determined by its $d+1$
affine-independent corner nodes $\vec x_0,\dots,\vec x_{d} \in \IR^d$.
Their order is significant, and therefore we write
\begin{subequations}
\begin{align}%
  T& = [\vec x_0,\vec x_1,\vec x_2] \hphantom{,x_3} \quad\textrm{in 2D},\\
  T& = [\vec x_0,\vec x_1,\vec x_2,\vec x_3] \quad\textrm{in 3D}.
\end{align}
\end{subequations}
\begin{definition}
 We define $\vec x_0$ as the \textbf{anchor node} of $T$.
 By $\vec x_{ij}$ we denote the midpoint between $\vec x_i$ and $\vec x_j$,
 thus $\vec x_{ij}=\frac{1}{2}(\vec x_i+\vec x_j)$.
\end{definition}

\begin{figure}
\center
\begin{minipage}{0.48\textwidth}
   \def\svgwidth{40ex}
   \includegraphics{./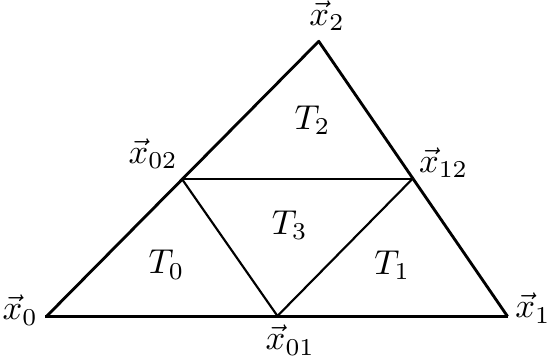}
\end{minipage}
\begin{minipage}{0.48\textwidth}
   \def\svgwidth{40ex}
   \includegraphics{./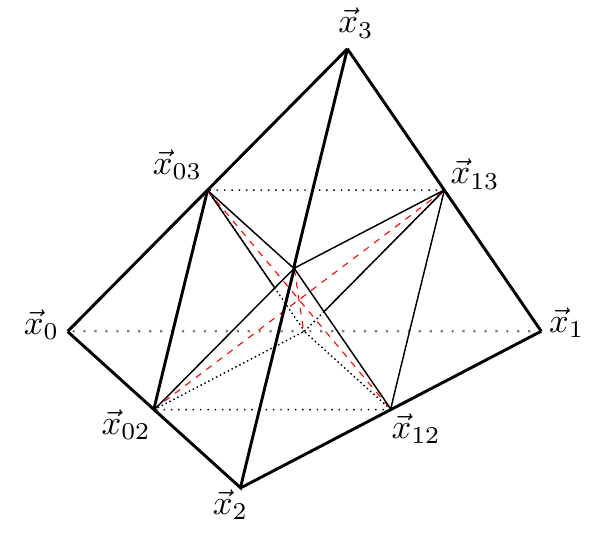}
\end{minipage}
   \caption[The Bey refinement schemes for triangles and tetrahedra]{%
   Left: The refinement scheme for triangles in two dimensions. A triangle
   $T=[\vec x_0,\vec x_1,\vec x_2]\subset\IR^2$ is refined by dividing each face
   at the midpoints $\vec x_{ij}$. We obtain four smaller triangles, all similar
   to $T$. Right: The situation in three dimensions. If we divide the edges of
   the tetrahedron $T=[\vec x_0,\vec x_1,\vec x_2,\vec x_3]\subset\IR^3$ in half,
   we get four smaller tetrahedra (similar to $T$) and an inner octahedron. By
   dividing the octahedron along any of its three diagonals (shown dashed) we
   finally end up with partitioning $T$ into eight smaller tetrahedra, all having
   the same volume. The refinement rule of Bey is obtained by always choosing the
   diagonal from $\vec x_{02}$ to $\vec x_{13}$ and numbering the corners of the
   children according to \eqref{eq:childnumbers}.}
   \figlabel{fig:cutoff}
\end{figure}%

\subsection{Bey's refinement rule}
\label{sec:tetsfc_beyrule}

Bey's rule is a prescription for subdividing a simplex.
It is one instance of the so-called red-refinement, where all faces of a
simplex are subdivided simultaneously.
\begin{definition}
  \label{def:SFCrefrule}
  Given a $d$-simplex $T=[\vec x_0,\dots,\vec x_d]\subset\IR^d$, the \textbf{refinement
  rule of Bey} consists of cutting off four subsimplices at the corners
 (as in Figure \ref{fig:cutoff}).
 In 3D the remaining octahedron is then divided along the diagonal from $\vec
 x_{02}$ to $\vec x_{13}$.
 Bey numbers the $2^d$ resulting subsimplices as follows.
 \begin{subequations}
\label{eq:childnumbers}
 \begin{equation}
 \label{eq:childnumbers_2d}
 2D:\quad
 \begin{array}{cccccl}
  T_0&:=&[\vec x_0,\vec x_{01},\vec x_{02}], &  T_1&:=&[\vec x_{01},\vec x_{1},\vec x_{12}],\\
  T_2&:=&[\vec x_{02},\vec x_{12},\vec x_{2}], & T_3&:=&[\vec x_{01},\vec x_{02},\vec x_{12}],
  \end{array}
 \end{equation}
 \begin{equation}
\label{eq:childnumbers3d}
 3D:\quad
 \begin{array}{cccccc}
 T_0 &:=& [\vec x_0,\vec x_{01},\vec x_{02},\vec x_{03}],   & T_4 &:=& [\vec x_{01},\vec x_{02},\vec x_{03},\vec x_{13}],\\
 T_1 &:=& [\vec x_{01},\vec x_{1},\vec x_{12},\vec x_{13}], & T_5 &:=& [\vec x_{01},\vec x_{02},\vec x_{12},\vec x_{13}],\\
 T_2 &:=& [\vec x_{02},\vec x_{12},\vec x_{2},\vec x_{23}], & T_6 &:=& [\vec x_{02},\vec x_{03},\vec x_{13},\vec x_{23}],\\
 T_3 &:=& [\vec x_{03},\vec x_{13},\vec x_{23},\vec x_{3}], & T_7 &:=& [\vec x_{02},\vec x_{12},\vec x_{13},\vec x_{23}].
\end{array}
 \end{equation}
 \end{subequations}
\end{definition}
\begin{remark}
  If we apply the refinement rule from Definition~\ref{def:SFCrefrule} 
  recursively to the descendants of a $d$-simplex $S$, we obtain a refinement space
  in the sense of Definition~\ref{def:refspace} with root element $S$.
  By a refinement of $S$ we mean a refinement of this refinement space.

  Note, that the refinement rule explicitly allows nonuniform meshes and
  thus hanging faces and edges.
\end{remark}

We recall some basic definitions of relations among mesh elements in the simplicial case.
\begin{definition}
\label{def:childrenancestor}
The $T_i$ from \eqref{eq:childnumbers} are called the \textbf{children} of $T$,
and $T$ is called the \textbf{parent} of the $T_i$, written $T=P(T_i)$.
Therefore, we also call the $T_i$ \textbf{siblings} of each other.
If a $d$-simplex $T$ belongs to a refinement of
another $d$-simplex $S$, then $T$ is a \textbf{descendant} of $S$, and $S$ is an \textbf{ancestor} of $T$.
The number $\ell$ of refining steps needed to obtain $T$ from $S$ is unique
and called the \textbf{level} of $T$ (with respect to $S$); we write $\ell=\ell(T)$.
Usually $S$ is clear from the context, and therefore we %
omit it in the notation.
By definition, $T$ is an ancestor and descendant of itself.
\end{definition}

Consider the six tetrahedra $S_0,\dots,S_5\abst{\subset} \IR^3$
displayed in Figure \ref{fig:sechstetra}.
\begin{figure}
  \begin{minipage}{0.60\textwidth}
   \def\svgwidth{0.95\textwidth}%
\begingroup%
  \makeatletter%
  \providecommand\color[2][]{%
    \errmessage{(Inkscape) Color is used for the text in Inkscape, but the package 'color.sty' is not loaded}%
    \renewcommand\color[2][]{}%
  }%
  \providecommand\transparent[1]{%
    \errmessage{(Inkscape) Transparency is used (non-zero) for the text in Inkscape, but the package 'transparent.sty' is not loaded}%
    \renewcommand\transparent[1]{}%
  }%
  \providecommand\rotatebox[2]{#2}%
  \ifx\svgwidth\undefined%
    \setlength{\unitlength}{380.95061035bp}%
    \ifx\svgscale\undefined%
      \relax%
    \else%
      \setlength{\unitlength}{\unitlength * \real{\svgscale}}%
    \fi%
  \else%
    \setlength{\unitlength}{\svgwidth}%
  \fi%
  \global\let\svgwidth\undefined%
  \global\let\svgscale\undefined%
  \makeatother%
  \begin{picture}(1,0.63546393)%
    \put(0,0){\includegraphics[width=\unitlength,page=1]{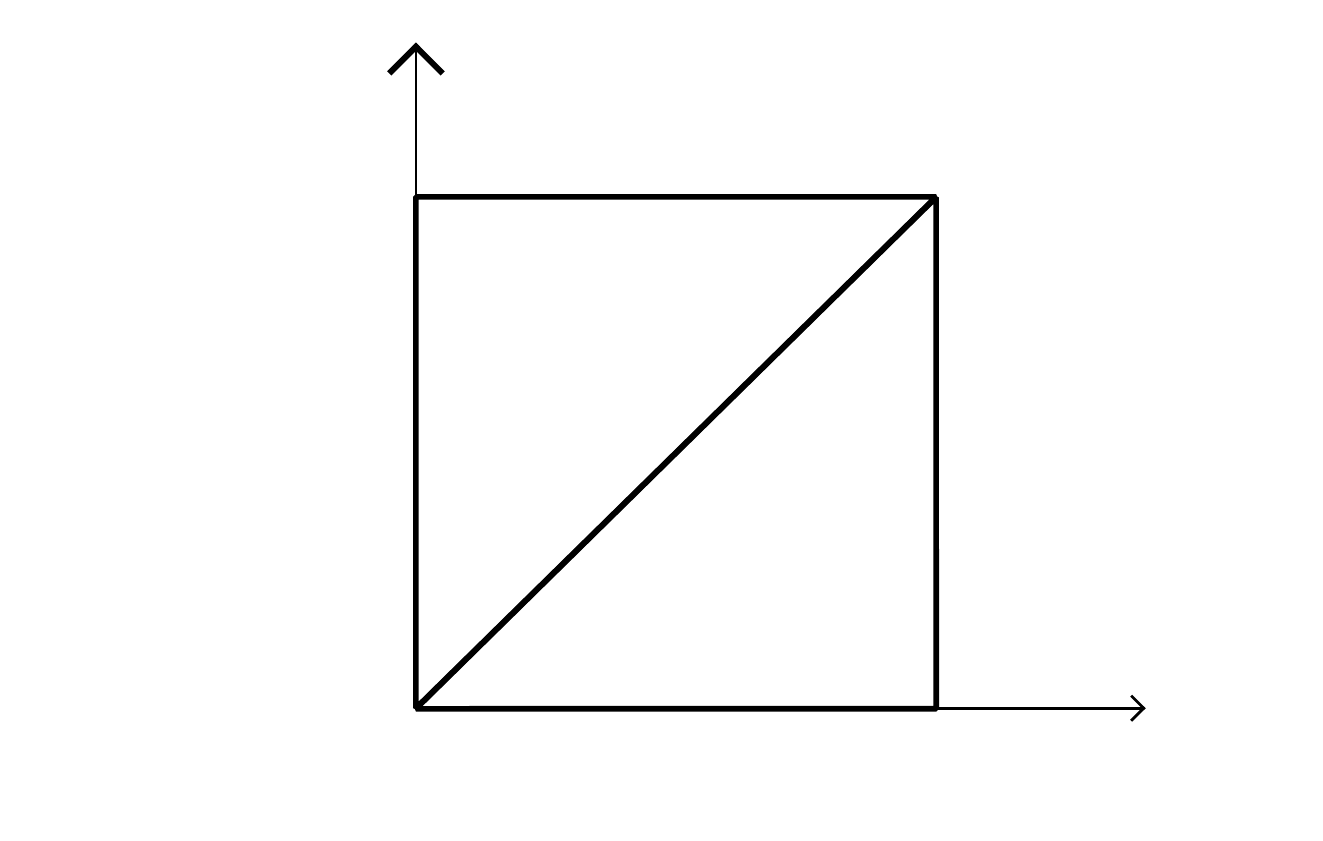}}%
    \put(0.83323174,0.13373889){\color[rgb]{0,0,0}\makebox(0,0)[lb]{\smash{$X$}}}%
    \put(0.33187745,0.57393166){\color[rgb]{0,0,0}\makebox(0,0)[lb]{\smash{$Y$}}}%
    \put(0.42372984,0.36473997){\color[rgb]{0,0,0}\makebox(0,0)[lb]{\smash{$S_1$}}}%
    \put(0.57283053,0.25028943){\color[rgb]{0,0,0}\makebox(0,0)[lb]{\smash{$S_0$}}}%
    \put(0.73033127,0.48024049){\color[rgb]{0,0,0}\makebox(0,0)[lb]{\smash{$c_3=\begin{pmatrix}\\ 1\\1\\ \end{pmatrix}$}}}%
    \put(0,0){\includegraphics[width=\unitlength,page=2]{zweidreiecks_tex.pdf}}%
    \put(0.22886495,0.12591086){\color[rgb]{0,0,0}\makebox(0,0)[lb]{\smash{}}}%
    \put(0.22886495,0.43750411){\color[rgb]{0,0,0}\makebox(0,0)[lb]{\smash{$c_2$}}}%
    \put(0.72709512,0.12591086){\color[rgb]{0,0,0}\makebox(0,0)[lb]{\smash{$c_1$}}}%
    \put(0.72709512,0.43750411){\color[rgb]{0,0,0}\makebox(0,0)[lb]{\smash{}}}%
    \put(0.06300029,0.08895305){\color[rgb]{0,0,0}\makebox(0,0)[lb]{\smash{$\begin{pmatrix}\\ 0\\0\\ \end{pmatrix}=c_0$}}}%
  \end{picture}%
\endgroup%

   \def\svgwidth{0.95\textwidth}%
\begingroup%
  \makeatletter%
  \providecommand\color[2][]{%
    \errmessage{(Inkscape) Color is used for the text in Inkscape, but the package 'color.sty' is not loaded}%
    \renewcommand\color[2][]{}%
  }%
  \providecommand\transparent[1]{%
    \errmessage{(Inkscape) Transparency is used (non-zero) for the text in Inkscape, but the package 'transparent.sty' is not loaded}%
    \renewcommand\transparent[1]{}%
  }%
  \providecommand\rotatebox[2]{#2}%
  \ifx\svgwidth\undefined%
    \setlength{\unitlength}{550.32397461bp}%
    \ifx\svgscale\undefined%
      \relax%
    \else%
      \setlength{\unitlength}{\unitlength * \real{\svgscale}}%
    \fi%
  \else%
    \setlength{\unitlength}{\svgwidth}%
  \fi%
  \global\let\svgwidth\undefined%
  \global\let\svgscale\undefined%
  \makeatother%
  \begin{picture}(1,0.71689905)%
    \put(0,0){\includegraphics[width=\unitlength,page=1]{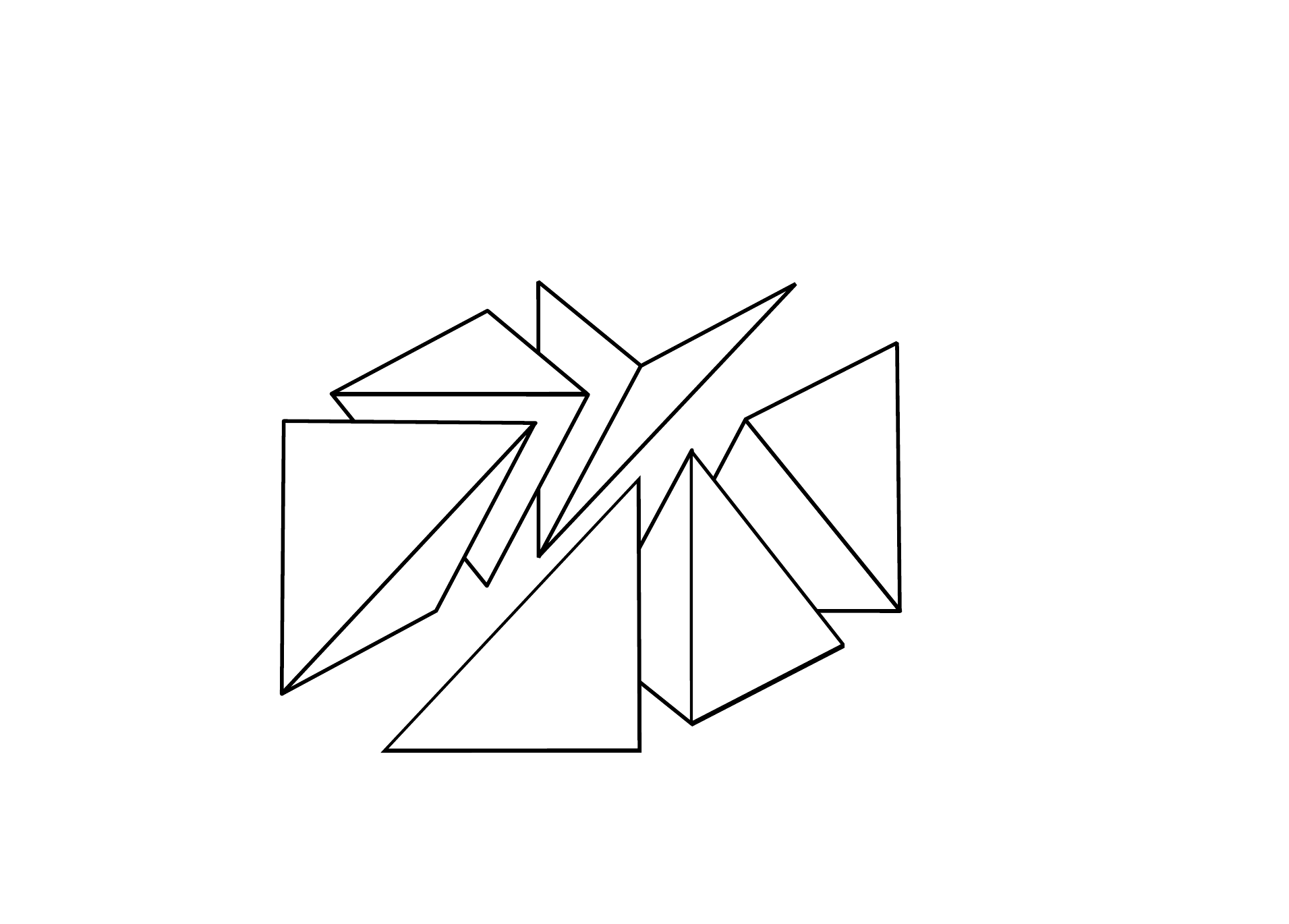}}%
    \put(0.25338018,0.30385227){\color[rgb]{0,0,0}\makebox(0,0)[lb]{\smash{$S_0$}}}%
    \put(0.41026491,0.18572463){\color[rgb]{0,0,0}\makebox(0,0)[lb]{\smash{$S_1$}}}%
    \put(0.56197011,0.23375293){\color[rgb]{0,0,0}\makebox(0,0)[lb]{\smash{$S_2$}}}%
    \put(0.63363954,0.35339003){\color[rgb]{0,0,0}\makebox(0,0)[lb]{\smash{$S_3$}}}%
    \put(0.352629,0.42566703){\color[rgb]{0,0,0}\makebox(0,0)[lb]{\smash{$S_5$}}}%
    \put(0,0){\includegraphics[width=\unitlength,page=2]{sechstetras_tex.pdf}}%
    \put(0.03878397,0.03761989){\color[rgb]{0,0,0}\makebox(0,0)[lb]{\smash{$X$}}}%
    \put(0.85164988,0.21446232){\color[rgb]{0,0,0}\makebox(0,0)[lb]{\smash{$Y$}}}%
    \put(0.37631668,0.64364674){\color[rgb]{0,0,0}\makebox(0,0)[lb]{\smash{$Z$}}}%
    \put(0,0){\includegraphics[width=\unitlength,page=3]{sechstetras_tex.pdf}}%
    \put(0.48367022,0.45951426){\color[rgb]{0,0,0}\makebox(0,0)[lb]{\smash{$S_4$}}}%
    \put(0,0){\includegraphics[width=\unitlength,page=4]{sechstetras_tex.pdf}}%
    \put(0.14314884,0.19751762){\color[rgb]{0,0,0}\makebox(0,0)[lb]{\smash{$c_1$}}}%
    \put(0.71592217,0.26052046){\color[rgb]{0,0,0}\makebox(0,0)[lb]{\smash{$c_2$}}}%
    \put(0.533815,0.10767503){\color[rgb]{0,0,0}\makebox(0,0)[lb]{\smash{$c_3$}}}%
    \put(0.42022717,0.52879485){\color[rgb]{0,0,0}\makebox(0,0)[lb]{\smash{$c_4$}}}%
    \put(0.16628655,0.42044046){\color[rgb]{0,0,0}\makebox(0,0)[lb]{\smash{$c_5$}}}%
    \put(0.6616011,0.50544892){\color[rgb]{0,0,0}\makebox(0,0)[lb]{\smash{$c_6$}}}%
    \put(0.82949204,0.44073145){\color[rgb]{0,0,0}\makebox(0,0)[lb]{\smash{$c_7$}}}%
    \put(0.2478235,0.03904336){\color[rgb]{0,0,0}\makebox(0,0)[lb]{\smash{$c_0$}}}%
    \put(0,0){\includegraphics[width=\unitlength,page=5]{sechstetras_tex.pdf}}%
  \end{picture}%
\endgroup%
  \end{minipage}
  \raisebox{7ex}{%
  \begin{minipage}{0.39\textwidth}
   \flushright
  \begin{subequations}
  \label{eq:coordsofSb}
    \begin{equation}
    \label{eq:coordsofSb_2d}
\begin{array}{r|cccc}
2\textrm D& \vec x_0 & \vec x_1 & \vec x_2 \\[0.5ex]\hline
 \vphantom{{X^X}^X}S_0& c_0 & c_1 & c_3 \\[0.2ex]
 S_1& c_0 & c_2 & c_3
\end{array}
\end{equation}\\[14ex]
  \begin{equation}
  \label{eq:coordsofSb_3d}
\begin{array}{r|cccc}
 3\textrm D& \vec x_0 & \vec x_1 & \vec x_2 & \vec x_3 \\[0.5ex]\hline
 \vphantom{{X^X}^X}S_0& c_0 & c_1 & c_5 & c_7 \\[0.2ex]
 S_1& c_0 & c_1 & c_3 & c_7 \\[0.2ex]
 S_2& c_0 & c_2 & c_3 & c_7 \\[0.2ex]
 S_3& c_0 & c_2 & c_6 & c_7 \\[0.2ex]
 S_4& c_0 & c_4 & c_6 & c_7 \\[0.2ex]
 S_5& c_0 & c_4 & c_5 & c_7
\end{array}
\end{equation}
\end{subequations}
 \end{minipage}}
 \caption[The triangle and tetrahedra types]
{The basic triangle (2D) and tetrahedra types (3D) obtained by dividing
$[0,1]^d$ into simplices of varying types, denoted by a subscript. Top left:
The unit square can be divided into two triangles sharing the
edge from $(0,0)^T$ to $(1,1)^T$. We denote these triangles by $S_0$ and $S_1$.
The four corners of the square are numbered $c_0,\ldots,c_3$ in $yx$-order.
Top right: The corner nodes of $S_0$ and $S_1$ in terms of the square corners.
Bottom left (exploded view): In three dimensions the unit cube can be divided
into six tetrahedra, all sharing the edge from the origin to $(1,1,1)^T$.  We
denote these tetrahedra by $S_0,\dots,S_5$.  The eight corners of the cube are
numbered $c_0,\ldots,c_7$ in $zyx$-order (redrawn and modified with permission
\cite{Bey92}).
Bottom right: The corner nodes of the six tetrahedra $S_0,\dots,S_5$ in
  terms of the cube corners.
} 
\figlabel{fig:sechstetra}
\end{figure}
These tetrahedra form a triangulation of the unit cube.
The results and algorithms in this chapter rely on the following property
\cite{Bey92}.
\begin{property}
\label{property:commref}
Refining the six tetrahedra from the triangulation of the unit cube simultaneously to level $\ell$ results in the same
mesh as first refining the unit cube to level $\ell$ and then
triangulating each smaller cube with the six tetrahedra $S_0,\dots,S_5$, scaled
by a factor of $2^{-\ell}$ (see Figure \ref{fig:refineddiagram}).
The same behavior can be observed in 2D when the unit square is
divided into two triangles.
\end{property}

\begin{figure}
\xymatrix@C=16ex@R=15ex{
*+<4pt,4pt>{
   \includegraphics[width=15ex]{./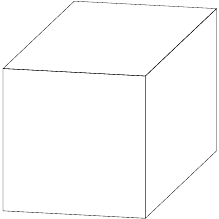}
   }
   \ar[r]^-{\textrm{refining the cube}}\ar[d]|{\textrm{triangulating the cube}}
   &
   *+<4pt,4pt>{
   \includegraphics[width=15ex]{./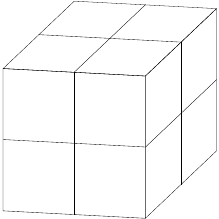}
   }
   \ar[d]|{\textrm{triangulating each cube}}
   \ar[r]^-{\textrm{refining each cube}}
   &
   *+<4pt,4pt>{
   \includegraphics[width=15ex]{./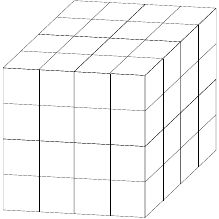}
   }
   \ar[d]|{\textrm{triangulating each cube}}
   \\
  *+<4pt,4pt>{
   \includegraphics[width=15ex]{./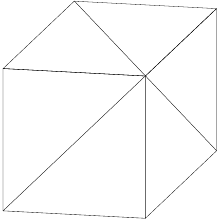}
   }
   \ar[r]^-{\textrm{refining}}_-{\textrm{each tetrahedron}} &
  *+<4pt,4pt>{
  \includegraphics[width=15ex]{./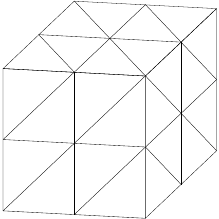}
   }
   \ar[r]^-{\textrm{refining}}_-{\textrm{each tetrahedron}} &
    *+<4pt,4pt>{
    \includegraphics[width=15ex]{./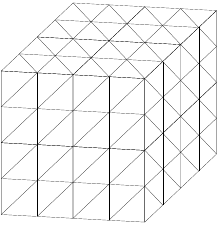}
    }
}
\caption[First refining and then triangulating a cube is the same as first
triangulating it and then refining the tetrahedra]
{Triangulating a cube according to Figure \ref{fig:sechstetra} and then
refining the tetrahedra via Bey's refinement rule results in the same mesh as
first refining the cube into eight subcubes and afterward triangulating each of
these cubes.  Each occurring tetrahedron is uniquely determined by the subcube
it lies in plus its type. The same situation can be observed in 2D if we
restrict our view to one side of the cube.}
\figlabel{fig:refineddiagram}
\end{figure}

\begin{remark}
  A key motivation to use the refinement scheme of Bey is that it produces
  numerically stable meshes. Thus, no matter how much we refine,
  the mesh elements do not degenerate.
  The degeneracy of a mesh element $T$ with volume $v$ and side lengths $\set{l_i}$
  may be measured by
  \begin{equation}
    \eta(T) = \frac{12(3v)^{\frac{2}{3}}}{\sum l_i^2} > 0.
  \end{equation}
  If $\eta(T) \ll 1$ the element's volume is small relative to the sum of its side lengths.
  Imagine a very flat tetrahedron, or a triangle with one angle being close to $\pi$.

  From Property~\ref{property:commref} we conclude that for each tetrahedron $T$ that is an
  ancestor of one of the $S_i$ we have
  \begin{equation}
    \eta(T) = \eta(S_j)
  \end{equation}
  for some $j\in\set{0,\dots,5}$.
  Hence, $\eta(T)$ is bounded and any mesh resulting from Bey's refinement rule is
  numerically stable.
  See~\cite{LiuJoe96} and the references therein for a more thorough
  discussion of this subject.
\end{remark}

\subsection{Removal of hanging nodes using red/green refinement}

It is worth noting that, although the methods and algorithms presented in this
chapter apply to red-refined meshes with hanging nodes, it is possible to augment
them to create meshes without hanging nodes.
For this we may use red/green or red/green/blue refinement methods
\cite{AndrewShermanWeiser83,Carstensen04a}.

After the red-refinement step we may add an additional and possibly nonlocal
refinement operation that ensures a maximum level difference of 1 between
neighboring simplices.
Such an operation is also called 2:1 balance
\cite{IsaacBursteddeGhattas12,SundarSampathBiros08,TuOHallaronGhattas05};
we describe it in detail in Chapter~\ref{ch:balance}.
Hanging nodes are then resolved by bisecting
those simplices with hanging nodes (green/blue refinement)
\cite[section 12.1.3]{Bader12}.
The 2D case is shown in Figure \ref{fig:greenblueref}. 
\begin{figure}
\centering
  \begin{minipage}{0.40\textwidth}
\centering
  \includegraphics[width=0.8\textwidth]{./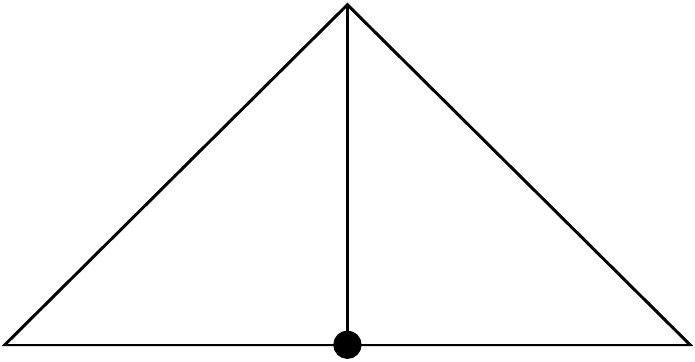}
\end{minipage}
 \begin{minipage}{0.40\textwidth}
  \centering
  \includegraphics[width=0.8\textwidth]{./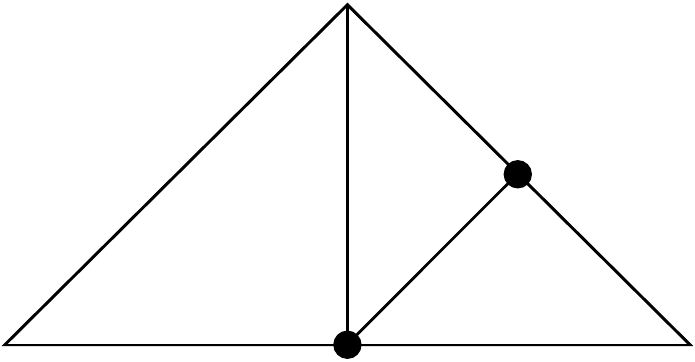}
\end{minipage}
\caption[Resolving hanging nodes]
   {To resolve hanging nodes we can execute an additional step of green- or
    blue-refinement after the last refinement step \cite{AndrewShermanWeiser83,
    Carstensen04a}.
    Here we show the 2D refinement rules.
    Left: green (1 hanging node).
    Right: blue (2 hanging nodes).
    If a triangle has 3 hanging nodes it is red-refined.}%
    \figlabel{fig:greenblueref}
\end{figure}

If one of the newly created simplices shall be further refined, the bisection
is reversed, the original simplex is red-refined, and the balancing and
green-refinement is repeated.
This may void the nesting property of certain discrete function spaces, yet
applications may still prefer this approach over the manual implementation of
hanging node constraints.

\section{The tetrahedral Morton index}
\label{sec:tetsfcdef}
As we repeat in Section~\ref{sec:mortonsfc},
the Morton index or Z-order for a cube in a hexahedral mesh is computed by bitwise interleaving the coordinates of the anchor node of the cube~\cite{Morton66}.
In this section we present an index for $d$-simplices that also uses the
bitwise interleaving approach, the tetrahedral Morton index (TM-index).
To define the TM-index we look at refinements of a reference simplex, which we
discuss in Section~\ref{sec:reference} below.
For each $d$-simplex in a refinement of the reference simplex we define a
unique identifier, the so-called Tet-id, which serves as the input to compute
the TM-index and for all algorithms related to it.
This Tet-id consists of the coordinates of the anchor node of the considered simplex
plus one additional number, the type of the simplex.
We define the Tet-id and type in Section \ref{sec:type_id}.
We then define the TM-index in Section \ref{sec:TMindex} and in the following
subsections.
We show that the TM-index defines a SFC index in the sense of
Definition~\ref{def:sfcindex} and discuss properties of the resulting SFC.

One novel aspect of this construction lies in logically including the types of
the simplex and all its parents in the interleaving, while only using the type
of the simplex itself in the algorithms.

\subsection{The reference simplex}
\label{sec:reference}

Throughout the rest of this chapter, let $\mathcal{L}$ be a given maximal refinement level.
Instead of the unit cube $[0,1]^d$, we consider the scaled cube $[0,2^\mathcal{L}]^d$, ensuring that all node coordinates
in a refinement up to level $\mathcal L$ are integers.
Suppose we are given some $d$-simplex $T\subset\IR^d$ together with a refinement $\mathscr S$ of $T$.
By mapping $T$ affine-linearly to $2^\mathcal{L} S_0$ the refinement $\mathscr S$ is mapped
to a refinement $\mathscr S'$ of $2^\mathcal{L} S_0$.
Therefore, to examine SFCs on refinements of $T$, it suffices to examine SFCs
on $2^\mathcal{L} S_0$.
Thus, we only consider refinements of the $d$-simplex $T^0_d:=2^\mathcal{L} S_0$.
Let $\mathcal{T}_d$  be the set of all possible descendants of this $d$-simplex with level smaller than or
equal to $\mathcal{L}$; thus
\begin{equation}
 \mathcal{T}_d\abst{=} \set{T \abst{|} T \textnormal{ is a descendant of } T^0_d
 \textnormal{ with } 0\leq \ell(T)\leq\mathcal{L}}.
\end{equation}
$\mathcal T_d$ together with the refinement from
Definition~\ref{def:SFCrefrule} and $T_d^0$ as root element is a refinement
space in the sense of Definition~\ref{def:refspace}.

Any refinement (up to level $\mathcal L$) of $T^0_d$ is a subset of $\mathcal T_d$, and for each $T\in\mathcal T_d$ there exists at least one
refinement $\mathscr T$ of $T^0_d$ with $T\in\mathscr T$.
In this context, we refer to $T^0_d$ as the \textbf{root simplex}.
Furthermore, let $\IL^d$ denote the set of all possible anchor node coordinates of $d$-simplices in $\mathcal{T}_d$, thus
\begin{equation}
 \begin{array}{ccl}
  \IL^2&=&\set{[0,2^\mathcal{L})^2\cap \IZ^2 \abst | y\leq x},\\
  \IL^3&=&\set{[0,2^\mathcal{L})^3\cap \IZ^3 \abst | y\leq z \leq x}.
 \end{array}
\end{equation}
Note that we could have chosen any other of the $S_i$ (scaled by $2^\mathcal{L}$) as the root
simplex and we do not see any advantage or disadvantage in doing so.
\subsection{The type and Tet-id of a $d$-simplex}
\label{sec:type_id}
Making use of Property \ref{property:commref}, we define the following.
\begin{definition}
\label{def:type}
Each $d$-simplex $T\in\mathcal{T}_d$ of level $\ell$ lies in a $d$-cube of the hexahedral mesh
that is part of a uniform level $\ell$ refinement of $[0,2^{\mathcal{L}}]^d$.
This specific cube is the \textbf{associated cube} of $T$ and denoted by $Q_T$.
The $d$-simplex $T$ is a scaled and shifted version of exactly one of the
six tetrahedra $S_i$ that constitute the unit cube, and
we define the \textbf{type} of $T$ as this number, $\type(T):=i$.
\end{definition}
The anchor node of a subcube of level $\ell$ is the particular corner of
that cube with the smallest $x$-, $y$- (and $z$-) coordinates.
This means that for each simplex $T$ in the refinement from Figure \ref{fig:refineddiagram}
the anchor node of $T$ and the anchor node of its associated cube coincide.
Any two $d$-simplices in $\mathcal{T}_d$ with the same associated cube are distinguishable by their type.

From Bey's observation from Figure \ref{fig:refineddiagram} it follows that any simplex in $\mathcal T_d$
can be obtained by specifying a level $\ell$, then choosing one level $\ell$ subcube of the root cube and
finally fixing a type.
This provides motivation for the following definition.
\begin{definition}[Tet-id]
 For $T=[\vec x_0,\dots,\vec x_d]\in\mathcal{T}_d$ we define the \textbf{Tet-id} of $T$ as the tuple of its anchor node
 and type; thus
 \begin{equation}
 \textrm{Tet-id}(T):=(\vec x_0,\type(T)).
 \end{equation}
 \end{definition}

 \begin{corollary}
 \label{cor:tetidunique}
 Let $T,T'\in\mathcal T_d$. Then $T=T'$ if and only if their Tet-ids and levels are the same.
 \end{corollary}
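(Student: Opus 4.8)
The plan is to prove the two implications separately. The forward direction is immediate, since $\textrm{Tet-id}(T) = (\vec x_0, \type(T))$ and $\ell(T)$ are all well-defined functions of the simplex $T$: if $T = T'$ then trivially their anchor nodes, types, and levels agree. For the converse, the two structural inputs I would rely on are Property~\ref{property:commref} together with the observation made right after Definition~\ref{def:type}, namely that the anchor node of a simplex $T \in \mathcal{T}_d$ coincides with the anchor node of its associated cube $Q_T$.

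So suppose $\textrm{Tet-id}(T) = \textrm{Tet-id}(T')$ and $\ell(T) = \ell(T') =: \ell$; then $T$ and $T'$ share the same anchor node $\vec x_0$ and the same type. The first step is to show that their associated cubes agree. By Definition~\ref{def:type}, $Q_T$ and $Q_{T'}$ are both level-$\ell$ subcubes of $[0, 2^\mathcal{L}]^d$, and a level-$\ell$ subcube is uniquely determined by its corner with smallest coordinates, i.e.\ by its anchor node. Since that anchor node equals $\vec x_0$ for both, we conclude $Q_T = Q_{T'}$.

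The second step is to show that a level-$\ell$ simplex of $\mathcal{T}_d$ is already determined by the pair (associated cube, type). This is exactly the content of Property~\ref{property:commref}: the uniform level-$\ell$ refinement of the root simplex $T^0_d$ can equivalently be produced by refining the root cube $[0, 2^\mathcal{L}]^d$ to level $\ell$ and then triangulating each of the resulting subcubes with the six (in 3D), respectively two (in 2D), simplices $2^{-\ell} S_i$. Hence inside the fixed cube $Q_T = Q_{T'}$ there is precisely one simplex of each type, so $T$ and $T'$, having equal type, must coincide. This is the same fact phrased after Definition~\ref{def:type}, that two $d$-simplices with the same associated cube are distinguished by their type; combining the two steps gives $T = T'$.

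The argument is essentially bookkeeping, and I do not expect a genuine obstacle. The only point that is not a pure unwinding of definitions is the last step — that (associated cube, type) is a complete invariant of a level-$\ell$ simplex — and that is precisely why Property~\ref{property:commref} was isolated earlier, so it can be cited directly. One minor thing to be careful about is that the level is genuinely needed in the hypothesis: distinct simplices at different levels can share an anchor node (and even a type), so Tet-id alone does not determine the simplex.
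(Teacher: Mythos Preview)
Your proposal is correct and follows exactly the reasoning the paper intends: the corollary is stated without an explicit proof, as it is meant to follow immediately from the sentence preceding the Tet-id definition (``any simplex in $\mathcal T_d$ can be obtained by specifying a level $\ell$, then choosing one level $\ell$ subcube of the root cube and finally fixing a type'') together with the remark after Definition~\ref{def:type} that simplices with the same associated cube are distinguishable by type. You have simply made these implicit steps explicit, including the observation that the anchor node and level determine the associated cube.
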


Note that in an arbitrary adaptive mesh there can be simplices with different levels and each 
simplex $T$ has an associated cube of level $\ell(T)$.
In particular, simplices with the same anchor node can have different associated cubes if their
levels are not equal.

 Since any simplex in $\mathcal T_d$ can be specified by the Tet-id and level, the Tet-id provides an im\-por\-tant tool
 for our work.
 The construction of the TM-index in the next section and the algorithms that
 we present in Section \ref{sec:algos} rely on the Tet-id as the fundamental
 data of a simplex.
 All information about a mesh can be extracted from the Tet-id and level of
 each element.

\begin{table}
\begin{center}
\raisebox{5.53ex}{
 \begin{tabular}{|rc|cccc|}
 \hline
 \multicolumn{2}{|c|}{{\mytabvspace Ct}}&\multicolumn{4}{c|}{Child}\\
 \multicolumn{2}{|c|}{2D}  &\mytabvspace $T_0$ & $T_1$ & $T_2$ & $T_3$\\[0.2ex] \hline
  \multirow{2}{*}{b}&0 & 0 & 0 & 0 & 1 \\
  &1 & 1 & 1 & 1 & 0 \\
  \hline
 \end{tabular}
 }
 \begin{tabular}{|rc|ccccc|}
 \hline
 \multicolumn{2}{|c|}{{\mytabvspace Ct}}&\multicolumn{5}{c|}{Child}\\
  \multicolumn{2}{|c|}{{\mytabvspace 3D}}
  & $T_0, \ldots, T_3$ & $T_4$ & $T_5$ & $T_6$ & $T_7$ \\[0.2ex] \hline
  \multirow{6}{*}{b}&0 & 0 & 4 & 5 & 2 & 1\\
  &1 & 1 & 3 & 2 & 5 & 0\\
  &2 & 2 & 0 & 1 & 4 & 3\\
  &3 & 3 & 5 & 4 & 1 & 2\\
  &4 & 4 & 2 & 3 & 0 & 5\\
  &5 & 5 & 1 & 0 & 3 & 4\\\hline
 \end{tabular}
 \end{center}
 \caption[The types of the children of a simplex]
  {For a $d$-simplex $T$ of type $b$ the table gives the types
  $\textrm{Ct}(T_0),\dots,\textrm{Ct}(T_{2^d-1})$ of $T's$ children.  The
  corner-children $T_0,T_1,T_2$ (and in 3D also $T_3$) always have the same type
  as $T$.  }
 \label{table:typesofchildren}
\end{table}

 Since the root simplex has type $0$, in a uniform refinement
 more simplices have type $0$ than any other type.
 However, a close examination of Table \ref{table:typesofchildren} together with a short inductive argument leads to the following proposition.
\begin{proposition}
 In the limit $\mathcal L,\ell\rightarrow\infty$ the different types of simplices
 in a uniform level $\ell$ refinement of $\mathcal{T}_d$
 occur in equal ratios.
\end{proposition}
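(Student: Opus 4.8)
The plan is to track the number of simplices of each type through successive uniform refinements, recognise the governing linear map as a rescaling of a primitive doubly stochastic matrix, and let Perron--Frobenius force the limiting distribution to be uniform.

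First I would fix the dimension $d\in\set{2,3}$ and write $N$ for the number of types ($N=2$ in 2D, $N=6$ in 3D). Let $n^{(\ell)}_b$ be the number of $d$-simplices of type $b$ in the uniform level-$\ell$ refinement of $\mathcal T_d$, collected into a vector $n^{(\ell)}\in\IN_0^{N}$. Since the child types listed in Table~\ref{table:typesofchildren} do not involve $\mathcal L$ at all, this vector is independent of $\mathcal L$ as long as $\ell\le\mathcal L$; hence the statement is really a statement about the limit $\ell\to\infty$. The root simplex has type $0$, so $n^{(0)}=e_0$, and one refinement step replaces each type-$c$ simplex by its $2^d$ children, whose types are read off Table~\ref{table:typesofchildren}; this gives the recurrence $n^{(\ell+1)}=M\,n^{(\ell)}$ with $M_{bc}$ equal to the number of type-$b$ children of a type-$c$ simplex. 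Because the total number of level-$\ell$ simplices is $2^{d\ell}$, the object to study is the fraction vector $f^{(\ell)}:=2^{-d\ell}n^{(\ell)}=A^\ell e_0$ with $A:=2^{-d}M$.

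The central structural point, obtained by inspecting Table~\ref{table:typesofchildren}, is that $M=(d+1)I+\sum_j P_j$, where the $d+1$ corner children $T_0,\dots,T_d$ contribute the diagonal term $(d+1)I$ and each of the remaining $2^d-d-1$ ``octahedron'' child slots contributes a permutation matrix $P_j$ on the type set (for a fixed such slot, the assignment ``type of parent $\mapsto$ type of that child'' is a bijection of $\set{0,\dots,N-1}$). Hence every row and every column of $M$ sums to $(d+1)+(2^d-d-1)=2^d$, so $A$ is doubly stochastic, and its diagonal entries are $(d+1)/2^d>0$. What remains is to check that the type-transition digraph with an arrow $c\to b$ whenever $M_{bc}>0$ is strongly connected: this is trivial in 2D ($0\leftrightarrow 1$) and a short finite check in 3D using Table~\ref{table:typesofchildren}. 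A strongly connected digraph with all self-loops present yields an irreducible, aperiodic---hence primitive---matrix $A$.

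Finally I would appeal to the Perron--Frobenius theorem: a primitive doubly stochastic $N\times N$ matrix satisfies $A^\ell\to\frac1N\,\mathbf 1\mathbf 1^{\top}$ as $\ell\to\infty$, so $f^{(\ell)}=A^\ell e_0\to(\frac1N,\dots,\frac1N)$, which is precisely the claim that all types occur in equal ratios in the limit. The only real obstacle is establishing the structural identity $M=(d+1)I+\sum_j P_j$ and the strong-connectedness of the type digraph; both reduce to finite inspections of Table~\ref{table:typesofchildren}, but in 3D they should be carried out with some care.
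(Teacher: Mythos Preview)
Your argument is correct and is a careful fleshing-out of what the paper only hints at (``a close examination of Table~\ref{table:typesofchildren} together with a short inductive argument''): the recurrence $n^{(\ell+1)}=Mn^{(\ell)}$ is precisely that induction, and your structural observation $M=(d+1)I+\sum_jP_j$ is exactly the content of the table inspection. The paper gives no further detail, so your use of Perron--Frobenius via double stochasticity and primitivity is a natural way to close the argument; one could alternatively note that $M$ is in fact symmetric (even circulant in 3D) and compute the spectrum directly, but your route is cleaner and equally valid.
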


\subsection{Encoding of the tetrahedral Morton index}
\label{sec:TMindex}
In addition to the anchor coordinates the TM-index also depends on the types of
all ancestors of a simplex. In order to define the TM-index we
start by giving a formal
definition of the interleaving operation and some additional information.
\begin{definition}
 We define the \textbf{interleaving} $a \ainter b$  of two $n$-tuples
 $a=$ $(a_{n-1},$ $\dots,a_0)$ and 
 $b=(b_{n-1},\dots,b_0)$  as the $2n$-tuple obtained
 by alternating the entries of $a$ and $b$:
 \begin{equation}
  a \ainter b \, {:=} \, (a_{n-1},b_{n-1},\dots,a_0,b_0).
 \end{equation}
 The interleaving of more than two $n$-tuples $a^1,\ldots,a^m$ is defined analogously as the $mn$-tuple
 \begin{equation}
a^1 \ainter \cdots \ainter a^m \, {:=} \, (a^1_{n-1},a^2_{n-1},\ldots,a^m_{n-1},a^1_{n-2},\ldots,a^{m-1}_0,a^m_0).
 \end{equation}
\end{definition}

\begin{remark}
The TM-index of a $d$-simplex $T\in\mathcal T_d$ that we are
going to define is constructed by interleaving $d+1$ $\mathcal L$-tuples, where
the first $d$ are the binary re\-pre\-sentations of the coordinates of $T$'s anchor
node and the last
is the tuple consisting of the types of the ancestors of $T$.
\end{remark}

\begin{definition}
\label{def:XYZ}
Let $T\in \mathcal{T}_3$ be a tetrahedron of refinement level $\ell$
 with anchor node $\vec x_0 = (x,y,z)^T\in\IL^3$.
 Since $x,y,z\abst{\in}\IN_{0}$ with $0\leq x,y,z < 2^\mathcal{L}$, we can express them as binary numbers with $\mathcal L$ digits, writing
 \begin{equation}
 x = \sum_{j=0}^{\mathcal{L}-1} x_j 2^j,\quad y = \sum_{j=0}^{\mathcal{L}-1} y_j 2^j,\quad z = \sum_{j=0}^{\mathcal{L}-1} z_j 2^j.
 \end{equation}
 We define the $\mathcal L$-tuples $X$, $Y$, and $Z$
 as the $\mathcal L$-tuples consisting of the binary digits of $x$, $y$, and $z$; thus,
 \begin{subequations}
 \begin{align}
 X = X(T)&:=(x_{\mathcal L -1},\dots,x_0),\\
 Y = Y(T)&:=(y_{\mathcal L -1},\dots,y_0),\\
 Z = Z(T)&:=(z_{\mathcal L -1},\dots,z_0).
 \end{align}
 \end{subequations}
 In 2D we get the same definitions with $X$ and $Y$, leaving out the $z$-coordinate.
\end{definition}

\begin{definition}\label{def:BvonT}
For a $T\in\mathcal T_d$ of level $\ell$ and each $0\leq j \leq \ell$ let $T^j$ be the (unique) ancestor of $T$ of level $j$. In particular, $T^\ell=T$.
We define $B(T)$ as the $\mathcal L$-tuple consisting of the types of $T$'s ancestors in the first $\ell$ entries, starting
with $T^1$. The last $\mathcal{L}-\ell$ entries of $B(T)$ are zero:
\begin{equation}
 B=B(T):=\left(\underbrace{\type(T^1),\type(T^2),\ldots,\type(T)}_{\ell\textrm{ entries}},0,\ldots,0\right),
\end{equation}
Thus, if we write $B$ as an $\mathcal{L}$-tuple with indexed entries $b_i$
\begin{equation}
B=B(T)=(b_{\mathcal L-1},\dots,b_0) \abst{\in} \set{0,\dots,d!-1}^\mathcal{L},
\end{equation}
then the $i$th entry $b_i$ is given as
\begin{equation}
 b_i= \left\lbrace\begin{matrix} \type(T^{\mathcal L -i}) & \mathcal L-1 \geq i \geq \mathcal L -\ell, \\[1ex]
                0      & \mathcal L -\ell > i \geq0.\hphantom{i-\ell}
                \end{matrix}\right.
\end{equation}
\end{definition}

\begin{definition}[tetrahedral Morton Index]\label{def:mortonindex}
We define the \textbf{tetrahedral Morton index} (\textbf{TM-index}) $m(T)$ of a $d$-simplex $T\in\mathcal{T}_d$ as the interleaving
of the $\mathcal L$-tuples $Z$ (for tetrahedra), $Y$, $X$ and $B$. Thus,
\begin{subequations}
 \label{eq:mortondef}
 \begin{equation}
   m(T)\abst{:=} Y\ainter X \ainter B
 \end{equation}
 for triangles and
 \begin{equation}
   m(T)\abst{:=} Z\ainter Y\ainter X \ainter B
 \end{equation}
 \end{subequations}
 for tetrahedra.
\end{definition}
This index resembles the well-known Morton index or Z-order for $d$-dimensional cubes, which we denote by $\widetilde m$ here.
For such a cube $Q$ the Morton index is usually defined as the bitwise interleaving of its coordinates.
Thus $\widetilde m(Q) = Z\inter Y\inter X$, respectively, $\widetilde m(Q) =
Y\inter X$; see \cite{Morton66,SundarSampathBiros08,BursteddeWilcoxGhattas11}
as well as Section~\ref{sec:mortonsfc}.

As we show in Section \ref{sec:algos}, the TM-index can be computed from the Tet-id of $T$ with no further information given.
Thus, in an implementation it is not necessary to store the $\mathcal{L}$-tuple $B$.

The TM-index of a $d$-simplex builds up from packs of $d$ bits $z_i$ (for
tetrahedra), $y_i$, and $x_i$ followed by a type $b_i\in\set{0,\ldots,d!-1}$.
Since $d! = 2 < 4$ for $d=2$, we can interpret the 2D TM-index as
a quarternary number with digits $(y_ix_i)_2$ and $b_i$:
\begin{subequations}
\label{eq:mortonoctal}
\begin{equation}
\begin{array}{ccc}
\label{eq:mortonquater}
 m(T)  &=& ((y_{\mathcal L -1}x_{\mathcal L -1})_2,b_{\mathcal L -1},\ldots,(y_0x_0)_2,b_0)_4\\[1ex]
 &=& \displaystyle\sum_{i=0}^{\mathcal L-1}\left( (2y_i+x_i)4^{2i+1} + b_i4^{2i}\right).
\end{array}
\end{equation}
Similarly we can interpret it as an octal number with digits $(z_iy_ix_i)_2$ and $b_i$ for $d=3$, since then $d! = 6 < 8$:
\begin{equation}
\begin{array}{ccc}
 m(T) & = &((z_{\mathcal L -1}y_{\mathcal L -1}x_{\mathcal L -1})_2,b_{\mathcal L -1},\ldots,(z_0y_0x_0)_2,b_0)_8\\[1ex]
 &=& \displaystyle\sum_{i=0}^{\mathcal L-1}\left( (4z_i+2y_i+x_i)8^{2i+1} + b_i8^{2i}\right).
 \end{array}
\end{equation}
\end{subequations}
The entries in these numbers are only nonzero up to the level $\ell$ of $T$, since
$x_{\mathcal{L}-i} = y_{\mathcal{L}-i} = (z_{\mathcal{L}-i} = ) b_{\mathcal{L}-i} = 0$ for all $i>\ell$.
The octal/quarternary representation \eqref{eq:mortonoctal} directly gives an
order on the TM-indices, and therefore it is possible to construct an
SFC from it, which we examine further in Section
\ref{sec:SFC}.
We use $m(T)$ to denote both the $(d+1)\mathcal L$-tuple from \eqref{eq:mortondef} and the number given by \eqref{eq:mortonoctal}.

Let us look at Figure \ref{fig:mortonandwrongmorton} for a short example to motivate this definition of the TM-index.
Since the anchor coordinates and the type together with the level uniquely determine a $d$-simplex in $\mathcal T_d$,
one could ask why we do not define the index to be $((Z\ainter )Y\ainter X,\type(T))$,
a pair consisting of the Morton index of the associated cube of $T$ and the type of $T$.
This index was introduced for triangles in a slightly modified version as semiquadcodes in \cite{OtooZhu93}
and would certainly require less information since the computation of the sequence $B$ would not be necessary.
However, it  results in an SFC that traverses the leaf cubes of a refinement in the usual Z-order and inside of each cube it
traverses the $d!$ different simplices in the order $S_0,\ldots,S_{d!-1}$.
As a result, there can be simplices $T$ whose children are not traversed as a group,
which means that there is a tetrahedron $T'$,
which is not an ancestor or descendant of $T$, such that some child $T_i$ of $T$ is traversed before $T'$ and $T'$ is traversed before another child $T_j$ of $T$.
Thus, this curve is not an SFC in the sense of Definition~\ref{def:sfcindex}.
In contrast to this, Theorem~\ref{thm:IndexProps} states that the TM-index is in fact an SFC-index.
Figure \ref{fig:mortonandwrongmorton} compares the two approaches for a uniform level 2 refinement of $T^0_2$.

\begin{figure}
\center
\begin{minipage}{0.48\textwidth}
   \includegraphics[width=0.9\textwidth]{./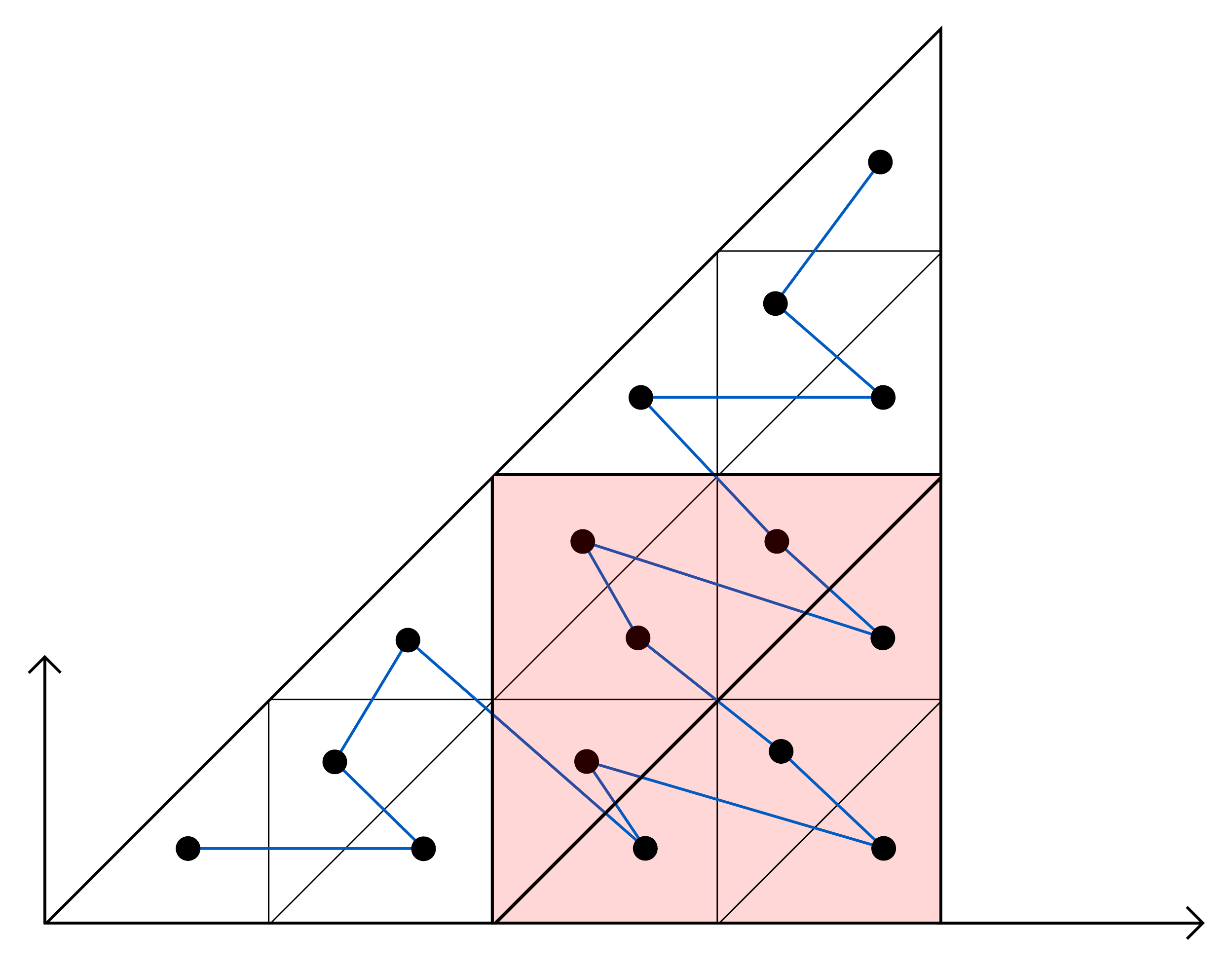}
\end{minipage}
\begin{minipage}{0.48\textwidth}
   \includegraphics[width=0.9\textwidth]{./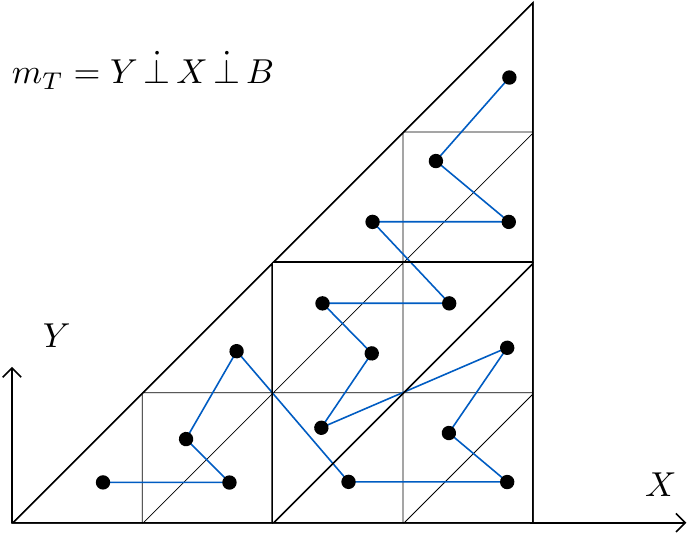}
\end{minipage}
   \caption[The TM curve preserves locality]
   {Comparing a straightforward definition of a Morton-type SFC
    with our approach.
    Left: The curve arising from taking the Morton order of the quadrants
    and only dividing into triangles on the last level.
    Thus the index is  $(Y\inter X,\type(T))$. As we see
    on the two coarse triangles that are shaded, the children of a level
    1 triangle are not necessarily traversed before any other triangle is
    traversed.
    Thus, it breaks the locality property that is part of the definition of 
    an SFC, and therefore this index
    is not suitable for our purposes. Right: The curve arising from the
    TM-index from our Definition \ref{def:mortonindex}. We see
    that for each level 1 triangle its four children are traversed as a
    group. Theorem \ref{thm:IndexProps} states that the curve is in fact 
    a proper SFC in the sense of Definition~\ref{def:sfcindex} and
    thus the locality property holds for any parent triangle/tetrahedron.
    The order in which the children are traversed depends (only) on the
    type of the parent and is different from Bey's order given by
    \eqref{eq:childnumbers}.}
   \figlabel{fig:mortonandwrongmorton}
\end{figure}

\subsection{A different approach to derive the TM-index}

There is another interpretation of the TM-index, which is particularly useful for the AMR algorithms presented in Section \ref{sec:algos}.
In order to define it we introduce the concept of the so-called cube-id.
According to Figure \ref{fig:sechstetra} we number the $2^d$ corners of a  $d$-dimensional cube by $c_0,\ldots,c_{2^{d-1}}$ in a $zyx$-order ($x$ varies fastest).
When refining a cube to $2^d$ children, each child has exactly one of the $c_i$ as a corner, and
it is therefore convenient to number the children by $c_0,\ldots,c_{2^{d-1}}$ as well.
For the child $c_i$ we call the number $i$ the cube-id of that child; see Figure \ref{fig:cubeid} for an illustration.
Since each cube $Q$ that is not the root has a unique parent, it also has a unique cube-id.
This cube-id can easily be computed by interleaving the last significant bits
of the $z$- (in 3D), $y$-, and $x$-coordinates of $Q$'s anchor node.
\begin{definition}
Because each $d$-simplex $T\abst{\in}\mathcal{T}_d$ of level $\ell$ has a unique associated cube we define
the \textbf{cube-id} of $T$ to be the cube-id of the associated cube of $T$, that is, the $d$-cube of level
$\ell$ that has the same anchor node as $T$.
\end{definition}
If $X$, $Y$ (and $Z$) are as is in Definition \ref{def:XYZ} then we can write the cube-id of $T$'s ancestors as
\begin{equation}
\begin{array}{lc}
  \cid(T^i) = (y_ix_i)_2  & \textrm{in 2D},\\[1ex]
\cid(T^i) = (z_iy_ix_i)_2 & \textrm{in 3D},
\end{array}
\end{equation}
and therefore using \eqref{eq:mortonoctal} we can rewrite the TM-index of $T$ as
\begin{equation}\label{eq:mortonentries}
 m(T) = (\cid(T^1),\type(T^1),\dots,\cid(T^\ell),\type(T^\ell),0,\dots,0)_{2^d}.
\end{equation}
This resembles the Morton index of the associated cube $Q_T$ of $T$, since we can write this as
\begin{equation}
 \widetilde m({Q_T}) = (\cid(Q^1),\dots,\cid(Q^\ell),0,\dots,0)_{2^d}.
\end{equation}

\begin{figure}
   \center
   \def\svgwidth{0.35\textwidth}
   \includegraphics{./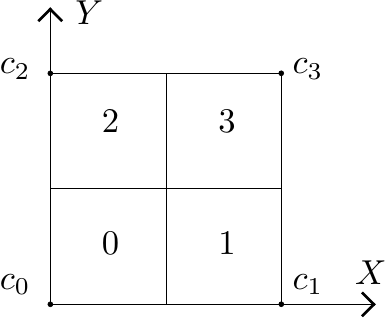}
   \hspace{10ex}
   \def\svgwidth{0.35\textwidth}
   \includegraphics{./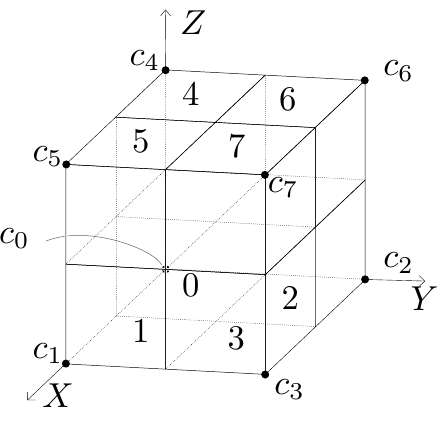}
    \caption[The cube-id]
  {Left: A square is refined to four children, each of which
  corresponds to a corner of the square. The number of the corner is the cube-id
  of that child. Right: In three dimensions a cube is refined to eight children.
  Their cube-ids and corner numbers are shown as well.}
    \figlabel{fig:cubeid}
\end{figure}

\subsection{Properties of the TM-index}\label{sec:Prop}
In this section we show that the TM-index is an SFC-index.
As a first result, we show that $m\times\ell$ is injective.
\begin{proposition}\label{prop:uniquetet}
Together with a refinement level $\ell$, the TM-index $m(T)$ 
u\-nique\-ly determines a $d$-simplex in $\mathcal T_d$.
\begin{proof}
If $\ell=0$, then there is only one simplex of level $\ell$ in $\mathcal T_d$, which is $T^0_d$.
So let $\ell>0$ and $m=m(T)$ be given as in \eqref{eq:mortondef}, and let $\ell$ be the level of $T$.
From $m$ we can compute the $x$- ,$y$- (and $z$-)
coordinates of the associated cube of $T$.
We can also compute the type of $T$ from the TM-index.
By Corollary \ref{cor:tetidunique} this information uniquely determines $T$.
\end{proof}
\end{proposition}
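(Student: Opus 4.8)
The plan is to reconstruct the Tet-id of $T$ from the pair $(m(T),\ell)$ and then invoke Corollary~\ref{cor:tetidunique}. The case $\ell=0$ is immediate, since $T^0_d$ is the only level-$0$ element of $\mathcal T_d$; so I would assume $\ell>0$ throughout. Treating $m(T)$ as the $(d+1)\mathcal L$-tuple of Definition~\ref{def:mortonindex}, I would first de-interleave it: since $\ainter$ is a bijection between $(d+1)$-tuples of $\mathcal L$-tuples and $(d+1)\mathcal L$-tuples, reading off every $(d+1)$-th entry recovers the constituent $\mathcal L$-tuples $X$, $Y$ (and $Z$ in 3D) and $B$ of Definitions~\ref{def:XYZ} and~\ref{def:BvonT}. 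Using the octal/quaternary form~\eqref{eq:mortonoctal} and~\eqref{eq:mortonentries} this amounts to the same thing: the digit in position $2i+1$ is $\cid(T^i)=(z_iy_ix_i)_2$ (resp.\ $(y_ix_i)_2$ in 2D) and the digit in position $2i$ is the type entry $b_i$.

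From $X$, $Y$, $Z$ I would then recover the integer coordinates $x=\sum_j x_j2^j$, $y=\sum_j y_j2^j$, $z=\sum_j z_j2^j$ of the anchor node $\vec x_0$, which by Definition~\ref{def:type} is also the anchor node of the associated cube $Q_T$. For the type I would use Definition~\ref{def:BvonT}: its entries satisfy $b_i=\type(T^{\mathcal L-i})$ for $\mathcal L-1\ge i\ge \mathcal L-\ell$ and $b_i=0$ otherwise, so in particular $b_{\mathcal L-\ell}=\type(T^\ell)=\type(T)$. Having recovered both $\vec x_0$ and $\type(T)$, i.e.\ the full Tet-id, together with the given level $\ell$, Corollary~\ref{cor:tetidunique} shows that $T$ is the unique element of $\mathcal T_d$ with this Tet-id and level, which finishes the argument.

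I do not expect a genuine obstacle; the only point deserving care is the indexing in the second paragraph, namely that the type digit sits at position $2(\mathcal L-\ell)$ and that this is unambiguous precisely because $\ell$ is supplied: without $\ell$ the trailing zeros of $B$ (encoding ``no further refinement'') could not be distinguished from a legitimate type-$0$ ancestor, which is exactly why the proposition takes $\ell$ as a second input. A minor additional check, automatic since $m(T)$ is assumed to be the index of an actual $T\in\mathcal T_d$, is that the decoded coordinates lie in $\IL^d$, so that the data reconstructed above is indeed a valid anchor node.
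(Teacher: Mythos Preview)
Your proof is correct and follows essentially the same approach as the paper: de-interleave $m(T)$ to recover the anchor coordinates and the type, then apply Corollary~\ref{cor:tetidunique}. You simply spell out more explicitly where each piece of the Tet-id sits inside the interleaved tuple, and add the useful remark that $\ell$ is needed to locate the correct type entry in $B$; the paper's version states the same reconstruction more tersely.
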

For the Morton index $\widetilde m$ for cubes
the following important properties are known\cite{SundarSampathBiros08}:
\begin{enumerate}[(i)]
 \item A Morton index of a cube $Q$ is the prefix of an index of a cube $P$ of higher level than $Q$ if and only if
       $P$ is a descendant of $Q$.
 \item The Morton indices of the descendants of a parent cube are larger than or equal to the index of the parent cube.
 \item Refining only changes the SFC locally. Thus, if $Q$ is a cube and $P$ is a cube with $\widetilde m(Q)<\widetilde m(P)$
       and $P$ is not a descendant of $Q$, then $\widetilde m({Q'})<\widetilde m(P)$ for each descendant $Q'$ of $Q$.
\end{enumerate}
Property (iii) defines a hierarchic invariant of the SFC
that is specific to our construction (see Figure~\ref{fig:mortonandwrongmorton}).
We show below that properties (i), (ii) and (iii) hold for $d$-simplices and
the TM-index described by \eqref{eq:mortondef}.
Proposition~\ref{prop:uniquetet} together with (ii) and (iii)
imply that the TM-index is an SFC-index in the sense of Definition~\ref{def:sfcindex}.
\begin{theorem}\label{thm:IndexProps}
 For arbitrary $d$-simplices $T\neq S\in\mathcal T_d$ the TM-index satisfies
the following:
\begin{enumerate}[(i)]
 \item  If $\ell(T) < \ell(S)$, then $m(T)$ is a prefix of $m(S)$ if and only if $S$ is a descendant of $T$.
 \item  If $T$ is an ancestor of $S$ then $m(T)\leq m({S})$.
 \item  If $m(T)<m(S)$ and $S$ is no descendant of $T$, then for each descendant $T'$ of $T$
we have
\begin{equation}
 m(T)\leq m({T'}) < m(S).
\end{equation}
\end{enumerate}
\end{theorem}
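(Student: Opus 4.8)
The plan is to work throughout with the base-$2^d$ digit representation \eqref{eq:mortonentries}, in which $m(T)$ is the string of $\ell(T)$ packs $\cid(T^j),\type(T^j)$ for $j=1,\dots,\ell(T)$, followed by $\mathcal{L}-\ell(T)$ zero packs. Here I read "$m(T)$ is a prefix of $m(S)$" as: the leading $2\ell(T)$ base-$2^d$ digits of $m(S)$ agree with those of $m(T)$; since the remaining digits of $m(T)$ all vanish, this is equivalent to the inequality chain $m(T)\le m(S)<m(T)+(2^d)^{2(\mathcal{L}-\ell(T))}$, which is the form I will use in the arithmetic steps.

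For (i) I would first treat the forward direction. If $S$ is a descendant of $T$ with $\ell:=\ell(T)<\ell(S)$, then $T$ is the unique level-$\ell$ ancestor of $S$, so $S^j=T^j$ for all $j\le\ell$, and reading off \eqref{eq:mortonentries} shows the leading $2\ell$ packs of $m(S)$ and $m(T)$ coincide, i.e.\ $m(T)$ is a prefix of $m(S)$. The converse is the conceptual core. Suppose $m(T)$ is a prefix of $m(S)$ with $\ell(T)<\ell(S)$. Comparing the leading $2\ell(T)$ digits of $m(S)$ — expressed via \eqref{eq:mortonentries} applied to $S$ — with those of $m(T)$ gives $\cid(S^j)=\cid(T^j)$ and $\type(S^j)=\type(T^j)$ for all $j\le\ell(T)$. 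Since the cube-id of a level-$j$ cube is exactly the pack of $j$-th significant bits of its anchor coordinates, and the anchor of a level-$\ell(T)$ simplex has vanishing lower-order bits, the anchor nodes of $S^{\ell(T)}$ and $T$ must agree; together with the coincidence of types and of levels, Corollary~\ref{cor:tetidunique} forces $S^{\ell(T)}=T$. Thus $T$ is an ancestor, hence (as $\ell(T)<\ell(S)$) a proper ancestor, of $S$.

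Property (ii) then follows at once: if $T=S$ there is nothing to show, and if $T$ is a proper ancestor of $S$ then $\ell(T)<\ell(S)$, so by (i) $m(T)$ is a prefix of $m(S)$; the digits of $m(S)$ below position $2(\mathcal{L}-\ell(T))$ are nonnegative while the corresponding digits of $m(T)$ are zero, so $m(T)\le m(S)$. For (iii) I would argue by contradiction. Let $T'$ be a descendant of $T$; by (ii) $m(T)\le m(T')$, so it remains to show $m(T')<m(S)$. If $T'=T$ this is the hypothesis, so assume $T'$ is a proper descendant of $T$; then (i) gives $m(T)\le m(T')<m(T)+(2^d)^{2(\mathcal{L}-\ell(T))}$. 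If we had $m(T')\ge m(S)$, then with $m(T)<m(S)$ we would get $m(T)<m(S)<m(T)+(2^d)^{2(\mathcal{L}-\ell(T))}$; the left comparison says the leading $2\ell(T)$ digits of $m(S)$ equal those of $m(T)$ (so $m(T)$ is a prefix of $m(S)$), while strictness on the left forces the lower part of $m(S)$ to be nonzero, hence $\ell(S)>\ell(T)$. By (i) this makes $S$ a descendant of $T$, contradicting the hypothesis. Therefore $m(T')<m(S)$.

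The step I expect to be the main obstacle is not any single computation but getting the framework exactly right: fixing the precise meaning of "prefix" for two tuples of equal length $(d+1)\mathcal{L}$ (it refers to the significant, i.e.\ non-padding, base-$2^d$ digits) and maintaining the digit-to-ancestor bookkeeping of \eqref{eq:mortonentries} cleanly. Once that is in place, the reverse direction of (i) is a direct appeal to Corollary~\ref{cor:tetidunique}, and (ii) and (iii) reduce to elementary inequalities between base-$2^d$ numbers.
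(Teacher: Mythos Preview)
Your argument is correct and complete; the digit-by-digit bookkeeping via \eqref{eq:mortonentries} is carried through cleanly, and the key step in the converse of (i)---recovering the anchor of $S^{\ell(T)}$ from the cube-ids and invoking Corollary~\ref{cor:tetidunique}---is sound. However, your route is genuinely different from the paper's. The paper does not argue directly with the base-$2^d$ digits at all. Instead it constructs (Proposition~\ref{prop:mapfromTtoQ}) an injective map $\Phi\colon\mathcal T_3\to\mathcal Q$ into the set of $6$-dimensional subcubes of $[0,2^{\mathcal L}]^6$, obtained by reinterpreting $B^0,B^1,B^2,x,y,z$ as six cube coordinates; this map satisfies $\widetilde m(\Phi(T))=m(T)$ and preserves the parent--child relation. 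Properties (i)--(iii) then follow immediately from the corresponding known properties of the cubical Morton index \cite{SundarSampathBiros08}. What each approach buys: the paper's embedding is conceptually elegant and short once the cube result is in hand, and it makes explicit the structural reason the TM-index inherits cubical behavior; your direct argument is more self-contained (it does not appeal to an external result about $6$D cubes) and in effect reproves the cube statement in the simplicial setting. Both are valid, and yours has the mild advantage of being fully elementary.
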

The proof of Theorem \ref{thm:IndexProps} requires some work and we need to show a technical result first.
Hereby, we consider only the 3D case, since for 2D the argument is completely analogous.
We define an embedding of the set of all TM-indices into the set of  Morton
indices for 6D cubes. Since the properties (i)--(iii) hold for
these cubes it follows that they hold
for tetrahedra as well.
To this end, for a given tetrahedron $T\abst{\in}\mathcal{T}_3$ we interpret each entry $b_j$ of $B(T)$ as a $3$-digit binary number
\begin{equation}
 b_j \abst{=} (b_j^2\, b_j^1\, b_j^0)_2,
\end{equation}
which is possible since $b_j\,{\in}\, \set{0,\dots,5}$.
We obtain three new $\mathcal{L}$-tuples $B^2, B^1, B^0$ satisfying
\begin{equation}
 B\abst{=} B^2\ainter B^1\ainter B^0 ,
\end{equation}
and thus we can rewrite the TM-index as
\begin{equation}
\label{eq:rewritemorton}
 m(T)\abst{=} Z\ainter Y\ainter X \ainter B^2\ainter B^1\ainter B^0.
\end{equation}
Note that we can interpret each $B^i$ as an $\mathcal L$-digit binary number
for which we have $0\leq B^i<2^\mathcal{L}$.
Now let $\mathcal{Q}$ denote the set of all 6D cubes that are a child of the cube $Q_0 \, {:=} \, [ 0, 2^\mathcal{L}]^6$:
\begin{equation}
 \mathcal{Q}\abst{=} \set{Q \abst{|} Q \textnormal{ is a descendant of } Q_0
 \textnormal{ of level } 0\leq \ell\leq\mathcal{L}}.
\end{equation}
Since a cube $Q\,{\in}\, \mathcal{Q}$ is uniquely determined by the six coordinates $(x_0,\dots,x_5)$ of its anchor node plus its level $\ell$,
we also write
$Q\abst{=} Q_{(x_0,\dots,x_5),\ell}$.
Note that the Morton index for a cube can be defined as the bitwise interleaving of its anchor node coordinates \cite{Morton66}:
\begin{equation}
 \widetilde{m}(Q)\abst{=} X^5\ainter X^4\ainter X^3\ainter X^2\ainter X^1\ainter X^0.
\end{equation}

\begin{proposition}
\label{prop:mapfromTtoQ}
The map
 \begin{equation}\label{eq:MapTQ}
  \begin{array}{rccl}
   \Phi\colon&\mathcal{T}_3 & \longrightarrow & \mathcal{Q}, \\
   &T & \longmapsto & Q_{(B^0(T),B^1(T),B^2(T),x(T),y(T),z(T)),\ell(T)} \\
  \end{array}
 \end{equation}
 is injective and satisfies
 \begin{equation}
   \widetilde m({\Phi(T)})\abst{=} m(T).
 \end{equation}
 Furthermore, it fulfills the property that $T'$ is a child of $T$ if and only if $\Phi(T')$ is a child of $\Phi(T)$.
 \begin{proof}
  The equation $m(T)=\widetilde m({\Phi(T))}$ follows directly from the definitions of the TM-indices on $\mathcal T_3$ and $\mathcal Q$.
  From Lemma \ref{prop:uniquetet} we conclude that $\Phi$ is injective.
  Now let $T',T\abst{\in}\mathcal{T}_3$, where $T'$ is a child of $T$.
  Furthermore, let $\ell=\ell(T)$.
  We know that $Q':=\Phi(T')$ is a child of $Q:=\Phi(T)$ if and only if for each $i\in\set{0,\dots,5}$
  it holds that
  \begin{equation}\label{eq:xiequation}
   x_i(Q')\in\set{x_i(Q),x_i(Q)+2^{\mathcal L -(\ell+1)}}.
  \end{equation}
  Because of the underlying cube structure (compare Figure~\ref{fig:refineddiagram}) we know that the $x$-coordinate
  of the anchor node of $T'$ satisfies
  \begin{equation}
   x(T')\in\set{x(T),x(T)+2^{\mathcal L -(\ell+1)}},
  \end{equation}
  and likewise for $Y(T')$ and $Z(T')$.
  Therefore, \eqref{eq:xiequation} holds for $i=3,4,5$.
  By definition $B^j(T')$ is the same as $B^j(T)$ except at position $\mathcal L-(\ell+1)$, where
   \begin{equation}
    B^j(T')_{\mathcal L-(\ell+1)} = b_{\mathcal L-(\ell+1)}^j(T')\in\set{0,1}
   \end{equation}
   and
   \begin{equation}
  B^j(T)_{\mathcal L-(\ell+1)} = 0.
   \end{equation}
   Hence, we conclude that \eqref{eq:xiequation} also holds for $i=0,1,2$.
   So $\Phi(T')$ is a child of $\Phi(T)$.

   To show the other implication, let us assume that
   $\Phi(T')$ is a child of $\Phi(T)$.
   Since $\ell(T')=\ell(\Phi(T'))>0$,  $T'$ has a parent and
   we denote it by $P$.
   In the argument above we have shown that $\Phi(P)$ is the parent of $\Phi(T')$
   and because each cube has a unique parent
   the identity $\Phi(P)=\Phi(T)$ must hold. Therefore, we get $P=T$ since $\Phi$ is injective; thus, $T'$ is the child of $T$.
\end{proof}
\end{proposition}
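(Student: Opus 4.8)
The plan is to verify the three assertions in turn, reducing each to the definitions of $m$ and $\widetilde m$ together with the cube structure underlying Bey's refinement (Property~\ref{property:commref}). I would begin with the identity $\widetilde m(\Phi(T)) = m(T)$, since both injectivity and the easy half of the child statement lean on it. By construction the six anchor coordinates of $\Phi(T)$ are, in the order used to form the Morton index, $B^0(T), B^1(T), B^2(T), x(T), y(T), z(T)$; hence the $\mathcal L$-tuple of binary digits of the $i$th coordinate is $B^0, B^1, B^2$ for $i = 0,1,2$ and $X, Y, Z$ for $i = 3,4,5$. Substituting into $\widetilde m(Q) = X^5 \ainter X^4 \ainter X^3 \ainter X^2 \ainter X^1 \ainter X^0$ gives exactly $Z \ainter Y \ainter X \ainter B^2 \ainter B^1 \ainter B^0$, which is $m(T)$ by the rewriting~\eqref{eq:rewritemorton}. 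Injectivity is then immediate: if $\Phi(T) = \Phi(T')$, then $m(T) = m(T')$ and $\ell(T) = \ell(T')$, so Proposition~\ref{prop:uniquetet} forces $T = T'$.

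The core of the argument is the child equivalence, and I would prove the forward direction first. Let $T'$ be a child of $T$ and set $\ell := \ell(T)$. A level-$(\ell+1)$ cube $Q'$ is a child of a level-$\ell$ cube $Q$ precisely when each anchor coordinate satisfies $x_i(Q') \in \{x_i(Q),\, x_i(Q) + 2^{\mathcal L - (\ell+1)}\}$, so I must check this for $i = 0,\dots,5$ with $Q = \Phi(T)$, $Q' = \Phi(T')$. For $i = 3,4,5$, the coordinates $x,y,z$ of the simplices themselves, this is exactly the statement that a child simplex lies in one of the eight subcubes of the associated cube of $T$, i.e.\ Property~\ref{property:commref} (see Figure~\ref{fig:refineddiagram}). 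For $i = 0,1,2$, I would use that $T^j = (T')^j$ for all $j \le \ell$, so by Definition~\ref{def:BvonT} the tuple $B(T')$ agrees with $B(T)$ in every entry except the one indexed $\mathcal L - (\ell+1)$, where $B(T)$ has $0$ (that position lies beyond the level of $T$) while $B(T')$ has the value $\type(T') \in \{0,\dots,5\}$; writing that entry as a $3$-digit binary number and distributing its digits shows that each $B^j(T')$ coincides with $B^j(T)$ except possibly at bit position $\mathcal L - (\ell+1)$, where it may carry a $1$ instead of a $0$, i.e.\ $B^j(T') \in \{B^j(T),\, B^j(T) + 2^{\mathcal L - (\ell+1)}\}$. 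This closes the forward direction.

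For the converse, suppose $\Phi(T')$ is a child of $\Phi(T)$. Then $\ell(T') = \ell(\Phi(T')) \ge 1$, so $T'$ has a parent $P \in \mathcal T_3$; applying the forward direction to $(P, T')$ shows $\Phi(T')$ is a child of $\Phi(P)$. Since a cube has a unique parent, $\Phi(P) = \Phi(T)$, and injectivity of $\Phi$ gives $P = T$, so $T'$ is a child of $T$. The only delicate point in the whole proof is the index bookkeeping in the forward direction — aligning the bit positions of $B$, $B^0$, $B^1$, $B^2$ with the refinement level and with the cube-refinement step — which I would handle by writing out the indexed entry $b_{\mathcal L - (\ell+1)}$ explicitly as in Definition~\ref{def:BvonT} and tracking its three binary digits separately.
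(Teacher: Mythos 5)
Your proposal is correct and follows essentially the same route as the paper's proof: the identity $\widetilde m(\Phi(T))=m(T)$ by unwinding the interleavings, injectivity via Proposition~\ref{prop:uniquetet}, the forward child implication by checking the coordinate condition separately for $i=3,4,5$ (cube structure) and $i=0,1,2$ (the single differing bit of $B$), and the converse via uniqueness of cube parents plus injectivity. The only difference is that you spell out a few steps the paper leaves implicit.
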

Inductively we conclude that $T'$ is a descendant of $T$ if and only if $\Phi(T')$ is a de\-scen\-dant of $\Phi(T)$.
Now Theorem \ref{thm:IndexProps} follows, because the desired properties (i)--(iii) hold for the Morton index of cubes~\cite{SundarSampathBiros08}.
\qed

\subsection{The space-filling curve associated to the TM-index}
\label{sec:SFC}
By interpreting the TM-indices as $2^d$-ary numbers as in \eqref{eq:mortonoctal}
we get a total order on the set of all possible TM-indices,
and therefore it gives rise to an SFC for any refinement $\mathscr{S}$ of $T^0_d$.
In this section we further examine the SFC derived from the TM-index.
We give here a recursive description of it,
similarly to how it is done for the Sierpinski curve and other
cubical SFC by Haverkort and van Walderveen \cite{HaverkortWalderveen10}.

Part (iii) of Theorem \ref{thm:IndexProps} tells us that the descendants of a simplex $T$ are traversed
before any other simplices with a higher TM-index than $T$ are traversed.
However, the order that the children of $T$ have relative to each other can be different to the
order of children of another simplex $T'$.
In particular the order of the simplices defined by the TM-index differs from the order
\eqref{eq:childnumbers} defined by Bey.
We observe this behavior in 2D in Figure~\ref{fig:mortonandwrongmorton} on the right-hand side:
For the level 1 triangles of type $0$ the children are traversed in the order
\begin{equation}
 T_0, T_1, T_3, T_2
\end{equation}
and the children of the level 1 triangle of type $1$ are traversed in the order
\begin{equation}
 T_0, T_3, T_1, T_2.
\end{equation}
In fact, the order of the children of a simplex $T$ depends only on the type of $T$, as
we show in the following Proposition.
\begin{proposition}
\label{proposition:locindexdeptype}
 If $T,T'\in \mathcal T_d$ are two $d$-simplices of given type $b = \type(T)=\type(T')$,
 then there exists a unique permutation $\sigma \equiv \sigma_b$ of
 $\set{0,\dots,2^d-1}$ such that
 \begin{equation}
 \begin{array}{c}
  m(T_{\sigma(0)})<m(T_{\sigma(1)})<\dots<m(T_{\sigma(2^d-1)}), \\[1ex]
  \textrm{and}\\[1ex]
  m(T'_{\sigma(0)})<m(T'_{\sigma(1)})<\dots<m(T'_{\sigma(2^d-1)}).
 \end{array}
 \end{equation}
 Thus, the children of $T$ and the children of $T'$ are in the same order with respect to their TM-index.
 \begin{proof}
 By ordering the children of $T$ and $T'$ with respect to their TM-indices,
 we obtain $\sigma$ and $\sigma'$ with
 \begin{equation}
 \begin{array}{c}
  m(T_{\sigma(0)})<m(T_{\sigma(1)})<\dots<m(T_{\sigma(2^d-1)}), \\[1ex]
  m(T'_{\sigma'(0)})<m(T'_{\sigma'(1)})<\dots<m(T'_{\sigma'(2^d-1)}).
 \end{array}
 \end{equation}
 These permutations are well-defined and unique with this property because different
 simplices of the same level never have the same TM-index; see Proposition
 \ref{prop:uniquetet}.
 It remains to show that $\sigma'=\sigma$.
Let $\ell=\ell(T)$ and $\ell'=\ell(T')$.
Since the TM-indices of the children of $T$ do all agree up to level $\ell$, we see,
using the notation from \eqref{eq:mortonoctal}, that their order $\sigma$ depends
only on the $d+1$ numbers ($z$ is omitted for $d=2$)
\begin{equation}
z_{\mathcal L-(\ell+1)}(T_i),\quad y_{\mathcal L-(\ell+1)}(T_i),\quad x_{\mathcal L-(\ell+1)}(T_i)\quad\textrm{and}\quad b_{\mathcal L-(\ell+1)}(T_i).
\end{equation}
The same argument applies to $\sigma'$ and $\ell'$.
From now on we carry out the computations for $d=3$.
Since $\type(T)=\type(T')$ we can write
\begin{equation}
 T = \lambda T' + \vec c,
\end{equation}
with
\begin{equation}
\lambda = 2^{\ell'-\ell},\quad\vec c = \begin{pmatrix}
                    x(T)-x(T')\\[0.6ex]
                    y(T)-y(T')\\[0.6ex]
                    z(T)-z(T')
                                  \end{pmatrix}.
\end{equation}
Since the refinement rules \eqref{eq:childnumbers} commute with scaling and translation we also obtain
\begin{equation}
 T_i = \lambda T'_i + \vec c
\end{equation}
for the children of $T$ and $T'$
and therefore
\begin{equation}
 b_{\mathcal L-(\ell+1)}(T_i)=\type(T_i)=\type(T'_i)=b_{\mathcal L-(\ell'+1)}(T'_i)
\end{equation}
for $0\leq i < 2^d$.
Furthermore, %
we have
\begin{equation}
x_{\mathcal L-(\ell+1)}(T_i) = (x(T_i)-x(T))2^{-\mathcal{L}+(\ell+1)}
\end{equation}
from which we derive
\begin{equation}
 \begin{array}{ccl}
  x_{\mathcal L-(\ell+1)}(T_i)&=& \lambda(x(T_i') - x(T')) 2^{-\mathcal{L}+(\ell+1)}\\[0.8ex]
  &=& 2^{\ell'-\ell} (x(T_i') - x(T'))
  2^{-\mathcal{L}+(\ell+1)} \\[0.8ex]
  &=& (x(T_i') - x(T')) 2^{-\mathcal{L}+(\ell'+1)} \\[0.8ex]
  &=& x_{\mathcal L-(\ell'+1)}(T'_i),
 \end{array}
\end{equation}
and analogously
\begin{equation}
\begin{array}{ccc}
 y_{\mathcal L-(\ell+1)}(T_i) &=&y_{\mathcal L-(\ell'+1)}(T'_i), \\[0.8ex]
 z_{\mathcal L-(\ell+1)}(T_i) &=&z_{\mathcal L-(\ell'+1)}(T'_i).
\end{array}
\end{equation}
This shows that the tetrahedral Morton order of the children of $T$ and $T'$ are the same and $\sigma'$ must equal $\sigma$.
\end{proof}
\end{proposition}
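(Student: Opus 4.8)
The plan is to reduce the comparison of the TM-indices of the children to the comparison of a single base-$2^d$ digit. First, order the children $T_0,\dots,T_{2^d-1}$ of $T$ by increasing TM-index to obtain a permutation $\sigma$ of $\{0,\dots,2^d-1\}$, and do the same for $T'$ to obtain $\sigma'$; both are well defined and unique because two simplices of equal level in $\mathcal T_d$ always have distinct TM-indices (Proposition~\ref{prop:uniquetet}). It then suffices to show $\sigma=\sigma'$. Write $\ell=\ell(T)$ and $\ell'=\ell(T')$. All children $T_i$ share the ancestors $T^1,\dots,T^\ell=T$ up to level $\ell$, so by the representation~\eqref{eq:mortonentries} the $2^d$-ary expansions of $m(T_0),\dots,m(T_{2^d-1})$ agree in every digit except the pair $\bigl(\cid(T_i),\type(T_i)\bigr)$ sitting at level $\ell+1$ (every later digit is $0$, since $\ell(T_i)=\ell+1$). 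Hence $\sigma$ is exactly the permutation that sorts the $2^d$ pairs $\bigl(\cid(T_i),\type(T_i)\bigr)$ in increasing order, where $\cid(T_i)$ is the cube-id of the subcube of $Q_T$ that contains $T_i$; the same description holds for $\sigma'$ with $Q_{T'}$ and level $\ell'+1$.

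The heart of the proof is then to observe that both of these quantities depend only on $b$ and $i$, not on $T$ or $T'$. For the type, this is immediate from Table~\ref{table:typesofchildren}: $\type(T_i)$ is a fixed function of $b$ and $i$. For the cube-id, this is the content of Property~\ref{property:commref} and Figure~\ref{fig:refineddiagram}: refining $T$ subdivides its associated cube $Q_T$ into $2^d$ subcubes, and which one the $i$-th child occupies is determined purely by $b=\type(T)$ and $i$, because every type-$b$ simplex is a scaled and shifted copy of the reference simplex $S_b$ (Definition~\ref{def:type}) and both Bey's refinement rule~\eqref{eq:childnumbers} and the underlying cube decomposition commute with scalings and translations. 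Consequently the indexed families $\bigl(\cid(T_i),\type(T_i)\bigr)_i$ and $\bigl(\cid(T'_i),\type(T'_i)\bigr)_i$ coincide, so the sorting permutation is the same and $\sigma'=\sigma$. Alternatively one can package this equivariance by writing $T=\lambda T'+\vec c$ with $\lambda=2^{\ell'-\ell}$ and $\vec c=\vec x_0(T)-\lambda\,\vec x_0(T')$, deducing $T_i=\lambda T'_i+\vec c$, and checking directly that the bit $x_{\mathcal L-(\ell+1)}(T_i)$ extracted from the anchor of $T_i$ equals $x_{\mathcal L-(\ell'+1)}(T'_i)$ (and likewise for the $y$- and $z$-bits), so that the decisive digit of $m(T_i)$ literally coincides with that of $m(T'_i)$.

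The only genuine obstacle is careful bookkeeping rather than a new idea: one must argue precisely that the TM-order of a set of siblings is governed by exactly one $2^d$-ary digit pair --- which requires invoking~\eqref{eq:mortonoctal}/\eqref{eq:mortonentries} together with the common-prefix observation --- and one must line up the digit position $\mathcal L-(\ell+1)$ on the $T$-side with $\mathcal L-(\ell'+1)$ on the $T'$-side through the scaling factor $2^{\ell'-\ell}$. Both the type statement (Table~\ref{table:typesofchildren}) and the commutation of Bey's refinement with scalings and translations are essentially definitional, so no deeper machinery is needed; in particular the embedding $\Phi$ of Proposition~\ref{prop:mapfromTtoQ} is not required here, although it offers yet another route, reducing everything to the plain $Z$-order of $6$-cubes.
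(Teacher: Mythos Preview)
Your proposal is correct and follows essentially the same approach as the paper: reduce the sibling order to the single level-$(\ell+1)$ digit and show that this digit depends only on $b$ and $i$. Your ``alternative'' route via $T=\lambda T'+\vec c$ and the explicit bit computation $x_{\mathcal L-(\ell+1)}(T_i)=x_{\mathcal L-(\ell'+1)}(T'_i)$ is in fact precisely the paper's proof; your lead version, packaging the coordinate bits into $\cid(T_i)$ and invoking Property~\ref{property:commref} and Table~\ref{table:typesofchildren} directly, is a mild conceptual streamlining of the same argument.
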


\begin{definition}
 \label{def:localindex}
 Let $T\in\mathcal T_d$ such that $T$'s parent $P$ has type $b$ and $T$ is the $i$th child of $P$
 according to Bey's order \eqref{eq:childnumbers},
  $0\leq i<2^d$. We call the number $\sigma_b(i)$ the \textbf{local index} of
 the $d$-simplex $T$ and use the notation
 \begin{equation}
   \label{eq:Ilocsigma}
I_\mathrm{loc}(T):=\sigma_b(i)
 \end{equation}
 to denote the child number in the TM-ordering, subsequently written TM-child.
 By definition, the local index of the root simplex is zero, $I_\mathrm{loc}(T^0_d):=0$.
 Table \ref{table:BeytoIndex} lists the local indices for each parent type.
\end{definition}
Thus, we know for each type $0\leq b<d!$ how the children of a tetrahedron of type $b$ are traversed.
This gives us an approach for describing the SFC arising from the TM-index in a recursive fashion \cite{HaverkortWalderveen10}.
By specifying for each possible type $b$ the order and types of the children of a type $b$ simplex, we can build up the SFC.
In Figure \ref{fig:haverkortSFC} %
we describe the SFC for triangles in this way.
In three dimensions it is not convenient to draw the six pictures for the different types, but
the SFC can be derived similarly from Tables \ref{table:typesofchildren} and \ref{table:BeytoIndex}.
\begin{table}
\centering
\raisebox{6ex}{
\begin{tabular}{|rc|cccl|}
\hline
\multicolumn{2}{|c|}{\mytabvspace$I_\mathrm{loc}$}&\multicolumn{4}{c|}{Child}\\
 \multicolumn{2}{|c|}{2D}  &\mytabvspace  $T_0$ & $T_1$ & $T_2$ & $T_3$ \\[0.2ex]\hline
 \multirow{2}{*}{b}&\mytabvspace0 & 0 & 1 & 3 & 2    \\[0.2ex]
 &1 & 0 & 2 & 3 & 1   \\ \hline
\end{tabular}
}
\begin{tabular}{|rc|cccccccl|}
\hline
\multicolumn{2}{|c|}{\mytabvspace$I_\mathrm{loc}$}&\multicolumn{8}{c|}{Child}\\
\multicolumn{2}{|c|}{3D}&
\mytabvspace   $T_0$ & $T_1$ & $T_2$ & $T_3$ & $T_4$ & $T_5$ & $T_6$ & $T_7$\\[0.2ex]\hline
 \multirow{6}{*}{b}&\mytabvspace0 & 0 & 1 & 4 & 7 & 2 & 3 & 6 & 5   \\[0.2ex]
 &1 & 0 & 1 & 5 & 7 & 2 & 3 & 6 & 4   \\[0.2ex]
 &2 & 0 & 3 & 4 & 7 & 1 & 2 & 6 & 5   \\[0.2ex]
 &3 & 0 & 1 & 6 & 7 & 2 & 3 & 4 & 5   \\[0.2ex]
 &4 & 0 & 3 & 5 & 7 & 1 & 2 & 4 & 6   \\[0.2ex]
 &5 & 0 & 3 & 6 & 7 & 2 & 1 & 4 & 5   \\ \hline
\end{tabular}
\caption[The local index of the children of a simplex]
  {The local index of the children of a $d$-simplex $T$ according to the
  TM-ordering.  For each type $b$, the $2^d$ children $T_0,\dots,T_{2^d-1}$ of a
  simplex of this type can be ordered according to their TM-indices.  The
  position of $T_i$ according to the TM-order is the local index
  $I_\mathrm{loc}(T_i) = \sigma_b (i)$.}
\label{table:BeytoIndex}
\end{table}

\begin{figure}
   \center
   \def\svgwidth{0.9\textwidth}
   \includegraphics{./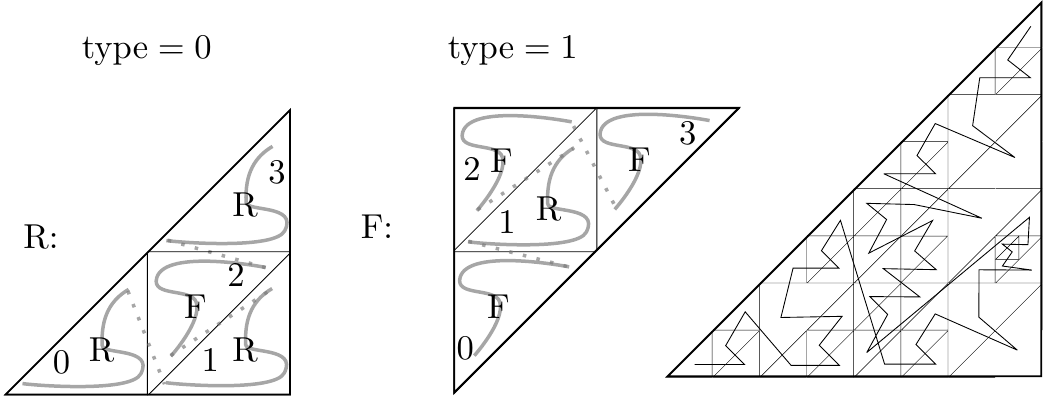}
    \caption[A recursive description of the TM curve]
  {Left: Using the notation from \cite{HaverkortWalderveen10} we
   recursively describe the SFC arising from the TM-index for triangles. The
   number inside each child triangle
   is its local index.  $R$ denotes the refinement scheme for type 0
   triangles and $F$ for type 1 triangles.  This pattern can be obtained from
   Tables \ref{table:typesofchildren} and \ref{table:BeytoIndex}.  Right: The SFC
   for an example adaptive refinement of the root triangle.}
   \figlabel{fig:haverkortSFC}
\end{figure}

\section{Low-level algorithms on simplices}
\label{sec:algos}
In this section we present fundamental algorithms that operate on $d$-simplices in $\mathcal{T}_d$.
These algorithms include computations of parent and child simplices, computation of face-neighbors and computations
involved with the TM-index.
To simplify the notation we carry out all algorithms for tetrahedra and then describe how to modify them for triangles.
We introduce the data type \Tet{}
and do not distinguish between the abstract concept
of a \Tet{} and the geometric object (tetrahedron or triangle) that it represents.
The data type \Tet{} $T$ has the following members:
\begin{itemize}
 \item $T.\ell$ --- the refinement level of $T$;
 \item $T.\vec{x}=(T.x,T.y,T.z)$ --- the $x$- ,$y$- and $z$-coordinates of $T$'s anchor node,
    also sometimes referred to as $T.x_0$, $T.x_1$, and $T.x_2$;
 \item $T.b$ --- the type of $T$.
 \end{itemize}
 In 2D computations the parameter $T.z$ is not present.
 To avoid confusion we use the notation $\vec{x}$ to denote vectors in $\IZ^d$ and $x$ (without arrow) for
 integers, thus numbers in $\IZ$.
From Corollary \ref{cor:tetidunique} we know that the values stored in a \Tet{}
suffice to uniquely identify a $d$-simplex $T\abst{\in}\mathcal{T}$.
\begin{remark}[Storage requirement]
The algorithms that we present in this section only need this data as input for
a simplex resulting in a fixed storage size per \Tet.
If, for example, the maximum level $\mathcal{L}$ is 32 or less, then the
coordinates can be stored in one $4$-byte integer per dimension, while the
level and type occupy one byte each, leading to a total storage of
\begin{equation}
\begin{array}{cccr}
2 \times 4+1+1 &=& 10 &\textrm{bytes per \Tet{} in 2D,}\\
3 \times 4+1+1 &=& 14 &\textrm{bytes per \Tet{} in 3D.}
\end{array}
\end{equation}
\end{remark}
\begin{remark}[Runtime]
Most of these algorithms run in constant time independent of the maximum level $\mathcal L$.
The only operations
using a loop over the level
$\mathcal L$ or $T.\ell$, thus having $\mathcal{O}(\mathcal{L})$ runtime, are
computing the consecutive index from a \Tet{} and initializing a \Tet{}
according to a given consecutive index.
Hence, we show how to replace repetitive calls of these relatively
involved algorithms by more efficient constant-time ones.
\end{remark}

\subsection{The coordinates of a $d$-simplex}
The coordinates of the $d+1$ nodes of a $d$-simplex $T$
can be obtained easily from its Tet-id, the relation \eqref{eq:coordsofSb}, and simple
arithmetic:
 If $T$ is a $d$-simplex of level $\ell$, type $b$ and anchor node $\vec{x}_0\in\IZ^d$, then
 \begin{equation}
 \label{eq:TcoordfromS}
  T = 2^{\mathcal{L}-\ell}S_b + \vec{x}_0.
 \end{equation}
Hence, in order to compute the coordinates of $T$ we can take the coordinates of $S_b$, as given in \eqref{eq:coordsofSb},
and then use relation \eqref{eq:TcoordfromS}.
A closer look at \eqref{eq:coordsofSb} reveals that it is not necessary to examine all coordinates of $S_b$
in order to compute the $x_i$, but that they can also be computed arithmetically.
This computation is carried out in Algorithm \ref{alg:coordsofT}.

\begin{algorithm}
\caption{\texttt{Coordinates}(\aTet{} $T$)}
\label{alg:coordsofT}
\DontPrintSemicolon
\algoresult{Array of coordinates of all of $T$`s vertices.}\;
$X\gets (T.\vec{x},0,0,0)$\;
$h\gets 2^{\mathcal{L}-\ell}$\;
$i\gets \lfloor\frac{T.b}{2}\rfloor$\Comment{Replace with $i\gets T.b$ for 2D}
\eIf{$T.b\algomod 2 = 0$}
{$j\gets (i+2) \algomod 3$}
{$j\gets (i+1) \algomod 3$}
$X[1]\gets X[0]+he_i$\;
$X[2]\gets X[1]+he_j$\Comment{Replace with $X[2]\gets X[0] + (h,h)^T$ for 2D}
$X[3]\gets X[0]+(h,h,h)^T$\Comment{Remove this line for 2D}
\Return $X$\;
\end{algorithm}

\subsection{Parent and child}
\label{sec:Parent}
In this section we describe how to compute the Tet-ids of the parent $P(T)$ and
of the $2^d$ children $T_i$, $0\leq i<2^d$, of a given $d$-simplex $T\in\mathcal
T_d$.  Computing the anchor node coordinates of the parent is easy, since their
first $T.\ell-1$ bits correspond to
the coordinates of $T$'s anchor node and the rest of their bits is zero.
For computing the type of $P(T)$, we need the function
\begin{equation}
\label{eq:Pt}
 \begin{array}{rcl}
\textrm{Pt}\colon \set{0,\ldots,2^d-1}\times\set{0,\ldots,d!-1} &\longrightarrow &\set{0,\ldots,d!-1} ,\\[.5ex]
(\textrm{cube-id}(T),T.b)&\longmapsto & P.b,
\end{array}
\end{equation}
giving the type of $T$'s parent in dependence of its cube-id and type.
In Figure \ref{fig:parenttype} we list all values of this function for $d\in\set{2, 3}$.
\begin{figure}
\flushleft
\raisebox{5.6ex}{
\begin{minipage}{0.21\textwidth}
\center
\scalebox{0.9}{
\begin{tabular}{|rc|cc|}
 \hline
 \multicolumn{2}{|c|}{\mytabvspace $\mathrm{Pt}(c,b)$} &\multicolumn{2}{c|}{b}\\
 \multicolumn{2}{|c|}{2D}
 &\mytabvspace  0 & 1 \\[0.5ex]\hline
 \multirow{4}{*}{$c$}&\mytabvspace0& 0 & 1   \\[0.2ex]
 &1& 0 & 0   \\[0.2ex]
 &2& 1 & 1   \\[0.2ex]
 &3& 0 & 1   \\[0.2ex]\hline
\end{tabular} %
}
\end{minipage}
}
\begin{minipage}{0.41\textwidth}
\center
\scalebox{0.9}{
\begin{tabular}{|rc|cccccc|}
 \hline
 \multicolumn{2}{|c|}{\mytabvspace $\mathrm{Pt}(c,b)$} &\multicolumn{6}{c|}{b}\\
 \multicolumn{2}{|c|}{3D}&
 \mytabvspace
   0 & 1 & 2 & 3 & 4 & 5 \\[0.5ex]\hline
 \multirow{8}{*}{$c$}&\mytabvspace0& 0 & 1 & 2 & 3 & 4 & 5  \\[0.2ex]
 &1& 0 & 1 & 1 & 1 & 0 & 0  \\[0.2ex]
 &2& 2 & 2 & 2 & 3 & 3 & 3   \\[0.2ex]
 &3& 1 & 1 & 2 &  \cfbox{Magenta}{\color{blue}2}& 2 & 1   \\[0.2ex]
 &4& 5 & 5 & 4 & 4 & 4 & 5   \\[0.2ex]
 &5& 0 & 0 & 0 & 5 & \cfbox{Dandelion}{\color{PineGreen}5} & 5   \\[0.2ex]
 &6& 4 & 3 & 3 & 3 & 4 & 4  \\[0.2ex]
 &7& 0 & 1 & 2 & 3 & 4 & 5 \\ \hline
\end{tabular} %
}

\end{minipage}
\begin{minipage}{0.34\textwidth}
   \def\svgwidth{0.9\textwidth}
\begingroup%
  \makeatletter%
  \providecommand\color[2][]{%
    \errmessage{(Inkscape) Color is used for the text in Inkscape, but the package 'color.sty' is not loaded}%
    \renewcommand\color[2][]{}%
  }%
  \providecommand\transparent[1]{%
    \errmessage{(Inkscape) Transparency is used (non-zero) for the text in Inkscape, but the package 'transparent.sty' is not loaded}%
    \renewcommand\transparent[1]{}%
  }%
  \providecommand\rotatebox[2]{#2}%
  \ifx\svgwidth\undefined%
    \setlength{\unitlength}{499.24575195bp}%
    \ifx\svgscale\undefined%
      \relax%
    \else%
      \setlength{\unitlength}{\unitlength * \real{\svgscale}}%
    \fi%
  \else%
    \setlength{\unitlength}{\svgwidth}%
  \fi%
  \global\let\svgwidth\undefined%
  \global\let\svgscale\undefined%
  \makeatother%
  \begin{picture}(1,0.88477599)%
    \put(0,0){\includegraphics[width=\unitlength,page=1]{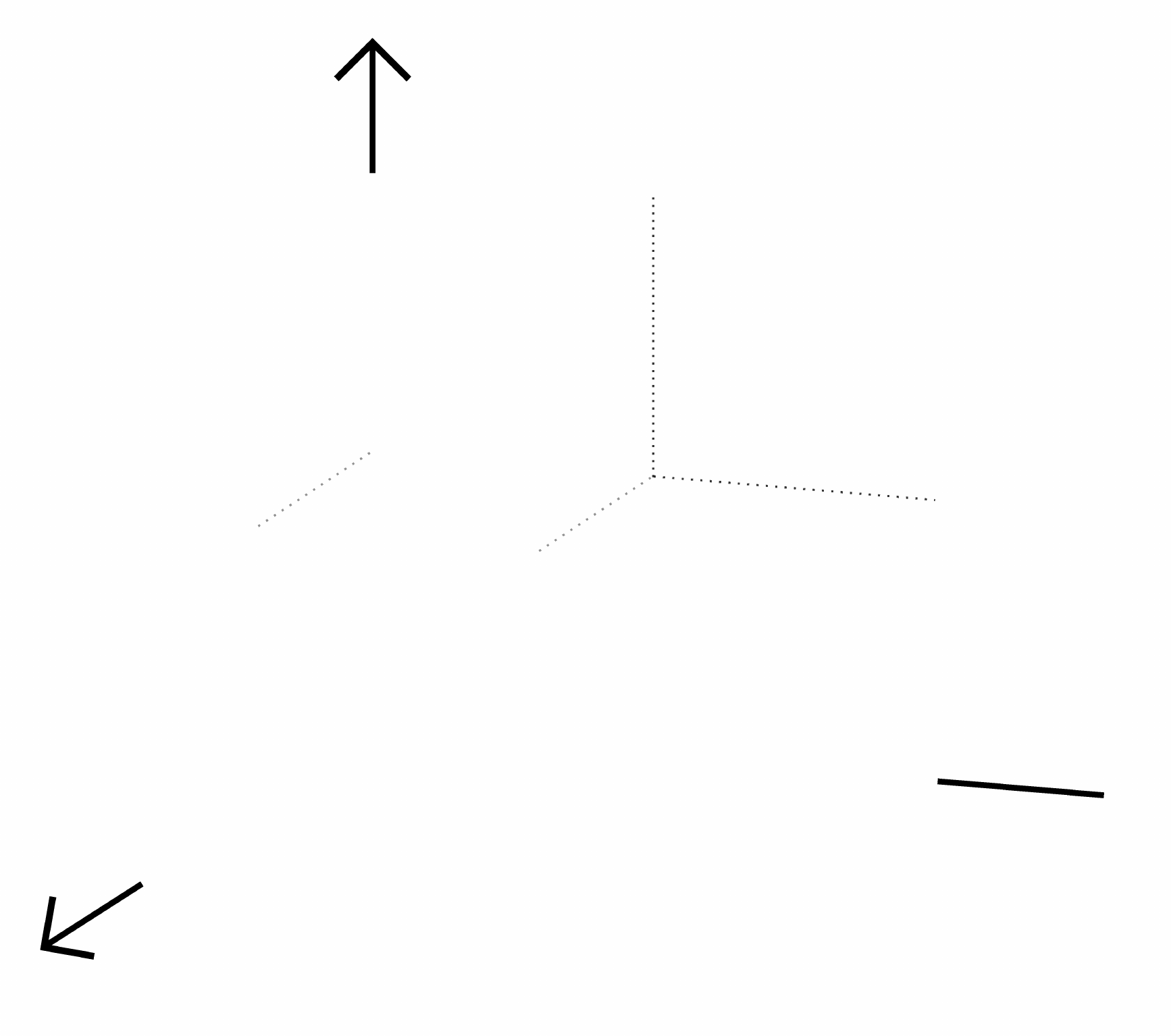}}%
    \put(0.38064754,0.80248984){\color[rgb]{0,0,0}\makebox(0,0)[lb]{\smash{$Z$}}}%
    \put(0.8682292,0.11017016){\color[rgb]{0,0,0}\makebox(0,0)[lb]{\smash{$Y$}}}%
    \put(0.08489371,0.02018408){\color[rgb]{0,0,0}\makebox(0,0)[lb]{\smash{$X$}}}%
    \put(0,0){\includegraphics[width=\unitlength,page=2]{fatheridexample_tex.pdf}}%
  \end{picture}%
\endgroup%
 \end{minipage}
   \caption[The type of the parent of a simplex]
    {The type of the parent of a $d$-simplex $T$ can be determined from $T$'s
    cube-id $c$ and type $b$.  Left: The values of $Pt$ from \eqref{eq:Pt} in the
    2D case.  Middle: The values of the function $Pt$ in the 3D case.  Right: Two
    examples showing the computation in 3D. (1) The small tetrahedron in the top
    left corner (orange) has cube-id $5$ and type $4$, and its parent (green) can
    be seen to have type $5$.  (2) The small tetrahedron at the bottom right (red)
    has cube-id $3$ and type $3$, and its parent (blue) has type $2$.}
   \figlabel{fig:parenttype}
\end{figure}

\begin{algorithm}
 \caption{\texttt{c-id}(\aTet{} $T$,\Int $\ell$)%
}
 \label{alg:cube-id}
 \DontPrintSemicolon
 \algoresult{The cube-id of $T$.}\;
 $i\gets0$, $h\gets 2^{\mathcal{L}-\ell}$\;
 $i\aorgl (T.x\bitwand h) \abst{?} 1 \colon 0$\;
 $i\aorgl (T.y\bitwand h) \abst{?} 2 \colon 0$\;
 $i\aorgl (T.z\bitwand h) \abst{?} 4 \colon 0$\Comment{Remove this line for 2D}
 \Return $i$
\end{algorithm}

\begin{algorithm}
 \caption{\texttt{Parent}(\aTet{} $T$)}
 \label{alg:parent}
 \DontPrintSemicolon
 \algoresult{The parent $P$ of $T$.}\;
 $h\gets 2^{\mathcal{L}-T.\ell}$\;
 $P.\ell\gets T.\ell-1$\;
 $P.x\gets T.x \bitwand \neg h$\;
 $P.y\gets T.y \bitwand \neg h$\;
 $P.z\gets T.z \bitwand \neg h$\Comment{Remove this line for 2D}
 $P.b\gets \mathrm{Pt}(\texttt{c-id}(T,T.\ell),T.b)$\Comment{See \eqref{eq:Pt} and Figure \ref{fig:parenttype} for Pt}
 \Return $P$\;
\end{algorithm}
The algorithm \texttt{Parent} to compute the parent of $T$ now puts these two ideas together, computing the coor\-di\-nates and type of $P(T)$.
Algorithm \ref{alg:parent} shows an implementation. It uses Algorithm \ref{alg:cube-id} to compute the cube-id of a $d$-simplex.

For the computation of one child $T_i$ of $T$ for a given $i\in\set{0,\dots,2^d-1}$
we look at Bey's definition of the subsimplices in \eqref{eq:childnumbers} and see that in order to compute the anchor node
of the child we need to know some of the node coordinates $\vec x_0,\dots,\vec x_d$ of the parent simplex $T$.
These can be obtained via Algorithm \ref{alg:coordsofT}.
However, it is more efficient to compute only those coordinates
of $T$ that are actually necessary.
To compute the Tet-id of $T_i$ we also need to know its type.
The type of $T_i$ depends only on the type of $T$, and in the algorithm
we use the function \texttt{Ct} (children type) to compute this type.
\texttt{Ct} is effectively an evaluation of Table \ref{table:typesofchildren}.
Algorithm \ref{alg:children} shows now how to compute the coordinates of the $i$th child of $T$ in Bey's order.

When we would like to compute the $i$th child of a $d$-simplex $T$ of type $b$ with respect to the tetrahedral Morton order
(thus the child $T_k$ of $T$ with $I_\mathrm{loc}(T_k)=i$) we just call Algorithm \ref{alg:children}
with $\sigma_b^{-1}(i)$ as input.
The permutations $\sigma_b^{-1}$ are available from Table \ref{table:BeytoIndex};
see \eqref{eq:Ilocsigma} and Algorithm \ref{alg:TMchild}.

\begin{algorithm}
 \caption{\texttt{Child}(\aTet{} $T$,\Int $i$)}
 \label{alg:children}
 \DontPrintSemicolon
 \algoresult{The $i$-th child (in Bey's order) $T_i$ of $T$.}\;
 $X\gets$ \texttt{Coordinates}($T$)\;
 \lIf{$i=0$}{$j\gets0$}

 \lElseIf{$i\in\set{1,4,5}$}{$j\gets1$}

 \lElseIf{$i\in\set{2,6,7}$}{$j\gets2$}

 \lElseIf(\IfComment{If $i={3}$\textbf{ then }$j\gets 1$ for 2D}){$i={3}$}{$j\gets3$}
 $T_i.\vec{x}\gets\frac{1}{2}(X[0]+X[j])$\;
 $T_i.b\gets$ \texttt{Ct}$(T.b,i)$\Comment{See Table \ref{table:typesofchildren}}
\end{algorithm}

\begin{algorithm}
  \caption{\texttt{TM-Child}(\aTet{} $T$,\Int $i$)}
  \label{alg:TMchild}
  \DontPrintSemicolon
  \algoresult{The $i$-th child of $T$ in TM-order.}\;
  \Return \texttt{Child} ($T$, $\sigma_{T.b}^{-1} (i)$)
  \Comment{See Table~\ref{table:BeytoIndex}}
\end{algorithm}

\subsection{Neighbor simplices}

Many applications---e.g., finite element me\-thods---need to gather information about the face-neighboring simplices of
a given simplex in a refinement.
In this section we describe a level-independent constant-runtime algorithm to compute the
Tet-id of the same level neighbor along a specific face $f$ of a given $d$-simplex $T$.
This algorithm is very lightweight since it only requires a few arithmetic
computations involving the Tet-id of $T$ and the number $f$.
In comparison to other approaches to finding neighbors in constant time
\cite{LeeDeSamet01,BrixMassjungVoss11},
our algorithm does not involve the computation of any of $T$'s ancestors.

The $d+1$ faces of a $d$-simplex $T=[\vec x_0,\dots,\vec x_d]$ are numbered  $f_0,\dots,f_d$ in such a way that face $f_i$ is the face
not containing the node $\vec x_i$.
To examine the situation where two $d$-simplices of the same level share a common face,
let $\mathcal{T}^\ell_d$ denote a uniform refinement of $T^0_d$ of a given level $0\leq \ell \leq \mathcal{L}$,
\begin{equation}
 \mathcal{T}_d^\ell\, {:=} \, \set{T \abst{|} T \textnormal{ is a descendant of } T^0_d
 \textnormal{ of level } \ell}\abst{\subset} \mathcal{T}_d.
\end{equation}
$\mathcal{T}_d$ can be seen as the disjoint union of all the $\mathcal T^\ell_d$:
\begin{equation}
 \mathcal T_d = \bigcup_{\ell=0}^{\mathcal L} \mathcal T^\ell_d.
\end{equation}
Given a $d$-simplex $T\abst{\in} \mathcal{T}^\ell_d$ and a face number
$i\abst{\in}\set{0,\dots,d}$, denote $T$'s neighbor in $\mathcal{T}^\ell_d$
across face $f=f_i$ by $\mathcal{N}_{f}(T)$, and denote the face number of the
neighbor simplex $\mathcal{N}_f(T)$ across which $T$ is its neighbor by
$\tilde{f}(T)$. Hence, the relation
\begin{equation}
\mathcal{N}_{\tilde f(T)}(\mathcal N_f(T))=T
\end{equation}
holds for each face $f$ of $T$.

Our aim is to compute the Tet-id of $\mathcal{N}_f(T)$ and $\tilde{f}(T)$
from the Tet-id of $T$.
Using the underlying cube structure this problem can be solved for each occuring type of $d$-simplex separately,
and the solution scheme
is independent of the coordinates of $T$ and of $\ell$.
In Figure \ref{fig:neighbor3} the situation for a tetrahedron of type $5$ is illustrated, and in Tables \ref{tab:faceneighbor2d}
and \ref{tab:faceneighbor}
we present the general solution for each type.

\begin{figure}
   \def\svgwidth{0.40\textwidth}
   \center
   \includegraphics{./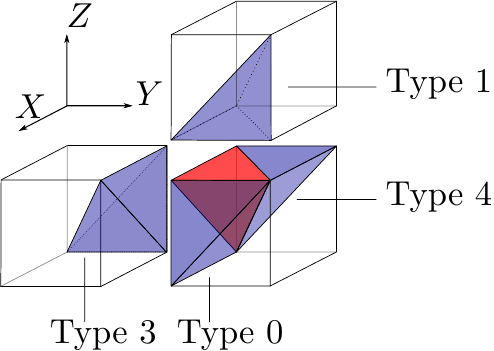}
    \caption[The face-neighbors of a tetrahedron]
    {A tetrahedron $T$ of type $5$ (in the middle, red) and its four
    face-neighbors (blue) of types $1,0,4$, and $3$, drawn with their associated
    cubes (exploded view). We see that the type of
    $T$'s neighbors depends only on its type, while their node coordinates
    relative to $T$'s depend additionally on $T$'s level.
}\figlabel{fig:neighbor3}
\end{figure}
Using these tables, a constant-time computation of the Tet-id of $\mathcal{N}_f(T)$ and of $\tilde{f}(T)$
from the Tet-id of $T$ is possible, and the 3D case
 is carried out in Algorithm~\ref{alg:face-neighbor}.
Note that this algorithm uses arithmetic expressions in $T.b$ to avoid the sixfold distinction of cases.

\begin{remark}
To find existing neighbors in a nonuniform refinement we use
Algorithm \ref{alg:face-neighbor} in combination with \texttt{Parent}
and \texttt{TM-Child} and comparison functions.

Of course it is possible that $\mathcal{N}_f(T)$ does not belong to $\mathcal T_d^\ell$ any longer.
If this is the case, then $f$ was part of the boundary of the root simplex $T^0_d$.
We describe in the next section how
we can decide in constant time whether $\mathcal{N}_f(T)$ is in $T^\ell_d$ or not.
\end{remark}

\begin{table}
\begin{center}
\begin{tabular}{|c|c|ccc|}
\hline
 \mytabvspace
 \mytabvspace
 2D & $f$              & 0 & 1 & 2   \\[0.2ex]\hhline{|=|=|===|}
\multirow{5}{*}{$T.b=0$} & \mytabvspace $\mathcal{N}_f.b$ & 1 & 1 & 1  \\[0.2ex]
& $\mathcal{N}_f.x$ & $T.x+h$ & $T.x$ & $T.x$\\[0.2ex]
& $\mathcal{N}_f.y$ & $T.y$ & $T.y$ & $T.y-h$\\[0.2ex]
& $\tilde{f}$      & 2 & 1 & 0       \\[0.2ex]\hline

\multirow{5}{*}{$T.b=1$} & \mytabvspace $\mathcal{N}_f.b$ & 0 & 0 & 0   \\[0.2ex]
 &$\mathcal{N}_f.x$ & $T.x$ & $T.x$ & $T.x-h$ \\[0.2ex]
 &$\mathcal{N}_f.y$ & $T.y+h$ & $T.y$ & $T.y$\\[0.2ex]
 &$\tilde{f}$      & 2 & 1 & 0       \\[0.2ex]\hline
\end{tabular}
\end{center}
\caption[Face neighbors for triangles]
 {Face neighbors in 2D. For each possible type $b\in\set{0,1}$ of a
  triangle $T$ and each of its faces $f=f_i$, $i\in\set{0,1,2}$, the type, anchor
  node coordinates, and corresponding face number $\tilde f$ of $T$'s neighbor
  across $f$ are shown. In the 2D case we can directly compute $\mathcal N.b =
  1-T.b$ and $\tilde f = 2-f$.  Here, $h=2^{\mathcal{L}-\ell}$ refers to the
  length of one square of level $\ell$.}
\label{tab:faceneighbor2d}
\end{table}

\begin{table}
\resizebox{0.5\textwidth}{!}{
\begin{tabular}{|c|c|cccl|}
\hline
 \mytabvspace
 \mytabvspace
 3D & $f$              & 0 & 1 & 2 & 3  \\[0.2ex]\hhline{|=|=|====|}
\multirow{5}{*}{$T.b=0$} & \mytabvspace $\mathcal{N}_f.b$ & 4 & 5 & 1 & 2  \\[0.2ex]
& $\mathcal{N}_f.x$ & $T.x+h$ & $T.x$ & $T.x$ & $T.x$\\[0.2ex]
& $\mathcal{N}_f.y$ & $T.y$ & $T.y$ & $T.y$ & $T.y-h$\\[0.2ex]
& $\mathcal{N}_f.z$ & $T.z$ & $T.z$ & $T.z$ & $T.z$\\[0.2ex]
& $\tilde{f}$      & 3 & 1 & 2 & 0       \\[0.2ex]\hline

\multirow{5}{*}{$T.b=1$} & \mytabvspace $\mathcal{N}_f.b$ & 3 & 2 & 0 & 5  \\[0.2ex]
 &$\mathcal{N}_f.x$ & $T.x+h$ & $T.x$ & $T.x$ & $T.x$\\[0.2ex]
 &$\mathcal{N}_f.y$ & $T.y$ & $T.y$ & $T.y$ & $T.y$\\[0.2ex]
 &$\mathcal{N}_f.z$ & $T.z$ & $T.z$ & $T.z$ & $T.z-h$\\[0.2ex]
 &$\tilde{f}$      & 3 & 1 & 2 & 0       \\[0.2ex]\hline

\multirow{5}{*}{$T.b=2$} & \mytabvspace$\mathcal{N}_f.b$ & 0 & 1 & 3 & 4  \\[0.2ex]
 &$\mathcal{N}_f.x$ & $T.x$ & $T.x$ & $T.x$ & $T.x$\\[0.2ex]
 &$\mathcal{N}_f.y$ & $T.y+h$ & $T.y$ & $T.y$ & $T.y$\\[0.2ex]
 &$\mathcal{N}_f.z$ & $T.z$ & $T.z$ & $T.z$ & $T.z-h$\\[0.2ex]
 &$\tilde{f}$      & 3 & 1 & 2 & 0       \\[0.2ex]\hline
 \end{tabular}
 }
 \hspace{-2ex}
\resizebox{0.5\textwidth}{!}{
 \begin{tabular}{|c|c|cccl|}
\hline
 \mytabvspace
 \mytabvspace
 3D & $f$              & 0 & 1 & 2 & 3  \\[0.2ex]\hhline{|=|=|====|}
\multirow{5}{*}{$T.b=3$} & \mytabvspace$\mathcal{N}_f.b$ & 5 & 4 & 2 & 1  \\[0.2ex]
 &$\mathcal{N}_f.x$ & $T.x$ & $T.x$ & $T.x$ & $T.x-h$\\[0.2ex]
 &$\mathcal{N}_f.y$ & $T.y+h$ & $T.y$ & $T.y$ & $T.y$\\[0.2ex]
 &$\mathcal{N}_f.z$ & $T.z$ & $T.z$ & $T.z$ & $T.z$\\[0.2ex]
 &$\tilde{f}$      & 3 & 1 & 2 & 0       \\[0.2ex]\hline

\multirow{5}{*}{$T.b=4$} & \mytabvspace$\mathcal{N}_f.b$ & 2 & 3 & 5 & 0  \\[0.2ex]
 &$\mathcal{N}_f.x$ & $T.x$ & $T.x$ & $T.x$ & $T.x-h$\\[0.2ex]
 &$\mathcal{N}_f.y$ & $T.y$ & $T.y$ & $T.y$ & $T.y$\\[0.2ex]
 &$\mathcal{N}_f.z$ & $T.z+h$ & $T.z$ & $T.z$ & $T.z$\\[0.2ex]
 &$\tilde{f}$      & 3 & 1 & 2 & 0       \\[0.2ex]\hline

\multirow{5}{*}{$T.b=5$} & \mytabvspace$\mathcal{N}_f.b$ & 1 & 0 & 4 & 3  \\[0.2ex]
 &$\mathcal{N}_f.x$ & $T.x$ & $T.x$ & $T.x$ & $T.x$\\[0.2ex]
 &$\mathcal{N}_f.y$ & $T.y$ & $T.y$ & $T.y$ & $T.y-h$\\[0.2ex]
 &$\mathcal{N}_f.z$ & $T.z+h$ & $T.z$ & $T.z$ & $T.z$\\[0.2ex]
 &$\tilde{f}$      & 3 & 1 & 2 & 0       \\[0.2ex]\hline
\end{tabular}
}
\caption[Face neighbors for tetrahedra]
  {Face neighbors in 3D. For each possible type $b\in\set{0,1,2,3,4,5}$ of a
  tetrahedron $T$ and each of its faces $f=f_i$, $i \in\set{0,1,2,3}$ the type
  $\mathcal{N}_f(T).b$ of $T$'s neighbor across $f$, its coordinates of the
  anchor node $\mathcal{N}_f(T).x$, $\mathcal{N}_f(T).y$, $\mathcal{N}_f(T).z$
  and the corresponding face number $\tilde{f}(T)$, across which $T$ is
  $\mathcal{N}_f(T)$'s neighbor, are shown.} 
\label{tab:faceneighbor}
\end{table}

\begin{algorithm}
 \caption{\texttt{Face-neighbor}(\aTet{} $T$,\Int $f$)}
 \label{alg:face-neighbor}
 \DontPrintSemicolon
 \algoresult{The face-neighbor $\mathcal N_f(T)$ of $T$ across $f$
and the corresponding face $\tilde f$.}\;
 $b\gets T.b$,
 $x_0\gets T.x_0$, $x_1\gets T.x_1$, $x_2\gets T.x_2$\;
 \eIf{$f=1$ or $f=2$}{
    $\tilde{f}\gets f$
    \eIf{$(b\algomod2=0 \textnormal{\algoand} f=2)\textnormal\algor(b\algomod2\neq0 \textnormal\algoand f=1)$}{
      $b\gets b+1$
    }
    {
      $b\gets b-1$
    }
 }
 {%
  $\tilde{f}\gets 3-f$

  $h\gets 2^{\mathcal{L}-T.\ell}$

  \eIf(\IfComment{$f=0$}){$f=0$}{%
    $i\gets b\textnormal{\textbf{ div }}2$\;
    $x_i\gets T.x_i+h$\;
    $b\gets b + (b\algomod2=1\abst{?}2:4)$\;
  }%
  (\IfComment{$f=3$}){%
    $i\gets (b+3)\algomod6 \textnormal{\textbf{ div }}2$\;
    $x_i \gets T.x_i - h$\;
    $b\gets b + (b\algomod2=0\abst{?}2:4)$\;
  }%
  }%
$N.\vec{x}\gets (x_0,x_1,x_2)$\;
$N.\ell\gets T.\ell$\;
$N.b\gets b\algomod 6$\;
\Return $(N, \tilde{f})$\;
\end{algorithm}

For completeness, we summarize the geometry and numbers of
$d$-simplices tou\-ching each other via a corner ($d=2$ or $d=3$) or edge (only
$d=3$).
In this chapter we do not list these neighboring tetrahedra in detail.

For $d=3$ each corner in the mesh $\mathcal T^\ell_3$ has 24 adjacent tetrahedra;
thus each tetrahedron has at each corner 23 other tetrahedra that share
this particular corner.
For $d=2$ the situation is similar, with six triangles meeting at each corner.
To examine the number of adjacent tetrahedra to an edge
we distinguish three types of edges in $\mathcal T^\ell_d$:
\begin{enumerate}
 \item edges that are also edges in the underlying hexahedral mesh;
 \item edges that are the diagonal of a side of a cube in the hexahedral mesh;
 \item edges that correspond to the inner diagonal of a cube in the hexahedral mesh.
\end{enumerate}
Edges of the first and third kind have six adjacent tetrahedra each,
and edges of the second kind do have four adjacent tetrahedra each.

\subsection{The exterior of the root simplex}
\label{sec:outside}
When computing neighboring $d$-simplices it is possible that the neighbor in question does not belong to the root simplex
$T^0_d$ but lies outside of it.
If we look at face-neighbors of a $d$-sim\-plex $T$, the fact that the considered neighbor lies outside
means that the respective face was on the boundary of $T^0_d$.
In order to check whether a computed $d$-simplex is outside the base simplex, we investigate
a more general problem:
Given anchor node coordinates $(x_0,y_0)^T\in\IZ^2$, respectively $(x_0,y_0,z_0)^T\in\IZ^3$, of type $b$ a level $\ell$,
decide whether the corresponding $d$-simplex $N$ lies inside or outside
of the root tetrahedron $T^0_d$: $N\in\mathcal{T}^\ell_d$ or $N\notin\mathcal{T}^\ell_d$.
At the end of this section we furthermore generalize this to the problem of deciding for any two $d$-simplices $N$ and $T$
whether or not $N$ lies outside of $T$.
We solve this problem in constant time and independent of the levels of $N$ and $T$.

We examine the 3D case.
Looking at $T^0_3$ we observe that two of its boundary faces correspond to faces of the root cube,
namely, the intersections of $T^0_3$ with the $y=0$ and the $x=2^\mathcal{L}$ planes.
The other two boundary faces of $T^0_3$ are the intersections with the $x=z$ and the $y=z$ planes.
Thus, the boundary of $T^0_3$ can be described as the intersection of $T^0_3$ with those planes.
We refer to the latter two planes as $E_1$ and $E_2$.

Let $N$ be a tetrahedron with anchor node $(x_0,y_0,z_0)^T\in\IZ^3$ of type $b$ and level $\ell$
and denote with $(x_i,y_i,z_i)^T$ the coordinates of node $i$ of $N$.
Since $(x_i,y_i,z_i)^T\geq(x_0,y_0,z_0)^T$
(componentwise), we directly conclude that if $x_0\geq 2^\mathcal L$ or $y_0<0$ then $N\notin\mathcal T_3$.
Because the outer normal vectors of $T^0_3$ on the two faces intersecting with $E_1$ and $E_2$ are
\begin{equation}
\vec{n}_1= \frac{1}{\sqrt 2}\begin{pmatrix}
  -1\\0\\1
 \end{pmatrix} \quad\textrm{and}\quad
\vec{n}_2= \frac{1}{\sqrt 2}\begin{pmatrix}
  0\\1\\-1
 \end{pmatrix},
\end{equation}
we also conclude that $N\notin\mathcal T_3$ if $z_0-x_0>0$ or $y_0-z_0>0$.
Now we have already covered all the cases except those where the anchor node of $N$ lies directly in $E_1$ or $E_2$.
In these cases we cannot solve the problem by looking at the coordinates of the anchor node alone, since there
exist tetrahedra $T'\in\mathcal{T}_3$ with anchor nodes lying in one of these planes
(see Figure \ref{fig:neighboroutside2d} for an illustration of the analogous case in 2D).
This depends on the type of the tetrahedron in question.
We observe that a tetrahedron $T'\in\mathcal{T}_3$ with anchor node lying in $E_1$ can have the types $0$ ,$1$, or $2$, and a tetrahedron
with anchor node lying in $E_2$ can have the types $0$, $4$, or $5$.
We conclude that to check whether $N$ is outside of the root tetrahedron we have to check
if any one of six conditions is fulfilled.
In fact these conditions fit into the general form below with $x_i = x$, $x_j=y$, $x_k=z$, and $T$ as the root tetrahedron;
thus $T.x = T.y = T.z = 0$ and $T.b=0$.

These generalized conditions solve the problem to check for any two given tetrahedra $N$ and $T$, whether
$N$ lies outside of $T$ or not.
\begin{proposition}
\label{prop:neighboroutside}
 Given two $d$-simplices $N, T$ with $N.\ell\geq T.\ell$, then $N$ is outside
of the simplex $T$---which is equivalent to saying that $N$ is no descendant of
$T$---if and only if at least one of the following conditions is fulfilled.

 For 2D,
\begin{subequations}
\begin{align}
&N.x_i - T.x_i \geq 2^{T.\ell},\hphantom{xxxxxxxxxxxxxxxxxxxxxxxxxxxxxxxxxxxxxxxxxx}\\
&N.x_j-T.x_j<0,\\
&(N.x_j-T.x_j)-(N.x_i-T.x_i)>0,\\
&N.x_i-T.x_i=N.x_j-T.x_j\textrm{ and }N.b=\left\lbrace\begin{array}{cc}
                                             1 &\textrm{if\, } T.b = 0, \\
                                             0 &\textrm{if\, } T.b = 1.
                                            \end{array}\right.
\end{align}
\end{subequations}
For 3D,
\begin{subequations}
\begin{align}
 &N.x_i -T.x_i\geq 2^{\mathcal L - T.\ell},\hphantom{xxxxxxxxxxxxxxxxxxxxxxxxxxxxxxxxxxxxxxxxxx}\\% \vphantom{\huge X}
 &N.x_j-T.x_j<0,\\
 &(N.x_k-T.x_k)-(N.x_i-T.x_i)>0,\\
 &(N.x_j-T.x_j)-(N.x_k-T.x_k)>0,\\[1ex]
 &\begin{aligned}
 &N.x_j-T.x_j=N.x_k-T.x_k \\ &\mathrm{ and }\quad N.b\in\left\lbrace\begin{array}{cc}
                                             \set{T.b+1,T.b+2,T.b+3}, & \textrm{if\, } T.b \textrm{ is even,} \\
                                             \set{T.b-1,T.b-2,T.b-3}, & \textrm{if\, }T.b \textrm{ is odd, }
                                            \end{array}\right.\\
 \end{aligned}\\
 &%
 \begin{aligned}
 &N.x_k-T.x_k=N.x_i-T.x_i\\ &\mathrm{ and }\quad N.b\in\left\lbrace\begin{array}{cc}
                                             \set{T.b-1,T.b-2,T.b-3}, &\textrm{if\, } T.b\textrm{ is even,} \\
                                             \set{T.b+1,T.b+2,T.b+3}, &\textrm{if\, } T.b\textrm{ is odd. }
                                            \end{array}\right.
 \end{aligned} \\
&%
\begin{aligned}
 &N.x_j - T.x_j = N.x_k - T.x_k \quad\mathrm { and }\quad N.x_i - T.x_i = N.x_k - T.x_k\quad
\mathrm{ and }\quad N.b \neq T.b
\end{aligned}
\end{align}
\end{subequations}
 The coordinates $x_i$, $x_j$, and $x_k$ are chosen in dependence of the type of $T$ according to Table \ref{table:xixjxk}.
\begin{table}
\centering
\raisebox{1.3ex}{
\begin{tabular}{|c|cc|}
 \hline
  & \multicolumn{2}{c|}{$T.b$} \\
  $2$D
        & 0 & 1 \\ \hline
  $x_i$ & $x$ & $y$  \\
  $x_j$ & $y$ & $x$  \\ \hline
 \end{tabular}}
 \begin{tabular}{|c|cccccc|}
 \hline
  & \multicolumn{6}{c|}{$T.b$} \\
   $3$D   & 0 & 1 & 2 & 3 & 4 & 5 \\ \hline
  $x_i$ & $x$ & $x$ & $y$ & $y$ & $z$ & $z$ \\
  $x_j$ & $y$ & $z$ & $z$ & $x$ & $x$ & $y$ \\
  $x_k$ & $z$ & $y$ & $x$ & $z$ & $y$ & $x$ \\
  \hline
 \end{tabular}
\caption[The coordinates $x_i, x_j$, and $x_k$ for the computation whether a 
  simplex is ancestor of another.]
  {Following the general scheme described in this section to compute whether a
  given $d$-simplex $N$ lies outside of another given $d$-simplex $T$, we give
  the coordinates $x_i$, $x_j$, and $x_k$ in dependence of the type of $T$.}
\label{table:xixjxk}
\end{table}
\begin{proof}
By shifting $N$ by $(-T.\vec x)$ we reduce the problem to checking whether the shifted $d$-simplex lies outside of a
$d$-simplex with anchor node $\vec 0$, level $T.\ell$ and type $T.b$.
For $d=3$ the proof is analogous to the above argument, where we considered the case $b=0$ and $\ell = 0$.
In two dimensions the situation is even simpler, since there exists only one face of the root
triangle that is not a coordinate axis (see Figure \ref{fig:neighboroutside2d}).
\end{proof}
\end{proposition}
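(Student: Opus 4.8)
The plan is to rewrite ``$N$ is not a descendant of $T$'' as the purely geometric ``$N\not\subseteq T$'', and then to reduce the containment test for the whole simplex $N$ to a test on its anchor node, at the cost of a case distinction on $\type(N)$ when the anchor happens to lie on one of the bounding hyperplanes of $T$ that is shared by several simplex types.

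First I would justify this reformulation. Since $T$ is affine-equivalent to $2^{\mathcal L-\ell(T)}S_0$, Property~\ref{property:commref} applies with $T$ in the role of the root simplex: for every $\ell\ge\ell(T)$ the level-$\ell$ descendants of $T$ tile $T$, and $\partial T$ is covered by facets of these tiles (the only supporting hyperplanes of $S_0,\dots,S_{d!-1}$ are cube faces and coordinate-equality planes $\{x_i=x_j\}$, which cut no tile interior once one keeps track of the grid spacings). Hence, assuming as usual that $N$'s anchor lies on the level-$\ell(N)$ coordinate grid---which holds for every \Tet{} produced by the algorithms of this chapter, and in particular for a face-neighbor---$N$ coincides with one of these tiles as soon as it meets the interior of $T$; thus ``$N$ is a descendant of $T$'' is equivalent to ``$N\subseteq T$''. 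Translating all coordinates by $-T.\vec x$ lets us assume $T=2^{\mathcal L-\ell(T)}S_{T.b}$, so that each quantity $N.x_\bullet-T.x_\bullet$ in the statement is just a coordinate of $N$'s shifted anchor; I will write $g:=2^{\mathcal L-\ell(T)}$ and $h:=2^{\mathcal L-\ell(N)}$, so $h\mid g$ and the shifted anchor's coordinates are multiples of $h$.

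Next I would read off the facets of $T$. Going through Figure~\ref{fig:sechstetra} type by type shows that $S_{T.b}=\{x_j\le x_k\le x_i\}\cap[0,1]^3$ in 3D and $S_{T.b}=\{x_j\le x_i\}\cap[0,1]^2$ in 2D, where $(x_i,x_j,x_k)$ is exactly the triple assigned to $T.b$ in Table~\ref{table:xixjxk}. Hence $T=\{0\le x_j\le x_k\le x_i\le g\}$, a simplex with the $d{+}1$ facets $\{x_i=g\}$, $\{x_j=0\}$, $\{x_k=x_i\}$ and, in 3D, $\{x_j=x_k\}$. By convexity $N\subseteq T$ iff every vertex of $N$ satisfies all $d{+}1$ inequalities; by \eqref{eq:TcoordfromS} the vertices of $N$ are $N.\vec x+h\vec v$ with $\vec v$ running over the corners of $S_{N.b}\subseteq\{0,1\}^d$, a vertex set that always contains $(0,\dots,0)$ (so the anchor is a vertex) and $(1,\dots,1)$. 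Over the vertices the $x_i$-coordinate therefore takes only the values $N.x_i$ and $N.x_i+h$, and the $x_j$-coordinate is minimized at the anchor; so $\{x_i=g\}$ and $\{x_j=0\}$ are respected iff $N.x_i\le g-h$ (equivalently $N.x_i<g$, both being multiples of $h$) and $N.x_j\ge0$, whose negations are the first two conditions in each list.

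The diagonal facets are where $\type(N)$ enters. For $\{x_k\le x_i\}$, at the anchor the inequality reads $N.x_k\le N.x_i$: if $N.x_k>N.x_i$ it already fails, which is the condition $(N.x_k-T.x_k)-(N.x_i-T.x_i)>0$; if $N.x_k<N.x_i$, then $N.x_k\le N.x_i-h$ and the inequality holds at every vertex; and in the borderline case $N.x_k=N.x_i$ it holds at all vertices iff $v_{x_k}\le v_{x_i}$ for every corner $\vec v$ of $S_{N.b}$, i.e.\ iff $x_k$ precedes $x_i$ in the coordinate ordering defining $S_{N.b}$. A direct check of these orderings from Figure~\ref{fig:sechstetra} (or a reduction to $T.b=0$ using the coordinate-axis permutation carrying $S_{T.b}$ to $S_0$) shows that the precedence fails exactly for $N.b\in\{T.b-1,T.b-2,T.b-3\}\bmod d!$ when $T.b$ is even and for $N.b\in\{T.b+1,T.b+2,T.b+3\}\bmod d!$ when $T.b$ is odd, giving the borderline condition for $\{x_k=x_i\}$. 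The facet $\{x_j\le x_k\}$ is treated identically and yields the other two conditions; intersecting the two borderline cases (anchor on both diagonal planes, $N.x_j=N.x_k=N.x_i$) leaves only $N.b=T.b$ admissible, which is the last condition---logically redundant given the previous two, but kept for completeness. In 2D there is a single diagonal facet $\{x_j\le x_i\}$ and $d!=2$, so the borderline type set collapses to the single type $1-T.b$, and the argument is otherwise verbatim. The one genuinely fiddly step---and the part I would be most careful about---is the finite check that these ``admissible type'' sets come out exactly as the claimed signed residue classes for all six values of $T.b$, tying the particular numbering $S_0,\dots,S_5$ of Figure~\ref{fig:sechstetra} to the $\pm$ sign pattern; everything else is just convexity, grid alignment, and reading inequalities off four facets.
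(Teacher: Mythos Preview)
Your proof is correct and follows essentially the same route as the paper: translate by $-T.\vec x$ to reduce to a scaled copy of $S_{T.b}$ at the origin, then test the half-space inequalities defining that simplex on $N$'s anchor node, with a type check in the borderline cases where the anchor lies on a diagonal facet. The paper's own proof is terse and simply points back to the preceding discussion for the special case $T.b=0$, $\ell=0$; you have spelled out that argument in full generality, including the explicit description $S_{T.b}=\{x_j\le x_k\le x_i\}\cap[0,1]^d$ and the vertex-wise check that ties the borderline type sets to Table~\ref{table:xixjxk}, and you correctly note that the last 3D condition is logically implied by the two preceding ones.
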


\begin{figure}
\center
   \includegraphics[width=46ex]{./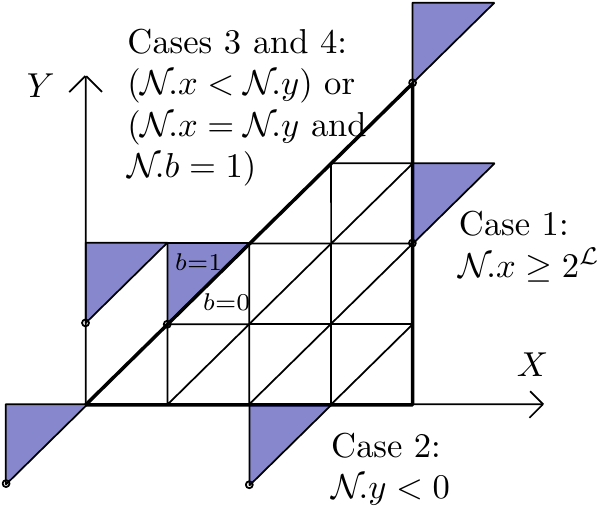}
   \caption[Determining when a triangle is outside of the root triangle]
  {A uniform level 2 refinement of the triangle $T^0$ in 2D and
  triangles lying outside of it with their anchor nodes marked. When deciding
  whether a triangle with given anchor node coordinates
  lies outside of $T^0$ there are four cases to consider, one for each face of
  $T^0$. For the two faces lying parallel to the $X$-axis, respectively,
  $Y$-axis, it suffices to check whether the $x$-coordinate is greater than or
  equal to $2^\mathcal{L}$, or whether the $y$-coordinate is smaller than $0$.
  Similarly one can conclude that the triangle lies outside of $T^0$ if its
  $x$-coordinate is smaller than its $y$-coordinate. If both coordinates agree
  (and none of the previous cases applies) then
  the given triangle is outside $T^0$ if and only if its type is $1$.}
   \figlabel{fig:neighboroutside2d}
\end{figure}

\subsection[A consecutive index for uniform refinements]{A consecutive index for uniform refinements}
\label{sec:consecindex}

In contrast to the Morton index for cubes, the TM-index for $d$-simplices does not
produce a consecutive range of numbers.
Therefore, two simplices $T$ and $T'$ of level $\ell$
that are direct successors/predecessors with respect to the tetrahedral Morton order do not
necessarily fulfill $m(T)=m(T')\pm 2^{d(\mathcal L-\ell)}$ or $m(T)=m(T')\pm 1$.
For $d=3$ this follows directly from the fact that each $b^j$ occupies three bits,
but there are only six values that each $b^j$ can assume, since there are only six different types.
In 2D this follows from the fact that not every combination of anchor
node coordinates and type can occur for triangles in $\mathcal T_2$,
the triangle with anchor node $(0,0)$ and type $1$ being one example.
This also means that the largest occurring TM-index is bigger than $2^{d\mathcal{L}}-1$.

Constructing a consecutive index 
as in Definition~\ref{def:consindex}
that respects the order given by the TM-index
is possible, as we show in this section.
Since in practice it is more convenient to work with this consecutive index instead
of the TM-index,
our aim is to construct for each uniform refinement level $\ell$ a consecutive index
$
\mathcal{I}_\ell(T)\in\set{0,\dots,2^{d\ell}-1},
$
such that
\begin{equation}
 \forall\ T,S\in\mathcal{T}^\ell_d\colon\quad \mathcal{I}_\ell(T)<\mathcal{I}_\ell(S) \abst\Leftrightarrow m(T)<m(S).
\end{equation}
This index can also be understood as a bijection
\begin{equation}
  \mathcal I_\ell\colon \set{m\in\IN_0 \abst | m = m(T) \textnormal{ for a } T\in\mathcal T^\ell_d}
  \xrightarrow{\cong}
  \set{0,\ldots,2^{d\ell}-1},
\end{equation}
mapping the TM-indices of level $\ell$ $d$-simplices to a consecutive range of numbers.
See also Definition~\ref{def:consindex}.
It is obvious that $\mathcal{I}_\ell(T^0_d)=0$.
The index $\mathcal{I}_\ell(T)$ can be easily
computed as the $\ell$-digit $2^d$-ary number consisting of the local indices as digits, thus
\begin{equation}
\label{eq:gloindex}
 \mathcal{I}_\ell(T) = (I_\mathrm{loc}(T^1),\dots,I_\mathrm{loc}(T^\ell))_{2^d}.
\end{equation}

Algorithm \ref{alg:ComputeIndexfromtet} shows an implementation of this computation.
It can be done directly from the Tet-id of $T$, and thus it is not
necessary to compute the TM-index of $T$ first.

\begin{algorithm}
 \caption{$\mathcal I$(\aTet{}$T$)}
 \label{alg:ComputeIndexfromtet}
 \DontPrintSemicolon
 \algoresult{The consecutive index $\mathcal I_{T.\ell} (T)$.}\;
 $I \gets 0$,
 $b \gets T.b$\;
 \For{$T.\ell \geq i \geq 1$}{
    $c \gets \texttt{c-id}(T,i)$\;
    $I\gets I + 8^i I_\mathrm{loc}^{b}(c)$
    \Comment{See Table \ref{table:candbtoiloc}; multiply with $4^i$ for 2D}
    $ b \gets \textrm{Pt}(c, b)$\;
 }
 \Return $I$\;
\end{algorithm}

The inverse operation of computing $T$ from $\mathcal{I}_\ell(T)$ and a given level $\ell$ can be carried out in a similar fashion;
see Algorithm \ref{alg:Init_id}.
For each $0\leq i\leq\ell$ we look up the type $b$ and the cube-id of $T^i$
from $I_\mathrm{loc}(T^{i})$ and the type of $\texttt{Parent}(T^i)=T^{i-1}$ (starting with $\type(T^0)=0$) via Tables \ref{table:IlocPbtocubeid} and \ref{table:IlocPbtotype}.
From the cube-ids we can build up the anchor node coordinates of $T$. The last computed
type is the type of $T$.
The runtime of this algorithm is $\mathcal O(T.\ell)$.

\begin{algorithm}
 \caption{\texttt{T}(consecutive index $\mathcal I$,\Int $\ell$)}
 \label{alg:Init_id}
 \DontPrintSemicolon
 \algoresult{The simplex $T$ with $\mathcal I_\ell (T)=\mathcal I$.}\;
 $T.x,T.y,T.z\gets 0$,
 $b \gets 0$\;
 \For{$1\leq i \leq \ell $}{
    Get $I_\mathrm{loc}(T^i)$ from $\mathcal I$\Comment{See \eqref{eq:gloindex}}
    $c\gets\texttt{c-id}(T^i),\, b\gets T^i.b$\Comment{See Tables \ref{table:IlocPbtocubeid} and \ref{table:IlocPbtotype}}
    \lIf{$c\bitwand 1$}{$T.x\gets T.x + 2^{\mathcal L -i}$}
    \lIf{$c\bitwand 2$}{$T.y\gets T.y + 2^{\mathcal L -i}$}
    \lIf(\IfComment{Remove this line for 2D}){$c\bitwand 4$}{$T.z\gets T.z + 2^{\mathcal L -i}$}
 }
 $T.b\gets b$\;
 \Return $T$\;
\end{algorithm}

Similar to Algorithm \ref{alg:ComputeIndexfromtet} is Algorithm
\texttt{Is\_valid}, which decides whether a given index $m\in[0,2^{6\mathcal{L}})
\cap \IZ$ is in fact a TM-index for a tetrahedron.
Thus, in the spirit of Section \ref{sec:Prop} we can decide whether a given
6D cube is in the image of map (\ref{eq:MapTQ}) that embeds
$\mathcal{T}_3$ into the set of 6D subcubes of $[0,2^{\mathcal{L}}]^6$.
The runtime of \texttt{Is\_valid} is $\mathcal{O}(\mathcal{L})$.

\begin{algorithm}
 \caption{\texttt{Is\_valid}($m\in [0,2^{6\mathcal{L}}) \cap \IZ, \ell$)}
 \label{alg:IsValid}
 \DontPrintSemicolon
 \algoresult{True if there exists a simplex $T$ with $m(T)=m$,
false otherwise.}\;
 $k \gets 6(\mathcal{L}-i)$\;
 \For{$\ell \geq i \geq 1$}{
    $b \gets (m_{k},m_{k + 1},m_{k + 2})_8$\;
    $c \gets (m_{k + 3},m_{k+4},m_{k+5})_8$\;
    $k \gets 6(\mathcal{L}-i + 1)$\;
    \If (\IfComment {Take $(0,0,0)_8$ if $i = 1$})
    {$(m_{k},m_{k + 1},m_{k + 2})_8 \neq \textrm{Pt}(c, b)$}
    {\Return \texttt{False}}}
 \Return \texttt{True}\;
\end{algorithm}

\begin{table}
\begin{center}
\raisebox{5.9ex}{
\begin{tabular}{|rc|cccl|}
\hline
\multicolumn{2}{|c|}{{$I^b_\mathrm{loc}(c)$}}&\multicolumn{4}{c|}{cube-id c}\\
 \multicolumn{2}{|c|}{2D} &\mytabvspace
 $0$ & $1$ & $2$ & $3$ \\[0.5ex]\hline
 \multirow{2}{*}{b}&\mytabvspace  0&  0 & 1 & 1 & 3 \\[0.2ex]
  &1&  0 & 2 & 2 & 3\\ \hline
\end{tabular}
}
\begin{tabular}{|rc|cccccccl|}
\hline
\multicolumn{2}{|c|}{{$I^b_\mathrm{loc}(c)$}}&\multicolumn{8}{c|}{cube-id c}\\
 \multicolumn{2}{|c|}{3D} &\mytabvspace
 $0$ & $1$ & $2$ & $3$ & $4$ & $5$ & $6$ & $7$\\[0.5ex]\hline
 \multirow{6}{*}{b}&\mytabvspace
  0&0& 1& 1& 4& 1& 4& 4& 7\\[0.2ex]
 &1&0& 1& 2& 5& 2& 5& 4& 7\\[0.2ex]
 &2&0& 2& 3& 4& 1& 6& 5& 7\\[0.2ex]
 &3&0& 3& 1& 5& 2& 4& 6& 7\\[0.2ex]
 &4&0& 2& 2& 6& 3& 5& 5& 7\\[0.2ex]
 &5&0& 3& 3& 6& 3& 6& 6& 7\\  \hline
\end{tabular}
\caption[The local index of a tetrahedron]
{The local index of a tetrahedron $T\abst{\in}\mathcal{T}$ in dependence of its
cube-id $c$ and type $b$.} 
\label{table:candbtoiloc}
\end{center}
\end{table}

\begin{table}
\begin{center}
\raisebox{5.9ex}{
\begin{tabular}{|rc|cccl|}
\hline
\multicolumn{2}{|c|}{\mytabvspace {$\cid(T)$}}&\multicolumn{4}{c|}{$I_\mathrm{loc}(T)$}\\
\multicolumn{2}{|c|}{2D} &
 \mytabvspace
 $0$ & $1$ & $2$ & $3$ \\[0.5ex]\hline
 \multirow{2}{*}{$P.b$}&\mytabvspace0& 0 & 1  & 1 &  3 \\[0.2ex]
 &1& 0 & 2 & 2 & 3 \\ \hline
\end{tabular}\hspace{0.5ex}
}%
\begin{tabular}{|rc|cccccccl|}
\hline
\multicolumn{2}{|c|}{{\mytabvspace $\cid(T)$}}&\multicolumn{8}{c|}{$I_\mathrm{loc}(T)$}\\
\multicolumn{2}{|c|}{3D}&
\mytabvspace $0$ & $1$ & $2$ & $3$ & $4$ & $5$ & $6$ & $7$\\[0.5ex]\hline
 \multirow{6}{*}{$P.b$}&\mytabvspace0& 0 & 1 & 1 & 1 & 5 & 5 & 5 & 7  \\[0.2ex]
 &1& 0 & 1 & 1 & 1 & 3 & 3 & 3 & 7     \\[0.2ex]
 &2& 0 & 2 & 2 & 2 & 3 & 3 & 3 & 7    \\[0.2ex]
 &3& 0 & 2 & 2 & 2 & 6 & 6 & 6 & 7    \\[0.2ex]
 &4& 0 & 4 & 4 & 4 & 6 & 6 & 6 & 7    \\[0.2ex]
 &5& 0 & 4 & 4 & 4 & 5 & 5 & 5 & 7    \\ \hline
\end{tabular}
\caption[The cube-id in dependence of the local index and the type of the parent]
 {For a tetrahedron $T\abst{\in}\mathcal{T}$ of local index $I_\mathrm{loc}$
  whose parent $P$ has type $P.b$ we give the cube-id of $T$.}
\label{table:IlocPbtocubeid}
\end{center}
\end{table}

\begin{table}
\begin{center}
\raisebox{5.9ex}{
\begin{tabular}{|rc|cccl|}
\hline
\multicolumn{2}{|c|}{\mytabvspace {$T.b$}}&\multicolumn{4}{c|}{$I_\mathrm{loc}(T)$}\\
\multicolumn{2}{|c|}{2D} &
 \mytabvspace
 $0$ & $1$ & $2$ & $3$ \\[0.5ex]\hline
 \multirow{2}{*}{$P.b$}&\mytabvspace0& 0 & 0 & 1 & 0 \\[0.2ex]
 &1& 1 & 0 & 1 & 1  \\ \hline
\end{tabular}\hspace{0.5ex}
}%
\begin{tabular}{|rc|cccccccl|}
\hline
\multicolumn{2}{|c|}{{\mytabvspace $T.b$}}&\multicolumn{8}{c|}{$I_\mathrm{loc}(T)$}\\
\multicolumn{2}{|c|}{3D}&
\mytabvspace $0$ & $1$ & $2$ & $3$ & $4$ & $5$ & $6$ & $7$\\[0.5ex]\hline
 \multirow{6}{*}{$P.b$}&\mytabvspace0& 0 & 0  & 4 & 5 & 0 & 1 & 2 & 0  \\[0.2ex]
 &1& 1 & 1 & 2 & 3 & 0 & 1 & 5 & 1     \\[0.2ex]
 &2& 2 & 0 & 1 & 2 & 2 & 3 & 4 & 2    \\[0.2ex]
 &3& 3 & 3 & 4 & 5 & 1 & 2 & 3 & 3    \\[0.2ex]
 &4& 4 & 2 & 3 & 4 & 0 & 4 & 5 & 4    \\[0.2ex]
 &5& 5 & 0 & 1 & 5 & 3 & 4 & 5 & 5    \\ \hline
\end{tabular}
\caption[The type in dependence of the local index and the type of the parent]
  {For a tetrahedron $T\abst{\in}\mathcal{T}$ of local index $I_\mathrm{loc}$
  whose parent $P$ has type $P.b$ we give the type of $T$.}
\label{table:IlocPbtotype}
\end{center}
\end{table}

The consecutive index simplifies the relation between the TM-index of a simplex
and its position in the SFC.
In the special case of a uniform mesh, the consecutive index and the position
are identical.

\subsection{Successor and predecessor}
Calculating the TM-index corresponding to a particular consecutive index is
occasionally needed in high-level algorithms.
This is relatively expensive,
since it involves a loop over all refinement levels, thus some 10 to 30 in extreme cases.
However often the task is to compute a whole range of $d$-simplices.
This occurs, for example, when creating an initial uniform refinement of a given mesh
(see Algorithm \texttt{New} in Section \ref{sec:new}).
That is, for a given consecutive index $\mathcal I$, a level $\ell$, and a
count $n$, find the $n$ level-$\ell$ simplices following the $d$-simplex corresponding to the consecutive index $\mathcal I$,
that is, the $d$-simplices corresponding to the $n$ consecutive indices
$\mathcal I,\mathcal I+1,\dots,\mathcal I+n-1$.
Ideally, this operation should run linearly in $n$, independent of $\ell$, but if we used Algorithm \ref{alg:Init_id} to create each of the $n+1$ simplices
we would have a runtime of $\mathcal{O}(n\mathcal{L})$.
In order to achieve the desired linear runtime
we introduce the operations \texttt{Successor} and \texttt {Predecessor} that run in average $\mathcal O(1)$ time.
These o\-pera\-tions compute from a given $d$-simplex $T$ of level $\ell$ with consecutive index $\mathcal I_\ell(T)$ the $d$-simplex $T'$ whose
consecutive index is $\mathcal I_\ell(T)+1$, respectively, $\mathcal I_\ell(T)-1$.
Thus, $T'$ is the next level $\ell$ simplex in the SFC after $T$ (resp.\ the previous one).
Algorithm \ref{alg:successor}, which we introduce to solve this problem does not require knowledge about the consecutive indices
$\mathcal I_\ell(T)$ and $\mathcal I_\ell(T)\pm 1$ and can be computed significantly faster than Algorithm \ref{alg:Init_id}; see Lemma \ref{lem:IncmortonRuntime}.

\begin{algorithm}
\DontPrintSemicolon
\SetKwFunction{proc}{proc}
\caption{\texttt{Successor}(\aTet{} $T$)}
\algoresult{The successor $T'$ of $T$.}\;
\label{alg:successor}
\Return \texttt{Successor\_recursion}$(T,T,T.\ell)$\\[2ex]

\setcounter{AlgoLine}{0}
\nonl \textbf{Function} \texttt{Successor\_recursion}(\Tet{} $T$,\Tet{} $T'$,\Int $\ell$)\;
$c\gets$ \texttt{c-id}$(T,\ell)$\;
From $c$ and $b$ look up $i:=I_\mathrm{loc}(T^\ell)$\Comment{See Table \ref{table:candbtoiloc}}
$i\gets (i+1)\abst \algomod 8$\label{line:changeforpred}\;
\eIf(\IfComment{Enter recursion (in rare cases)}){$i = 0$}{
  $T' \gets$ \texttt{Successor\_recursion} $(T,T',\ell-1)$\;
  $\hat b\gets T'.b$\Comment{$\hat b$ stores the type of $T'^{\ell-1}$}
}{
$\hat b\gets$ Pt$(c,b)$\;
}
From $\hat b$ and $I_\mathrm{loc} = i$ look up $(c',b')$\Comment{See Tables \ref{table:IlocPbtocubeid} and \ref{table:IlocPbtotype}}
Set the level $\ell$ entries of $T'.x, T'.y$ and $T'.z$ to $c'$\;
$T'.b\gets b'$\;
\Return $T'$\;
\end{algorithm}
To compute the predecessor of $T$ we only need to reverse the sign in Line \ref{line:changeforpred} in the
\texttt{Successor\_re\-cur\-sion} subroutine of Algorithm \ref{alg:successor}.

\begin{lemma}
\label{lem:IncmortonRuntime}
 Algorithm \ref{alg:successor} has constant average runtime (independent of $\mathcal{L}$).
 \begin{proof}
  Because each operation in the algorithm can be executed in constant time, the average runtime is $nc$, where $c$ is a
  constant (independent of $\mathcal{L}$) and $n-1$ is the number of average recursion steps.
  Since in consecutive calls to the algorithm the variable $i$ cycles through
$0$ to $2^d-1$ we conclude that the recursion is on average executed in every
$2^d$th step, allowing for a geometric series argument.
 \end{proof}
\end{lemma}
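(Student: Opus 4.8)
The plan is to reduce the running time of one \texttt{Successor} call to the depth of its recursion chain, to identify that depth with carry propagation in base $2^d$, and then to amortize over a run of consecutive calls by a geometric-series argument.

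First I would bound the cost of a single invocation. In \texttt{Successor\_recursion}, every operation other than the recursive call itself --- the call to \texttt{c-id} (Algorithm~\ref{alg:cube-id}), the table look-ups in Tables~\ref{table:candbtoiloc}, \ref{table:IlocPbtocubeid}, \ref{table:IlocPbtotype} and of $\mathrm{Pt}$, the modular increment in Line~\ref{line:changeforpred}, and the setting of the level-$\ell$ bits of $T'.x,T'.y,T'.z$ --- is an arithmetic or constant-size operation whose cost is independent of $\mathcal L$. Since the routine calls itself at most once, one \texttt{Successor}$(T)$ call (which is just \texttt{Successor\_recursion}$(T,T,T.\ell)$) unfolds into a chain of such constant-cost steps, so its running time is $\mathcal O\bigl(r(T)\bigr)$, where $r(T)$ is the depth of that chain.

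Next I would pin down $r(T)$. By construction \texttt{Successor} returns the level-$\ell$ simplex $T'$ with $\mathcal I_\ell(T')=\mathcal I_\ell(T)+1$, and by \eqref{eq:gloindex} the base-$2^d$ digits of $\mathcal I_\ell$ are exactly the local indices $I_\mathrm{loc}(T^1),\dots,I_\mathrm{loc}(T^\ell)$, least significant last. The recursion into level $\ell-1$ is triggered precisely when incrementing the current digit $I_\mathrm{loc}(T^\ell)$ wraps it to $0$, i.e.\ exactly when $I_\mathrm{loc}(T^\ell)=2^d-1$; the same test is then applied to $I_\mathrm{loc}(T^{\ell-1})$, and so on. Hence $r(T)$ equals $1$ plus the number of trailing base-$2^d$ digits of $\mathcal I_\ell(T)$ equal to $2^d-1$ --- precisely the carry length of the increment $\mathcal I_\ell(T)\mapsto\mathcal I_\ell(T)+1$.

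Then comes the amortization. For $n$ successive calls, producing the level-$\ell$ simplices with consecutive indices $I,I+1,\dots,I+n-1$, the total recursion count is $\sum_{j=0}^{n-1}\bigl(1+k_j\bigr)$, where $k_j$ is the number of trailing base-$2^d$ digits of $I+j$ equal to $2^d-1$. For fixed $m\geq1$, the integers whose $m$ least significant base-$2^d$ digits all equal $2^d-1$ form an arithmetic progression of common difference $(2^d)^m$, so at most $n/(2^d)^m+1$ of the values $I,\dots,I+n-1$ satisfy $k_j\geq m$. Summing the identity $k_j=\sum_{m\geq1}\mathbf 1_{\{k_j\geq m\}}$ over $j$ and using $k_j\leq\ell$ gives
\[
 \sum_{j=0}^{n-1}k_j \;=\; \sum_{m\geq1}\#\set{\,j:k_j\geq m\,} \;\leq\; \sum_{m=1}^{\ell}\Bigl(\tfrac{n}{(2^d)^m}+1\Bigr) \;\leq\; \frac{n}{2^d-1}+\ell ,
\]
so the total work of $n$ consecutive \texttt{Successor} calls is $\mathcal O(n+\ell)$: at recursion depth $m$ the recursion fires, on average, only once in every $(2^d)^m$ calls, and the geometric series $\sum_{m\geq0}(2^d)^{-m}=\tfrac{2^d}{2^d-1}$ bounds the average recursion depth per call by a constant independent of $\mathcal L$. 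The two-dimensional case is identical with $2^d$ replaced by $4$. The one point I would flag is the meaning of ``average'': an individual call can cost $\Theta(\mathcal L)$ --- namely when every digit of $\mathcal I_\ell(T)$ equals $2^d-1$ --- so the constant-time statement is necessarily amortized over a run of consecutive calls (the honest total bound $\mathcal O(n+\mathcal L)$ for $n$ calls being exactly what one uses, e.g.\ inside \texttt{New}). Everything else is routine counter bookkeeping.
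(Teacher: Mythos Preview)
Your argument is correct and follows the same approach as the paper's proof: both reduce the running time to the recursion depth, observe that a recursive step occurs once every $2^d$ calls, and invoke a geometric-series bound. Your version is considerably more explicit---you make the identification with base-$2^d$ carry propagation precise and give the actual amortized bound $\mathcal O(n+\ell)$---and your closing remark that an individual call can cost $\Theta(\mathcal L)$ (so ``average'' must mean amortized over consecutive calls) is a useful clarification that the paper's sketch leaves implicit.
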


We see in Algorithm \ref{alg:successor} the usefulness of the consecutive index.
Because we are using this index instead of the TM-index, computing
the index of the successor/pre\-de\-cessor only requires adding/subtracting $1$ to the given index.
On the other hand, computing the TM-index of a successor/predecessor would involve more subtle computations.

\section{High-level AMR algorithms}

To develop the complete AMR functionality required by numerical applications,
we aim at a forest of quad-/octrees in the spirit of
\cite{BursteddeWilcoxGhattas11, IsaacBursteddeWilcoxEtAl15}.
Key high-level algorithms are (see also Section~\ref{sec:hlalgos}):
\begin{itemize}
 \item \texttt{New}.
   Given an input mesh of conforming simplices, each considered a root simplex,
   generate an initial partitioned uniform refinement.
 \item \texttt{Adapt}. Adapt (refine and coarsen) a mesh according to a given criterion.
 \item \texttt{Partition}. Repartition a mesh among all processes such that the load
is balanced, possibly according to weights.
 \item \texttt{Ghost}. For each process, assemble the layer of directly
     neighboring elements owned by other processes.
 \item \texttt{Balance}. Establish a 2:1 size condition between neighbors in a given refined mesh.
              Thus, the levels of any two neighboring simplices must differ by at most 1.
 \item \texttt{Iterate}. Iterate through the local mesh, executing a callback
     function on each element and on all inter-element interfaces.
\end{itemize}
Since partitioning via SFC only uses the SFC index as information, we refer to
already existing descriptions of \texttt{Partition} for hexahedral or
simplicial SFCs; see \cite{PilkingtonBaden94,BursteddeWilcoxGhattas11} and Section~\ref{sec:partitionsfc}.
\texttt{Ghost} and \texttt{Balance} 
are sophisticated parallel algorithms and require additional theoretical work.
We describe them in Chapters~\ref{ch:ghost} and~\ref{ch:balance}.
For \texttt{Iterate} see our remarks in Section~\ref{ch:app}.

Here, we briefly describe \texttt{New} and \texttt{Adapt}.
In the forest-of-trees approach we model an adaptive mesh
by a coarse mesh of level $0$ $d$-simplices, the \textbf{trees}.
Such a coarse mesh could be specified manually for simple geometries,
or obtained from executing a mesh generator.
Each level $0$ simplex is identified with the root simplex $T_d^0$
and then refined adaptively to produce the fine and potentially nonconforming
mesh of $d$-simplices.
These simplices are partitioned among all processes; thus each process holds
a range of trees, of which the first and last may be incomplete: Their leaves
are divided between multiple processes.

An entity $\mathcal F$ of the structure \texttt{forest} consists of  the following entries
\begin{itemize}
 \item $\mathcal{F}.C$ --- the coarse mesh;
 \item $\mathcal{F}.\mathcal{K}$ --- the process-local trees;
 \item $\mathcal{F}.\mathcal{E}_k$ --- for each local tree $k$ the list of process-local
                    simplices in tetrahedral Morton order.
\end{itemize}
We acknowledge that \texttt{New} and \texttt{Adapt} are essentially
communication-free, but still serve well to exercise some of the fundamental
algorithms described earlier.

\subsection{New}
\label{sec:new}

The \texttt{New} algorithm creates a partitioned uniform level $\ell$ refined forest from a given coarse mesh.
To achieve this, we first compute the first and last $d$-simplices belonging to the current process $p$.
From this range we can calculate which trees belong to $p$ and for each of these trees, the consecutive index of
the first and last $d$-simplices on this tree.
We then create the first simplex in a tree by a call to \texttt{T} (Algorithm \ref{alg:Init_id}).
In contrast to the \texttt{New} algorithm in \cite{BursteddeWilcoxGhattas11}
we create the remaining simplices by calls to \texttt{Successor} instead of \texttt{T} to
avoid the $\mathcal{O}(\ell)$ runtime of \texttt{T} in the case of simplices.
Our numerical tests, displayed in Figure \ref{fig:scalenew},
show that the runtime of \texttt{New} is in fact linear in the number of
elements and does not depend on the level $\ell$.
Within the algorithm, $K$ denotes the number of trees in the coarse mesh and 
$P$ the number of processes.
\begin{algorithm}
\caption{\texttt{New}(\texttt{Coarse Mesh} $C$, \Int{} $\ell$)}
\label{alg:New}
\DontPrintSemicolon
\algoresult{A partitioned uniform level $\ell$ forest with $C$ as coarse mesh.}\;
$n \gets 2^{d\ell}$, $N\gets nK$ \Comment{$d$-simplices per tree and global number of $d$-simplices}\vspace{0.8ex}
$g_\mathrm{first} \gets \left\lfloor {Np}/{P}\right\rfloor$,
$g_\mathrm{last} \gets \left\lfloor{N(p+1)}/{P}\right\rfloor-1$ \Comment{Global numbers of first and last..}\vspace{0.8ex}
$k_\mathrm{first} \gets \left\lfloor g_\mathrm{first}/n \right\rfloor,
\, k_\mathrm{last} \gets \left\lfloor g_\mathrm{last}/n \right\rfloor$\Comment{..local simplex and local tree range}\vspace{0.5ex}
\For{$t \in \set{k_\mathrm{first},\dots,k_\mathrm{last}}$}
{
  $e_\mathrm{first} \gets (t = k_\mathrm{first}) \abst ? g_\mathrm{first}-nt : 0$\;
  $e_\mathrm{last} \gets (t = k_\mathrm{last}) \abst ? g_\mathrm{last}-nt : n-1$\;
  $T\gets$\texttt{T} $(e_\mathrm{first},\ell)$\Comment{Call Algorithm \ref{alg:Init_id}}
  $\mathcal{E}_k \gets \set T$\;
  \For{$e \in \set{e_\mathrm{first},\dots,e_{last}-1}$}
  {
    $T\gets$ \texttt{Successor} $(T)$\;
    $\mathcal{E}_k \gets  \mathcal{E}_k \cup \set T$
  }
}
\end{algorithm}

After \texttt{New} returns, the process local number of elements is known, and
per-element data can be allocated linearly in an array of structures, or a
structure of arrays, depending on the specifics of the application.

We point out, that the operations that require specific knowledge of simplices are
outsourced to the low-level algorithms \texttt{T} and \texttt{Successor}.
If we replace, for example, the implementation of these by the appropriate versions for
quadrilaterals/hexahedra with the Morton index, we obtain a partitioned uniform mesh
with these elements. Even further, we can model hybrid meshes, when we store the 
low-level functions as part of the tree information, i.e.\ each tree has its own
element type.

Thus, this description of \texttt{New} fits into our general approach of
separating high- and low-level algorithms in order to handle different element
types, which we describe in Section~\ref{sec:highlow}

\subsection{Adapt}
\label{sec:adapt}

The \texttt{Adapt} algorithm modifies an existing forest by refining and
coar\-se\-ning the $d$-simplices of a given forest according to a callback
function.
It does this by traversing the $d$-simplices of each tree in tetrahedral Morton order
and passing them to the callback function. If the current $d$-simplex and its $2^d-1$ successors form a family (all having the same parent),
then the whole family is passed to the callback.
This callback function accepts either one or $2^d$ $d$-simplices as input
plus the index of the current tree.
In both cases, a return value greater than zero means that the first input $d$-simplex should be refined, and
thus its $2^d$ children are added in tetrahedral Morton order to the new
forest.  Additionally, if the input consists of $2^d$ simplices, they form a
family, and a return value
smaller than zero means that this family should be coarsened, thus replaced by their parent.
If the callback function returns zero, the first given $d$-simplex remains unchanged
and is added to the new forest, and \texttt{Adapt} continues
with the next $d$-simplex in the current tree.
The \texttt{Adapt} algorithm creates a new forest from the given one and can handle recursive refinement/coarsening.
For the recursive part we make use of the following reasonable assumptions:
\begin{itemize}
 \item A $d$-simplex that was created in a refine step will not be coarsened during the same adapt call.
 \item A $d$-simplex that was created in a coarsening step will not be refined during the same adapt call.
\end{itemize}
From these assumptions we conclude that for recursive refinement we only have to consider those $d$-simplices that
were created in a previous refinement step and that we only have to care about recursive coarsening directly after
we processed a $d$-simplex that was not refined and could be the last $d$-simplex in a family.
If refinement and coarsening are not done recursively, the runtime of \texttt{Adapt} is linear in the number
of $d$-simplices of the given forest.

An application will generally project or otherwise transform data from the
previous to the adapted mesh.
This can be done within the adaptation callback, which is known to proceed
linearly through the local elements, or after \texttt{Adapt} returns if a copy
of the old mesh has been retained.
In the latter case, one would allocate element data for the adapted mesh and then
iterate over the old and the new data simultaneously, performing the
projection in the order of the SFC.
Once this is done, the old data and the previous mesh are deallocated
\cite{BursteddeGhattasStadlerEtAl08}.

\section{Performance evaluation}

Given the design of the algorithms discussed in this chapter, we expect runtimes that are
precisely proportional to the number of elements and independent of the level
of refinement.
To verify this, we present scaling and runtime tests%
\footnote{Version \texttt{v0.1} is available at
\texttt{https://bitbucket.org/cburstedde/t8code.git}}
for \texttt{New} and \texttt{Adapt} on the
JUQUEEN supercomputer at the Forschungszentrum J\"ulich \cite{Juqueen}, an IBM
BlueGene/Q system with 28,672 nodes consisting of
$16$ IBM PowerPC A2 @ $1.6$ GHz and $16$ GB Ram per node.
We also present one runtime study on the full MIRA system at the Argonne Leadership Computing Facility,
which has the same architecture as JUQUEEN and 49,152 nodes.
The biggest occurring number of mesh elements is around $850\e 9$ 
tetrahedra with $13$ million elements per process.

 The first two tests are a strong scaling (up to 131k processes) and a runtime
study of \texttt{New} in 3D, shown in Figure \ref{fig:scalenew}.  For both
tests we use a coarse mesh of $512$ tetrahedra.
 We time the \texttt{New} algorithm with input level 8 (resp.\ level 10 for higher numbers of processes).
 We execute the runtime study to examine whether \texttt{New} has the proposed level-independent linear runtime in the number of generated tetrahedra,
 which can be read from the results presented in the Table in Figure \ref{fig:scalenew}.

 The last test is \texttt{Adapt} with a recursive nonuniform refinement pattern.
 The starting point for all runs is a mesh obtained by uniformly refining a
 coarse mesh of $512$ tetrahedra to a given initial level $k$.
 This mesh is then refined recursively
 using a single \texttt{Adapt} call, where only the tetrahedra of types $0$ and $3$ whose
 level does not exceed the fine level $k+5$ are refined recursively.
 The resulting mesh on each tetrahedron resembles a fractal pattern
 similar to the Sierpinski tetrahedron.
 We perform several strong and weak scaling runs on JUQUEEN starting with 128
 processes and scaling up to 131,072.
 The setting is 16 processes per compute node.
 We finally do another strong scaling run on the full system of the MIRA supercomputer at the
 Argonne Leadership Computing Facility with 786,432 processes and again $16$ processes per compute node.
 Figure \ref{fig:adaptType03} shows our runtime results.

\begin{figure}
\center
\begin{minipage}{0.48\textwidth}
   \resizebox{\textwidth}{!}{\includegraphics{./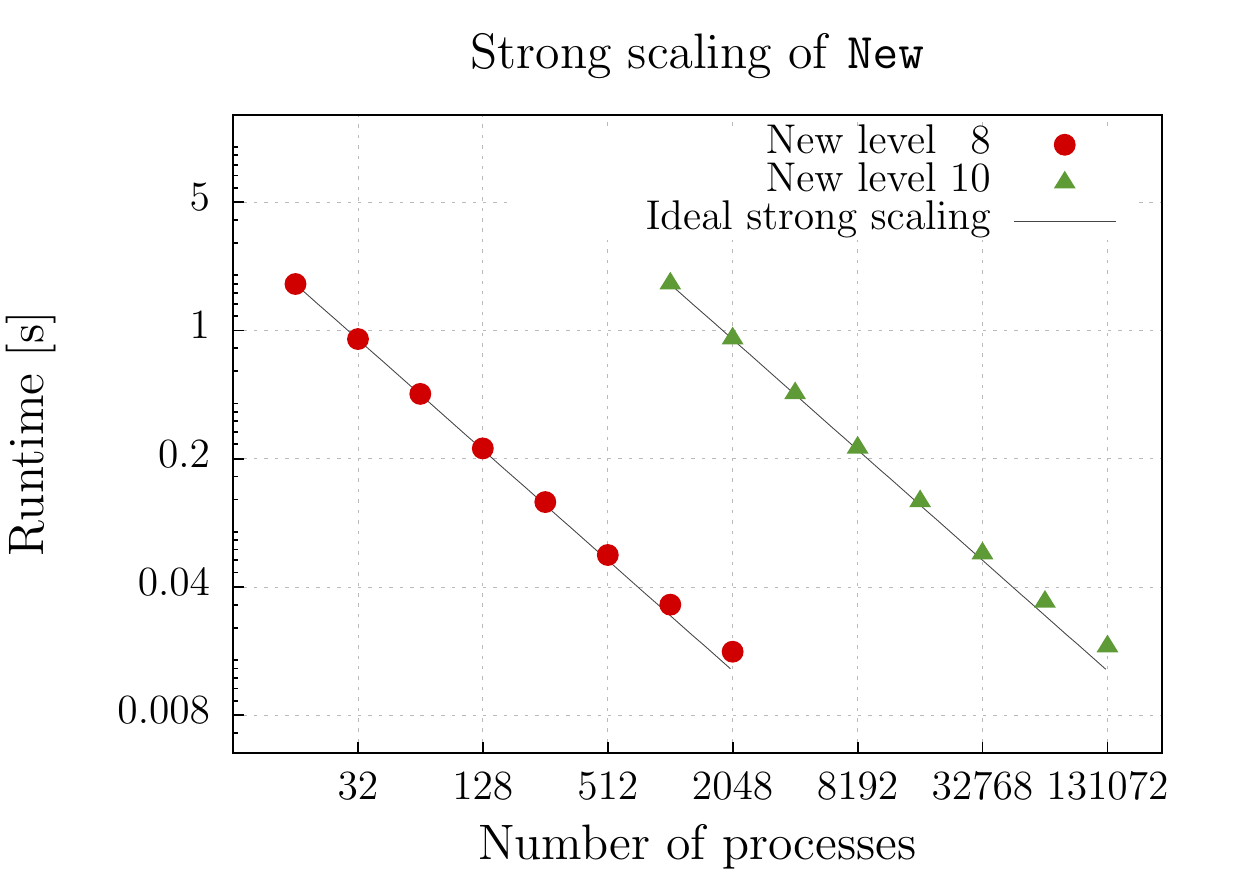}}
\end{minipage}
\begin{minipage}{0.5\textwidth}
\raisebox{18ex}{
\resizebox{0.96\textwidth}{!}{
\begin{tabular}{|lrrrr|}
 \multicolumn{5}{c}{Runtime tests for \texttt{New}}\\[2ex] \hline
 \mytabvspace 
 $\#$Procs&Level & $\#$Tetrahedra & Runtime [s] & Factor\\ \hline
 \mytabvspace
  64  & $7$ & $1.073\e9\hphantom{0}$  &  $0.059$ & --  \\
      & $8$ & $8.590\e9\hphantom{0}$  &  $0.451$ & $7.64$ \\
      & $9$ & $6.872\e10$ &  $3.58$\hphantom{1}  & $7.94$ \\
      &$10$ & $5.498\e11$ & $28.6$\hphantom{11}& $7.99$ \\ \hline
 \mytabvspace
  256 & $8$ & $8.590\e9\hphantom{0}$ & $0.116$ & --\\
 &$9$ & $6.872\e10$ & $0.898$ & $7.74$  \\
&$10$ & $5.498\e11$ & $7.15$\hphantom{1}& $7.96$ \\
 \hline
\end{tabular}
}%
} %
\end{minipage}
\caption[Runtime tests for \texttt{New} on JUQUEEN]
 {Runtime tests for \texttt{New} on JUQUEEN. Left: Two strong scaling
  studies. A new uniform level 8 (circles) and level 10 (triangles) refinement of
  a coarse mesh of 512 root tetrahedra, carried out with 16 up to 2,048 processes
  and 1,024 up to 131,072 processes with 16 processes per compute node. Right:
  The data shows that the runtime of \texttt{New} is linear in the number of
  generated elements and does not additionally depend on the level. The uniform
  refinement is created from a coarse mesh of 512 root tetrahedra. For the first
  computation on 64 processes we use 1 process per compute node and for the
  computation on 256 processes we use 2 processes per node.}
\figlabel{fig:scalenew}
\end{figure}

 \begin{figure}
\center
\begin{minipage}{0.48\textwidth}
 {\includegraphics{./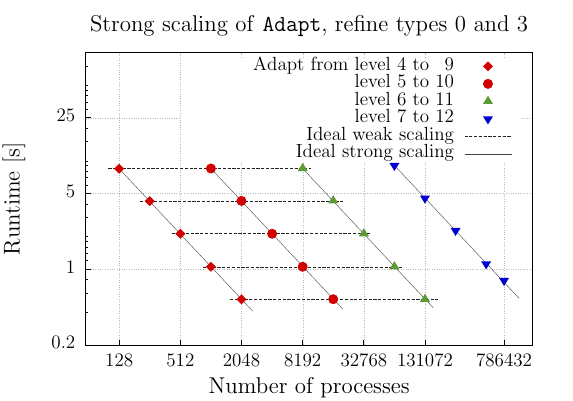}}
\end{minipage}
\hfill
\begin{minipage}{0.42\textwidth}
   \includegraphics[width=0.99\textwidth, trim=150 100 100 40, clip]{./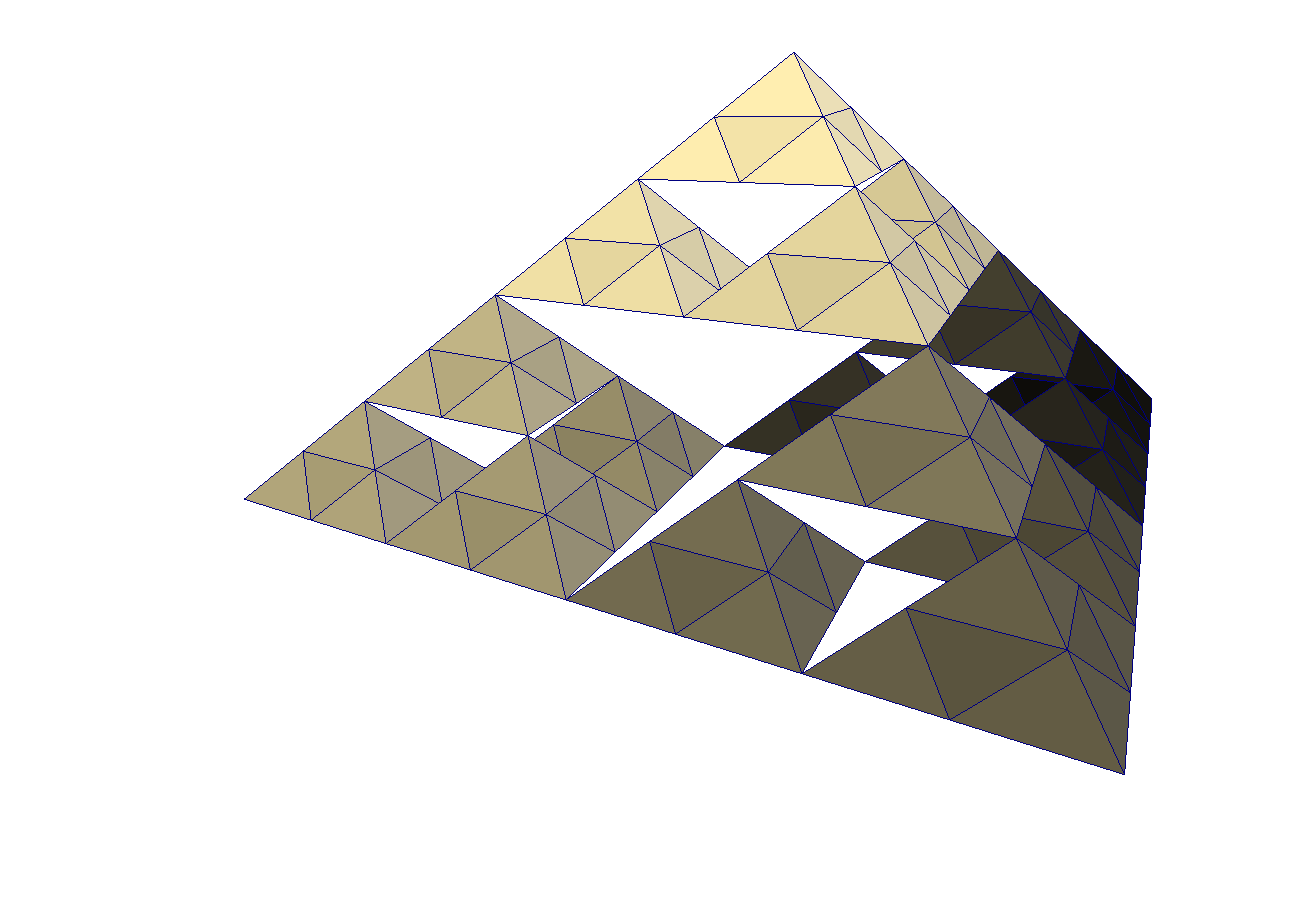}
\end{minipage}
\caption[Strong scaling for \texttt{Adapt} on JUQUEEN]
  {Strong scaling for \texttt{Adapt} with a fractal refinement pattern. Starting
  from an initial level $k$ on a coarse mesh of 512 tetrahedra we refine
  recursively to a maximal final level $k+5$.  The refinement callback is such
  that only subtetrahedra of types 0 and 3 are refined.  Left: Strong and weak
  scaling on JUQUEEN with up to 131,072 processes and strong scaling on MIRA with
  up to 786,432 processes.  On both systems we use 16 processes per compute node.
  The level 12 mesh consists of 858,588,635,136 tetrahedra.  Right: An initial
  level 0 and final level 3 refinement according to the fractal pattern.  The
  subtetrahedra of levels 1 and 2 are transparent.}
\figlabel{fig:adaptType03}
\end{figure}

\section{Conclusion}
We present a new encoding for adaptive nonconforming
triangular and tetrahedral mesh refinement based on Bey's red-refinement rule.
We identify six different types of tetrahedra (and two types of triangles)
and prescribe an ordering of the children for each of these types that differs
from Bey's original order.
By introducing an embedding of the mesh elements into a
Cartesian coordinate structure, we define a tetrahedral Morton index that can
be computed using bitwise interleaving similar to the Morton index for cubes.
This tetrahedral Morton index shares some properties with the well-known
cubical one and allows for a memory-efficient random access storage of the mesh
elements.

Exploiting the Cartesian coordinate structure, we develop several constant-time
algorithms on simplices.
These include computing the parent, the children, and the face-neighbors of a
given mesh element, as well as computing the next and previous elements according
to the SFC.

In view of providing a complete suite of parallel dynamic AMR capabilities, the
constructions and algorithms described in this chapter are just the beginning.
A repartitioning algorithm following our SFC, for example, is easy to imagine,
but challenging to implement if the tree connectivity is to be partitioned
dynamically, and if global shared metadata shall be reduced from being
proportional to the number of ranks to the number of compute nodes,
see Chapter~\ref{ch:cmesh}.
The present chapter provides atomic building blocks that can be used in
high-level algorithms for 2:1 balancing
\cite{IsaacBursteddeGhattas12}
and the computation of ghost elements and generalized topology iteration
\cite{IsaacBursteddeWilcoxEtAl15}.
We address these algorithms in Chapters~\ref{ch:ghost} and~\ref{ch:balance}.
The choices presented in this chapter
are sustainable for maintaining extreme scalability in the long term.

 \chapter{Connected Components of the TM-SFC}
\label{ch:conncomp}
This chapter is based on the preprint~\cite{BursteddeHolkeIsaac17b}.  Since the
preprint also contains contributions of Burstedde and Isaac, we only present
those parts of it that are work of the author of this thesis. In particular,
these are the results about the number of face-connected components for the TM-SFC.
\vspace{2ex}

When we store a mesh in parallel, its elements are distributed among the
processes along the SFC. This means that each process is assigned a range of
elements that is contiguous with regards to their SFC index. See also
Section~\ref{sec:partitionsfc}.

In many numerical applications, the processes communicate with each other
across the boundaries of their partitions. See for example the construction of
a ghost layer in Chapter~\ref{ch:ghost}.
The total volume of parallel communication is thus proportional to the number
of processes that share boundaries of a process' domain and proportional to the
number of elements at this boundary.
Hence, in order to minimize the communication, the surface-to-volume ratio of
these partitions should be small.
A good indicator for this is the number of face-connected components of a process'
domain, and thus the number of face-connected components of a segment of the SFC.

It is a known fact that the cubical Morton SFC can produce disconnected 
segments. However, the number of face-connected components was shown to be at most
two~\cite{BursteddeHolkeIsaac17b,Bader12}.
Similarly, the TM-SFC can produce disconnected domains 
(see for example Figure~\ref{fig:face_conn_ex}).
In this chapter, we prove the following bounds for the count of face-connected
components.

\begin{theorem}
  \label{thm:illthmalltets}
  A contiguous segment of a tetrahedral Morton curve through a uniform or
  adaptive tree of maximum refinement level $L\geq 2$ produces at most $2(L-1)$
  face-connected subdomains in 2D and at most $2L+1$ in 3D.
  For $L=1$ there are at most two face-connected subdomains.
\end{theorem}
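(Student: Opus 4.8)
The plan is to induct on the maximum refinement level $L$, peeling off the root of the tree at each step and tracking how the segment decomposes with respect to the TM-order of the children of the root.

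First I would isolate a clean local picture. Given a contiguous segment $\Sigma=[a,b]$ (all leaves $E$ with $m(a)\le m(E)\le m(b)$), let $T^\ast$ be the nearest common ancestor of $a$ and $b$. If $T^\ast$ is a proper descendant of the root, then $\Sigma$ lives inside a depth-$(L-1)$ subtree and the inductive hypothesis applies at once. Otherwise $T^\ast$ is the root; writing its $2^d$ children in TM-order as $C_1,\dots,C_{2^d}$ with $a$ in $C_i$, $b$ in $C_k$, $i<k$, the segment splits as
\[
\Sigma \;=\; \underbrace{\Sigma^{L}}_{\text{suffix of }C_i}\;\cup\;\underbrace{C_{i+1}\cup\cdots\cup C_{k-1}}_{\text{complete subtrees}}\;\cup\;\underbrace{\Sigma^{R}}_{\text{prefix of }C_k},
\]
where $\Sigma^{L}$ is the set of leaves of $C_i$ that are $\ge a$ and $\Sigma^{R}$ the leaves of $C_k$ that are $\le b$. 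Iterating the same split inside $C_i$ exhibits $\Sigma^{L}$ as $\{a\}$ together with a sequence of ``fans'': one fan per level $j$ between $\ell(T^\ast)$ and $\ell(a)$, consisting of the complete subtrees of those siblings of $a$'s level-$j$ ancestor that follow it in TM-order; symmetrically for $\Sigma^{R}$. So it suffices to bound (1) the number of face-connected components of a single fan, the middle block, and an endpoint, and (2) how consecutive fans re-connect as one ascends.

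Second I would prove the geometric core by a finite case analysis over the $d!=2$ (resp.\ $6$) types. For a fixed type $b$ I would record the face-adjacency graph of the $2^d$ Bey-children --- for $d=3$: the four corner tetrahedra $T_0,\dots,T_3$, each face-adjacent to exactly one of the four ``octahedron'' tetrahedra $T_4,\dots,T_7$, the latter forming a $4$-cycle along the chosen diagonal $\vec x_{02}\vec x_{13}$ --- together with the TM-order read off from Table~\ref{table:BeytoIndex}, and check that \emph{every contiguous sub-interval of the children has at most two face-connected components}. This already proves the $L=1$ case. I would then strengthen this to a statement tracking \emph{which faces of the parent are met}: the first leaf of any subtree grows from the anchor node $\vec x_0$ (the TM-order always begins with Bey's $T_0$), hence meets along a sub-face every child of the parent incident to $\vec x_0$; the last leaf grows from $\vec x_d$; and a suffix/prefix that has descended into a corner still meets the parent's boundary faces through that corner. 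These incidence facts are exactly what force a fan sitting inside a simplex $a^j$ to re-connect to the fan formed by the TM-successor siblings of $a^j$, because $a^j$ is face-adjacent to such a sibling and the relevant corner of $a^j$ lies on that shared face.

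Third I would assemble the recursion. Let $S^b(L)$, $P^b(L)$, $C^b(L)$ be the maximal component counts for a suffix, a prefix, and an arbitrary segment through a depth-$L$ tree rooted at a type-$b$ simplex. The child-sequence lemma gives $S^b(L)\le S^{b'}(L-1)+c$ and $P^b(L)\le P^{b'}(L-1)+c$, with the strengthened incidence invariant pinning the additive constant to $1$ in 2D and $2$ in 3D, and $C^{b_0}(L)$ is controlled by gluing a left fan, the middle block (at most two components), and a right fan, again using the incidence facts to absorb overlapping pieces. Solving with the directly verified base cases $L=1$ and $L=2$ yields $2(L-1)$ in 2D and $2L+1$ in 3D. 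Nothing in the argument uses uniformity --- a suffix or prefix of an adaptive subtree still ``grows from a corner'' --- so the incidence facts, and hence the whole induction, carry over to adaptive trees verbatim. The main obstacle I expect is this step two: the type-by-type face-adjacency bookkeeping in 3D (six types, eight children, the octahedral diagonal breaking symmetry), and above all getting the strengthened incidence invariant sharp enough that the recursions close with the exact constants rather than an off-by-factor-two estimate --- the delicate point being that along each descending chain at most one genuinely new component may be created per level.
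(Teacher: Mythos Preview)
Your decomposition into a left suffix, a middle block of complete level-$1$ subtrees, and a right prefix, with recursion on the ends, is exactly the paper's scaffold, and your child-sequence lemma (every contiguous interval of the children has at most two face-connected components) is proved in the paper by the same finite case enumeration you propose. The adaptive case is handled in the paper just as you say: refine every leaf to level $L$ and invoke the uniform result. One correction: your constant $c=2$ in 3D overshoots. The paper shows that the fan of complete siblings \emph{together with the first leaf of the adjacent incomplete subtree} has at most two components; since that leaf already sits in a component of the deeper suffix, the net per-level increment is $1$, giving a suffix/prefix bound of $L+1$ and hence the 3D bound $2L+1$ by your gluing.

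The genuine gap is the sharp 2D bound $2(L-1)$, and you are right that this is where the difficulty concentrates---but the paper's fix is not a sharpening of your corner-incidence invariant. It is a qualitatively different, 2D-only structural lemma: every type-$1$ subtriangle is face-adjacent across its top face to a type-$0$ subtriangle of strictly larger TM-index, and iterating this forces any suffix of a type-$0$ tree ending at its last leaf to be a \emph{single} connected component. Since three of the four level-$1$ children have type $0$, a suffix fragments only when it sits in the lone type-$1$ child; dually for prefixes, using the reversal symmetry of the 2D curve. When both ends are maximally fragmented the middle block is then empty, and the paper exhibits two explicit face-connections between specific extremal triangles of the left and right pieces, subtracting $2$ from $L+L$. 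Your invariant---first leaf grows from $\vec x_0$, last from $\vec x_d$---does not recover ``type-$0$ suffixes are connected,'' so as written your recursion would close at $2L+1$ in 2D as well, not $2(L-1)$.
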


We complete our study with numerical results to
illustrate the distribution of continuous vs.\ discontinuous segments.
This supports the conjecture that the tetrahedral Morton curve is no worse in
practice than the original cubical construction.

\section{Proof of Theorem~\ref{thm:illthmalltets}}
\label{sec:cc_illustrated}

We examine the number of face-connected components of a
segment of the tetrahedral Morton SFC.
As we show in Figure~\ref{fig:face_conn_ex}, there exist cases where the number
of face-connected components in a uniform 2D level $L$ refinement can be as
high as $2(L-1)$.
We show that this is in fact a sharp upper bound.
We also show that in three dimensions the number of face-connected components
does not exceed $2L+1$. There exists an example with $2L$ face-connected
components and we conjecture that $2L$ is in fact the sharp estimate.
The proof of these bounds is fairly analogous to the results for cubes from~\cite{BursteddeHolkeIsaac17b} and 
relies on a divide-and-conquer approach by splitting the segment into 
subsegments of which we know the number of face-connected components.

\begin{remark}
  In this chapter we sometimes refer to a TM-SFC on a type $1$ root triangle.
  By this we mean the SFC that would result from the construction in Section~\ref{sec:tetsfcdef}
  if we took the triangle $2^\mathcal L S_1$ as root triangle; see also Figure~\ref{fig:sechstetra}.
  We use it since the geometry of this curve is the same as if we consider the
original (type 0) TM-SFC and restrict it to the level $1$ subtriangle of type
$1$.
\end{remark}

\begin{remark}
  \label{rem:tetsym}
For the cubical Morton curve, the proof that there are at most two face-connected
components uses a symmetry property of the Morton index
(see~\cite[equation~(4)]{BursteddeHolkeIsaac17b}). 
  Given a $d$-dimensional cube
$q$ of level $L$ with Morton index $Q$, we can take the bitwise negation of $Q$
\begin{equation}
  R(Q) = 2^{dL}-1-Q
\end{equation}
to obtain the index of a new cube $q'$. This $q'$ is the cube that results from
traversing the SFC $Q$ steps in reverse.

We do not have such a strong symmetry property for simplices.
However, in $2D$ it holds that reversing the TM curve in a uniform refinement 
of a type $0$ root triangle results in the forward TM curve for a type $1$ root
triangle.
\end{remark}

Using this Remark, we show a weaker form of Corollary~8 from~\cite{BursteddeHolkeIsaac17b}.

\begin{figure}
   \begin{minipage}{0.6\textwidth}
   \center
   \def\svgwidth{0.7\textwidth}   
\begingroup%
  \makeatletter%
  \providecommand\color[2][]{%
    \errmessage{(Inkscape) Color is used for the text in Inkscape, but the package 'color.sty' is not loaded}%
    \renewcommand\color[2][]{}%
  }%
  \providecommand\transparent[1]{%
    \errmessage{(Inkscape) Transparency is used (non-zero) for the text in Inkscape, but the package 'transparent.sty' is not loaded}%
    \renewcommand\transparent[1]{}%
  }%
  \providecommand\rotatebox[2]{#2}%
  \ifx\svgwidth\undefined%
    \setlength{\unitlength}{1661.70019531bp}%
    \ifx\svgscale\undefined%
      \relax%
    \else%
      \setlength{\unitlength}{\unitlength * \real{\svgscale}}%
    \fi%
  \else%
    \setlength{\unitlength}{\svgwidth}%
  \fi%
  \global\let\svgwidth\undefined%
  \global\let\svgscale\undefined%
  \makeatother%
  \begin{picture}(1,0.72082152)%
    \put(-0.93903232,0.17130655){\color[rgb]{0,0,0}\makebox(0,0)[lb]{\smash{}}}%
    \put(-0.96733522,0.09574826){\color[rgb]{0,0,0}\makebox(0,0)[lt]{\begin{minipage}{0.04668695\unitlength}\raggedright \end{minipage}}}%
    \put(0,0){\includegraphics[width=\unitlength,page=1]{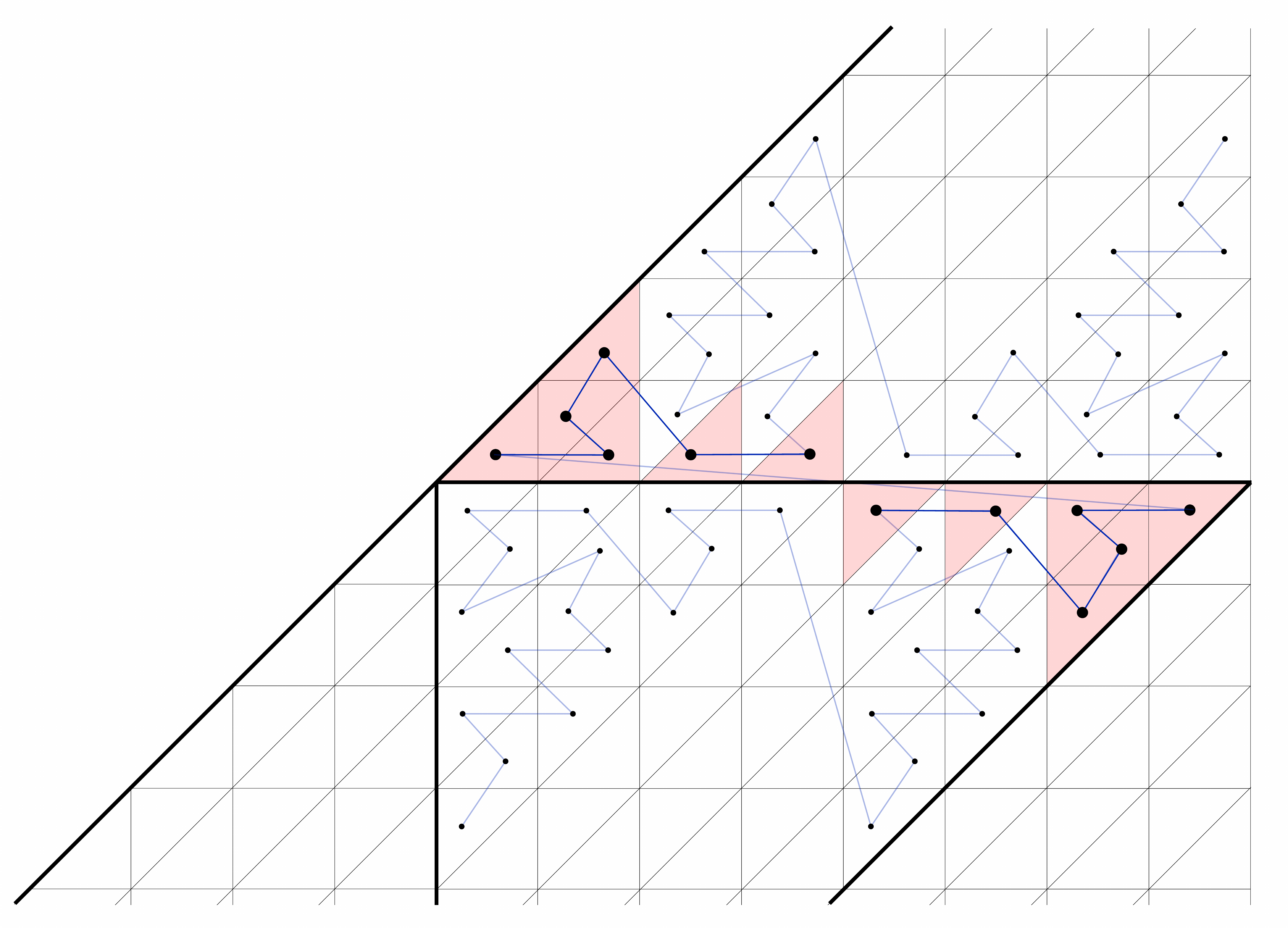}}%
  \end{picture}%
\endgroup%
    \end{minipage}
   \begin{minipage}{0.38\textwidth}    
   \center
   \def\svgwidth{0.8\textwidth}   
\begingroup%
  \makeatletter%
  \providecommand\color[2][]{%
    \errmessage{(Inkscape) Color is used for the text in Inkscape, but the package 'color.sty' is not loaded}%
    \renewcommand\color[2][]{}%
  }%
  \providecommand\transparent[1]{%
    \errmessage{(Inkscape) Transparency is used (non-zero) for the text in Inkscape, but the package 'transparent.sty' is not loaded}%
    \renewcommand\transparent[1]{}%
  }%
  \providecommand\rotatebox[2]{#2}%
  \ifx\svgwidth\undefined%
    \setlength{\unitlength}{372.91103516bp}%
    \ifx\svgscale\undefined%
      \relax%
    \else%
      \setlength{\unitlength}{\unitlength * \real{\svgscale}}%
    \fi%
  \else%
    \setlength{\unitlength}{\svgwidth}%
  \fi%
  \global\let\svgwidth\undefined%
  \global\let\svgscale\undefined%
  \makeatother%
  \begin{picture}(1,0.99810376)%
    \put(0,0){\includegraphics[width=\unitlength,page=1]{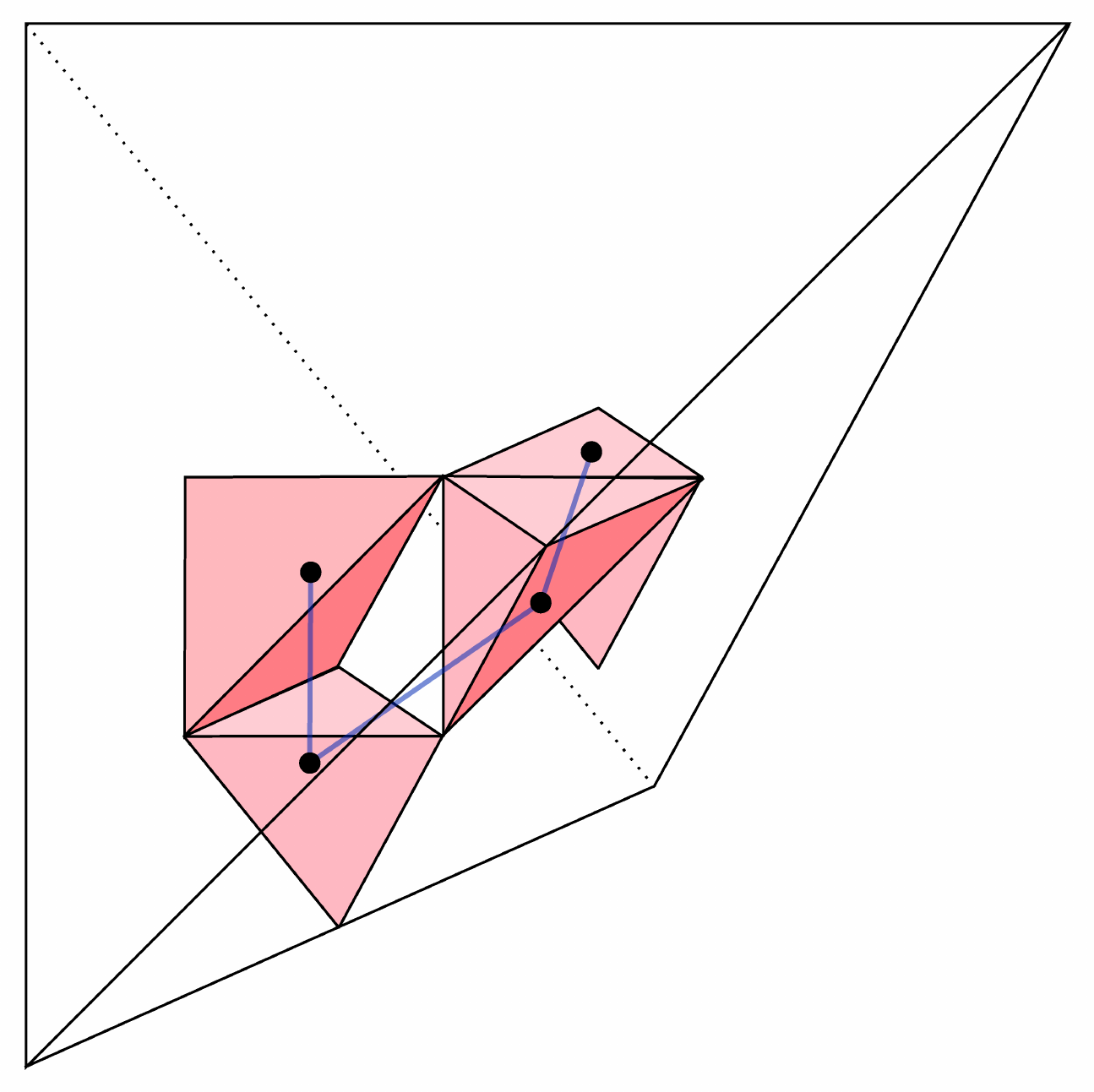}}%
  \end{picture}%
\endgroup%
    \end{minipage}
   \caption[Two example segments of the TM curve]
           {Left: A segment of the 2D SFC on a level $4$ refinement of $T^0_2$
            with six face-connected components (shaded pink).
   The number of face-connected components in 2D can be as high as $2(L-1)$;
   this estimate is sharp.
   Right: a 3D level 2 refinement of $T^0_3$ with four ($=2L$) face-connected
   components.
   We prove that an upper bound on the number of face-connected components is
   $2L+1$ and conjecture that $2L$ is sharp.
  }
  \label{fig:face_conn_ex}
\end{figure}%

\begin{lemma}
  \label{lem:faceconn2dhelp}

  The following two properties hold for the TM-index in 2D, where we consider a
  uniform level $L$ refinement of an initial type 0 triangle $T$.
  \begin{itemize}
    \item Each type 1 subsimplex is face-connected to a type 0
      subsimplex with a greater TM-index.
    \item Each type 0 subsimplex that is also a descendant of the level 1,
      type 1 subtriangle $T_3$
      is face-connected to a type 1 subsimplex with a greater TM-index.
  \end{itemize}
\end{lemma}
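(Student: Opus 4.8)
The plan is to prove each bullet by producing, for the given simplex, one explicit face-neighbor of the opposite type and checking by a digitwise comparison of the $2^d$-ary expansions~\eqref{eq:mortonentries} that its TM-index is the larger. Throughout one may assume $\mathcal L=L$, so that all simplices in the uniform refinement are unit-size level-$L$ triangles; enlarging $\mathcal L$ changes neither the TM-order nor face-connectivity. For a type $1$ simplex $V$ with anchor $(x,y)$ the candidate is its face-$0$ neighbor $N_0$, which by Table~\ref{tab:faceneighbor2d} is the type $0$ simplex of the same level with anchor $(x,y+1)$, and which shares the full edge $[(x,y+1),(x+1,y+1)]$ with $V$. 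I would first observe that face $0$ of a type $1$ triangle never lies on $\partial T^0_2$: testing it against the three boundary lines $Y=0$, $X=2^L$ and $X=Y$ rules this out (one also sees $y<x$, so no over/underflow occurs), hence $N_0$ genuinely exists and belongs to the refinement; the same conclusion also follows from the criterion in Section~\ref{sec:outside}.

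For the inequality $m(N_0)>m(V)$, let $t\ge 0$ be the number of trailing $1$-bits of $y$. Adding $1$ to $y$ affects only the cube-id digits of the anchor at levels $L-t,\dots,L$: at levels $L-t+1,\dots,L$ the $y$-bit drops from $1$ to $0$ (so $\cid(V^i)\in\{2,3\}$ and $\cid(N_0^i)\in\{0,1\}$ there), and at level $L-t$ it rises from $0$ to $1$ (so $\cid(N_0^{L-t})=\cid(V^{L-t})+2$ with $\cid(V^{L-t})\in\{0,1\}$). Running the ancestor-type recursion $\mathrm{Pt}$ upward and using $\mathrm{Pt}(2,1)=\mathrm{Pt}(3,1)=1$ and $\mathrm{Pt}(0,0)=\mathrm{Pt}(1,0)=0$ (Figure~\ref{fig:parenttype}) gives $\type(V^i)=1$ and $\type(N_0^i)=0$ for all $i$ with $L-t\le i\le L$. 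The one identity that does the work is $\mathrm{Pt}(c,1)=\mathrm{Pt}(c+2,0)$ for $c\in\{0,1\}$, again read off the $\mathrm{Pt}$ table; it yields $\type(V^{L-t-1})=\type(N_0^{L-t-1})$. As the anchors of $V$ and $N_0$ coincide in all bits above level $L-t$, this forces $V^i=N_0^i$ for every $i<L-t$, so $m(V)$ and $m(N_0)$ share every digit of levels $1,\dots,L-t-1$ and first disagree at the cube-id digit of level $L-t$, where $N_0$ has the strictly larger entry. Hence $m(N_0)>m(V)$, which proves the first bullet. (For $L=1$ this is just the statement that the single type $1$ simplex $T_3$ is face-adjacent to its sibling $T_2$, which is type $0$ and later in the TM-order.)

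For the second bullet I would first reduce to a stand-alone picture: by Proposition~\ref{proposition:locindexdeptype} the TM-order on the descendants of the type $1$ triangle $T_3$ is, up to an order-preserving affine identification, the TM-order of a type $1$ root, so it suffices to show that in a uniform refinement of a type $1$ root every type $0$ simplex $U$ is face-connected to a type $1$ simplex with larger TM-index. Take $N_0$ to be the face-$0$ neighbor of $U$, which by Table~\ref{tab:faceneighbor2d} has type $1$ and anchor $(x+1,y)$. Checking face $0$ of $U$ against the three edges of $T_3$ (a vertical segment cannot lie on any of them) shows it is never part of $\partial T_3$, so $N_0$ lies inside $T_3$ and shares an edge with $U$. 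The index comparison is the mirror of the previous one, now incrementing the $x$-coordinate: with $t$ the number of trailing $1$-bits of $x$ one obtains $\type(U^i)=0$ and $\type(N_0^i)=1$ for $L-t\le i\le L$, and the identities $\mathrm{Pt}(0,0)=\mathrm{Pt}(1,1)$ and $\mathrm{Pt}(2,0)=\mathrm{Pt}(3,1)$ then give $\type(U^{L-t-1})=\type(N_0^{L-t-1})$, so again the first differing digit is the favorable cube-id digit at level $L-t$ and $m(N_0)>m(U)$. The case $L=1$ is vacuous since $T_3$ then has no type $0$ descendant.

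The step I expect to be the real obstacle is the carry case: when $y$ (respectively $x$) ends in a run of $1$-bits the anchor of the simplex and that of its face-$0$ neighbor differ in several consecutive cube-id digits rather than one, and it must be checked that their ancestor types nonetheless resynchronize exactly one level above the pivotal level $L-t$. This is precisely what the two small $\mathrm{Pt}$-identities guarantee, and it is the only point where the concrete entries of the parent-type table are used; the rest is bookkeeping of binary digits together with the two geometric boundary checks.
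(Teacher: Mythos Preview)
Your argument is correct and reaches the same conclusion as the paper by a genuinely different route. For the first bullet both of you use the face-$0$ (top-face) neighbor; for the second bullet the paper instead takes the face-$1$ (diagonal) neighbor, but your face-$0$ (vertical) neighbor works equally well. The paper proves the inequality by a short induction on the level: if $S$ and its top-face neighbor $S'$ share a parent, one reads off the local indices directly; if not, their parents $P,P'$ are again a type-$1$/type-$0$ top-face pair, so $m(P)<m(P')$ by induction and hence $m(S)<m(S')$ by Theorem~\ref{thm:IndexProps}. Your binary-carry analysis is exactly this induction unrolled: the levels $L-t+1,\dots,L$ are precisely where the paper's recursion would descend through different-parent pairs, and your key identity $\mathrm{Pt}(c,1)=\mathrm{Pt}(c+2,0)$ for $c\in\{0,1\}$ is the statement that the two ancestor chains merge one level above the carry. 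The paper's version is cleaner and more conceptual; yours is more explicit about the digit mechanics and checks the $\mathrm{Pt}$-table entries by hand, which has its own value.

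One small slip to fix: $T_3$ \emph{does} have a vertical edge (its left side at $x=2^{L-1}$), so the parenthetical ``a vertical segment cannot lie on any of them'' is false as written. The conclusion survives because face $0$ of a type-$0$ descendant $U$ of $T_3$ sits at $x+1>2^{L-1}$ and hence cannot coincide with that left edge; replace the parenthetical by this observation. Your reduction to a type-$1$ root via Proposition~\ref{proposition:locindexdeptype} is also harmless but unnecessary: once you have checked $N_0\in T_3\subset T$, the digit comparison already works verbatim inside the original type-$0$ root.
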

\begin{proof}
  The respective face-neighbor is the top face-neighbor for the type 1 subsimplex
  and the face-neighbor along the diagonal face for the type 0 subsimplex;
  see Figure~\ref{fig:faceconn2dhelp}.
  For type 0 we additionally require that the subsimplex is
  a descendant of $T_3$, since this ensures that the face-neighbor along the
  diagonal face is inside the root triangle.
  Despite this detail, the proofs for both items are identical, and we only
  present one for the first.

  Let $S$ denote an arbitrary type 1 subsimplex of level $L$ and let $S'$ be
  its neighbor across the top face.
  If $S$ and $S'$ share the same parent $P$ then there are two cases,
  which we also see in Figure~\ref{fig:haverkortSFC}:
  Either $\type(P) = 0$, then the local index of $S$ is 2 and that of $S'$ is
  3, or $\type(P) = 1$, in which case the local index of $S$ is 0 and that
  of $S'$ is 1.
  Thus, in both cases the TM-index of $S$ must be smaller than that of $S'$.
  We suppose now that $S$ and $S'$ have different parents, which implies
  $L\geq2$, and denote these different level $L-1$ subsimplices by $P$ and
  $P'$.
  The only possible combination is that $\type(P) = 1$ and $\type(P') = 0$, and
  that $P$ and $P'$
  are neighbors along $P$'s top face. Therefore, by an induction argument,
  $m(P) < m(P')$,
  and since the TM-index preserves the local order under refinement, each child
  of $P$ has a smaller TM-index than each child of $P'$; see Theorem~\ref{thm:IndexProps}.
  In particular we find $m(S)<m(S')$.
\end{proof}

\begin{figure}
  \center
  \def\svgwidth{0.8\textwidth}
\begingroup%
  \makeatletter%
  \providecommand\color[2][]{%
    \errmessage{(Inkscape) Color is used for the text in Inkscape, but the package 'color.sty' is not loaded}%
    \renewcommand\color[2][]{}%
  }%
  \providecommand\transparent[1]{%
    \errmessage{(Inkscape) Transparency is used (non-zero) for the text in Inkscape, but the package 'transparent.sty' is not loaded}%
    \renewcommand\transparent[1]{}%
  }%
  \providecommand\rotatebox[2]{#2}%
  \ifx\svgwidth\undefined%
    \setlength{\unitlength}{1272.22363281bp}%
    \ifx\svgscale\undefined%
      \relax%
    \else%
      \setlength{\unitlength}{\unitlength * \real{\svgscale}}%
    \fi%
  \else%
    \setlength{\unitlength}{\svgwidth}%
  \fi%
  \global\let\svgwidth\undefined%
  \global\let\svgscale\undefined%
  \makeatother%
  \begin{picture}(1,0.45727227)%
    \put(0.47865644,0.21027668){\color[rgb]{0,0,0}\makebox(0,0)[lb]{\smash{}}}%
    \put(0.44168894,0.11158709){\color[rgb]{0,0,0}\makebox(0,0)[lt]{\begin{minipage}{0.06097962\unitlength}\raggedright \end{minipage}}}%
    \put(0,0){\includegraphics[width=\unitlength,page=1]{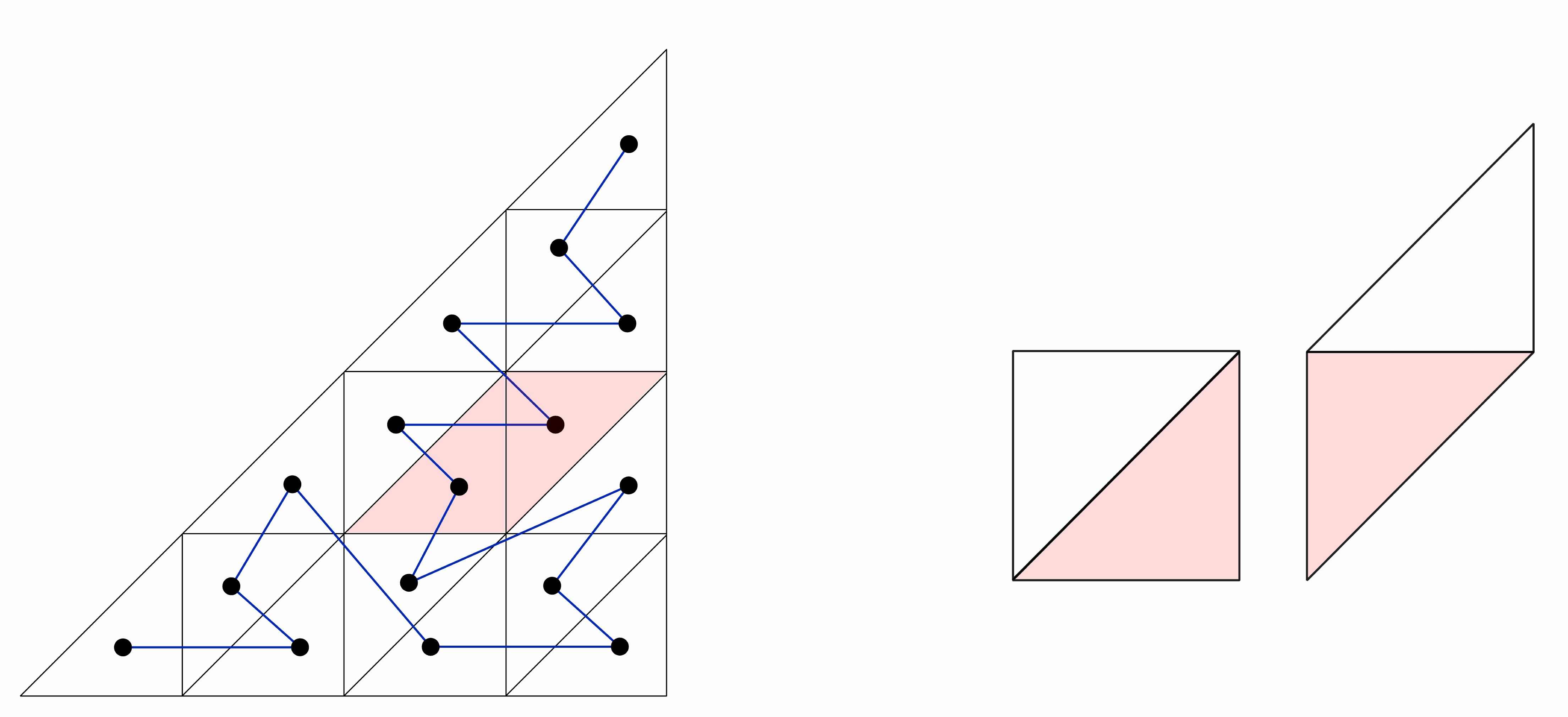}}%
    \put(0.86317745,0.18191688){\color[rgb]{0,0,0}\makebox(0,0)[lb]{\smash{$S_1$}}}%
    \put(0.90085919,0.26431289){\color[rgb]{0,0,0}\makebox(0,0)[lb]{\smash{$S_1$'}}}%
    \put(0.37803178,0.12930008){\color[rgb]{0,0,0}\makebox(0,0)[lb]{\smash{}}}%
    \put(0.25038127,0.13181536){\color[rgb]{0,0,0}\makebox(0,0)[lb]{\smash{$S_0$}}}%
    \put(0.32919351,0.15768572){\color[rgb]{0,0,0}\makebox(0,0)[lb]{\smash{$S_1$}}}%
    \put(0.66936407,0.17811985){\color[rgb]{0,0,0}\makebox(0,0)[lb]{\smash{$S_0$'}}}%
    \put(0.71225608,0.10700408){\color[rgb]{0,0,0}\makebox(0,0)[lb]{\smash{$S_0$}}}%
  \end{picture}%
\endgroup%
   \caption[Illustration of Lemma \ref{lem:faceconn2dhelp}]
  {Illustration of Lemma \ref{lem:faceconn2dhelp}.
   In 2D, choose any subsimplex $S_\ast$. If its neighbor along the top face
   $S_\ast'$ is inside the root triangle, then $m(S_\ast) < m(S_\ast')$.
   This condition is always fulfilled by any type 1 triangle and by type 0
   triangles that are descendants of the middle level 1 subtriangle.
  }
  \label{fig:faceconn2dhelp}
\end{figure}

We can now prove a first connectivity result for the 2D case.

\begin{lemma}
  \label{lem:faceconn2d}
 Consider a triangle $T$ that is uniformly refined to level $L$.
 If $T$ has type 0, then a contiguous segment of the SFC ending in the last
 level $L$ subsimplex has just one face-connected component.
 If $T$ has type 1, then this holds for segments starting in the first level
 $L$ subsimplex.
\end{lemma}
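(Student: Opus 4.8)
The plan is to reduce the lemma to a purely local statement and then establish that statement by induction on $L$. The key observation is a ``walk forward'' principle: a contiguous segment $\Sigma$ ending in the last level-$L$ subsimplex is \emph{upward closed} with respect to the TM-order, so whenever $S\in\Sigma$ has a face-neighbor $S'$ with $m(S')>m(S)$, also $S'\in\Sigma$. Since there are only finitely many level-$L$ subsimplices, iterating ``pass to a face-neighbor of strictly larger TM-index'' must terminate, and it can only terminate at a subsimplex that has no such neighbor. Hence, to prove that $\Sigma$ has a single face-connected component it suffices to prove:
\begin{quote}
$(\ast)$ In a uniform level-$L$ refinement of a type $0$ triangle, every level-$L$ subsimplex except the last one (in TM-order) has a face-neighbor of strictly larger TM-index.
\end{quote}

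I would prove $(\ast)$ by induction on $L$; the cases $L\le 1$ are immediate from the explicit child order $T_0,T_1,T_3,T_2$, since each corner child is face-adjacent to the middle child $T_3$, which has a larger index, and $T_3$ is face-adjacent to $T_2$. For the inductive step, decompose the type $0$ root $T$ into its four level-$1$ children, which are traversed in the order $A=T_0$ (type $0$), $B=T_1$ (type $0$), $C=T_3$ (type $1$), $D=T_2$ (type $0$), each uniformly refined to level $L-1$; by Theorem~\ref{thm:IndexProps}(iii) the TM-order on $T$ is the concatenation of the TM-orders on $A$, $B$, $C$, $D$. Now classify a level-$L$ subsimplex $S$ by which child contains it. If $S\subset C$, then $S$ is handled directly by Lemma~\ref{lem:faceconn2dhelp}: a type $1$ subsimplex has a larger-index face-neighbor by the first bullet, and a type $0$ subsimplex (necessarily a descendant of $T_3=C$) by the second bullet. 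If $S\subset A\cup B\cup D$ is of type $1$, it again has a larger-index face-neighbor by the first bullet of Lemma~\ref{lem:faceconn2dhelp}.

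It remains to treat type $0$ subsimplices contained in $A$, $B$, or $D$. If such an $S$ is not the last subsimplex of its child, the inductive hypothesis applied to that child---itself a type $0$ level-$(L-1)$ refinement---yields a face-neighbor of larger index \emph{within} that child, which by the concatenation property has larger TM-index in $T$ as well. The last subsimplex of $D$ is the last subsimplex of $T$ and is excluded. The only remaining cases are the last subsimplices of $A$ and of $B$. Here I would use the geometry of Bey's rule \eqref{eq:childnumbers}: the last element of a type $0$ refinement sits at the apex corner, with one small edge lying on each of the two parent edges incident to that corner. For $A=[\vec x_0,\vec x_{01},\vec x_{02}]$ these two edges are (a portion of the boundary edge) $[\vec x_0,\vec x_{02}]$ and the interior edge $[\vec x_{01},\vec x_{02}]$, which is precisely the edge shared by $A$ and $C=T_3$; for $B=[\vec x_{01},\vec x_1,\vec x_{12}]$ the interior one is $[\vec x_{01},\vec x_{12}]$, the edge shared by $B$ and $C$. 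Hence the last element of $A$ (resp.\ $B$) has a face-neighbor in $C$, which is traversed after $A$ (resp.\ $B$) and therefore carries a larger TM-index. This exhausts all cases and proves $(\ast)$, hence the type $0$ part of the lemma via the walk-forward principle. For the type $1$ statement I would then invoke Remark~\ref{rem:tetsym}: reversing the TM curve on a uniform level-$L$ refinement of a type $0$ triangle gives the forward TM curve on a uniform level-$L$ refinement of a type $1$ triangle; under this reversal a segment ending in the last subsimplex becomes a segment starting in the first subsimplex, and the partition into face-connected components is preserved.

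I expect the main obstacle to be the bookkeeping in the inductive step, and specifically the type $0$ subsimplices that are \emph{not} descendants of the middle child $T_3$: Lemma~\ref{lem:faceconn2dhelp} says nothing about them, so one must route the induction through each child $A$, $B$, $D$ and then separately rescue the single exceptional element of $A$ and of $B$ (its last, of type $0$) by identifying the edge it shares with $C=T_3$. Pinning down the incidence geometry of these apex subsimplices exactly---which of their small edges lies on the root boundary and which one is shared with $T_3$---is the delicate part of the argument.
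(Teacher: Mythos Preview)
Your proof is correct. It uses the same ingredients as the paper---induction on $L$, Lemma~\ref{lem:faceconn2dhelp}, and the walk-forward principle---but is organized around the equivalent pointwise statement $(\ast)$ rather than a case split on where the segment begins. The paper inducts directly on the lemma and cases on the local index $j\in\{0,1,2,3\}$ of the level-$1$ subtree containing the \emph{first} element of the segment: for $j=2$ (the type-$1$ middle child) it invokes Lemma~\ref{lem:faceconn2dhelp} and walks forward, while for $j\in\{0,1,3\}$ it writes only ``the type of $T'$ is $0$ and the statement follows by induction.'' Your formulation via $(\ast)$ makes the argument more uniform and, crucially, makes explicit a step the paper leaves to the reader: that the last level-$L$ element of each corner subtree $T_0$ and $T_1$ has an edge on the shared boundary with $T_3$ and hence a face-neighbor of larger TM-index. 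That geometric gluing is exactly what is needed to justify the paper's terse ``by induction'' for $j\in\{0,1\}$, so your proof can be read as a more careful unpacking of the same argument.
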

\begin{proof}
  We present the proof for $\type (T)=0$, since we can then use
  the symmetry of the 2D curve (Remark \ref{rem:tetsym}) to obtain
  the result for the case $\type (T)=1$.
  We proceed by induction over $L$.

  For $L=0$ there is only one possible segment and it is face-connected.
  For $L=1$ we obtain the result by investigating all 10 cases.
  For $L>1$, let $j\in\{0,1,2,3\}$ be the local index of the level
  1 subtree $T'$ of $T$ in which the first level $L$ subsimplex of the segment lies.
  If $j\in\{0,1,3\}$, then the type of $T'$ is 0 and the statement follows
  by induction.
  Thus, let $j=2$, i.e., the segment starts in the type 1 subtree of $T$.
  The part of the segment that is not inside $T'$ is the full last subtree of
  $T$ (local index 3) and thus it is face-connected in itself.
  With Lemma \ref{lem:faceconn2dhelp} we conclude that each subsimplex in the
  subsegment in $T'$ is face-connected to a simplex with greater TM-index.
  Iterating this process, we conclude that each of these subsimplices is
  face-connected to a subsimplex of the full last subtree of $T$.
  Thus, the whole segment is face-connected.
\end{proof}

For all other segments beginning with the first or ending in the last level $L$
subsimplex, and notably for all of those segments in 3D, we obtain an upper
bound of $L+1$ face-connected components, which we show in the next two lemmas.

\begin{lemma}
\label{lem:faceconnlem0}
 Let a segment of the TM-SFC for a uniform level $L$ refined $d$-simplex
 consist of several full level 1 subsimplices plus
 one single
 level $L$ simplex either at the end or at the beginning, then this segment has
 at most two face-connected components.
\end{lemma}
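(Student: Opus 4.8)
The plan is to reduce the claim to the connectivity of an explicit finite graph and then verify it by a short case analysis. Label the $2^d$ level-$1$ subtrees of the root simplex $C_0,\dots,C_{2^d-1}$ in TM-order, so $C_i$ is the TM-child with $I_\mathrm{loc}=i$ (Proposition~\ref{proposition:locindexdeptype}, Table~\ref{table:BeytoIndex}), and let $b$ be the type of the root. A segment as in the Lemma has the form $\Sigma=\{s\}\cup C_a\cup\cdots\cup C_e$ with $1\le a\le e\le 2^d-1$ and $s$ the last level-$L$ leaf of $C_{a-1}$, or the mirror form $\Sigma=C_a\cup\cdots\cup C_e\cup\{s\}$ with $a\le e\le 2^d-2$ and $s$ the first level-$L$ leaf of $C_{e+1}$. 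In particular $C_a,\dots,C_e$ is always a proper contiguous sub-run of $C_0,\dots,C_{2^d-1}$, omitting either the first or the last TM-child.

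First I would argue that face-connectivity of $\Sigma$ is governed by geometric $(d-1)$-face adjacency of the $C_k$'s (and of $s$). A uniformly refined $d$-simplex has a face-connected leaf set: for $d=2$ this is Lemma~\ref{lem:faceconn2d} applied to the all-leaves segment, and for $d=3$ it follows by induction on the level from Property~\ref{property:commref}, since the refinement sits inside a uniform cube refinement whose sub-cube simplex-parts match along shared faces. For the same reason, if two siblings $C_k,C_{k'}$ share a $(d-1)$-face, then the union of their level-$L$ leaves is face-connected, the shared face carrying the same induced refinement from both sides. Hence the number of face-connected components of $\Sigma$ equals the number of connected components of the finite graph $H$ on vertex set $\{s,C_a,\dots,C_e\}$ with edges given by $(d-1)$-face adjacency. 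Among the $C_k$ this is exactly the dual graph $G_b$ of Bey's subdivision of a type-$b$ simplex, read off from \eqref{eq:childnumbers} (or Tables~\ref{tab:faceneighbor2d}, \ref{tab:faceneighbor}); in 3D, $G_b$ is a $4$-cycle (the octahedron children) with one pendant corner-child attached to each cycle vertex, and in 2D a $3$-star.

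Next I would run the verification. For each type $b$ and each contiguous sub-run omitting the first (resp.\ last) vertex, one checks from $G_b$ that the induced subgraph has at most two components --- routine, since the graphs are tiny. It remains to place $s$: by the recursive TM-construction, the last TM-leaf of $C_{a-1}$ occupies a fixed corner of $C_{a-1}$ determined by $\type(C_{a-1})$ (iterate the last-TM-child rule), so $s$ is face-adjacent exactly to those siblings $C_k$ with which $C_{a-1}$ shares a $(d-1)$-face containing that corner --- again a finite lookup. One then verifies that whenever $G_b[\{C_a,\dots,C_e\}]$ has two components, $s$ is adjacent to (a leaf of) one of them, so $H$ has at most two components; in the remaining cases $H$ has at most two components a fortiori. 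In two dimensions this last point is trivial because every sub-run omitting $C_0$ or $C_{2^d-1}$ is already connected, so only the three-dimensional tables require work; and Remark~\ref{rem:tetsym} folds the ``$s$ at the end'' 2D case into the ``$s$ at the beginning'' case on a type-$1$ root, while in 3D both variants are covered by the same enumeration.

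The main obstacle is bookkeeping, not ideas: there is no global symmetry to halve the 3D case count, and one must consistently use ``shares a full $(d-1)$-face'' (three common vertices for tetrahedra) rather than ``shares an edge''. Several TM-consecutive pairs of children share only an edge, so a contiguous run of subtrees need not be connected --- precisely why the statement has content --- and keeping the argument organized around $G_b$ and the ``corner of the last/first TM-leaf'' tables is what makes it tractable.
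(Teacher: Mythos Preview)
Your approach is correct and is essentially the same as the paper's: the paper simply says the claim follows ``by enumerating all possible cases,'' and you organize this enumeration through the dual (face-adjacency) graph on the eight (resp.\ four) Bey children together with the position of the single extra leaf $s$. The reduction to the finite graph $H$ is sound: each full $C_k$ is face-connected (induction on level using that the eight Bey children form a connected dual graph), face-adjacent siblings glue face-connectedly, and the last (resp.\ first) TM-descendant of $C_{a-1}$ (resp.\ $C_{e+1}$) sits at the $x_3$ (resp.\ $x_0$) corner, so it has a face on exactly those boundary faces of $C_{a-1}$ that contain this corner.

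One slip: your claim that ``in two dimensions every sub-run omitting $C_0$ or $C_{2^d-1}$ is already connected'' is false. For a type~$0$ root the TM-order of children is $T_0,T_1,T_3,T_2$, and the run $\{T_0,T_1\}$ (omitting the last TM-child $T_2$) consists of two corner triangles that share only a vertex, hence has two components. The lemma still holds here because the attached simplex $s$, the first leaf of $T_3$, sits at the $x_0$-corner $x_{01}$ of $T_3$ and therefore has one face on $T_3\cap T_0$ and one on $T_3\cap T_1$, so $s$ bridges the two pieces. Thus the 2D case is not quite trivial in the way you claim, but it falls to exactly the same check you describe for 3D; once you drop the shortcut, your argument goes through unchanged.
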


\begin{proof}
  Similarly to the proof of Proposition~9 in~\cite{BursteddeHolkeIsaac17b},
  we can show this claim by enumerating all possible cases. There is no
  induction necessary.
\end{proof}
\begin{lemma} 
\label{lem:faceconnlem1}
 If a $d$-simplex is uniformly refined to level $L$, then any
 segment of the TM-SFC ending in the last subsimplex or starting in
 the first has at most $L+1$ face-connected components.
\end{lemma}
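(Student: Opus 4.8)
The plan is to induct on the maximal level $L$, proving the two statements (segments ending in the last, resp.\ starting in the first, level-$L$ subsimplex) simultaneously; the two cases are mirror images of each other, so I will only spell out the first. The base cases $L\le 1$ are immediate: for $L=1$ a segment ending in the last subsimplex is a contiguous range $\{c_j,\dots,c_{2^d-1}\}$ of the $2^d$ level-$1$ children in TM order, which one checks directly (or reads off from Lemma~\ref{lem:faceconnlem0}, with $c_{2^d-1}$ playing the role of the single level-$L$ simplex) to have at most $2=L+1$ face-connected components.

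For the inductive step, assume $L\ge 2$ and let $\mathscr G$ be a segment of the TM-curve on the level-$L$ refinement of a $d$-simplex $T$ ending in the last level-$L$ subsimplex. Let $f$ be the first element of $\mathscr G$ and let $T'$ be the level-$1$ child of $T$ whose subtree contains $f$, with $k=I_\mathrm{loc}(T')$ its position in the TM ordering of the children. Because the order of the children of a simplex depends only on its type (Proposition~\ref{proposition:locindexdeptype}), the restriction of the TM-curve to the subtree of any level-$1$ child is again a TM-type curve on a uniform level-$(L-1)$ refinement, so the induction hypothesis is available inside such a subtree. If $k=2^d-1$, then $\mathscr G$ lies entirely inside the subtree of $T'$ and is a suffix of its curve ending in its last subsimplex, hence has at most $L\le L+1$ components by induction. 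Otherwise write $\mathscr G=A\cup B$, where $A=\mathscr G\cap T'$ is the (suffix) portion of the segment inside $T'$'s subtree and $B=T^{(k+1)}\cup\dots\cup T^{(2^d-1)}$ is the union of the full level-$1$ subtrees following $T'$ in TM order. Since $\mathscr G$ extends past $T'$, the set $A$ is a suffix of $T'$'s curve and in particular contains the last level-$L$ subsimplex $s$ of $T'$; by the induction hypothesis $A$ has at most $L$ face-connected components. On the other hand $\{s\}\cup B$ is a contiguous segment consisting of several full level-$1$ subsimplices together with the single level-$L$ simplex $s$ at its beginning, so Lemma~\ref{lem:faceconnlem0} bounds its number of components by $2$.

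It then remains to glue the two pieces. Since $s$ belongs to both $A$ and $\{s\}\cup B$, the bipartite incidence graph between the components of $A$ and the components of $\{s\}\cup B$ has at least one edge, so it has at most $\#\mathrm{comp}(A)+\#\mathrm{comp}(\{s\}\cup B)-1$ connected components; these are exactly the components of $\mathscr G=A\cup(\{s\}\cup B)$, of which there are therefore at most $L+2-1=L+1$. The case of a segment starting in the first subsimplex is identical, with $T'$ replaced by the subtree of the last element, $B$ by the union of the preceding full subtrees, and $s$ by the first level-$L$ subsimplex of that subtree (which then sits at the end of $\{s\}\cup B$). I expect the genuine difficulty to be confined to Lemma~\ref{lem:faceconnlem0} itself — the $\le 2$ bound for a block of full level-$1$ subtrees plus one boundary element, proved by case enumeration — while in the present lemma the only points needing care are the level bookkeeping (a level-$1$ subtree of an $L$-refinement is a genuine $(L-1)$-refinement), the correct identification of the shared element $s$ as the last/first subsimplex of the relevant level-$1$ subtree, and the trivial edge cases where $A$ is the whole subtree or $B$ is empty, all of which fall back directly to the induction hypothesis.
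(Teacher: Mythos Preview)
Your proof is correct and follows essentially the same approach as the paper: decompose the segment into one incomplete level-$1$ subtree (bounded by $L$ components via induction) plus a block of full level-$1$ subtrees together with the shared boundary element (bounded by $2$ via Lemma~\ref{lem:faceconnlem0}), then glue for $L+2-1=L+1$. The only cosmetic differences are that the paper spells out the ``starting in the first'' case rather than the ``ending in the last'' case, and starts the induction at $L=0$ rather than $L\le 1$.
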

\begin{proof}
  Consider the case that the segment starts in the first simplex.
  For $L=0$ there is only one possible segment consisting of the unique level
  0 subsimplex and it is thus face-connected.
  Let now $L>0$.
  Since the segment begins at the very first level $L$ subsimplex, we can
  separate it into two parts.
  The first part at the beginning consists of 0 to $2^d-1$ full level 1 subtrees,
  and the second part is one possibly incomplete level 1 subtree.

  By the induction assumption, the second part has at most $L$ face-connected
  components.
  From Lemma \ref{lem:faceconnlem0} we obtain that the first part
  together with the first level $L$ subsimplex of the second part has at most
  two face-connected components.
  Since this first level $L$ subsimplex is contained in one of the components
  of the second part, we obtain
  \begin{equation}
    L + 2 - 1 = L + 1
  \end{equation}
  components in total.

  If the segments ends in the last simplex, the order of parts is reversed.
  The first part of the segment is the part in the level 1 subtree where
  the segment starts, and the second part consists of the remaining full
  level 1 subtrees.
  We obtain the bound on the number of face-connected components using the same
  inductive reasoning as above.
\end{proof}

We have so far argued the connectivity of specific kinds of SFC segments.
This suffices to proceed to arbitrary segments of the tetrahedral Morton SFC.
\begin{proposition}
\label{prop:faceconncomp}
 Any contiguous segment of the TM-SFC of a uniform level $L\geq2$
 refinement of a type 0 simplex has at most $2(L-1)$ face-connected components in 2D and
 $2L+1$ face-connected components in 3D.
 For $L=1$, there are at most two face-connected components, and one for $L=0$
 (this applies to both 2D and 3D).
\end{proposition}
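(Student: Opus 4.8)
The plan is to prove the statement by induction on the maximum refinement level $L$, using a divide-and-conquer decomposition of an arbitrary contiguous segment $[a,b]$ of the TM-SFC, in the spirit of the cubical argument in~\cite{BursteddeHolkeIsaac17b}. First I would fix a level-$L$ uniformly refined type~$0$ simplex $T$ and observe that a segment $[a,b]$ touches a consecutive block $T'_{a},T'_{a+1},\dots,T'_{b}$ of the $2^d$ level-$1$ children of $T$, restricting to a \emph{suffix} ending in the last subsimplex of $T'_a$, to \emph{full} subtrees $T'_{a+1},\dots,T'_{b-1}$, and to a \emph{prefix} starting in the first subsimplex of $T'_b$. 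Two facts reduce everything to the suffix and prefix pieces: a fully refined simplex is a conforming triangulation of a connected region and hence always forms a single face-connected component; and consecutive level-$1$ children are face-adjacent except across one ``SFC gap'' whose position depends only on $\type(T)$ (readable off Table~\ref{table:typesofchildren}, Table~\ref{table:BeytoIndex} and Lemma~\ref{lem:faceconn2dhelp}), so the block of full middle subtrees contributes at most one component. If the segment lies inside a single $T'_a$, the inductive hypothesis at level $L-1$ applies directly; in 2D one invokes the reversal symmetry of Remark~\ref{rem:tetsym} to cover the type~$1$ child, while in 3D one runs a simultaneous induction over all six types, whose children already generate all of $\{0,\dots,5\}$.

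Next I would bound the suffix and the prefix. In 3D this is the easy direction: Lemma~\ref{lem:faceconnlem1} bounds a suffix or prefix of a level-$(L-1)$ subtree by $L$, and Lemma~\ref{lem:faceconnlem0} bounds the middle block together with the first subsimplex of $T'_b$ by two, so joining the pieces gives at most $L+(L+1)=2L+1$ components, while the single-subtree case only gives $2(L-1)+1$; thus the induction closes once the base case $L=1$ is verified by enumeration. In 2D the point is to sharpen $2L+1$ down to $2(L-1)$, and here I would use Lemma~\ref{lem:faceconn2d}: a suffix ending in the last subsimplex of a \emph{type~$0$} subtree, and a prefix starting in the first subsimplex of a \emph{type~$1$} subtree, are each a single face-connected component. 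Tracking which of the four children of a type~$0$ (respectively type~$1$) simplex have which type and which SFC position --- so that the ``unfavourable'' suffix/prefix directions occur only for the outermost subtrees, and then in combination with the favourable adjacency across a non-gap transition --- and feeding any remaining unfavourable piece back into the induction via Remark~\ref{rem:tetsym} and Lemma~\ref{lem:faceconnlem1}, one arrives at the bound $2(L-1)$. The base cases $L=0$ (one simplex, one component) and $L=1$ (a finite check over segments among the $2^d$ children, at most two components) anchor both inductions.

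I expect the main obstacle to be the 2D bookkeeping needed to reach the \emph{sharp} constant $2(L-1)$ rather than the loose $2L+1$: it requires exploiting the exact pattern of child types and SFC positions together with the directional statements of Lemma~\ref{lem:faceconn2d}, and carefully accounting for which of the inter-piece adjacencies actually survive inside a given segment (so that re-merging is neither over- nor under-counted). A secondary technical point is keeping the 3D induction self-contained: since the children of a type~$0$ tetrahedron realise five of the six types, the inductive statement must be formulated and proved for all six root types at once, with the $L=1$ base case checked for each.
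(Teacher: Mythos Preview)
Your 3D argument is essentially the paper's: split an arbitrary segment into a suffix in one level-$1$ subtree, a block of full level-$1$ subtrees, and a prefix in another; bound each outer piece by $L$ via Lemma~\ref{lem:faceconnlem1}, and use Lemma~\ref{lem:faceconnlem0} to show the middle block contributes only one additional component, yielding $L+1+L=2L+1$. Your remark that the single-subtree case needs the inductive hypothesis for all six types is a detail the paper leaves implicit (it simply writes ``done by induction''), but it is not a different idea. Your side claim about a single ``SFC gap'' among consecutive level-$1$ children is stronger than needed and not verified in 3D for all types; fortunately you never actually rely on it, since Lemma~\ref{lem:faceconnlem0} already handles the middle block.

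The 2D sharpening to $2(L-1)$ is where your plan falls short of the paper's argument. You correctly extract from Lemma~\ref{lem:faceconn2d} that a suffix of a type-$0$ subtree and a prefix of a type-$1$ subtree are each single components. The paper then observes that the only way \emph{both} outer pieces can exceed one component is for the suffix to lie in the unique type-$1$ child (local index~$2$) and the prefix in the type-$0$ child at local index~$3$, forcing the middle to be empty. The decisive step---which your plan does not identify---is a direct geometric adjacency, illustrated in Figure~\ref{fig:explaintrianglprop}: when one of these two adjacent pieces attains $L$ components, its first two (respectively last two) level-$L$ simplices are face-adjacent to the corresponding simplices of the other piece, so two components merge and $L+L$ drops to $2(L-1)$. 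Your phrase ``feeding any remaining unfavourable piece back into the induction via Remark~\ref{rem:tetsym} and Lemma~\ref{lem:faceconnlem1}'' does not capture this; the reduction is not inductive but a one-shot adjacency check between the specific pair of subtrees at local indices~$2$ and~$3$. Without that observation (or an equivalent), the bookkeeping you anticipate will stall at a bound of the form $L+O(1)$ in the easy cases and $2L$ in the hard one, not the claimed $2(L-1)$.
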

\begin{proof}
 Again, the cases $L=0$ and $L=1$ follow by inspecting all cases.
 Thus, let $L\geq 2$.
 We first show that for $d \le 3$ the number of face-connected
 com\-po\-nents is bounded by $2L+1$:
 If a given segment is contained in a level 1 subtree, we are done by induction.
 Otherwise we can divide the segment into three (possibly empty) pieces:
 First,
 the segment in one incomplete level 1 subtree ending at its last level $L$
 subsimplex,
 then one contiguous segment of full level 1 subtrees
 and finally a segment in one (possibly incomplete) level 1 subtree that
 starts at its first level $L$ subsimplex.
 Lemma~\ref{lem:faceconnlem1}
 implies that the first and the last piece have at most $L$ face-connected
 components each.
 By Lemma~\ref{lem:faceconnlem0}, the second piece has one or two
 face-connected components, and if the number is two, then it is face-connected
 to the first or to the third piece.
 Thus, it adds only one face-connected component to the total number, and we
 obtain at most
 \begin{equation}
  L + 1 + L = 2L + 1
 \end{equation}
 face-connected components.

Let us now specialize to 2D.
We conclude from Lemma \ref{lem:faceconn2d} that
the first subsegment only adds more than one face-connected component if
it is contained in the only level 1 subtree of type 1 (local index 2).
Similarly, the third subsegment only adds more than one face-connected component
if it is contained in a level 1 subtree of type 0.
In particular, if both subsegments add more than one face-connected component, the
third subsegment is contained in the last level 1 subtree (local index 3).
Thus, the second subsegment is empty in this case.

If both of these subsegments have less than $L$ face-connected components, there
is nothing left to show since the overall number of components is then less
than or equal to $2(L-1)$.
So suppose that one of the subsegments has $L$ face-connected components and the other one
has at least $L-1$.
We depict this situation in Figure~\ref{fig:explaintrianglprop}.
We observe that the first and second level $L$ simplex in this first segment
are face-connected to the first and second level $L$ simplex in the second segment.
If, however, the second subsegment has $L$ face-connected components then its last two 
level $L$ simplices are face-connected to the last two level $L$ simplices of the first 
subsegment.

We thus can subtract two face-connected components from the total count, which leads
to at most
\begin{equation}
  L + L - 2 = 2(L-1)
\end{equation}
face-connected components in total.
\end{proof}

\begin{figure}
  \center
  \includegraphics[width=0.6\textwidth]{./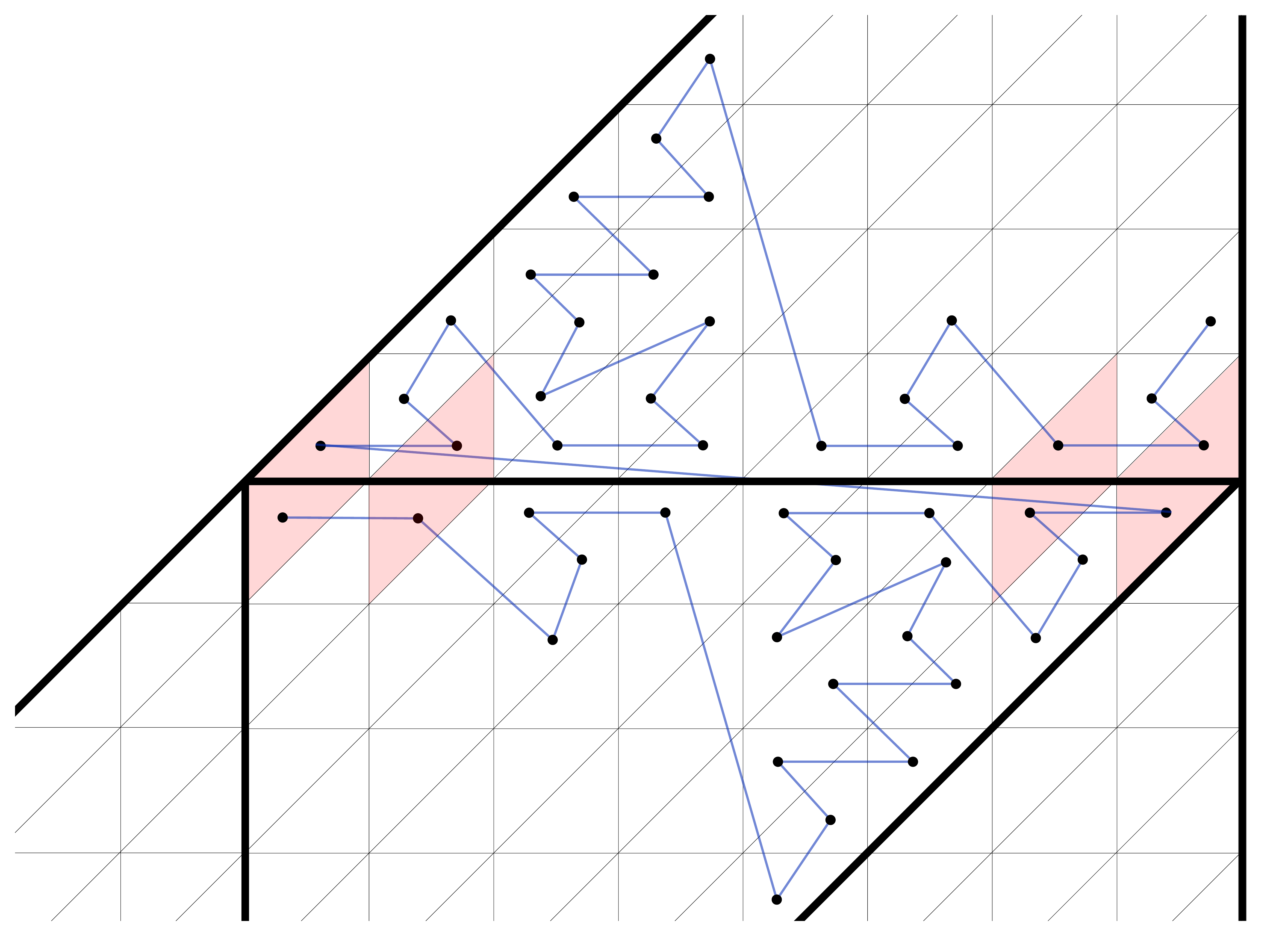}
  \caption[The 2D case in the proof of Proposition \ref{prop:faceconncomp}]
  {An illustration of the 2D case in the proof of Proposition
   \ref{prop:faceconncomp} for $L=4$. The bottom segment has the maximal number of
   $L$ face-connected components. 
   Since its first and second triangle (on the left, shaded in pink) are face-connected with the
  top segment, the possible number of face-connected components is reduced by two.
  If the second segment has $L$ face-connected components as well, then its
  last two triangles (on the right) are face-connected with the bottom segment.
  Thus, the number of face-connected components is less than or equal to $2L-2$.}
  \label{fig:explaintrianglprop}
\end{figure}

We briefly discuss whether we can sharpen these bounds.
In 2D, this is not possible by counterexample; see Figure~\ref{fig:face_conn_ex}.
In 3D, we construct a segment with $2L$ face-connected components using the
consecutive SFC-indices 22--25 of a uniform level 2 refinement of a type 0
tetrahedron.
We believe that the case that the first and the last piece described in
the proof of Proposition \ref{prop:faceconncomp} have $L$ face-connected
components each and that additionally the middle piece adds one component
does not occur.
\begin{conjecture}
\label{con:3TMconjecture}
 In 3D, the number of face-connected components is bounded by $2L$.
 This estimate is sharp.
\end{conjecture}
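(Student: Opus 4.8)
The plan is to keep the divide-and-conquer proof of Proposition~\ref{prop:faceconncomp} essentially verbatim and to close the one-unit gap between $2L+1$ and $2L$ by inserting a single new lemma that rules out one coincidence. Recall that the $2L+1$ bound in 3D came from writing an arbitrary contiguous segment as three (possibly empty) pieces: a first piece that lies in an incomplete level~$1$ subtree and ends in that subtree's last level-$L$ simplex (at most $L$ components, by Lemma~\ref{lem:faceconnlem1}); a middle piece consisting of full level~$1$ subtrees (one or two components, by Lemma~\ref{lem:faceconnlem0}, and always face-adjacent to the first or the third piece); and a third piece in an incomplete level~$1$ subtree that starts in its first level-$L$ simplex (again at most $L$ components). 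The excess ``$+1$'' is incurred only when the first and third pieces both realise exactly $L$ components and, simultaneously, the middle piece contributes a component that is not merged into its neighbours. So it suffices to prove: \textbf{this configuration never occurs.}

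To attack this I would first establish the three-dimensional counterpart of Lemma~\ref{lem:faceconn2d}, which in 2D says that a uniform level-$L$ segment of a type-$0$ triangle ending in the last subsimplex is face-connected, and symmetrically for type~$1$ and initial segments. The 2D proof rests on the ``climbing lemma'' Lemma~\ref{lem:faceconn2dhelp}: each type-$1$ subsimplex is face-connected (via its top face) to a simplex of larger TM-index, and likewise each type-$0$ subsimplex in the middle subtree. First I would look for the analogous move in 3D: for each type $b\in\{0,\dots,5\}$, single out a face $f(b)$ of a type-$b$ tetrahedron whose face-neighbour has strictly larger local index inside \emph{every} parent, reading this off Tables~\ref{table:typesofchildren}, \ref{table:BeytoIndex} and the face-neighbour Table~\ref{tab:faceneighbor}; order-preservation of the TM-index under refinement (Theorem~\ref{thm:IndexProps}) then upgrades this, exactly as in Lemma~\ref{lem:faceconn2dhelp}, to: any type-$b$ level-$L$ simplex whose $f(b)$-neighbour remains inside the root tetrahedron is face-connected to a simplex of larger TM-index. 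Iterating these moves up the tree would yield a classification, for each root type $b$, of which level-$1$ subtrees can contain a first piece --- or a third piece --- realising $L$ distinct components. The expected outcome, mirroring 2D, is that a first piece attains $L$ components only for a short list of subtree types/positions, a third piece only for another short list, and that no full level~$1$ subtree fits strictly between two subtrees drawn from these lists --- so a nonempty middle piece forces at least one outer piece below $L$, and the forbidden configuration collapses. Combined with the fact that when the middle piece is empty the count is at most $L+L=2L$, this gives the conjectured bound.

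For the sharpness half I would start from the example already in hand --- the four consecutive SFC indices $22$--$25$ of a uniform level-$2$ refinement of a type-$0$ tetrahedron, which produces $4=2L$ components (see Figure~\ref{fig:face_conn_ex}) --- and lift it to arbitrary $L$ by embedding this two-level pattern as the final two refinement levels of an appropriately chosen subtree while keeping all coarser levels full, so that each extra refinement level contributes exactly two additional components, just as the 2D family in Figure~\ref{fig:face_conn_ex} realises $2(L-1)$.

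The main obstacle will be the 3D connectivity classification. Two dimensions is docile: only two types, a planar picture, a base case ($L=1$) that reduces to ten segments, and the symmetry ``reversing the type-$0$ curve equals the forward type-$1$ curve'' (Remark~\ref{rem:tetsym}) that halves the casework. None of these helps in 3D --- six types, no comparable symmetry, a base case $L=2$ already spread over many subcases --- and, worse, the naive climbing move $f(b)$ may leave the root tetrahedron for some types at boundary positions, so the classification will carry genuine exceptions that must be propagated through the induction; this is almost certainly why the bound is only conjectured. A pragmatic first step would be an exhaustive machine enumeration of all segments up to level $3$ or $4$, both to confirm the $2L$ bound empirically and to pin down the precise exceptional set before committing to a clean inductive write-up.
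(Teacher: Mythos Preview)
The statement you are attempting to prove is labelled a \emph{Conjecture} in the paper, and the paper does \emph{not} supply a proof. Immediately before stating the conjecture, the authors write only that they ``believe that the case that the first and the last piece described in the proof of Proposition~\ref{prop:faceconncomp} have $L$ face-connected components each and that additionally the middle piece adds one component does not occur,'' and they give the level-$2$ example with indices $22$--$25$ as evidence for sharpness. There is therefore no paper proof to compare your proposal against.

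Your plan is in fact a faithful elaboration of the authors' stated intuition: you aim to rule out precisely the coincidence they single out (first and third pieces both at $L$ components, middle piece contributing an extra one), and you propose to do so by lifting the 2D ``climbing'' Lemma~\ref{lem:faceconn2dhelp} to 3D via the face-neighbour and local-index tables. That is the natural line of attack, and you are honest about the obstacles --- six types instead of two, no reversal symmetry in 3D (Remark~\ref{rem:tetsym} is 2D-only), and boundary exceptions in the climbing move. But note that what you have written is a \emph{programme}, not a proof: the crucial step, classifying for each root type which level-$1$ subtrees can host a first or third piece with exactly $L$ components, is asserted as an ``expected outcome'' rather than established. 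Likewise, the sharpness construction (lifting the $L=2$ example to general $L$ so that each new level adds two components) is sketched but not verified. Until those two pieces are actually carried out, the conjecture remains open --- which is exactly the status the paper assigns it.
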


\subsection{From uniform to adaptive meshes}
\label{sec:cc_main}

We close this section with the extension of the proof
from uniform to adaptive meshes, which
is the remaining step to establish Theorem~\ref{thm:illthmalltets},

We have completed the necessary proofs for a uniform space division for
triangular and tetrahedral refinement (see
\ref{sec:cc_illustrated}).
As we state in this section, an adaptive space division does not require any
more effort (see also \cite{BursteddeHolkeIsaac17b} and \cite[page 176]{Bader12}).
\begin{proof}[Proof of Theorem~\ref{thm:illthmalltets}]%
  Any adaptive tree of simplices with level $l\leq L$ can be refined into level
  $L$ simplices exclusively. This operation does not change the connectivity
  between boundaries of the designated subdomain. In particular, the number of
  face-connected subdomains remains unchanged and the proof reduces to applying
  Proposition~\ref{prop:faceconncomp} above.
\end{proof}

\subsection{From one tree to a forest}
\label{sec:cc_forest}

If we consider a forest of octrees as in Section~\ref{sec:SFConforest} or in
\cite{StewartEdwards04,BangerthHartmannKanschat07, BursteddeWilcoxGhattas11}, a
contiguous segment of the TM curve may traverse more than one tree.
In this case, the segment necessarily contains the last subsimplex of any
predecessor tree, as well as the first subsimplex of any successor tree in the
segment.
For the simplicial case, we may use Lemmas~\ref{lem:faceconn2d} and
\ref{lem:faceconnlem1} to use
the bounds $L+1$ (2D) and $2L+1$ (3D) for the respective parts of the segment,
not counting the transitions between full trees.

\section{Enumeration of face-connected segments}
\label{sec:cc_enumeration}

We would like to examine not only how many pieces an SFC segment can have, but
also how frequently segments of different numbers of pieces occur.
To this end, we supply numerical studies for the TM-SFCs and compare
the results with the cubical Morton curve.

We enumerate all possible TM-SFC segments for a given uniform
refinement level and compute the number of their face-connected components.
We also compute the relative counts of face-connected and non-connected segments for
the cubical Morton curves.
We achieve this by performing a depth-first search on the connectivity
graph of the submesh generated by the segment%
\footnote{\url{https://github.com/holke/sfc_conncomp}}.

In an application, all possible lengths of SFC segments can occur. On the one
hand, we could have a forest consisting of a single tree. If the
number of participating processes is of the same magnitude than the number of
elements in that tree, then very small lengths of segments occur, possibly even
segments consisting only of a single element. 
On the other hand, consider a setting where we have many trees, possibly as
many or more trees than processes. In this case, the lengths of SFC segments
within a single tree can be arbitrarily large, reaching up to the maximum of
the full tree.
See for example our discussion in Section~\ref{sec:cmeshnumres}.

We now compute the fraction of face-connected segments of any length among all
possible segments.
More precisely, we compute for each possible count of face-connected components
the chance that any randomly choosen SFC segment (with a random length) has
exactly this number of face-connected components.

For a uniform level 5 refined tetrahedron we obtain that 61\% of all
SFC segments are face-connected and only 7\% have four or more face-connected
components.
For a uniform level 8 refined triangle, about 64\% of the segments are face-connected
with 2\% of the segments having four or more components.
For cubes and quadrilaterals, the respective ratios of face-connected segments are
60\% and 71\% (here we know that the disconnected segments have exactly two
components). Thus, comparing cubical and TM curve, we see that in 2D more
segments of the quadrilateral Morton curve are face-connected, and in 3D more
segments of the TM curve are face-connected.

We collect these results in Figure~\ref{fig:tritetnumcomp} and
Table~\ref{tab:componentpercent}.

\begin{figure}
        \begin{center}
\includegraphics[width=0.65\textwidth]{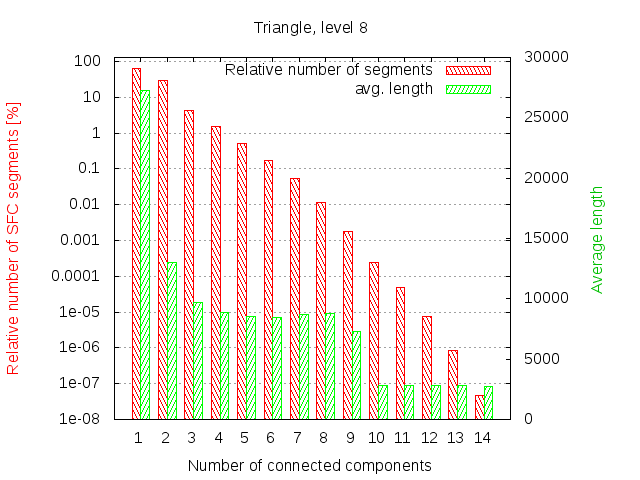}\\
\includegraphics[width=0.65\textwidth]{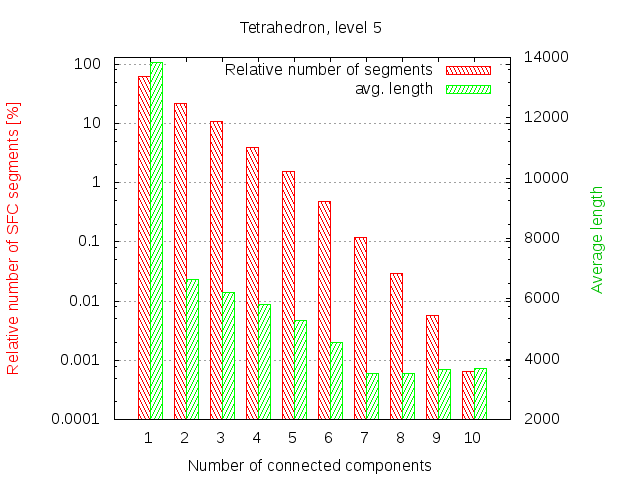}
        \end{center}
\caption[The relative count of TM-curve segments by number of face-connected
components] 
{The relative count of SFC segments (left $y$-axis) by number of face-connected
components and the average length (right $y$-axis) of these segments. 
Left: the distribution for a uniform level 8 refined
triangle. We observe that almost 98\% of all SFC segments have three face-connected
components or less.
63.7\% are face-connected, 29.7\% have two face-connected components and 4.4\% have
three face-connected components.
Right: the distribution for a uniform level 5 refined tetrahedron.
Here, more than 93\% of the segments have three face-connected components or less. 
61.0\% have exactly one face-connected component, 22.1\% two face-connected
components and 10.7\% three face-connected components.
The
highest number of segments occuring are $14=2(8-1)$ in 2D, and $10=2\cdot 5$ in 3D.
This is in agreement with Proposition~\ref{prop:faceconncomp} (2D) and
Conjecture~\ref{con:3TMconjecture} (3D).}
\label{fig:tritetnumcomp}
\end{figure}

\begin{table}
        \begin{center}
\begin{tabular}{|l|r|r|r|r||r|r|}
\hline
&\multicolumn{4}{c||}{Level 5} & \multicolumn{2}{c|}{Level 8}\\
\hline
& Quads & Cubes & Triangles &
\multicolumn{1}{c||}{Tets} & Quads & Triangles\\ \hline
Face-connected     &71.6\% & 60.0\% &63.9\% & 61.0\% &71.4\%&63.7\%\\
Non-connected &28.4\% & 40.0\% &36.1\% & 39.0\% &28.6\%&36.3\%\\
\hline
\end{tabular}
        \end{center}
\caption[The relative counts of face-connected and non-connected segments of 
         Morton SFCs]
  {The relative counts of face-connected and non-connected segments across all
  possible SFC segments of a uniform level 5 and level 8 (2D only) refinement.
  }
\label{tab:componentpercent}
\end{table}

\section{Conclusion}
\label{sec:cc_conclusion}

We show that the bound for the TM-SFC is of order $L$ and thus growing with the
level of refinement.
Yet, we can demonstrate numerically that the fraction of face-connected to
non-connected segments is close to the cubical case. In practice, we may
expect both approaches to behave similarly.

Our result would appear relevant to make informed choices about the
type of space-filling curve to use, for example in writing a new element-based
parallel code for the numerical solution of partial differential equations, or
any other code that benefits from a recursive subdivision of space.
Our theory and experiments support the existing numerical evidence that a
fragmentation of the parallel partition is not observed.

 \chapter{Coarse Mesh Partitioning}
\label{ch:cmesh}
This chapter is based on the paper~\cite{BursteddeHolke17}.
We edited it slightly in order to fit into the general notations of this
thesis, without changing its mathematical content. Copyright \copyright\xspace
by SIAM. Unauthorized reproduction of this chapter is prohibited.

\vspace{2ex}
As we discuss in Chapters~\ref{ch:AMR} and~\ref{ch:sfc}, a technique to model
complex domain shapes is to patch multiple trees together in an unstructured
coarse mesh, giving rise to a forest of elements.
The number of trees per process is limited by the available memory to roughly
$1\e5$ to $1\e6$~\cite{BursteddeWilcoxGhattas11}.  In industrial and medical
meshing however, numbers in the range of one billion or more trees are not
uncommon~\cite{IbanezSeolSmithEtAl16,RasquinSmithChitaleEtAl14,FengTsolakisChernikovEtAl17}.
In order to support such cases, we need to partition the coarse mesh among 
the parallel processes.

Two main approaches for partitioning a forest of elements have been discussed
\cite{Zumbusch03}, namely
(a) assigning each tree and thus all of its elements to one owner process
\cite{SelwoodBerzins99, BurriDednerKloefkornEtAl06, ItoShihErukalaEtAl07},
or
(b) allowing a tree to contain elements belonging to multiple processes
\cite{BangerthHartmannKanschat07, BursteddeWilcoxGhattas11}.
The first approach offers a simpler logic but may not provide acceptable
load balance when the number of elements differs vastly between trees.
The second allows for perfect partitioning of elements by number (the local
numbers of elements between processes differ by at most one) but presents the
issue of trees that are shared between multiple processes.

We choose paradigm (b) for speed and scalability, challenging ourselves to
solve an $n$-to-$m$ communication problem for every coarse mesh element.  Thus
the objective of this chapter is to develop how to do this without handshaking
(\ie, without having to determine separately
which process receives from which) and with a minimal number of senders,
receivers, and messages.
Our main contribution is to avoid identifying a single owner process for each tree
and instead treat all its sharer processes as algorithmically active, under the
premise that they produce a disjoint union of the information necessary to be
transferred.  In particular, each process shall store the relevant tree meta
data to be readily available, eliminating the need to transfer this data from a
single owner process.

In this chapter, we also integrate the parallel transfer of ghost trees.
The reason for this is that each process will eventually collect ghost elements, \ie,
remote elements adjacent to its own.
We discuss the ghost algorithm on elements in Chapter~\ref{ch:ghost}.
Ghost elements of any process may be part of trees that are not in its local
set. To disconnect the ghost element transfer from identifying and
transferring ghost trees, we perform the latter as part of the coarse mesh
partitioning, presently across tree faces.
We study in detail what information we must maintain to reference neighbor
trees of ghost trees (that may themselves be either local, ghost, or neither)
and propose an algorithm with minimal communication effort.

We have implemented the coarse mesh partitioning for triangles and tetrahedra
using the TM-SFC designed in Chapter~\ref{ch:tetSFC}, and for quadrilaterals and
hexahedra exploiting the logic from \cite{BursteddeWilcoxGhattas11}.
To demonstrate that our algorithms are safe to use, we verify that (a) small
numbers of trees require run times on the order of milliseconds and thus
present no noticeable overhead compared to a serial coarse mesh, and  (b)
the coarse mesh partitioning adds only a fraction of run time compared to the
partitioning of the forest elements, even for extraordinarily large numbers of
trees.
We show a practical example of 3D dynamic AMR on 8e3 processes using 383e6 trees
and up to 25e9 elements.
To investigate the ultimate limit of our algorithms, we partition coarse meshes
of up to 371e9 trees on JUQUEEN using 917e3 processes, obtaining a
total run time of about 1.2s and a rate of 340e3 trees per second per process.
On 131e3 processes with half as many ranks per compute node we obtain rates as
high as 750e3 trees per second per process.

We may summarize our results by saying that partitioning the trees can be made
even less costly than partitioning the elements and often executes so fast that
it does not make a difference at all.
This allows a forest code that partitions both trees and elements dynamically
to treat the whole continuum of forest mesh
scenarios, from one tree with nearly trillions of elements on the one extreme
to billions of trees that are not refined at all on the other, with comparable
efficiency.

\section{Tree-based AMR}

We repeat some of the concepts from Chapters~\ref{ch:AMR} and~\ref{ch:sfc}.

A forest $\mathscr F$ consist of a coarse mesh of trees and a fine mesh
of elements that resides from refining the trees.
We use SFCs to order the elements within each tree.
SFCs map the $d$-dimensional elements of a refinement tree to an interval by
assigning a unique integer index $\mathcal I_\mathscr F(E)$ to each element $E$,
see Lemma~\ref{lem:forestconsindex}.
Thus, we can order all elements of that refinement tree linearly in an array.
As in \cite{SundarSampathBiros08}, we do not store the internal (non-leaf)
nodes of the tree.

The choice of SFC affects the ordering of these elements of the forest mesh and
thus the parallel partition of elements.
Possibilities include, but are not limited to, the Hilbert, Peano, or Morton
curves for quadrilaterals and hexahedra \cite{Peano90, Hilbert91, Morton66,
WeinzierlMehl11}, as well as the Sierpi\'nski curve for triangles
\cite{Sierpinski12,BaderZenger06} and the tetrahedral Morton curve for
triangles and tetrahedra from Chapter~\ref{ch:tetSFC}.

As of equation~\eqref{eq:forestorder}, a global order of elements is established first by tree and then by their index
with respect to an SFC (see also \cite{Bader12}):
We enumerate the $K$ trees of the coarse mesh by $0,\ldots,{K-1}$ and
call the number $k$ of a tree its \emph{global index}.
With the global index we naturally extend the SFC order of the leaves:
Let a leaf element of the tree $k$ have SFC index $I$ (within that tree); then we
define the combined index $(k, I)$.
This index compares to a second index $(k', J)$ as
\begin{equation}
  (k, I) < (k', J) \quad :\Leftrightarrow \quad
  \text{$k < k'$ or ($k = k'$ and $I < J$)}
  .
\end{equation}
In practice we store the mesh elements local to a process in one contiguous
array per locally nonempty tree in precisely this order.

The algorithms and techniques discussed in this chapter assume an SFC induced
order among the elements, but they are not affected by the particular choice of
SFC. 
In the \tetcode
software used for the demonstrations in this thesis, we have so far implemented
Morton SFCs for quadrilaterals and hexahedra via the \pforest library
\cite{Burstedde10a} and the tetrahedral Morton SFC for tetrahedra and triangles.
These curves compute the index $m(E)$ of an element via bitwise interleaving
the coordinates of its lower left vertex in a suitable reference tree; see also
Figure~\ref{fig:twotreeforest} for an illustration of the curve on triangles.
Other SFC schemes may be added to the \tetcode in a modular fashion,
see Section~\ref{sec:highlow}.

\begin{figure}
\center
\includegraphics[width=0.45\textwidth]{./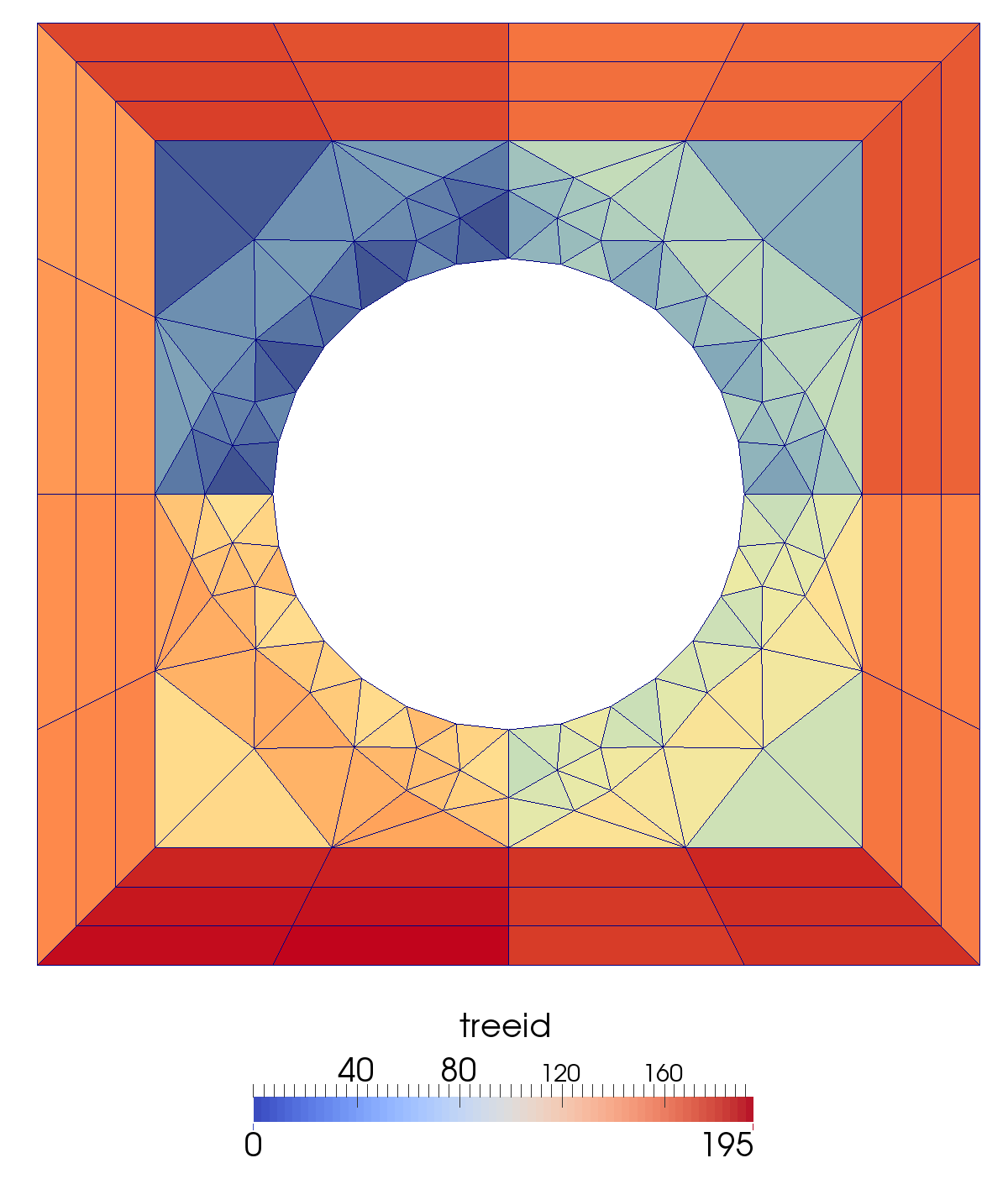}
\includegraphics[width=0.45\textwidth]{./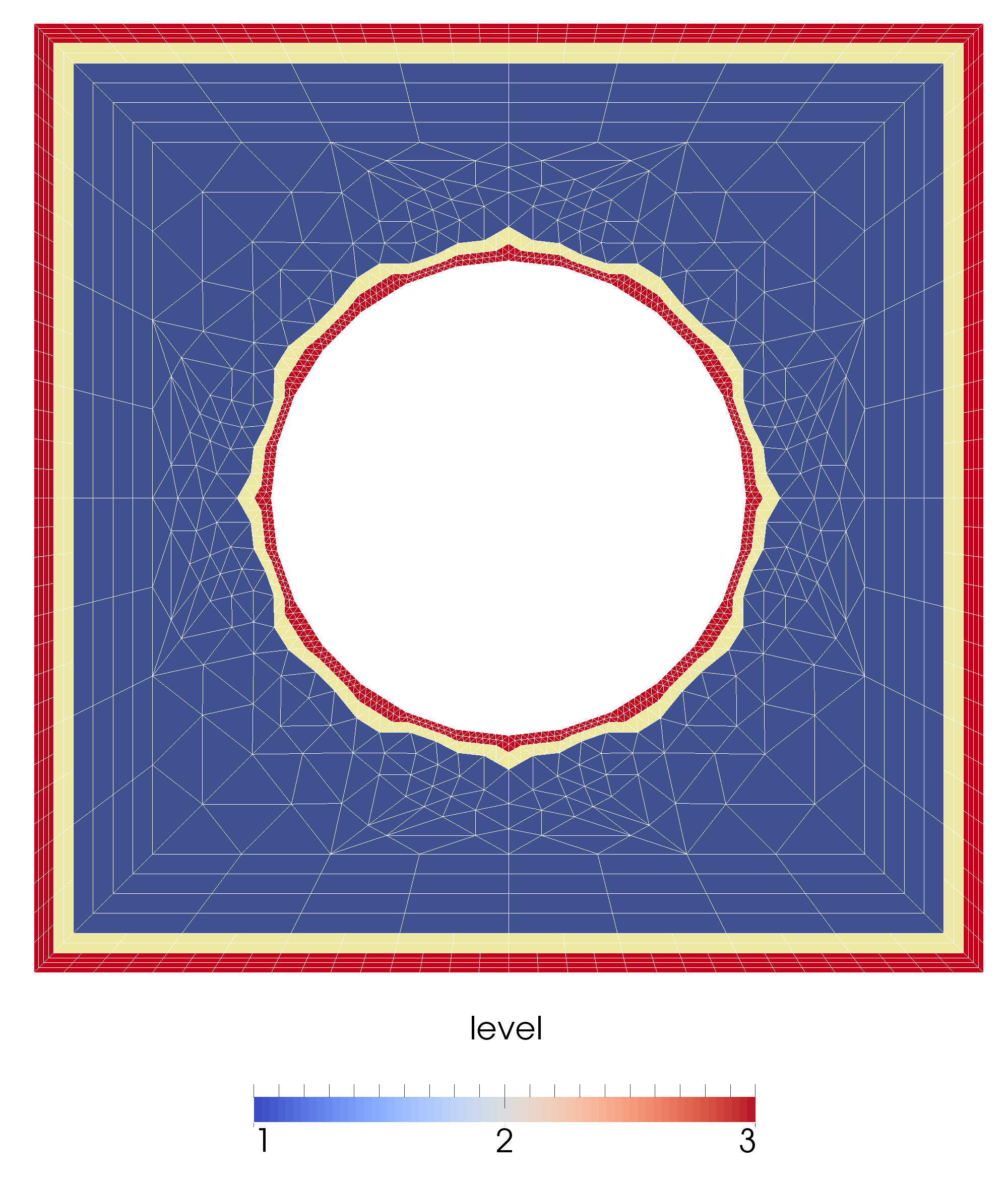}
\caption[Coarse and fine mesh]
{A mesh consists of two structures, the coarse mesh (left) that represents
the topology of the domain, and the forest mesh (right) that consists of the
leaf elements of a refinement and is used for computation.
In this example the domain is a unit square with a circular hole.
The color coding in the coarse mesh displays each tree's unique and consecutive
identifier, while the color coding in the forest mesh represents the refinement
level of each element.
In this example we choose an initial global level $1$ refinement and a
refinement of up to level $3$ along the domain boundary.}
\label{fig:hybridmesh}
\end{figure}

\begin{figure}[t]
\center
\includegraphics[width=\textwidth]{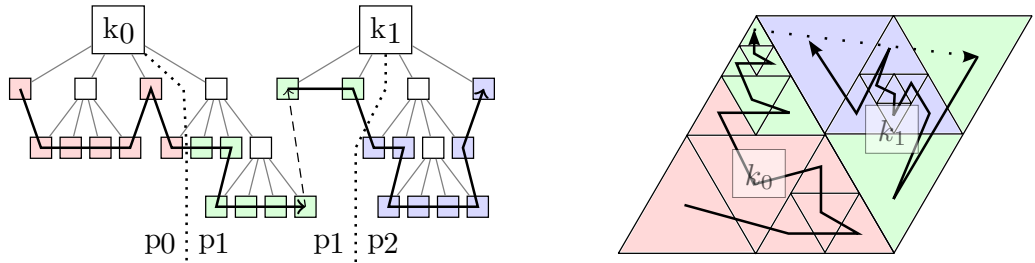}
\caption[Complex geometries with multiple trees]
{We connect multiple trees to model complex geometries.  Here, we
show two trees $k_0$ and $k_1$ with an adaptive refinement.
To enumerate the forest mesh, we establish an a priori order between the two
trees and use an SFC within each tree.
On the left-hand side of the figure the refinement tree and its linear storage
are shown. When we partition the forest mesh to $P$ processes (here, $P=3$),
we cut the SFC in $P$ equally sized parts and assign part $i$ to process $i$.}
\label{fig:twotreeforest}
\end{figure}

\begin{table}
\center
  \begin{tabular}{|r|l|}\hline
    Coarse mesh & Conforming mesh of tree roots\\
           Tree & An element of the coarse mesh \\
    Forest mesh & The adaptive mesh of elements (leaves of the trees) \\
   Element/leaf & Each element of the forest mesh is the leaf of a tree \\ \hline
  \end{tabular}
  \caption[Basic definitions]{The basic definitions for the coarse mesh and the forest mesh and
   their elements.  Throughout, we refer to the neighbor information of the
   trees as \emph{connectivity}.}\label{tab:basic}
\end{table}

\subsection{The tree shapes}

The trees of the coarse mesh can be of arbitrary shape as long as they 
are all of the same dimension and fit together along their faces.
In particular, we identify the following tree shapes:
\begin{itemize}
  \item Points in 0D.
  \item Lines in 1D.
  \item Quadrilaterals and triangles in 2D.
  \item Hexahedra and tetrahedra in 3D.
  \item Prisms and pyramids in 3D. %
\end{itemize}
Coarse meshes consisting solely of prisms or pyramids are quite uncommon; these\enlargethispage{1pc}
tree shapes are used primarily to transition between hexahedra and tetrahedra in
hybrid meshes.

\subsection{Encoding of face-neighbors}
\label{sec:faceneigh}

The connectivity information of a coarse mesh includes the neighbor
relation between adjacent trees.
Two trees are considered neighbors if they share at least one lower
dimensional face (vertex, face, or edge).
Since all of this connectivity information can be inferred from codimension-1
neighbors, we restrict ourselves to those, denoting them uniformly by
face-neighbors.
This choice does not lessen the generality of the partitioning algorithms to
follow and avoids a significant jump in complexity of the element-neighbor
code.

An application often requires a quick mechanism to access the face-neighbors of
a given forest mesh element.
If this neighbor element is a member of the same tree, the computation can be
carried out via the SFC logic, which involves only a few bitwise operations for
the hexahedral and tetrahedral Morton curves \cite{Morton66, SundarSampathBiros08,
BursteddeWilcoxGhattas11, BursteddeHolke16}.
If, however, the neighbor element belongs to a different tree, 
we need to identify this tree, given the parent tree of the original element
and the tree face at which we look for the neighbor element.
It is thus advantageous to store the face-neighbors of each tree in an array
that is ordered by the tree's faces.
To this end, we fix the enumeration of faces and vertices relative to each
other as depicted in Figure \ref{fig:vertexface}.
We discuss the computation of element face-neighbors in Chapter~\ref{ch:ghost}.

\begin{figure}[t]
   \center
    \includegraphics{./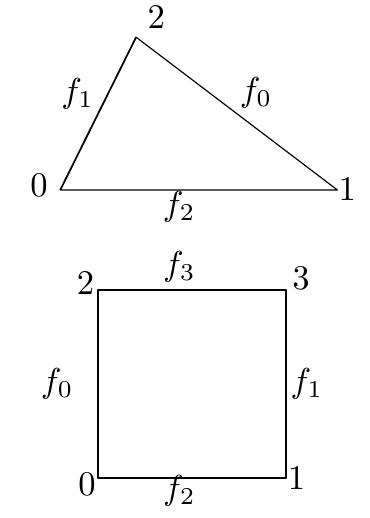}
    \includegraphics{./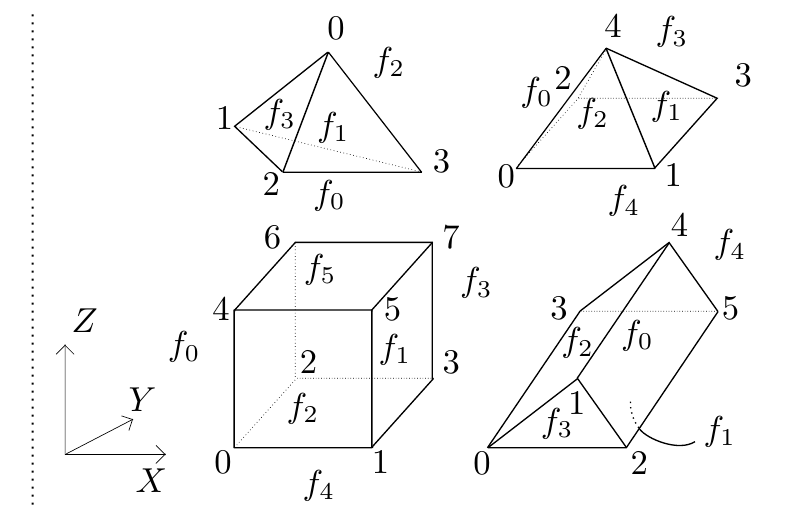}
   \caption[Vertex and face labels of the different tree types]
{The vertex and face labels of the $2$D (left) and $3$D (right) tree shapes.}%
   \label{fig:vertexface}%
\end{figure}%

\subsection{Orientation between face-neighbors}
\label{sec:orientation}

In addition to the global index of the neighbor tree across a face, we
describe how the faces of the tree and its neighbor are rotated relative to
each other.
We allow all connectivities that can be embedded in a compact 2- or 3-manifold
in such a way that each tree has positive volume.
This includes the Moebius strip and Klein's bottle and other quite exotic
meshes, e.g.,\ a hexahedron whose one face connects to another in some rotation.
We obtain two possible orientations of a line-to-line connection, three for a
tri\-angle-to-triangle connection, and four for a quadrilateral-to-quadrilateral connection.

We would like to encode the orientation of a face connection analogously to the
way it is handled in \pforest:
At first, given a face $f$, its vertices are a subset of the vertices of the
whole tree.
If we order them accordingly and renumerate them consecutively starting from
zero, we obtain a new number for each vertex that depends on the face $f$.
We call it the \emph{face corner number}.
If now two faces $f$ and $f'$ meet, the corner $0$ of the face with the
smaller face number is identified with a face corner $k$ in the other face. In
\pforest this $k$ is defined to be the orientation of the face connection.

In order for this operation to be well-defined, it must not depend on the
choice of the first face when the two face numbers are the same, which is
easily verified for a single tree shape.
When two trees of different shapes meet, we generalize to
determine which face is the first one.

\begin{definition}
 We impose a semiorder on the $3$-dimensional tree shapes as follows:
 \begin{equation}
 \begin{gathered}
 \xymatrix@C=4ex@R=1ex{ {Hexahedron} \ar@{}[rd]|-(0.6){\mathrel{\rotatebox{-20}{$<$}}} & \\
      & {Prism} \ar@{}[r]|-{<} & {Pyramid.} \\
      {Tetrahedron} \ar@{}[ru]|-{\mathrel{\rotatebox{20}{$<$}}} &
      }
 \end{gathered}
\end{equation}
\end{definition}

This comparison is sufficient since a hexahedron and a tetrahedron can never
share a common face.
We use it as follows.
\begin{definition}
\label{def:orientation}
 Let $t$ and $t'$ denote the tree shapes of two trees that meet at a common face
 with respective face numbers $f$ and $f'$.
 Furthermore, let $\xi$ be the face corner number of $f'$ matching corner $0$ of $f,$ and let
 $\xi'$ be the face corner number of $f$ matching corner $0$ of $f'$.
 We define the \textbf{orientation} of this face connection as
 \begin{equation}
  \texttt{or} \adef \left\lbrace
  \begin{array}{ll}
    \xi  &  \text{if $t < t'$ or ($t = t'$ and $f \leq f'$)}, \\
    \xi' &  \text{otherwise}.
  \end{array}
  \right.
 \end{equation}
\end{definition}

We now encode the face connection in the expression $\texttt{or} \cdot F + f'$
from the perspective of the first tree and $\texttt{or} \cdot F + f$ from the
second, where $F$ is the maximal number of faces over all tree shapes of this
dimension.

\section{Partitioning the coarse mesh}

As outlined above, tree-based AMR methods partition mesh elements with the help
of an SFC.
By cutting the SFC into as many equally sized parts as processes and assigning
part $i$ to process $i$, the repartitioning process is distributed and runs in
linear time.
Weighted partitions with a user-defined weight per leaf element are also
possible and practical \cite{PinarAykanat04, BursteddeWilcoxGhattas11}.

If the physical domain has a complex shape such that many trees
are required to optimally represent it, it becomes necessary to also
partition the coarse mesh in order to reduce the memory footprint.
This is even more important if the coarse mesh does not fit into the memory of
one process, since such problems are not even computable without coarse mesh
partitioning.

Suppose the forest mesh is partitioned among the processes.
Since the forest mesh frequently references connectivity information from the
coarse mesh, a process that owns a leaf $e$ of a tree $k$ also needs the
connectivity information of the tree $k$ to its neighbor trees.
Thus, we maintain information on these so-called ghost neighbor trees.

There are two traditional approaches to partition the coarse mesh.
In the first approach
\cite{BurriDednerKloefkornEtAl06, YilmazOezturanTosunEtAl10}, the coarse mesh
is partitioned, and the owner process of a tree will own all elements of that
tree.
In other words, it is not possible for two different processes to own elements
of the same tree, which can lead to highly imbalanced forest meshes.
Furthermore, if there are fewer trees than processes, there will be idle
processes assigned zero elements.
In particular, this approach prohibits the frequent special case of a single
tree.

The second approach \cite{Zumbusch03} is to first partition the forest mesh and
then deduce the coarse mesh partition from that of the forest.
If several
processes have leaf elements from the same tree, then the tree is assigned to
one of these processes, and whenever one of the other processes requests
information about this tree, communication is invoked.
This technique has the advantage that the forest mesh is load-balanced much
better, but it introduces additional synchronization points in the program and
can lead to critical bottlenecks if a lot of processes request information on
the same tree.

We propose another approach, which is a variation of the second, that overcomes
the communication issue.
If several processes have leaf elements from the same tree, we duplicate this
tree's connectivity data and store a local copy of it on each of the processes.
Thus, there is no further need for communication, and each process has exactly
the information it requires.
Since the purpose of the coarse mesh is not to store data that changes during
the simulation but to store connectivity data about the physical domain, the
data on each tree is persistent and does not change during the simulation.
Certainly, this concept poses an additional challenge in the (re)partitioning
process, because we need to manage multiple copies of trees without producing
redundant messages.

As an example, consider the situation in Figure \ref{fig:2trees3proc}.
Here the 2D coarse mesh consists of two triangles $0$ and $1$, and the forest
mesh is a uniform level 1 mesh consisting of 8 elements.  Elements $0,1,2,3$
belong to tree $0$ and elements $4,5,6,7$ to tree $1$.  If
we load-balance the forest mesh to three processes with ranks $0$, $1,$ and $2$,
then a possible forest mesh partition arising from an SFC could be\\
\begin{minipage}{0.35\textwidth}
\begin{equation}
  \begin{array}{cl}
    \mathrm{rank}&\mathrm{elements}\\\hline
    0 & 0,\, 1,\, 2\\
    1 & 3,\, 4,\, 5\\
    2 & 6,\, 7,
  \end{array}
\end{equation}
\end{minipage}
\begin{minipage}{0.3\textwidth}
leading to the coarse mesh partition
\end{minipage}
\hfill
\begin{minipage}{0.3\textwidth}
\begin{equation}
  \begin{array}{cl}
    \mathrm{rank}& \mathrm{trees} \\\hline
    0&0\\
    1&0,\,1\\
    2&1.
  \end{array}
\end{equation}
\end{minipage}\\[3ex]
Thus, each tree is stored on two processes.

\begin{figure}
   \center
\begin{minipage}{0.8\textwidth}
   \center
  \includegraphics{./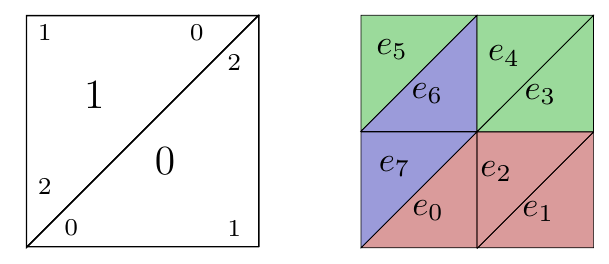}
\end{minipage}
   \caption[A partitioned coarse mesh with shared trees]
	   {A coarse mesh of two trees and a uniform level $1$ forest mesh.
            If the forest mesh is partitioned to three processes, each tree of the 
            coarse mesh is requested by two processes.
            The numbers in the tree corners denote the position and orientation of the tree vertices,
            and the global tree ids are the numbers in the center of each tree.
            As an example SFC we take the triangular Morton curve from
            {\cite{BursteddeHolke16}}, but the situation of multiple trees per
            process can occur for any SFC.}
   \label{fig:2trees3proc}
\end{figure}

\subsection{Valid partitions}

We allow arbitrary partitions for the forest, as long as they are induced by an
SFC.
This gives us some information on the type of coarse mesh partitions that we
can expect.

\begin{definition}
In general, a \textbf{partition} of a coarse mesh of $K$ trees $\indexset{K}$
to $P$ processes $\indexset{P}$ is a map $f$ that assigns each process a
certain subset of the trees,
\begin{equation}
  f\colon \indexset{P} \longrightarrow \mathcal{P}\indexset{K},
\end{equation}
and whose image covers the whole mesh:
\begin{equation}
\bigcup_{p=0}^{P-1} f(p) = \indexset{K}.
\end{equation}
Here, $\mathcal{P}$ denotes the set of all subsets (power set).
We call $f(p)$ the \emph{local trees} of process $p$ and explicitly allow that
$f(p)\cap f(q)$ may be nonempty. 
If so, the trees in this intersection are \emph{shared} between processes $p$ and $q$.
\end{definition}

The above definition includes all possibilities of attaching trees to processes.
In particular, we allow the same tree to reside on multiple processes.
For example, it is a partition in the above sense if all trees are partitioned
to all processes, expressed by $f(p) = \set{0,\ldots,K-1}$ for all $p$.
Any other arbitrary mapping is possible, with no restriction on the number of
trees per process, where empty processes with $f(p) = \emptyset$ are included.

For our method, we will not consider every possible partition of a coarse mesh.
Since we assume that a forest mesh partition comes from an SFC,
we
restrict ourselves to a subset of partitions.
In particular, the SFC order of a forest imposes the restriction that 
the order of trees is linear among the processes. Thus if tree $k$ is on
process $p,$ then every tree $l > k$ must be partitioned to a process $q\geq p$.
\begin{definition}
  \label{def:valid}
  Consider a partition $f$ of a coarse mesh with $K$ trees.
  We say that $f$ is a \textbf{valid partition} if there exist a forest
  mesh with $N$ leaves and a (possibly weighted) SFC partition of it that
  induces $f$.
  Thus, for each process $p$ and
  each tree $k,$ we have $k\in f(p)$ if and only if there exists a leaf $e$ of
  the tree $k$ in the forest mesh that is partitioned to process $p$.
  Processes without any trees are possible; in this case $f(p) =
  \emptyset$.

  We denote by ${k_p}$ the local tree on $p$ with the lowest global index and
  denote by ${K_p}$ the local tree with the highest global index.
\end{definition}

This, for example, excludes meshes where the processes are not mapped to the
trees in ascending order.
A simple example of a partition of two trees on two processes that is not valid
is given by $f(0) = \set{1}$ and $f(1) = \set{0}$.

The definition of valid partitions requires a forest mesh and a specific SFC-induced partition of it.
Since this is not convenient for theoretical investigations,
we deduce
three properties that characterize valid partitions independently of a forest
mesh and SFCs.

\begin{proposition}
  \label{prop:valid}
  A partition $f$ of a coarse mesh is valid if and only if it fulfills the following 
  properties.
\begin{enumerate}[{(i)}]
\item The tree indices of a process's local trees are consecutive; thus
  \begin{equation}
  f(p) = \set{{k_p},{k_p+1},\ldots,{K_p}}\,\text{or }\,f(p) =\emptyset.\label{eq:valid1}
 \end{equation}
\item A tree index of a process $p$ may be smaller than a
      tree index on a process $q$ only if $p\leq q$:
      \begin{equation}
        p \leq q  \Rightarrow  K_p \leq k_q \quad(\text{if}\,\,
        f(p)\neq\emptyset\neq f(q)).\label{eq:valid2} 
      \end{equation}
 \item The only trees that can be shared by process $p$ with other processes
   are $k_p$ and $K_p$:
   \begin{equation}
  f(p) \cap f(q) \subseteq \set{{k_p},{K_p}}\, \text{for}\,\, p\neq q.\label{eq:valid3}
   \end{equation}
\end{enumerate}
\begin{proof}
  We show the only-if direction first.
  Let an arbitrary forest mesh with SFC partition be given such that $f$ is
  induced by it. In the SFC order the leaves are sorted according to their SFC
  indices.
  If ${(i,I)}$ denotes the leaf corresponding to the $I$-th leaf in the $i$-th
  tree and tree $i$ has $N_i$ leaves,
  then the complete forest mesh consists of the leaves
  \begin{equation}
    \set{{(0, 0)},{(0, 1)},{(0, N_0-1)},{(1, 0)},\ldots,{(K-1, N_{K-1}-1)}}.
  \end{equation}
  The partition of the forest mesh is such that each process 
  $p$ gets a consecutive range
  \begin{equation}
    \set{{(k_p, i_p)},\ldots,{(K_p, i'_{p})}}
  \end{equation}
  of leaves,
  where $(k_{p+1}, i_{p+1})$ is the successor of $(K_p, i'_p)$
  and the $k_p$ and the $K_p$ form increasing sequences
  with $K_p\leq k_{p+1}$.
  The coarse mesh partition is then given by
  \begin{equation}
    f(p) = \set{k_p,k_p+1,\ldots,K_p}\,\, \text{for all } p,
  \end{equation}
  which shows properties (i) and (ii).
  To show (iii) we assume that $f(p)$ has at least three elements; thus
  $f(p) = \set{{k_p},{k_p+1},\ldots,{K_p}}$.
  However, this means that in the forest mesh partition each leaf
  of the trees $\set{k_p+1,\ldots,K_p-1}$ is partitioned to $p$.
  Since the forest mesh partitions are disjoint, no other process can hold leaf
  elements from these trees, and thus they cannot be shared.

  To show the if-direction, suppose the partition $f$ fulfills (i), (ii), and (iii).
  We construct a forest mesh with a weighted SFC partition as follows.
  Each tree that is local to a single process is not refined and thus contributes
  a single leaf element to the forest mesh.
  If a tree is shared by $m$ processes, then we refine it uniformly until we
  have more than $m$ elements.
  It is now straightforward to choose the weights of the elements
  such that the corresponding SFC partition induces $f$.
\end{proof}
\end{proposition}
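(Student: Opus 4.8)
The plan is to prove the two implications of Proposition~\ref{prop:valid} separately, in both directions leaning on the explicit description of SFC-induced forest partitions recalled in Section~\ref{sec:partitionsfc} and on the combined leaf order from~\eqref{eq:forestorder}. For the \emph{only-if} direction I would fix a forest mesh together with a (possibly weighted) SFC partition inducing $f$ in the sense of Definition~\ref{def:valid}. Such a partition hands each process $p$ a block of leaves that is contiguous in the combined order, from some $(k_p,i_p)$ up to $(K_p,i'_p)$, and these blocks tile the full leaf list in increasing order with $K_p\leq k_{p+1}$. Property~\eqref{eq:valid2} is then immediate from this monotonicity. For~\eqref{eq:valid1}: any tree $l$ with $k_p<l<K_p$ has all of its leaves sandwiched between $(k_p,i_p)$ and $(K_p,i'_p)$, hence lying in $p$'s block, so $l\in f(p)$; conversely each tree with index in $[k_p,K_p]$ contributes at least one leaf to that block. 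For~\eqref{eq:valid3}: since the blocks of distinct processes are disjoint, every leaf of an interior tree $l$ of $f(p)$ belongs to $p$ alone, so $l$ cannot be shared, leaving only the boundary trees $k_p$ and $K_p$ as candidates for sharing.

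For the \emph{if} direction, given $f$ satisfying (i)--(iii), I would construct an explicit witness. First I would note that~\eqref{eq:valid2} and~\eqref{eq:valid3} force the set of processes sharing a given tree $k$ to be an interval $\{p_1,\dots,p_r\}$, with $K_{p_1}=k=k_{p_r}$ and $f(p_j)=\set{k}$ for $1<j<r$; so each tree is shared by some number $r_k\geq 1$ of processes. Then I would build the forest: leave every tree that lies on a single process unrefined (one leaf), and refine every tree shared by $r_k\geq 2$ processes uniformly until it has at least $r_k$ leaves. Distributing, for each such shared tree, its leaves in SFC order among its $r_k$ sharers (at least one apiece), and giving each unshared tree's unique leaf to its owner, yields an assignment of all forest leaves to processes that, by (i) and (ii), is monotone in the combined index~\eqref{eq:forestorder}. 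Finally I would realize this monotone block assignment by a weighted SFC partition: if process $p$ is to receive $n_p\geq 1$ leaves, assign each of them weight $1/n_p$, so that every nonempty process carries the same total weight and the cut points of the weighted partition of Section~\ref{sec:partitionsfc} fall exactly on the block boundaries, whence the induced partition is $f$; empty processes are accommodated as the degenerate zero-weight case.

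The main obstacle I anticipate is the bookkeeping in the if-direction: one must exhibit a \emph{single} forest together with a weighting that simultaneously reproduces consecutiveness~\eqref{eq:valid1}, the global ordering~\eqref{eq:valid2}, and the sharing restriction~\eqref{eq:valid3}, while handling the corner cases of trees shared by more than two processes (the chains of single-tree processes identified above) and of empty processes. Verifying that the target leaf-to-process assignment is genuinely SFC-monotone and that the floor-based cut formula lands precisely at the intended boundaries is the step that needs the most care; the rest is a routine translation between the forest picture and the tree picture.
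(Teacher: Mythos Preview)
Your proposal is correct and follows essentially the same approach as the paper: the only-if direction reads properties (i)--(iii) directly off the contiguous SFC leaf blocks, and the if-direction builds the witnessing forest by leaving singly-owned trees unrefined and refining each shared tree to at least as many leaves as it has sharers, then invoking a weighted partition. You supply more detail than the paper (which simply calls the weight choice ``straightforward''); your only slip is the claim that the sharers of a tree form an interval---empty processes may be interspersed (cf.\ Corollary~\ref{cor:valid2})---but this does not affect the construction.
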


We directly conclude the following.\begin{corollary}
\label{cor:valid1}
  In a valid partition, each pair of processes can share at most 
  one tree; thus
  \begin{equation}
      |f(p) \cap f(q)| \leq 1
  \end{equation}
  for each $p\neq q$.
\begin{proof} 
  Supposing the contrary, with \eqref{eq:valid3} we know that there would exist
  two processes $p$ and $q$ with $p<q$ such that $f(p) \cap f(q) \agl
  \set{{k_p},{K_p}}=\set{{k_q},{K_q}}$ and $k_p \neq K_p$.
  Thus $K_p > k_p = k_q$, which contradicts property~\eqref{eq:valid2}.
\end{proof}
\end{corollary}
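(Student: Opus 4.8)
The plan is to deduce this directly from Proposition~\ref{prop:valid}, which characterizes valid partitions by the properties~\eqref{eq:valid1}--\eqref{eq:valid3}. Property~\eqref{eq:valid3} already gives $f(p)\cap f(q)\subseteq\set{k_p,K_p}$, so the intersection has at most two elements; the only thing left to do is to exclude the case that it has exactly two, that is, that $f(p)\cap f(q)=\set{k_p,K_p}$ with $k_p\neq K_p$.

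So first I would assume, for contradiction, that $|f(p)\cap f(q)|=2$ for some $p\neq q$, and without loss of generality take $p<q$ (and both processes nonempty). Since $f(q)$ is the local tree set of the same valid partition, property~\eqref{eq:valid3} applied with the roles of $p$ and $q$ interchanged also yields $f(p)\cap f(q)\subseteq\set{k_q,K_q}$. A two-element set contained in both $\set{k_p,K_p}$ and $\set{k_q,K_q}$ must equal both of them, and comparing minima and maxima (recall $k_p\leq K_p$ and $k_q\leq K_q$ from~\eqref{eq:valid1}) forces $k_p=k_q$ and $K_p=K_q$, together with $k_p<K_p$.

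The contradiction then comes from the global ordering property~\eqref{eq:valid2}: since $p<q$, we have $K_p\leq k_q$. But $K_p=K_q\geq k_q$, equivalently $K_p>k_p=k_q$ using $k_p<K_p$, which is incompatible with $K_p\leq k_q$. Hence the intersection cannot have two elements, which is exactly $|f(p)\cap f(q)|\leq 1$.

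I do not expect a real obstacle here, since the whole argument is bookkeeping on Proposition~\ref{prop:valid}; the only point that needs mild care is to remember to invoke property~\eqref{eq:valid3} for \emph{both} processes (so that the shared trees are squeezed between $\set{k_p,K_p}$ and $\set{k_q,K_q}$) before bringing in the linear ordering~\eqref{eq:valid2}. An equally clean alternative I would keep in reserve: argue directly that if two distinct trees $k<l$ both lie in $f(p)\cap f(q)$, then by~\eqref{eq:valid1} the whole block $\{k,\dots,l\}$ lies in both $f(p)$ and $f(q)$, so in particular any tree strictly between them—or $k$ itself if $l=k+1$—is an interior (non-extremal) shared tree, contradicting~\eqref{eq:valid3}.
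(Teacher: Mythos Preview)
Your proof is correct and follows essentially the same approach as the paper: assume the intersection has two elements, use \eqref{eq:valid3} (applied to both $p$ and $q$) to force $\set{k_p,K_p}=\set{k_q,K_q}$ with $k_p\neq K_p$, and then derive $K_p>k_p=k_q$ in contradiction with \eqref{eq:valid2}. The paper's version is more terse and leaves the double application of \eqref{eq:valid3} implicit, whereas you spell it out; the underlying argument is identical.
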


\begin{corollary}
\label{cor:valid2}
  If in a valid partition $f$\;of a coarse mesh the tree $k$ is shared
  between processes $p$ and $q$, then for each $p < r < q,$
  \begin{equation}
    f(r) = \set{k} \quad\mbox{or}\quad f(r) = \emptyset.
  \end{equation}
\end{corollary}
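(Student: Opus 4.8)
The plan is to read this off directly from the intrinsic characterization of valid partitions in Proposition~\ref{prop:valid}, so that no forest mesh or SFC has to be invoked. First I would note that the conclusion is only nonvacuous when $p<q$, so I assume this without loss of generality (sharing is symmetric in $p$ and $q$). Since $k$ is shared, $k\in f(p)\cap f(q)$, and in particular neither $f(p)$ nor $f(q)$ is empty, so property~\eqref{eq:valid1} gives $f(p)=\set{k_p,\ldots,K_p}$ and $f(q)=\set{k_q,\ldots,K_q}$; hence $k_p\le k\le K_p$ and $k_q\le k\le K_q$. Property~\eqref{eq:valid3} (applied once for $p$ and once, with roles swapped, for $q$) yields $k\in\set{k_p,K_p}$ and $k\in\set{k_q,K_q}$, while monotonicity~\eqref{eq:valid2} for the pair $p\le q$ gives $K_p\le k_q$. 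Chaining these, $k\le K_p\le k_q\le k$, which forces $K_p=k_q=k$.

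Next I would fix any $r$ with $p<r<q$. If $f(r)=\emptyset$ there is nothing to prove, so assume $f(r)\neq\emptyset$; then by~\eqref{eq:valid1}, $f(r)=\set{k_r,\ldots,K_r}$ with $k_r\le K_r$. Applying~\eqref{eq:valid2} to the pair $p\le r$ gives $k=K_p\le k_r$, and applying it to the pair $r\le q$ gives $K_r\le k_q=k$. Therefore $k\le k_r\le K_r\le k$, so $k_r=K_r=k$ and $f(r)=\set{k}$, which is exactly the claim.

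I do not expect any real obstacle here; the whole argument is a short sandwiching of indices using properties (i)--(iii). The only points requiring a moment's care are the degenerate orderings (equal indices, empty processes) and the symmetry in $p,q$, all of which are already accounted for by the hypotheses of Proposition~\ref{prop:valid}. As a side remark, the identity $K_p=k_q=k$ derived above also re-proves Corollary~\ref{cor:valid1}, since having $k_p\neq K_p$ in that situation would contradict $K_p\le k_q=k_p$.
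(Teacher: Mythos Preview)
Your argument is correct and is essentially the detailed version of the paper's one-line proof, which just cites \eqref{eq:valid1}, \eqref{eq:valid2}, and Corollary~\ref{cor:valid1}. The only cosmetic difference is that the paper invokes Corollary~\ref{cor:valid1} (itself derived from~\eqref{eq:valid3}) whereas you work directly from properties (i)--(iii) and recover Corollary~\ref{cor:valid1} as a by-product; the underlying sandwiching of indices is identical.
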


\begin{proof}
We can directly deduce this from \eqref{eq:valid1}, \eqref{eq:valid2}, and
Corollary \ref{cor:valid1}.
\end{proof}

In order to properly deal with empty processes in our calculations, we
define start and end tree indices for these as well.
\begin{definition}
  \label{def:seforempty}
  Let $p$ be an empty process in a valid partition $f\!$; thus $f(p)=\emptyset$.
  Furthermore, let $q<p$ be maximal such that $f(q)\neq\emptyset$.
  Then we define the start and end indices of $p$ as
  \begin{subequations}
  \begin{align}
    k_p&:=K_q+1,\\
    K_p&:=K_q = k_p-1.
  \end{align}
  \end{subequations}
  If no such $q$ exists, then no rank lower than $p$ has local trees, and 
  we set $k_p=0$, $K_p=-1$.
  With these definitions,  \eqref{eq:valid1} and
  \eqref{eq:valid2} are valid if any of the processes are empty.
\end{definition}

From now on, all partitions in this chapter are assumed  valid even
if not stated explicitly.

\subsection{Encoding a valid partition}

A typical way to define a partition in a tree-based code is to store an array
\texttt{O} of tree offsets for each process, that is,
the global index of the first tree local to each process.
The range of local trees for process $p$ can then be computed as
$\set{{\texttt O[p]},\ldots,{\texttt O[p+1]-1}}$.
However, for valid partitions in the coarse mesh setting, this information would not
be sufficient because we would not know which trees are shared.
We thus modify the offset array by adding a negative sign when the first tree
of a process is shared.
\begin{definition}
  \label{def:offsetarray}
  Let $f$ be a valid partition of a coarse mesh, with $k_p$ being the 
  index of $p$'s first local tree.
  Then we store this partition in an array \texttt O of length $P+1$, where
  for $0 \leq p < P,$
  \begin{equation}
    \label{eq:offsetarray}
 \texttt O[p] := \left\lbrace \begin{array}{ll}
      k_p &
      \begin{array}{l}
        \textrm{if ${k_p}$ is not shared with the next smaller}\\
        \textrm{nonempty process or }f(p)=\emptyset,
      \end{array}
     \\[2ex]
               -k_p - 1 & \,\,\,\textrm{if it is.}
                   \end{array}\right.
  \end{equation}
  Furthermore, $\texttt O[P]$ shall store the total number of trees.
\end{definition}

Because of the definition of $k_p$, we know that $\texttt O[0] = 0$ for all
valid partitions.

\begin{lemma}
  \label{lem:rangefromoffset}
 Let $f$ be a valid partition, and let \texttt O be as in Definition {\ref{def:offsetarray}}.
 Then
  \begin{equation}
    \label{eq:getfirstfromoffset}
    k_p =  \begin{cases}
      \texttt O[p] &\textrm{if }\texttt O[p] \geq 0,\\[1ex]
   |\texttt O[p] + 1|\ & \textrm{if }\texttt O[p] < 0,
     \end{cases}
  \end{equation}
  and
  \begin{equation}
    \label{eq:getlastfromoffset}
    K_p = |\texttt O[p+1]| - 1.
  \end{equation}
\end{lemma}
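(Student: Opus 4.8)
The plan is to unwind the definitions in Definition~\ref{def:offsetarray} and read off $k_p$ and $K_p$ directly. This is a bookkeeping argument with a small case distinction; there is no real mathematical obstacle, only the need to handle empty processes consistently via Definition~\ref{def:seforempty}.

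First I would establish equation~\eqref{eq:getfirstfromoffset}. By Definition~\ref{def:offsetarray}, $\texttt O[p]$ is either $k_p$ (when $k_p$ is not shared with the next smaller nonempty process, or $f(p)=\emptyset$) or $-k_p-1$ (when it is shared). Since $k_p\geq 0$ always, in the first case $\texttt O[p]=k_p\geq 0$, and in the second case $\texttt O[p]=-k_p-1\leq -1<0$. Hence the sign of $\texttt O[p]$ tells us exactly which branch of the definition applies. If $\texttt O[p]\geq 0$ then $k_p=\texttt O[p]$. If $\texttt O[p]<0$ then $k_p=-\texttt O[p]-1$, and since $k_p\geq 0$ we have $-\texttt O[p]-1\geq 0$, so $-\texttt O[p]-1=|\texttt O[p]+1|$; this gives the stated formula. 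One should note that $\texttt O[P]$ is not covered by either branch — it stores the total number of trees $K$ by the last sentence of Definition~\ref{def:offsetarray} — so the formula for $k_p$ is only claimed for $0\leq p<P$, which is consistent with the lemma.

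Next I would establish equation~\eqref{eq:getlastfromoffset}, namely $K_p=|\texttt O[p+1]|-1$. The key observation is that, whether process $p+1$ is empty or not, its start index satisfies $k_{p+1}=K_p+1$ whenever $f(p)\neq\emptyset$: if $f(p+1)\neq\emptyset$ this is property~\eqref{eq:valid2} together with~\eqref{eq:valid1} and Corollary~\ref{cor:valid2} (consecutiveness forces $k_{p+1}$ to immediately follow $K_p$, since any trees strictly between would have to be shared with $p$ or reside on $p+1$, and in either case $K_p$ or $k_{p+1}$ adjusts accordingly); if $f(p+1)=\emptyset$ then Definition~\ref{def:seforempty} sets $k_{p+1}=K_q+1$ where $q\leq p$ is the largest nonempty rank, and since $f(p)\neq\emptyset$ forces $q=p$, again $k_{p+1}=K_p+1$. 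For the boundary case $p=P-1$, the sentence following Definition~\ref{def:offsetarray} gives $\texttt O[P]=K$ (the total number of trees), so $K_{P-1}=K-1=\texttt O[P]-1=|\texttt O[P]|-1$ since $\texttt O[P]=K\geq 0$. For $p<P-1$, from~\eqref{eq:getfirstfromoffset} applied to index $p+1$ we have $k_{p+1}=|\texttt O[p+1]+1|$ when $\texttt O[p+1]<0$ and $k_{p+1}=\texttt O[p+1]$ when $\texttt O[p+1]\geq 0$; in both cases a short computation shows $k_{p+1}=|\texttt O[p+1]|$ exactly when... — here I would be careful: if $\texttt O[p+1]=-k_{p+1}-1$ then $|\texttt O[p+1]|=k_{p+1}+1$, not $k_{p+1}$. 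So the cleaner route is to argue $K_p=k_{p+1}-1$ and then show $|\texttt O[p+1]|-1=k_{p+1}-1$ directly from the two branches: if $\texttt O[p+1]=k_{p+1}\geq 0$ then $|\texttt O[p+1]|-1=k_{p+1}-1=K_p$; if $\texttt O[p+1]=-k_{p+1}-1$ then $|\texttt O[p+1]|-1=(k_{p+1}+1)-1=k_{p+1}=K_p+1$, which is wrong — so in fact the correct identity must be $K_p=k_{p+1}-1$ paired with the observation that the shared-tree sign on $\texttt O[p+1]$ does not enter, meaning I should instead derive $K_p$ from $k_{p+1}$ using that $k_{p+1}$ is recovered sign-free.

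The main obstacle — really the only subtle point — is reconciling the sign convention with the ``$-1$'' in~\eqref{eq:getlastfromoffset}: when $\texttt O[p+1]=-k_{p+1}-1$, one has $|\texttt O[p+1]|-1=k_{p+1}$, and since the shared tree $k_{p+1}$ is precisely $K_p$ in that case (the tree shared between $p$ and $p+1$ is simultaneously $K_p$ and $k_{p+1}$, by validity), we get $|\texttt O[p+1]|-1=K_p$ as desired; whereas when $\texttt O[p+1]=k_{p+1}\geq 0$ the tree $k_{p+1}$ is not shared downward, so $K_p=k_{p+1}-1=|\texttt O[p+1]|-1$. Thus the two branches conspire so that $|\texttt O[p+1]|-1$ always equals $K_p$, regardless of sharing. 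I would present exactly this two-case verification, invoking Corollary~\ref{cor:valid1} to guarantee that ``the shared tree'' is unique and hence the identification $k_{p+1}=K_p$ in the shared case is unambiguous, and invoking Definition~\ref{def:seforempty} to cover empty processes $p+1$ (where $K_{p+1}=k_{p+1}-1$ and $\texttt O[p+1]=k_{p+1}\geq 0$, so the first branch applies). This completes the proof.
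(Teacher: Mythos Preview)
Your approach is essentially the paper's: for $k_p$, invert the sign-encoding of Definition~\ref{def:offsetarray}; for $K_p$, split on the sign of $\texttt O[p+1]$ (equivalently, on whether $k_{p+1}$ is shared downward), and use $K_p=k_{p+1}$ in the shared case and $K_p=k_{p+1}-1$ in the unshared case. After the meandering you arrive at exactly this two-case verification.

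One case is not covered: you handle empty $p+1$ but not empty $p$. When $f(p)=\emptyset$, your phrase ``the tree shared between $p$ and $p+1$'' is not literally correct --- the sign on $\texttt O[p+1]$ encodes sharing with the next smaller \emph{nonempty} process, which need not be $p$. The paper treats $f(p)=\emptyset$ as a separate top-level case with its own shared/unshared subcases, invoking Definition~\ref{def:seforempty}: in the shared subcase $k_{p+1}=k_p-1=K_p$, in the unshared subcase $k_{p+1}=k_p=K_p+1$, and in both $|\texttt O[p+1]|-1=K_p$ follows. You should include this explicitly. Also, the false starts (``which is wrong'', ``I should instead derive'') belong in scratch work, not in the write-up; present only the final two-branch argument.
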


  \begin{proof}
    The first statement follows since \eqref{eq:offsetarray} and \eqref{eq:getfirstfromoffset}
    are inverses of each other.
    For  \eqref{eq:getlastfromoffset} we distinguish two cases.
    First, let $f(p)$ be nonempty.
    If the last tree of $p$ is not shared with $p+1$, then
    it is ${k_{p+1}} - 1$ and $\texttt O[p+1] = k_{p+1}$, and thus we have 
    \begin{equation}
      K_p = k_{p+1}-1 = |\texttt O[p+1]| - 1.
    \end{equation}
    If the last tree of $p$ is shared with $p+1$, then it is
    $k_{p+1}$,
    the first local tree of $p+1,$ and thus $\texttt O[p+1] = -k_{p+1}-1$ and 
    \begin{equation}
      K_p= k_{p+1} = |-k_{p+1}| = |\texttt O[p+1]| - 1.  
    \end{equation}
    Now let $f(p)=\emptyset$.
    If $k_{p+1}$ is not shared, then $k_{p+1}=k_p=K_p+1$ by Definition \ref{def:seforempty},
    and $\texttt O[p+1]=k_{p+1}$ by \eqref{eq:offsetarray}.
    Thus,
    \begin{equation}
      K_p = k_p - 1 = k_{p+1}-1 = |\texttt O[p+1]| - 1.
    \end{equation}
    If $k_{p+1}$ is shared, then again by Definition \ref{def:seforempty},
    $k_{p+1}=k_p-1=K_p$ and $\texttt O[p+1] = -k_{p+1} - 1$ such that we 
    obtain
    \begin{equation}
      K_p = k_{p+1} = k_{p+1} + 1 - 1 = |\texttt O[p+1]| - 1.
    \end{equation}
  \end{proof}
\begin{corollary}
\label{col:numloctrees}
  In the setting of Lemma {\ref{lem:rangefromoffset}} the number $n_p$ of local
  trees of process $p$ fulfills
  \begin{equation}
    n_p = |\texttt O[p+1]| - k_p =
    \begin{cases}
      |\texttt O[p+1]| - \texttt O[p] & \textrm{if $\texttt O[p]\geq 0$,}\\
      |\texttt O[p+1]| - |\texttt O[p]+1| & \textrm{else.}
    \end{cases}
  \end{equation}
\end{corollary}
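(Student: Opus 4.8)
The plan is to reduce everything to Lemma~\ref{lem:rangefromoffset} together with the elementary observation that the number of local trees of a process equals the cardinality of its index block. First I would note that by property~\eqref{eq:valid1} the local trees of a nonempty process $p$ form the contiguous block $\set{k_p, k_p+1, \dots, K_p}$, whose cardinality is $K_p - k_p + 1$; for an empty process, Definition~\ref{def:seforempty} sets $K_p = k_p - 1$, so the same expression evaluates to $0$, consistent with $f(p) = \emptyset$. Hence in all cases $n_p = K_p - k_p + 1$.

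Next I would substitute $K_p = |\texttt O[p+1]| - 1$, which is exactly equation~\eqref{eq:getlastfromoffset} of Lemma~\ref{lem:rangefromoffset}. This immediately gives $n_p = (|\texttt O[p+1]| - 1) - k_p + 1 = |\texttt O[p+1]| - k_p$, the first claimed equality.

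Finally, to obtain the case distinction I would insert the value of $k_p$ furnished by equation~\eqref{eq:getfirstfromoffset}: when $\texttt O[p] \geq 0$ we have $k_p = \texttt O[p]$, so $n_p = |\texttt O[p+1]| - \texttt O[p]$; when $\texttt O[p] < 0$ we have $k_p = |\texttt O[p] + 1|$, so $n_p = |\texttt O[p+1]| - |\texttt O[p] + 1|$. This is precisely the asserted formula.

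The only point that deserves attention rather than routine calculation is verifying that the counting identity $n_p = K_p - k_p + 1$ is the correct one to use even for empty processes (where there is genuinely nothing to count), and this is settled by invoking Definition~\ref{def:seforempty}; once that is in place, the statement follows by pure substitution from Lemma~\ref{lem:rangefromoffset}, so I anticipate no real obstacle.
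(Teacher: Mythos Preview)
Your proof is correct and follows exactly the same approach as the paper: both derive the result from the identity $n_p = K_p - k_p + 1$ combined with Lemma~\ref{lem:rangefromoffset}. The paper's proof is the one-liner ``This follows from the identity $n_p = K_p - k_p + 1$,'' and you have simply spelled out the substitutions in full, including the careful check for empty processes.
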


  \begin{proof}
    This follows from the identity $n_p = K_p - k_p + 1$.
  \end{proof}
  
Lemma \ref{lem:rangefromoffset} and Corollary \ref{col:numloctrees} show that
for valid partitions, the array \texttt O carries the same information as the
partition $f$.

\subsection{Ghost trees}
\label{sec:cmesh-ghosttrees}

A valid partition provides information on the local trees of a process.
These trees are all trees of which a forest has local elements.
In many applications it is common to collect a layer of ghost (or halo) elements
of the forest to support the exchange of data with neighboring processes.
Since these ghost elements may be descendants of nonlocal trees, we
store their trees as ghost trees.
We want to confine this logic to the coarse mesh to be independent of a forest
mesh, and thus we propose to store each nonlocal face-neighbor tree as a ghost
tree.
This means possibly storing more ghost trees than needed by a particular forest.
However, this only affects the first and the last local tree of a process,
which bounds the overhead.
Since we restrict the neighbor information to face-neighbors, we also restrict
ourselves to face-neighbor ghosts in this chapter. 
However, an extension to edge and vertex neighbor ghosts is planned for the
future. These will prompt a somewhat more elaborate discussion, since an
arbitrary number of trees can be neighbored across a vertex/edge.
There exist known algorithms
for quadrilaterals and hexahedra \cite{IsaacBursteddeWilcoxEtAl15}, which we
believe can be modified to extend to simplices, prisms, and pyramids.

\begin{definition}
  Let $f$ be a valid partition of a coarse mesh.
  A \textbf{ghost tree} of a process $p$ is any tree $k$ such that
  \begin{itemize}
    \item $k\notin f(p)$, and
    \item there exists a face-neighbor $k'$ of $k$ such that $k'\in f(p)$.
  \end{itemize}
\end{definition}

If a coarse mesh is partitioned according to $f$, then each process $p$ will
store its local trees and its ghost trees.

\subsection{Computing the communication pattern}

Suppose
a coarse mesh is partitioned among the
processes $\indexset P$ according to a partition $f$.
The input of the partition algorithm is this coarse mesh and a second partition
$f'$, and the output is a coarse mesh that is partitioned according to the
second partition.

Apart from these partitions being valid, no other restrictions are imposed on the partitions
$f$ and $f'$. Thus, we include the trivial case $f=f'$ as well as extreme
cases.  An example for such a case is an $f$ that concentrates all trees
on one process and an $f'$ that assigns them to another (or distributes them
evenly among all processes).
For dynamic forest repartitioning it is not unusual for almost all of the
elements to change their owner process
\cite{Mitchell07,BursteddeGhattasStadlerEtAl08}.
We expect similar behavior for the trees, especially when the number of trees is
on the order of the number of processes or higher.

We suppose that in addition to its local trees and ghost trees, each process knows the
complete partition tables $f$ and $f'$, for example, in the form of offset arrays.
The task is now for each process to identify the processes to which it needs to send 
local and ghost trees  and then to execute the sending.
A process also needs to identify the processes from which it receives local and ghost
trees  and to execute the receiving.
We discuss here how each process can compute this information from the
offset arrays without further communication.

It will become clear in section~\ref{sec:partghosknow} that ghost trees need
not yet be discussed at this point.
Thus, it is sufficient to concentrate on the local trees for the time being.

\subsubsection{Ownership during partition}

The fact that trees can be shared between multiple processes poses a challenge 
when repartitioning a coarse mesh.
Suppose we have a process $p$ and a tree $k$ with $k\in f'(p)$,
and $k$ is a local tree for more than one process in the partition $f$.
We do not want to send the tree multiple times, so how do we decide which
process sends $k$ to $p$?

A simple solution would be that the process with the smallest index to which $k$ is
a local tree  sends $k$.
This process is unique and can be determined without communication.
However, suppose that the two processes $p$ and $p-1$ share the tree $k$
in the old partition, and $p$ will also have this tree in the new partition.
Then $p-1$ would send the tree to $p$ even though this message would not be
needed.

We resolve this issue by only sending a local tree to a process $p$ if this
tree is not already local on $p$.
\begin{pardgm}
  \label{par:sendtrees}
  When repartitioning with $k \in f'(p)$, the process that sends $k$ to $p$ is
  \begin{itemize}
   \item $p$ if $k$ already is a local tree of $p$, or else
   \item $q$ with $q$ minimal such that $k\in f(q)$.
 \end{itemize}
\end{pardgm}

We acknowledge that sending from $p$ to $p$ in the first case is just a local
data movement not involving communication.

\begin{definition}
When repartitioning, given a process $p$ we define the sets $S_p$ and $R_p$ of
processes to and from which $p$ sends and receives local trees,  respectively, and thus
\begin{subequations}
\begin{align}
 S_p&:=\set{0\leq p' < P \abst\vert \textrm{$p$ sends local trees to $p'$}},\\%[1ex]
 R_p&:=\set{0\leq p' < P \abst\vert \textrm{$p$ receives local trees from $p'$}}.
\end{align}
\end{subequations}
Both sets may include the process $p$ itself.
Furthermore, we establish the notation for the smallest and largest ranks in
these sets, understanding that they depend on $p$:
\begin{subequations}
\begin{align}
  s_\mathrm{first}:= \min S_p, &\quad
  s_\mathrm{last}:= \max S_p,
  \\
  r_\mathrm{first}:= \min R_p, &\quad
  r_\mathrm{last}:= \max R_p.
\end{align}
\end{subequations}
\end{definition}

If $S_p$ is empty, we set $s_\mathrm{first} = -1$ and $s_\mathrm{last} = -2$,
and likewise for $R_p$.

$S_p$ and $R_p$ are uniquely determined by Paradigm \ref{par:sendtrees}. 

\subsubsection{An example}

We discuss a small example; see Figure~\ref{fig:partitionex}.  Here, we
repartition a partitioned coarse mesh of five trees among three processes.
The initial partition $f$ is given by
\begin{equation}
\label{eq:example_start}
\texttt{O}=\set{0, -2, 3, 5}
\end{equation}
and the new partition $f'$ by
\begin{equation}
\texttt{O'} = \set{0, -3, -4, 5}.
\end{equation}
Thus, initially tree $1$ is shared by processes $0$ and $1$, while in the new
partition, tree $2$ is shared by processes $0$ and $1,$ and tree $3$ is shared by processes
$2$ and $3$.
We arrange the local trees that each process will send to every other process
in a table, where the set in row $i,$ column $j$ is the set of local trees that
process $i$ sends to process $j$:
\begin{equation}
\begin{array}{c|ccc}
  & 0 & 1 & 2 \\ \hline
0 & \set{0,1} & \emptyset & \emptyset  \\
1 & \set{2} & \set{2} & \emptyset \\
2 & \emptyset & \set{3} & \set{3,4}
\end{array}
\end{equation}
This leads to the following sets $S_p$ and $R_p$:
\begin{subequations}
\label{eq:example_end}
\begin{align}
S_0 &= \set{0},   & R_0&=\set{0,1},\\
S_1 &= \set{0,1}, & R_1&=\set{1,2},\\
S_2 &= \set{1,2}, & R_2&=\set{2}.
\end{align}%
\end{subequations}%
We see that process $1$ keeps the tree $2$ that is also needed by process
$0$.  Thus, process $1$ sends tree $2$ to process $0$.  Process $0$ also needs
tree $1$, which is local on process $1$ in the old partition.
But, since it is also local to process $0$, process $1$ does not send it.
{%
\begin{figure}
\center
\includegraphics[width=0.6\textwidth]{./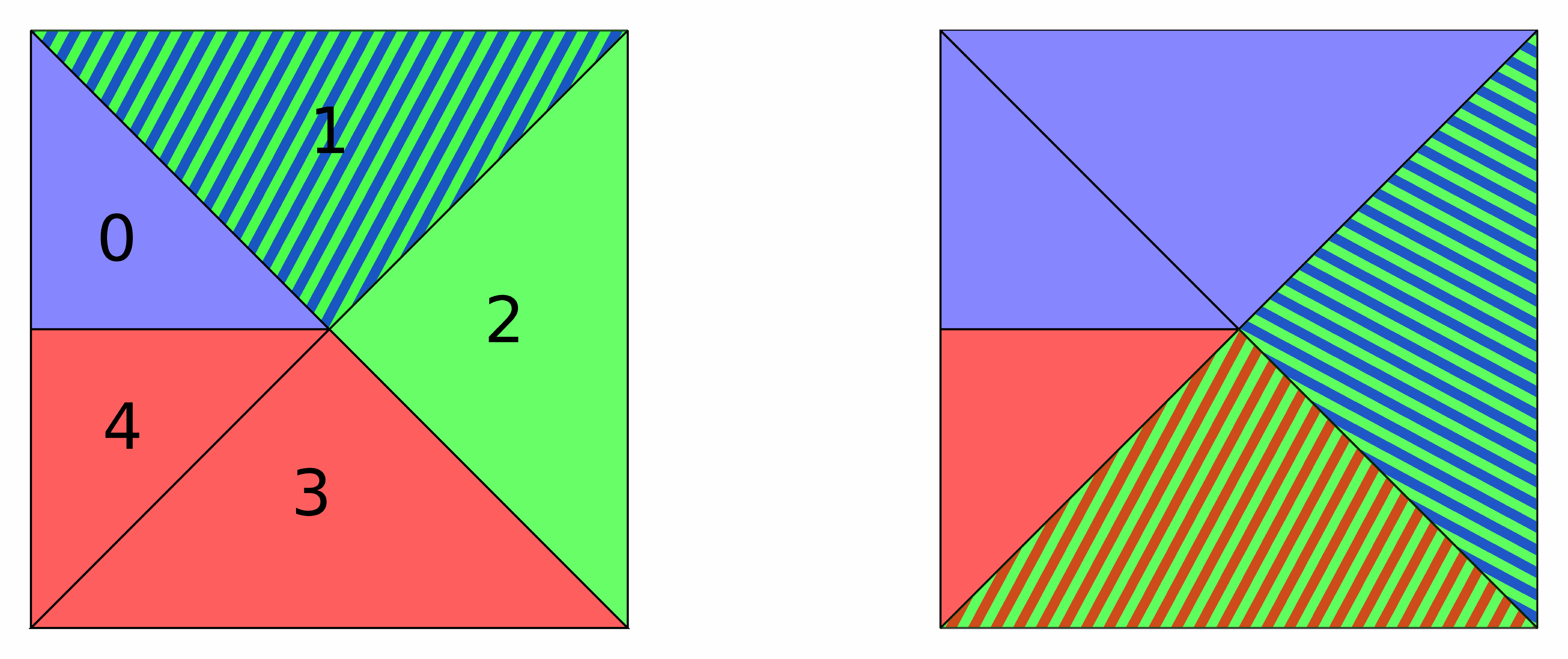}
\caption[An example for coarse mesh repartitioning]
{A small example.
The coarse mesh consists of five trees (numbered)
and is partitioned among three processes (color coded).
Left: the initial partition \texttt{O}.
Right: the new partition \texttt{O'}.
The colors of the trees encode the processes that have a tree as local tree.
Process $0$ is drawn in blue, process $1$ in green, and process $2$ in red.
Initially, tree $1$ is shared among processes $0$ and $1$, while in the new
partition, tree $2$ is shared among processes $0$ and $1,$ and tree $3$ is shared among
processes $1$ and $2$.
In
\eqref{eq:example_start}--\eqref{eq:example_end} we list the
sets \texttt{O} and \texttt{O'}, the trees that each process sends,
and the sets $S_p$ and $R_p$.}
\label{fig:partitionex}
\end{figure}
}%

\subsubsection{Determining \boldmath$S_p$ and $R_p$}
\label{sec:detSpRp}

In this section we show that each process can compute the sets $S_p$ and 
$R_p$ from the offset array without further communication.
\begin{proposition}
\label{prop:commpattern}
A process $p$ can calculate the sets $S_p$ and $R_p$ without further
communication from the offset arrays of the old and new partitions.
Once the first and last elements of each set are known, process $p$ can
determine in constant time whether any given rank is in any of those sets.
\end{proposition}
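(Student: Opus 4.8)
The plan is to unwind the definition of $S_p$ via Paradigm~\ref{par:sendtrees} and translate it into arithmetic on the offset arrays $\texttt O$ (old partition $f$) and $\texttt O'$ (new partition $f'$). First I would observe that the local trees of $p$ in the old partition form the consecutive interval $[k_p, K_p]$ (Proposition~\ref{prop:valid}(i), with the empty-process convention of Definition~\ref{def:seforempty}), recoverable from $\texttt O$ via Lemma~\ref{lem:rangefromoffset}. A process $p'$ belongs to $S_p$ if and only if there is some tree $k\in[k_p,K_p]$ that $p$ is responsible for sending to $p'$ under the paradigm; writing $[k'_{p'},K'_{p'}]$ for the new local range of $p'$, a necessary condition is that these two intervals intersect. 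The key remaining point is the tie-break: among all old owners of a shared tree $k$, $p$ is the designated sender to $p'$ exactly when either $p=p'$ (local move) or $p$ is the \emph{minimal} old owner of $k$ among processes other than $p'$, i.e.\ $k\notin f(p')$ and no $q<p$ has $k\in f(q)$. Because valid partitions only share trees at the boundary of a process's range (Proposition~\ref{prop:valid}(iii)) and each pair of processes shares at most one tree (Corollary~\ref{cor:valid1}), the only trees of $p$ that another old owner could also hold are $k_p$ and $K_p$; so the ``minimality'' condition only has to be checked for those two boundary trees, and whether $k_p$ (resp.\ $K_p$) is shared downward (resp.\ upward) is read off directly from the sign bits in $\texttt O[p]$ and $\texttt O[p+1]$.

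Concretely, I would argue that
\begin{equation}
  S_p = \{\, p' : [k_p,K_p]\cap[k'_{p'},K'_{p'}]\neq\emptyset \,\}\setminus E_p,
\end{equation}
where $E_p$ is the (at most two) ``excluded'' ranks: a process $p'<p$ is excluded if the only tree of $p$ it would receive is $k_p$ and $k_p$ is already local to $p'$ in the old partition (detectable because $k_p$ is then shared with the next smaller nonempty old process, indicated by $\texttt O[p]<0$, and that smaller owner is exactly that $p'$); symmetrically for $K_p$ shared upward. Since the map $p' \mapsto [k'_{p'},K'_{p'}]$ is monotone in $p'$, the set of $p'$ whose new range meets $[k_p,K_p]$ is itself an interval of ranks, so it suffices to compute its first and last element. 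I would compute $s_\mathrm{first}$ as the smallest $p'$ with $K'_{p'}\geq k_p$ (found from $\texttt O'$; this is a binary search, or in fact $O(1)$ if $\texttt O'$ is stored so that $p'$ can be recovered from a tree index, which is the standard setup) and $s_\mathrm{last}$ as the largest $p'$ with $k'_{p'}\leq K_p$, then remove any excluded boundary rank, taking care of the $s_\mathrm{first}=-1$, $s_\mathrm{last}=-2$ convention when $S_p$ becomes empty. The claim that ``given the first and last elements, membership of any rank can be decided in constant time'' then follows because, after the at-most-two exclusions are recorded, $S_p$ is an interval minus at most two endpoints, so a rank $p'$ lies in $S_p$ iff $s_\mathrm{first}\leq p'\leq s_\mathrm{last}$ and $p'$ is not one of the stored excluded ranks.

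The argument for $R_p$ is the mirror image: $p$ receives a tree $k\in f'(p)$ from the rank that Paradigm~\ref{par:sendtrees} designates as its sender, so $p'\in R_p$ iff $p'$ is the sender of some $k\in[k'_p,K'_p]$ to $p$. Again this forces $[k'_p,K'_p]\cap[k_{p'},K_{p'}]\neq\emptyset$, with the only subtlety at the trees $k'_p$ and $K'_p$: if $k'_p$ is already local to $p$ in the old partition then $p$ receives it from itself, not from its old left neighbor, which removes that neighbor from $R_p$; and for an interior new tree whose old ownership is shared, only the minimal old owner is the sender. The same monotonicity of $p'\mapsto[k_{p'},K_{p'}]$ makes the receiving set (before exclusions) an interval, so $r_\mathrm{first}$ and $r_\mathrm{last}$ are obtained by the analogous lookups in $\texttt O$, and constant-time membership follows as before. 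Everything is computed from $\texttt O$ and $\texttt O'$ alone, which both processes already hold, so no communication is needed — establishing Proposition~\ref{prop:commpattern}.

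I expect the main obstacle to be not the interval computation but the careful bookkeeping of the paradigm's tie-break near the shared boundary trees: one must verify, using the sign conventions of Definition~\ref{def:offsetarray} together with Proposition~\ref{prop:valid}(iii) and Corollaries~\ref{cor:valid1}--\ref{cor:valid2}, that the ``already local on the receiver'' exception and the ``minimal old owner'' rule together never cause a tree to be sent twice or dropped, and that they only ever affect the two boundary trees $k_p,K_p$ (resp.\ $k'_p,K'_p$) — so that the resulting correction to the interval is by at most one rank at each end and the constant-time membership test genuinely works. Handling empty processes consistently (via Definition~\ref{def:seforempty}) throughout these boundary cases is the other place where care is required.
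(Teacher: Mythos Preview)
Your overall strategy—locating the extremal ranks of $S_p$ and $R_p$ via interval overlap between old and new tree ranges and then correcting for the tie-break of Paradigm~\ref{par:sendtrees} at the boundary trees $k_p$ and $K_p$—matches the paper's approach in spirit. But the specific structural claim underpinning your constant-time membership test is false: $S_p$ is \emph{not} in general an interval minus at most two points.

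Take the old partition $f(p)=[5,10]$, $f(p{+}i)=\{10\}$ for $1\le i\le m{-}1$, $f(p{+}m)=[10,15]$, $f(p{+}m{+}1)=[16,20]$, and the new partition $f'(p)=[6,9]$, $f'(p{+}i)=\{10\}$ for $1\le i\le m$, $f'(p{+}m{+}1)=[10,12]$ (both valid in the sense of Proposition~\ref{prop:valid}). Then $s_\mathrm{first}=p$ (it sends $6,\dots,9$ to itself) and $s_\mathrm{last}=p{+}m{+}1$ (to which $p$, as minimal old owner of $10$, sends $10$). Yet every $p{+}i$ with $1\le i\le m$ already owns $10$ in the old partition and therefore self-sends by the paradigm, so $p$ sends them nothing: $S_p=\{p,\,p{+}m{+}1\}$ with $m$ interior gaps, not one or two. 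Your exclusion rule is also incomplete on the left: if $k_p$ is shared downward with some $q<p$, then $p$ fails to send $k_p$ to \emph{any} $p'\neq p$ (the minimal old owner $q$ does), not merely to those $p'$ that themselves held $k_p$ in the old partition.

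The paper does not try to describe $S_p$ as a near-interval. Instead it proves a separate constant-time criterion for each candidate rank (Lemma~\ref{lem:Spconstant}): one defines adjusted indices $\hat k_{\tilde p},\hat K_{\tilde p},\hat k_q,\hat K_q$—shifting $\hat k_{\tilde p}$ past a downward-shared first tree, shifting $\hat K_{\tilde p}$ below $K_{\tilde p}$ when $K_{\tilde p}$ equals $q$'s first \emph{old} tree, and shifting $\hat k_q$ past a tree that $q$ self-sends—and then tests whether the adjusted intervals overlap via four inequalities. This per-$q$ adjustment is precisely what absorbs the arbitrarily many self-sending neighbors in the example above; it, rather than a bounded list of excluded ranks, is the missing ingredient in your argument.
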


We split the proof into two parts.
First, we discuss how a process can compute $s_\mathrm{first}$,
$s_\mathrm{last}$, $r_\mathrm{first}$, and $r_\mathrm{last}$,
and then show how it can decide for two processes $\tilde p$ and $q$ whether $q\in
S_{\tilde p}$.
In particular, we will apply this decision to each of the process numbers
between the first and last elements of $S_p$ and $R_p$.

We begin by determining $s_\mathrm{first}$ and $s_\mathrm{last}$ for
$S_p \neq \emptyset$.
For $s_\mathrm{first}$ we consider two cases.
First, if the first local tree of $p$ is not shared with a smaller rank,
then $s_\mathrm{first}$ is the smallest process $q$ that has this tree in the new
partition and either is $p$ itself or did not have it in the old one.
We can find $q$ with a binary search in the offset array.

Second, if the first tree of $p$ is shared with a smaller rank, then $p$ only
sends it in the case that $p$ keeps this tree in the new partition.
Then $s_\mathrm{first} = p$.
Otherwise, we consider the second tree of $p$ and proceed with a binary search
as in the first case.

To compute $s_\mathrm{last}$, we notice that among all ranks that have
$p$'s old last tree in the new partition and did not already have it,
$s_\mathrm{last}$ is the largest
(except when $p$ itself is this largest rank, in which case it certainly had
the last tree).
We can determine this rank with a binary search as well.
If no such process exists, we proceed with the second-to-last tree of $p$, for which
we know that such a process must exist.

\begin{remark}
The special case $S_p=\emptyset$ occurs in the following situations:
\begin{enumerate}
\item $p$ does not have any local trees.
\item $p$ has one local tree that is shared with a smaller rank,
      and $p$ does not have this tree as a local tree in the new partition.
\item $p$ has two trees, the first of which case 2\ holds for.
      The second (last) tree is shared with a set $Q$ of bigger ranks, and
      there is no process $q\notin Q$ that has this tree as a local tree
      in the new partition.
\end{enumerate}
These conditions can be queried before computing $s_\mathrm{first}$ and
$s_\mathrm{last}$. To check condition 3, we need to perform one binary search,
while we evaluate conditions 1 and 2 in constant time.
\end{remark}

Similarly, to compute $R_p$, we first look at the smallest and largest elements
of this set.
These are the first and last processes from which  $p$ receives trees.
$r_\mathrm{first}$ is the smallest rank that had $p$'s new first tree as a
local tree in the old partition, or it is $p$ itself if this tree was also a
local tree on $p$.
Also $r_\mathrm{last}$ is the smallest rank greater than or equal to
$r_\mathrm{first}$ that had $p$'s new last local tree as a local tree in the
old partition, or $p$ itself.
We can find both of these with a binary search in the offset array of
the old partition.

\begin{remark}
$R_p$ is empty if and only if $p$ does not have any local trees in the new
partition.
\end{remark}

\begin{lemma}
\label{lem:Spconstant}
Given any two processes $\tilde p$ and $q$, the process $p$ can determine in
constant time whether $q\in S_{\tilde p}$.
Moreover, $p$ can determine for a given tree $k$ whether $\tilde p$ sends
$k$ to $q$.
In particular, this includes the cases $\tilde p = p$ and $q = p$.
\begin{proof}
 Let $\hat k_{\tilde p}$ be the first non-shared local tree of $\tilde p$ in the
 old partition.
 If such a tree does not exist, then $S_{\tilde p} = \emptyset$ or $S_{\tilde p}
=\set{\tilde p}$.
 Let $\hat K_{\tilde p}$ be the last local tree of $\tilde p$ in the old
 partition if it is not the first local tree of $q$ in the old partition,
 and let it be the second-to-last local tree otherwise.
 If such a second-to-last local tree does not exist, we conclude that $\tilde p$
 has only one tree in the old partition, and $q$  also has this tree in the
 old partition. Thus $q\notin S_{\tilde{p}}$.
 Furthermore, let $\hat k_q$ and $\hat K_q$ be the first and last local trees of
 $q$ in the new partition. %
 We add $1$ to $\hat k_q$ if $q$ sends its first local
 tree to itself, and this tree is also the new first local tree of $q$.
 We claim that $q\in S_{\tilde p}$ if and only if all of the four inequalities
 \begin{align}
  \hat k_{\tilde p} \leq \hat K_{\tilde p},\quad
  \hat k_{\tilde p} \leq \hat K_q,\quad
  \hat k_q \leq \hat K_{\tilde p},\quad\textrm{and}\quad
  \hat k_q\leq \hat K_q
 \end{align}
 hold.
 The only-if direction follows, since if $\hat k_{\tilde p}>\hat K_{\tilde p}$, then
$\tilde p$ does not have trees to send to $q$. If $\hat k_{\tilde p} > \hat K_q$, 
 then the last new tree on $q$ is smaller than the first old tree on $\tilde p$.
 If $\hat k_q > \hat K_{\tilde p}$, then the last tree that $\tilde p$ could send
 is smaller than the first new local tree of $p$. Also if $\hat k_q > \hat K_q$, then 
 $q$ does not receive any trees from other processes.
 Thus, $\tilde p$ cannot send trees to $q$ if any of the four conditions is not
 fulfilled.
 The if-direction follows, since if all four conditions are fulfilled, there
 exists at least one tree $k$ with
 \begin{align}
  \hat k_{\tilde p}\leq k \leq \hat K_{\tilde p} \quad\mathrm{and}\quad
  \hat k_q\leq k \leq K_q.
 \end{align}
 Any tree with this property is sent from $\tilde p$ to $q$.
 Process $p$ can compute the four values $\hat k_{\tilde p}, \hat K_{\tilde p},
 \hat k_q$, and $\hat K_q$
 from the partition offsets in constant time.
\end{proof}
\end{lemma}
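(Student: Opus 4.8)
The plan is to turn the question ``does $\tilde p$ send tree $k$ to $q$ during the repartition $f\to f'$?'' into a single interval-overlap test on tree indices, and then note that the endpoints of the two intervals involved can be read off directly from the offset arrays $\texttt O$ of $f$ and $\texttt O'$ of $f'$ (via Lemma~\ref{lem:rangefromoffset}), so that only a bounded number of lookups and arithmetic operations are needed --- in particular no search.

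First I would make Paradigm~\ref{par:sendtrees} explicit: $\tilde p$ sends $k$ to $q$ iff $k\in f'(q)$ and $\tilde p$ is the designated sender of $k$, which means either $\tilde p=q$ with $k\in f(q)$ (a purely local move, tracked for completeness), or $\tilde p\neq q$, $k\notin f(q)$, and $\tilde p=\min\setm{q'}{k\in f(q')}$. Using validity --- Proposition~\ref{prop:valid} together with Corollary~\ref{cor:valid2} --- the trees for which $\tilde p$ is the \emph{minimal} old owner form a contiguous block: it is $\tilde p$'s old local range $\set{k_{\tilde p},\dots,K_{\tilde p}}$ with $k_{\tilde p}$ discarded exactly when that first tree is shared with a smaller nonempty rank (which, by Definition~\ref{def:offsetarray}, is signalled by a negative entry of $\texttt O$). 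Its first index I call $\hat k_{\tilde p}$, and its last index is $K_{\tilde p}$ unless $K_{\tilde p}$ is already an old local tree of $q$ --- then $q$ would self-send it --- in which case I shorten it to $\hat K_{\tilde p}$. Symmetrically, the new local trees of $q$ that $q$ did not already own form a second contiguous block $[\hat k_q,\hat K_q]$, where $\hat k_q$ is $q$'s first new local tree, bumped up by one in the borderline self-send case, and $\hat K_q$ is $q$'s last new local tree.

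With these four endpoints in hand the lemma becomes combinatorially transparent: $\tilde p$ sends $k$ to $q$ precisely when $k$ lies in both blocks, and $q\in S_{\tilde p}$ precisely when both blocks are nonempty and overlap, i.e.\ when the four inequalities $\hat k_{\tilde p}\le\hat K_{\tilde p}$, $\hat k_{\tilde p}\le\hat K_q$, $\hat k_q\le\hat K_{\tilde p}$, $\hat k_q\le\hat K_q$ all hold. Each endpoint depends only on $\texttt O[\tilde p],\texttt O[\tilde p+1],\texttt O[q],\texttt O[q+1]$ and $\texttt O'[q],\texttt O'[q+1]$, so every step runs in constant time; the stated cases $\tilde p=p$ and $q=p$ are nothing special.

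The real work here, as usual, is the edge-case accounting rather than the main idea: $\tilde p$ having only one old local tree, that tree being shared downward (so there is no non-shared first tree and $S_{\tilde p}$ degenerates to $\emptyset$ or $\set{\tilde p}$), $K_{\tilde p}$ coinciding with $q$'s first old or first new tree, empty processes (handled by the start/end conventions of Definition~\ref{def:seforempty}), and deciding whether the self-send bump on $\hat k_q$ applies. I would dispose of these degenerate configurations first, each by a constant-time test, and only in the generic case invoke the interval-overlap criterion above.
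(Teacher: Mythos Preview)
Your proposal is correct and follows essentially the same route as the paper's proof: you define the same four endpoints $\hat k_{\tilde p},\hat K_{\tilde p},\hat k_q,\hat K_q$, reduce the question to the same four interval-overlap inequalities, and observe that everything is read off the offset arrays in constant time. Your framing is slightly more conceptual (explicitly naming it an interval-overlap test and invoking Proposition~\ref{prop:valid} and Corollary~\ref{cor:valid2} to justify contiguity of the blocks), but the argument and the edge-case accounting match the paper's.
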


\begin{remark}
Let $p$ be a process that is not empty in the new partition.
For symmetry reasons,
$R_p$ contains exactly those processes $\tilde p$ with $r_\mathrm{first}\leq
\tilde p \leq r_\mathrm{last}$ and $p\in S_{\tilde p}$.
\end{remark}

Thus, in order to compute $S_p$, we can compute $s_\mathrm{first}$ and
$s_\mathrm{last}$ and then check for each rank $q$ in between whether or not the conditions of
Lemma \ref{lem:Spconstant} are fulfilled with $\tilde p = p$.
For each process this check takes only constant run time.

Now, to compute $R_p$ we can compute $r_\mathrm{first}$ and $r_\mathrm{last}$ and then
check for each rank $q$ in between whether or not $p\in S_q$.

These considerations complete the proof of Proposition~\ref{prop:commpattern}.

\subsection{Face information for ghost trees}
\label{sec:partghosknow}

We identify the following five different types of possible face connections in a coarse
mesh:\\[1ex]
\begin{minipage}{0.4\textwidth}
\begin{enumerate}
  \item Local tree to local tree.
  \item Local tree to ghost tree.
  \item Ghost tree to local tree.
\end{enumerate}
\end{minipage}
\begin{minipage}{0.5\textwidth}
\begin{enumerate}
  \setcounter{enumi}{3}
  \item Ghost tree to ghost tree.
  \item Ghost tree to nonlocal and\\ non-ghost tree.
\end{enumerate}
\end{minipage}\\[1ex]
There are several possible approaches to which of these face connections of a local coarse
mesh we could actually store. As long as each face connection between any two
neighbor trees is stored at least once globally, the information of the coarse
mesh over all processes is complete, and a single process could reproduce all
five types of face connection at any time, possibly using communication.
Depending on which of these types we store, the pattern for sending and
receiving ghost trees during repartitioning changes.
Specifically, the tree that will become a ghost on the receiving process
may be either a local tree or a ghost on the sending process.

When we use the maximum possible information of all five types of connections, we
have the most data available and can minimize the communication required.
In particular, from the nonlocal neighbors of a ghost and
the partition table, a process can compute which other processes this ghost
is also a ghost of and of which it is a local tree.  With this information we can
ensure that a ghost is sent only once and only from a process that also sends
local trees to the receiving process.

The outline of the sending/receiving phase
then looks like the following:
\begin{enumerate}
 \item For each $q\in S_p$, send local trees that will be owned by $q$
   (following Paradigm \ref{par:sendtrees}).  
 \item Consider sending a neighbor of these trees to $q$ if
   it will be a ghost on $q$.
   Send one of these neighbors if either $p = q$ or both of the following
   conditions are fulfilled:
 \begin{itemize}
   \item $p$ is the smallest rank among those that consider sending this neighbor as a ghost, and
   \item $p\neq q$ and $q$ does not consider sending this neighbor as a ghost to itself.
 \end{itemize}
 \item For each $q\in R_p$, receive the new local trees and ghosts from $q$.
\end{enumerate}
In item 2\ a process needs to know, given a ghost that is considered for
sending to $q$, which other processes consider sending this ghost to $q$.
This can be calculated without further communication from the face-neighbor information
of the ghost.
Since we know for each ghost the global index of each of its
neighbors, we can check whether any of these neighbors is currently local on a
different process $\tilde p$ and will be sent to $q$ by $\tilde p$.  If so, we
know that $\tilde p$ considers sending this ghost to $q$.

Using this method, each local tree and ghost is sent only once to each receiver,
and only those processes send ghosts that send local trees anyway,
leading to minimal message numbers and message sizes.
Storing less information would increase either the number of communicating 
processes or the amount of data that is communicated.

Supposing we did not store the face connection type 5, for ghost trees
we would not have the information about to which nonlocal trees they connect.
With this face information we could use a communication pattern such that
each ghost is  received only once by a process $q$, by sending the new ghost
trees from a process that currently has it as a local tree (taking into account
Paradigm~\ref{par:sendtrees}).  However, that process might not
be an element of $R_q$, in which case additional processes would communicate.

If we stored only the local tree face information (types 1 and 2), then
we would have minimal control over the ghost face connections.
Nevertheless, we could define the partition algorithm by specifying that if a
process $p$ sends local trees to a process $q$, it will send all neighbors of
these local trees as potential ghosts to $q$.
The process $q$ is then responsible for deleting those trees that it received
more than once.
With this method the number of communicating processes would be the same but
the amount of data communicated would increase.

\begin{figure}
  \begin{center}
    \includegraphics{./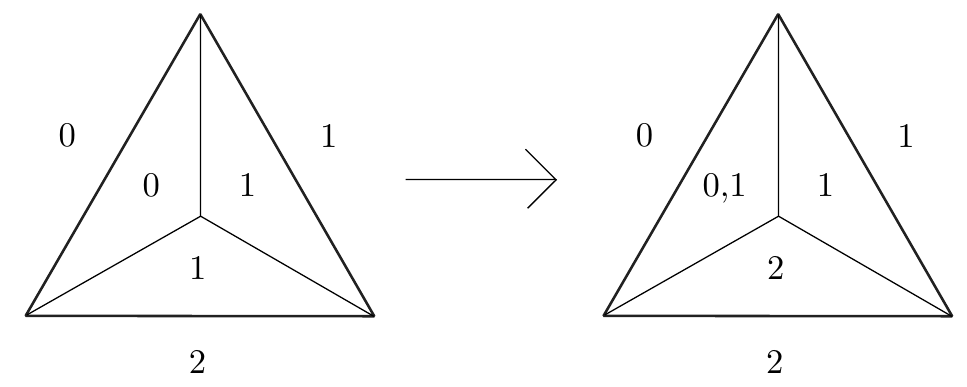}
    \\[4ex]
    \begin{tabular}{|c||c|c|c||c|c|c||c|c|c|}
      \hline
     &\multicolumn{3}{c||}{1, 2} &
      \multicolumn{3}{c||}{1, 2, 3, 4} &
      \multicolumn{3}{c|}{1, 2, 3, 4, 5} \\
      \hline
     p &0&1&2          & 0&1&2&            0&1&2\\ \hline 
     0&0(1,2)&0(2)&--- & 0&0&(0)          &0(1,2) & 0 & ---\\[0.8ex]
     1&---&1(2)&2(0,1) & (1,2) & 1(2) & 2(1)&--- & 1(2) & 2 (0,1)\\
      \hline
    \end{tabular}
  \end{center}
  \caption[The communication patterns for the different types of face information]
 {Repartitioning example of a coarse mesh showing the communication
  patterns controlled by the amount of face information available.
  Top: A coarse mesh of three trees is repartitioned. The numbers outside of the
  trees are their global indices. The numbers inside of each tree
  denote the processes that have this tree as a local tree.
  At first process $0$ has tree $0$, process $1$ has trees $1$ and $2$, and process $2$ has no
  local trees.  After repartitioning process $0$ has tree $0$, process $1$ has
  trees $0$ and $1$, and process $2$ has tree $2$ as local trees.
  Bottom: The table shows for each usage of face connection types which processes
  send which data.  The row of process $i$ shows in column $j$ which local
  trees $i$ sends to $j$, and---in parentheses---which ghosts it sends to $j$.
  Using face connection types $1$--$4$ we use more communication partners (process
  $0$ sends to process $2$ and process $1$ to process $0$) than with all five types.
  Using types $1$ and $2$ only, duplicate data is sent (process $0$ and
  process $1$ both send the ghost tree $2$ to process $1$).}
  \label{fig:comparelevels}
\end{figure}

In
Figure~\ref{fig:comparelevels}, we give an example comparing the three choices. 
To minimize the communication and overcome the need for postprocessing
steps, we propose to store all five types of face connection.

\section{Implementation}
\label{sec:cmesh-implementation}

Let us begin by outlining the data structures for trees, ghosts, and the
coarse mesh, and continue with a section on how to update the local tree and
ghost indices.
After this we present the partition algorithm to repartition a given coarse mesh
according to a precalculated partition array.
We emphasize that the coarse mesh data stores pure connectivity.
In particular, it does \emph{not} include the forest information, i.e.,\ leaf
elements and per-element payloads, which are managed by separate, existing
algorithms.

\subsection{The coarse mesh data structure}

Our data structure \texttt{cmesh} that describes a (partitioned) coarse mesh
has the following entries:
\begin{itemize}
 \item \texttt{O}: An array storing the current partition table; see
   Definition \ref{def:offsetarray}.
 \item $n_p$: The number of local trees on this process.
 \item $n_{\mathrm{ghosts}}$: The number of ghost trees on this process.
 \item \texttt{trees}: A structure storing the local trees in order of their global indices.
 \item \texttt{ghosts}: A structure storing the ghost trees in no particular order.
\end{itemize}

We use $64$-bit integers for global counts in \texttt O
and use $32$-bit signed integers for the local tree counts in \texttt{trees} and
\texttt{ghosts}.
This limits the number of trees per process to $2^{31}-1\cong 2\times 10^9$.
However, even with an overly optimistic memory usage of only $10$ bytes per tree,
storing that many trees would require about $18.6$ GB of memory per process.
Since on most distributed machines the memory per process is indeed much smaller,
choosing $32$-bit integers does not effectively limit the local number of 
trees.
In presently unimaginable cases, we could still switch to $64$-bit integers.

  We call the index of a local tree inside the \texttt{trees} array the \emph{local index}
  of this tree.
  Analogously,
  we call the index of a ghost in \texttt{ghosts} the \emph{local index} of that ghost.
On process $p$, we compute the global index $k$ of a tree in \texttt{trees}
from its local index $\ell$ and compute the global index $k_p$ of the first local tree
and vice versa, since
\begin{equation} 
  k = k_p + \ell
  .
\end{equation}
This allows us to address local trees with their local indices
using $32$-bit integers.

Each \texttt{tree} in the array \texttt{trees} stores the following data:
\begin{itemize}
  \item \texttt{eclass}:  The tree's shape as a small number (triangle,
    quadrilateral, etc.).
 \item \texttt{tree\_to\_tree}:  An array storing the local tree and ghost
   neighbors along this tree's faces. See section \ref{sec:faceneigh} and
  the text below.
 \item \texttt{tree\_to\_face}: An array encoding for each face the face-neighbor's
   face number and the orientation of the face connection.
   See section \ref{sec:orientation}.
 \item \texttt{tree\_data}: A pointer to additional data that we store with the
   tree, for example, geometry information or boundary conditions defined by an
   application.
\end{itemize}
The $i$-th entry of \texttt{tree\_to\_tree} encodes the tree number of the
face-neighbor at face $i$ using an integer $k$ with $0\leq k <
n_p+n_{\mathrm{ghosts}}$.  If $k<n_p$, the neighbor is the local tree with
local index $k$.  Otherwise, the neighbor is the ghost with local
index $k - n_p$.

We do not allow a face to be connected to itself. Instead, we use such a
connection in the face-neighbor array to indicate a domain boundary.
However, a tree can be connected to itself via two different faces.
This allows for one-tree periodicity, as say in a 2D torus consisting of a
single quadrilateral tree.

The \texttt{tree\_data} field can contain arbitrary data. An application can use these,
for example, to store higher-order geometry data per tree in order to account
for curved boundaries of the coarse mesh. Refined elements in the forest can then 
be snapped to the curved boundaries by evaluating the \texttt{tree\_data} 
field \cite{BursteddeGhattasGurnisEtAl10}.
\texttt{tree\_data} is partitioned to the processes together with the trees; thus possible duplicate
copies of it can exist.

Each \texttt{ghost} in the array \texttt{ghosts} stores the following data:
\begin{itemize}
 \item \texttt{Id}:  The ghost's global tree index.
 \item \texttt{eclass}:  The shape of the ghost tree.
 \item \texttt{tree\_to\_tree}:  An array giving for each face the global number
 of its face-neighbor.
 \item
   \texttt{tree\_to\_face}:
   As above.
\end{itemize}
Since a ghost stores the global number of all its face-neighbor trees,
we can locally compute all other processes that have this tree as a ghost by
combining the information from \texttt{O} and \texttt{tree\_to\_tree}.

\subsection{Updating local indices}

After partitioning, the local indices of the trees and ghosts change.
The new local indices of the local trees are determined by subtracting the global
index of the first local tree from the global index of each local tree.
The local indices of the ghosts are given by their positions in the 
data array. 

Since the local indices change after repartitioning, we update 
the
\texttt{tree\_to\_tree}
entries of the local trees to store those new values.
Because a neighbor of a tree can be either a local tree or a ghost on the
previous owning process $\tilde p$ and become either local or a ghost on the
new
owning process $p$, there are four cases that we shall consider.

We handle these four cases in two phases, the first phase being carried out on
process $\tilde p$ before the tree is sent to $p$. In this phase we change
all neighbor entries of the trees that become local.
The second phase executes on $p$ after the tree has been received from $\tilde p$.
At this point we change all neighbor entries belonging to trees that become
ghosts.

In the first phase,
$\tilde p$ has information about the first local tree on $\tilde p$ in the
old partition, its global number being $k_{\tilde p}$. Via
\texttt{O'} it also knows $k_p^\mathrm{new}$, the global index
of $p$'s first tree in the new partition.
Given a local tree on $\tilde p$ with local index $\tilde k$ in the old
partition, we compute its new local index $k$ on $p$ as
\begin{equation}
  \label{eq:idupdate1}
  k = k_{\tilde p} + \tilde k - k_p^\mathrm{new},
\end{equation}
which is its global index minus the global index of the new first local tree.
Given a ghost $g$ on $\tilde p$ that will be a local tree on $p$, we
compute its local tree number as
\begin{equation}
  \label{eq:idupdate2}
 k = g.\mathrm{Id} - k_p^\mathrm{new}.
\end{equation}

In the second phase, $p$ has received all its new trees and ghosts and
thus can give the new ghosts local indices to be stored in the \texttt{neighbors}
fields of the trees.
We do this by parsing its ghosts for each process $\tilde p \in R_p$ (in ascending order)
 and incrementing a counter.
For each ghost, we parse its neighbors for local trees, and for any of these
we set the appropriate value in its \texttt{neighbors} field.

Note that these four cases apply in the special case $\tilde p = p$ as well.

\subsection{\texttt{Partition\_cmesh}: Algorithm
\ref{alg:partgiven}}

The input is a partitioned coarse mesh $C$ and a new partition layout \texttt{O'},
and the output is a new coarse mesh $C'$ that carries the same information as
$C$ and is partitioned according to \texttt{O'}.

This algorithm follows the method described in section \ref{sec:partghosknow}
and is separated into two main phases, the \texttt{sending phase}
and the \texttt{receiving phase}.
In the former we iterate over each process $q\in S_p$ and decide which local trees
and ghosts we send to $q$.
Before sending, we carry out phase one of the update of the local tree numbers.
Subsequently, we receive all trees and ghosts from the processes in
$R_p$ and carry out phase two of the local index update.

In the \texttt{sending phase} we iterate over the trees that we send to $q$.
For each of these trees we check for each neighbor (local tree and ghost)
whether we send it to $q$ as a ghost tree. This is the second item in the list
of section \ref{sec:partghosknow}.
The function \texttt{Parse\_neighbors} decides for a given local tree or ghost
neighbor whether it is sent to $q$ as a ghost.

\begin{algorithm}
\SetVlineSkip{1pt} %
\DontPrintSemicolon
\caption{\texttt{Partition\_cmesh}(\texttt{cmesh} $C$, \texttt{partition}
\texttt{O'})}  
\label{alg:partgiven}
\algoresult{A \texttt{cmesh} $C'$ that consists of the same trees
as $C$ and is partitioned according to \texttt{O'}.}\;
  $p\gets$ this process 
  
  From \texttt{C.O} and \texttt{O'} determine $S_p$ and
  $R_p$. \Comment{see section \ref{sec:detSpRp}}

  \tcc*[l]{Sending phase}

 \algofor {each $q\in S_p$}
 {
 $G\gets\emptyset$  \Comment{trees $p$ sends as ghosts to $q$}
  
  $s \gets$ first local tree to send to $q$.
  
  $e \gets$ last local tree to send to $q$.
  
  $T \gets \set{C.\texttt{trees}[s],\ldots,C.\texttt{trees}[e]}$\Comment{local trees $p$ sends to $q$}

  \For {$k\in T$}
  {
  \texttt{Parse\_neighbors} ($C$, $k$, $q$, $G$, \texttt{O'}, $s$, $e$)
  }
  \texttt{update\_tree\_ids\_phase1} $(T\cup G)$\Comment{see equations \eqref{eq:idupdate1} and \eqref{eq:idupdate2}}

  Send $T\cup G$ to process $q$
  }

  \tcc*[l]{Receiving phase}
 \algofor {each $q\in R_p$}
 {
   Receive $T[q] \cup G[q]$ from process $q$
 }
 
 $C'.\texttt{trees}\gets \displaystyle\bigcup_{R_p} T[q]$
 \Comment{new array of local trees}
  
 $C'.\texttt{ghosts}\gets \displaystyle\bigcup_{R_p} G[q]$
 \Comment{new array of ghost trees}
  
 \texttt{update\_tree\_ids\_phase2} $(C'.\texttt{ghosts})$

 $C'.\texttt{O}\gets \texttt{O'}$

 \Return C'
   \vspace{1ex}
   \hrule
   \vspace{1ex}
 
 \tcc*[l]{decide which neighbors of $k$ to send as a ghost to $q$}
 \setcounter{AlgoLine}{0}
 \textbf{Function} \texttt{Parse\_neighbors}(\texttt{cmesh} $C$, \texttt{tree} $k$,
 \texttt{process} $q$, \texttt{ghosts} $G$, \texttt{partition} \texttt{O'},
 \texttt{tree\_indices} $s$, $e$)

   \algofor {$u\in k.\texttt{tree\_to\_tree}\ohne\set{s,\ldots,e}$
   }
   {
    \algoeif {$0\leq u<n_{p}
    \algoand \texttt{Send\_ghost}(C,\texttt{ghost}(u),q,\texttt{O'})$}
    {
    \If {$u+k_p\notin f'(q)$}
        {
          $G\gets G\cup \set{\texttt{ghost}(u)}$ \Comment{local tree $u$ becomes ghost of $q$}
        }
    }
    (\IfComment{$n_{p}\leq u$})
    {
     $g\gets C.\texttt{ghosts}[u-n_p]$
     
     \If {$g.Id\notin f(q) \algoand \texttt{Send\_ghost}(C,g,q,\texttt{O'})$}
     {
          $G\gets G\cup \set{g}$
          \Comment{$g$ is a ghost of $q$}
     }
    }
   }
   \vspace{1ex}
   \hrule
   \vspace{1ex}
  \tcc*[l]{Subroutine to decide whether to send a ghost or not}
 \setcounter{AlgoLine}{0}
 \textbf{Function} \texttt{Send\_ghost}(\texttt{cmesh} $C$, \texttt{ghost} $g$, \texttt{process} $q$,
 \texttt{partition} \texttt{O'})

 $S\gets\emptyset$
 
 \algofor {$u\in g.\texttt{tree\_to\_tree}$}
 {
   \algofor {$q'$ with $u$ is a local tree of $q'$}
   {
   \algoifcom{\IfComment{See Lemma~\ref{lem:Spconstant}}}{$q'$ sends $u$ to $q$} 
    {
      $S\gets S\cup \set{q'}$
    }
  }
 }
 
 \eIf {$q\notin S \algoand p = \min S$}
 {
  \Return true \Comment{$p$ is the smallest rank sending trees to $q$}
 }
 {
  \Return false
 }
\end{algorithm}

\section{Numerical results}
\label{sec:cmeshnumres}

The run time results that we present here have been obtained with version 0.2
of \tetcode\footnote{\url{https://github.com/cburstedde/t8code}}
using the JUQUEEN supercomputer at For\-schungs\-zentrum J{\"u}lich, Germany.
It is an IBM BlueGene/Q system with 28,672 nodes consisting of IBM PowerPC
A2 processors at 1.6~GHZ with 16~GB RAM per node \cite{Juqueen}.
Each compute node has 16 cores and is capable of running up to 64 MPI
processes using multithreading.

\subsection{How to obtain example meshes}
\label{sec:cmesh_example_forest}

To measure the performance and mem\-ory consumption of the algorithms presented
above, we would like to test the algorithms on coarse meshes
that are too big to fit into the memory of a single process,
which is 1 GB on JUQUEEN if we use 16 MPI ranks per node.
We consider the following three approaches to construct such meshes:
\begin{enumerate}
\item Use an external parallel mesh generator.
\item Use a serial mesh generator on a large-memory machine, transfer the
  coarse mesh to the parallel machine's file system, and read it using
  (parallel) file I/O.
\item
  Create a large coarse mesh by forming the disjoint union of smaller coarse
  meshes over the individual processes.
\end{enumerate}
Due to a lack of availability of parallel open source mesh generators,
we restrict ourselves to the second and third methods.
These have the advantage of being started with initial coarse meshes that fit
into a single process's memory such that we can work with serial mesh
generating software.
In particular, we use \texttt{gmsh}, \texttt{TetGen}, and \texttt{Triangle}
\cite{GeuzaineRemacle09, Shewchuk96, Si06}.

The third method is especially well suited for weak scaling studies, since
the individual small coarse meshes can be created programmatically and
communication-free on each process.
They may be of  the same size or different sizes among\ the processes.

We discuss two examples below. In the first we examine purely the coarse mesh
partitioning without regard for a forest and its elements (using hexahedral
meshes), and in the second
we drive the coarse mesh partitioning by a dynamically changing forest of
elements (using tetrahedral meshes).
The latter example produces shared trees and thus fully executes the
algorithmic ideas put forward above.

\subsection{Disjoint bricks}

In our first example we conduct both strong and weak scaling studies of coarse mesh
partitioning and test the maximal number of (hexahedral) trees that we can support
before running out of memory.
For our weak scaling results, we keep the same per-process
number of trees while increasing the total number of processes,
which we achieve by constructing
an $n_x \times n_y \times n_z$ brick of trees on each process
using three constant parameters $n_x, n_y$, and $n_z$.
We repartition this coarse mesh once, by the rule that each rank $p$ sends
43\% of its local trees to the rank $p+1$ (except the biggest rank $P-1$, which
keeps all its local trees).
We choose this odd percentage to create nontrivial boundaries between the
regions of trees to keep and trees to send.
See Figure \ref{fig:cmesh_partition1} for a depiction of the partitioned coarse
mesh on six processes.
The local bricks are created locally as \pforest connectivities with
\texttt{p4est\_connectivity\_new\_brick} and are then reinterpreted in parallel
as a distributed coarse mesh data structure.

\begin{figure}
\includegraphics[width=\textwidth]{./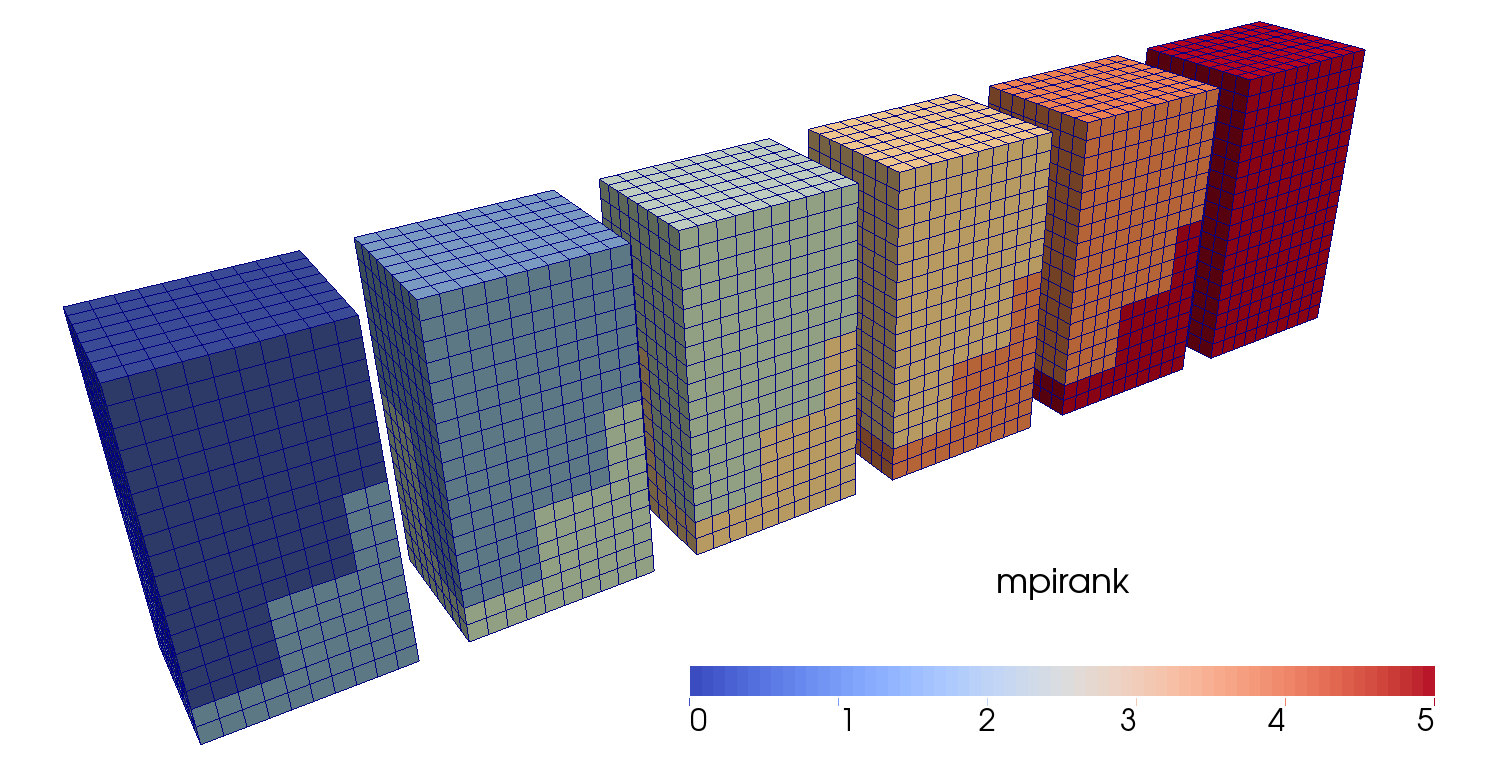}
\caption[The coarse mesh of disjoint bricks]
{The structure of the coarse mesh that we use to measure the maximum
possible mesh sizes and scalability for an example with six processes.
Before partitioning, the coarse mesh local to each process is created as one
$n_x\times n_y \times n_z$ block of hexahedral trees.
We repartition the mesh such that each process sends $43$\% of its local trees to
the next process.
The picture shows the resulting partitioned coarse mesh with parameters $n_x =
10, n_y = 18,$ and $n_z = 8$ and color coded MPI rank.}
\label{fig:cmesh_partition1}
\end{figure}

\begin{figure}
\center
\begin{minipage}{0.49\textwidth}
\includegraphics[width=\textwidth]{./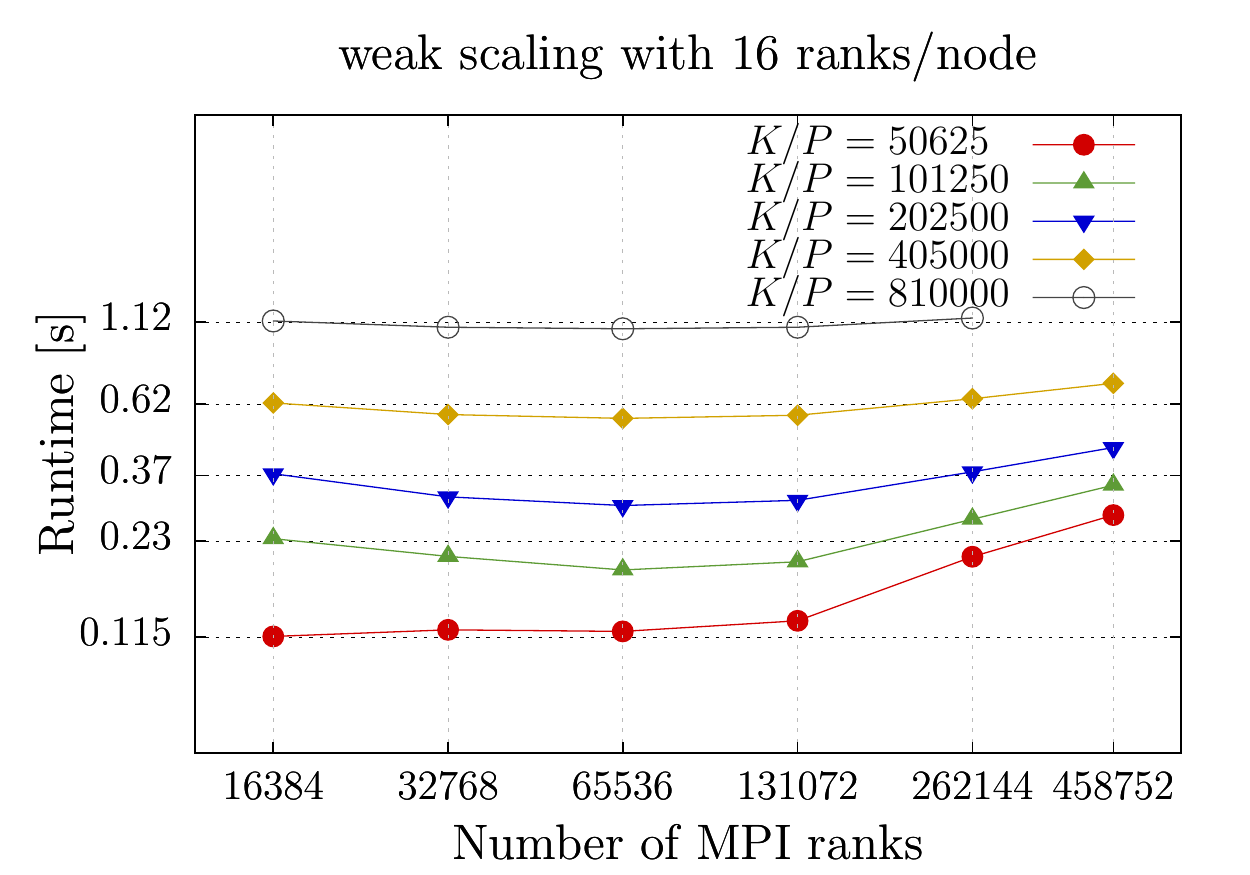}
\end{minipage}
\begin{minipage}{0.49\textwidth}
\includegraphics[width=\textwidth]{./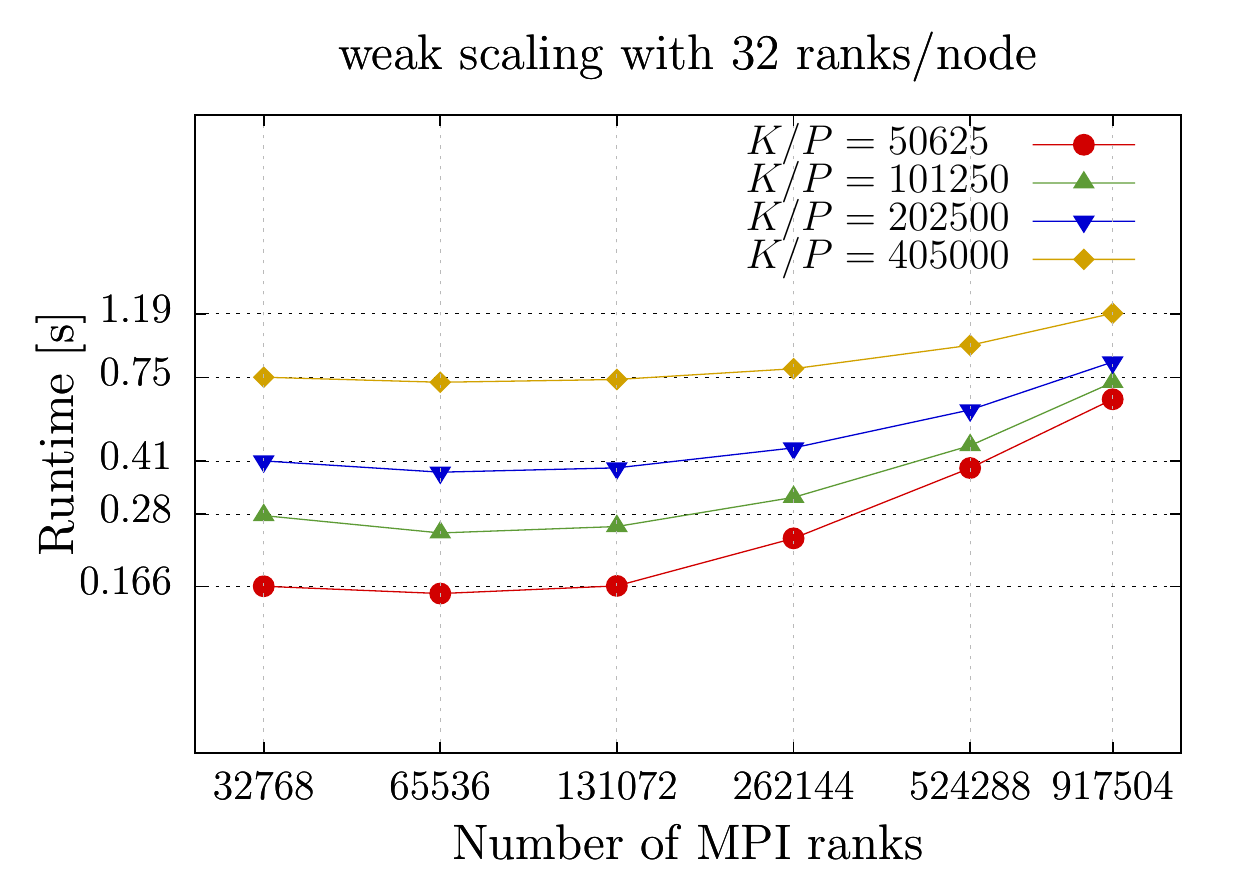}
\end{minipage}
\caption[Weak scaling of coarse mesh repartitioning]
{Weak scaling of \texttt{Partition\_cmesh} with disjoint bricks.
Left: $16$ ranks per node. Right: $32$ ranks per node.
We show the run times for the baseline on the y-axis and provide graphs for
different ratios between total coarse cells $K$ and MPI processes $P$.
On the left-hand side the time for the largest $458{,}752$ process run is
$0.72$ seconds; on the right-hand side the time
for the largest $917{,}504$ process run is $1.19$ seconds.
We obtain efficiencies of $0.62/0.72 = 86\%$ and $0.75/1.19 = 63\%$
compared to the baselines of $16{,}384/32{,}768$ MPI ranks, respectively (yellow lines).
The time for the $262,144$ process run with $810$e\/$3$ trees per process (black line)
increases from $1.12$ to $1.15$ seconds,
which translates into a weak scaling efficiency of $97.4$\%. (See online version for color.)}
\label{fig:cmesh_disjoint_scaling}
\end{figure}

\begin{figure}
\center
\begin{minipage}{0.49\textwidth}
\includegraphics[width=\textwidth]{./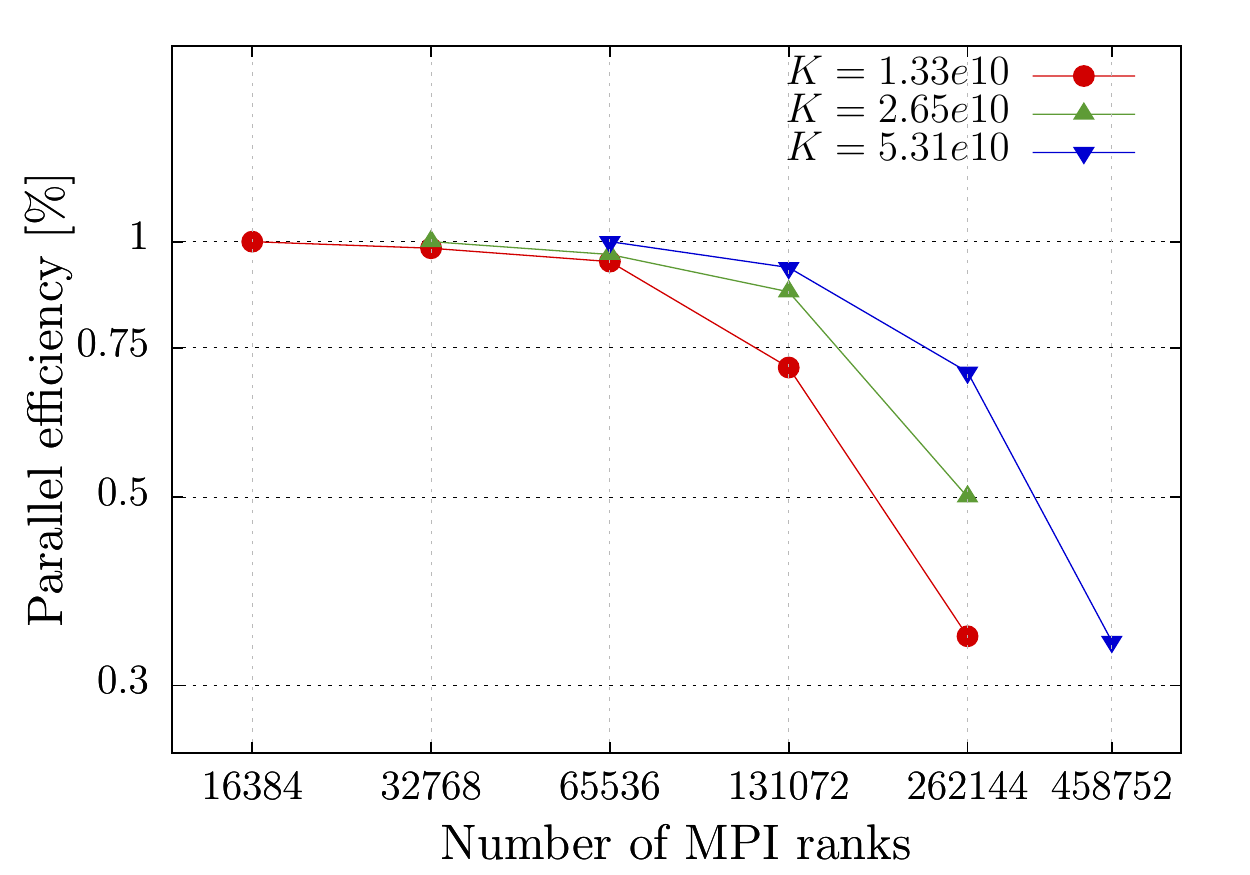}
\end{minipage}
\begin{minipage}{0.49\textwidth}
\includegraphics[width=\textwidth]{./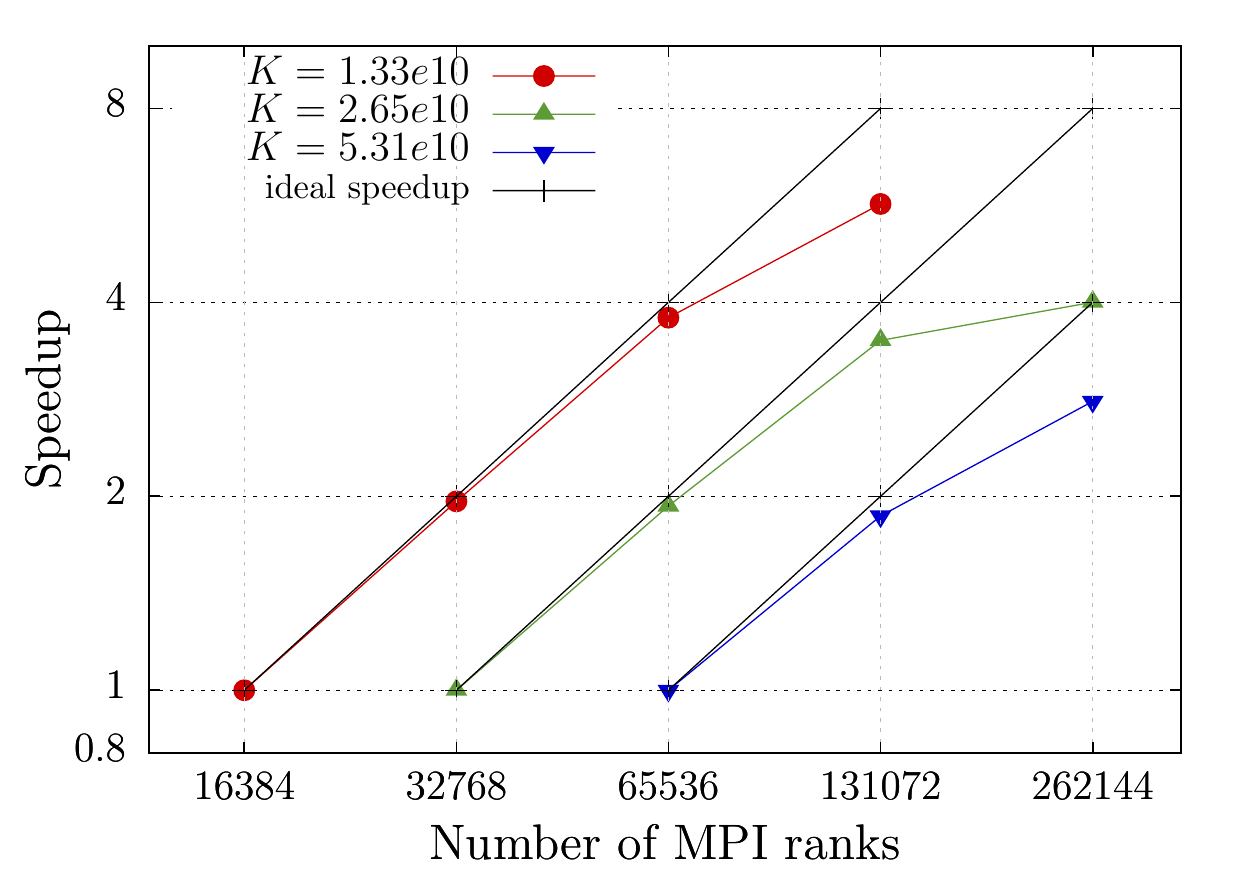}
\end{minipage}
\caption[Strong scaling of coarse mesh repartitioning]
{Strong scaling of \texttt{Partition\_cmesh} for the disjoint bricks
example on JUQUEEN with $16$ ranks per compute node,
for three runs with $13.3$e\/$9$, $26.5$e\/$9$, and $53.1$e\/$9$ trees.
We show the parallel efficiency on the left and the speedup on the right.
The absolute run times for $262{,}144$ processes are $0.21$, $0.27$, and $0.38$ seconds.}
\label{fig:cmesh_disjoint_scaling_strong}
\end{figure}

\begin{table}
\center
\begin{tabular}{|r|r|rr|r|r|}
 \multicolumn{6}{c}{Run time tests for \texttt{Partition\_cmesh}}\\[2ex] \hline
 \multicolumn{6}{|c|}{\mytabvspace 131,072 MPI ranks (16 ranks per node)}\\ \hline
  \mytabvspace Mesh size & Per rank & \multicolumn{2}{|c|}{Trees (ghosts) sent} & Time [s] & Factor\\ \hline
  \mytabvspace 6.635e9 &  50,625 & 21,767 &(3,414) &  0.13 & --\\
               13.27e9 & 101,250 & 43,536 &(5,504) &  0.20 & 1.53\\
               26.54e9 & 202,500 & 87,074 &(6,607) &  0.31  & 1.56\\
               53.08e9 & 405,000 & 174,149 & (11,381) & 0.57 & 1.85\\
               106.2e9 & 810,000 & 348,297 & (22,335) & 1.08  & 1.89\\
 \hline\hline
 \multicolumn{6}{|c|}{917,504 MPI ranks (32 ranks per node)}\\ \hline
  \mytabvspace Mesh size & Per rank & \multicolumn{2}{|c|}{Trees (ghosts) sent} & Time [s] & Factor \\ \hline
  \mytabvspace   46.45e9 & 50,625  & 21,768  & (3,413) & 0.64 & --\\
  \mytabvspace   92.90e9& 101,250  & 43,537  & (5,504) & 0.72 & 1.13 \\
  \mytabvspace  185.8e9 & 202,500  & 87,075  & (6,607) & 0.84 & 1.12 \\
  \mytabvspace  371.6e9 & 405,000  & 174,150 &(11,383) & 1.19 & 1.42 \\ \hline
\end{tabular}
\caption[Runtimes for \texttt{Partition\_cmesh} on JUQUEEN]
{The run times of \texttt{Partition\_cmesh} for $131{,}072$ processes with
$16$ processes per node (top) and $917{,}504$ processes with $32$ processes per node
(bottom). The largest coarse mesh that we created during the tests has $371$e\/$9$
trees.  In the middle column we list the average number of trees (ghosts)
that each process sends to another process.
The last column is the quotient of the current run time divided
by the previous run time. Since we double the mesh size in each step, we expect an increase in run time of a factor of $2$, which hints at parallel
overhead becoming negligible in the limit of many trees per process.%
}
\label{tab:131k917k}
\end{table}

We perform strong and weak scaling studies on up to 917,504 MPI ranks and display
our results in Figures~\ref{fig:cmesh_disjoint_scaling} and
\ref{fig:cmesh_disjoint_scaling_strong} and Table~\ref{tab:131k917k}.
We show the results of one study with 16 MPI ranks per compute node, thus 1 GB
available memory per process, and one with 32 MPI ranks per compute node,
leaving half of the memory per process.
In both cases we measure run times for different mesh sizes per process.
We observe that even for the biggest meshes of 405e3 and 810e3 trees
per process the absolute run times of partition are below 1.2 seconds.
Furthermore, we measure a weak scaling efficiency of 97.4\% for the 810e3
mesh on 262,144 processes and 86.2\% for the 405e3
mesh on 458,752 processes.
The biggest mesh that we created is partitioned between 917,504 processes and
uses 405e3 trees per process for a total of over 371e9 trees.

We notice a drop off in the scaling behavior when the number of trees per 
process is about 100e3 and smaller. 
At this stage the time for local computation is small relative to the time for
communication.
Since in many cases the number of trees per process will likely be even
smaller, we add Table~\ref{tab:smallmeshes}.
It documents running \texttt{Partition\_cmesh} with small coarse meshes, using a
number of trees on the order of the number of processes.
These tests show that
for such small meshes the run times are on the order of milliseconds.
Hence, for small meshes there is no disadvantage in using a partitioned coarse
mesh over a replicated one (i.e., each process holding a full copy).

\begin{table}
\center
\begin{tabular}{|r|r|r|}\hline
\#MPI ranks  & \#trees & Run time [s] \\\hline
1,024 &   4,096 & 0.00136\\ 
1,024 &   8,192 & 0.00149\\
1,024 &  16,384 & 0.00142\\
   64 &     105 & 0.00122\\ 
   32 &     105 & 0.00789 \\\hline
64 &  3,200 &  0.000293\\
64 & 19,200 &  0.000865\\\hline
\end{tabular}
\caption[Runtimes for \texttt{Partition\_cmesh} with small meshes]
{Run times for \texttt{Partition\_cmesh} for relatively small coarse
meshes. The bottom two rows of the table was not computed on JUQUEEN but on a
local institute cluster of $78$ nodes with $8$ Intel Xeon CPU E\/$5$-$2650$ v$2$ @ $2.60$GHz
each.}
\label{tab:smallmeshes}
\end{table}

\subsection{An example with a forest}

In this example we partition a tetrahedral coarse mesh according to a parallel
forest of fine elements.
While we pushed the maximum number of trees in the previous example,\enlargethispage{1pc}
we now consider mesh sizes that occur in more common usage scenarios.\newpage

When simulating shock waves or two-phase flows, there is often an interface along
which a finer mesh resolution is desired in order to minimize computational
errors.
Motivated by this example, we create the forest mesh as an initial uniform
refinement of the coarse mesh with a specified level $\ell$ and refine it in a
band along an interface defined by a plane in $\IR^3$ up to a maximum
refinement level $\ell+k$.
As the refinement rule we use 1:8 red refinement \cite{Bey92} together with the
tetrahedral Morton SFC \cite{BursteddeHolke16}. We move the 
interface through the domain with a constant velocity. Thus, in each time
step the mesh is refined and coarsened, and therefore we
repartition it to maintain an optimal load balance. We measure run times for
both coarse mesh and forest mesh partitioning for three time steps.

Our coarse mesh consists of tetrahedral trees modeling a brick with spherical
holes in it. To be more precise, the brick is built out of $n_x\times n_y
\times n_z$ tetrahedralized unit cubes, and each of those has one spherical hole
in it; see Figures \ref{fig:cmesh_partition2} and \ref{fig:forest_partition2}
for a small example mesh.

We create the mesh in serial using the generator \texttt{gmsh}
\cite{GeuzaineRemacle09}.
We read the whole file on a single process, and thus
use a local machine with 1 terabyte memory for preprocessing.
On this machine we partition the
coarse mesh to several hundred processes and write one file for each partition.
This data is then transferred to the supercomputer.
The actual computation consists of reading the coarse mesh from files, creating
the forest on it, and partitioning the forest and the coarse mesh
simultaneously.
To optimize memory while loading the coarse mesh, we open at most one
partition file per compute node.

The coarse mesh that we use in these tests has parameters $n_x = 26, n_y = 24,
n_z = 18$ and thus 11,232 unit cubes. Each cube is tetrahedralized with
about 34,150 tetrahedra, and the whole mesh consists of 383,559,464 trees.
In the first test, we create a forest of uniform level 1 and
maximal refinement level 2, and in the second, we create a forest of uniform level 2
and maximal refinement level 3. The forest mesh in the first test consists of
approximately 2.6e9 elements. In the second test, we also use a broader
band and obtain a forest mesh of 25e9 tetrahedra.

In Table \ref{tab:brickexample_cmesh} we show the run time results and further statistics for coarse mesh partitioning
and in Table \ref{tab:brickexample_forest} show results
for forest partitioning .
We observe that the run time for \texttt{Partition\_cmesh} is between $0.10$ and
$0.13$ seconds, and that about 88\% or 98\%, respectively, of all processes share
local trees with other processes.
The run times for forest partition are below $0.22$ seconds for the first example
and below $0.65$ seconds for the second example.

We also run a third test on 458,752 MPI ranks and refine the forest to a maximum
level of four. Here, the forest mesh has 167e9 tetrahedra, which we partition
in under 0.6 seconds. The coarse mesh partition routine runs in about 0.2
seconds. Approximately 60\% of the processes have a shared tree in the coarse
mesh. In Table 
\ref{tab:brickexample_forest_458k} we show the results from this test.

\begin{figure}
\center
\includegraphics[width=0.7\textwidth]{./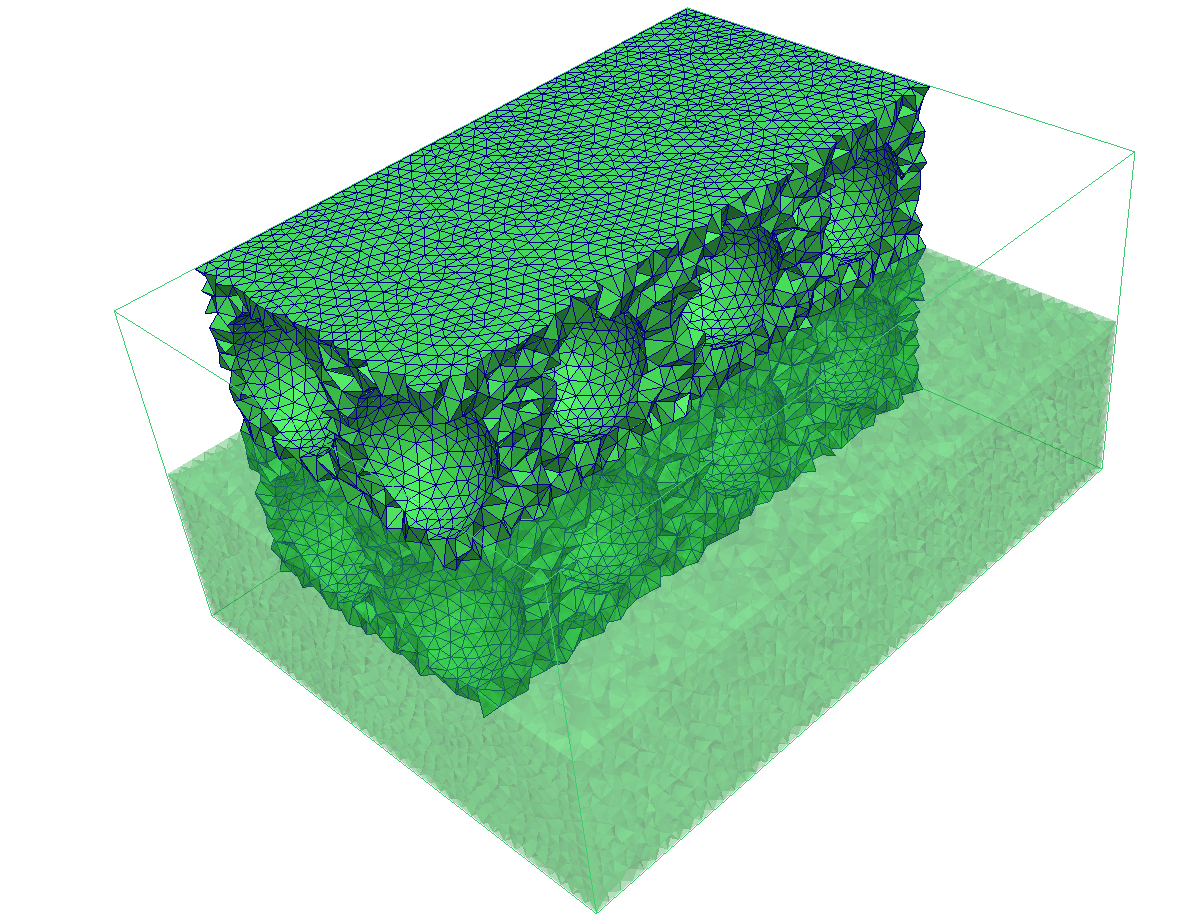}
\caption[A coarse mesh that models a brick with holes]
{The coarse mesh connectivity that we use for the
  partition tests motivated by an adapted forest.
It consists of $n_x\times n_y \times n_z$ cubes, with each cube having one spherical hole.
For this picture we use $n_x = 4,\, n_y = 3,\, n_z = 2$, and
each cube is triangulated with approximately $7{,}575$ tetrahedra.
For illustration purposes we show some parts of the mesh as opaque and other parts as
invisible.}
\label{fig:cmesh_partition2}
\end{figure}

\begin{figure}
\center
\includegraphics[width=0.3\textwidth]{./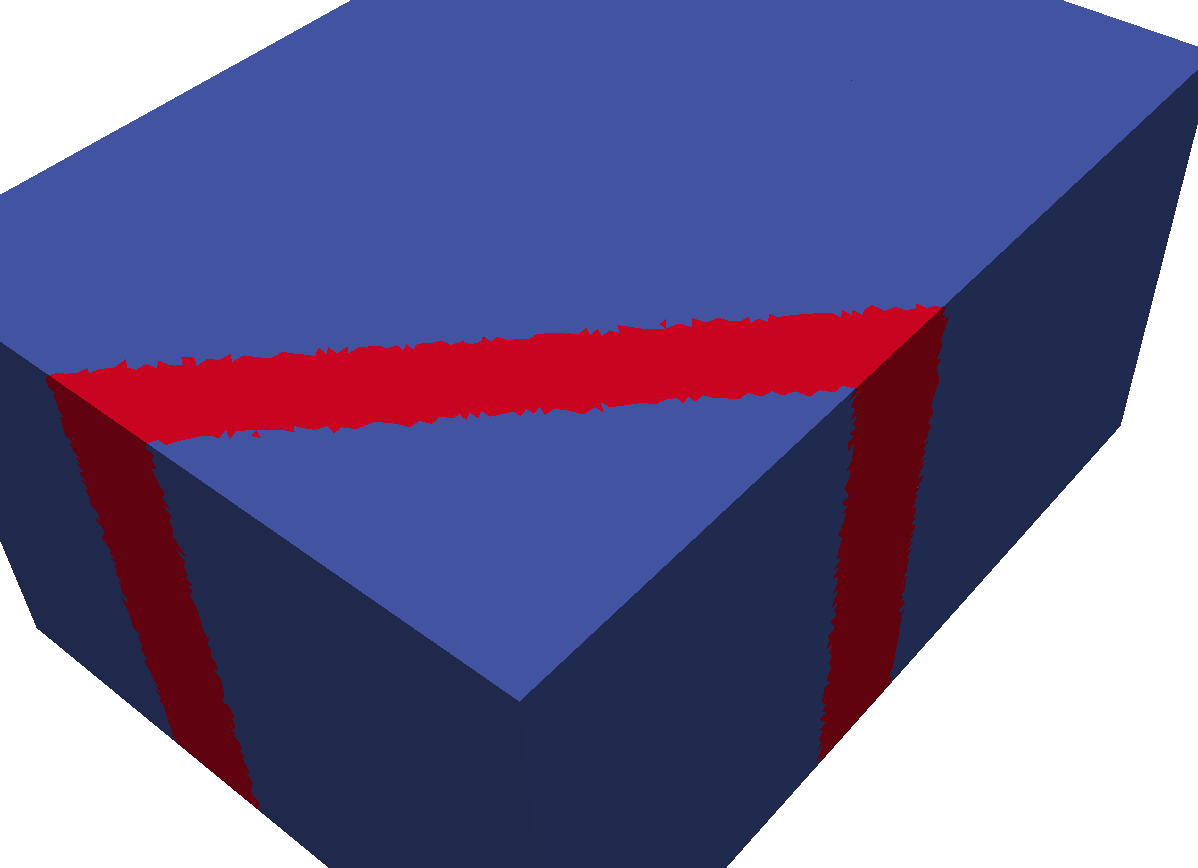}
\includegraphics[width=0.3\textwidth]{./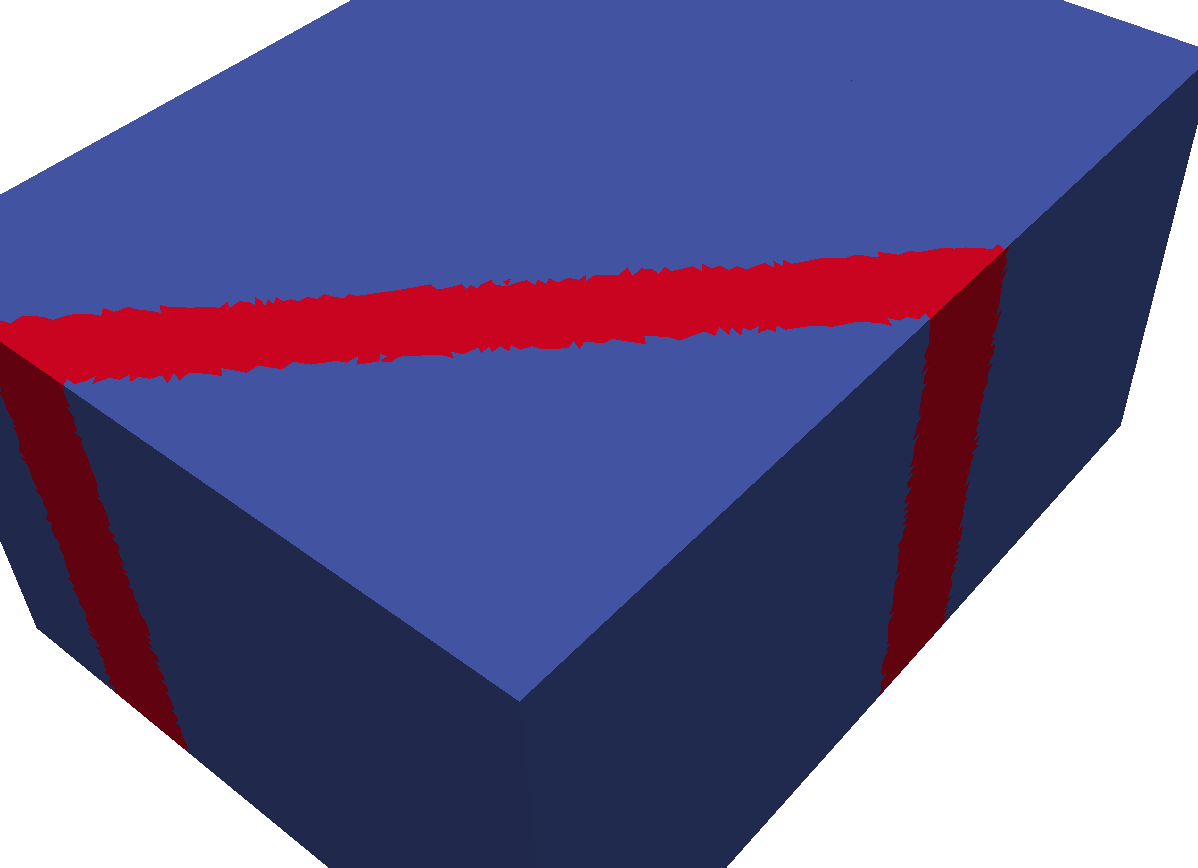}
\includegraphics[width=0.3\textwidth]{./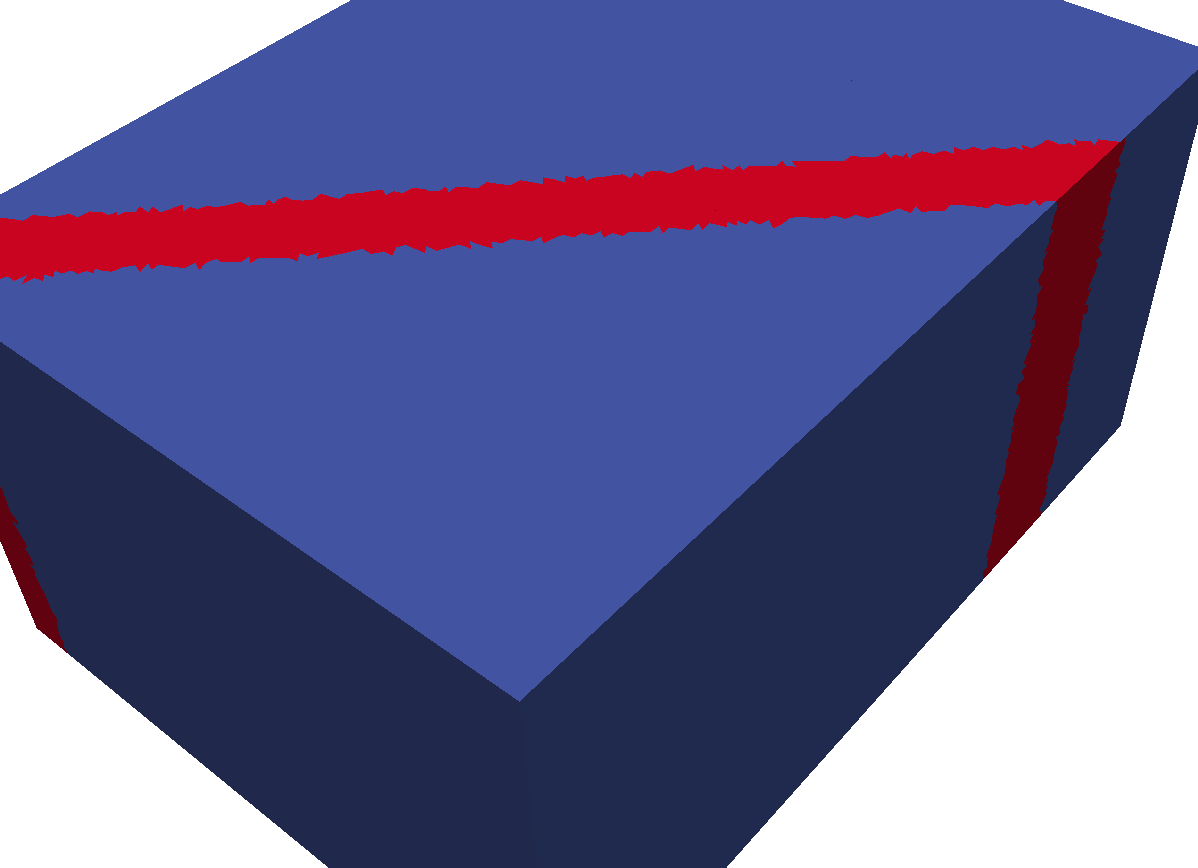}
\caption[The refined forest on the brick with holes mesh]
{An illustration of the band of finer forest mesh elements in the
example. The region of finer mesh elements moves through the mesh in each time
step. From left to right we see $t=1$, $t=2$, and $t=3$. In this illustration,
elements of refinement level $1$ are blue and elements of refinement level $2$ are
red.}
\label{fig:forest_partition2}
\end{figure}

\begin{table}
\center
\begin{tabular}{|c|r|r|r|r|r|}
\hline  
 t & Trees (ghosts) sent & Data sent [MiB] & $|S_p|$ & Shared trees & Run time [s]\\ \hline
1 & 25,117 (11,948) & 4.95 & 2.27 & 7,178 & 0.103 \\
2 & 34,860 (16,854) & 6.88 & 2.75 & 7,176 & 0.110 \\
3 & 36,386 (17,568) & 7.18 & 2.82 & 7,182 & 0.112 \\\hline
1 & 39,026 (18,334) & 7.67 & 2.97 & 8,096 & 0.128 \\
2 & 38,990 (18,268) & 7.66 & 2.95 & 8,085 & 0.129 \\
3 & 38,942 (18,074) & 7.64 & 2.93 & 8,085 & 0.128 \\
\hline
\end{tabular}
\caption[Runtimes of coarse mesh repartitioning with 8,192 MPI ranks]
{\texttt{Partition\_cmesh} with a coarse mesh of $324{,}766{,}336$
tetrahedral trees on $8{,}192$ MPI ranks.  We measure the duration of mesh
repartitioning for three time steps.
For each one, we show process average values of the
number of trees (ghosts) and the total number of bytes that each process sends to
other processes. The average number of other processes to which a process sends
($|S_p|$) is below three in each test.
We also provide the total number of shared
trees in the mesh, where $8{,}191$ is the maximum possible value.}
\label{tab:brickexample_cmesh}
\vspace{1pc}
\center
\begin{tabular}{|c|r|r|r|r|}
\hline  
t & Mesh size & Elements sent & Data sent [MiB]  & Run time [s] \\ \hline
1 &  2,622,283,453 & 203,858 & 3.49  & 0.215\\  %
2 &  2,623,842,241 & 281,254 & 4.82  & 0.215\\  %
3 &  2,626,216,984 & 293,387 & 5.03  & 0.214\\\hline  %
1 & 25,155,319,545 & 3,013,230 & 46.6 & 0.642\\ %
2 & 25,285,522,233 & 3,008,800 & 46.5 & 0.640\\ %
3 & 25,426,331,342 & 2,991,990 & 46.2 & 0.645\\ %
\hline
\end{tabular}
\caption[Runtimes of forest mesh repartitioning with 8,192 MPI ranks]
{Forest mesh partition on $8{,}192$ MPI ranks. For the same example as in Table {\ref{tab:brickexample_cmesh}} we display
statistics and run times for the forest mesh partition.  We show the total number
of tetrahedral elements and the average count of elements and bytes that each
process sends to other processes (their count is the same as in Table
{\ref{tab:brickexample_cmesh}}).}\label{tab:brickexample_forest}
\vspace{1pc}
\center
\begin{tabular}{|c|r|r|r|r|r|}\hline
t & Trees (ghosts) sent & Data sent [MiB] & $|S_p|$ & Shared trees & Run time [s]\\ \hline
1 & 704 (2,444) & 0.267 & 2.99 & 280,339 & 0.207 \\
2 & 707 (2,456) & 0.269 & 3.00 & 281,694 & 0.204 \\
3 & 708 (2,458) & 0.269 & 3.00 & 281,900 & 0.204 \\
\hline
\end{tabular}
\begin{tabular}{|c|r|r|r|r|}
\hline  
t & Mesh size        & Elements sent & Data sent [MiB]  & Run time [s] \\ \hline
1 & 167,625,595,829  & 362,863 & 5.55 & 0.522 \\  
2 & 167,709,936,554  & 364,778 & 5.58 & 0.578 \\  
3 & 167,841,392,949  & 365,322 & 5.59 & 0.567 \\\hline  
\end{tabular}
\caption[Runtimes of coarse and forest mesh repartitioning with 458,752 MPI ranks]
{Coarse and forest mesh partitions on $458{,}752$ MPI ranks. 
Run times for coarse mesh (top) and forest mesh (bottom) partition for the brick with holes
on $458{,}752$ MPI ranks. The setting and the coarse mesh are the same as in 
Table {\ref{tab:brickexample_cmesh}} except that for the forest we use an initial
uniform level three refinement with a maximum level of four.}
\label{tab:brickexample_forest_458k}
\end{table}

\section{Conclusion}

In this chapter we propose an algorithm that executes dynamic and in-core
coarse mesh partitioning.
In the context of tree-based adaptive mesh refinement (AMR), the coarse mesh
defines the connectivity of tree roots, which is used in all neighbor query
operations between elements.
This development is motivated by simulation problems on complex domains that
require large input meshes.
Without partitioning of the tree meta data, we will run out of memory around
one million trees, and with static or out-of-core partitioning, we might not
have the flexibility to transfer the tree meta data as required by the change
in process ownership of the trees' elements, which occurs in every AMR cycle.
With the approach presented here, this can be performed with run times that are
significantly smaller than those for partitioning the elements, even
considering that SFC methods for the latter are exceptionally fast in absolute
terms.
Thus, we add little to the run time of all AMR operations combined.

Our algorithm guarantees that
each process can provide the tree meta data for each of its fine
mesh elements that are themselves distributed using a SFC. 
We handle the communication without handshaking and develop a communication
pattern that minimizes data movement.
This pattern is calculated by each process individually, reusing information
that is already present.

Our implementation scales up to 917e3 MPI processes and up to 810e3 trees per
process, where the largest test case consists of 371e9 trees.
What remains to be done is extending the partitioning of ghost trees to edge
and corner neighbors, since only face-neighbor ghost trees are presently
handled.
It appears that the structure of the algorithm will allow this with little
modification.

 \chapter{Ghost}
\label{ch:ghost}
For many algorithms we need to know information about all neighbors of
a forest mesh element. 
A typical example is if an application needs to calculate the fluxes in a
finite volume solver or to compute integrals in a finite element setting;
see for example 
\cite{Braess97, WeinzierlMehl11, RasquinSmithChitaleEtAl14} and
our own discussion in Chapter~\ref{ch:app}.
Other examples include a refinement criterion that takes the size of
neighboring elements into account or the gradient of an approximated 
function.

If the forest mesh is partitioned among multiple processes, then a neighbor
element of a leaf element owned by process $p$ may be owned by a different process
$q\neq p$. We call this neighbor a \texttt{ghost} element of $p$.

In this section we describe how we create a layer of ghost elements
for a partitioned forest. Thus, each process obtains information about all
of its ghost elements.

As before, we restrict the algorithm to face-neighbors.

\begin{definition} %
 A \textbf{ghost element} (or \textbf{ghost} for short) of a process
$p$ in a forest $\forest F$ is a leaf element $G$ of a process
$q\neq p$, such that there exists a face-neighbor $E$ of $G$ that is a
local leaf element of $p$.
\end{definition}

\begin{definition} %
 We call a local element $E$ \textbf{boundary element} if it has at least
 one face-neighbor that is a ghost element.
 The \textbf{remote processes} of $E$ are all processes $q\neq p$ that own
 ghost elements of $E$.
 The union of all remote processes of all local elements of $p$ are the
 remote processes of $p$.
\end{definition}

\begin{definition} %
 By $R_p^q$ we denote the set of boundary elements of process $p$ that
 have process $q$ as a remote process.
\end{definition}

Throughout this section we assume that we can access the coarse mesh
information of each neighbor tree of a local tree. We can do this, since the
coarse mesh is either replicated or stores a layer of ghost trees; see
Section~\ref{sec:cmesh-ghosttrees}.

In the following, we need the definition of a $(2:1)$-balanced forest.
\begin{definition}
  \label{def:balance}
We call a forest $\forest F$ \textbf{balanced} if each pair $E, E'$ of
face-neighboring leaf elements of $\forest F$ satisfies
  \begin{equation}
    \ell(E) - 1 \leq \ell(E') \leq \ell(E) + 1.
  \end{equation}
Here, $E$ and $E'$ may belong to different processes. Thus, any two
face-neighbors differ by at most one in their refinement levels.
If the condition is not fulfilled, we say that $\forest F$ is unbalanced.
\end{definition}
\begin{remark}
  In some publications \emph{graded} is used instead of \emph{balanced}
  \cite{CohenKaberMuellerEtAl03, MuellerStiriba07}.
\end{remark}

We describe a basic version of a \texttt{Balance} algorithm that transforms an
unbalanced forest into a balanced one by successively refining elements in
Chapter~\ref{ch:balance}.

In this chapter we discuss three steps of implementing \texttt{Ghost}, which we
denote by \ghostb, \texttt{Ghost\_v2}, and \texttt{Ghost\_v3}.  \ghostb is a
relatively straight-forward version that only works with balanced forests.
\texttt{Ghost\_v2} is a more sophisticated version that works on arbitrary
forests, and we optimize its runtime in \texttt{Ghost\_v3}.
For the first two versions we orient ourselves to the \pforest implementations for
quadrilateral/hexahedral meshes
\cite{BursteddeWilcoxGhattas11,IsaacBursteddeWilcoxEtAl15}. However, we discuss
them in our new, element-type independent framework, which not only extends to
triangular/tetrahedral
meshes, but also to hybrid meshes consisting of different element types.
To this end, we outsource all operations that require specific knowledge of the
element type as low-level functions (c.f.\ Section~\ref{sec:highlow}).
Changing the element type is equivalent to changing the low-level
implementation. In this chapter, we discuss implementations for line,
quadrilateral, and hexahedral elements with the Morton index, as well as for
triangular and tetrahedral elements with the TM-index. This implicitly gives
us an implementation of prism elements, since we can model these as the cross
product of a line and a triangle; see~\cite{Knapp17}.

The basic idea of \ghostb and \texttt{Ghost\_v2} is to first identify all boundary
elements and their remote processes, thus building the sets $R_p^q$ and
identifying the non-empty ones. In a second step, each process $p$ sends all
elements in $R_p^q$ to $q$.

In the first step we iterate over all local leaves and for each over
all of its faces. We then have to decide for each face $F$ of a leaf $E$ which
processes own leaves that touch this face. The difference between \ghostb and
\texttt{Ghost\_v2} lies in this decision process.

A new improvement, \texttt{Ghost\_v3}, replaces the iteration over all leaves
with a top-down search. With this approach, we exclude portions of the
mesh from the iteration, if they lie entirely within a process's domain, which
improves the overall runtime by at least one order of magnitude.  In \pforest
the runtime is optimized by performing a so called $(3\times 3)$-neighborhood
check of an element~\cite{IsaacBursteddeWilcoxEtAl15}. For a local
hexahedral (or quadrilateral, in 2D) element, we check whether all same-level
face- (or edge-/vertex-)neighbors are also process local and if so, the element
is excluded from the iteration. Since this check makes explicit use of the
Morton code and its properties, it is difficult to generalize for general
element types. Therefore, we choose a different ansatz by exploiting the
top-down search, which directly leads to an element-type independent algorithm.

Note that for unbalanced forests the number of neighbors of an element
$E$ that are ghosts can be arbitrarily large and is only bounded by the 
number of elements at maximum refinement level that can touch the faces 
of $E$. Therefore, also the number of remote processes is unbounded as well.

\section{Element face-neighbors}

An important part of \texttt{Ghost} is to construct the same-level
face-neighbor of a given element $E$ across a face $f$.
As long as such a face-neighbor is inside the same tree as $E$, this problem is
solved by the corresponding low-level function
\texttt{t8\_element\_face\-\_neigh\-bor\_inside}.
We describe its version for the TM-index in 
Algorithm~\ref{alg:face-neighbor};
see \cite{BursteddeWilcoxGhattas11} for an implementation for the hexahedral
Morton index.

The challenging part is to find element face-neighbors across tree boundaries.
This is particularly demanding for hybrid meshes since multiple types of trees
exist in the same forest; 
this is for example the case if a hexahedron tree is neighbor of a
prism tree. We thus aim for a general type-independent
algorithm to compute face-neighbors across tree boundaries.

To detect whether an element's face $f$ is also a tree boundary, we assume the
existence of a low-level algorithm
\texttt{element\_neighbor\_inside\_root}\footnote{In \tetcode this function is
part of \texttt{t8\_element\_face\_neighbor\_inside}; see
Appendix~\ref{ch:appendix}} that returns \texttt{true} if and only if the face $f$
is an inner face and thus not on the tree boundary. For the TM-index we can
derive such an algorithm from the methods that we present in
Section~\ref{sec:outside}.

\begin{remark}
 In the following we will denote all functions that are part of the 
 \tetcode low-level API with a \texttt{t8\_element} prefix.
 See Appendix~\ref{ch:appendix} for a complete list.
\end{remark}

The core idea for the face-neighbor algorithm across tree boundaries is to
explicitly build the face as a lower dimensional element.

Our notation convention is to use capital letters ($T,E,F,G$) for entities,
such as trees, elements and faces, and to use lower case for indices 
($t,e,f,g$) of those entities.

We are concerned with the following issue:\\[1ex]
\emph{Given a $d$-dimensional element $E$ in a tree $K$ and a face $F$ of $E$ of
which we know that it is a subface of a face $G$ of $K$, construct the
same-level face-neighbor element $E'$ of $E$ across $F$.}\\[1ex] 
We display this situation for a quadrilateral-triangle tree connection in Figure~\ref{fig:facesitu}.
\begin{figure}
\center
\def\svgwidth{0.5\textwidth}
\begingroup%
  \makeatletter%
  \providecommand\color[2][]{%
    \errmessage{(Inkscape) Color is used for the text in Inkscape, but the package 'color.sty' is not loaded}%
    \renewcommand\color[2][]{}%
  }%
  \providecommand\transparent[1]{%
    \errmessage{(Inkscape) Transparency is used (non-zero) for the text in Inkscape, but the package 'transparent.sty' is not loaded}%
    \renewcommand\transparent[1]{}%
  }%
  \providecommand\rotatebox[2]{#2}%
  \ifx\svgwidth\undefined%
    \setlength{\unitlength}{458.05712891bp}%
    \ifx\svgscale\undefined%
      \relax%
    \else%
      \setlength{\unitlength}{\unitlength * \real{\svgscale}}%
    \fi%
  \else%
    \setlength{\unitlength}{\svgwidth}%
  \fi%
  \global\let\svgwidth\undefined%
  \global\let\svgscale\undefined%
  \makeatother%
  \begin{picture}(1,0.79044063)%
    \put(0,0){\includegraphics[width=\unitlength,page=1]{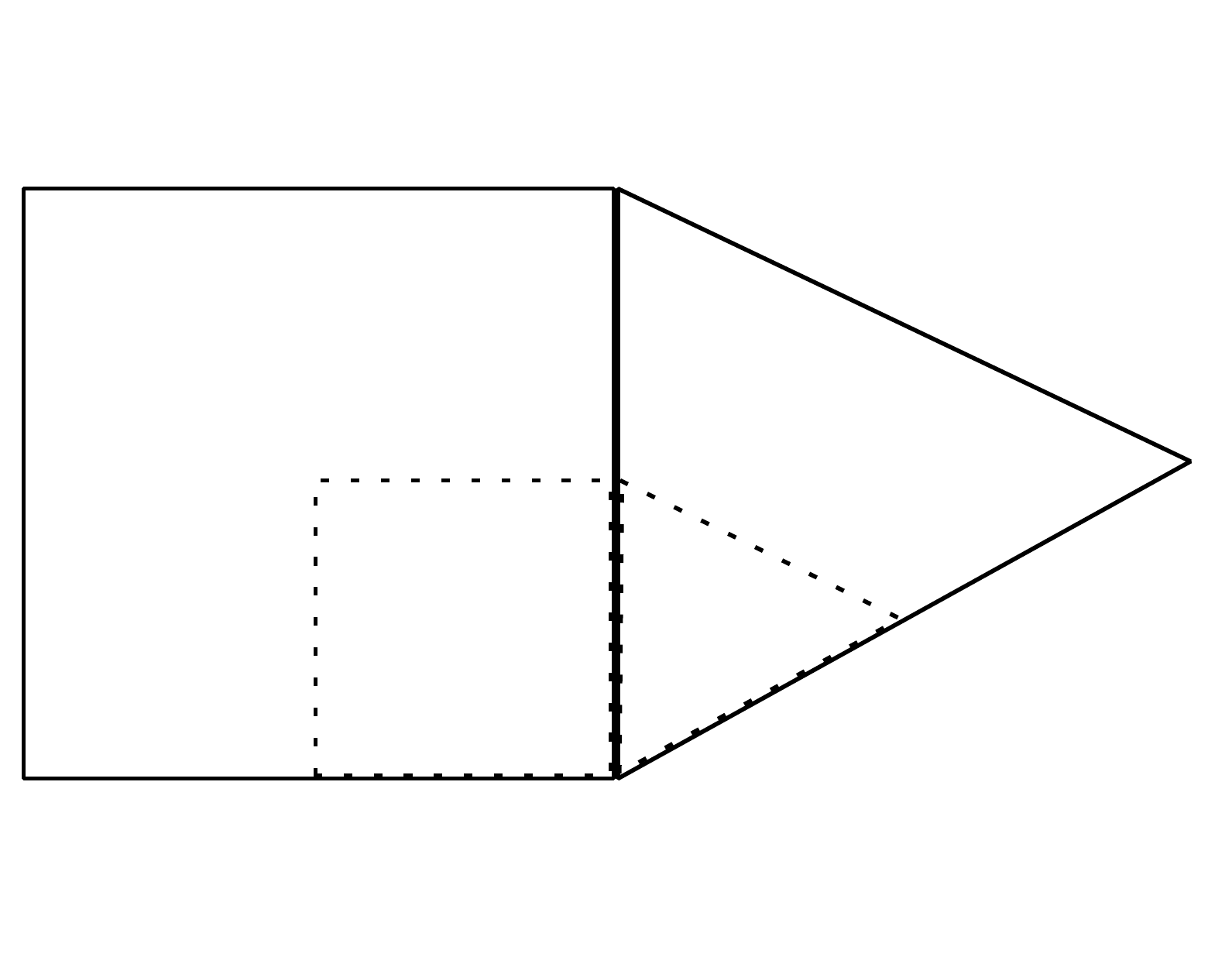}}%
    \put(0.10029941,0.50622876){\color[rgb]{0,0,0}\makebox(0,0)[lb]{\smash{K}}}%
    \put(0.71154646,0.41801682){\color[rgb]{0,0,0}\makebox(0,0)[lb]{\smash{K'}}}%
    \put(0.37475053,0.28167788){\color[rgb]{0,0,0}\makebox(0,0)[lb]{\smash{E}}}%
    \put(0.55863526,0.26403548){\color[rgb]{0,0,0}\makebox(0,0)[lb]{\smash{E'}}}%
    \put(0.37325352,0.72204715){\color[rgb]{0,0,0}\makebox(0,0)[lb]{\smash{G}}}%
    \put(0.57584834,0.72204715){\color[rgb]{0,0,0}\makebox(0,0)[lb]{\smash{G'}}}%
    \put(0.36726547,0.0184543){\color[rgb]{0,0,0}\makebox(0,0)[lb]{\smash{F}}}%
    \put(0.56561873,0.0184543){\color[rgb]{0,0,0}\makebox(0,0)[lb]{\smash{F'}}}%
    \put(0,0){\includegraphics[width=\unitlength,page=2]{cmesh_refine_facesituation_hybrid_tex.pdf}}%
  \end{picture}%
\endgroup%
 \caption[Element face-neighbor]{We show a tree $K$, an element $E$, and a face
$F$ of $E$ that is a subface of a tree face. The task is to find the
face-neighbor element $E'$.
 A subtask is to identify the tree
faces $G$ and $G'$, and the face $F'$.}
\figlabel{fig:facesitu}
\end{figure}

\begin{remark}
  We emphasize that neither the original element $E$ nor its face-neighbor $E'$
  need to be leaves in the forest.
  Even if $E$ is a leaf in the forest, $E'$ does not necessarily have to be a
  leaf, since $E'$ has the same refinement level as $E$.
\end{remark}

Since the trees can be rotated against each other, their coordinate systems may
not be aligned. We must properly transform the $(d-1)$-dimensional coordinates
of the faces
between the two systems.

In order to compute the face-neighbor $E'$ we consider the corresponding face
$G$ of the tree $K$ as a $(d-1)$-dimensional root tree. We then explicitly
construct the face $F$ of $E$ as a $(d-1)$-dimensional element descendant of
$G$.  The next step is to transform the coordinates of $F$ accordingly to build
the $(d-1)$-dimensional element $F'$ as a descendant of $G'$, the face's root
tree. In a final step, we construct the element $E'$ from the given face
element $F'$.

We thus identify four major substeps in the computation of face-neighbors across
tree boundaries:
\begin{enumerate}[(i)]
\item Identify the face number $g$ of the tree face.
\item Construct the  $(d-1)$-dimensional face element $F$.
\item Transform the coordinate system of $F$ to obtain the neighbor face
      element $F'$.
\item Extrude $F'$ to the $d$-dimensional neighbor element $E'$.
\end{enumerate}

We show these steps in Algorithm~\ref{alg:forestfaceneighbor} and describe
their details in the following sections.
See Figure~\ref{fig:face-neighbor-hex} for an illustration of steps (ii), (iii),
and (iv).

\begin{figure}
\begin{subfigure}[t]{0.48\textwidth}
 \def\svgwidth{\textwidth}
\begingroup%
  \makeatletter%
  \providecommand\color[2][]{%
    \errmessage{(Inkscape) Color is used for the text in Inkscape, but the package 'color.sty' is not loaded}%
    \renewcommand\color[2][]{}%
  }%
  \providecommand\transparent[1]{%
    \errmessage{(Inkscape) Transparency is used (non-zero) for the text in Inkscape, but the package 'transparent.sty' is not loaded}%
    \renewcommand\transparent[1]{}%
  }%
  \providecommand\rotatebox[2]{#2}%
  \ifx\svgwidth\undefined%
    \setlength{\unitlength}{799.10239258bp}%
    \ifx\svgscale\undefined%
      \relax%
    \else%
      \setlength{\unitlength}{\unitlength * \real{\svgscale}}%
    \fi%
  \else%
    \setlength{\unitlength}{\svgwidth}%
  \fi%
  \global\let\svgwidth\undefined%
  \global\let\svgscale\undefined%
  \makeatother%
  \begin{picture}(1,0.41401537)%
    \put(0,0){\includegraphics[width=\unitlength,page=1]{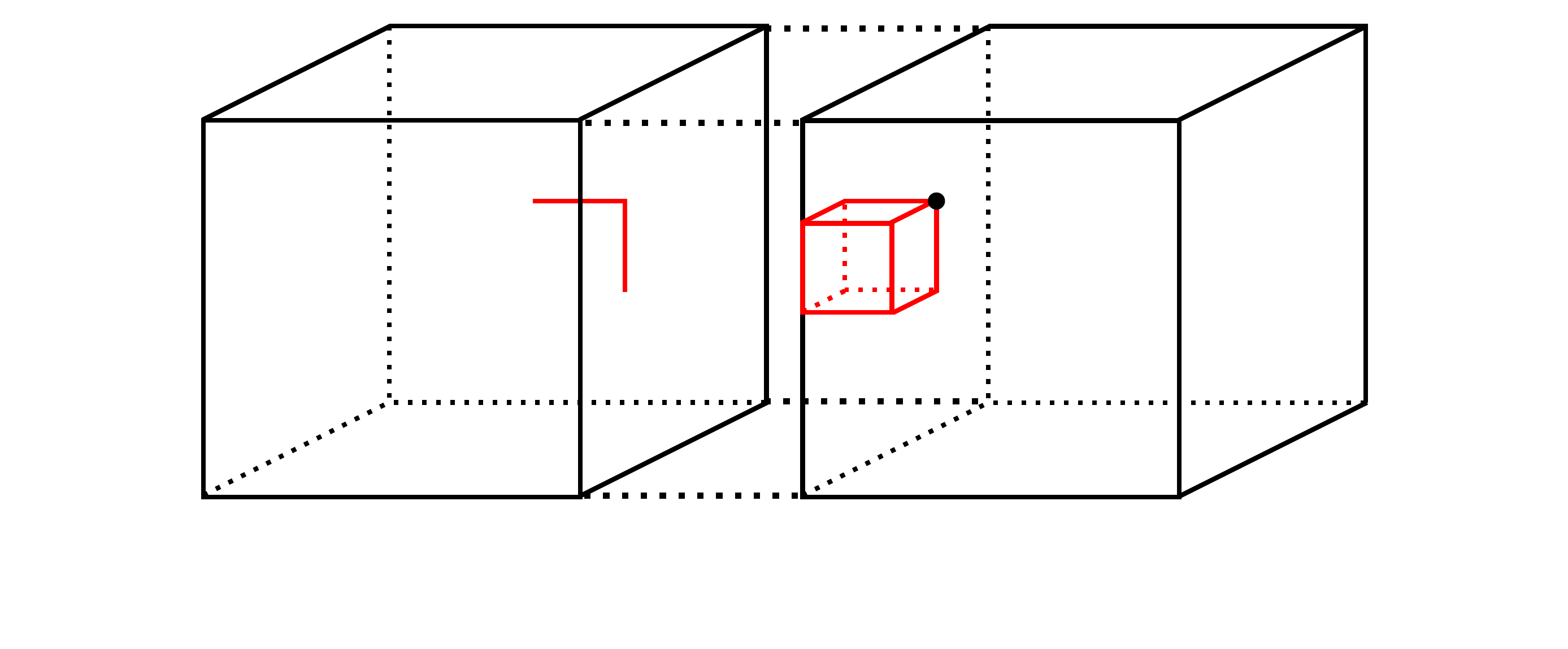}}%
    \put(0.21437256,0.03658345){\color[rgb]{0,0,0}\makebox(0,0)[lb]{\smash{$K$}}}%
    \put(0.60470357,0.03658345){\color[rgb]{0,0,0}\makebox(0,0)[lb]{\smash{$K'$}}}%
    \put(0,0){\includegraphics[width=\unitlength,page=2]{face_neighbor_hex1_tex.pdf}}%
    \put(0.33266427,0.03658345){\color[rgb]{0,0,0}\makebox(0,0)[lb]{\smash{$G$}}}%
    \put(0.46158235,0.03658345){\color[rgb]{0,0,0}\makebox(0,0)[lb]{\smash{$G'$}}}%
    \put(0,0){\includegraphics[width=\unitlength,page=3]{face_neighbor_hex1_tex.pdf}}%
    \put(0.25670404,0.23115977){\color[rgb]{0,0,0}\makebox(0,0)[lb]{\smash{$E$}}}%
    \put(0,0){\includegraphics[width=\unitlength,page=4]{face_neighbor_hex1_tex.pdf}}%
    \put(0.60728489,0.23115977){\color[rgb]{0,0,0}\makebox(0,0)[lb]{\smash{$E'$}}}%
    \put(0.05828918,0.0758545){\color[rgb]{0,0,0}\makebox(0,0)[lb]{\smash{$_x$}}}%
    \put(0.0555752,0.15506053){\color[rgb]{0,0,0}\makebox(0,0)[lb]{\smash{$_y$}}}%
    \put(0.0098054,0.20967906){\color[rgb]{0,0,0}\makebox(0,0)[lb]{\smash{$_z$}}}%
    \put(0.99306778,0.32776495){\color[rgb]{0,0,0}\makebox(0,0)[lb]{\smash{$_x$}}}%
    \put(0.9266613,0.34946698){\color[rgb]{0,0,0}\makebox(0,0)[lb]{\smash{$_y$}}}%
    \put(0.88941755,0.37601326){\color[rgb]{0,0,0}\makebox(0,0)[lb]{\smash{$_z$}}}%
  \end{picture}%
\endgroup%
\hfill
 \caption*{The starting point}
\end{subfigure}
\begin{subfigure}[t]{0.48\textwidth}
 \def\svgwidth{\textwidth}
\begingroup%
  \makeatletter%
  \providecommand\color[2][]{%
    \errmessage{(Inkscape) Color is used for the text in Inkscape, but the package 'color.sty' is not loaded}%
    \renewcommand\color[2][]{}%
  }%
  \providecommand\transparent[1]{%
    \errmessage{(Inkscape) Transparency is used (non-zero) for the text in Inkscape, but the package 'transparent.sty' is not loaded}%
    \renewcommand\transparent[1]{}%
  }%
  \providecommand\rotatebox[2]{#2}%
  \ifx\svgwidth\undefined%
    \setlength{\unitlength}{799.10400391bp}%
    \ifx\svgscale\undefined%
      \relax%
    \else%
      \setlength{\unitlength}{\unitlength * \real{\svgscale}}%
    \fi%
  \else%
    \setlength{\unitlength}{\svgwidth}%
  \fi%
  \global\let\svgwidth\undefined%
  \global\let\svgscale\undefined%
  \makeatother%
  \begin{picture}(1,0.41401368)%
    \put(0,0){\includegraphics[width=\unitlength,page=1]{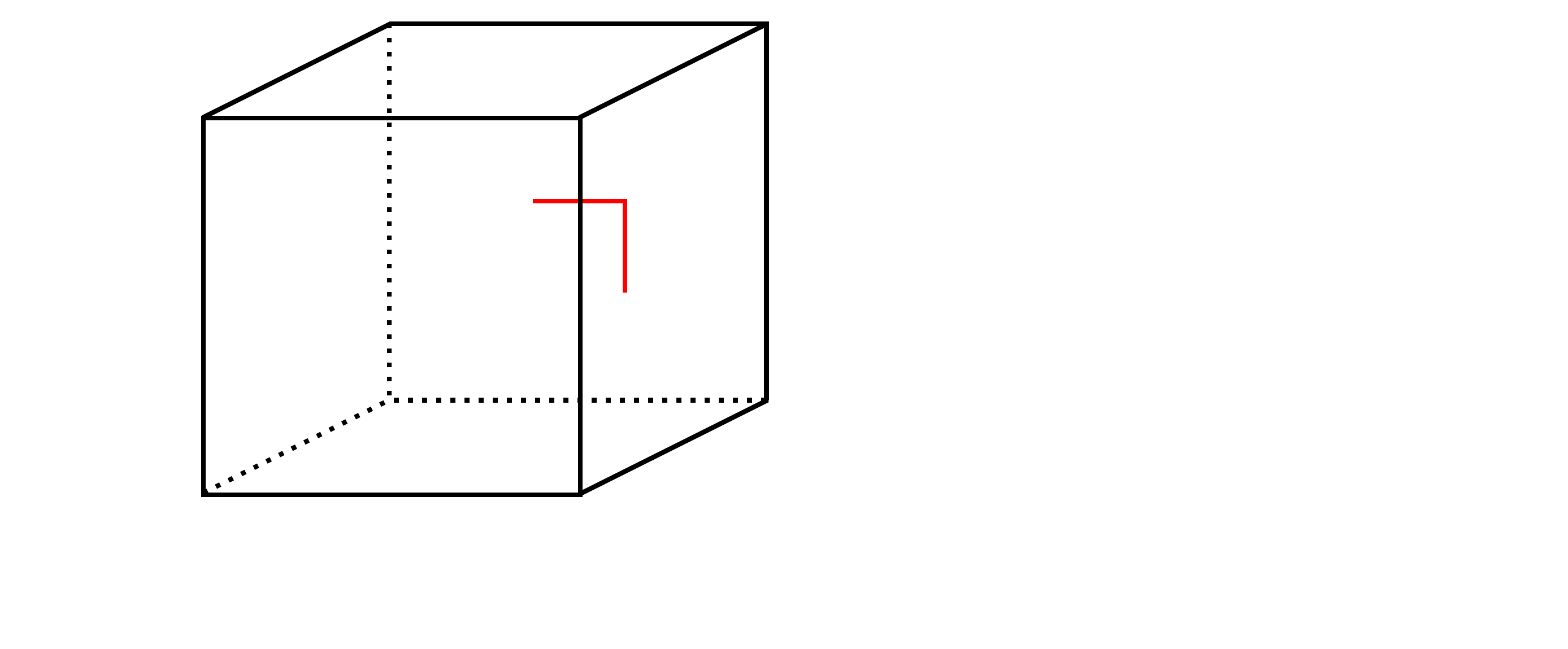}}%
    \put(0.21442341,0.03802052){\color[rgb]{0,0,0}\makebox(0,0)[lb]{\smash{$K$}}}%
    \put(0,0){\includegraphics[width=\unitlength,page=2]{face_neighbor_hex2_bdyface_tex.pdf}}%
    \put(0.78363383,0.03670257){\color[rgb]{0,0,0}\makebox(0,0)[lb]{\smash{$G$}}}%
    \put(0,0){\includegraphics[width=\unitlength,page=3]{face_neighbor_hex2_bdyface_tex.pdf}}%
    \put(0.78360257,0.22622664){\color[rgb]{0,0,0}\makebox(0,0)[lb]{\smash{$F$}}}%
    \put(0.06004851,0.07120157){\color[rgb]{0,0,0}\makebox(0,0)[lb]{\smash{$_x$}}}%
    \put(0.05733453,0.15040741){\color[rgb]{0,0,0}\makebox(0,0)[lb]{\smash{$_y$}}}%
    \put(0.01156483,0.20502583){\color[rgb]{0,0,0}\makebox(0,0)[lb]{\smash{$_z$}}}%
    \put(0.63894708,0.07523844){\color[rgb]{0,0,0}\makebox(0,0)[lb]{\smash{$_x$}}}%
    \put(0.5904634,0.2090627){\color[rgb]{0,0,0}\makebox(0,0)[lb]{\smash{$_y$}}}%
    \put(0.53282823,0.65583159){\color[rgb]{0,0,0}\makebox(0,0)[lt]{\begin{minipage}{0.17863403\unitlength}\raggedright \end{minipage}}}%
  \end{picture}%
\endgroup%
  \caption*{(ii)}
\end{subfigure}

\begin{subfigure}[t]{0.48\textwidth}
 \def\svgwidth{\textwidth}
\begingroup%
  \makeatletter%
  \providecommand\color[2][]{%
    \errmessage{(Inkscape) Color is used for the text in Inkscape, but the package 'color.sty' is not loaded}%
    \renewcommand\color[2][]{}%
  }%
  \providecommand\transparent[1]{%
    \errmessage{(Inkscape) Transparency is used (non-zero) for the text in Inkscape, but the package 'transparent.sty' is not loaded}%
    \renewcommand\transparent[1]{}%
  }%
  \providecommand\rotatebox[2]{#2}%
  \ifx\svgwidth\undefined%
    \setlength{\unitlength}{799.10400391bp}%
    \ifx\svgscale\undefined%
      \relax%
    \else%
      \setlength{\unitlength}{\unitlength * \real{\svgscale}}%
    \fi%
  \else%
    \setlength{\unitlength}{\svgwidth}%
  \fi%
  \global\let\svgwidth\undefined%
  \global\let\svgscale\undefined%
  \makeatother%
  \begin{picture}(1,0.41401368)%
    \put(0,0){\includegraphics[width=\unitlength,page=1]{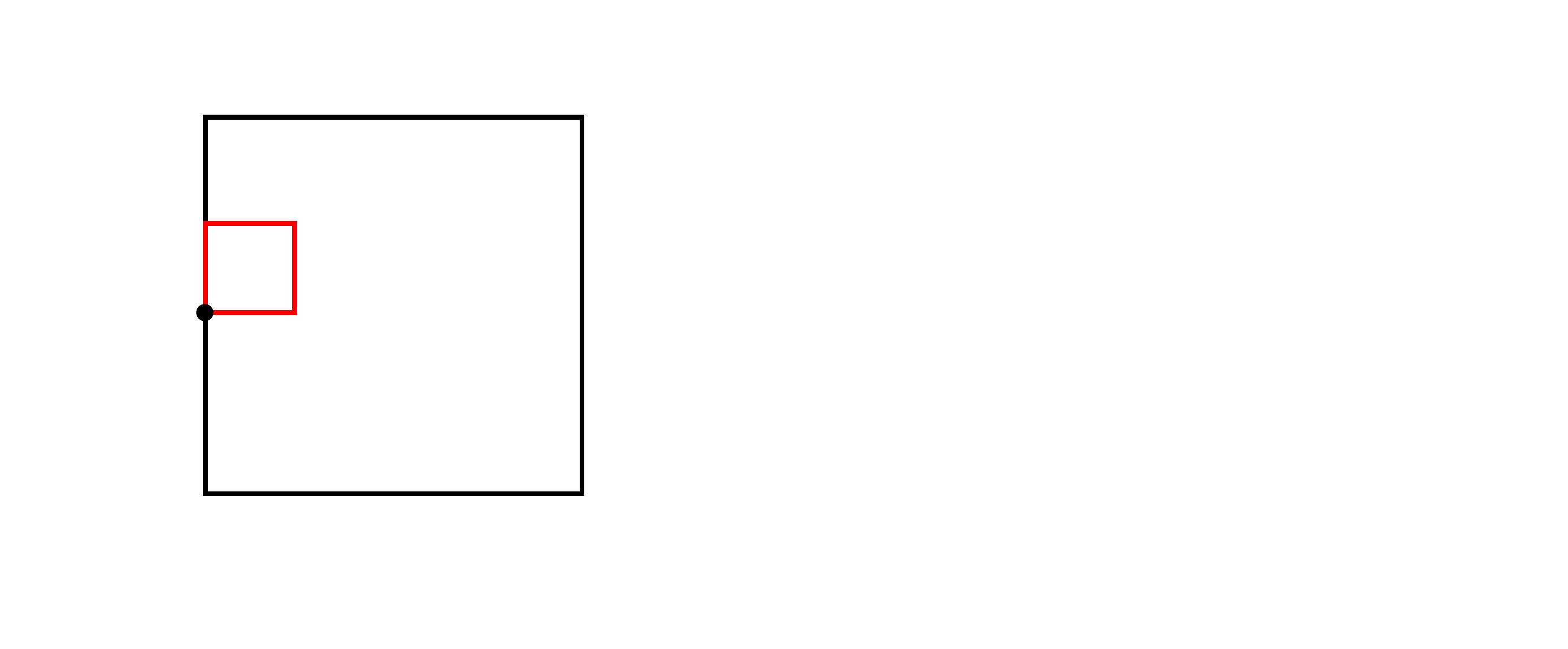}}%
    \put(0.20798911,0.03670257){\color[rgb]{0,0,0}\makebox(0,0)[lb]{\smash{$G$}}}%
    \put(0,0){\includegraphics[width=\unitlength,page=2]{face_neighbor_hex3_facetrans_tex.pdf}}%
    \put(0.58625273,0.03708453){\color[rgb]{0,0,0}\makebox(0,0)[lb]{\smash{$G'$}}}%
    \put(0,0){\includegraphics[width=\unitlength,page=3]{face_neighbor_hex3_facetrans_tex.pdf}}%
    \put(0.20761742,0.22645079){\color[rgb]{0,0,0}\makebox(0,0)[lb]{\smash{$F$}}}%
    \put(0.58574337,0.22660859){\color[rgb]{0,0,0}\makebox(0,0)[lb]{\smash{$F'$}}}%
    \put(0.0592312,0.07354792){\color[rgb]{0,0,0}\makebox(0,0)[lb]{\smash{$_x$}}}%
    \put(0.01074751,0.20737218){\color[rgb]{0,0,0}\makebox(0,0)[lb]{\smash{$_y$}}}%
    \put(0.86950839,0.27228219){\color[rgb]{0,0,0}\makebox(0,0)[lb]{\smash{$_x$}}}%
    \put(0.77787182,0.31652589){\color[rgb]{0,0,0}\makebox(0,0)[lb]{\smash{$_y$}}}%
  \end{picture}%
\endgroup%
\hfill
 \caption*{(iii)}
\end{subfigure}
\begin{subfigure}[t]{0.48\textwidth}
 \def\svgwidth{\textwidth}
\begingroup%
  \makeatletter%
  \providecommand\color[2][]{%
    \errmessage{(Inkscape) Color is used for the text in Inkscape, but the package 'color.sty' is not loaded}%
    \renewcommand\color[2][]{}%
  }%
  \providecommand\transparent[1]{%
    \errmessage{(Inkscape) Transparency is used (non-zero) for the text in Inkscape, but the package 'transparent.sty' is not loaded}%
    \renewcommand\transparent[1]{}%
  }%
  \providecommand\rotatebox[2]{#2}%
  \ifx\svgwidth\undefined%
    \setlength{\unitlength}{799.10400391bp}%
    \ifx\svgscale\undefined%
      \relax%
    \else%
      \setlength{\unitlength}{\unitlength * \real{\svgscale}}%
    \fi%
  \else%
    \setlength{\unitlength}{\svgwidth}%
  \fi%
  \global\let\svgwidth\undefined%
  \global\let\svgscale\undefined%
  \makeatother%
  \begin{picture}(1,0.41401368)%
    \put(0,0){\includegraphics[width=\unitlength,page=1]{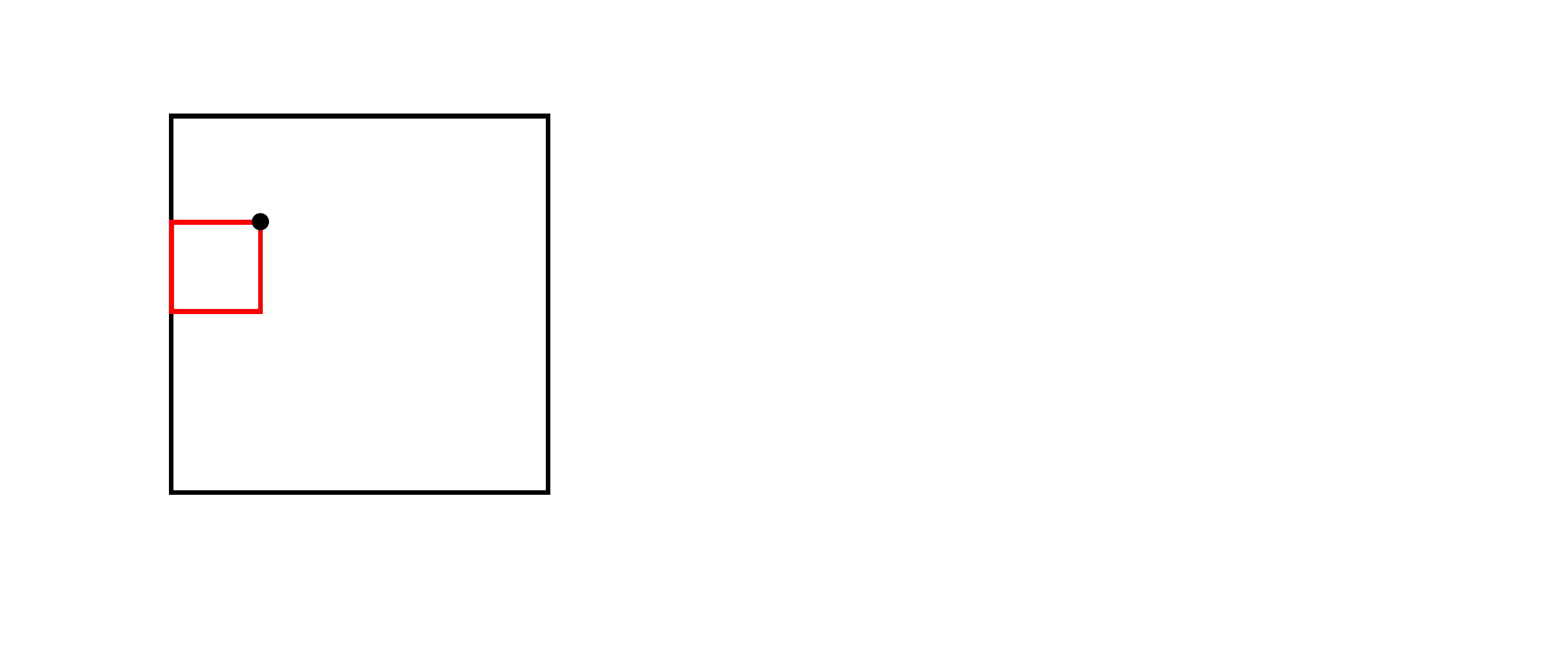}}%
    \put(0.18627798,0.0374569){\color[rgb]{0,0,0}\makebox(0,0)[lb]{\smash{$G'$}}}%
    \put(0,0){\includegraphics[width=\unitlength,page=2]{face_neighbor_hex4_faceext_tex.pdf}}%
    \put(0.60757599,0.03822802){\color[rgb]{0,0,0}\makebox(0,0)[lb]{\smash{$K'$}}}%
    \put(0,0){\includegraphics[width=\unitlength,page=3]{face_neighbor_hex4_faceext_tex.pdf}}%
    \put(0.18560335,0.22696489){\color[rgb]{0,0,0}\makebox(0,0)[lb]{\smash{$F'$}}}%
    \put(0.45584682,0.27293441){\color[rgb]{0,0,0}\makebox(0,0)[lb]{\smash{$_x$}}}%
    \put(0.36421025,0.31717811){\color[rgb]{0,0,0}\makebox(0,0)[lb]{\smash{$_y$}}}%
    \put(0.99543061,0.33055057){\color[rgb]{0,0,0}\makebox(0,0)[lb]{\smash{$_x$}}}%
    \put(0.92902433,0.35225255){\color[rgb]{0,0,0}\makebox(0,0)[lb]{\smash{$_y$}}}%
    \put(0.89178065,0.37879878){\color[rgb]{0,0,0}\makebox(0,0)[lb]{\smash{$_z$}}}%
  \end{picture}%
\endgroup%
  \caption*{(iv)}
\end{subfigure}
\caption[Two hexahedral elements that are face-neighbors across tree 
boundaries.]
{Two hexahedral elements that are face-neighbors across tree
boundaries (exploded view). Constructing the face-neighbor $E'$ of $E$ across
the face amounts to computing its anchor node (black) from the anchor node of
$E$ and the coarse mesh connectivity information about
the tree connection. Here, the coordinate systems of the two trees are rotated
against each other.
Top right: In step (ii) we construct the face element $F$ from the element $E$.
The coordinate system of the face root is inferred from that of the left tree.
Bottom left: In step (iii) we transform the face element $F$ to the neighbor
face element $F'$.  Bottom right: In the last step (iv) we extrude the
face-neighbor
$E'$ from the face element $F'$.
}
\figlabel{fig:face-neighbor-hex}
\end{figure}

\begin{remark}
  We deliberately choose this method of using lower dimensional entities over
  directly transforming the tree coordinates from one tree to the other---as it
  is done for example in \cite{BursteddeWilcoxGhattas11}---since
  our approach allows for maximum flexibility of the implementations of the different 
  element and SFC types. This holds since all intermediate operations are
  either local to one element type or only change the dimension (i.e.\
  hexahedra to quadrilaterals, tetrahedra to triangles, and back).
  Therefore, even if, for example, a hexahedron tree is neighbor to a
  prism tree, no function in the implementation of the hexahedral
  elements relies on knowledge about the implementation of the prism
  elements.
  Hence, it is possible to exchange different implementations of SFC of one
  element type without changing the others.
\end{remark}

\begin{algorithm}
\DontPrintSemicolon
\caption{\texttt{t8\_forest\_face\_neighbor (Forest $\forest F$, tree $T$, element $E$, face number $f$)}}
\label{alg:forestfaceneighbor}
\algoresult{The same-level face-neighbor $E'$ of $E$ across $f$.}\;
\algoif{\texttt{element\_neighbor\_inside\_root ($E, f$)}}
 { %
  $E'\gets $\,\texttt{t8\_element\_face\_neighbor\_inside ($E, f$)}\;
  \Return $E'$
 }
$g\gets$\,\texttt{t8\_element\_tree\_face ($E, f$)} \Comment{(i) The face number of $G$}
$F \gets $\,\texttt{t8\_element\_boundary\_face ($E, f$)}\Comment{(ii) Construct the face element}
$\mathrm{o}\gets $\,\texttt{face\_orientation ($\forest F, T, g$)}\Comment{The orientation of the tree face connection}
$F'\gets$\,\texttt{t8\_element\_transform\_face ($F$, o)}\Comment{(iii) Obtain the neighbor face element}
$g'\gets$\,\texttt{tree\_neighbor\_face ($\forest F, T, g$)}\Comment{The face number of $G'$}
$E' \gets$\,\texttt{t8\_element\_extrude\_face ($F', g'$)}\Comment{(iv) Build the neighbor from $F'$}
\Return $E'$
\end{algorithm}

\subsection{(i) Identifying the tree face}

The first subproblem is to identify the tree face $G$, respectively its
face index $g$, from $E$, $f$, and $K$. For this task we define a new
low-level function:

\begin{itemize}
\item
  \texttt{t8\_element\_tree\_face (element E, face\_index f)}\\
  If $f$ is a subface of a tree face, return the face index g of this root tree
  face.
\end{itemize}

The function \texttt{t8\_element\_tree\_face} returns the root tree face index
of the root face of an element's face $f$, provided this face is a boundary
face.  Its return value thus depends on the enumeration of the faces of an
element in relation to the faces of its root tree.  

For lines, quadrilaterals
and hexahedra with the Morton index, the root tree face indices are the same as
the element's face indices \cite{BursteddeWilcoxGhattas11} and thus
\texttt{t8\_element\_tree\_face} always returns $f$.

For simplices with the TM index, the enumeration of their faces depends on
their simplex type.  Face number $i$ refers to the unique face that
does not contain the vertex $\vec{x}_i$. We show the vertices of the different
types in Figure~\ref{fig:sechstetra}.

We observe that for triangles of type $0$, the face number is the same as the
face number of the root tree (since triangles of type $0$ are scaled copies of
the root tree).  Triangles of type $1$ cannot lie on the boundary of the root
tree and thus we never call \texttt{t8\_element\_tree\_face} with a type $1$
triangle.

For tetrahedra of type $0$ the same reasoning holds as for type
$0$ triangles, \linebreak \texttt{t8\_element\_tree\_face} returns $f$.
Tetrahedra of type $3$ cannot lie on the boundary of the root tree. For each 
of the remaining four types there is exactly one face that can lie
on the root tree boundary. Face $0$ of type $1$ tetrahedra is a descendant of the
root face $0$; face 2 of type $2$ tetrahedra is a descendant of the root face $1$;
face 1 of type $4$ tetrahedra is a descendant of the root face $2$.
Finally, face $3$ of type $5$ tetrahedra is a descendant of the root face $3$.
We list these indices in Table~\ref{tab:elementtreeface}.

Note that for face indices $f$ of faces that cannot lie on the root boundary,
the return value of \texttt{t8\_element\_tree\_face} is undefined.
This behavior is legal, since we ensure in Algorithm~\ref{alg:forestfaceneighbor}
that the function is only called if the face $f$ does lie on the root boundary.
We do so by calling \texttt{element\_neighbor\_inside\_root} beforehand
which queries whether the face-neighbor across $f$ is in the root tree or not.

\begin{table}
\center
\begin{tabular}{|c|c|c||c|c|c|}
\hline
\multicolumn{6}{|c|}{Tetrahedron}\\ \hline
  $\type(T)$ & $f$ & $g$ & $\type(T)$ & $f$ & $g$\\ \hline
0 & $i$ & $i$ & 3 &   -   &   -   \\\hline
1 & $0$ & $0$ & 4 & $1$ & $2$ \\\hline
2 & $2$ & $1$ & 5 & $3$ & $3$ \\\hline
\end{tabular}
\caption[Face number and type for tetrahedral subfaces]
  {\texttt{g = t8\_element\_tree\_face (T, f)} for a tetrahedron $T$ and a face $f$
  of $T$ that lies on a tree face.
  Depending on $T$'s type, all, exactly one, or none of its faces can be a
  subface of a face of the root tetrahedron tree. We show the tetrahedron's face
  number $f$ and the corresponding face number $g$ in the root tetrahedron.
  Type $3$ tetrahedra can never have a subface of the root tetrahedron as
  face. For type $0$ tetrahedra, each of their faces can be a subface of 
  the root tetrahedron's face with the same index.}
\figlabel{tab:elementtreeface}
\end{table}

\subsection{(ii) Constructing the face element}
As a next step, we build the face $F$ as a $(d-1)$-dimensional
element.
We do this via the low-level function

\lowlevel{t8\_element\_boundary\_face (element E, face\_index f)}{
   Return the $(d-1)$-dimensional face element $F$ 
   corresponding to the face index $f$.}

Thus, the lower dimensional face element $F$ has to be created from $E$.  For
the Morton index this is equivalent to computing the coordinates of its anchor
node and additionally its type for simplices. Hereby we interpret the tree face
$G$ as a $(d-1)$-dimensional root tree of which $F$ is a descendant element.
See also Figure~\ref{fig:construct-face}.

\begin{figure}
\center
\def\svgwidth{0.6\textwidth}
\begingroup%
  \makeatletter%
  \providecommand\color[2][]{%
    \errmessage{(Inkscape) Color is used for the text in Inkscape, but the package 'color.sty' is not loaded}%
    \renewcommand\color[2][]{}%
  }%
  \providecommand\transparent[1]{%
    \errmessage{(Inkscape) Transparency is used (non-zero) for the text in Inkscape, but the package 'transparent.sty' is not loaded}%
    \renewcommand\transparent[1]{}%
  }%
  \providecommand\rotatebox[2]{#2}%
  \ifx\svgwidth\undefined%
    \setlength{\unitlength}{658.38232422bp}%
    \ifx\svgscale\undefined%
      \relax%
    \else%
      \setlength{\unitlength}{\unitlength * \real{\svgscale}}%
    \fi%
  \else%
    \setlength{\unitlength}{\svgwidth}%
  \fi%
  \global\let\svgwidth\undefined%
  \global\let\svgscale\undefined%
  \makeatother%
  \begin{picture}(1,0.47149704)%
    \put(0,0){\includegraphics[width=\unitlength,page=1]{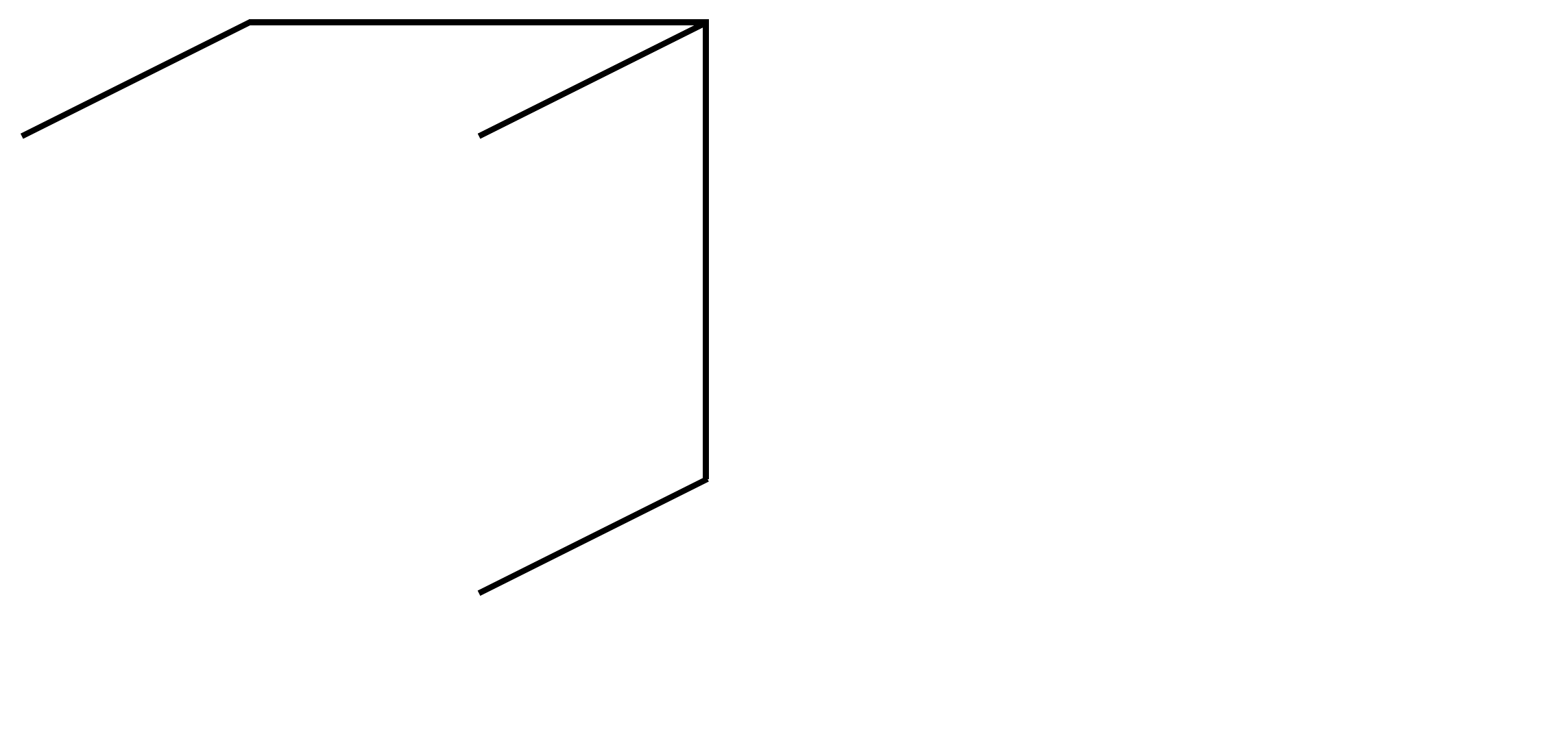}}%
    \put(0.13424745,0.03144546){\color[rgb]{0,0,0}\makebox(0,0)[lb]{\smash{$K$}}}%
    \put(0,0){\includegraphics[width=\unitlength,page=2]{face_neighbor_hex_facetree_tex.pdf}}%
    \put(0.79739285,0.03144546){\color[rgb]{0,0,0}\makebox(0,0)[lb]{\smash{$G$}}}%
    \put(0.18748731,0.25018946){\color[rgb]{0,0,0}\makebox(0,0)[lb]{\smash{$E$}}}%
    \put(0,0){\includegraphics[width=\unitlength,page=3]{face_neighbor_hex_facetree_tex.pdf}}%
    \put(0.67788548,0.28645259){\color[rgb]{0,0,0}\makebox(0,0)[lb]{\smash{$F$}}}%
  \end{picture}%
\endgroup%
 \caption[Constructing the face element $F$ to an element $E$ at a tree face
$G$.]
{Constructing the face element $F$ to an element $E$ at a tree face
$G$. We can interpret the face of the 3D tree $K$ as a 2D tree $G$.  The face
$F$ of $E$ is an element in this tree.}
\figlabel{fig:construct-face}
\end{figure}

\begin{remark}
  Since we construct a lower-dimensional element as the face of a
  higher-dimensional one, there are two conditions that need to be satisfied for
  the implementations of the two element types involved.
\begin{enumerate}
 \item The refinement pattern of a face of the higher dimensional elements must
       match the lower dimensional refinement pattern.
 \item The maximum possible refinement level of higher dimensional elements must
       not exceed the one of the lower dimensional elements.
\end{enumerate}
If one, or both, of these conditions are not fulfilled, then there exist 
faces of the higher dimensional elements for which an interpretation as a lower
dimensional element is not possible.
For Morton-type SFCs these conditions are naturally fulfilled.

\end{remark}
\begin{remark}
 For the simplicial and hexahedral Morton SFC with maximum refinement level
 $\mathscr{L}$, the anchor node coordinates of an element of level $\ell$ are
 integer multiples of $2^{\mathscr L - \ell}$.
 Suppose, the maximum level of hexahedral elements is $\mathscr{L}_1$ and the
 maximum level of a face boundary quadrilateral element is $\mathscr{L}_2 \geq
 \mathscr{L}_1$, then we will have to multiply a hexahedral coordinate with
 ${2^{\mathscr{L}_2-\mathscr{L}_1}}$ to transform it into a
 quadrilateral coordinate. We will assume in this section without loss of
generality that all element types have the same maximum possible refinement
level. Hence, we omit the scaling factor from our equations.
\end{remark}

\subsubsection{Quadrilaterals and hexahedra}
Let a quadrilateral $Q$ with anchor node $(Q.x, Q.y)$ and level $\ell$ be
given. Let furthermore a face index $f$ be given such that the face is a
subface of the root tree. The face element is a line $F$ with level $\ell$.
The Computation of its anchor node $(F.x)$ is a mere coordinate projection,
depending on $f$. If $f$ is $0$ or $1$, then $F.x = Q.y$, otherwise $f$
is $2$ or $3$ and we obtain $F.x = Q.x$.

In 3D we have similar projections, which we list in
Table~\ref{tab:quadhexfaces}.

\begin{table}
\center
\begin{tabular}[t]{|c|c|}\hline
\multicolumn{2}{|c|}{Quad}\\\hline
 $f$  & $F.x$ \\ \hline
 $0$ & $Q.y$ \\ \hline 
 $1$ & $Q.y$ \\ \hline
 $2$ & $Q.x$ \\ \hline
 $3$ & $Q.x$ \\ \hline
\end{tabular}\hspace{2ex}%
\begin{tabular}[t]{|c|c||c|c|}\hline
\multicolumn{4}{|c|}{Hexahedron}\\\hline
 $f$ & $(F.x, F.y)$  & $f$ & $(F.x, F.y)$\\ \hline
 $0$ & $(Q.y, Q.z)$  & $3$ & $(Q.x, Q.z)$ \\ \hline
 $1$ & $(Q.y, Q.z)$  & $4$ & $(Q.x, Q.y)$ \\ \hline 
 $2$ & $(Q.x, Q.z)$  & $5$ & $(Q.x, Q.y)$ \\ \hline
\end{tabular}
\caption
[\texttt{t8\_element\_boundary\_face} for quadrilaterals and hexahedra.]
{\texttt{t8\_element\_boundary\_face} for quadrilaterals and hexahedra.
  Left: For a quadrilateral $Q$ with anchor node $(Q.x, Q.y)$ and a face
$f$, the corresponding anchor node coordinate $F.x$ of the face line element.
Right: For a hexahedron $Q$ with anchor node $(Q.x, Q.y, Q.z)$ and a face
$f$, the corresponding anchor node coordinates $(F.x, F.y)$ of the face
quadrilateral element.
In either case, computing the coordinates is equivalent to a projection.}
\figlabel{tab:quadhexfaces}
\end{table}

\subsubsection{Triangles and tetrahedra}
Since we only construct these face elements for tree boundary faces,
we do not need to consider all combinations of element and face number,
but only those that occur on the tree boundary.
In particular, all possible faces are subfaces of the faces of the root simplex
$S_0$.

Triangles of type $1$ never lie on the root tree boundary, hence we only need
to consider type $0$ triangles.
Let $T$ be a type $0$ triangle with anchor node coordinates $(T.x, T.y)$
and level $\ell$. Furthermore, let a face index $f$ of a subface of the root
tree be given.
The corresponding face element $F$ is a line with level $\ell$ and anchor
node coordinate $F.x$.
We compute this $F.x$ from $T$'s anchor node as follows.
If $f = 0$, then $F.x = T.y$. If
$f = 1$, then $F.x = T.x$. Finally, if $f = 2$ then $F.x = T.x = T.y$. 
We show these cases in Table~\ref{tab:facecat}.

A tetrahedron that lies on the root tree boundary has type different
from $3$. In order to compute the boundary face, we further distinguish
two categories of such tetrahedra.
\begin{definition}
For a tetrahedron $T$ with $\type(T)\neq 3$, we identify two
\textbf{categories} of root face indices $g$ regarding how the anchor node
$(F.x, F.y)$ of the corresponding face triangle $F$ is computed from $T$'s
anchor node $(T.x, T.y, T.z)$:

Faces $g = 0$ and $g = 1$ of $S_0$ lie in the $(x = 0)$-plane or the $(x=z)$-plane of the
coordinate system and form category $1$. 
In this case $(F.x, F.y) = (T.z, T.y)$.

Category $2$ are the faces $g = 2$ and $g = 3$ of $S_0$. These lie in the
$(y = 0)$-plane or the $(y=z)$-plane and the anchor node of $F$ is given by $(F.x,
F.y) = (T.x,T.z)$.
\end{definition}

Depending on the type of a parent tetrahedron its faces fall in one of these
categories and have a distinguished triangle type $0$ or $1$ as subface of
$g\in\set{0, 1, 2, 3}$.

We display the various categories of faces depending on the type of a simplex
in Table \ref{tab:facecat}. We also list the anchor node coordinates
of the face elements in this table.

\begin{table}
\center
 \begin{tabular}[t]{|c|c|c|} \hline
 \multicolumn{3}{|c|}{Triangle}\\\hline
 \mytabvspace $\type{(T)}$ & $f$ & $F.x$ \\ \hline
        0  &  $0$ & $T.y$  \\
           &  $1$ & $T.x$  \\
           &  $2$ & $T.x$  \\\hline
 \end{tabular}\hspace{3ex}%
 \begin{tabular}[t]{|c|c|c|c|c|} \hline
 \multicolumn{5}{|c|}{Tetrahedron}\\\hline
 \mytabvspace $\type{(T)}$ & $f$ & Cat. & $\type{(F)}$ & $(F.x,F.y)$ \\ \hline
   0  &  $0$   & 1 & 0 & $(T.z,T.y)$ \\
      &  $1$   & 1 & 0 & $(T.z,T.y)$ \\
      &  $2$   & 2 & 0 & $(T.x,T.z)$ \\ 
      &  $3$   & 2 & 0 & $(T.x,T.z)$ \\\hline
   1  &  $0$   & 1 & 1 & $(T.z,T.y)$ \\\hline
   2  &  $2$   & 1 & 1 & $(T.z,T.y)$ \\\hline
   3  &    -     & - & - & $ -       $ \\\hline
   4  &  $1$   & 2 & 1 & $(T.x,T.z)$ \\\hline
   5  &  $3$   & 2 & 1 & $(T.x,T.z)$ \\\hline 
\end{tabular}
\caption[Coordinates and types for the faces of simplices]
  {\texttt{t8\_element\_boundary\_face (T, f)} for triangles and tetrahedra.
   Left: The $x$ coordinate of the anchor node of the boundary line $F$ at face
   $f$ of a triangle $T$ in terms of $T$'s coordinates. Right: The category and
   the type of the boundary triangle $F$ at a face $f$ of tetrahedron $T$ as
   well as the anchor node coordinates $(F.x, F.y)$.}
\figlabel{tab:facecat}
\end{table}

\subsection{(iii) Constructing $F'$ from $F$}
If we know the tree face index $g$, we can look up the corresponding face index
$g'$ of the face in $K'$ from the coarse mesh connectivity; see
Section~\ref{sec:cmesh-implementation}. 

In order to transform the coordinates of $F$ to obtain $F'$ we need 
to understand how the vertices of the face $g$ connect to the vertices of
the face $g'$. 
Each face's vertices form a subset of the vertices of the trees as in
Figure~\ref{fig:vertexface}. Let
$\set{v_0,\dots,v_{n-1}}$ and $\set{v'_0,\dots,v'_{n-1}}$ be these vertices for
$g$ and $g'$ in ascending order, thus $v_i<v_{i+1}$ and $v'_i < v'_{i+1}$.
The face-to-face connection of the two trees determines a permutation
$\sigma\in S_n$ such that vertex $v_i$ connects to vertex $v'_{\sigma(i)}$.
In theory, there are $n!$ possible permutations.
However, not all of them occur.
\begin{definition}
 Since we exclude the possibility of trees
with negative volume, there is exactly one combination in which the vertices
$v_0$ and $v'_0$ are connected (c.f.\ Section~\ref{sec:orientation}).
We call the corresponding permutation $\sigma_0$.
\end{definition}
All the other possible permutations result from rotating the face $g'$ in this
position. This rotation is encoded in the orientation information of the
coarse mesh; see Definition~\ref{def:orientation}.

Thus, in order to understand the permutation $\sigma$, it suffices to know the
 initial permutation $\sigma_0$ and the orientation. $\sigma_0$ is
determined by the types of $K$ and $K'$ and the face indices $g$ and $g'$.
In fact, since the orientation encodes the possible rotations, the only
data we need to know is the sign of $\sigma_0$.
\begin{definition}
\label{def:facesign}
  Let $K$ and $K'$ be two trees of types $t$ and $t'$, 
  and let $g$, $g'$ faces of $K$ and $K'$ of the same element type.
  We define the \textbf{sign} of $g$ and $g'$ as the sign of the
  permutation $\sigma_0$.
  \begin{equation}
    \sign_{t,t'}(g,g') := \sign(\sigma_0).
  \end{equation}

\end{definition}
\begin{remark}
Note that this definition does not depend on the order of the
faces $g$ and $g'$. Since if we switch their roles, the permutation changes to
its inverse $\sigma_0^{-1}$, thus
  \begin{equation}
    \sign_{t',t}(g',g) = \sign(\sigma_0^{-1}) = \sign(\sigma_0) = \sign_{t,t'}(g,g').
  \end{equation}
\end{remark}

\begin{remark}
  For hexahedra we can compute the sign of two faces via the 
  tables $\mathcal R, \mathcal Q, \mathcal P$ from~\cite[Table~3]{BursteddeWilcoxGhattas11} as
  \begin{equation}
   \label{eq:facetrafop4est}
    \sign_{\mathrm{hex},\mathrm{hex}}(g,g') = 
    \sign\left(i\mapsto \mathcal P\left(\mathcal Q( \mathcal R(g,g'),0), i\right)\right)
    = \neg\mathcal R(g,g').
  \end{equation}
  The permutation in the middle is exactly the permutation $\sigma_0$.
  The argument $0$ of $\mathcal Q$ is the orientation of a face-to-face
  connection, but the result is independent of it, and we could have chosen
  any other value.
\end{remark}
We display the sign for two example tree-to-tree connections, namely
tetrahedron to tetrahedron and hexahedron to prism, in Table~\ref{tab:facesigns}.

\begin{table}
  \center
  \begin{tabular}[t]{|cc|rrrr|}
    \hline
     \multicolumn{6}{|c|}{$K$ and $K'$ tetrahedra}\\\hline
    & & \multicolumn{4}{c|}{$g$}\\
    &  & 0 & 1 & 2 & 3\\\hline
      \multirow{4}{*}{$g'$}&
    0 &-1 & 1&-1 &1\\
   &1 &1 &-1 &1 &-1\\
   &2 &-1&1 &-1 &1\\
   &3 &1 &-1 &1 &-1\\\hline
  \end{tabular}
  \begin{tabular}[t]{|cc|rrrrrr|}
    \hline
     \multicolumn{8}{|c|}{$K$ hexahedron, $K'$ prism}\\\hline
    & & \multicolumn{6}{c|}{$g$}\\
    &  & 0 & 1 & 2 & 3 &4 & 5 \\\hline
      \multirow{4}{*}{$g'$}&
    0 & 1&-1&-1& 1& 1&-1\\
   &1 &-1& 1& 1&-1&-1& 1\\
   &2 & 1&-1&-1& 1& 1&-1\\ \hline
  \end{tabular}
\caption[$\sign_{t,t'}(g,g')$ from Definition~\ref{def:facesign} for two possible
 tree-to-tree connections.]
{$\sign_{t,t'}(g,g')$ from Definition~\ref{def:facesign} for two possible
 tree-to-tree connections. We obtain these values from
 Figure~\ref{fig:vertexface}. We do not show the remaining three cases of $K$
and $K'$ both being hexahedra or prisms, and $K$ being a tetrahedron and
$K'$ a prism, which can be obtained similarly.}
\figlabel{tab:facesigns}
\end{table}

Using the orientation, the sign, and the face index $g'$, we transform the
coordinates of $F$ to obtain the corresponding face $F'$ as a subface of the
face $G'$ of $K'$. For this task we introduce the low-level function 

\lowlevel{t8\_element\_transform\_face (face\_element F, orientation o, sign s)}{
   Return the transformed face element $F'$.}

\begin{remark}
\label{rem:facetrafo}
The transformation $o = i, s = -1$ is the same 
as first $o = 0, s = -1$ followed by $o = i, s = 1$.
We thus only need to compute the cases with $s = 1$ and the specific
case $o = 0, s = -1$.
\end{remark}

\begin{table}
\center
 \begin{tabular}[t]{|c|c|}\hline
  \multicolumn{2}{|c|}{Line}\\\hline
  \mytabvspace  $o$ & $\begin{pmatrix} F'.x\end{pmatrix}$\\[0.7ex] \hline
  \mytabvspace
   0  & $\begin{pmatrix}F.x\end{pmatrix}$ \\[0.5ex]
   1  & 
  $\begin{pmatrix}2^\mathcal{L} - F.x - h\end{pmatrix}$ \\ \hline
 \end{tabular}
 \begin{tabular}[t]{|c|c||c|c|}\hline
  \multicolumn{4}{|c|}{Quadrilateral}\\\hline
  \myhugetabvspace  $o$ & $\begin{pmatrix} F'.x\\F'.y\end{pmatrix}$ &
  $o$ & $\begin{pmatrix} F'.x\\F'.y\end{pmatrix}$\\[2.0ex] \hline
   \myhugetabvspace
   0  & $\begin{pmatrix}F.x\\ F.y\end{pmatrix}$ &
   2  &
  $\begin{pmatrix} F.y \\ 2^\mathcal{L}-F.x-h\end{pmatrix}$\\[2.8ex]
   1  & 
  $\begin{pmatrix}2^\mathcal{L} - F.y - h \\ F.x\end{pmatrix}$ &
   3  &
  $\begin{pmatrix} 2^\mathcal{L}-F.x-h\\2^\mathcal{L}-F.y-h \end{pmatrix}$\\ \hline
 \end{tabular}\\[2ex]
 \begin{tabular}{|c|c|c||c|c|c|}\hline
  \multicolumn{6}{|c|}{Triangle}\\\hline
  \myhugetabvspace $\type(F)$ & $o$ & $\begin{pmatrix} F'.x\\F'.y\end{pmatrix}$
 & $\type(F)$ & $o$ & $\begin{pmatrix} F'.x\\F'.y\end{pmatrix}$\\[2.0ex] \hline
   \myhugetabvspace
   0 &  0  & $\begin{pmatrix}F.x\\ F.y\end{pmatrix}$ & 
   1 &  0  & $\begin{pmatrix}F.x\\ F.y\end{pmatrix}$ \\[2.8ex]
     &  1  & 
  $\begin{pmatrix}2^\mathcal{L} - F.y - h \\ F.x - F.y\end{pmatrix}$ &
     &  1  & 
  $\begin{pmatrix}2^\mathcal{L} - F.y - h \\ F.x - F.y - h\end{pmatrix}$ \\[2.8ex]
     &  2  & 
  $\begin{pmatrix}2^\mathcal{L} - F.x + F.y - h\\ 2^\mathcal{L}-F.x-h\end{pmatrix}$ & 
     &  2  &
  $\begin{pmatrix}2^\mathcal{L} - F.x + F.y\\ 2^\mathcal{L}-F.x-h\end{pmatrix}$ 
     \\ \hline
 \end{tabular}
\caption[\texttt{t8\_transform\_face (F, o, s = 1)}.]
{\texttt{t8\_transform\_face (F, o, s = 1)} for lines (top left)
, quadrilaterals (top right) and triangles (bottom) with sign 1.
  For values with $s = -1$ see Table~\ref{tab:triangle-transform-b} and Remark~\ref{rem:facetrafo}.}
\figlabel{tab:triangle-transform}
\end{table}

Because of $\sigma_0(0) = 0$, the sign is always $1$ for vertex faces and line
faces. Hence, the sign argument is not relevant for the computation of 1D and
2D tree connections.

For the classical and tetrahedral Morton indices we need to compute the anchor
node of $F'$ from the anchor node of the input face $F$.
We show the computation $o = i, s = 1$ cases for lines, triangles and
quadrilaterals in Table~\ref{tab:triangle-transform}. Since we transform faces,
it is not necessary to discuss the
routine for 3 dimensional face element types.
We describe the formulas for $o = 0, s = -1$ for triangles and quadrilaterals
in Table~\ref{tab:triangle-transform-b}. As we mention in
Remark~\ref{rem:facetrafo}, we can compute all $o$ and $s$ combinations from
these two tables. Note that for quadrilaterals and hexahedra 
\texttt{t8\_element\_transform\_face} is equivalent to the internal coordinate
transformation in \texttt{p4est\_transform\_face} of the \pforest library
due to~\eqref{eq:facetrafop4est}.

\begin{table}
  \center
  \begin{tabular}[t]{|c|c|}\hline
  \multicolumn{2}{|c|}{Triangle}\\\hline
  \myhugetabvspace $\type(F)$ & $\begin{pmatrix} F'.x\\F'.y\end{pmatrix}$\\[2.0ex] \hline
   \myhugetabvspace
   0 &  $\begin{pmatrix}F.x\\ F.x - F.y\end{pmatrix}$  
     \\[2.8ex]
   1 &   
  $\begin{pmatrix} F.x \\ F.x - F.y - h\end{pmatrix}$ 
    \\ \hline
 \end{tabular}
  \begin{tabular}[t]{|c|}\hline
  \multicolumn{1}{|c|}{Quadrilateral}\\\hline
  \myhugetabvspace $\begin{pmatrix} F'.x\\F'.y\end{pmatrix}$\\[2.0ex] \hline
   \myhugetabvspace
    $\begin{pmatrix}F.y\\ F.x\end{pmatrix}$  
    \\ \hline
 \end{tabular}
  \caption[\texttt{t8\_transform\_face (F, o = 0, s = -1)}.]
{\texttt{t8\_transform\_face (F, o = 0, s = -1)} for triangles (left) and 
  quadrilaterals (right). We compute any arbitrary combination of values for
  $o$ with $s = -1$ by first applying \texttt{t8\_transform\_face (F, 0, -1)}
  and then \texttt{t8\_transform\_face (F, o, 1)} from Table~\ref{tab:triangle-transform}.}
  \label{tab:triangle-transform-b}
\end{table}

\subsection{(iv) Constructing $E'$ from $F'$}
We now have $E, F, F', K$ and $K'$ and can construct the neighbor element $E'$.
For this we use the function

\lowlevel{t8\_element\_extrude\_face (Face\_element F', Tree K', face index g')}
{
Return the element $E'$ that is a descendant of $K'$ and 
has the face $F'$ at the tree face $g'$.}

\texttt{t8\_element\_extrude\_face} has as input a face element and a root tree
face index and as output the element within the root tree that has as a boundary
face the given face element.
How to compute the element from this data depends on the element type and 
the root tree face.
For quadrilaterals, triangles, hexahedra, and tetrahedra with the (TM-)Morton index
we show the formulas to compute the anchor node coordinates of $E'$ in
Table~\ref{tab:face-extrude}.

\begin{table}
\center
 \begin{tabular}[t]{|c|c||c|c|}\hline
  \multicolumn{4}{|c|}{Quadrilateral from line $F'$}\\\hline
   \myhugetabvspace
  $g'$ & $\begin{pmatrix} E'.x\\ E'.y\end{pmatrix}$ &
  $g'$ & $\begin{pmatrix} E'.x\\E'.y\end{pmatrix}$\\[2.0ex] \hline
   \myhugetabvspace
   0  & $\begin{pmatrix}0\\ F'.x\end{pmatrix}$ &
   2  &
  $\begin{pmatrix} F'.x \\ 0\end{pmatrix}$\\[2.8ex]
   1  & 
  $\begin{pmatrix}2^\mathcal{L} - h \\ F'.x\end{pmatrix}$ &
   3  &
  $\begin{pmatrix} F'.x \\2^\mathcal{L} - h \end{pmatrix}$\\ \hline
 \end{tabular}
 \begin{tabular}[t]{|c|c||c|c|}\hline
  \multicolumn{4}{|c|}{Triangle from line $F'$}\\\hline
  \myhugetabvspace  $g'$ & $\begin{pmatrix} E'.x\\ E'.y\end{pmatrix}$ &
  $g'$ & $\begin{pmatrix} E'.x\\E'.y\end{pmatrix}$\\[2.0ex] \hline
   \myhugetabvspace
   0  & $\begin{pmatrix} 2^\mathcal L - h\\ F'.x\end{pmatrix}$ &
   2  &
  $\begin{pmatrix} F'.x \\ 0\end{pmatrix}$\\[2.8ex]
   1  & 
  $\begin{pmatrix}F'.x \\ F'.x\end{pmatrix}$ &
     & \\ \hline
 \end{tabular}
 \begin{tabular}[t]{|c|c||c|c||c|c|}\hline
  \multicolumn{6}{|c|}{Hexahedron from quadrilateral $F'$}\\\hline
   \myhugetabvspace
    $g'$ & $\begin{pmatrix} E'.x\\ E'.y \\ E'.z\end{pmatrix}$ &
    $g'$ & $\begin{pmatrix} E'.x\\ E'.y \\ E'.z\end{pmatrix}$ &
    $g'$ & $\begin{pmatrix} E'.x\\ E'.y \\ E'.z\end{pmatrix}$\\[2.0ex] \hline
   \myhugetabvspace
   0  & $\begin{pmatrix}0\\ F'.x \\ F'.y\end{pmatrix}$ &
   2  & 
  $\begin{pmatrix} F'.x \\ 0 \\ F'.y\end{pmatrix}$ &
   4  &
  $\begin{pmatrix} F'.x \\ F'.y \\ 0 \end{pmatrix}$\\[3.6ex]
   1  & 
  $\begin{pmatrix} 2^\mathcal L - h \\ F'.x \\ F'.y \end{pmatrix}$ &
   3  &
  $\begin{pmatrix} F'.x \\ 2^\mathcal L - h \\ F'.y \end{pmatrix}$ &
   5  &
  $\begin{pmatrix} F'.x \\ F'.y \\ 2^\mathcal{L} - h \end{pmatrix}$\\ \hline
 \end{tabular}\\
 \begin{tabular}[t]{|c|c||c|c|}\hline
  \multicolumn{4}{|c|}{Tetrahedron from triangle $F'$}\\
  \multicolumn{4}{|c|}{Coordinates}\\\hline
   \myhugetabvspace
    $g'$ & $\begin{pmatrix} E'.x\\ E'.y \\ E'.z\end{pmatrix}$ &
    $g'$ & $\begin{pmatrix} E'.x\\ E'.y \\ E'.z\end{pmatrix}$\\[2.0ex] \hline
   {\scalebox{1.2}{\myhugetabvspace}}
   0  & $\begin{pmatrix} 2^\mathcal L - h\\ F'.y \\ F'.x\end{pmatrix}$ &
   2  &
  $\begin{pmatrix} F'.x \\ 0 \\ F'.y\end{pmatrix}$\\[3.6ex]
   1  & 
  $\begin{pmatrix} F'.x \\ F'.y \\ F'.x\end{pmatrix}$ &
   3  &
  $\begin{pmatrix} F'.x \\ 0 \\ F'.y \end{pmatrix}$\\\hline
 \end{tabular}
 \begin{tabular}[t]{|c|c|c|}\hline
  \multicolumn{3}{|c|}{Tetrahedron from triangle $F'$}\\
  \multicolumn{3}{|c|}{Type}\\\hline
    $g'$ & $\type(F')$ & $\type(E')$ \\ \hline
   0 & 0 & 0 \\
     & 1 & 1 \\\hline
   1 & 0 & 0 \\
     & 1 & 2 \\\hline
   2 & 0 & 0 \\
     & 1 & 4 \\\hline
   3 & 0 & 0 \\
     & 1 & 5 \\\hline
 \end{tabular}
\caption
[The computation of \texttt{t8\_element\_extrude\_face (F', T', g')}]
{The computation of $E'=$ \texttt{t8\_element\_extrude\_face (F', T', g')}
for $T'$ a quadrilateral (top left), triangle (top right), hexahedron (middle),
or tetrahedron (bottom).  Depending on the anchor node coordinates of $F'$ and
the tree face index $g'$, we determine the anchor node of the extruded element
$E'$.  For tetrahedra, we additionally need to compute the type of $E'$ from
$g'$ and the type of the triangle $F'$ (bottom right). In the case of a triangle,
the type of $E'$ is always $0$, since type $1$ triangles cannot lie on a tree
boundary.
$h$ refers to the length of the element $E'$ (resp.\ $F'$) and is computed as
$2^{\mathcal L -\ell}$ where $\ell$ is the refinement level of $E'$ and $F'$.
}
\figlabel{tab:face-extrude}
\end{table}

\subsection{A note on vertex/edge-neighbors}
Despite the restriction to face-neighbors in this thesis, we are certain that
for tree-to-tree neighbors the same method of constructing the lower
dimensional element, transforming it into its neighbor and then extruding it to
the neighbor element can be applied to vertex and edge neighbors as well.
The challenge with these neighbors compared to face-neighbors is that at a single
vertex/edge an arbitrary number of trees can be connected.
Identifying the correct neighbor trees is, however, a task that entirely relies
on the coarse mesh connectivity. Once this is accomplished, the neighbor elements
can be constructed by using the techniques described in this section.

\section{Half-size face-neighbors}
\label{sec:ghost-halfneighbors}

In order to implement the ghost algorithm for balanced forests from
\cite{BursteddeWilcoxGhattas11}, we need to compute half face-neighbors of an
element. That is, given an element $E$ and a face $f$, construct the
neighbors\footnote{These do not need to be of half the size. If for example
refinement is 1:9 Peano refinement, the neighbors are one third the size.} of
$E$ across $f$ of refinement level $\ell(E) + 1$.

We construct the half face-neighbors in three steps:
\begin{enumerate}[(i)]
 \item Construct the children $C_f$ of $E$ that have a face in $f$.
 \item For each child $C_f[i]$ compute the face index  $f_i$ of the face
       that is a child of $f$ and a face of $C_f[i]$.
 \item For each child $C_f[i]$ compute its face-neighbor across $f_i$.
\end{enumerate}

(i) and (ii) are performed by low-level algorithms:

\lowlevel{t8\_element\_children\_at\_face (Element E, face\_index f)}{
Returns an array of children of $E$ that share a face with $f$.}

\lowlevel{t8\_element\_child\_face (Element E, child\_index i, face\_index f)}
{Given an element $E$, a child index $i$, and a face index $f$ of a face of
$E$, compute the index of the $i$-th child's face that is a subface of $f$.}

A typical implementation of \texttt{t8\_element\_children\_at\_face}
would look up the child indices of these children in a table and then 
construct the children with these indices.
The child indices can be obtained from the refinement pattern. For the
quadrilateral Morton index, for example, the child indices at face $f = 0$ are
$0$ and $2$. For a hexahedron the child indices at face $f = 3$ are $2, 3, 6$,
and $7$. For the TM index these indices additionally depend on the type of the
simplex. We list all cases in Table~\ref{tab:ghost-childrenatface}.

\begin{table}
\begin{tabular}[t]{|c|c|c|c|}
\hline
\multicolumn{4}{|c|}{Triangle}\\\hline
&\multicolumn{3}{c|}{$f$}\\\hline
$\type(T)$  & 0 & 1 & 2 \\ \hline
 0 & 1,3 & 0,3 & 0,1 \\ 
 1 & 2,3 & 0,3 & 0,2 \\\hline
\end{tabular}
\begin{tabular}[t]{|c|c|c|c|c|}
\hline
\multicolumn{5}{|c|}{Tetrahedron}\\\hline
&\multicolumn{4}{c|}{$f$}\\\hline
$\type(T)$  & 0 & 1 & 2 & 3 \\ \hline
 0 & 1, 4, 5, 7 & 0, 4, 6, 7 & 0, 1, 2, 7 & 0, 1, 3, 4 \\ 
 1 & 1, 4, 5, 7 & 0, 5, 6, 7 & 0, 1, 3, 7 & 0, 1, 2, 5 \\
 2 & 3, 4, 5, 7 & 0, 4, 6, 7 & 0, 1, 3, 7 & 0, 2, 3, 4 \\
 3 & 1, 5, 6, 7 & 0, 4, 6, 7 & 0, 1, 3, 7 & 0, 1, 2, 6 \\
 4 & 3, 5, 6, 7 & 0, 4, 5, 7 & 0, 1, 3, 7 & 0, 2, 3, 5 \\
 5 & 3, 5, 6, 7 & 0, 4, 6, 7 & 0, 2, 3, 7 & 0, 1, 3, 6 \\ \hline
\end{tabular}
\caption[The child indices of all children of an element touching a given face.]
{The child indices of all children of an element touching a given face.
These indices are needed for \texttt{t8\_element\_children\_at\_face}.
Left: The indices for a triangle $T$ in dependence on its type and the face index $f$.
Right: The same data for a tetrahedron $T$.}
\figlabel{tab:ghost-childrenatface}
\end{table}

The low-level algorithm \texttt{t8\_element\_child\_face} can also be described
via lookup tables. Its input is a parent element $E$, a face index $f$ and a
child index $i$, such that the child $E_i$ of $E$ has a subface of the face
$f$. In other words, $E_i$ is part of the output of
\texttt{t8\_element\_children\_at\_face}. The return value of
\texttt{t8\_element\_child\_face} is the face index $f_i$ of the face of
$E[i]$ that is the subface of $f$.

For the classical Morton index, the algorithm is the identity on $f$, since the faces
of child quadrilaterals/hexahedra are labeled in the same manner as those of the
parent element. 
For the TM index for triangles, the algorithm is also the identity, since only
triangle children of the same type as the parent can touch a face of the parent
and for same type triangles the faces are labeled in the same manner.

For tetrahedra, the algorithm is the identity on those children that have the
same type as the parent. However, for each face $f$ of a tetrahedron $T$, 
there exists a child of $T$ that has the middle face child of $f$
as a face. This child has not the same type as $T$.
For this child the corresponding face value is computed as $0$ if $f=0$, $2$ if
$f = 1$, $1$ if $f=2$, or $3$ if $f=3$.

\section{Finding owner processes of elements}
\label{sec:ghost-findowner}

For the ghost algorithm, after we have successfully constructed an element's
(half) face-neighbor, we need to identify the owner process of this neighbor.
\begin{definition}
Let $E$ be an element in a (partitioned) forest. A process $p$ is an
\textbf{owner} of $E$ if there exists a leaf $L$ in the forest such that
\begin{enumerate}
 \item $L$ is in the partition of $p$, and
 \item $L$ is an ancestor or a descendant of $E$.
\end{enumerate}
Note that the owner of an arbitrary element is not unique. Unique ownership is, however, guaranteed
for leaf elements and their descendants.
Also, each element has at least one owner.
\end{definition}

In his section, we describe how to find all owner processes of a given element
and how to find those processes that own leaf elements sharing a given face with an element.

\subsection{\texttt{t8\_forest\_owner}}

We begin with the algorithm \texttt{t8\_forest\_owner} that determines all
owner processes of a given forest element.

\begin{definition}
The \textbf{first/last descendant} of an element $E$ is the descendant
of $E$ of maximum refinement level with smallest/largest SFC index.
\end{definition}

Since first/last descendants cannot be refined further, they are either a leaf or descendants
of a leaf. Hence, they have a unique owner process.
See also Figure~\ref{fig:owner-ex1} for an illustration.
We denote these owners by $p_\textrm{first}(E)$ and $p_\textrm{last}(E)$.
Since a forest is always partitioned along the SFC in ascending order, it must hold for each 
owner process $p$ of $E$ that
\begin{equation}
  \label{eq:ghost-owner}
p_\textrm{first}(E) \leq p \leq p_\textrm{last}(E).
\end{equation}
On the other hand, if a process $p$ fulfills inequality~\ref{eq:ghost-owner} and
its partition is not empty, then it must be an owner of $E$.
Furthermore, we conclude that an element has a unique owner if and only if
$p_\textrm{first}(E) = p_\textrm{last}(E)$.

\begin{figure}
  \center
  \def\svgwidth{0.5\textwidth}
\begingroup%
  \makeatletter%
  \providecommand\color[2][]{%
    \errmessage{(Inkscape) Color is used for the text in Inkscape, but the package 'color.sty' is not loaded}%
    \renewcommand\color[2][]{}%
  }%
  \providecommand\transparent[1]{%
    \errmessage{(Inkscape) Transparency is used (non-zero) for the text in Inkscape, but the package 'transparent.sty' is not loaded}%
    \renewcommand\transparent[1]{}%
  }%
  \providecommand\rotatebox[2]{#2}%
  \ifx\svgwidth\undefined%
    \setlength{\unitlength}{93.19533691bp}%
    \ifx\svgscale\undefined%
      \relax%
    \else%
      \setlength{\unitlength}{\unitlength * \real{\svgscale}}%
    \fi%
  \else%
    \setlength{\unitlength}{\svgwidth}%
  \fi%
  \global\let\svgwidth\undefined%
  \global\let\svgscale\undefined%
  \makeatother%
  \begin{picture}(1,0.94514086)%
    \put(0,0){\includegraphics[width=\unitlength,page=1]{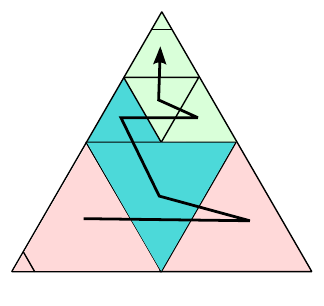}}%
    \put(0.1029275,0.02584887){\color[rgb]{0,0,0}\rotatebox{-0.16875751}{\makebox(0,0)[lb]{\smash{first descendant}}}}%
    \put(0.04890653,0.86824351){\color[rgb]{0,0,0}\makebox(0,0)[lb]{\smash{last descendant}}}%
    \put(0,0){\includegraphics[width=\unitlength,page=2]{owners_ex1_tex.pdf}}%
  \end{picture}%
\endgroup%
   \caption[An element $E$ and its leaf elements.]
  {An element $E$ and its leaf elements. We depict its first descendant 
  (bottom left) and last descendant (top). Their owners are unique and we
  denote them by $p_\mathrm{first}(E)$ (pink) and $p_\mathrm{last}(E)$ 
  (light green). For all other owners $p$---in this case, the process owning
  the blue leaves---we have $p_\textrm{first}(E) \leq p \leq p_\textrm{last}(E)$.
  }
  \figlabel{fig:owner-ex1}
\end{figure}

Each process can compute the SFC index of the first descendant of its first
local element. From these SFC indices we build an array of size $P$, which
is the same on each process. We can then
determine the owner process of a first or last descendant by performing a
binary search in this array if we combine it with the array of tree offsets.
This is the same approach as in \cite{BursteddeWilcoxGhattas11}.

Hence, we can compute all owner processes of an element by constructing its first and last descendant
and computing their owners.
When we know that an element has a unique owner---for example when it is a leaf
element---it suffices to construct its first descendant and compute its owner.

\subsection{Owners at a face}
\label{sec:ownersatface}
For the \texttt{Ghost\_v2} algorithm that works on an unbalanced forest---as
described in \cite{IsaacBursteddeWilcoxEtAl15} for cubical elements---we will
have to identify all owners of leaves at a face of a face-neighbor element of a
given element. In contrast to the algorithm \texttt{find\_range\_boundaries}
that the authors of \cite{IsaacBursteddeWilcoxEtAl15} use, we introduce the
algorithm \texttt{t8\_owners\_at\_face}. Given an element $E$ and a face $f$, 
\texttt{t8\_owners\_at\_face} determines the set $P_E$ of all processes that
have leaf elements that are descendants of $E$ and share a face with $f$.  It
is a recursive algorithm that we now describe in detail.

\begin{definition}
The first/last \textbf{face descendant} of an element $E$ at a face $f$ is the
descendant of $E$ of maximum refinement level that shares a subface with $f$
and has smallest/largest SFC index.
\end{definition}
We denote the owner processes of an element's first and last
face descendants by $p_\textrm{first}(E,f)$ and $p_\textrm{last}(E,f)$.
If these are equal to the same process $q$, 
we can return $q$ as the single owner at that face.

As opposed to the owners of an element, not all nonempty processes in the range
from $p_\textrm{first}(E,f)$ to $p_\textrm{last}(E,f)$ are necessarily owners
of leaves at the face of $E$; see for example face $f=0$ in
Figure~\ref{fig:ownersatface}. Here, $p_\textrm{first}(E,0) = 0$, $p_\textrm{last}(E,0) = 2$,
and the owners at the face are $\set{0, 2}$ despite process $1$ being nonempty.

It is thus not sufficient to determine all nonempty processes between
$p_\textrm{first}(E,f)$ and $p_\textrm{last}(E,f)$.
Hence, if $p_\textrm{first}(E,f) < p_\textrm{last}(E,f) - 1$, we enter a
recursion with all children of $E$ that lie on the face $f$. 
Thus, the recursion is guaranteed to terminate if the input element has
only descendants owned by a single process, which happens at the latest when
the input element is a leaf. However, it could terminate earlier for elements
whose descendants at the face $f$ are all owned by a single
process, or by two processes whose ranks differ by $1$.

We outline the algorithm in Algorithm~\ref{alg:ownersatface} and illustrate an
example in Figure~\ref{fig:ownersatface}.

\begin{algorithm}
  \DontPrintSemicolon
  \caption{\texttt{t8\_owners\_at\_face (Forest $\forest F$, element $E$, face\_index $f$)}}
  \label{alg:ownersatface}
  \algoresult{The set $P_E$ of all processes that own leaf elements that are descendants
  of $E$ and have a face that is a subface of $f$.}\;
  $P_E\gets\emptyset$\;
  $\mathrm{fd} \gets\,$\texttt{t8\_element\_first\_desc\_face ($E, f$)} 
    \Comment{First and last descendant of $E$ at $f$}
  $\mathrm{ld} \gets\,$\texttt{t8\_element\_last\_desc\_face ($E, f$)}\;  
  $p_\textrm{first} \gets\,$\texttt{t8\_forest\_owner ($\forest F$, fd)}
    \Comment{The owners of \texttt{fd} and \texttt{ld}}
  $p_\textrm{last} \gets\,$\texttt{t8\_forest\_owner ($\forest F$, ld)}\;
  \algoeifcom {\IfComment{Only $p_\textrm{first}$ and $p_\textrm{last}$ are owners 
  of leaves at $f$}} {$p_\textrm{first} \in \set{p_\textrm{last},p_\textrm{last} - 1} $ } 
  {
    \Return $\set{p_\textrm{first}, p_\textrm{last}}$ 
  }(\IfComment{There may be other owners. Enter the recursion.})
  { %
    $C_f[]\gets\,$\texttt{t8\_element\_children\_at\_face ($E, f$)}\;
    \algofor {$0\leq i < $\texttt{t8\_element\_num\_face\_children ($E, f$)}}
    {
      $j\gets $\texttt{child\_index} $(C_f[i])$\Comment{The child number relative to $E$.}
      $f' \gets\,$\texttt{t8\_element\_child\_face ($E, j, f$)} \Comment{The face number of the child}
      $P_E \gets P_E \abst\cup\,$\texttt{t8\_owners\_at\_face ($\forest F, C_f[i], f'$)}
    \Comment{Recursion}
    }
    \Return $P_E$
  }
\end{algorithm}

\begin{figure}
  \center
  \def\svgwidth{0.55\textwidth}
\begingroup%
  \makeatletter%
  \providecommand\color[2][]{%
    \errmessage{(Inkscape) Color is used for the text in Inkscape, but the package 'color.sty' is not loaded}%
    \renewcommand\color[2][]{}%
  }%
  \providecommand\transparent[1]{%
    \errmessage{(Inkscape) Transparency is used (non-zero) for the text in Inkscape, but the package 'transparent.sty' is not loaded}%
    \renewcommand\transparent[1]{}%
  }%
  \providecommand\rotatebox[2]{#2}%
  \ifx\svgwidth\undefined%
    \setlength{\unitlength}{93.19533691bp}%
    \ifx\svgscale\undefined%
      \relax%
    \else%
      \setlength{\unitlength}{\unitlength * \real{\svgscale}}%
    \fi%
  \else%
    \setlength{\unitlength}{\svgwidth}%
  \fi%
  \global\let\svgwidth\undefined%
  \global\let\svgscale\undefined%
  \makeatother%
  \begin{picture}(1,0.94514086)%
    \put(0,0){\includegraphics[width=\unitlength,page=1]{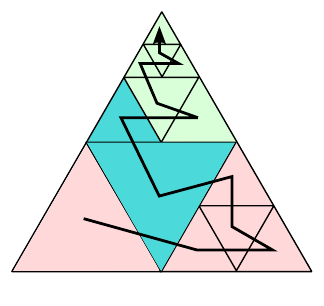}}%
    \put(0.17896196,0.17939129){\color[rgb]{0,0,0}\makebox(0,0)[lb]{\smash{$p=0$}}}%
    \put(0.41928747,0.27725086){\color[rgb]{0,0,0}\makebox(0,0)[lb]{\smash{$p=1$}}}%
    \put(0.53913259,0.52962814){\color[rgb]{0,0,0}\makebox(0,0)[lb]{\smash{$p=2$}}}%
    \put(0.77331875,0.51579107){\color[rgb]{0,0,0}\makebox(0,0)[lb]{\smash{$f = 0$}}}%
    \put(0.07587091,0.51258001){\color[rgb]{0,0,0}\makebox(0,0)[lb]{\smash{$f = 1$}}}%
    \put(0.42474783,0.04064462){\color[rgb]{0,0,0}\makebox(0,0)[lb]{\smash{$f = 2$}}}%
  \end{picture}%
\endgroup%
\hfill
  \def\svgwidth{0.28\textwidth}
\begingroup%
  \makeatletter%
  \providecommand\color[2][]{%
    \errmessage{(Inkscape) Color is used for the text in Inkscape, but the package 'color.sty' is not loaded}%
    \renewcommand\color[2][]{}%
  }%
  \providecommand\transparent[1]{%
    \errmessage{(Inkscape) Transparency is used (non-zero) for the text in Inkscape, but the package 'transparent.sty' is not loaded}%
    \renewcommand\transparent[1]{}%
  }%
  \providecommand\rotatebox[2]{#2}%
  \ifx\svgwidth\undefined%
    \setlength{\unitlength}{294.81169434bp}%
    \ifx\svgscale\undefined%
      \relax%
    \else%
      \setlength{\unitlength}{\unitlength * \real{\svgscale}}%
    \fi%
  \else%
    \setlength{\unitlength}{\svgwidth}%
  \fi%
  \global\let\svgwidth\undefined%
  \global\let\svgscale\undefined%
  \makeatother%
  \begin{picture}(1,1.85104584)%
    \put(0,0){\includegraphics[width=\unitlength,page=1]{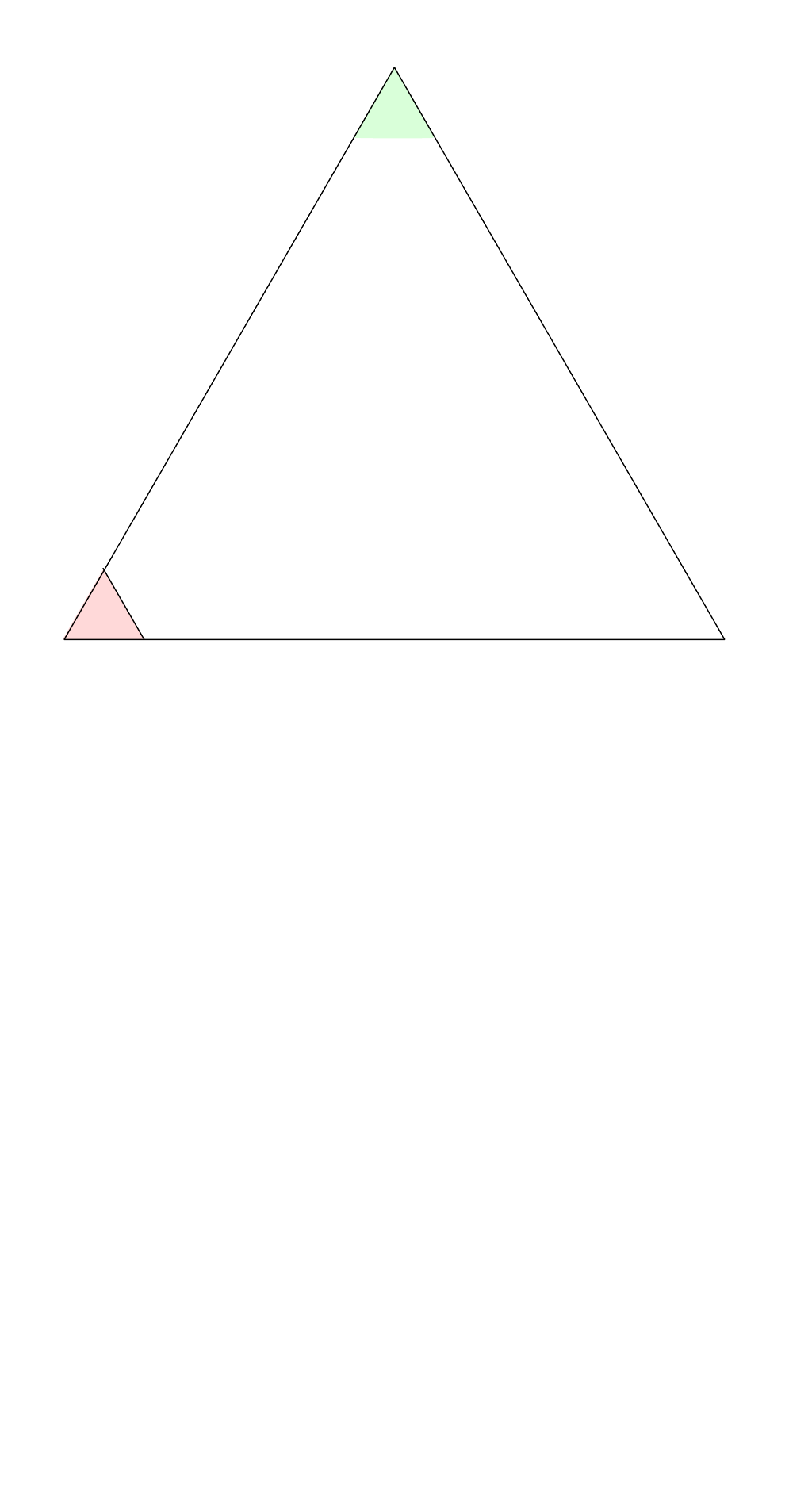}}%
    \put(0.1861849,1.11045593){\color[rgb]{0,0,0}\makebox(0,0)[lb]{\smash{0}}}%
    \put(0,0){\includegraphics[width=\unitlength,page=2]{owners_at_face_ex1_detail_b_tex.pdf}}%
    \put(0.46557029,1.59564253){\color[rgb]{0,0,0}\makebox(0,0)[lb]{\smash{2}}}%
    \put(0.05497321,1.49788517){\color[rgb]{0,0,0}\makebox(0,0)[lb]{\smash{$f = 1$}}}%
    \put(0,0){\includegraphics[width=\unitlength,page=3]{owners_at_face_ex1_detail_b_tex.pdf}}%
    \put(0.15098151,0.96949184){\color[rgb]{0,0,0}\makebox(0,0)[lb]{\smash{\small{first face descendant}}}}%
    \put(0,0){\includegraphics[width=\unitlength,page=4]{owners_at_face_ex1_detail_b_tex.pdf}}%
    \put(0.53683924,1.74521062){\color[rgb]{0,0,0}\makebox(0,0)[lb]{\smash{\small{last face}}}}%
    \put(0,0){\includegraphics[width=\unitlength,page=5]{owners_at_face_ex1_detail_b_tex.pdf}}%
    \put(0.19431916,0.18348732){\color[rgb]{0,0,0}\makebox(0,0)[lb]{\smash{0}}}%
    \put(0.27029986,0.3373873){\color[rgb]{0,0,0}\makebox(0,0)[lb]{\smash{0}}}%
    \put(0.38969799,0.56610424){\color[rgb]{0,0,0}\makebox(0,0)[lb]{\smash{1}}}%
    \put(0.46606641,0.6948064){\color[rgb]{0,0,0}\makebox(0,0)[lb]{\smash{2}}}%
    \put(0.56248599,1.6669476){\color[rgb]{0,0,0}\makebox(0,0)[lb]{\smash{\small{descendant}}}}%
  \end{picture}%
\endgroup%
   \caption[An example for \texttt{t8\_owners\_at\_face}]{An example for Algorithm~\ref{alg:ownersatface}, \texttt{t8\_forest\_owners\_at\_face}.
   Left: A triangle element $E$ with the TM-index as SFC whose descendants 
   are owned by three different processes: $0$ (red), $1$ (blue), and
   $2$ (green). The owners at the faces are $\set{0, 2}$ at face $0$,
   $\set{0, 1, 2}$ at face $1$, and $\set{0}$ at face $2$.
   Right: The iterations of \texttt{t8\_forest\_owners\_at\_face} at face $f =
   1$.  At first the first and last descendant of $E$ at $f$ are
   constructed. We compute their owner processes $0$ and $2$, and since their
   difference is greater one, we continue the recursion. In the second
   iteration the algorithm is called once for the lower left child and once for
   the upper child of $E$. We determine their first and last descendants at the
   respective subface of $f$.  For the lower left child, the recursion stops since
   both face descendants are owned by process $0$. For the upper child the owner
   processes are $1$ and $2$ and since there are no other possible owner processes
   in between, we stop the recursion as well.
  }
  \figlabel{fig:ownersatface}
\end{figure}

\subsubsection{Notes on the implementation}
In our implementation of \texttt{t8\_owners\_at\_face} we take into account that
the first and last owners $p_f$ and $p_l$ at the current recursion step form
lower and upper bounds for the first and last owners in any upcoming recursion
step. Thus, we restrict the binary searches in \texttt{t8\_forest\_owner}
to the interval $[p_f, p_l]$ instead of $[0, P-1]$.

We also exploit that the first descendant of an element $E$ at a face $f$ is also
the first face descendant of $E$'s first child at $f$. The same holds
for the last descendant and the last child at $f$.
Thus, we reuse the first/last face descendants and owners of $E$ when we enter
the recursion with the first/last child at $f$.

\section{The ghost algorithms}

For \ghostb we assume that the forest is balanced and hence we know that all
face-neighbor leaf elements of $E$ have a refinement level between $\ell(E)
-1$ and $\ell(E) + 1$. Therefore, all neighbor elements of $E$ with level
$\ell(E) + 1$ must have a unique owner process.
Thus, to identify the remote processes of $E$ at $F$ it suffices
to construct the face-neighbors of $E$ across $F$ of level $\ell(E) + 1$
and determine their owner processes.
We construct these face-neighbors via the function
\texttt{t8\_forest\_half\_face\_neighbors}; see Section~\ref{sec:ghost-halfneighbors}.
We present the complete \ghostb algorithm in Algorithm~\ref{alg:ghostb}.

For \texttt{Ghost\_v2} we drop the assumption of a balanced forest and 
thus there is no a priori knowledge about the face-neighbor leaves of $E$.
To compute $E$'s remote processes across $F$, we first construct the
corresponding face-neighbor $E'$ of the same-level as $E$.
For this element we know that it is either a descendant of a
forest leaf, which then has a unique owner, or an ancestor of
multiple forest leaves, which could all have different owners.
We need to compute only the owners of those descendant/ancestor forest
leaves of $E'$ that touch the face $F$. We achieve this with
Algorithm~\ref{alg:ownersatface} \texttt{t8\_forest\_owners\_at\_face} that we
describe in Section~\ref{sec:ownersatface}. 
We show the complete \texttt{Ghost\_v2} Algorithm in \ref{alg:ghost}.  It uses
the function \texttt{dual\_face} which, given an element $E$ and a face index
$f$, simply returns the face index $f'$ from the neighboring element.

\begin{algorithm}
\DontPrintSemicolon
\caption{\ghostb (\texttt{Forest $\forest F$}) (for balanced forests only)}
\label{alg:ghostb}
\algoresult{The ghost layer of $\forest F$ is constructed.}\;
Ghost\_init ()\;
\algofor{$K\in\, \forest F$\texttt{.trees}}
 { %
 \algofor{$E \in\,$\texttt{K.elements}}
  { %
   \algofor {$0\leq f < \,$\texttt{t8\_element\_num\_faces($E$)}}
    { %
      $E'[]\gets\,$\texttt{t8\_forest\_half\_face\_neighbors ($\forest F, E, f$)}\;
      \algofor {$0\leq i < $\texttt{t8\_element\_num\_face\_children($E, f$)}}
      {   %
      $q \gets $\texttt{t8\_forest\_owner ($\forest F, E'[i]$)}\;
        \algoif {$q \neq p$}
        {
          $R_p^q = R_p^q \cup \set {E}$\;
        }
      }
    }
  }
 }
 Ghost\_communicate ()\;
 \vspace{1ex}
   \tcc*[l]{We outsource the init and communication routine, for later reuse}
 \setcounter{AlgoLine}{0}
 \textbf{Function} \texttt{Ghost\_init}\;
\algofor{$0\leq q < P$} {$R_p^q =\emptyset$}
 \textbf{Function} \texttt{Ghost\_communicate}\;
 $\forest F$.\texttt{ghosts}$\gets\emptyset$\;
 \algofor{$\set{q \,|\, R_p^q \neq \emptyset}$}
  {
   Send $R_p^q$ to $q$\;
   Receive $R_q^p$ from $q$\;
   $\forest F$.\texttt{ghosts}$\gets\forest F$.\texttt{ghosts}$\, \cup R_q^p$\;
  }
\end{algorithm}

\begin{algorithm}
\DontPrintSemicolon
\caption{\texttt{Ghost\_v2} (\texttt{Forest $\forest F$})}
\label{alg:ghost}
\algoresult{The ghost layer of $\forest F$ is constructed.}\;
Ghost\_init ()\;
\algofor{$K\in\, \forest F$\texttt{.trees}}
 { %
 \algofor{$E \in\,$\texttt{T.elements}}
  { %
   \algofor {$0\leq f < \,$\texttt{t8\_element\_num\_faces ($E$)}}
    { %
      $E'\gets\,$\texttt{t8\_forest\_face\_neighbor ($\forest F, E, f$)}\;
      $f'\gets\,$\texttt{dual\_face($E, E', f$)}\;
      $P_{E'} \gets $\texttt{t8\_forest\_owners\_at\_face ($\forest F, E', f'$)}\;
      \algofor {$q\in P_{E'}$}
      {   %
        \algoif {$q \neq p$}
        {
          $R_p^q = R_p^q \cup \set {E}$\;
        }
      }
    }
  }
 }
 Ghost\_communicate ()\;
\end{algorithm}

\section{Optimizing the runtime of \texttt{Ghost}}

The \ghostb and \texttt{Ghost\_v2} algorithms that we present here both iterate 
over all local leaf elements to identify the boundary leaves on the process's
boundary. For each leaf we generate all (half) face-neighbors and compute their
owners.
However, for most meshes only a portion of the leaf elements actually are
boundary elements, depending on the surface-to-volume ratio of the process's
partition. Since the surface of a volume grows quadratically while the volume itself
grows cubically, the number of boundary leaves can become arbitrarily small in
comparison to the number of all leaves.

We thus aim to improve the runtime of the algorithms by excluding inner leaves
from the iteration. 
In \pforest the inner leaves are excluded from the iteration by checking
for each quadrilateral/hexahedron whether its $3\times 3$ neighborhood,
thus all same-level face-(edge-/vertex-)neighbors, are process local.
Since this approach particularly uses geometrical properties of the 
quadrilaterals/hexahedra and of the Morton SFC, it is not practical
for our element-type independent approach.

To exclude the inner leafs, we exchange the leaf iteration with a 
top-town search using the recursive approach from
\cite{IsaacBursteddeWilcoxEtAl15}. Starting with a tree's root element, we
check whether it may have boundary leaf descendants, and if so, we create the
children of the element and continue recursively. If we reach a leaf element,
we check whether it is a boundary element---and if so for which processes---in
the way described in the previous section.
 This approach allows us to terminate the recursion as soon as we 
reach an element that lies completely within the process's partition, thus
saving the iteration over all descendant leaves of that element.

We now discuss the details of the top-down search and how we use it to improve
the ghost algorithm.

\subsection{The recursive top-down search}

In \cite{IsaacBursteddeWilcoxEtAl15} the authors present the general recursive
\texttt{search} algorithm for octree AMR, which easily extends to arbitrary
tree-based AMR. The setting is that we search a leaf or a set of leaves in a
forest that satisfy given conditions. One numerical example for such a search
arises in semi-Lagrangian advection solvers
\cite{Albrecht16,MirzadehGuittetBursteddeEtAl16}. To interpolate the values of
an advected function $\phi_t$ at time
$t$, each grid point $x_i$ is tracked back in time to its previous position
$\hat x_i$ at $t-\Delta t$. This point $\hat x_i$ lies in a leaf element $E_i$
of the forest and an element-local Hermite interpolation with the values of
$\phi_{t-1}$ is used to determine the value $\phi_t(x_i)$.
Thus, in each time step, we have to search the forest for the leaf elements $\set{E_i}$ given the
points $\set{\hat x_i}$.

In our case, we apply \texttt{search} to the problem of identifying all leaf
elements at a process's boundary. The \texttt{search} algorithm has been shown
to be especially efficient when looking for multiple matching leaves at once
\cite{IsaacBursteddeWilcoxEtAl15}, which is the case in our setting.

As presented in \cite{IsaacBursteddeWilcoxEtAl15} the idea of search is to
perform a recursive top-down traversal for each tree by
starting with the root element of that tree and recursively creating its children
until we end up with a leaf element. On each intermediate element we call a
user-provided callback function which returns true only if the search should
continue with this element. If the callback returns false, the recursion for
this element stops and its children are excluded from the search.
If the search has reached a leaf element, the callback also performs the
desired operations if the leaf matches the search.

For our ghost algorithm the callback returns false for elements that lie
entirely within the process's domain, thus excluding possibly large areas from
the search and hence speeding up the computation.
Once a leaf element is reached, we check whether it is a boundary element or not.
Thus, we iterate over the leaf's faces and compute the owners at the respective
neighbor faces as in the inner for loop of Algorithm~\ref{alg:ghost}.

We show our version of \texttt{search} in Algorithm~\ref{alg:search}.
It is a simplified version of Algorithm~3.1 in \cite{IsaacBursteddeWilcoxEtAl15}
without queries, since we do not need these for \texttt{Ghost}.
We also use the function \texttt{split\_array} from
\cite{IsaacBursteddeWilcoxEtAl15}. This function takes as input an element $E$
and an array $L$ of (process local) leaf elements in $E$, sorted in SFC order.
\texttt{split\_array} returns a set of arrays $\set{M[i]}$, such that for the
i-th child $E_i$ of $E$ the array $M[i]$ contains exactly the leaves in $L$ that
are also leaves of $E_i$. Thus, $L = \dot\bigcup_i M[i]$.

For a search of the complete forest, we iterate over all trees and in each tree
we compute the finest element $E$ such that all tree leaves are still
descendants of $E$. We compute $E$ as the nearest common ancestor of the first
and last leaf element of the tree. With this $E$ and the leaf elements of the
tree, we call \texttt{element\_search}. See Algorithm~\ref{alg:forest_search}.

\begin{algorithm}
  \DontPrintSemicolon
  \caption{\texttt{element\_search} (Element $E$, Leaf elements $L$, Callback \texttt{Match})
  [See Algorithm 3.1 in \cite{IsaacBursteddeWilcoxEtAl15}]}
  \label{alg:search}
  \algoresult{\texttt{Match} is called with $E$ as input. If the result is
  true, we continue recursively with $E$'s children.}\;
  \algoif {$L = \emptyset$} {\Return}
  isLeaf $\gets L = \set{E}$ \Comment{Boolean to determine whether $E$ is a leaf element}
  \algoifcom{\IfComment{Decide whether to continue recursion}}{\texttt{Match($E$, isLeaf)}\algoand \textbf{not} \texttt{isLeaf}}
  {
    $M[] \gets $ \texttt{split\_array} ($L$, $E$)\;
    $C[] \gets $ \texttt{t8\_element\_children} ($E$)\;
    \algofor {$0\leq i < $\texttt{t8\_element\_num\_children} ($E$)}
    {
      \texttt{element\_search} ($C[i]$, $M[i]$, \texttt{Match})\;
    }
  }
\end{algorithm}

\begin{algorithm}
  \DontPrintSemicolon
  \caption{\texttt{t8\_forest\_search} (Forest $\forest F$, Callback \texttt{Match})}
  \label{alg:forest_search}
  \algoresult{\texttt{element\_search} is called on each tree.}\;
  \algofor {$K \in \forest F.\mathrm{trees}$}
  {
    $E_1 \gets $ \texttt{first\_tree\_element} ($\forest F, K$) 
    \Comment{First and last local leaf}
    $E_2 \gets $ \texttt{last\_tree\_element} ($\forest F, K$)
    \Comment{in the tree}
    $E \gets $ \texttt{t8\_element\_nearest\_common\_ancestor} ($E_1, E_2$)\;
    $L \gets $ \texttt{tree\_leaves} ($\forest F, K$) 
    \Comment{Array of tree leaves}
    \texttt{element\_search} ($E, L$, \texttt{Match})\;
  }
\end{algorithm}

\subsection{The optimized \texttt{Ghost} algorithm}
We use \texttt{forest\_search} for an optimized version of \texttt{Ghost}.
When iterating over all leaves of the forest and checking the neighbors for each
one, a lot of these elements are in the interior domain of the process.
By using search we can exclude a set of interior leaves as soon as the search
recursion enters an ancestor that is completely in the interior of the domain.

We show our callback algorithm \texttt{t8\_ghost\_match} in
Algorithm~\ref{alg:ghost_match}, which works as follows. If the element $E$
which is passed to \texttt{t8\_ghost\_match} is not a leaf element, we check
whether the element and all of its possible face-neighbors are owned by the
current process.
For the element's owners, we do not call the function \texttt{t8\_forest\_owner}, but instead
save runtime by computing the first and last process that own leaves of the
element and checking whether they are equal.
For these computations we construct $E$'s first and last descendant.
Analogously, for the owners at the neighbor faces we compute the first and last
owner processes.  If for $E$ the first and last process is $p$ and at each
face-neighbor the first and last owner at the corresponding face is also $p$,
$E$ is an inner element and cannot have any boundary leaves as descendants. Thus,
we return 0 and the search does not continue for the descendants of $E$.

If $E$ is a leaf element, then it may or may not be a boundary element.
We thus compute all owner processes for all face-neighbors using \texttt{t8\_forest\_owners\_at\_face}
and add $E$ as a boundary element to all of these that are not $p$.

\begin{algorithm}
  \caption{\texttt{t8\_ghost\_match} (Element $E$, Bool \texttt{isLeaf})}
  \label{alg:ghost_match}
  \DontPrintSemicolon
  \algoresult{If $E$ is a leaf, compute the owners of the face-neighbors
and add to the sets $R_p^q$. If not, query whether all descendants of $E$ 
and all face-neighbors are owned by $p$.}\;

  \algoeifcom {\IfComment{$E$ is a leaf. Compute the owners at}}{\texttt{isLeaf}}
  {
  \algoforcom{\IfComment{the face and add $E$ as boundary.}}{$0\leq f < \,$\texttt{t8\_element\_num\_faces ($E$)}}
    { %
      $E' \gets\, $ \texttt{t8\_forest\_face\_neighbor ($\forest F, E, f$)}\;
      $f'\gets\, $ \texttt{dual\_face($E, E', f$)}\;
      $P_{E'} \gets $ \texttt{t8\_forest\_owners\_at\_face ($\forest F, E', f'$)}\;
      \algofor {$q\in P_{E'}$}
      {   %
        \algoif {$q \neq p$}
        {
          $R_p^q = R_p^q \cup \set {E}$\;
        }
      }
    }
  }(\IfComment{$E$ is not a leaf.})
  {
   $p_\textrm{first}(E) \gets$ \texttt{t8\_element\_first\_owner} ($E$)\;
   $p_\textrm{last}(E) \gets$ \texttt{t8\_element\_last\_owner} ($E$)\label{algline:ghostmatchrec}\;
   \algoifcom{\IfComment{No leaf of $E$ is owned by $p$}}{$p_\textrm{first}(E)>p$\algor$p_\textrm{last}(E)<p$}
   {
    \Return 1\;
  }
    \algofor {$0\leq f < \,$\texttt{t8\_element\_num\_faces ($E$)}
    \label{algline:ghostmatchfor}}
    { %
      $E' \gets\, $ \texttt{t8\_forest\_face\_neighbor ($\forest F, E, f$)}\;
      $f'\gets\, $ \texttt{dual\_face($E, E', f$)}\;
      $p_\textrm{first}(E',f') \gets $ \texttt{t8\_first\_owner\_at\_face ($\forest F, E', f'$)}\;
      $p_\textrm{last}(E',f') \gets $ \texttt{t8\_last\_owner\_at\_face ($\forest F, E', f'$)}\;
      \algoif{$ p_\textrm{first}(E',f') \neq p$\algor $p_\textrm{last}(E',f') \neq p$}{
        \Return 1
\Comment{Not all face-neighbor leaves are owned by $p$}
      }
    }
    \algoif{$p_\textrm{first}(E) = p_\textrm{last}(E) = p$}
    {
      \Return 0\;
    }
  }
  \Return 1\;
\end{algorithm}

\begin{algorithm}
\caption{\texttt{Ghost\_v3} (Forest $\forest F$)}
\label{alg:ghost_v3}
\DontPrintSemicolon
\algoresult{The ghost layer of $\forest F$ is constructed.}\;
Ghost\_init ()\;
\texttt{t8\_forest\_search} ($\forest F$, \texttt{t8\_ghost\_match})\;
Ghost\_communicate ()\;
\end{algorithm}

\subsubsection{Implementation details}
For each child $C$ of an element $E$ the ranks $p_\mathrm{first}(E),
p_\mathrm{last}(E), p_\mathrm{first}(E,f)$, and $p_\mathrm{last}(E,f)$ serve as
lower and upper bounds for the corresponding ranks for $C$.
Thus, in our implementation of \texttt{ghost\_match} in \tetcode, we store these ranks 
for each recursion level reducing the search range for the binary owner search
for $C$ from $[0, P-1]$ to $[p_\mathrm{first}(E), p_\mathrm{last}(E)]$, and
to $[p_\mathrm{first}(E,f), p_\mathrm{last}(E,f)]$ for the faces.
To compute these bounds it is necessary to always enter the \texttt{for}-loop
in Line~\ref{algline:ghostmatchfor}, even though we do not exercise this
in Algorithm~\ref{alg:ghost_match}.

\section{Numerical comparison of the ghost versions}

To verify that the additional complexity of implementing the top-down search
is worth the effort, we perform runtime tests of the different ghost methods.

We perform tests with hexahedral and tetrahedral elements, each time on a unit
cube geometry. For hexahedra the unit cube is modeled with a single tree and
for tetrahedra with six trees with a common diagonal as in
Figure~\ref{fig:sechstetra}.
For each element type we run two types of tests, one with a uniform mesh and one
with an adaptive mesh, where we use a regular refinement pattern, refining
every third element (in SFC order) recursively from level $\ell$ up to a level
$\ell + k$; see Figure~\ref{fig:ghost-comparetest}.

Since we are interested in comparing the algorithms and not in 
their particular extreme scaling behavior, we run the tests on 1024 MPI ranks
on JUQUEEN~\cite{Juqueen}. We refer to Chapter~\ref{ch:balance} for more
elaborate scaling tests of \texttt{Ghost} on significantly more ranks (up
to 458k).
For the tests in this section we use 64 compute nodes with 16 cores and
16 GB memory each. We use 1 rank per core, thus 16 MPI ranks per node.
We display our results in Table~\ref{tab:ghost-comparetest} showing runtime
results for uniform levels $\ell$ equal to $9$, $8$ and $4$, and adaptive levels 
$\ell$ equal to $8$, $7$, and $3$ (tetrahedra), respectively $4$ (hexahedra),
with $k = 2$.

As expected, the iterative versions of \texttt{Ghost} scale linearly with the
number of elements. The improved version of \texttt{Ghost}, however, scales
with the number of ghost elements, which grows less quickly compared
to the number of elements. 
From this we conclude that we indeed skip most of the elements that do not lie
on the boundary of a process's domain.
The improved version shows overall a significantly better performance and is up
to a factor of 23.7 faster (adaptive tetrahedra, level $8$) than the iterative
version. For smaller or degraded meshes where the number of ghosts is on the
same order as the number of leaf elements, the improved version shows no
disadvantage compared to the iterative version. This shows that we do not loose
runtime to the \texttt{Search} overhead, even if each element is a boundary
element. For small meshes all algorithms show negligible runtime on the order
of milliseconds.

We conclude that the ghost version with top-down search is the ideal choice among
the three version that we discuss. From now on, we use this algorithm for all
tests.

\begin{figure}
\center
\includegraphics[width=0.49\textwidth]{./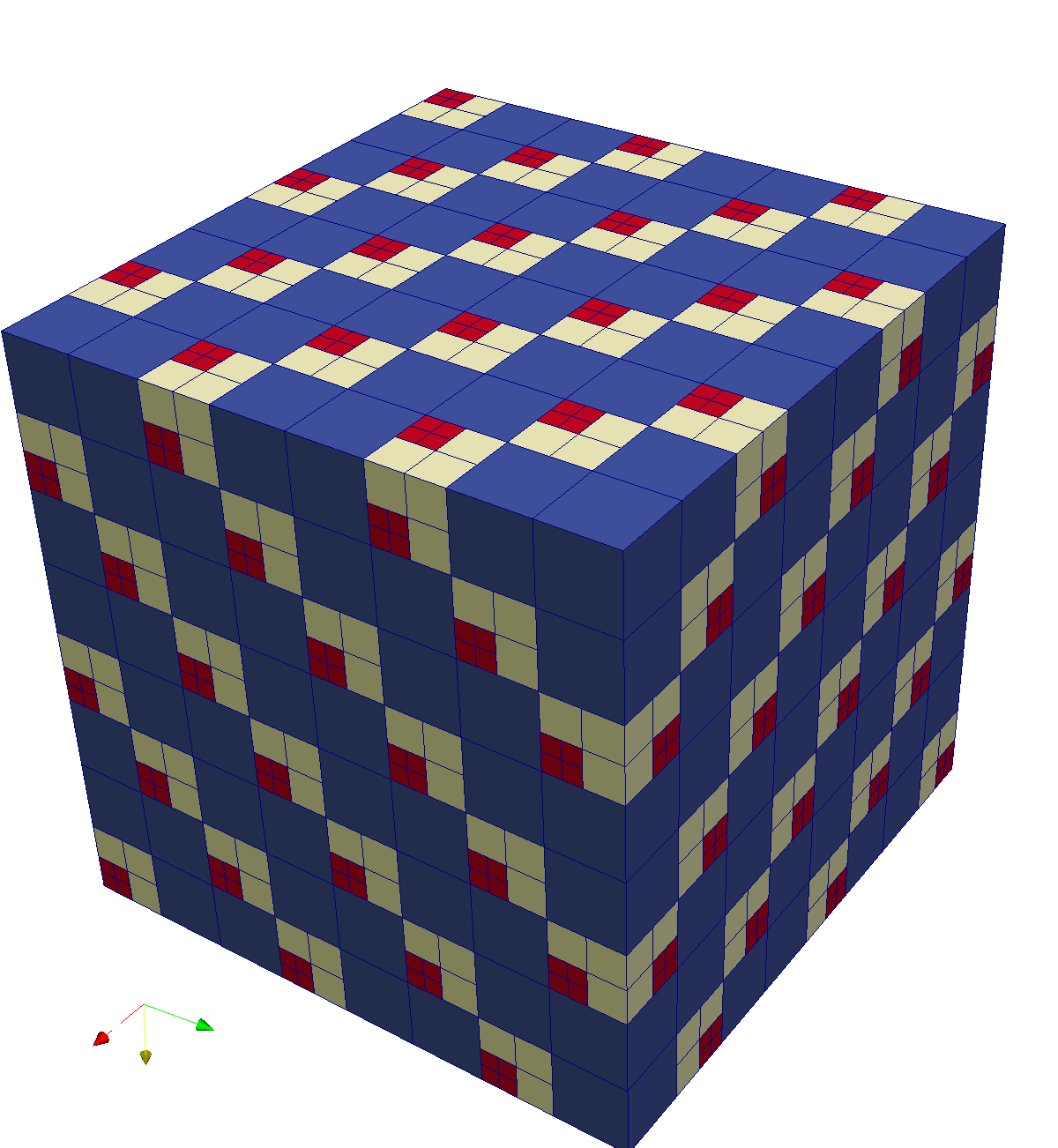}
\includegraphics[width=0.49\textwidth]{./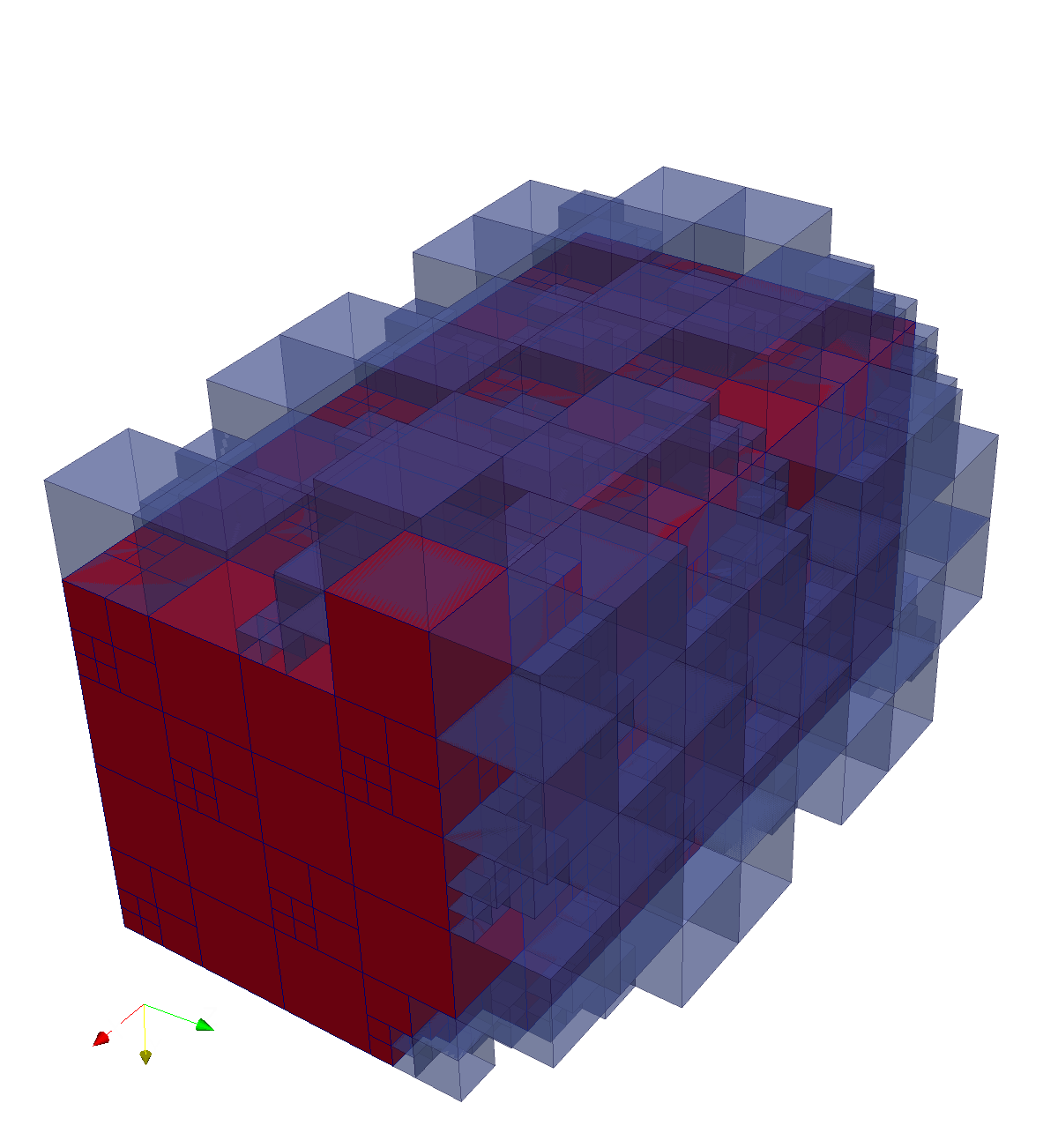}
\caption[Comparing the different implementations of \texttt{Ghost}.]
{We compare the different implementations of \texttt{Ghost} by
testing them on a unit cube geometry with 1024 MPI ranks.
Left: The adaptive mesh with minimum level $\ell = 3$ for hexahedra.
We refine every third element in SFC order and repeat the process once with
the refined elements. 
Right: For an adaptive computation with 4 MPI ranks, we show the local leaf
elements of the process with MPI rank 1 (red) and its ghost elements (blue, transparent). }
\figlabel{fig:ghost-comparetest}
\end{figure}

\begin{table}
\center
 \begin{tabular}{|c||r|r|r||r|r|r|}\hline
  \multicolumn{7}{|c|}{tetrahedra}\\ \hline
  & \multicolumn{3}{c||}{uniform}
  & \multicolumn{3}{c|}{adaptive}\\ \hhline{|=||===||===|}
  $\ell$ & 9 & 8 & 4 & 8--10 & 7--9 & 3--5 \\ \hline
  elements/proc & 786,432 & 98,304 & 24 &1,015,808 & 126,976 & 31 \\ \hline
  ghosts/proc   & 32,704  &  8,160 & 30 &   31,604 &   8,137 & 56 \\ \hhline{|=||===||===|}
  Ghost\_v1 [s]    & 172.3  &  21.64 & 7.99\e-3 & -&-&-\\ \hline
  Ghost\_v2 [s]    & 129.6  &  16.19 & 5.93\e-3 & 167.94 & 20.88 & 8.10\e-3 \\ \hline
  Ghost\_v3 [s]    &  7.41  &   1.75 & 5.01\e-3 &   7.08 &  1.69 & 8.12\e-3 \\ \hline
 \end{tabular}\\[1ex]

 \begin{tabular}{|c||r|r|r||r|r|r|}\hline
  \multicolumn{7}{|c|}{hexahedra}\\ \hline
  & \multicolumn{3}{c||}{uniform}
  & \multicolumn{3}{c|}{adaptive}\\ \hhline{|=||===||===|}
  $\ell$ & 9 & 8 & 4 & 8--10 & 7--9 & 4--6 \\ \hline
  elements/proc & 131,072 & 16,384 & 4 & 169,301 & 21,162 & 41\\ \hline
  ghosts/proc   & 8,192   &  2,048 & 8 &   7,681 &  1,913 & 30\\ \hhline{|=||===||===|}
  Ghost\_v1 [s]    & 29.51 & 3.742 & 2.87\e-3 & -&-&-\\ \hline
  Ghost\_v2 [s]    & 18.25 & 2.302 & 2.32\e-3 & 23.79 & 2.964 & 8.01\e-3 \\ \hline
  Ghost\_v3 [s]    &  3.14 & 0.711 & 2.90\e-3 &  2.81 & 0.649 & 8.12\e-3 \\ \hline
 \end{tabular}
\caption[Runtime tests for the three different \texttt{Ghost} algorithms.]
{Runtime tests for the three different \texttt{Ghost} algorithms that we describe
in this chapter.
We run the tests on JUQUEEN with 1024 MPI ranks and 16 MPI ranks per compute node.
The domain geometry is a unit cube modeled by one tree in the hexahedral case
and six trees in the tetrahedral case. With each element type we test a uniform level $\ell$
mesh and a mesh that adapts every third element of a uniform level $\ell$ mesh
up to level $\ell + 2$; see Figure~\ref{fig:ghost-comparetest}. 
The different \texttt{Ghost} methods are: \ghostb that works on balanced
forests only; \texttt{Ghost\_v2} that works also on unbalanced forests;
\texttt{Ghost\_v3} that utilizes \texttt{search} to improve the runtime.
Since the adaptive forests are not balanced, we do not test \ghostb in that
case. The table at the top shows the mesh sizes and runtimes for tetrahedra
while the bottom table shows the data for hexahedra. We observe that our new
\texttt{Ghost\_v3} is superior to the other versions by a factor of up to 23
and scales with the number of ghosts and not with the number of elements.
}
\figlabel{tab:ghost-comparetest}
\end{table}
 \chapter{2:1 Balance}
\label{ch:balance}

It is common for AMR applications, such as finite element and finite volume
solvers, to rely on a forest mesh that is 2:1 balanced, meaning that each element
only has neighbors whose levels differ by at most one ($\pm 1$) from the
element's level \cite{CohenKaberMuellerEtAl03, MuellerStiriba07,
LahnertBursteddeHolmEtAl16, BursteddeStadlerAlisicEtAl13}; see also
Definition~\ref{def:balance}.
This restricts the number and configurations of hanging nodes/edges/faces that 
can occur, simplifying the necessary interpolation schemes and reducing the
number of neighboring processes.

However, mesh refinement and coarsening in applications is usually driven
by some kind of error estimator and/or geometric constraints and such
adaptation rules may not produce balanced meshes on their own. It can become a
significant challenge to change the adaptation rule such that it produces
balanced meshes that still respect the desired constraints.

We thus aim to decouple the operation of balancing a forest from the adaptation
routine. An application can then call \texttt{Adapt} to modify the mesh and
optionally call the algorithm \texttt{Balance} afterwards to reestablish a
balanced mesh. In this section we discuss our implementation of
\texttt{Balance}.

Note that there are also applications that can handle arbitrarily hanging
nodes at elements with refinement level difference greater than one
\cite{SolinCervenyDolezel08, GuittetTheillardGibou15}. In general it is up to
an application whether to use \texttt{Balance} or not.

The algorithm \texttt{Balance} gets as input a forest that may be unbalanced
and modifies it by successively refining leaf elements such that it becomes
balanced.
\texttt{Balance} should not coarsen any elements in order to guarantee that an
application can keep its desired accuracy.

As with \texttt{Ghost}, we distinguish between corner-balance, edge-balance,
and face-balance, regarding the different possible neighbor connections.
As we mention before, we restrict ourselves to face-neighbors and thus we
consider face-balance here, sometimes also referred to as 1-balance
\cite{IsaacBursteddeGhattas12}.

In \texttt{Balance}, a leaf element with a large refinement level that is
surrounded by leaves of smaller refinement levels can trigger refinement of
leaves over large regions that may stretch across multiple process
boundaries; see Figure~\ref{fig:balance_ex1}. This is one of the reasons why
\texttt{Balance} was shown to be the most expensive high-level algorithm 
\cite{BursteddeWilcoxGhattas11}. The relatively high run time costs of
\texttt{Balance} have sparked efforts to optimize and speed up the algorithm
\cite{IsaacBursteddeGhattas12}.

In this thesis, we restrict ourselves to a straightforward implementation of
\texttt{Balance} via the existing algorithms \texttt{Adapt} and \texttt{Ghost}.
The idea is similar to the ripple algorithm
from~\cite{TuOHallaronGhattas05,TuOHallaron04}. We see the implementation that
we give here as a feasibility study of \texttt{Balance} for meshes with
arbitrary element types and do not claim to achieve an optimal runtime.
We thus also refer to our algorithm as \texttt{Ripple-balance}.
Implementing an optimized algorithm in the spirit
of~\cite{IsaacBursteddeGhattas12} remains a challenge for future work.

\begin{figure}
\center
\includegraphics[width=0.49\textwidth]{./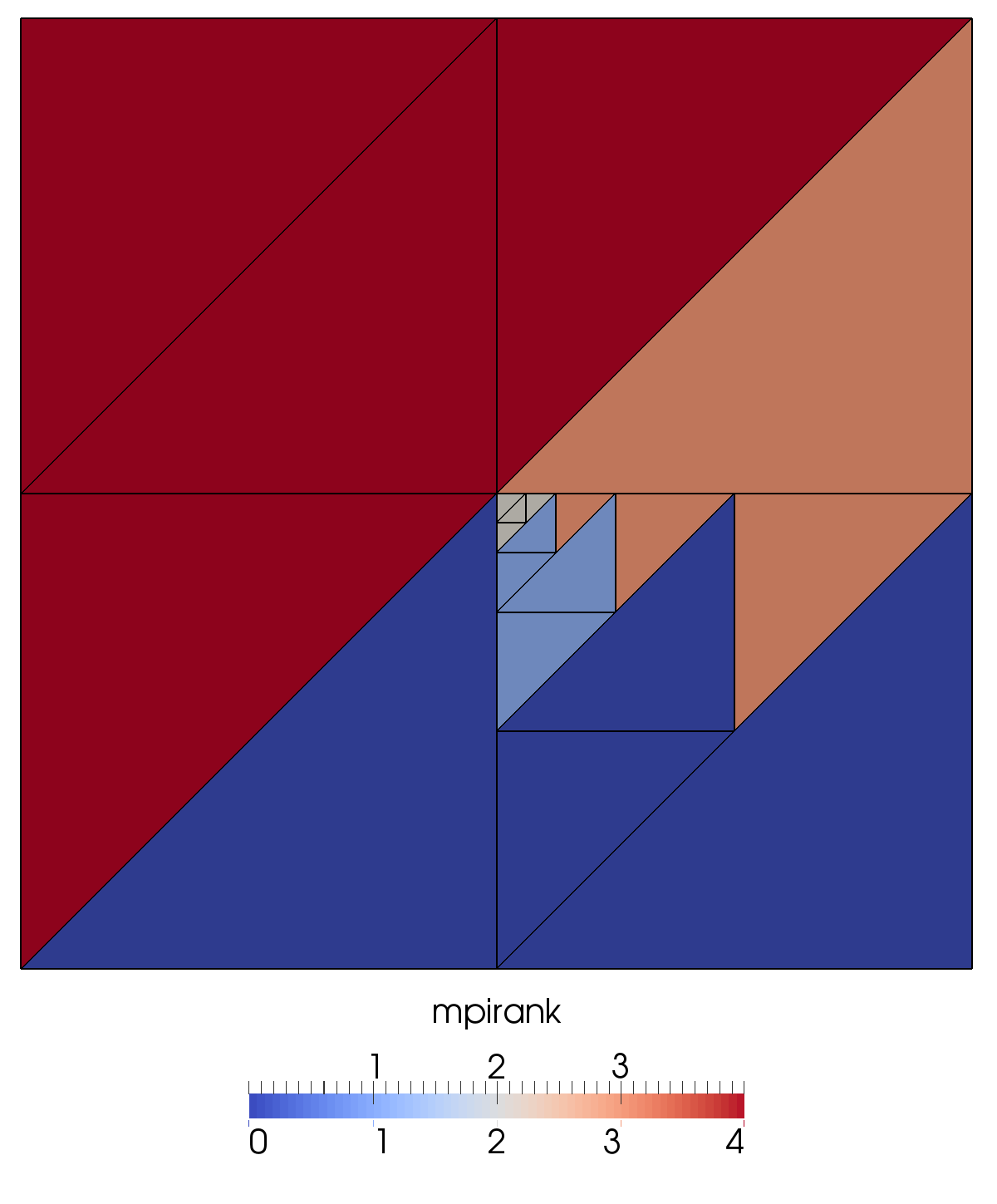}
\hfill
\includegraphics[width=0.49\textwidth]{./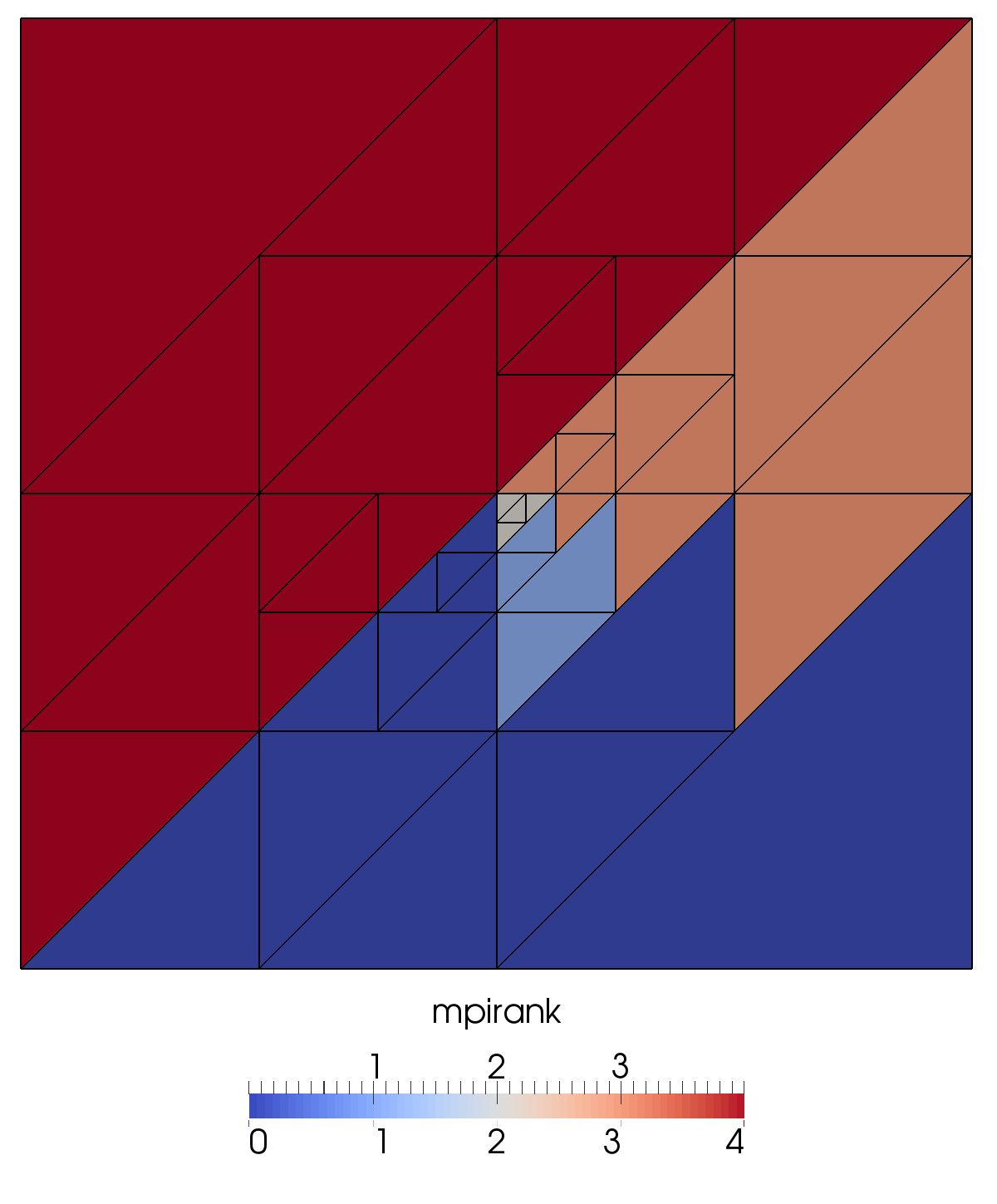}
\caption[An example for face-balance with a triangular forest and 5 MPI
ranks.]{An example for face-balance with a triangular forest and 5 MPI ranks.
Left: An unbalanced and fairly equally partitioned forest with two trees.
Right: After balancing the forest, each leaf is refined in such a way that no
  two face-neighboring leaf elements have a level difference of more than one.
  Note that the finest leaf elements reside on rank 2 and influence the
  refinement of leaves on ranks 0, 3, and 4. We also observe that the
  load-balance has been disturbed.
  To prevent this, we may add additional partition steps during and/or after
  \texttt{Balance}.}
\label{fig:balance_ex1}
\end{figure}

\section{Finding leaf descendants of an element}
\label{sec:leafdescex}

A key ingredient of our primitive---yet functional---version of \texttt{Balance}
is to identify those leaves that have neighbors of greater refinement level
than the leaf's level plus one.
We thus need to query for a leaf element $E$ in a forest $\forest F$ whether
there exists a face-neighbor leaf of $E$ in $\forest F$ with larger refinement level
than $\ell(E)+1$. To this end, we construct $E$'s face-neighbor elements of level
$\ell(E) + 1$ via the function \texttt{t8\_element\_half\_face\_neighbors} from
Section~\ref{sec:ghost-halfneighbors}.
For each of these half face-neighbors $E'$, we check whether there exists a
true descendant---i.e.\ a descendant that is not $E'$ itself---that is a local
leaf element or ghost element in $\forest F$.

In order to perform this check, we create the last descendant $D$ of $E'$ 
and search for an element $L\in \forest F.$\texttt{elements}$ \abst \cup \forest F.$\texttt{ghosts}
such that 
\begin{equation}
\label{eq:ELD}
\sfcf(E') < \sfcf(L) \leq \sfcf(D).
\end{equation}
Here, $\sfcf$ is the forest wide SFC index as in Section~\ref{sec:SFConforest}.
Such an $L$, if found, is a descendant of $E'$ 
because of the properties of $\sfcf$.
Also, $L$ is not $E'$ itself since then $\sfcf (E') = \sfcf (L) \nless
\sfcf(L)$. If no such $L$ exists, we know that $E'$ does not have a true
descendant in the leaves or ghosts.

We show the algorithm \texttt{t8\_forest\_leaf\_desc\_exists} in
Algorithm~\ref{alg:leafdesc}. To search for an $L$ that fulfills~\eqref{eq:ELD} 
in Lines~\ref{line:binsearch1} and \ref{line:binsearch2} we need to perform a
binary search for an element $D$ in a sorted (in SFC-order) array $A$
of $n$ elements, where we cannot guarantee that $D\in A$, but expect as
result the largest integer $i$ such that $\sfcf(A[i])\leq \sfcf(D)$.
If we assume that $\sfcf(A[0]) \leq \sfcf(D)$, this binary search is
possible: Choose bounds $l = 0, h = n - 1$ and a guess $g = (l + h + 1)/2$; if
$\sfcf(A[g]) > \sfcf(D)$ then set $h = g - 1$, else 
$\sfcf(A[g]) \leq \sfcf(D)$ and we set $l = g$. Start again with $g =
(l + h + 1)/2$ and iterate. We show this operation in the function
\texttt{binary\_search} in Algorithm~\ref{alg:leafdesc}.

If the assumption $\sfcf(A[0])\leq \sfcf(D)$ is not fulfilled, an $L$
that satisfies equation~\eqref{eq:ELD} does not exist. We can check this in
constant time and the search returns $L = A[0]$, for which the check
$\sfcf(E') < \sfcf(L) \leq \sfcf(D)$ from Line~\ref{line:compare1} or
Line~\ref{line:compare2} fails as expected.

\begin{algorithm}
 \caption{\texttt{t8\_forest\_leaf\_desc\_exists} (\texttt{Forest} $\forest F$, 
          \texttt{Element} $E'$)}
 \label{alg:leafdesc}
 \DontPrintSemicolon
 \algoresult{True if a leaf in $\forest F.$\texttt{elements} or $\forest F.$\texttt{ghosts}
                  exists that is a true descendant of $E'$. False otherwise.}\\[1ex]
 $D\gets$ \texttt{t8\_element\_last\_descendant} ($E'$)\;
 $L\gets$ \texttt{binary\_search} ($\forest F.$\texttt{elements}, $D$)
\Comment{Search in the local elements}\label{line:binsearch1}
  \If{$\sfcf (E') < \sfcf (L) \leq \sfcf (D)$\label{line:compare1}}
  {
    \Return \algotrue
  }
 $L\gets$ \texttt{binary\_search} ($\forest F.$\texttt{ghosts}, $D$)
\Comment{Search in the ghost elements}\label{line:binsearch2}
 \If{$\sfcf(E') < \sfcf(L) \leq \sfcf(D)$\label{line:compare2}}
  {
    \Return \algotrue
  }
 \Return \algofalse\\[1.5ex]
\setcounter{AlgoLine}{0}
\nonl \textbf{Function} \texttt{binary\_search} (\texttt{Array} $A$,
      Element $D$)\; 
\algoresult If $\sfcf(A[0]) \leq \sfcf(D)$, $A[i]$ for the
largest index $i$ such that $\sfcf(A[i]) \leq \sfcf(D)$, otherwise
$A[0]$.\\[1ex]

  \If{$\sfcf(A[0]) > \sfcf(D)$}{\Return $A[0]$\;}
$l \gets 0$\;
  $h \gets A.$\texttt{length} $-\ 1$\;
\While{$l < h$}
 {
  $g \gets \frac{l + h + 1}{2}$\;
  \leIf{$\sfcf(A[g]) \leq \sfcf(D)$}{$l = g$\;}
  {
    $h = g-1$
  }
 }
\Return $A[g]$\;
\end{algorithm}

\section{The \texttt{Ripple-balance} algorithm}

Our ripple version of \texttt{Balance} (Algorithm~\ref{alg:balance}) is an
iterative one. In each iteration, we construct a new forest $\forest F_{i+1}$ from the
current forest $\forest F_i$, starting with the original forest $\forest F_0$ that we want to
balance. In each iteration, we check for each leaf $E$ of $\forest F_i$ whether
there are face-neighbors of $E$ in the local leaves or ghosts of
$\forest F_i$ with a larger refinement level than $\ell(E) + 1$. If so, we
refine the element $E$ and add the children to the new forest $\forest
F_{i+1}$, otherwise, we add the element $E$ to $\forest F_{i+1}$.
For this check we use the function \texttt{t8\_element\_leaf\_desc\_exists}
that we describe in the previous Section. We repeat these refinement steps
until the forest mesh on each process does not change anymore.

We use the ghost layer of $\forest F_i$ to take into account that a local
leaf may need to be refined if a face-neighbor leaf on a neighboring process
has a larger refinement level. Thus, in each iteration, we call \texttt{Ghost}
for the newly constructed forest $\forest F_i$. Note that we need the unbalanced
version of \texttt{Ghost} here (\texttt{Ghost\_v2} or \texttt{Ghost\_v3}).

In Algorithm~\ref{alg:balance} we explicitly write down the element loop;
however, in the actual implementation we replace it by a call to \texttt{Adapt}
with the appropriate callback function.

\begin{proposition}
Algorithm~\ref{alg:balance} terminates and produces a balanced forest $\forest F^\ast$.
\end{proposition}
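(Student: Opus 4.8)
The plan is to prove two things separately: (1) that the iteration terminates after finitely many steps, and (2) that the forest $\forest F^\ast$ produced at termination is balanced in the sense of Definition~\ref{def:balance}. For termination, the key observation is that every step of the algorithm only ever \emph{refines} leaf elements and never coarsens any, and that no element is ever refined beyond the maximum refinement level $\mathcal L$ of the underlying refinement space (an element at level $\mathcal L$ has no children, so it is never refined). Hence the total number of elements in the forest is monotonically non-decreasing and bounded above by the number of level-$\mathcal L$ leaves of the (finite) refinement space — more precisely, by $\sum_{k} |\mathcal S_k^{\mathcal L}|$ over the trees. First I would argue that in each iteration $\forest F_{i+1} = \forest F_i$ on \emph{all} processes if and only if no process refines any element; and that as long as some process refines at least one element, the global element count strictly increases. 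Since the global element count is a non-negative integer bounded above, this can happen only finitely often, so the termination check (forest unchanged on every process) eventually succeeds. One subtlety to spell out: the termination condition must be evaluated globally (a collective reduction over all processes), since a process whose local leaves did not change may still need to refine in a later round because a \emph{neighboring} process refined near the shared boundary in the current round; I would note that Algorithm~\ref{alg:balance} recomputes \texttt{Ghost} each iteration precisely so that this cross-process information propagates.

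For correctness, I would argue by contradiction: suppose $\forest F^\ast$ is the forest at termination and that it is not balanced, so there is a pair of face-neighboring leaves $E, E'$ with $\ell(E') > \ell(E) + 1$. Consider the last iteration, in which $\forest F^\ast$ was the input forest (by termination, the output equals the input). The leaf $E$ is process-local to some process $p$. When the element loop processes $E$ on process $p$, it constructs via \texttt{t8\_element\_half\_face\_neighbors} the face-neighbors of $E$ at level $\ell(E)+1$ across each face of $E$; one of these, call it $\hat E$, is the ancestor-at-level-$(\ell(E)+1)$ of $E'$ lying across the shared face (here I would invoke the face-neighbor construction of Section~\ref{sec:ghost-halfneighbors}, including the tree-boundary case built from lower-dimensional faces, so that $\hat E$ is correctly identified whether $E$ and $E'$ lie in the same tree or in neighboring trees). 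Since $\ell(E') \geq \ell(E)+2 > \ell(\hat E)$, $E'$ is a \emph{true} descendant of $\hat E$. Moreover $E'$ is a leaf of $\forest F^\ast$, and because $E$ and $E'$ are face-neighbors, $E'$ is either a local leaf of $p$ or a ghost of $p$ (by the definition of ghost element and the fact that \texttt{Ghost} was called on the input forest of this iteration). Therefore $E'$ lies in $\forest F^\ast.$\texttt{elements} $\cup\ \forest F^\ast.$\texttt{ghosts} and satisfies $\sfcf(\hat E) < \sfcf(E') \leq \sfcf(D)$ where $D$ is the last descendant of $\hat E$ — by the SFC prefix/ordering properties of Theorem~\ref{thm:IndexProps}. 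Hence \texttt{t8\_forest\_leaf\_desc\_exists}$(\forest F^\ast, \hat E)$ returns true, so the callback refines $E$, contradicting the assumption that the forest was unchanged in this iteration.

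The main obstacle I anticipate is making the correctness argument airtight in the \emph{cross-process} case, i.e.\ ensuring that when $E$ is local to $p$ and $E'$ is a leaf local to a different process $q$, the element $E'$ really does appear in $p$'s ghost layer at the moment \texttt{Balance}'s element loop runs. This hinges on (a) \texttt{Ghost} being called on exactly the forest $\forest F^\ast$ that is being checked — which Algorithm~\ref{alg:balance} does at the start of each iteration — and (b) \texttt{Ghost\_v2}/\texttt{Ghost\_v3} being correct for \emph{unbalanced} forests, which is why the balanced-only \ghostb cannot be used here. A second, smaller obstacle is handling the boundary of the computational domain (faces of $E$ on the domain boundary, or on a tree face that has no neighbor): there \texttt{t8\_element\_half\_face\_neighbors} produces no neighbor, so such faces simply contribute nothing and the balance condition is vacuous across them — I would remark on this so the argument does not appear to overlook it. Finally I would note that \texttt{Balance} may disturb the load balance (as illustrated in Figure~\ref{fig:balance_ex1}), but this does not affect correctness of the statement; one may optionally interleave \texttt{Partition} calls without changing the termination or balance guarantees.
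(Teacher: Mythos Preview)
Your proof is correct and the correctness half (the contradiction argument showing $\forest F^\ast$ is balanced) matches the paper's reasoning closely, with the added benefit that you spell out why the offending neighbor $E'$ must appear in $p$'s ghost layer and why \texttt{t8\_forest\_leaf\_desc\_exists} detects it; the paper glosses over these implementation-level points.

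The termination argument, however, takes a different route. You bound the global element count by the size of a uniform level-$\mathcal L$ refinement and argue that each non-trivial iteration strictly increases this count. The paper instead proves the sharper invariant that the \emph{maximum} refinement level never grows: $m_i = m_0$ for all $i$. The key step is that an element $\hat E$ is only refined when some face-neighbor $E'$ already has level $\ell(E') > \ell(\hat E)+1$, so the children of $\hat E$ have level $\ell(\hat E)+1 < \ell(E') \leq m_{i-1}$. Both arguments are valid, but the paper's buys you a much tighter bound (the forest is bounded by a uniform level-$m_0$ refinement rather than level-$\mathcal L$) and does not rely on the existence of a finite global maximum level; yours is simpler and avoids the case analysis on how a leaf arose.
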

\begin{proof}
\newcommand\parent[1]{\hat #1}
The algorithm terminates, if on each process no leaf element is refined any
longer. We now show that we eventually reach this status.
Let $m_i$ be the maximum refinement level of all global leaf elements in $\forest F_i$
(across all processes).
We claim $m_i = m_0$. Each leaf $E$ in $\forest F_i$ is either a leaf element of
$\forest F_{i-1}$ or a child of a leaf element $\parent E$ in $\forest
F_{i-1}$. In the first case $\ell(E)\leq m_{i-1}$ by definition. Let us
consider the second case: The element $E$ of $\forest F_i$ is a child of $\parent
 E$ which is refined in iteration $i-1$. Hence, $\ell(E) = \ell(\parent E) + 1$.
Since $\parent E$ is refined in iteration $i-1$, there exists a face-neighbor $E'$ of
$\parent E$ in $\forest F_{i-1}$ with $\ell(E') > \ell(\parent E) + 1 = \ell(E)$.
Since $m_{i-1} \geq \ell(E')$ we obtain $m_{i-1} \geq m_i$ and since no leaf
is coarsened, we get the equality $m_{i-1} = m_i$, proving our claim.

Therefore, the maximum refinement level of $\forest F_i$ is bounded by $m_0$
and because no leaf is coarsened there must exist a final step $i^\ast$
in which no element is refined anymore. Hence, the algorithm
terminates after $i^\ast$ steps.

Let $\forest F^\ast = \forest F_{i^\ast}$ be the result of \texttt{Ripple-balance}.
Since in the last step no element is refined, $\forest F^\ast$ is an exact
copy of $\forest F_{i^\ast-1}$. Thus, for each leaf $E$ in $\forest F^\ast$
we know that there exists no face-neighbor $E'$ with level $\ell(E') > \ell(E) +
1$, since otherwise $E'$ would be refined in iteration $i^\ast$ and thus the
algorithm would continue with $i^\ast + 1$.  Suppose that a leaf $E$ exists
in $\forest F^\ast$ with a face-neighbor leaf $E'$ with level $\ell(E') <
\ell(E) - 1$, then $E'$ has a face-neighbor $E'' = E$ in $\forest F^\ast$ with
$\ell(E'') > \ell(E') + 1$, which is a contradiction.

Thus, $\forest F^\ast$ fulfills the balance condition.
\let\parent\undefined
\end{proof}

\begin{algorithm} %
\caption{\texttt{Ripple-balance} (\texttt{Forest} $\forest F_0$)}
\label{alg:balance}
\DontPrintSemicolon
\algoresult{A new forest $\forest F^\ast$ consisting of (possibly) refined elements of
$\forest F_0$, such that $\forest F^\ast$ fulfills the face-balance condition.}\;
\texttt{done} $\gets 0$\;
$i\gets 0$\;
  \While{\upshape\textbf{not} \texttt{done}}
{
  \texttt{done} $\gets 1$ 
  \Comment{We are done if no element has to be refined any more}
  $\forest F_i\gets$ \texttt{Partition} ($\forest F_i$)\Comment{Partition the forest (optional)}\label{line:balance_partition}
  \texttt{Ghost} ($\forest F_i$)\Comment{Create the ghost layer}
  $\forest F_{i+1} \gets \forest F_i$\;
    \algofor{$E\in \forest F_i.$\texttt{elements}}
    {
      \texttt{refine\_flag} $\gets 0$\;
      \algofor{$0\leq f < $\texttt{t8\_element\_num\_faces} ($E$)}
      {
        $E'[]\gets $\texttt{t8\_forest\_half\_face\_neighbors} ($\forest F_i, E, f$)\;
        \algofor{$0\leq i < $\texttt{t8\_element\_num\_face\_children} ($E, f$)}
        {
          \algoif {\texttt{t8\_forest\_leaf\_desc\_exists} ($\forest F_i, E'[i]$)}
          {
            \texttt{refine\_flag} $\gets 1$ 
                                           \Comment{Mark $E$ for refinement}
            \textbf{goto} \ref{line:refflag}\Comment{No need to check the remaining neighbors}
          }
        }
      }
      \algoifcom{\IfComment{Refine $E$ if necessary}}{\texttt{refine\_flag}
\label{line:refflag}
}
      {
        $\forest F_{i+1}$.\texttt{elements} $\gets \forest F_{i+1}$.\texttt{elements}$\ohne\set{E} \cup$
        \texttt{t8\_element\_children} ($E$)\;
        \texttt{done} $\gets 0$ \;
      }
  }
  $i\gets i+1$\;
  \texttt{MPI\_Allreduce} (\texttt{done}, MPI\_LAND)\Comment{Logical \lq
  and\rq\ of all values of \texttt{done}}\label{line:allreduce} 
  \Comment{on the different MPI ranks}
} 
\Return $\forest F_i$\;
\end{algorithm}

\begin{remark}
As we observe in Figure~\ref{fig:balance_ex1}, repartitioning of the forest may
be necessary after \texttt{Ripple-balance}. In order to prevent the algorithm to produce
largely imbalanced loads on the different ranks, we may also repartition each
intermediate forest before we start the next iteration as we do in
Line~\ref{line:balance_partition} of the algorithm.
The resulting forest $F^\ast$ is then partitioned as a consequence.
\end{remark}
\begin{remark}
Even though we might not refine local elements of a process $p$ in one iteration,
thus keeping the variable \texttt{done} set to \texttt{true}, changes of the forest
in neighboring processes may render it necessary that we need to refine elements
on $p$ in later iterations. For this reason we need to compute the logical 'and'
  of all \texttt{done} values on all processes, hence the \texttt{MPI\_Allreduce}
  call in Line~\ref{line:allreduce}.
  Figure~\ref{fig:balance_ex1} shows an example
for this situation. Here, the elements on process $p=4$ (in dark red) do not
change in the first two iterations of \texttt{Ripple-balance}, but they are refined
multiple times eventually.
\end{remark}

\section{Numerical results}

In this section, we present numerical results for the \texttt{Ghost} and
\texttt{Ripple-balance} routines. All results are obtained with the
\texttt{t8\_time\_forest\_partition} example of \tetcode version~0.3.
We perform the tests on the JUQUEEN supercomputer \cite{Juqueen} and use 
16 MPI ranks per compute node throughout.

\subsection{The test case}
\label{sec:testcase}

In the test we use a similar setting to the test in
Section~\ref{sec:cmesh_example_forest} for coarse mesh partitioning.
We start with a uniform forest of level $\ell$ and refine it in a band along
an interface defined by a plane to level $\ell + k$.
We then call \texttt{Ripple-balance} to establish a 2:1 balance among the elements 
and we create a layer of ghost elements with \texttt{Ghost} afterwards.
The interface moves through the domain in time in direction of the plane's normal vector.
In each time step we adapt the mesh, such that we coarsen elements outside
of the band to level $\ell$ and refine within the band to level $\ell + k$. We
then repeat the \texttt{Ripple-balance} and \texttt{Ghost} calls.
As opposed to the test in Section~\ref{sec:cmesh_example_forest}, we take the
unit cube as our coarse mesh geometry. We run the test once with a hexahedral
mesh consisting of one tree and once with a tetrahedral mesh of six trees
forming a unit cube as in Figure~\ref{fig:sechstetra} in
Section~\ref{sec:tetsfc_beyrule}.

We choose the normal vector 
$\frac{3}{2}\begin{pmatrix}1, &1,& \frac{1}{2}\end{pmatrix}^t$, and we choose $\frac{1}{4}$
as width of our refinement band. We move the refinement band with speed
$v$ and scale the time step $\Delta t$ with the refinement level as 
\begin{equation}
\label{eq:balance-deltat}
 \Delta t (\ell) = \frac{C}{2^\ell v},
\end{equation}
$C$ being the CFL-number. It is a measurement for the
width of the band of level $\ell$ elements that will be refined to level $k$ in
the next time step. We set $C=0.8$ and choose $v$ such that $\frac{1}{v} = 0.64$.
We start the band at position $x_0 (\ell) =0.56 - 2.5\Delta t(\ell)$
and measure up to 5 time steps.

Thus, for level $\ell$ and band width $k$, we use the program call
\texttt{t8\_time\_forest\_partition -c MESH -n1 -l$\ell$ -r$k$ -x $x_0(\ell)$
-X $x_0(\ell) + 0.25$ -C 0.8 -T $6\Delta t(\ell)$ -gbo}.
Here, MESH stands for a file storing the coarse mesh, i.e.\ either the tetrahedralized unit cube or
the unit cube of one hexahedron tree. These coarse meshes can also be generated with the 
\texttt{t8\_cmesh\_new\_hypercube} function of \tetcode.
The settings \texttt{-g} and \texttt{-b} tell the program to construct the
ghost layer, and to balance the forest mesh after adaptation.
The \texttt{-o} setting disables output of visualization files.
We refer to Figure~\ref{fig:baghoex-1} for an illustration of the setting.

\begin{figure}
\center
\includegraphics[width=0.45\textwidth]{./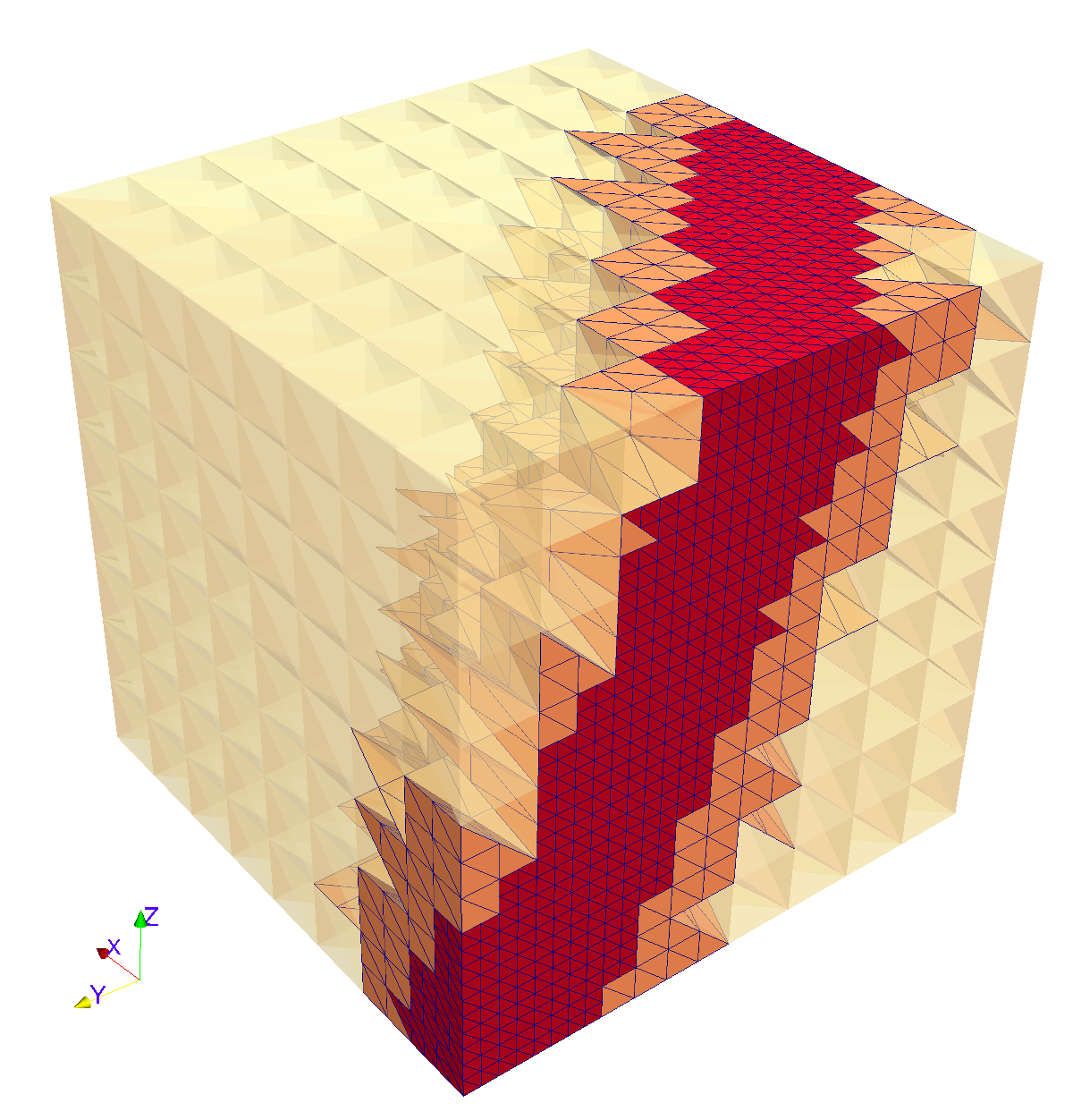}\hfill
\includegraphics[width=0.45\textwidth]{./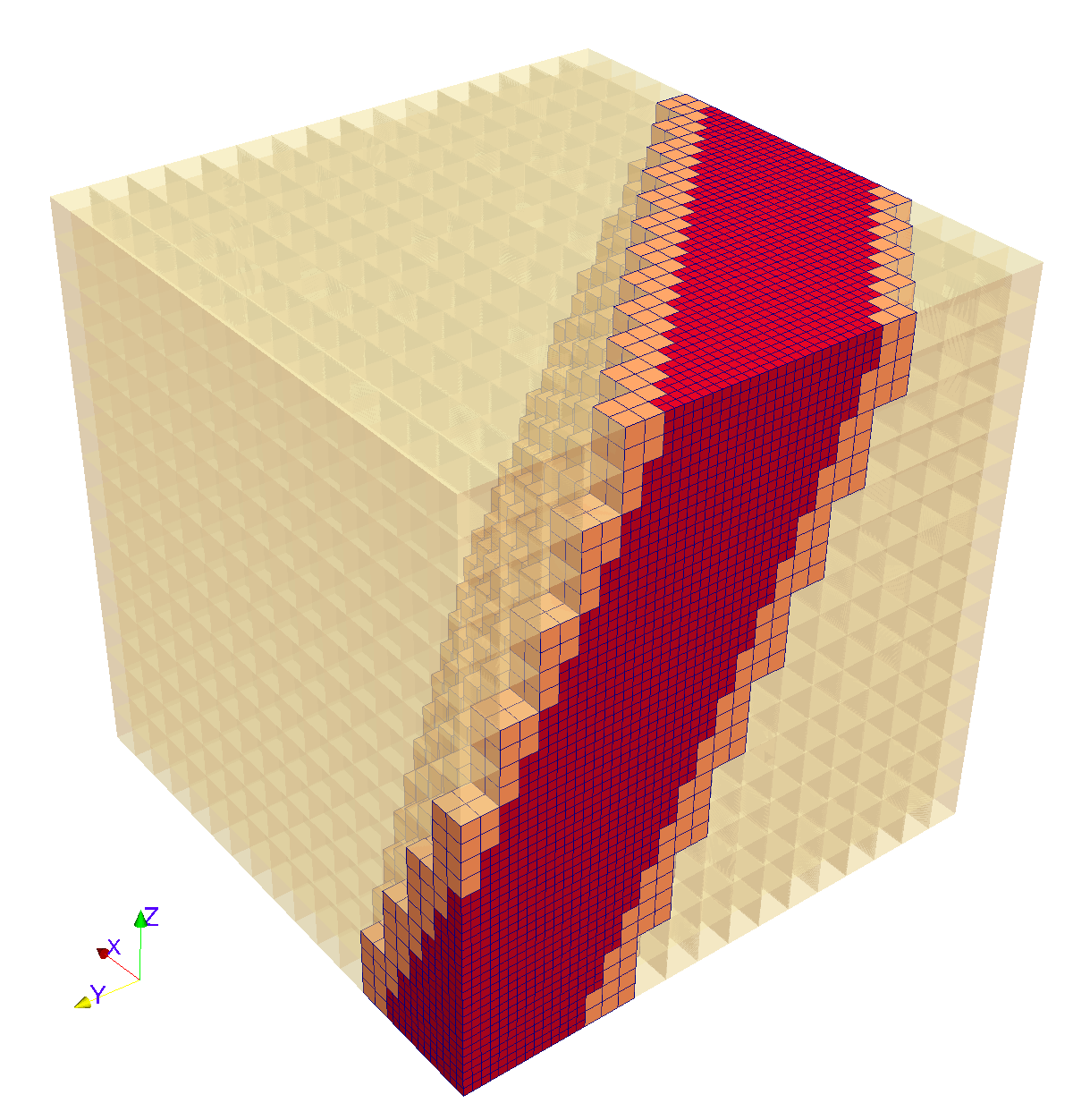}
\caption
  [Testing \texttt{Ghost} and \texttt{Ripple-balance} on a unit cube geometry.]
  {
  We test \texttt{Ghost} and \texttt{Ripple-balance} on a unit cube geometry
  consisting of six tetrahedral trees (left) or one hexahedral tree (right).
  Starting with a uniform level $\ell$, we refine the forest in a band around a
  plane to level $\ell + k$. We then balance the forest and create the ghost
  layer. In the next time step, the band moves in the direction of the plane's
  normal vector and we repeat the steps, coarsening previously fine forest
  elements if they now reside outside of the band. We show the forest after
  \texttt{Ripple-balance} at time step
  $t = 2\Delta t(\ell)$ for two different configurations.
  Left: Tetrahedral elements with $\ell = 3$, $k = 2$. In total
  we have 56,566 tetrahedral elements.  Right: Hexahedral elements with $\ell =
  4$, $k = 2$, summing up to 78,100 hexahedral elements in total. The color
  represents the
  refinement level. We draw level $\ell$ elements opaque.}
\figlabel{fig:baghoex-1}
\end{figure}

\subsection{Strong scaling}

We run a strong scaling test with tetrahedral elements and
refinement parameters $\ell = 8$, $k = 2$ on 8,192 up to 131,072 MPI ranks,
increasing the process count by a factor of 2 in each step. We list the
runtimes at time $t = 4\Delta t$ for \texttt{Ghost} and \texttt{Ripple-balance} in
Table~\ref{tab:baghoex-1-strong}, and plot them together with
\texttt{Partition} in Figure~\ref{fig:baghoex-1-strong}.

Since the runtime of \texttt{Ripple-balance} depends on the number of
process-local leaf elements, we expect it to drop by a factor of 2 if we double
the number of processes and thus divide the number of elements per process in
two. In particular, consider two different runs with process counts $P_1$ and
$P_2$, local element counts $E_1$ and $E_2$ and runtimes $T_1$ and $T_2$. We
compute the parallel efficiency $e$ of the run with $P_2$ processes in relation
to the $P_1$ run as the fraction
\begin{equation}
\label{eq:effbal}
 e_\textrm{Ripple-balance} = \frac{T_1 E_2}{T_2 E_1}.
\end{equation}

As wee see in Table~\ref{tab:ghost-comparetest} in the previous chapter,
the runtime of \texttt{Ghost} depends linearly on the number of ghost elements
per process. The number of ghosts is proportional to the surface area of a
process's partition and thus ideally scales with $\mathcal
O((N/P)^\frac{2}{3})$, with $N$ the global number of elements
\cite{IsaacBursteddeWilcoxEtAl15}. Consider two runs with $P_1$ and $P_2$
processes as above and let $G_1$ and $G_2$ denote the numbers of ghost elements
per process, then the parallel efficiency of the second run in relation to the
first run is
\begin{equation}
\label{eq:effgho}
 e_\mathrm{Ghost} = \frac{T_1 G_2}{T_2 G_1}.
\end{equation}

We achieve ideal strong scaling efficiency for \texttt{Ghost} and even more
than ideal efficiency for \texttt{Ripple-balance}, which hints at the runtime of
\texttt{Ripple-balance} being slightly worse than $\mathcal O (N)$. We also observe
that this basic variant of \texttt{Balance} is indeed by far the slowest
of the AMR algorithms as we already hinted above.

\begin{table}
\center
\begin{tabular}{|r|r|r|r|r|r|r|}\hline
\multicolumn{7}{|c|}{Tetrahedral case with $\ell = 8$, $k = 2$, $C=0.8$ at $t=4\Delta t$}\\ \hline
   &              &            & \multicolumn{2}{c|}{\texttt{Ripple-balance}} & 
\multicolumn{2}{c|}{\texttt{Ghost}} \\ 
 $P$ & $E/P$ & $G/P$ & Time [s] & Par.\ Eff.\ & Time [s] & Par.\ Eff.\ \\ \hline
 8,192  & 234,178 & 17,946 & 687.0 & 100.0\%  & 3.25  & 100.0\% \\   
16,384  & 117,089 & 11,311 & 336.2 & 102.1\%  & 2.12  &  96.6\% \\  
32,768  &  58,545 &  7,184 & 161.2 & 106.5\%  & 1.27  & 102.4\% \\  
65,536  &  29,272 &  4,560 &  78.3 & 109.6\%  & 0.79  & 104.5\% \\  
131,072 &  14,636 &  2,859 &  37.7 & 113.8\%  & 0.52  &  99.5\% \\  \hline
\end{tabular}
\caption[Strong scaling with tetrahedral elements.] {
The results for strong scaling of \texttt{Ripple-balance} and \texttt{Ghost}
with tetrahedral elements. The problem parameters are $\ell = 8$, $k=2$,
and $C=0.8$ with $\Delta t$ according to \eqref{eq:balance-deltat}. 
We show the runtimes of time step $t=4\Delta t$.
After \texttt{Ripple-balance} the mesh consists of approximately 1.91\e9 Tetrahedra.
In addition to the runtimes, we show the number of elements per process, $E/P$,
and ghosts per process, $G/P$. We also compute the parallel efficiency of
\texttt{Ripple-balance} and \texttt{Ghost} according to \eqref{eq:effbal} and
\eqref{eq:effgho} in reference to the run with 8,192 processes.
We observe a more than ideal scaling for \texttt{Ripple-balance} and a nearly ideal
  scaling for \texttt{Ghost}.  See also Figure~\ref{fig:baghoex-1-strong} for a
  plot of these runtimes.
} 
\figlabel{tab:baghoex-1-strong}
\end{table}

\begin{figure}
\center
\includegraphics[width=0.49\textwidth]{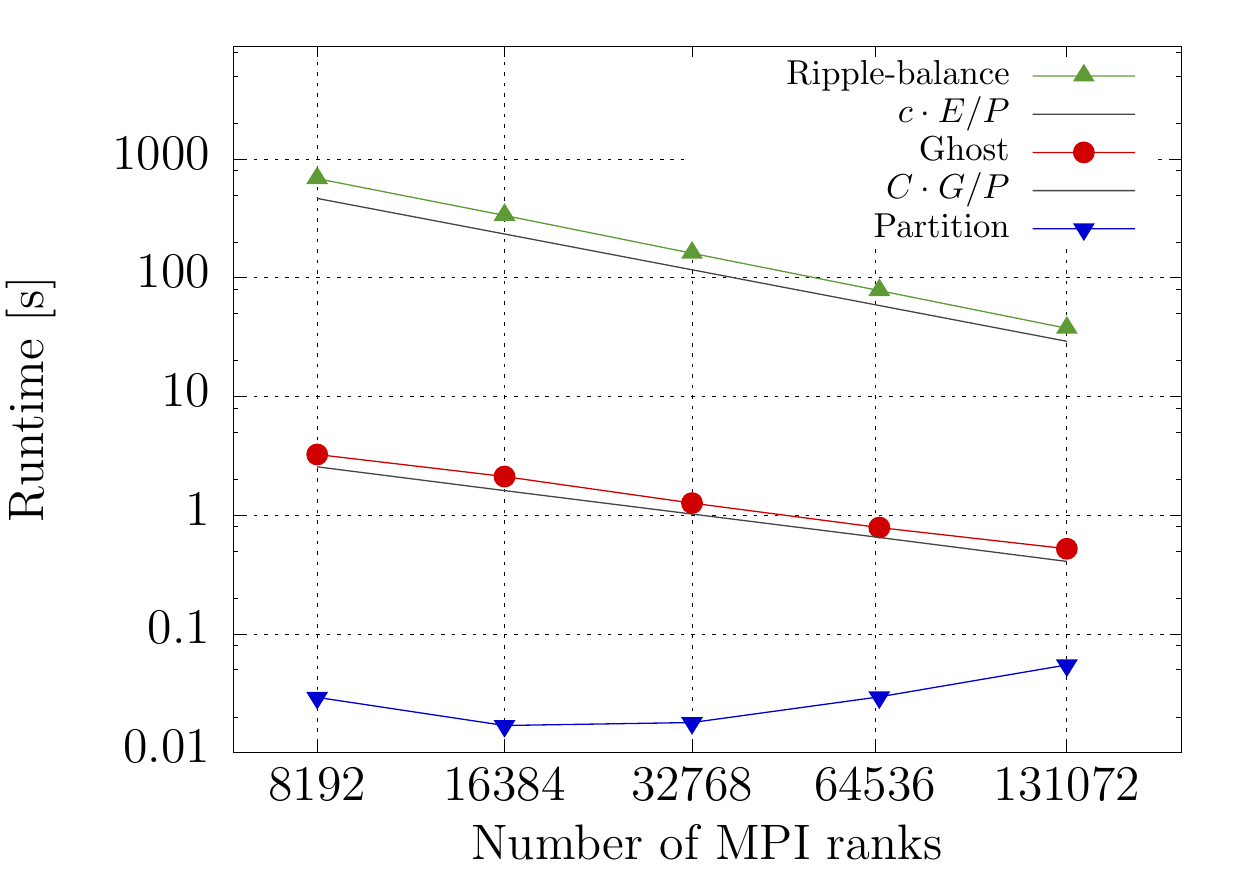}
\caption[Strong scaling with tetrahedral elements.]
{Strong scaling with tetrahedral elements. We show the 
runtimes of \texttt{Ripple-balance}, \texttt{Ghost}, and \texttt{Partition} for
  the test case from Section~\ref{sec:testcase} with $\ell = 8$, $k = 2$ at
  time step $t = 4\Delta t$.  The forest mesh consists of approximately 1.9\e9
  tetrahedra.
We use 8,192 up to 131,072 processes on JUQUEEN with 16 processes per compute
node. Ideally, \texttt{Ripple-balance} scales with the number of elements per process,
$N/P$, and \texttt{Ghost} with the number of ghost elements per process, $G/P$.
We show these measures scaled by a constant in black lines.
As we observe in the plot and in Table~\ref{tab:baghoex-1-strong}, we achieve
perfect scaling for \texttt{Ripple-balance} and nearly perfect scaling for
\texttt{Ghost}.
  The runtime of \texttt{Partition} is below 0.1 seconds even for
the largest process count.
  }
\figlabel{fig:baghoex-1-strong}
\end{figure}

\subsection{Weak scaling}
For weak scaling we increase the global number of elements while also increasing 
the process count, keeping the local number of elements nearly constant.
Since with each refinement level $\ell$ the number of global elements grows by a factor
of 8, we multiply the process count with 8 as well.
We test the following configurations:
\begin{itemize}
  \item Tetrahedral elements with 8,192 processes, 65,536 processes, and 458,752 processes,
  with refinement levels $\ell = 8$, $\ell = 9$, $\ell = 10$. This amounts to
  about 235k elements per process.
  Thus, the largest run has about $107.8\e9$ elements.
  \item Tetrahedral elements with 2,048 processes, 16,384 processes, and
131,072 processes, with refinement levels $\ell = 8$, $\ell = 9$, $\ell = 10$.
Here we have about 155k elements per process, summing up to
 $20.3\e9$ elements on 131,072 processes.
  \item Hexahedral elements with the same process counts and level increased by one ($162\e9$ elements in total).
\end{itemize}

Note that 458,752 is actually 7 times 65,536. We choose it since it is the
maximum possible process count on JUQUEEN with 16 processes per node, using all
28,672 compute nodes. The number of elements per process is thus about 14\%
greater than on the other process counts in the configuration. However,
\eqref{eq:effbal} and \eqref{eq:effgho} still apply for computing the parallel
efficiency.

The largest test case that we run is for hexahedra on 458,752 process with $\ell = 11$ amounting
to 162\e9 elements.

\begin{table}
\center
\begin{tabular}{|r|r|r|r|r|r|r|r|}\hline
\multicolumn{8}{|c|}{Tetrahedral case with $k = 2$, $C=0.8$ at $t=4\Delta t$}\\ \hline
  &   &              &            & \multicolumn{2}{c|}{\texttt{Ripple-balance}} & 
\multicolumn{2}{c|}{\texttt{Ghost}} \\ 
  $P$ & $\ell$ & $E/P$ & $G/P$ & Time [s] & Par.\ Eff.\ & Time [s] & Par.\ Eff.\ \\ \hline
 8,192  & 8 & 234,178 & 17,946 & 687.0 & 100.0\%  & 3.25  & 100.0\% \\   
65,536  & 9 & 233,512 & 18,282 & 732.5 &  93.5\%  & 3.76  &  88.2\% \\  
458,752 &10 & 266,494 & 20,252 & 913.5 &  85.5\%  & 3.79  &  96.8\% \\  \hhline{|========|}
  2,048 & 7 & 117,630 & 10,999 & 305.5 & 100.0\% & 1.99 & 100.0\% \\  
 16,384 & 8 & 117,089 & 11,311 & 336.2 &  90.4\% & 2.12 &  96.5\% \\
131,072 & 9 & 116,756 & 11,478 & 360.8 &  84.0\% & 2.18 &  95.2\% \\ \hline
\end{tabular}\\[2ex]
\begin{tabular}{|r|r|r|r|r|r|r|r|}\hline
\multicolumn{8}{|c|}{Hexahedral case with $k = 2$, $C=0.8$ at $t=2\Delta t$}\\ \hline
  &   &              &            & \multicolumn{2}{c|}{\texttt{Ripple-balance}} & 
\multicolumn{2}{c|}{\texttt{Ghost}} \\ 
  $P$ & $\ell$ & $E/P$ & $G/P$ & Time [s] & Par.\ Eff.\ & Time [s] & Par.\ Eff.\ \\ \hline
 8,192  & 9 & 309,877 & 34,600 & 268.5 & 100.0\% & 6.79 & 100.0\% \\   
65,536  &10 & 310,163 & 35,136 & 272.8 &  98.5\% & 6.85 & 100.7\% \\  
458,752 &11 & 354,746 & 38,833 & 313.0 &  98.2\% & 7.86 &  96.9\% \\  \hhline{|========|}
  2,048  & 8 & 156,178 & 21,536 & 131.4 & 100.0\% & 4.18 & 100.0\% \\  
 16,384  & 9 & 155,702 & 22,036 & 132.7 &  98.7\% & 4.25 & 100.6\% \\
131,072  &10 & 155,460 & 22,284 & 134.4 &  97.2\% & 4.36 &  98.9\% \\ \hline
\end{tabular}
  \caption[Weak scaling with tetrahedral and hexahedral elements.]
{Weak scaling for \texttt{Ripple-balance} and \texttt{Ghost} with tetrahedral (top) and
  hexahedral (bottom) elements. We increase the level by one and multiply the process count
  by eight to maintain the same number of local elements per process. 
 Notice that the highest process count of 458,752 is only seven times 65,536
resulting in $\sim14\%$ more local elements.
  For hexahedra \texttt{Ripple-balance} has an overall better performance while \texttt{Ghost} performs
  better on tetrahedra, which we expect due to the lower number of faces per
element. Both algorithms show good scaling with efficiencies greater than 85\%
(tetrahedra) and
  96.9\% (hexahedra). See also Figure~\ref{fig:baghoex-1-weak}.
  The maximum global number of elements is $107.8\e9$ with tetrahedra
  and $162\e9$ with hexahedra.}
\figlabel{tab:baghoex-1-weak}
\end{table}

\begin{figure}
  \center
\includegraphics[width=0.49\textwidth]{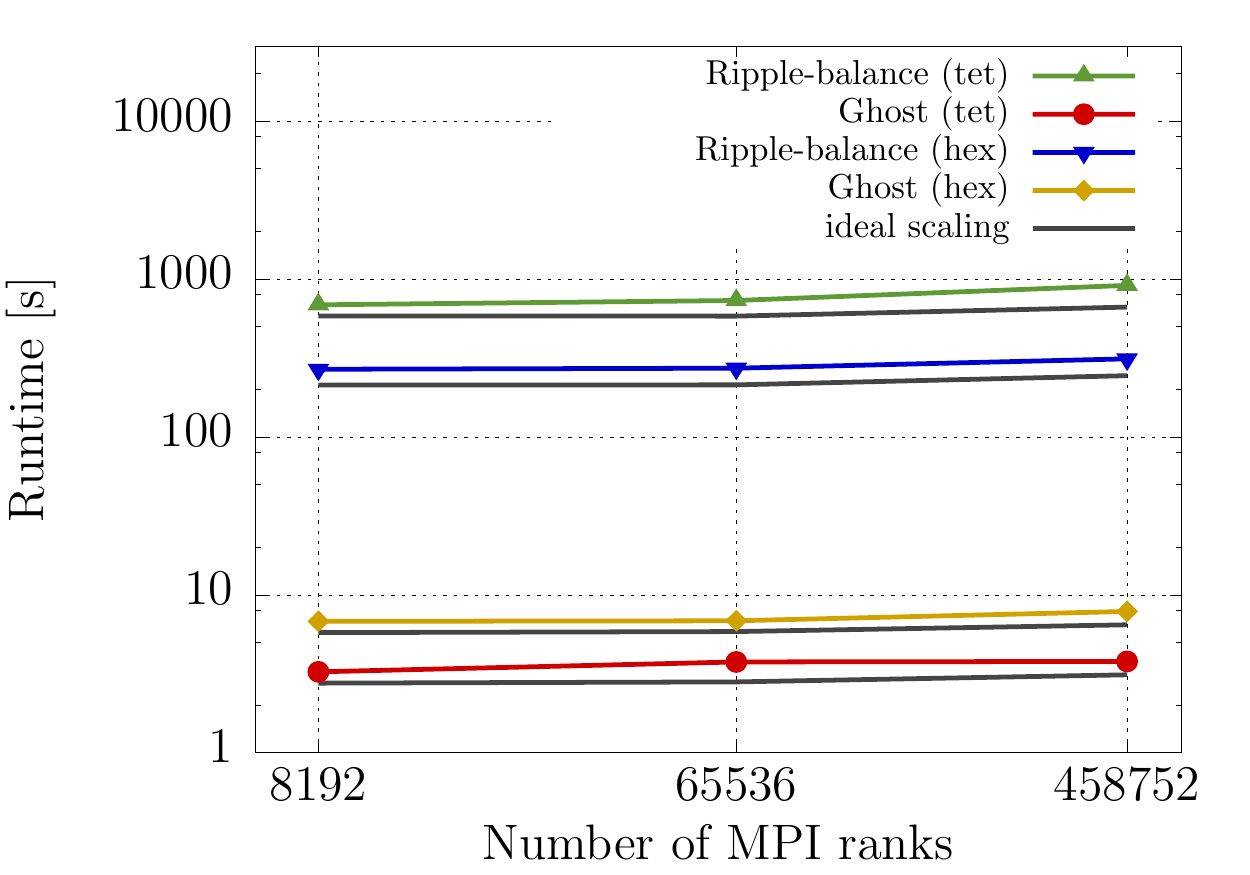}
  \includegraphics[width=0.49\textwidth]{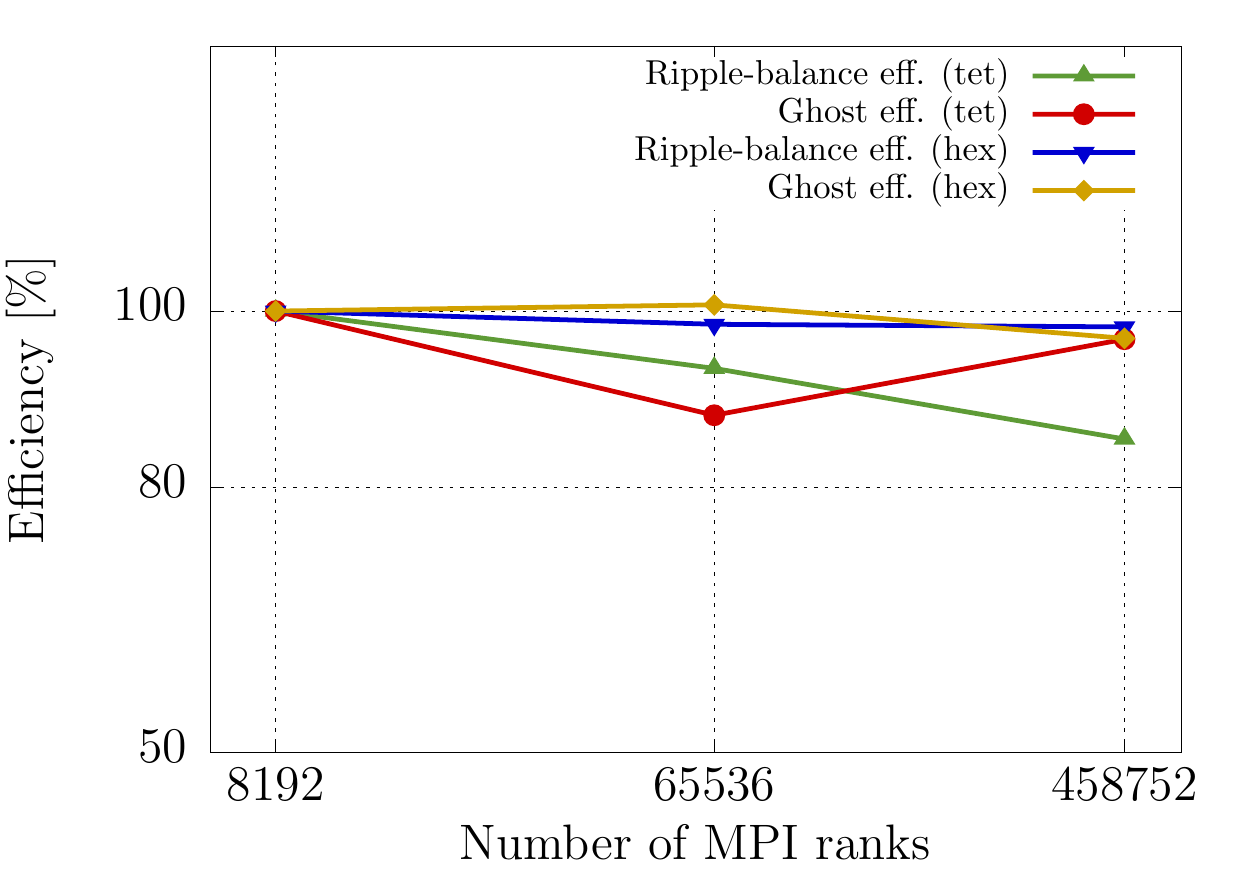}
  \caption[Weak scaling with tetrahedral and hexahedral elements.]
 {
  Weak scaling results for tetrahedra with refinement levels 8,
  9, and 10, and for hexahedra with refinement levels 9, 10, and 11. 
  This amounts to about 233k elements per process for tetrahedra
  and 310k elements per process for hexahedra.
  This number increases for the 458,752 process runs, since 458,752
  is only seven times 65,536, while we increase the number of mesh elements
  by the factor 8.
  On the left-hand side we plot the runtimes of \texttt{Ghost} and
  \texttt{Ripple-balance} with the ideal scaling in black. For
  \texttt{Ripple-balance} the ideal scaling line
  is based on the number of elements per process while for \texttt{Ghost} it
  is based on the number of ghost elements per process.
  On the right-hand side we plot the parallel efficiency in $\%$.
  We display all values in Table~\ref{tab:baghoex-1-weak}}
  \figlabel{fig:baghoex-1-weak}
\end{figure}

We show the results in Table~\ref{tab:baghoex-1-weak} and Figure~\ref{fig:baghoex-1-weak}.
We notice that \texttt{Ghost} for tetrahedra is faster than \texttt{Ghost} for Hexahedra, which
we explain by the smaller number of ghosts due to less faces per element.
\texttt{Ripple-balance} on the other hand is slower for tetrahedra than for hexahedra.
The main part of \texttt{Ripple-balance} is a loop over all local elements.
Thus, the overall slower performance of the tetrahedral Morton code compared to
the cubical Morton code is probably the reason for this difference in runtime.
We observe this behavior also in the weak parallel efficiency of
\texttt{Ripple-balance} for tetrahedra which drops to 85.5\% compared to the 98.2\%
efficiency for hexahedra. We even observe a strong scaling efficiency of more
than 100\% for tetrahedral \texttt{Ripple-balance}, which means that for tetrahedra
the algorithm's runtime is not perfectly linear in the number of elements per
process. This is also a hint that there is further potential to optimize 
\texttt{Ripple-balance}, which we expected, since there exist more 
efficient algorithms for 2:1 balancing on hexahedra~\cite{IsaacBursteddeGhattas12}.

For hexahedral \texttt{Ripple-balance}, and for hexahedral and tetrahedral
\texttt{Ghost}, however, we observe excellent strong and weak scaling with
efficiencies in the order of 95\%.
 \chapter{A Numerical Application}
\label{ch:app}

In this chapter we discuss how an application could use the AMR
routines described in this thesis. We implement a finite volume (FV)
solver for the advection equation and discuss important data
handling algorithms, such as communicating ghost data and interpolating
data after changing the mesh.

We show that the solver works with different element types,
in particular triangles and tetrahedra with the TM index described in
Chapter~\ref{ch:tetSFC} and quadrilaterals and hexahedra with the Morton index
using the \pforest implementation~\cite{Burstedde10a}.
We also show runs on hybrid meshes consisting of triangles and quadrilaterals
in 2D and of hexahedra, tetrahedra and prisms in 3D. For the low-level
implementation of the prism elements we use the work from~\cite{Knapp17}, which
models a prism as the cross product of a line with Morton index and a triangle
with TM-index.

\section{The advection equation}

We consider the $d$-dimensional advection equation. It describes how a quantity
$\phi$ is advected with a given flow $u$ over time.
For a compact domain $\Omega\subset \IR^d$ and a time-dependent flow function
$u\colon \Omega\times\IR_{\geq0}\rightarrow\IR^d$, we are interested in the
solution $\phi\colon\Omega\times\IR_{\geq 0}\rightarrow\IR$ of the PDE

\begin{equation}
  \label{eq:advectioneq}
 \frac{\partial{\phi}}{\partial t} + \nabla\cdot (\phi u) = 0
\end{equation}
with initial condition
\begin{equation}
  \phi(\cdot,0) = \phi_0
\end{equation}
and appropriate boundary conditions.
We assume that the flow $u$ is divergence free,
simplifying equation~\eqref{eq:advectioneq} 
to
\begin{equation}
\label{eq:advectioneqdivfree}
 \frac{\partial{\phi}}{\partial t} + u\cdot\nabla \phi = 0.
\end{equation}

\subsection{Level-set functions}

In this section we discuss level-set functions as a possible interpretation
of $\phi$.
There are various examples in which the advection equation is used to keep
track of the movement---under the advection of the flow $u$---of a
$(d-1)$-dimensional interface between two disjoint subsets
$\Omega^1$ and $\Omega^2$ with $\Omega^1 \cup \Omega^2 =
\Omega$~\cite{Albrecht16,MirzadehGuittetBursteddeEtAl16,Klitz14,OsherFedkiw02,OsherSethian88}.
A typical example is two-phase flow~\cite{SussmanSmerekaOsher94,Klitz14}, where
$\Omega^1$ marks the region occupied by the first phase of a fluid (i.e.\ liquid)
and $\Omega^2$ the
region occupied by the second (i.e.\ gaseous). We pick the initial condition
$\phi_0$ such that it is continuous and satisfies
\begin{subequations}
\begin{align}
  \phi_0(x) \geq 0,& \textrm{ for all } x \in\Omega^1,\\
  \phi_0(x) < 0,& \textrm{ for all } x \in\Omega^2.
\end{align}
\end{subequations}
A common choice for $\phi_0$ is the signed distance to the interface 
$\Gamma=\overline{\Omega^1}\cap\overline{\Omega^2}$, thus
\begin{equation}
 \phi_0(x) = \begin{cases}
    \phantom{-}\dist(x,\Gamma),\, x\in\Omega^1,\\
    -\dist(x,\Gamma),\,x\in\Omega^2.
\end{cases}
\end{equation}

Let us define for each time step $t$ the sets $\Omega^i_t$ as the collection
of all points in $\Omega^i$ after they have been transported with $u$,
\begin{subequations}
 \begin{align}
   \Omega_t^1 :=& 
    \setm{x\in\Omega}{\phi(x,t) \geq 0},\\
   \Omega_t^2 :=&
    \setm{x\in\Omega}{\phi(x,t) < 0}.
 \end{align}
\end{subequations}
Furthermore, we define $\Gamma_t$ as the interface between $\Omega^1_t$ and $\Omega^2_t$:
\begin{equation}
  \Gamma_t := \overline{\Omega_t^1}\cap\overline{\Omega_t^2} = \setm{x\in\Omega}{\phi(x,t) = 0}=\phi^{-1}(0,t).
\end{equation}
For a two-phase flow we interpret $\Omega^i_t$ as the region
of $\Omega$ that is occupied by fluid $i$ at time $t$.
We can use the sign of $\phi(x,t)$ to decide whether $x\in\Omega^1_t$ or
$x\in\Omega^2_t$.

\begin{remark}
 \label{rem:reinit}
  Over the simulation time the approximated solution $\phi$ may loose its
  signed distance property. A common observation is that the norm of the
  gradient at the zero level-set approaches zero, resulting in numerical irregularities when
  reconstructing the interface.
  A way to restore the signed distance property at a time step $t$ is to replace
  $\phi$ with a new level-set function $\phi^\ast$ which is obtained as the solution to
  the Hamilton-Jacobi equation
  \begin{equation}
    \frac{\partial \phi^\ast}{\partial\tau} = \sign(\phi_t)(1-\lvert \nabla \phi^\ast \rvert)
  \end{equation}
  with pseudo-time $\tau$.
  We omit this reinitialization process in our solver, since it does not affect the AMR 
  routines. We refer the reader to~\cite{Klitz14,Min10} for more details.
\end{remark}

\section{Numerically solving the advection equation}
The numerical method we choose to solve the advection equation is a
finite volume method with polynomial degree $0$. This method leads to a
rather simple solver that is easy to implement and provides first order
convergence rates.
We are well aware that far more accurate solvers exist. However, since we use the
application as a proof-of-concept to show the coupling of application and AMR
routines, we use degree $0$ for the sake of simplicity.
In fact, a cheap numerical solver is most challenging for the AMR routines and their
absolute runtimes.
For a detailed description of the method, see~\cite{SuliSchwabHouston00} and
the references therein.
Although in practical applications the flow $u$ could be the (discrete)
solution of a fluid solver, we assume for our computations that $u$ is
given analytically.

Integrating equation~\eqref{eq:advectioneqdivfree} over a volume $V\subset
\Omega$ and applying the Gauss divergence theorem we obtain
\begin{equation}
\label{eq:advectionvolint}
 \frac{\partial}{\partial t}
  \int_V \phi(x,t)\,dx + \int_{\partial V} \phi(s,t) u(s,t) \cdot \vec n(s)\,ds = 0,
\end{equation}
with $\vec n(s)$ being the outward pointing normal vector to $V$ at $s\in\partial V$.

With the finite volume ansatz, we discretize the domain with mesh
elements and consider equation~\eqref{eq:advectionvolint} on each mesh element
$E$.  We model the approximate solution of $\phi$ at time $t$ with a constant
value $\phi_{E,t}$ on each element $E$. 
We interpret this value as an approximation to $\phi$ at the midpoint 
$m_E$ of $E$, thus $\phi_{E,t}\approx \phi(m_E,t)$.
The initial condition at $t=0$ is then $\phi_{E,0}=\phi_0(m_E)$.

The boundary integral is an integral over the faces of $E$.
\begin{definition}
 Let 
 \begin{equation}
    \omega(E):=\setm{(E',F)}{E' \textrm{ is face-neighbor of } E \textrm{ across
face } F}
 \end{equation}
  be the set of all pairs $(E', F)$ of face-neighbors $E'$ of $E$ together with the respective face
$F$ of $E$.
\end{definition}

Using this notation we obtain from equation~\eqref{eq:advectionvolint}
with $V=E$:
\begin{equation}
\label{eq:advectionint}
 0 = \frac{\partial}{\partial t}
 \int_E \phi\,dx + \int_{\partial E} \phi u \cdot \vec n\,ds = 
 \frac{\partial}{\partial t}
 \int_E \phi\, dx + \sum_{(E',F)\in \omega(E)}\int_{F}\phi u \cdot \vec n\,ds.
\end{equation}

We now discretize the equation in time using a constant time step $\Delta t$
and additionally discretize the right-hand side integrals as fluxes
$\psi(E,E';F)$. For the latter approximation we use an upwind
method~\cite{CourantIsaacsonRees52,GentryMartinDaly66}.
To this end, let $A(F)$ be the area of the face $F$. Furthermore, let $\hat u$
be the flow $u$ evaluated at the midpoint of $F$.
We then define
\begin{equation}
  \label{eq:advection-flux}
 \psi(E,E';F) := 
  \begin{cases}
    \phi_{E,t}  (\vec n\cdot \hat u) A(F)\quad \textrm{if }\vec n\cdot\hat u \geq 0,\\
    \phi_{E',t} (\vec n\cdot \hat u) A(F)\quad \textrm{if }\vec n\cdot\hat u < 0;
  \end{cases}
\end{equation}
see also Figure~\ref{fig:advect-flux}, left.
Note, that for a non-hanging face-connection we have
\begin{equation}
  \psi(E,E',F) = -\psi(E',E,F).
\end{equation}

We use the fluxes $\psi$ and the approximation of $\phi$ in
equation~\eqref{eq:advectionint} and discretize the time derivative in order to
obtain
\begin{subequations}
\label{eq:advectiontimestep}
\begin{align}
 0 &= \vol(E)\frac{\phi_{E,t+\Delta t}-\phi_{E,t}}{\Delta t}
 + \sum_{(E',F) \in \omega(E)}\psi(E,E';F)\\
\Rightarrow \phi_{E, t+\Delta t} &= 
  \phi_{E,t}
  -\frac{\Delta t}{\vol (E)} \sum_{(E',F) \in \omega(E)} \psi(E,E';F).
\end{align}
\end{subequations}

Thus, in each time step $t+\Delta t$ we iterate over the (process local)
elements and for each element $E$ we iterate over all of its faces $F$ with
face-neighbors $E'$ computing $\psi(E,E';F)$. We then compute the value
$\phi_{E,t+\Delta t}$ according to \eqref{eq:advectiontimestep} and continue
with the next element.
We store the values $\psi(E,E';F)$ in order to reuse them for the computation
of $\phi_{E',t+\Delta t}$.
\begin{remark}
This iterating over the elements is a primitive version of the \texttt{Iterate}
functionality. It remains a future project to implement a recursive
version of \texttt{Iterate} as in~\cite{IsaacBursteddeWilcoxEtAl15}
 into \tetcode.
\end{remark}

\subsection{Hanging faces and face-neighbors}
Since we use adaptive forests with non-conforming refinement methods, 
it is possible that an element $E$ has more than one face-neighbor across
a face $F$. 
Let $E'$ be such a face-neighbor and let $F'$ be the corresponding face of
$E'$. Hence, $F'$ is a subface of $F$.
In this case, we must compute the face integral of \eqref{eq:advectionint} only
over the part of the face of $E$ that coincides with $F'$.
Thus we set
\begin{equation}
  \label{eq:advection-flux-hang}
  \psi (E,E';F) = -\psi (E',E;F') = \begin{cases}
    -\phi_{E',t}  (\vec n'\cdot \hat u') A(F')\quad \textrm{if }\vec n'\cdot\hat u' \geq 0,\\
    -\phi_{E,t} (\vec n'\cdot \hat u') A(F')\quad \textrm{if }\vec n'\cdot\hat u' < 0.
  \end{cases}
\end{equation}
See the right-hand side of Figure~\ref{fig:advect-flux} for an illustration.

Hence, given a leaf element $E$ in the forest and a face $F$ of it, we need to
compute all face-neighbor leaf elements $E'$ of $E$ across the face $F$. This
neighbor finding poses a challenge for adaptive forests when the face-neighbors
of $E$ are not of the same level $\ell$ as $E$. If we assume that the forest is
2:1 balanced the possible refinement levels for face-neighbors restrict to
$\ell-1$, $\ell$, and $\ell+1$. In particular, all face-neighbors across the
same face have the same refinement level. Using this, we can find the
face-neighbors in the forest with the
\texttt{t8\_forest\_half\_face\_neighbors} function from
Section~\ref{sec:ghost-halfneighbors} and appropriate binary searches in the
arrays of leaf elements and ghost elements.
\begin{remark}
 Since we assume 2:1 balance of the forest to compute the face-neighbor
 leaves we call \texttt{Ripple-balance} after we adapt the forest to ensure
  the balance property.
\end{remark}

\begin{figure}
 \center
 \def\svgwidth{0.45\textwidth}
\begingroup%
  \makeatletter%
  \providecommand\color[2][]{%
    \errmessage{(Inkscape) Color is used for the text in Inkscape, but the package 'color.sty' is not loaded}%
    \renewcommand\color[2][]{}%
  }%
  \providecommand\transparent[1]{%
    \errmessage{(Inkscape) Transparency is used (non-zero) for the text in Inkscape, but the package 'transparent.sty' is not loaded}%
    \renewcommand\transparent[1]{}%
  }%
  \providecommand\rotatebox[2]{#2}%
  \ifx\svgwidth\undefined%
    \setlength{\unitlength}{242.76008301bp}%
    \ifx\svgscale\undefined%
      \relax%
    \else%
      \setlength{\unitlength}{\unitlength * \real{\svgscale}}%
    \fi%
  \else%
    \setlength{\unitlength}{\svgwidth}%
  \fi%
  \global\let\svgwidth\undefined%
  \global\let\svgscale\undefined%
  \makeatother%
  \begin{picture}(1,0.71968339)%
    \put(0,0){\includegraphics[width=\unitlength,page=1]{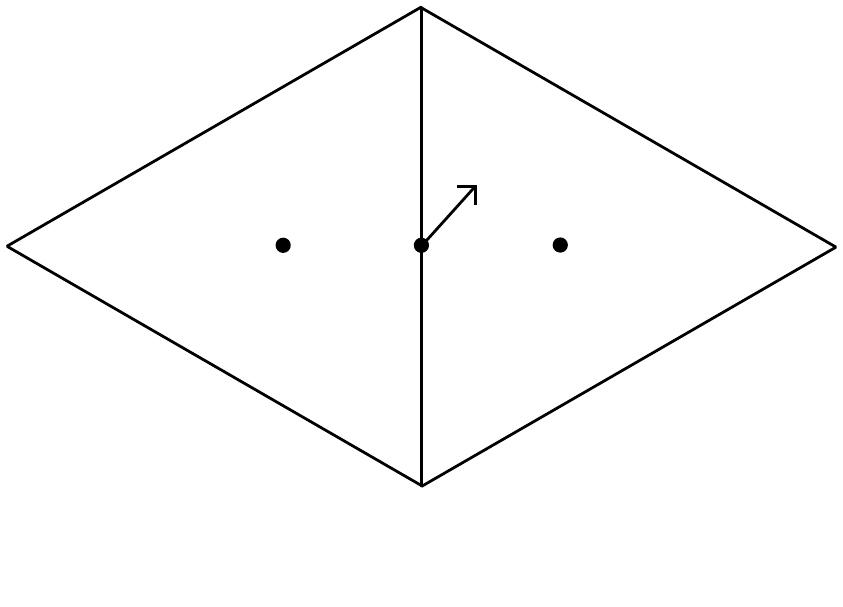}}%
    \put(0.15322796,0.15534294){\color[rgb]{0,0,0}\makebox(0,0)[lb]{\smash{$E$}}}%
    \put(0.7412732,0.14221038){\color[rgb]{0,0,0}\makebox(0,0)[lb]{\smash{$E'$}}}%
    \put(0.36629979,0.01627767){\color[rgb]{0,0,0}\makebox(0,0)[lb]{\smash{$F$}}}%
    \put(0,0){\includegraphics[width=\unitlength,page=2]{compute_flux_tex.pdf}}%
    \put(0.27908507,0.37239329){\color[rgb]{0,0,0}\makebox(0,0)[lb]{\smash{$m_E$}}}%
    \put(0.62074279,0.37480025){\color[rgb]{0,0,0}\makebox(0,0)[lb]{\smash{$m_E'$}}}%
    \put(0.50968062,0.51544591){\color[rgb]{0,0,0}\makebox(0,0)[lb]{\smash{$\hat u$}}}%
    \put(0,0){\includegraphics[width=\unitlength,page=3]{compute_flux_tex.pdf}}%
    \put(0.52326297,0.36626841){\color[rgb]{0,0,0}\makebox(0,0)[lb]{\smash{$\vec n$}}}%
  \end{picture}%
\endgroup%
\hfill
 \def\svgwidth{0.35\textwidth}
\begingroup%
  \makeatletter%
  \providecommand\color[2][]{%
    \errmessage{(Inkscape) Color is used for the text in Inkscape, but the package 'color.sty' is not loaded}%
    \renewcommand\color[2][]{}%
  }%
  \providecommand\transparent[1]{%
    \errmessage{(Inkscape) Transparency is used (non-zero) for the text in Inkscape, but the package 'transparent.sty' is not loaded}%
    \renewcommand\transparent[1]{}%
  }%
  \providecommand\rotatebox[2]{#2}%
  \ifx\svgwidth\undefined%
    \setlength{\unitlength}{372.97885742bp}%
    \ifx\svgscale\undefined%
      \relax%
    \else%
      \setlength{\unitlength}{\unitlength * \real{\svgscale}}%
    \fi%
  \else%
    \setlength{\unitlength}{\svgwidth}%
  \fi%
  \global\let\svgwidth\undefined%
  \global\let\svgscale\undefined%
  \makeatother%
  \begin{picture}(1,0.90176088)%
    \put(0,0){\includegraphics[width=\unitlength,page=1]{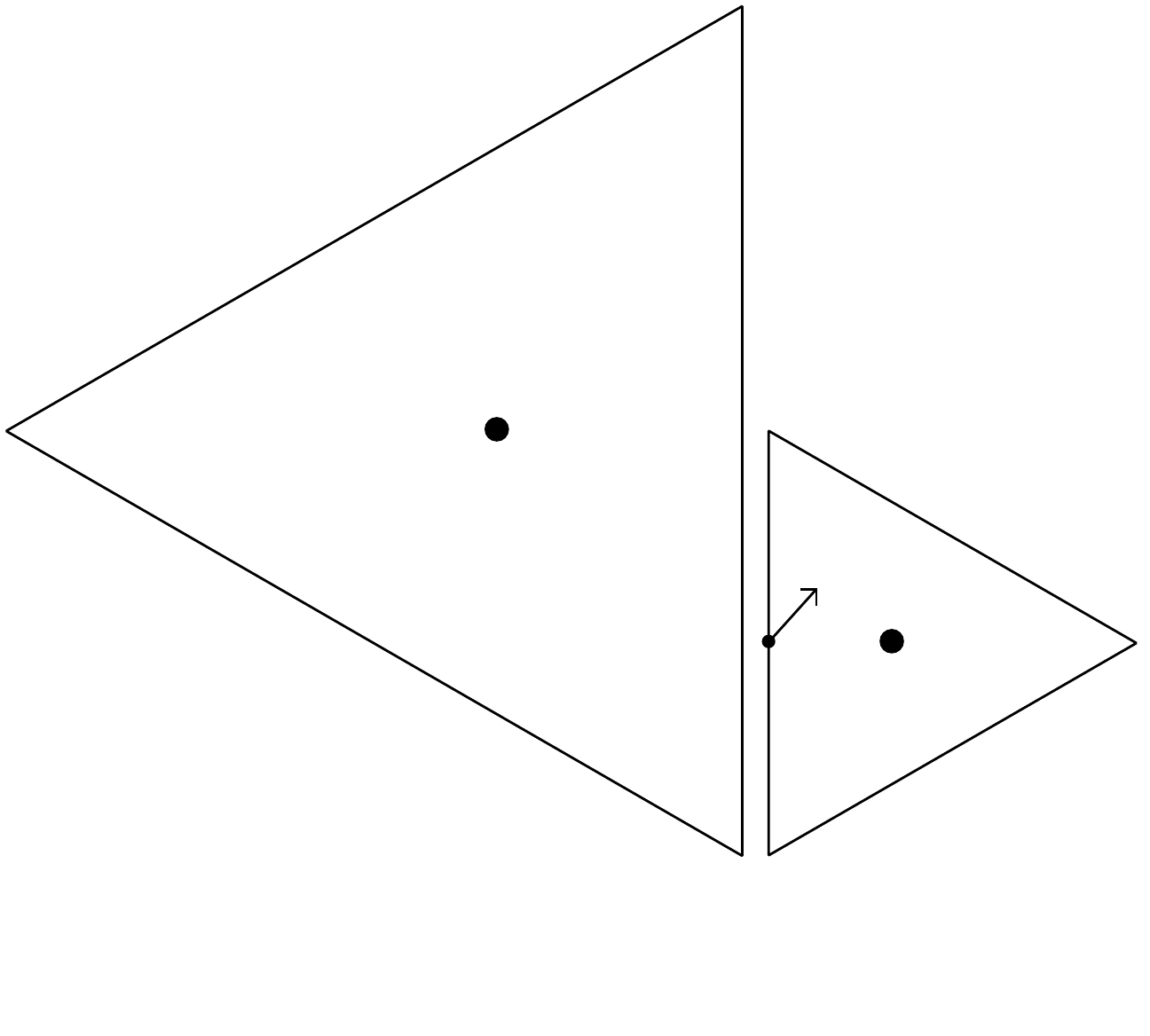}}%
    \put(0.28656924,0.1815413){\color[rgb]{0,0,0}\makebox(0,0)[lb]{\smash{$E$}}}%
    \put(0.83124867,0.17406618){\color[rgb]{0,0,0}\makebox(0,0)[lb]{\smash{$E'$}}}%
    \put(0.47351103,0.04920271){\color[rgb]{0,0,0}\makebox(0,0)[lb]{\smash{$F$}}}%
    \put(0,0){\includegraphics[width=\unitlength,page=2]{compute_flux_hanging_tex.pdf}}%
    \put(0.39958659,0.47295503){\color[rgb]{0,0,0}\makebox(0,0)[lb]{\smash{$m_E$}}}%
    \put(0.76030642,0.28255366){\color[rgb]{0,0,0}\makebox(0,0)[lb]{\smash{$\scriptstyle m_E'$}}}%
    \put(0.67515019,0.39983418){\color[rgb]{0,0,0}\makebox(0,0)[lb]{\smash{$\scriptstyle\hat u'$}}}%
    \put(0,0){\includegraphics[width=\unitlength,page=3]{compute_flux_hanging_tex.pdf}}%
    \put(0.83124867,0.54350026){\color[rgb]{0,0,0}\makebox(0,0)[lb]{\smash{$E''$}}}%
    \put(0.75548034,0.65520503){\color[rgb]{0,0,0}\makebox(0,0)[lb]{\smash{$\scriptstyle m_E''$}}}%
    \put(0.67742517,0.74538067){\color[rgb]{0,0,0}\makebox(0,0)[lb]{\smash{$\scriptstyle\hat u''$}}}%
    \put(0,0){\includegraphics[width=\unitlength,page=4]{compute_flux_hanging_tex.pdf}}%
    \put(0.75931121,0.05227552){\color[rgb]{0,0,0}\makebox(0,0)[lb]{\smash{$F'$}}}%
    \put(0,0){\includegraphics[width=\unitlength,page=5]{compute_flux_hanging_tex.pdf}}%
    \put(0.89632035,0.85920672){\color[rgb]{0,0,0}\makebox(0,0)[lb]{\smash{$F''$}}}%
  \end{picture}%
\endgroup%
 \caption[Illustration for flux computations.]
{Illustration of our notations for computing the fluxes as
in~\eqref{eq:advection-flux} and~\eqref{eq:advection-flux-hang}.
Left: A non-hanging face connection of a triangle element $E$ across face $F$.
Depending on the vector product $\hat u \cdot \vec n$ we either use the value
of $\phi$ on $E$ or on $E'$ for the numerical flux and multiply with the 
area $A(F)$ of the face $F$.
Right: A hanging face connection of a triangle element $E$ across face $F$
(exploded view). In order to compute the flux across $F$, we sum up the two
fluxes across the neighbor elements' faces $F'$ and $F''$.}
\figlabel{fig:advect-flux}
\end{figure}

\subsection{The CFL number}
In order for our explicit FV scheme to be numerically stable, we have to choose the
time step $\Delta t$ appropriately. If we choose it too big, the
method becomes unstable. However, decreasing the time step increases
the computational load, and thus increases the overall runtime. A well-known
way to control the time step is the CFL
number~\cite{CourantFriedrichsLewy28}.
\begin{definition}
 The \textbf{CFL-number} of an element $E$ at time $t$ is
 \begin{equation}
   C_{E,t}:=u(m_E,t)\frac{\Delta t}{\vol(E)^{\frac{1}{d}}},
  \end{equation}
and the \textbf{global CFL-number} is
 \begin{equation}
   C_t = \max_{E} C_{E,t}.
 \end{equation}
\end{definition}
In our solver, we choose an initial global CFL-number $C$ and then compute
$\Delta t$ such that $C_0 = C$. We use this time step throughout the
simulation, but it would be possible to change the time step during the
simulation~\cite{Johnson88}.

The FV theory states that the method is numerically stable for small
values of $C$, i.e.\ $C<1$~\cite{WarburtonHagstrom08}. 
In practice, we observe that we may need lower values in order to ensure
stability. One reason is that $\vol(E)$ changes over time when the
mesh is readapted.

\subsection{The refinement criterion}
\label{sec:refcrit}
In order to properly decide for a point $x$ whether $x\in\Omega^i_t$ it is 
important to resolve $\Gamma_t$ as accurately as possible.
This motivates refining the forest close to $\Gamma_t$, or
rather its approximation $\setm{E}{\lvert\phi_{E,t}\rvert < \epsilon}$, as in~\cite{Albrecht16}.
Consider a leaf element $E$ and let $h=\vol(E)^{\frac{1}{d}}$ be an 
approximation of its diameter and $b>0$ a given
parameter, then we refine $E$ at time step $t$ if
\begin{equation}
\label{eq:refcrit}
 \lvert\phi_{E,t}\rvert < bh.
\end{equation}
The parameter $b$ controls the width of the refinement band around $\Gamma_t$.
We coarsen a family of elements if the equation does not hold for each of its
members. Furthermore, we do not refine more than a specified maximum refinement
level and do not coarsen above a minimum refinement level. 
This does not lead to element volume inbalances at the finest or coarsest
level, since all trees have approximately the same volume.

\subsection{Error measurement}

In our examples we use level-set functions with closed zero level-sets for the
initial function $\phi_0$,
for example a circle, or a sphere.
Since the flow $u$ is divergence free, the volume of the sets
$\Omega^i_t$ remains constant over time, if we do not have inflow or
outflow.
It is a typical behavior of FV solvers that we numerically loose or gain volume in
$\Omega^i_t$ over the computation. 
Whereby volume loss occurs in the convex region, which is the case for
$\Omega^2_t$ with our choices of initial conditions.
Hence, we use the amount of lost volume in $\Omega^2_t$ as a
measurement for the computational error,
\begin{equation}
 \mathcal E_t^{\vol}:= 1 - \frac{\vol(\Omega^2_t)}{\vol(\Omega_0^2)}.
\end{equation}
We compute $\vol(\Omega^2_t)$ by summing up the volumes of those 
elements for which $\phi_{E,t}<0$.
\begin{remark}
  It is an inherent property of the method that, provided we have
  no in- or outflow, the integral of the
  level-set function
  \begin{equation}
    \int_\Omega \phi(t,x)\, dx
  \end{equation}
  does not change over time (up to computational errors).
  This is due to the fact that the numerical flux~\eqref{eq:advection-flux} 
  from element $E$ to $E'$ is exactly the negative as from element $E'$ to $E$.
\end{remark}

\section{Handling application data}
For each leaf element $E$ we store some application specific data. In
particular, this includes the value $\phi_{E,t}$ of the approximated solution
inside the element at the current time step. We store these data in an array
which has one entry for each local leaf element and one for each ghost element.
The data for the local elements is stored in order of their SFC index, while
the order of the ghost elements is implicitly given by the \texttt{Ghost}
algorithm.

In addition to the main mesh handling algorithms \texttt{New}, \texttt{Adapt},
\texttt{Partition}, \texttt{Ghost}, and \texttt{Balance} from the list in
Section~\ref{sec:hlalgos}, we need subroutines in order to manage the
application data. These perform the interpolation of data to a new forest after
the mesh changes, the redistribution of the data after \texttt{Partition}, and
the exchange of data of the ghost elements.
These routines are common in AMR, see for
example~\cite{BursteddeGhattasStadlerEtAl08,TautgesKraftcheckBertramEtAl12},
and we briefly summarize them in this section.

\subsection{Interpolation}
After we have computed the values $\phi_{E,t}$ for each local element, we may
modify the forest with \texttt{Adapt} to refine and coarsen the forest
and \texttt{Balance} to reestablish the 2:1 balance. Thus, the current
forest $\forest F$ is changed to a new forest $\forest F'$.
For all new elements in $\forest F'$ we need to calculate an interpolated 
value of $\phi$, which we do in the following way as 
described in~\cite{BursteddeGhattasStadlerEtAl08}.
We restrict to non-recursive refinement. Hence, all elements in
$\forest F'$ result from an element in $\forest F$ by either refining once,
coarsening once, or keeping the element as it is.
In the first case, an element $E$ in $\forest F$ is refined into $n$ children
$E_i$, $0\leq i<n$, and we set the value of the new elements to the one of the
parent: 
\begin{equation}
  \label{eq:advectinter}
  \phi_{E_i,t}=\phi_{E,t}
\end{equation}
 for all $0\leq i < n$. In the second case,
a family $\set{E_i}$ of $n$ elements is coarsened into their parent $E$. We 
assign the average of the values of the $E_i$, thus 
\begin{equation}
 \label{eq:advectaverage}
 \phi_{E,t} = \frac{1}{n}\sum_i \phi_{E_i,t}.
\end{equation}
If an element is unchanged, we also do not change the value of $\phi$.

The challenge is to identify the corresponding pairs of new and old elements.
In order to do so, we iterate simultaneously through the leaf elements of
$\forest F$ and $\forest F'$ with indices $i$ and $j$, starting with $i = j =
0$. Let $\ell_i$ and $\ell_j$ be the refinement levels of the $i$-th element in
$\forest F$ and the $j$-th element in $\forest F'$.
If $\ell_i=\ell_j$, they are the same element, hence we do not change the
$\phi$ value and increase both indices by $1$. If $\ell_i = \ell_j - 1$, we
know that the element was refined into $n$
children. We carry out the interpolation \eqref{eq:advectinter} and increment
$i$ by $1$ and $j$ by $n$. If $\ell_i = \ell_j + 1$, the element is the first
in a family of $n$ elements that was coarsened and we compute the
average~\eqref{eq:advectaverage}. We then increment $i$ by $n$ and $j$ by $1$.

We refer to this routine as \texttt{Interpolate}.

\subsection{Repartitioning of data}
After the forest is adapted and the new element values are interpolated, we may
repartition the forest via \texttt{Partition} in order to maintain a balanced
load. After \texttt{Partition} the element data needs to be partitioned as
well. Partitioning the element data follows the same logic as partitioning the
elements. We call the routine \texttt{Partition\_data}.

\subsection{Ghost exchange}
Before every new time step, we need to update the function values in the ghost
elements. Thus, each process has to send the values $\phi_{E,t}$ for its
inter-process boundary leaf elements to all processes that have face-neighbors
of this boundary element. 
This operation is called \texttt{Ghost\_exchange}~\cite{BursteddeHolke16b}.

Its input is the data array with valid entries $\phi_{E,t}$ for all process
local leaf elements. On output the values in the data array corresponding to
the ghost elements are filled with the entry of the respective owner process.

The implementation is straight-forward if we store the indices of boundary
elements and of which remote processes these are ghosts at the time of creation
of the ghost layer, hence during \texttt{Ghost}.

\section{Tests on a unit cube geometry}
We run tests in 2D and 3D with $\Omega = [0,1]^d$, the $d$-dimensional unit cube
with periodic boundary conditions for $\phi$.
\subsection{The 2D test case}
For the 2D tests, we use a flow $u$ that simulates a rotation around the midpoint
$(0.5,0.5)$ of $\Omega$,
\begin{equation}
\label{eq:2dflow}
  u(x,y) = 2\pi\begin{pmatrix}
    y-0.5\\
    -(x-0.5)
  \end{pmatrix}.
\end{equation}
This $u$ is divergence free and chosen such that for the analytical solution
at time $t = 1$ we obtain 
\begin{equation}
  \phi(\cdot,1) = \phi_0.
\end{equation}
As level-set function $\phi_0$ we choose the signed distance to a circle of radius
$0.25$ with midpoint $(0.6,0.6)$, hence
\begin{equation}
  \phi_0(x,y) = \sqrt{(x-0.6)^2 + (y-0.6)^2} - 0.25.
\end{equation}

We compare three different coarse meshes of $[0,1]^2$,
which we show in the top part of Figure~\ref{fig:2dquadtrihybrid}.
The first one 
consists of one quadrilateral tree; for the second one we divide the square along
a diagonal into two triangle trees; for the third one we use a hybrid coarse mesh
of four triangle trees and two quadrilateral trees.
For the refinement levels our solver accepts two arguments: $\ell$ and $r$.
$\ell$ describes the minimum refinement level and $r$ the maximum number of additional
refinement levels. Thus, we start with a uniform level $\ell$ forest and refine
it up to level $\ell+r$ before starting the computation. We depict adaptive 
forests with an initial adaptive refinement along the zero-level set of $\phi_0$
in Figure~\ref{fig:2dquadtrihybrid}.

We show a computation of the hybrid forest with $\ell = 4$ and $r=3$ in
Figure~\ref{fig:rotation2d}.  This figure also shows the flow field $u$.

\subsection{The 3D test case}

For computations on the 3D unit cube geometry, we construct a specific flow
function.

\begin{definition}
Let $f\in C^2([0,1])$ with $f(0) = f(1) = 0$.
We define the vector field $u^f:[0,1]^3\rightarrow \IR^3$ as
\begin{subequations}
 \begin{align}
 \label{eq:ufflow}
  u^f_1 (x,y,z) &= f(x)(f'(y) - f'(z))\\
  u^f_2 (x,y,z) &= -f'(x)f(y)\\
  u^f_3 (x,y,z) &= f'(x)f(z).
 \end{align}
\end{subequations}
\end{definition}
$u^f$ has the following properties:
\begin{lemma}
\label{lem:ufdivfree}
 $u^f$ is divergence free and has no outflow from the unit cube, thus
 $u^f\cdot \vec n = 0$ at each face of $[0,1]^3$.
\end{lemma}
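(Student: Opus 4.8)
The plan is to establish both assertions by a direct computation, using nothing beyond the product rule and the boundary hypotheses $f(0)=f(1)=0$. Since $f\in C^2([0,1])$, all first-order partial derivatives of the components of $u^f$ are continuous on $[0,1]^3$, so $\nabla\cdot u^f$ is well-defined and it suffices to compute it pointwise.

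First I would compute the divergence term by term. Differentiating $u^f_1$ with respect to $x$ yields $f'(x)\bigl(f'(y)-f'(z)\bigr)$; differentiating $u^f_2$ with respect to $y$ yields $-f'(x)f'(y)$; and differentiating $u^f_3$ with respect to $z$ yields $f'(x)f'(z)$. Summing the three contributions, the term $f'(x)f'(y)$ cancels against the second contribution and $f'(x)f'(z)$ against the third, so $\nabla\cdot u^f\equiv 0$. This is precisely the point of the antisymmetric factor $f'(y)-f'(z)$ in the first component: it is chosen so as to absorb the one-dimensional divergence contributions of the other two components.

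Next I would check $u^f\cdot\vec n=0$ separately on each of the six faces of $[0,1]^3$. On the two faces $x=0$ and $x=1$ the outward normal is $(\mp 1,0,0)$ (sign depending on the face), so $u^f\cdot\vec n=\mp\,u^f_1=\mp\,f(x)\bigl(f'(y)-f'(z)\bigr)$, which vanishes because $f(0)=f(1)=0$. On the faces $y=0$ and $y=1$ the normal is $(0,\mp 1,0)$ and $u^f\cdot\vec n=\mp\,u^f_2=\pm\,f'(x)f(y)$, again zero by the same endpoint values of $f$. The faces $z=0,\,z=1$ are treated identically using the third component. Hence $u^f\cdot\vec n=0$ on all of $\partial\bigl([0,1]^3\bigr)$, which is the claimed no-outflow property.

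I do not expect a genuine obstacle: the lemma is a routine verification, and the only care needed is to keep track of the orientation of the outward normal on each face and to invoke the correct endpoint value ($f(0)$ or $f(1)$) in each case. The substance of the statement resides in the construction of $u^f$ in the preceding definition rather than in its proof; this lemma simply records the two structural properties ($\nabla\cdot u^f=0$ and no outflow) that make $u^f$ a legitimate divergence-free test flow for the level-set advection solver, so that the volume of each $\Omega^i_t$ is conserved up to discretization error.
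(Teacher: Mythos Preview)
Your proof is correct and follows essentially the same approach as the paper: compute the divergence directly via the product rule and verify $u^f\cdot\vec n=0$ on each face using $f(0)=f(1)=0$. The paper's version is more terse, merely stating that the divergence vanishes by computing the partial derivatives and listing the six vanishing boundary values, whereas you spell out the cancellations explicitly.
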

\begin{proof}
  The first part follows from computing the partial derivative of $u^f$ and
  the second part follows since 
 \begin{subequations}
  \begin{align}
    u^f_1(0,y,z) &= u^f_2(x,0,z) = u^f_3(x,y,0) = 0, \textrm{\quad and}\\
    u^f_1(1,y,z) &= u^f_2(x,1,z) = u^f_3(x,y,1) = 0.
  \end{align}
 \end{subequations}
\end{proof}

For our test case we use $f(x) = \sin(\pi x)$ and consider the time-dependent flow
\begin{subequations}
\label{eq:cubeflow}
  \begin{align}
    u\colon[0,1]^3\times\IR_{\geq 0}&\rightarrow \IR^3,\\
    (x,y,z,t)&\mapsto\begin{cases} \phantom{-}u^f(x,y,z),\,t<0.5,\\-u^f(x,y,z),\,t\geq 0.5.\end{cases}
  \end{align}
\end{subequations}
In analogy to the 2D case, the analytical solution to the advection equation
satisfies $\phi(\cdot,1) = \phi_0$, which allows us to compute the numerical 
error at time $t=1$. We show the flow $u^f$ in Figure~\ref{fig:3Dflowincube}.

For the initial level-set function $\phi_0$ we use a signed distance to 
a sphere of radius $0.25$ and midpoint $(0.6,0.6,0.6)$, hence
\begin{equation}
  \phi_0(x,y,z) = \sqrt{(x-0.6)^2 + (y-0.6)^2 + (z-0.6)^2} - 0.25.
\end{equation}

We use three different coarse meshes of $[0,1]^3$. Firstly, one singular
hexahedral tree; secondly, six tetrahedral trees as in
Figure~\ref{fig:sechstetra}; and thirdly, a hybrid mesh of six tetrahedral trees,
six prism trees, and four hexahedral trees, which we depict in
Figure~\ref{fig:3Dflowincube}.
Figure~\ref{fig:hybrid-3d-cutout} additionally shows a detailed cut out view
of the hybrid mesh, displaying refined forest elements within the unit cube.

\begin{figure}
\center
 \includegraphics[width=0.307\textwidth]{./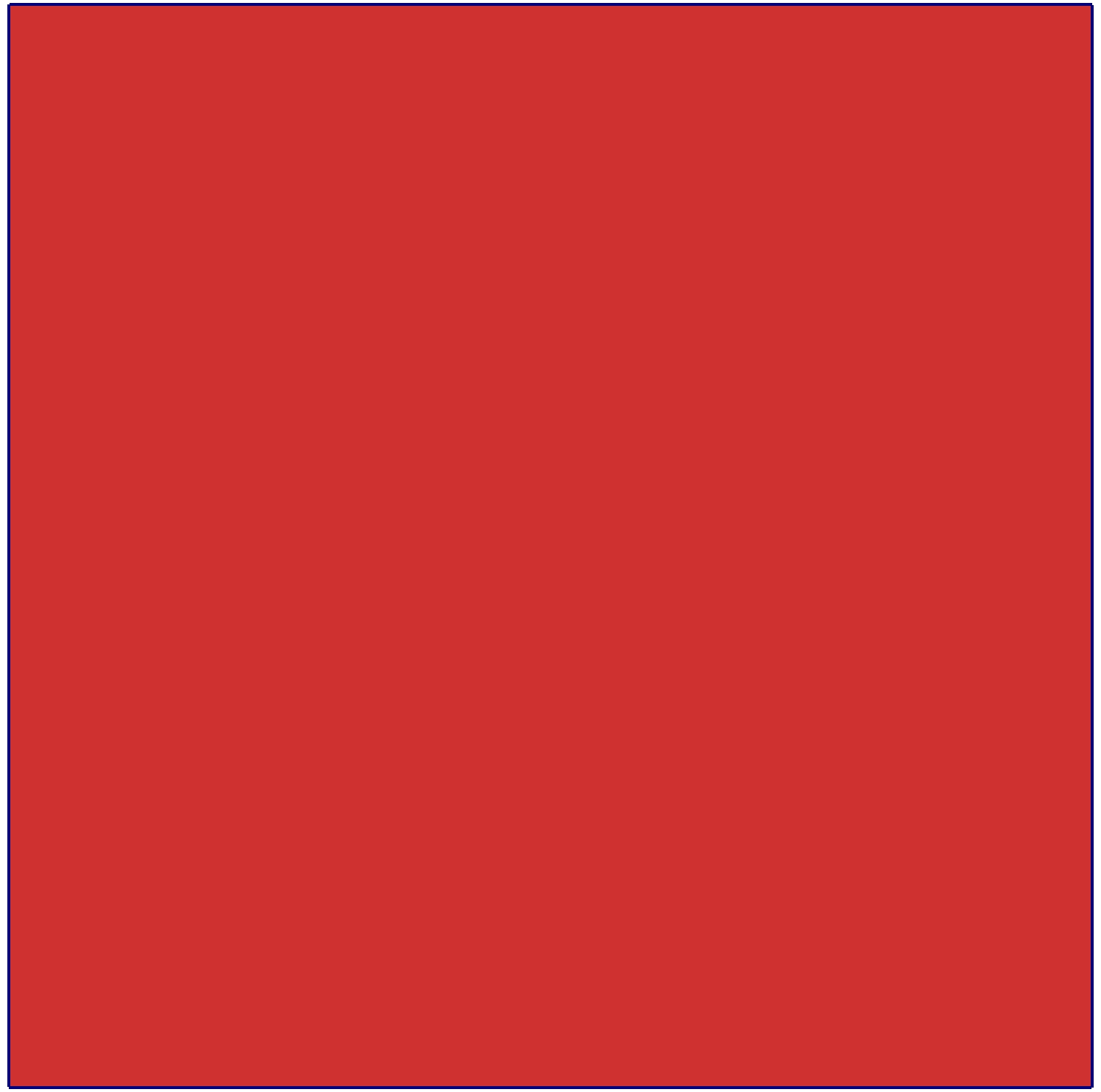}\hspace{1.1ex}
 \includegraphics[width=0.309\textwidth]{./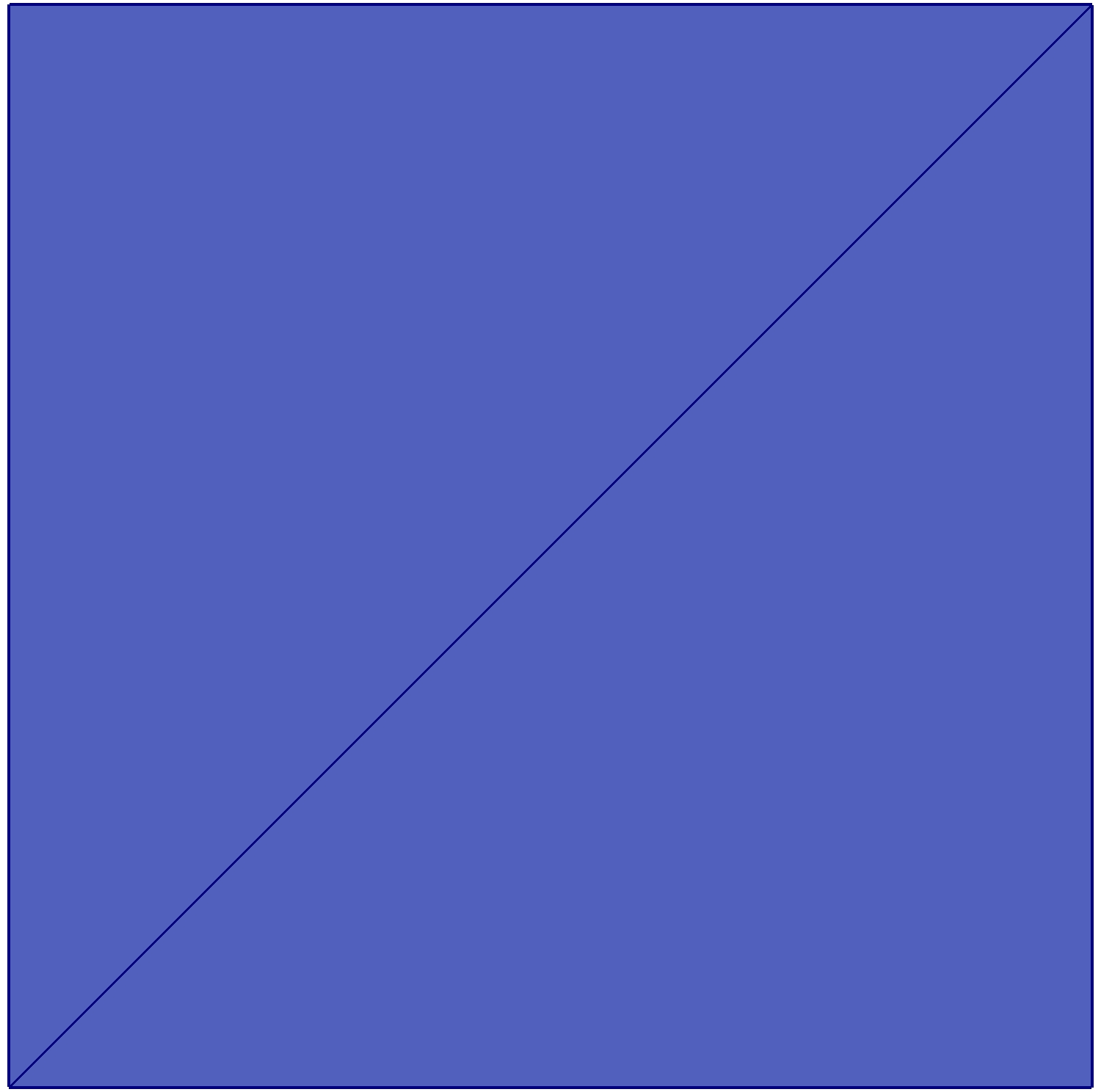}\hspace{1.1ex}
 \includegraphics[width=0.307\textwidth]{./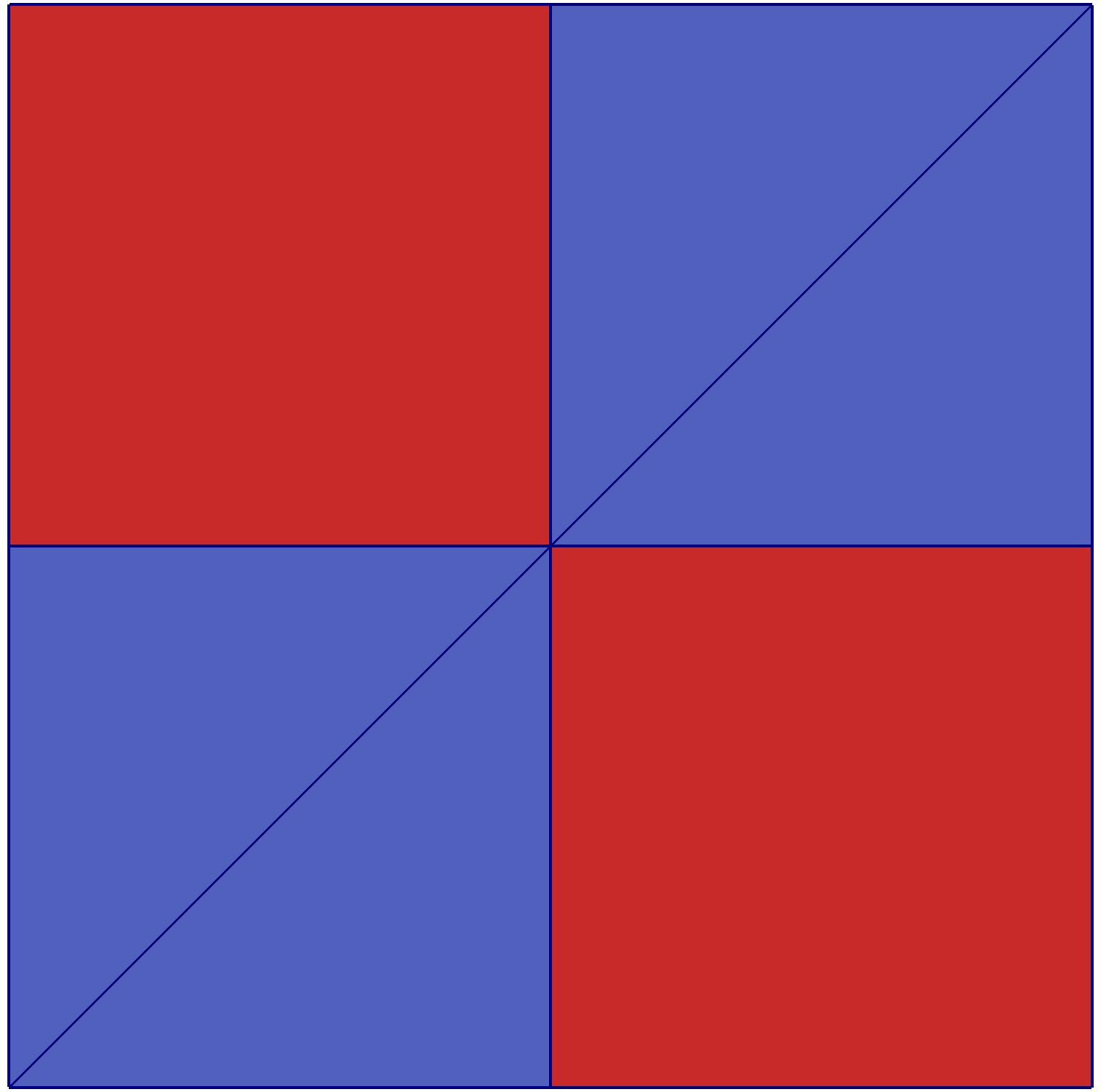}\\
 \includegraphics[width=0.32\textwidth]{./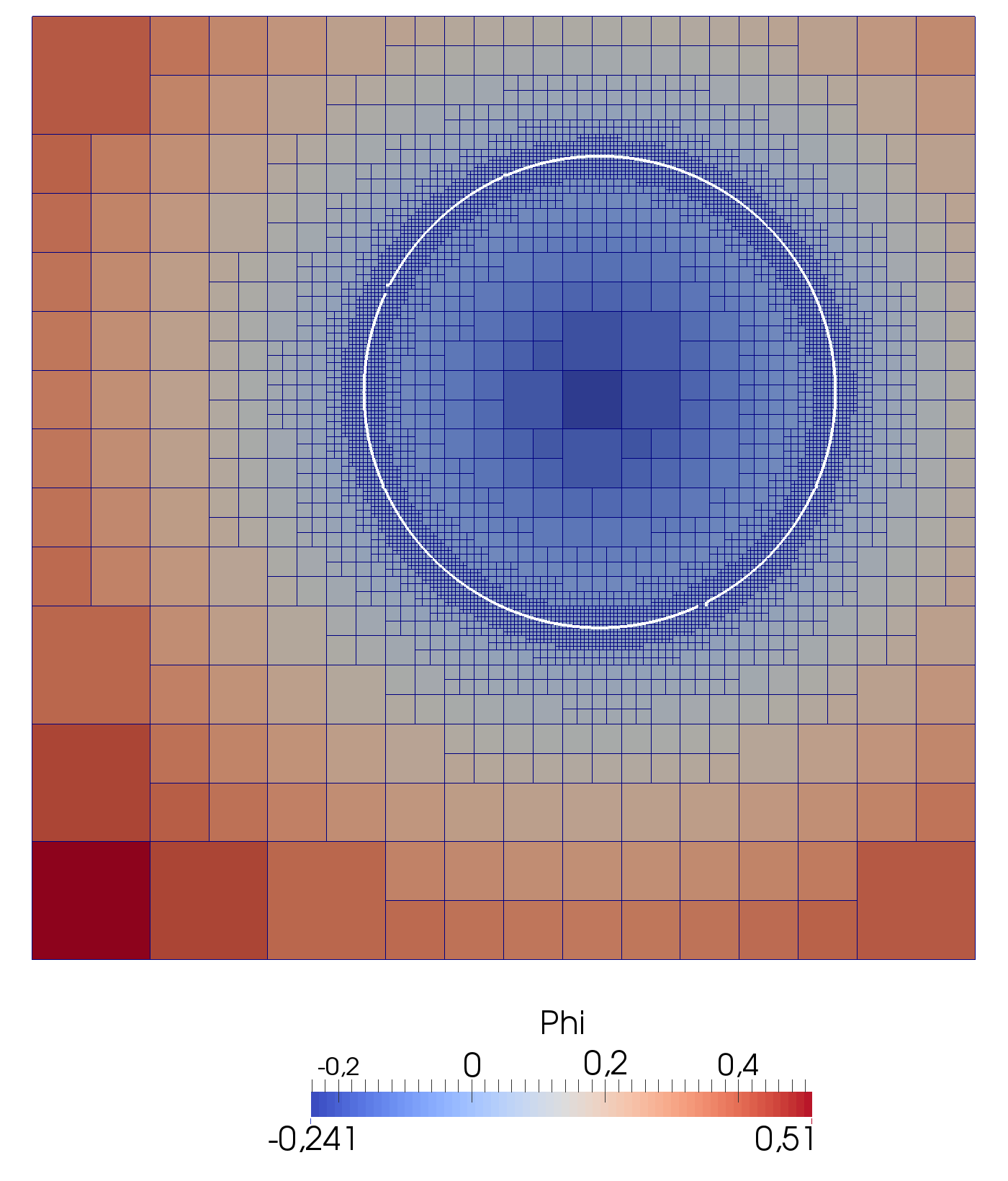}
 \includegraphics[width=0.32\textwidth]{./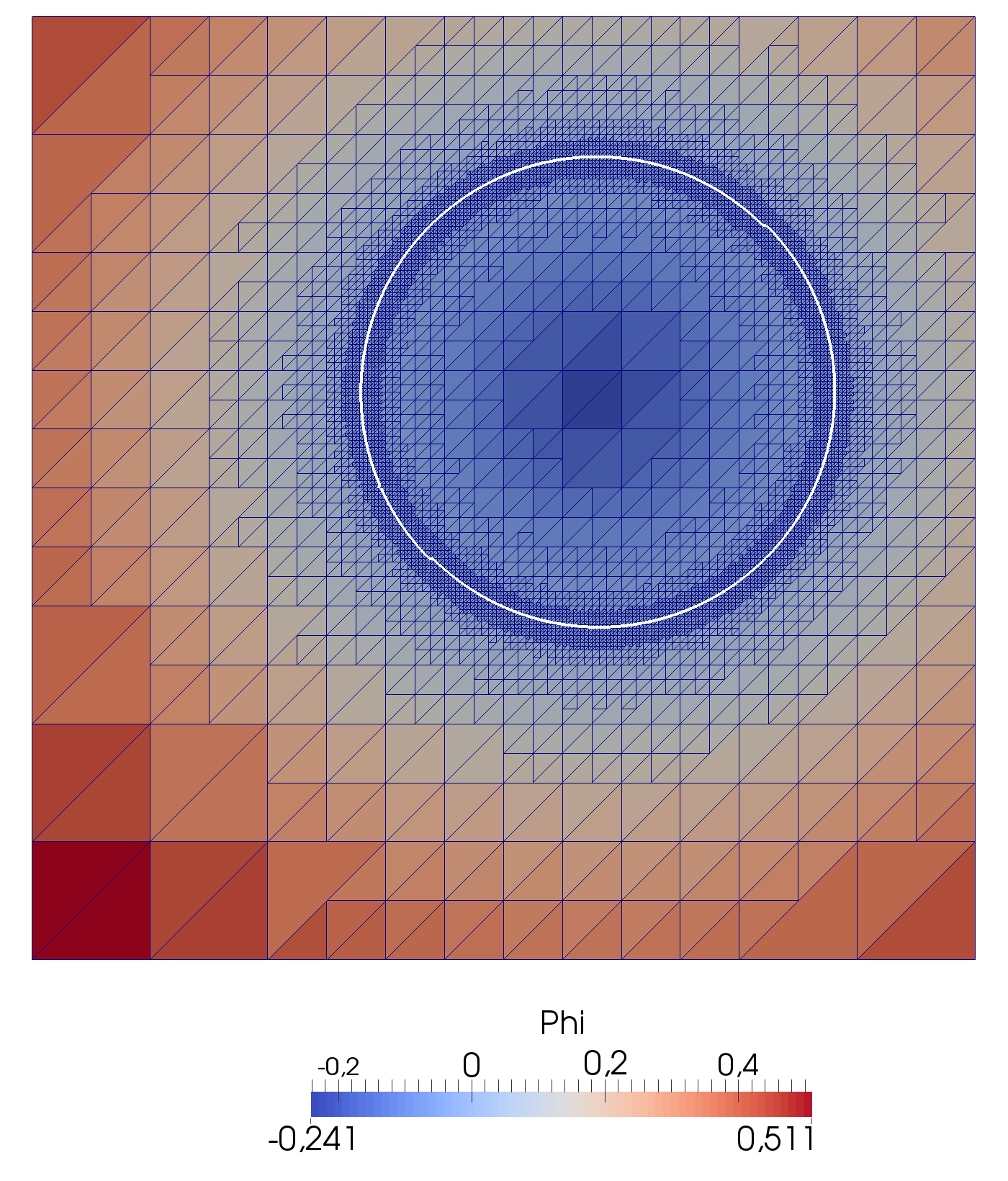}
 \includegraphics[width=0.32\textwidth]{./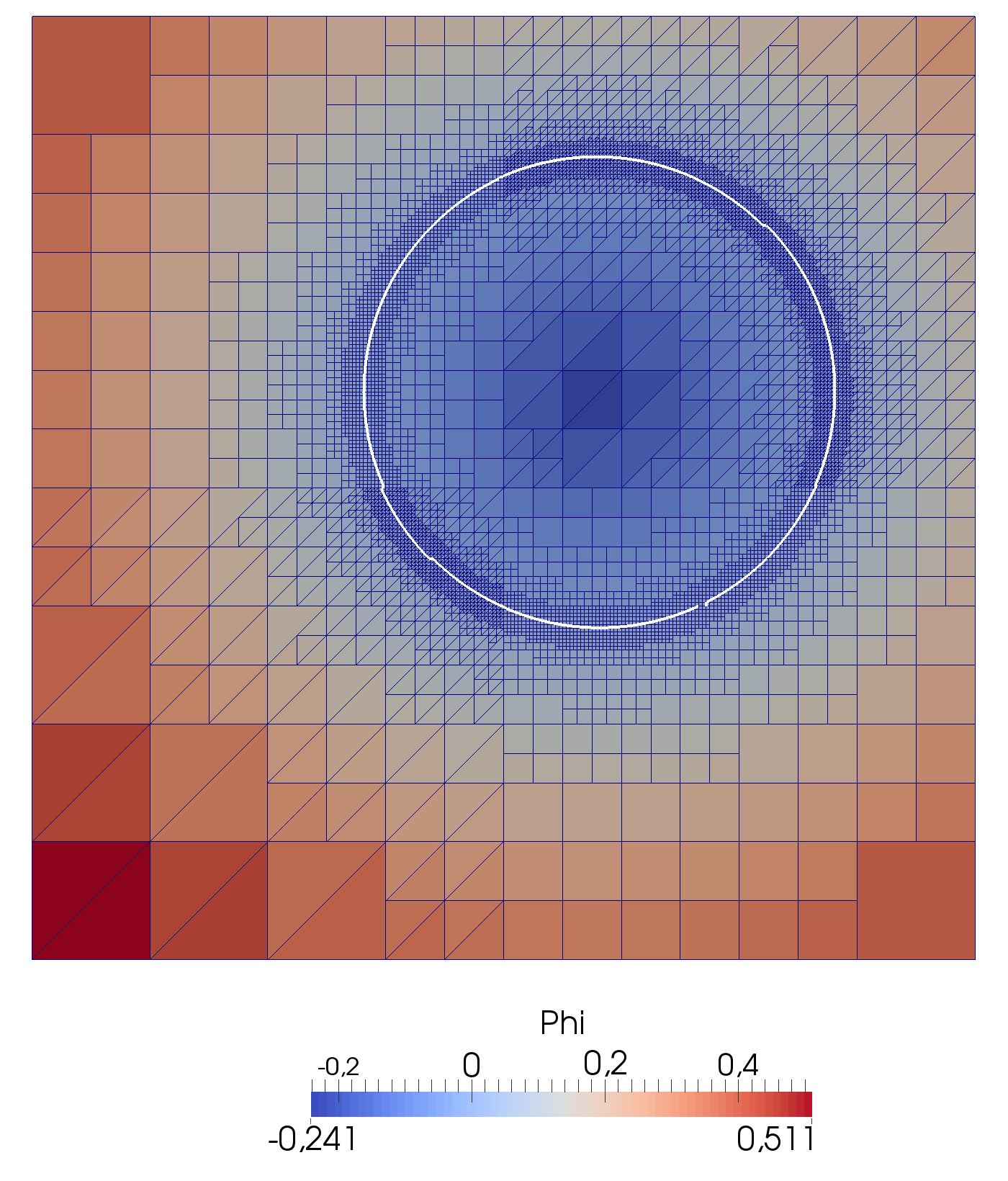}
\caption[2D unit square for advection (quad/triangle/hybrid).]{
Adapting three different unit square meshes at a circle with midpoint
$(0.6,0.6)$ and radius $0.25$, modeled as the zero level-set of a function
$\phi$. We show the coarse mesh in the top row and the adapted forest mesh in
the bottom row at time $t=0$. We model the unit square using one quadrilateral tree (left, red),
two triangle trees (middle, blue), and as a hybrid mesh of four triangular and two
quadrilateral trees (right). Starting from a uniform forest, we use six
refinement levels. The initial uniform forest has level $2$ in the
quadrilateral and triangle case, and level $1$ in the hybrid case. In the
bottom row the color represents the values $\phi_{E,t}$ of the approximated
level-set function.
}
\figlabel{fig:2dquadtrihybrid}
\end{figure}

\begin{figure}
\center
 \includegraphics[width=0.48\textwidth]{./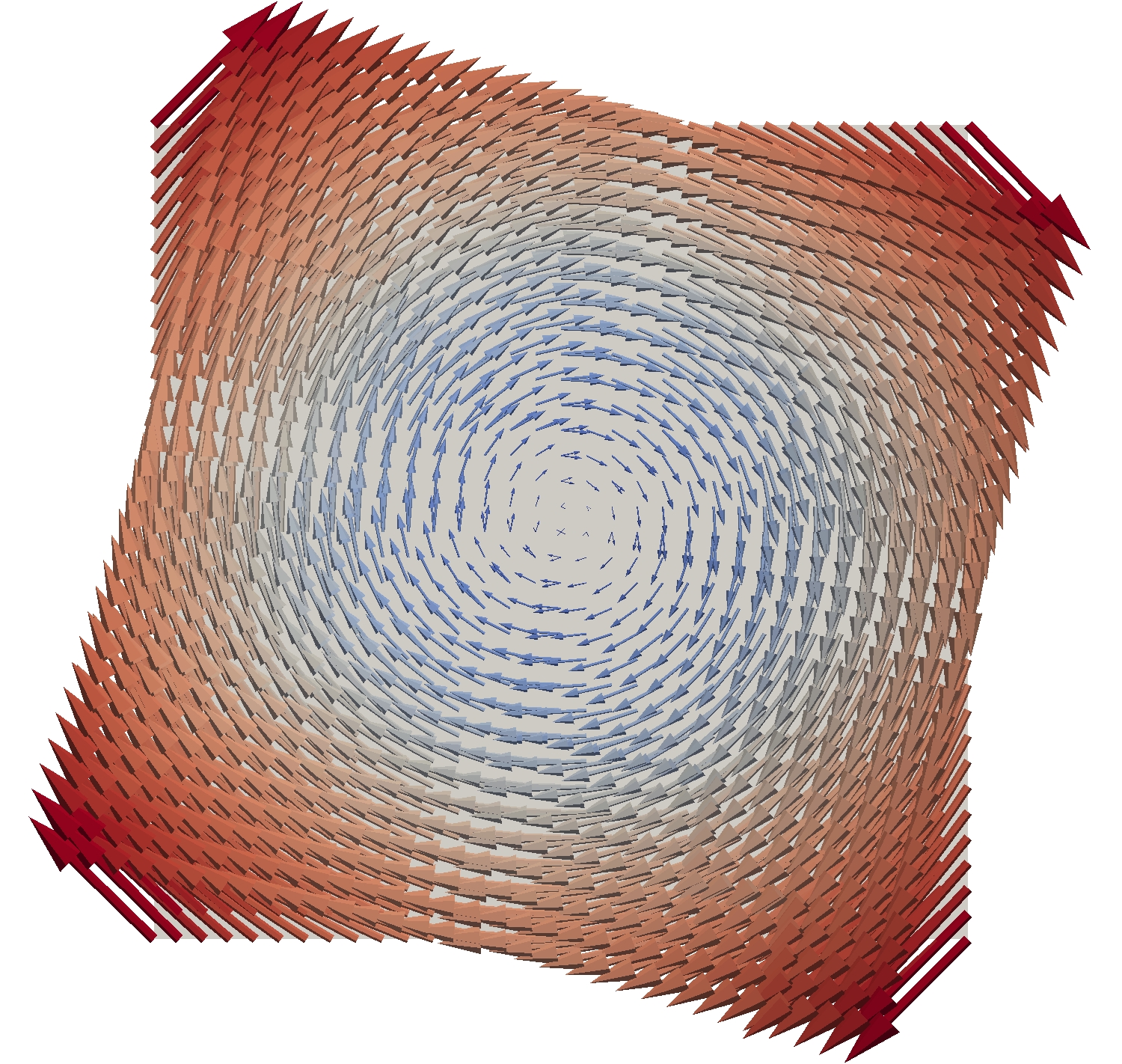}
 \includegraphics[width=0.48\textwidth]{./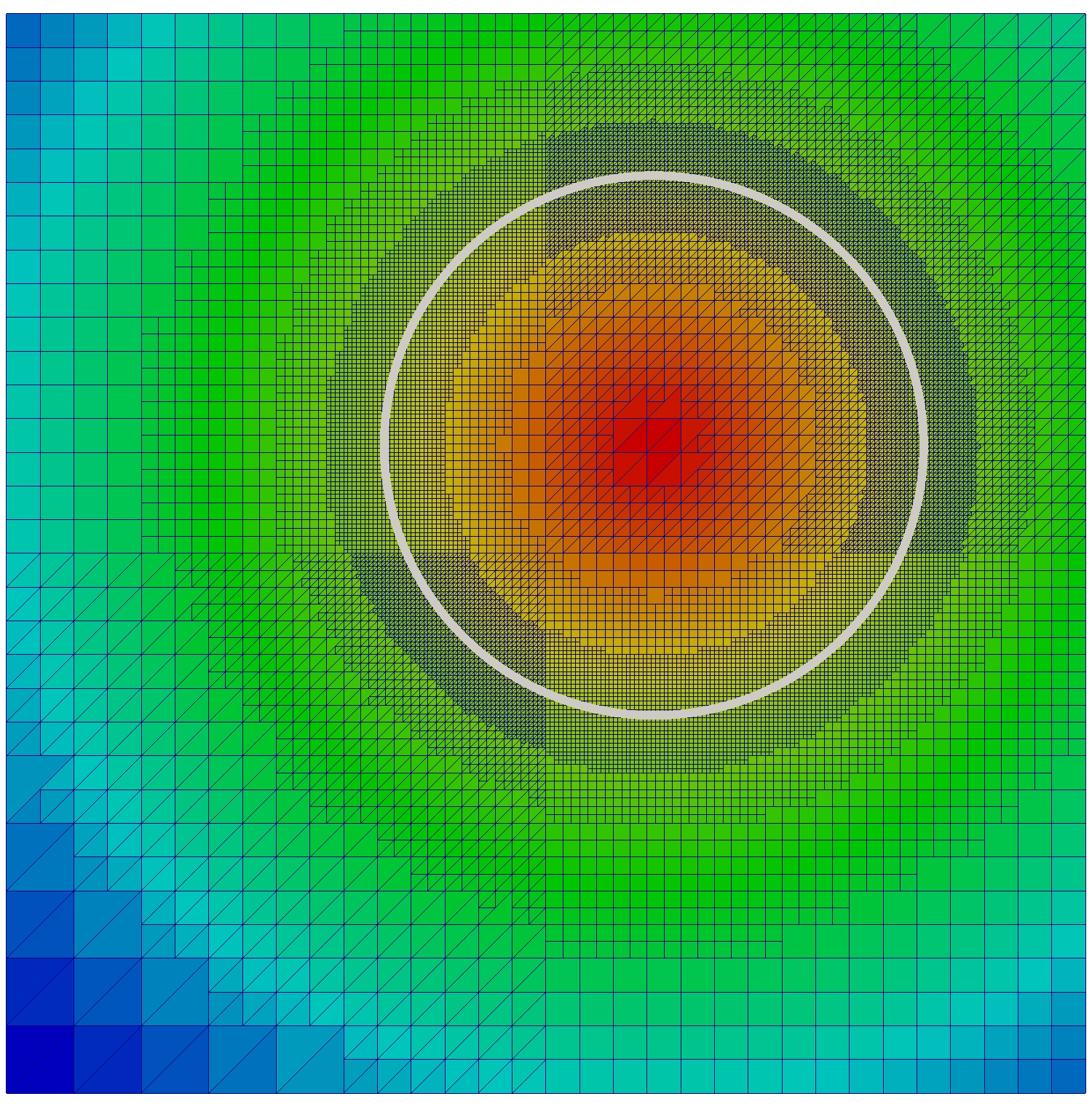}\\
 \includegraphics[width=0.48\textwidth]{./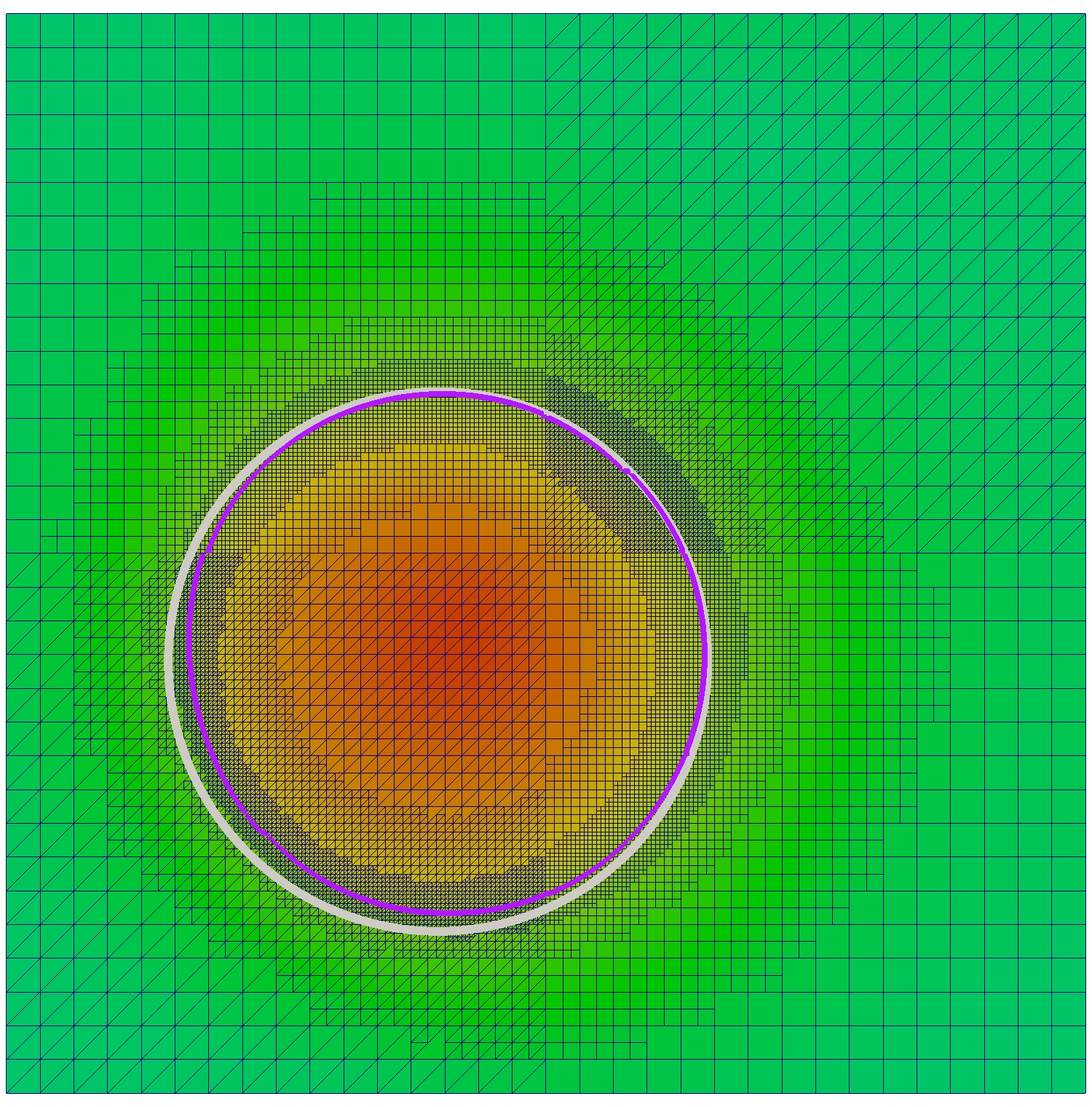}
 \includegraphics[width=0.48\textwidth]{./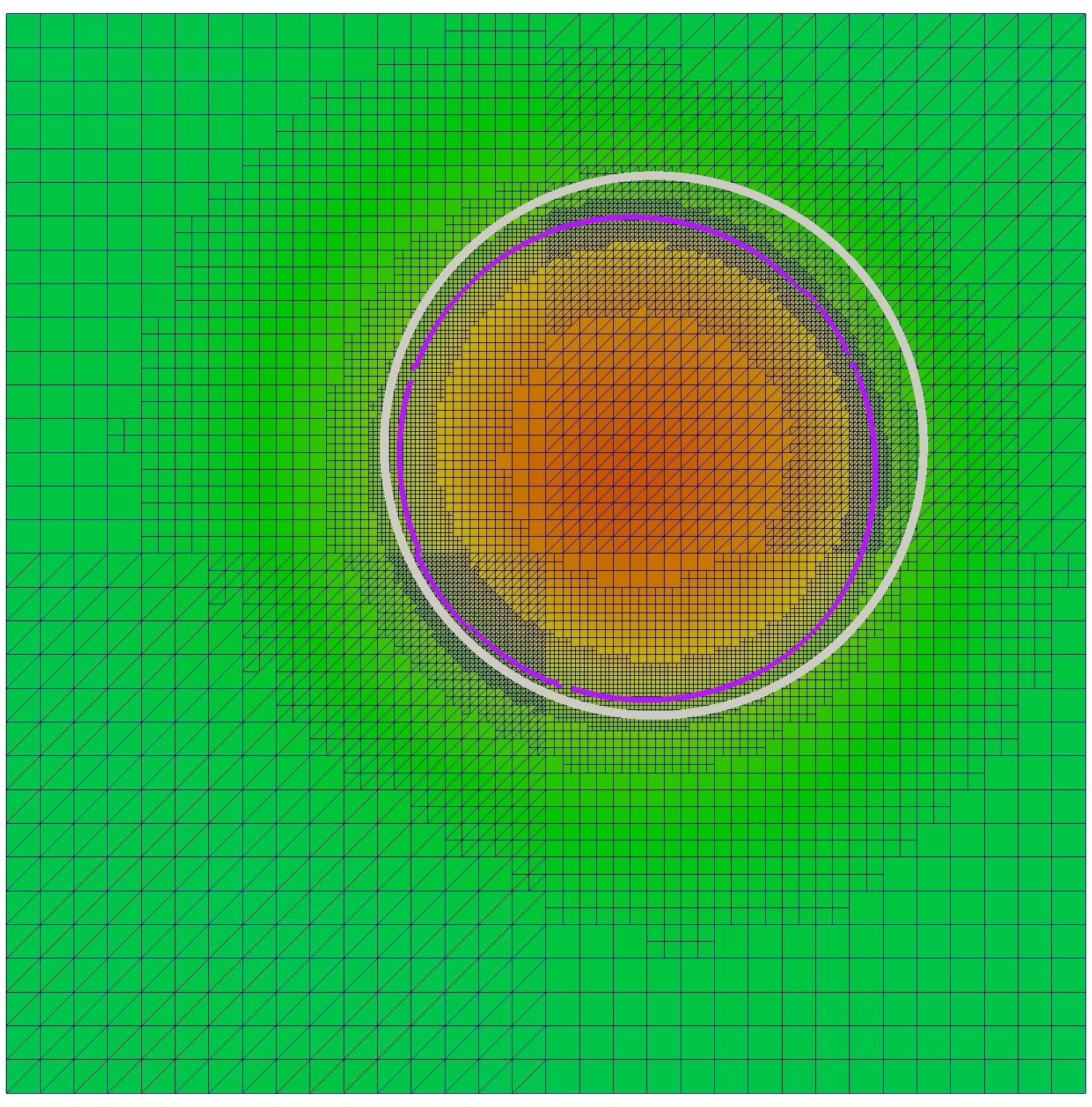}
\caption[Simulation of the 2D test case on a 2D hybrid mesh.]
{A simulation of the 2D test case on the 2D hybrid
mesh. Top left: The rotational flow $u$ from~\eqref{eq:2dflow}.
Top right, bottom left, bottom right: Times $t=0$, $t=0.5$, and $t=1$
of the simulation with initial refinement level $\ell = 4$, and with $r=3$
adaptive refinement levels.
The maximum refinement level is 7 and there are about 20,000 elements.
The number of elements in an equivalent uniform level 7 forest is 98,304. 
We depict the zero-level set of the analytical solution in white and the
zero-level set of the computed solution in purple.
}
\figlabel{fig:rotation2d}
\end{figure}
\begin{figure}
\center
 \includegraphics[width=0.47\textwidth]{./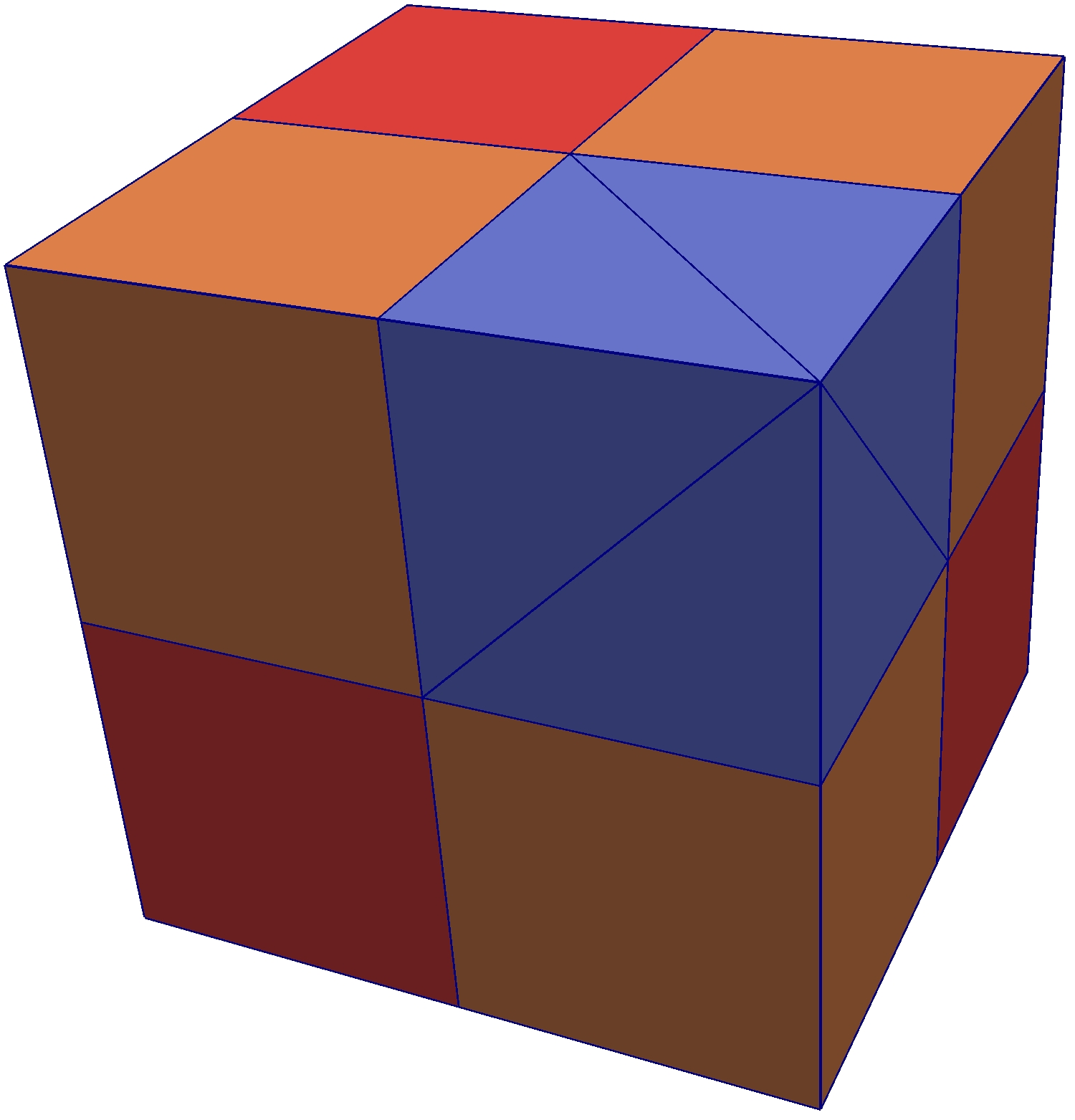}
 \includegraphics[width=0.47\textwidth]{./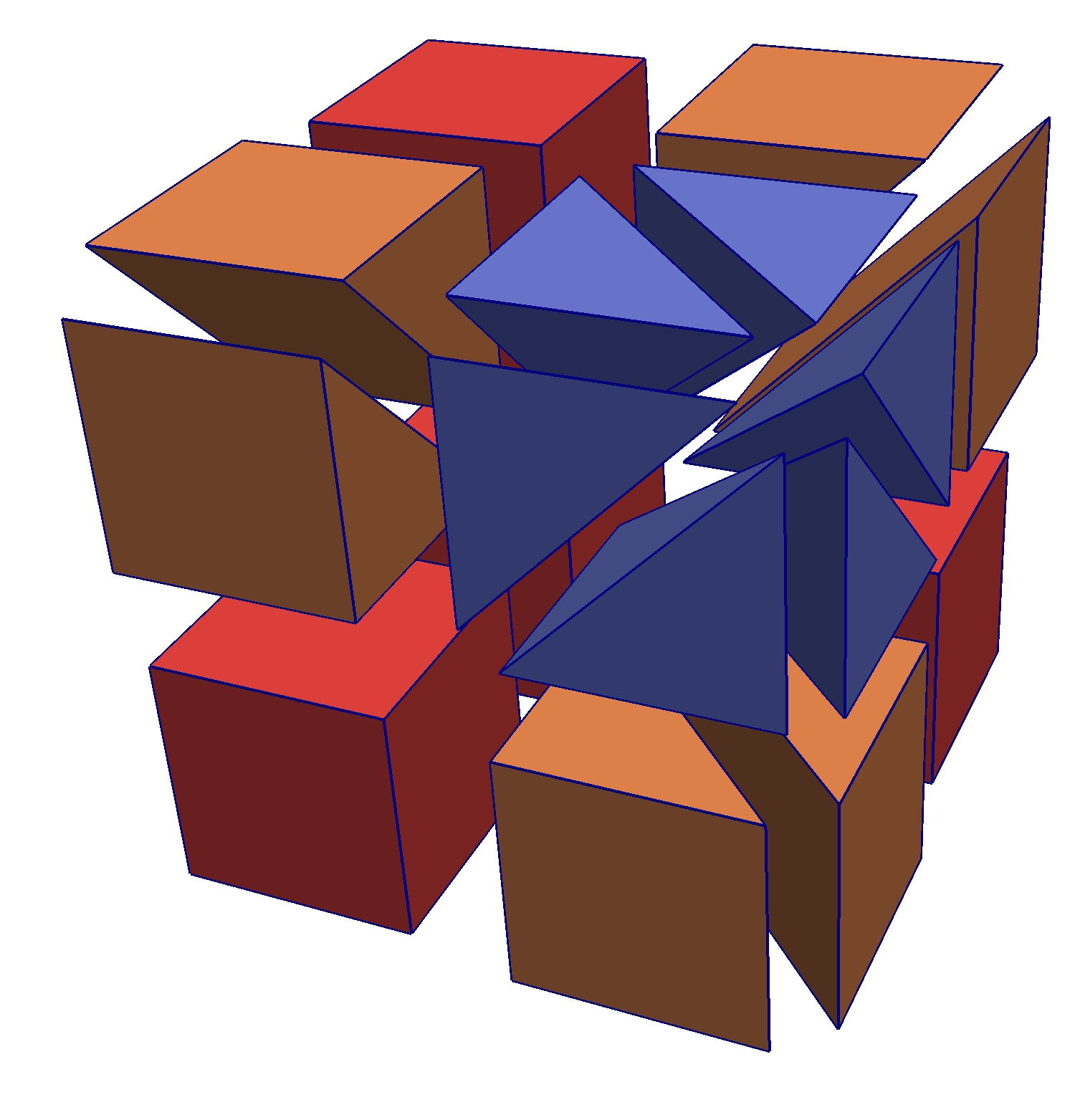}\\
 \includegraphics[width=0.52\textwidth]{./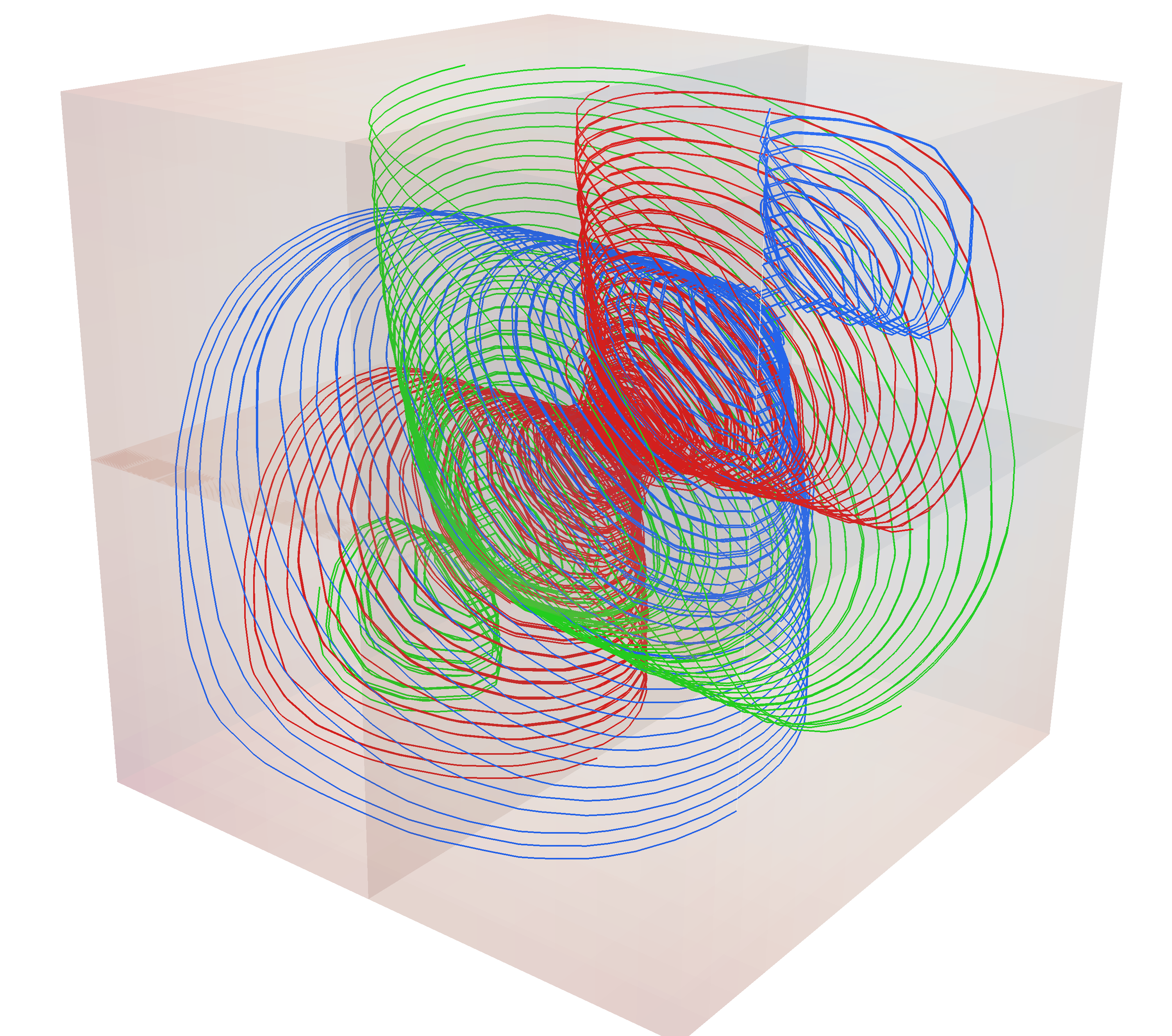}\hfill
 \includegraphics[width=0.47\textwidth]{./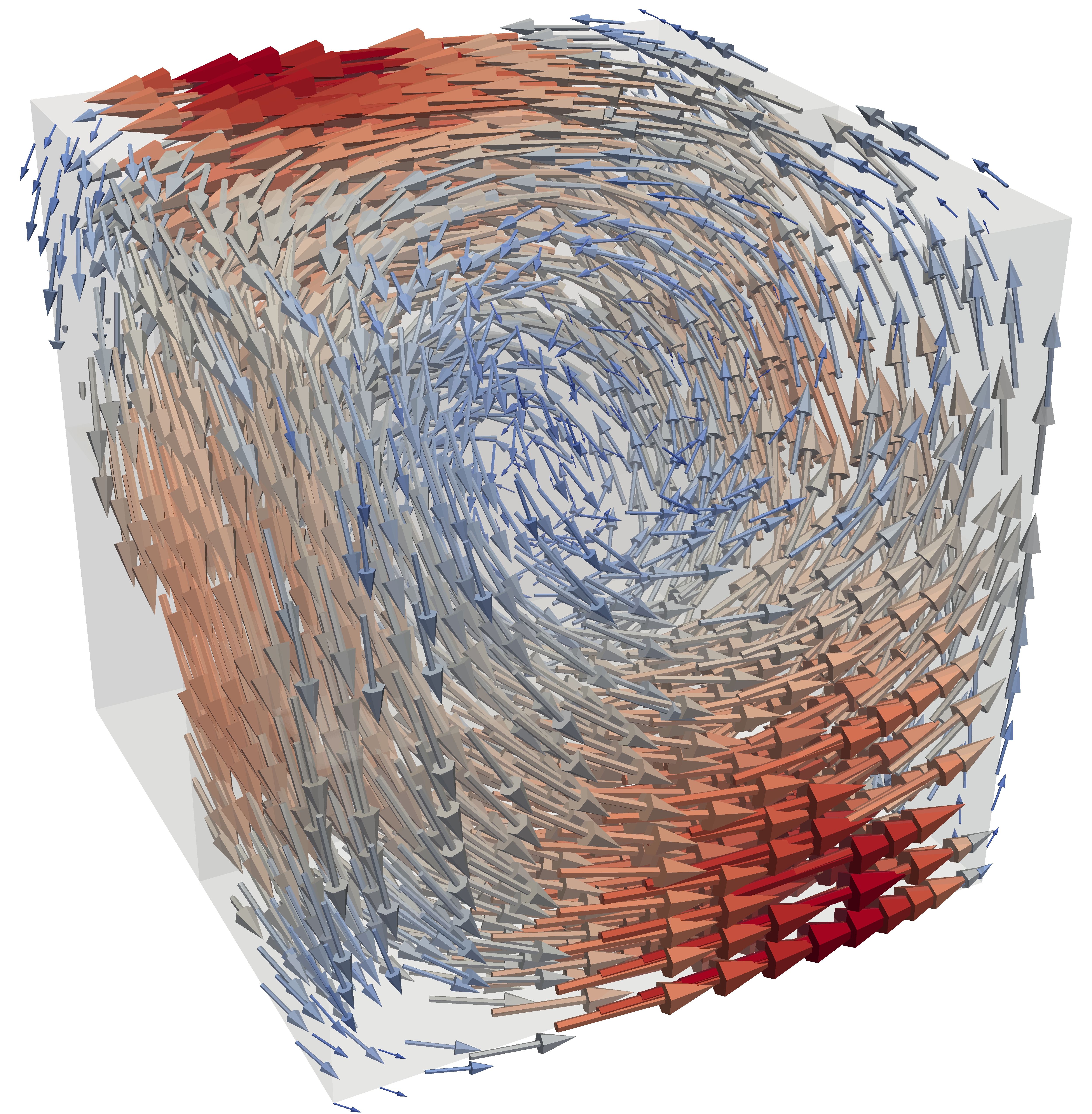}
\caption[Illustration of a 3D flow.]
{Illustrations of our 3D test case.
Top row: The 3D hybrid coarse mesh of the unit cube consisting of six
tetrahedra (blue), six prisms (orange), and four hexahedra (red). The
right-hand side picture shows an exploded view of the coarse mesh.
Bottom left: Streamlines of the flow $u^f$ from~\eqref{eq:cubeflow} with
$f(x)=\sin(\pi x)$; the colors serve the purpose of
distinguishing the streamlines. Bottom right: Some flow vectors $u^f(x)$; color
and size of the arrows indicate the magnitude $\|u^f(x)\|$ of the flow.
We observe that the flow rotates around a diagonal line through the cube
and that there is no outflow, as shown by Lemma~\ref{lem:ufdivfree}.}
\figlabel{fig:3Dflowincube}
\end{figure}

\begin{figure}
\center
\includegraphics[width=0.95\textwidth]{./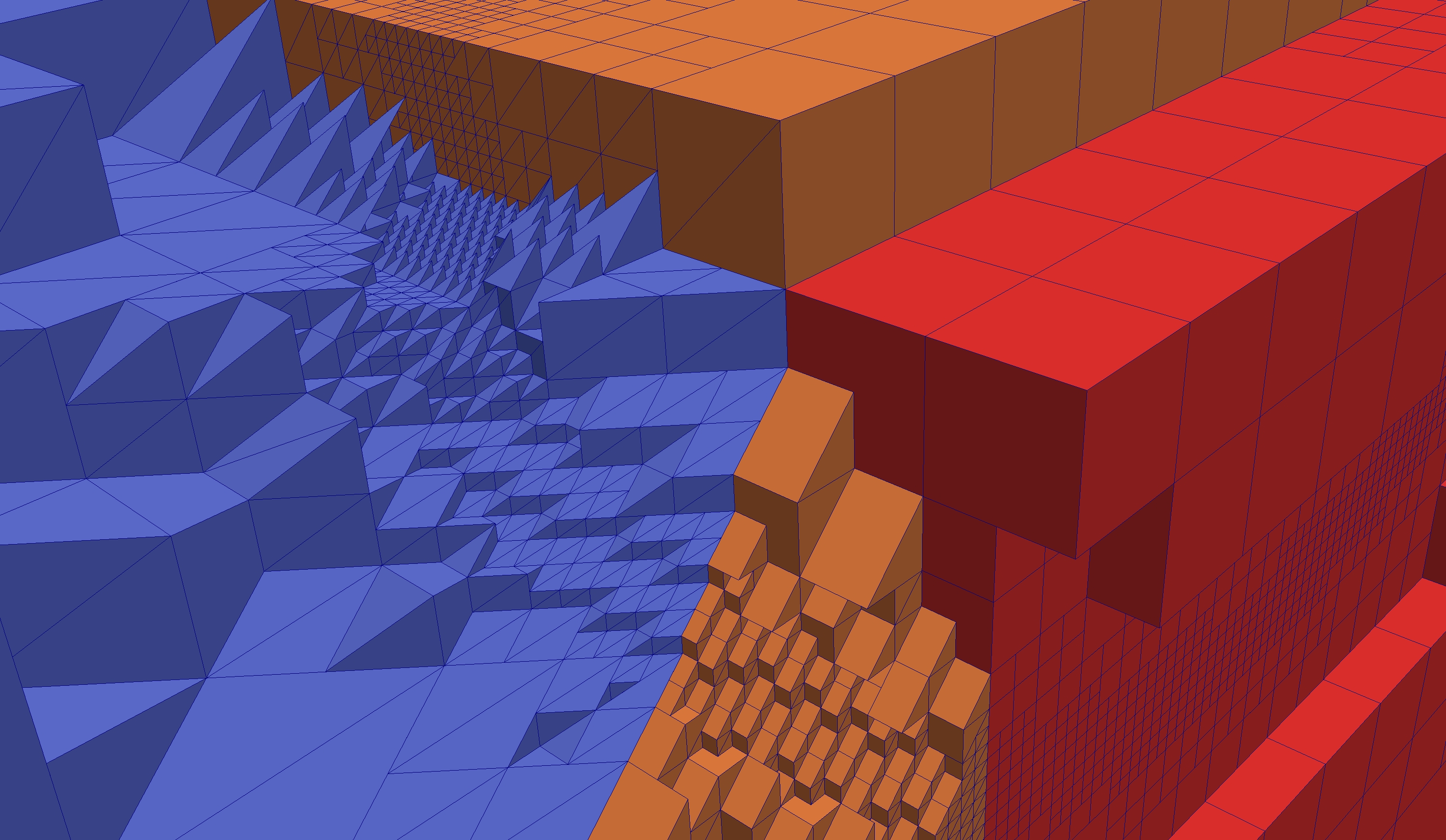}
\caption[A cutout view of a hybrid 3D mesh] 
{A view inside the hybrid 3D mesh
modelling the unit cube (cf.\ Figure~\ref{fig:3Dflowincube}).
The picture shows the region near the center of the unit cube where
tetrahedra (blue), prisms (orange), and hexahedra (red) are close to
each other. We see in detail how the prisms serve to transition from 
tetrahedra to hexahedra.
We use parameters $\ell=2$ and $r=5$ to generate this picture, thus the
finest elements are of refinement level 7.}
\figlabel{fig:hybrid-3d-cutout}
\end{figure}

\subsection{Convergence tests}
We perform uniform and adaptive runs for quadrilaterals, triangles, hexahedra,
tetrahedra, and hybrid meshes on $\Omega = [0,1]^d$ ($d$ being the dimension)
with small values of $\ell$ and $r$ in order to verify
the convergence of our method in terms of $\mathcal E_t^{\vol}$.
For each coarse mesh type we fix a CFL number $C$. Hence, each time we increase
the level by one, $\Delta t$ is divided by two, doubling the number of
time steps. We expect a first order convergence of the error at the final time.
In Table~\ref{tab:advecterrorvol1} we display the results for the 2D and 3D
test cases above and we indeed observe a first order convergence rate.
We additionally display the number of elements, which is constant over all
time steps in the uniform case and an average over all time steps for the
adaptive runs.
We see that with the adaptive runs we may achieve approximately the same error
as with uniform runs of the same maximal level while using significantly less
mesh elements.

\begin{table}\center
\scalebox{0.95}{
 \begin{tabular}{|c|c|r|r|}
  \hline
  \multicolumn{4}{|c|}{2D quadrilaterals, $C=0.5$}\\ \hline
  $\ell$ & $r$ & mesh size & $\mathcal E_1^{\vol}$ \\ \hline
  4&0&  256 & 95.7\%\\
  5&0& 1,024 & 60.2\%\\
  6&0& 4,096 & 33.7\%\\
  7&0&16,384 & 18.2\%\\\hline
  3&3& 3,613 & 34.0\%\\
  4&3& 7,475 & 21.9\%\\
\hline
 \end{tabular}
 \begin{tabular}{|c|c|r|r|}
  \hline
  \multicolumn{4}{|c|}{2D triangles, $C=0.1$}\\ \hline
  $\ell$ & $r$ &  mesh size & $\mathcal E_1^{\vol}$ \\ \hline
  4&0&  512 & 70.4\%\\
  5&0& 2,048 & 41.2\%\\
  6&0& 8,192 & 22.5\%\\
  7&0&32,768 & 11.4\%\\\hline
  3&3& 5,358 & 23.5\%\\
  4&3&11,542 & 15.0\%\\
\hline
 \end{tabular}
 \begin{tabular}{|c|c|r|r|}
  \hline
  \multicolumn{4}{|c|}{2D hybrid, $C=0.1$}\\ \hline
  $\ell$ & $r$ & mesh size & $\mathcal E_1^{\vol}$ \\ \hline
3 & 0 &  384  & 93.8\% \\
4 & 0 & 1,536 & 60.9\% \\
5 & 0 & 6,144 & 34.1\% \\
6 & 0 &24,576 & 18.7\% \\\hline
2 & 3 & 3,948 & 35.8\%\\
3 & 3 & 8,323 & 24.9\%\\\hline
\end{tabular}}\\[3ex]
\scalebox{0.95}{
 \begin{tabular}{|c|c|r|r|}
  \hline
  \multicolumn{4}{|c|}{3D hexahedra, $C=0.25$}\\ \hline
  $\ell$ & $r$ & mesh size & $\mathcal E_1^{\vol}$ \\ \hline
  4&0&    4,096 & 94.2\%\\
  5&0&   32,768 & 63.2\%\\
	6&0&  262,144 & 34.5\%\\
  7&0&2,097,152 & 18.1\%\\\hline
  3&3& 113,851 & 34,8\%\\
  4&3& 393,079 & 19.44\%\\
\hline
 \end{tabular}
 \begin{tabular}{|c|c|r|r|}
  \hline
  \multicolumn{4}{|c|}{3D tetrahedra, $C=0.1$}\\ \hline
  $\ell$ & $r$ & mesh size & $\mathcal E_1^{\vol}$ \\ \hline
  4&0&    24,576 & 70.3\%\\
	5&0&   196,608 & 41.0\%\\
  6&0& 1,572,864 & 21.8\%\\
  7&0&12,582,912 & 11.3\%\\\hline
  3&3&   652,615 & 22.0\%\\
  4&3& 2,305,350 & 12.1\%\\ 
\hline
 \end{tabular}
\begin{tabular}{|c|c|r|r|}
  \hline
  \multicolumn{4}{|c|}{3D hybrid, $C=0.1$}\\ \hline
  $\ell$ & $r$ & mesh size & $\mathcal E_1^{\vol}$ \\ \hline
3&0&    8,192 & 93.4\%\\
4&0&   65,536 & 63.7\%\\
5&0&  524,288 & 34.7\%\\
6&0&4,194,304 & 18.1\%\\\hline
2&3&  173,635 & 35.1\%\\
3&3&  559,860 & 19.8\%\\
\hline
 \end{tabular}
}
\caption[Verifying convergence of the advection solver]
{We verify convergence of the advection solver in terms
of volume loss of $\Omega^2$ at time $t=1$. We test six different coarse meshes: 2D
quadrilateral (top left), 2D triangles (top middle), 2D hybrid (top right), 3D
hexahedra (bottom left), 3D tetrahedra (bottom middle), and 3D hybrid (bottom
right); see also Figures~\ref{fig:3Dflowincube} and~\ref{fig:2dquadtrihybrid}. 
In each case the domain $\Omega$ is the unit cube
of the corresponding dimension. In 2D we use as flow $u$ a rotation around the 
center of the cube from equation~\eqref{eq:2dflow}, see also
the top left of Figure~\ref{fig:rotation2d}; in 3D we use 
the flow $u$ from equation~\eqref{eq:cubeflow} with $f(x)=\sin(\pi x)$;
see Figure~\ref{fig:3Dflowincube} for an illustration.
In the top part of each table, we list a uniform level $\ell$ test with
increasing values of $\ell$. We fix the CFL number $C$, which results in
doubling the number of time steps each time the level increases.
In the bottom part of each table, we show the results for adaptive refinement
to level $\ell+3$. For the refinement criterion~\eqref{eq:refcrit} we use a
band width of $b=4$.  In each case we verify a first order convergence rate.}
\figlabel{tab:advecterrorvol1}
\end{table}

\subsection{Large scale tests}

In this section We investigate the runtimes and scaling behavior of the solver
with the 3D tests with tetrahedra on the JUQUEEN supercomputer~\cite{Juqueen}.
For all of the following runs we use 16 MPI ranks (processes) per compute
node, thus 1 per compute core.

Firstly, we compare the amount of runtime spent in the AMR routines with the
actual time spent using the FV solver. To this end, we perform a
strong scaling tests with
16,384 processes, 32,768 processes, and 65,536 processes
and an average number of tetrahedra over all time steps of 8,344,140.
We display the results in Figure~\ref{fig:adv-strongscale-tet}, where we show
the runtimes of the different AMR routines and their respective proportion of
the overall runtime.
We exclude the runtime of the \texttt{Ripple-Balance} routine here, since
it is non-optimized and would distort the results.
We notice that the percentage of time spent in the AMR routines increases
with decreasing number of elements per process from around 35\% to 50\%,
which is expected for strong scaling with small numbers of elements per process.
Using a more involved numerical solver---for example higher order FV or DG---would
certainly decrease these percentages, since then the amount of solver time per
mesh element would increase while the time spent in AMR routines would remain
the same.

\begin{figure}
 \center
 \includegraphics[width=0.49\textwidth]{./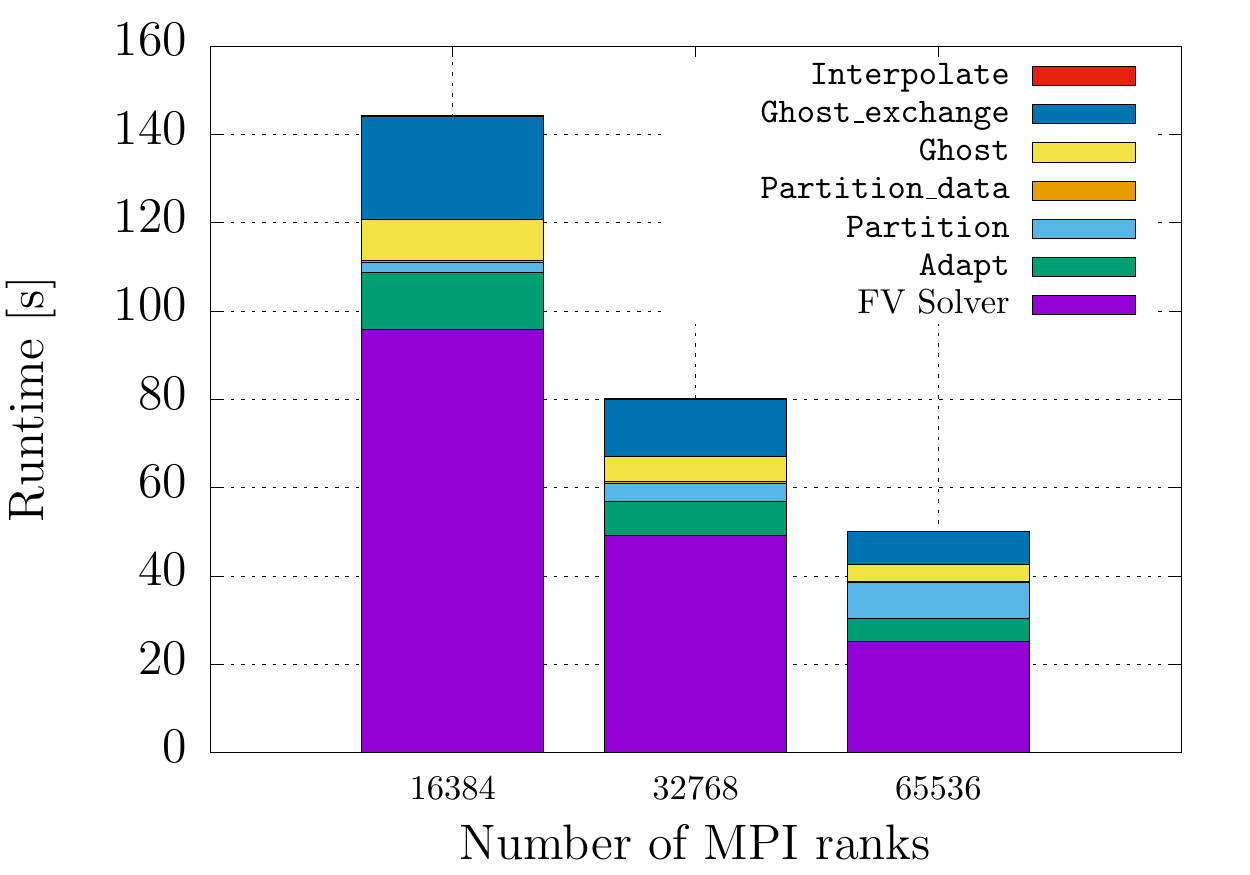}
 \includegraphics[width=0.49\textwidth]{./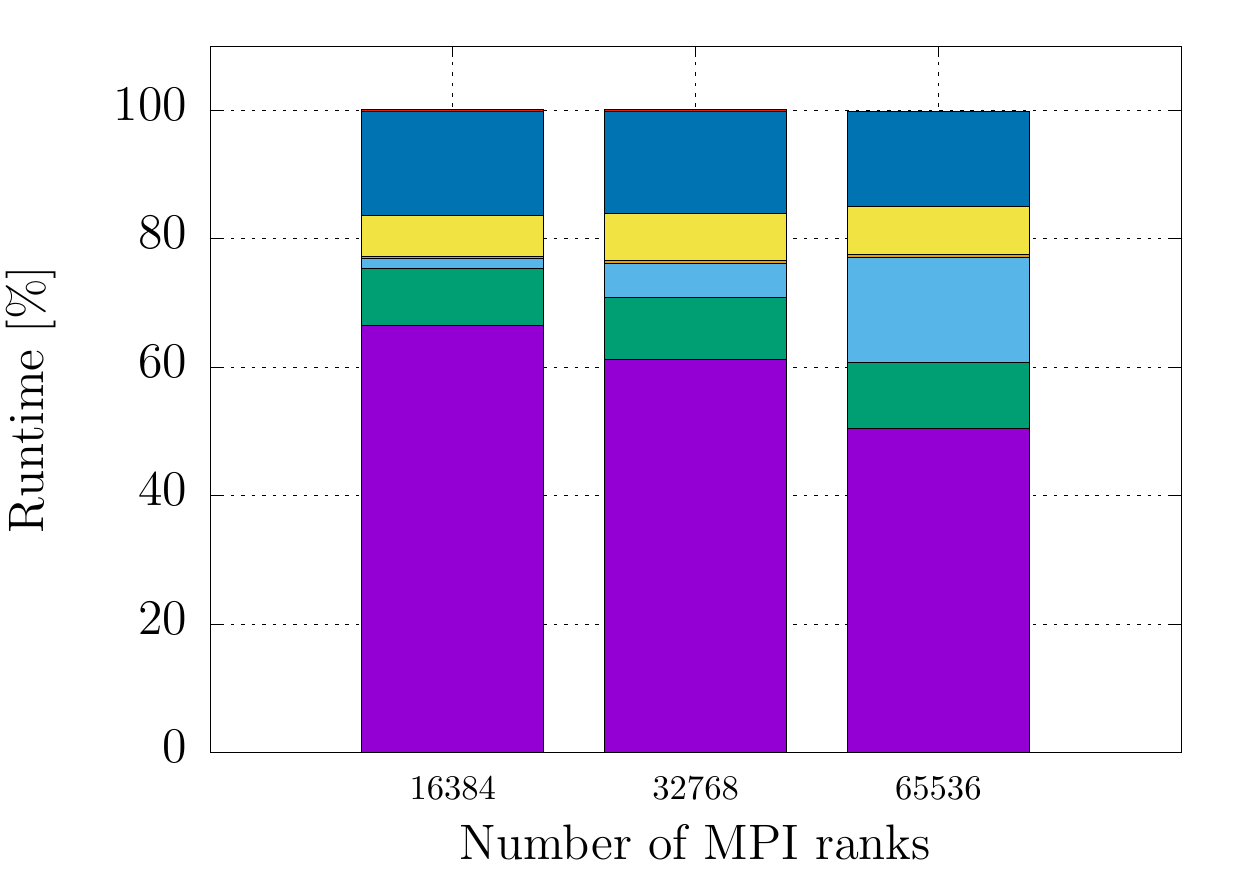}
  \caption[Strong scaling results for the advection solver]{
Strong scaling with tetrahedral elements. Using $\ell = 4$ and $r = 4$,
the maximum refinement level is $8$. The CFL number is $C=0.1$ and the
band-width parameter for refinement is $b=4$.  The average count of elements is
8,344,140. 
Left: Total runtime of the different AMR routines and the solver.
Right: Relative runtimes of the same methods. We observe that with decreasing
number of elements per process the relative runtime of AMR routines increases
from approximately 35\% to 50\%.
Since the CFL number is $0.1$ we only change the mesh in every 10-th time step.
Thus, \texttt{Adapt}, \texttt{Partition}, \texttt{Partition\_data},
\texttt{Ghost}, and \texttt{Interpolate} are only called in these time steps,
while \texttt{Ghost\_exchange} is called in every time step, explaining the
relatively large portion of runtime taken up by \texttt{Ghost\_exchange}.
}
\figlabel{fig:adv-strongscale-tet}
\end{figure}

Secondly, we perform strong scaling tests with 131,072 processes, 262,144
processes, and 458,752 processes, each case consisting of about $2.3\e9$ mesh
elements.
The equivalent uniform mesh would have more than $50\e9$ elements.
We use a short simulation time of $T=0.0005$ to keep the overall runtimes below
100 seconds.
We display the scaling results in Figure~\ref{fig:adv-strong-scale-large}---not
counting the runtime of the non-optimized \texttt{Ripple-Balance}---split up
into runtime of the AMR routines only and total runtime (AMR + solver).
Furthermore, we list the exact runtimes in
Table~\ref{tab:adv-strong-large-scale} and compute the parallel efficiency.
Compared to the base-line run with 131,072 processes we obtain an ideal strong
scaling efficiency on 458,752 processes if we do not take
\texttt{Ripple-Balance} into account.
Including \texttt{Ripple-Balance}, the efficiency is still above 90\%
for 262,144 processes, and 86.2\% (AMR), respectively 89.6\% (total), for
458,752 processes. We conclude that providing a state-of-the-art
\texttt{Balance} has the potential to increase the scaling behavior to the full
100\%.

\begin{figure}
 \includegraphics[width=0.68\textwidth]{./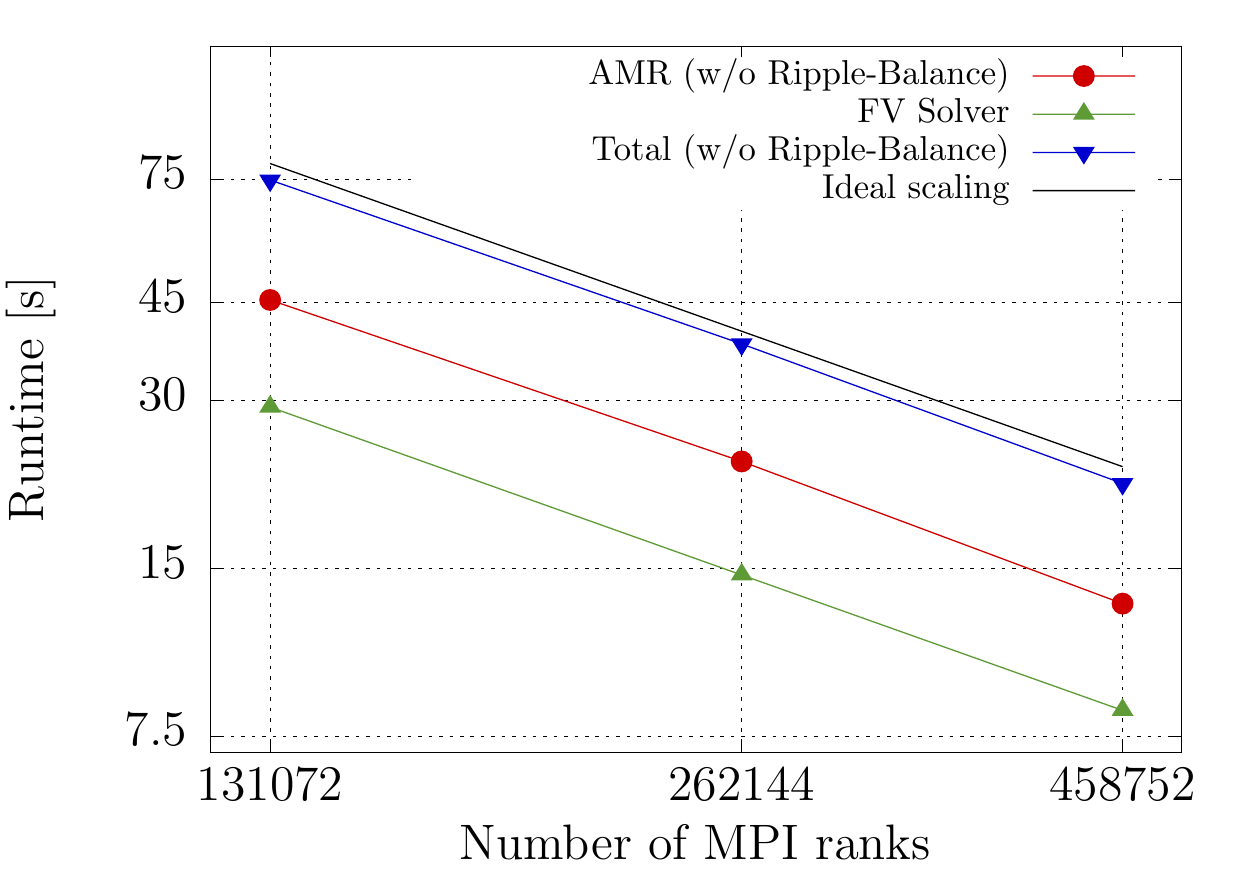}
\caption[Strong scaling results for the advection solver (full JUQUEEN)]
{ Strong scaling results with tetrahedra with 131,072 processes, 262,144
processes, and 458,752 processes. We use $\ell = 5$, $r = 6$, resulting in
2.3\e9 mesh elements. In order to decrease the overall runtime, we 
pick a simulation end time of $T=0.0005$ and adapt every $10$-th time step.
We show the runtime split between the FV solver (green triangles pointing up)
  and the AMR routines (red dots), as well as the total runtime (blue triangles
  pointing down). The AMR routines are
\texttt{Adapt}, \texttt{Partition}, \texttt{Ghost}, \texttt{Ghost\_exchange},
\texttt{Partition\_data}, and \texttt{Interpolate}.
The black line represents the ideal scaling behavior. We reach this ideal
  behavior on 458,752 MPI ranks with a parallel efficiency of 100\% compared
  to the baseline with 131,072 MPI ranks.  See also
  Table~\ref{tab:adv-strong-large-scale} where we explicitly list
the runtimes.}
\figlabel{fig:adv-strong-scale-large}
\end{figure}

\begin{table}
  \center
  \begin{tabular}{|r|R{17ex}|R{17ex}|r|r|}
    \hline
        &  \multicolumn{2}{c|}{Runtime (without Ripple-Balance)}&\multicolumn{2}{c|}{par.\ efficiency}\\
    $P$ & AMR & Total  & AMR & Total\\\hline
    131,072 & 45.5s & 74.7s & -- & --      \\
    262,144 & 23.3s & 37.9s & 97.6\% & 98.5\%  \\
    458,752 & 13.0s & 21.3s & 100.0\% & 100.0\% \\ \hline
  \end{tabular}\\[2ex]
  \begin{tabular}{|r|R{17ex}|R{17ex}|r|r|}
    \hline
        &  \multicolumn{2}{c|}{Runtime (including Ripple-Balance)}&\multicolumn{2}{c|}{par.\ efficiency}\\
    $P$ & AMR & Total  & AMR & Total\\\hline
    131,072 & 75.1s & 104.4s & -- & --      \\
    262,144 & 41.4s & 56.0s & 90.7\% & 93.2\%  \\
    458,752 & 24.9s & 33.3s & \phantom{1}86.2\% & \phantom{1}89.6\% \\ \hline
  \end{tabular}
\caption[Parallel efficiency of the advection solver]
{
In this table we present the runtimes and compute the parallel efficiency of
the AMR routines and the total runtime of the advection solver for the
tetrahedral strong scaling test on 131,072 up to 458,752 processes.
See also Figure~\ref{fig:adv-strong-scale-large}.
The top part of the table shows the runtimes without \texttt{Ripple-Balance},
we obtain a parallel efficiency of 100\% for the run on 458,752 processes
compared to the base line run with 131,072 processes.
In the bottom part, we include \texttt{Ripple-Balance}. The overall
parallel efficiency is then 86.2\% for the AMR routines and 89.6\% for 
the total runtime.
}
\figlabel{tab:adv-strong-large-scale}
\end{table}

\subsection{Comparison to uniform meshes}

It is sensible to ask whether we profit from using AMR at all. 
When we omit the adaptive refinement and use the initial level $\ell$ uniform mesh
in the solver, there is no overhead due to AMR algorithms and all of the
compute time is used to solve the actual numerical equation.
However, to reach the same accuracy, more mesh elements are needed, which in turn
results in a higher memory usage and possibly larger overall runtime.

In Table~\ref{tab:advect-adauni}  we compare adaptive and uniform runs on
tetrahedral meshes with 32,768 and 65,536 processes. In the uniform case we use
a refinement level of $\ell=8$ and in the adaptive tests we use an initial
level $\ell=4$ mesh which we adapt $r=4$ further levels, such that the finest
elements are of level $8$ as well.
First of all we notice that the adaptive meshes only use 8.3\% as many elements
as the uniform meshes ($8.3\e6$ and $100.6\e6$ elements) while resulting only
in a slightly larger computational error of $7.6\%$ volume loss compared to the
$5.8\%$ volume loss in the uniform case.
This large reduction in the number of elements points to a significant decrease
in memory usage.

Furthermore, we observe that the adaptive runs need less than half the
runtime as the uniform runs. These are total runtimes that include 
the relatively slow \texttt{Ripple-balance} routine and thus we can expect
that the overall gain would be even better with an optimized \texttt{Balance}
routine.
For applications with more highly localized physics, adaptivity may reduce the
number of elements in relation to uniform meshes by over three orders of
magnitude, which will lead to proportionally higher savings.

\begin{table}
  \center
  \begin{tabular}{|r|r|r|r|r|r|r|}
    \hline
    $P$ & $\ell$ & $r$ &   mesh size & $\mathcal E^{\vol}_1$ & time steps &Runtime \\\hline
     32,768 & 8 & 0   & 100,663,296 & 5.8\% & 5,196 & 469.9s \\
     32,768 & 4 & 4   &   8,336,500 & 7.6\% & 3,445 & 159.3s \\ \hline
     65,536 & 8 & 0   & 100,663,296 & 5.8\% & 5,196 & 240.3s \\
     65,536 & 4 & 4   &   8.339,130 & 7.6\% & 3,445 & 102.3s \\ \hline
  \end{tabular}
  \caption[Comparing uniform and adaptive meshes.]
{We compare adaptive and uniform runs of the same problem,
once with 32,768 and once with 65,536 processes. The finest refinement level
in all cases is $8$. The adaptive runs only use 8.3\% as many mesh elements
as the uniform runs and need less that half the runtime to obtain an only
slightly larger computational error. The runtimes in the last column are the
total runtimes of the solver and in particular include the non-optimized
\texttt{Ripple-balance} routine in the adaptive cases.}
\figlabel{tab:advect-adauni}
\end{table}

\subsection{A test with a larger coarse mesh}
We close this section with an application of the solver on a non-trivial
domain with a medium sized coarse mesh. 
We use the example of two-dimensional potential flow around a
disk~\cite{Batchelor00}. This flow is an analytical solution to the flow of an
incompressible fluid without viscosity around a disc with radius $R$ and
midpoint at the origin, with the flow being constant $1$ in $x$-direction and $0$
in $y$-direction far outside of the disc. In polar coordinates the flow field
$u$ is given by
\begin{subequations}
\label{eq:flowaround}
 \begin{align}
  u(r,\phi)_r &= \left(1-\frac{R^2}{r^2}\right)\cos(\phi),\\
  u(r,\phi)_\phi &= \left(-1-\frac{R^2}{r^2}\right)\sin(\phi).
 \end{align}
\end{subequations}
As radius we chose $R=0.15$, and we illustrate the flow in Figure~\ref{fig:flowaround}.

Since the flow is symmetric around the $x$-axis, we restrict our attention
to the region with $y$-coordinates greater or equal zero.
As domain $\Omega$ we choose the rectangle $[-0.5,1]\times[0,0.75]$
with the disk cut out.

We model this domain with a hybrid coarse mesh of 238 triangles and 351
quadri\-laterals as in Figure~\ref{fig:flowaroundcmesh}. The purpose of the
quadrilaterals is to properly resolve the flow close to the curved boundary. In
order to do so, we use a boundary layer of thin quadrilaterals that are
stretched in the direction of the flow, which we display in
Figure~\ref{fig:flowaroundcmesh} on the right.

\begin{figure}
\center
\includegraphics[width=0.8\textwidth]{./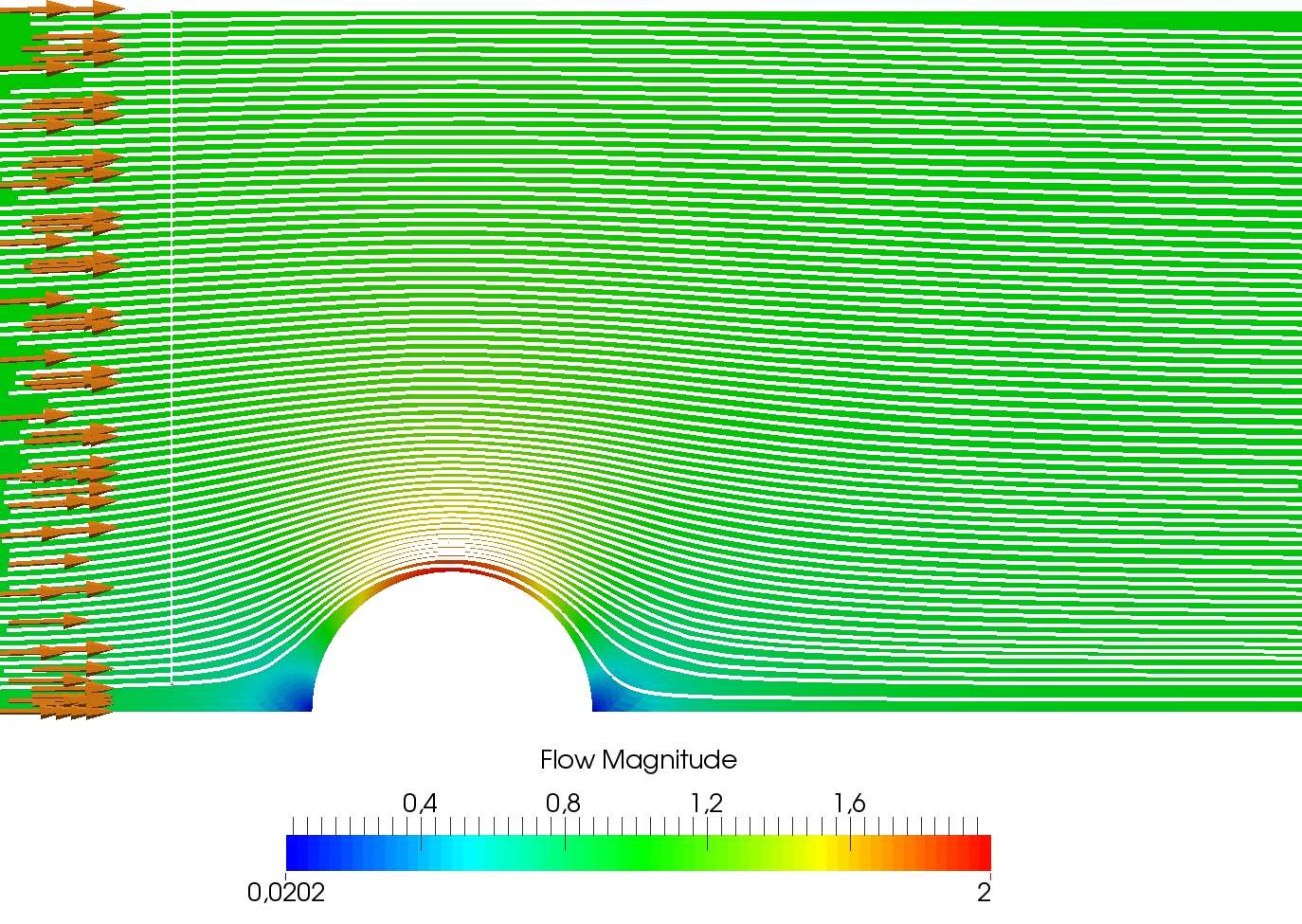}
\caption[2D flow around a disk.]
{Streamlines (white) of the flow \eqref{eq:flowaround} around a disk with
radius $R=0.15$. The arrows on the left-hand side indicate the inflow vector
(orange) while the background color indicates the magnitude of the flow velocity.%
}
\figlabel{fig:flowaround}
\end{figure}

\begin{figure}
\center
\includegraphics[width=0.55\textwidth]{./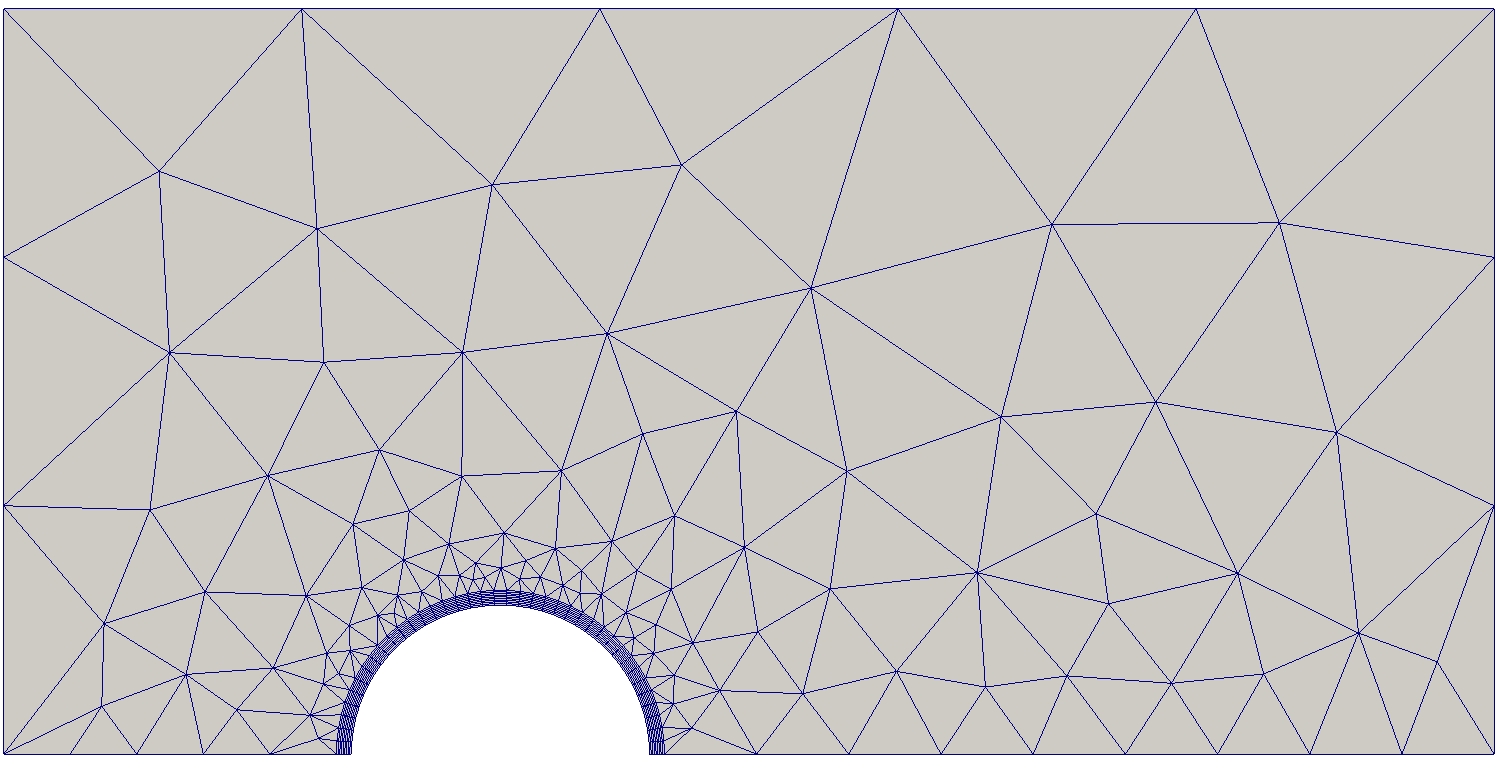}
\raisebox{0.247ex}{
\includegraphics[width=0.348\textwidth]{./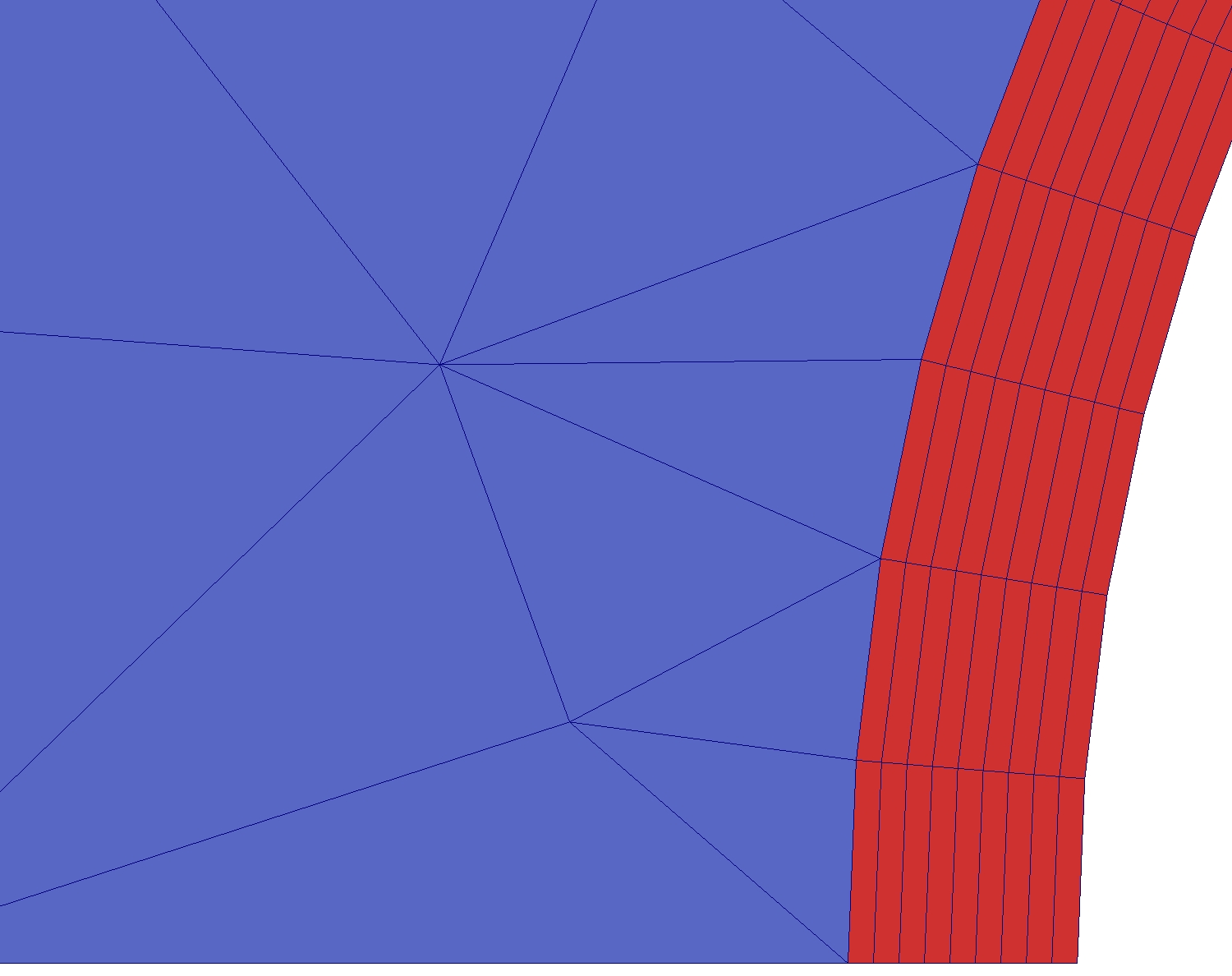}}
\caption[The coarse mesh that we use to model the domain with a disk cut out.]
{Left: We model the rectangle $[-0.5,1]\times[0,0.75]$ with the disk of
radius $0.1$ cut out as a hybrid triangle/quadrilateral mesh with 238 triangles
  and 351 quadrilaterals. Right (zoomed in): We use the quadrilaterals (red) to resolve the flow
close to the curved boundary and triangles (blue) to mesh the remaining 
  domain. This mesh was created with \texttt{Gmsh}~\cite{GeuzaineRemacle09}.
Throughout the mesh the sizes of the coarse elements differ by several orders
of magnitude, which motivates taking the volume of an element into account
  when refining according to~\eqref{eq:refcrit}.}
\figlabel{fig:flowaroundcmesh}
\end{figure}

\begin{remark}
We modify our refinement criterion. It is still sensible to refine in a band
around the zero level-set as in~\eqref{eq:refcrit}. However, the sizes of the
coarse mesh elements differ largely. In particular the quadrilateral elements
at the circle boundary are very small compared to the triangular elements
filling the rest of the domain. In the final refined mesh, all elements close
to the zero level-set should have approximately the same size. In order to
achieve this, we refine an element if it fulfills criterion~\eqref{eq:refcrit}
only if its volume is above a certain threshold, which is determined by the
volume of the smallest elements in the mesh; see
Figure~\ref{fig:flowaround-sim-ref} for an illustration.
We thus replace the coarsest and finest level restriction decribed in
Section~\ref{sec:refcrit} by a largest and smallest volume restriction.
\end{remark}

As initial level-set function $\phi_0$ we choose the signed distance function
to a circle with radius $0.1$ and midpoint $\begin{pmatrix} 0.2, &
0.11\end{pmatrix}^t$. In particular, this means that the zero level-set is 
advected close to the curved boundary.

In Figure~\ref{fig:flowaround-sim} we show six different time steps of the
computation with initial uniform level $\ell=1$ and $r=4$ adaptive refinement
levels. It is clearly visible how close the zero level-set is to the circular
hole.

\begin{figure}
\center
\includegraphics[width=0.95\textwidth]{./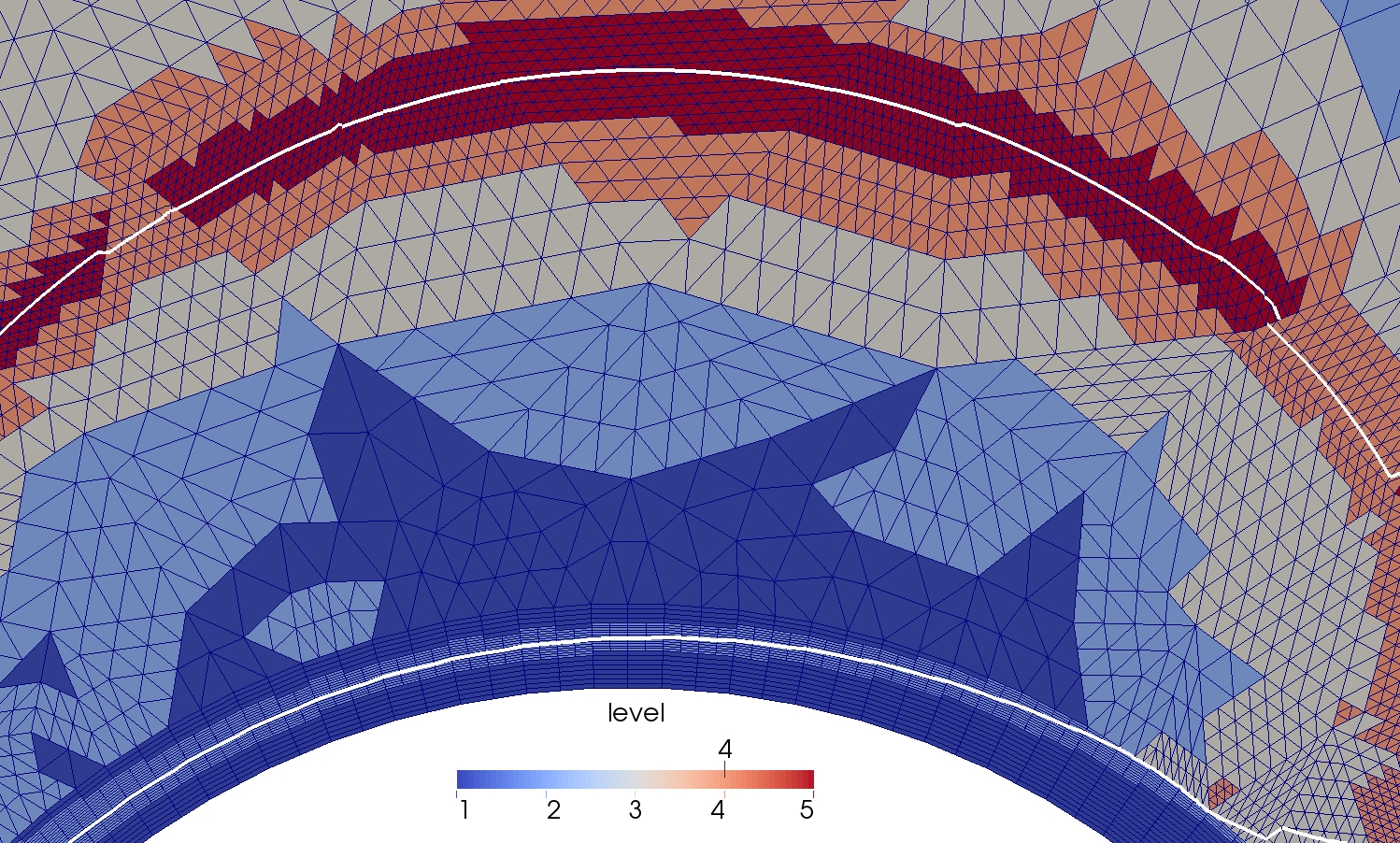}
  \caption[Refinement criterion with different coarse mesh element sizes.]
  {Zoomed in: The different refinement levels close to the zero level-set (white).
We refine an element if it is close to the zero level-set and its volume is
larger than a given lower bound. Thus, the finest elements close to the zero level-set all have
comparable volumes. The difference in refinement levels is observed 
when we compare the level 5 triangles (red) in the top part of the image
with the level 2 quadrilaterals (light blue) in the bottom part.}
  \figlabel{fig:flowaround-sim-ref}
\end{figure}

\begin{figure}
\center
\includegraphics[width=0.49\textwidth]{./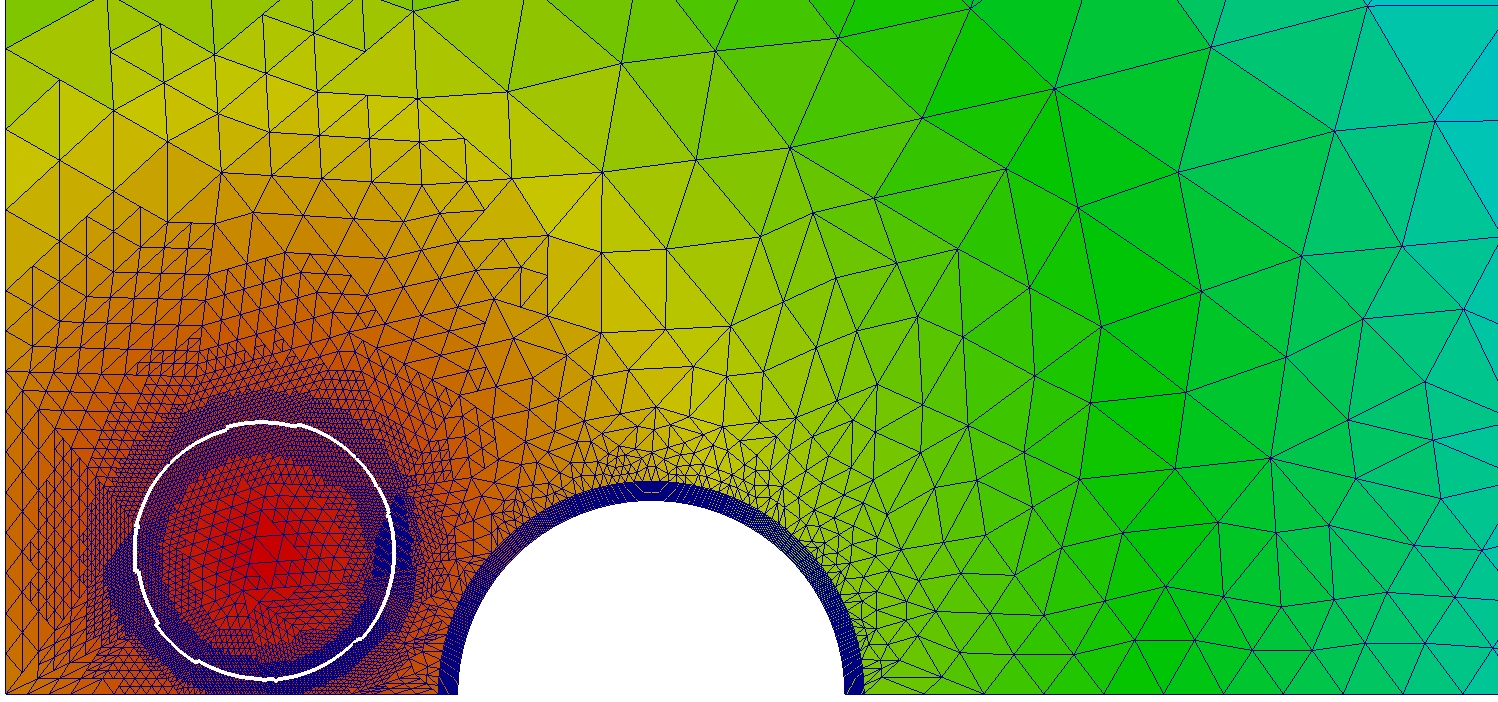}
  \includegraphics[width=0.49\textwidth]{./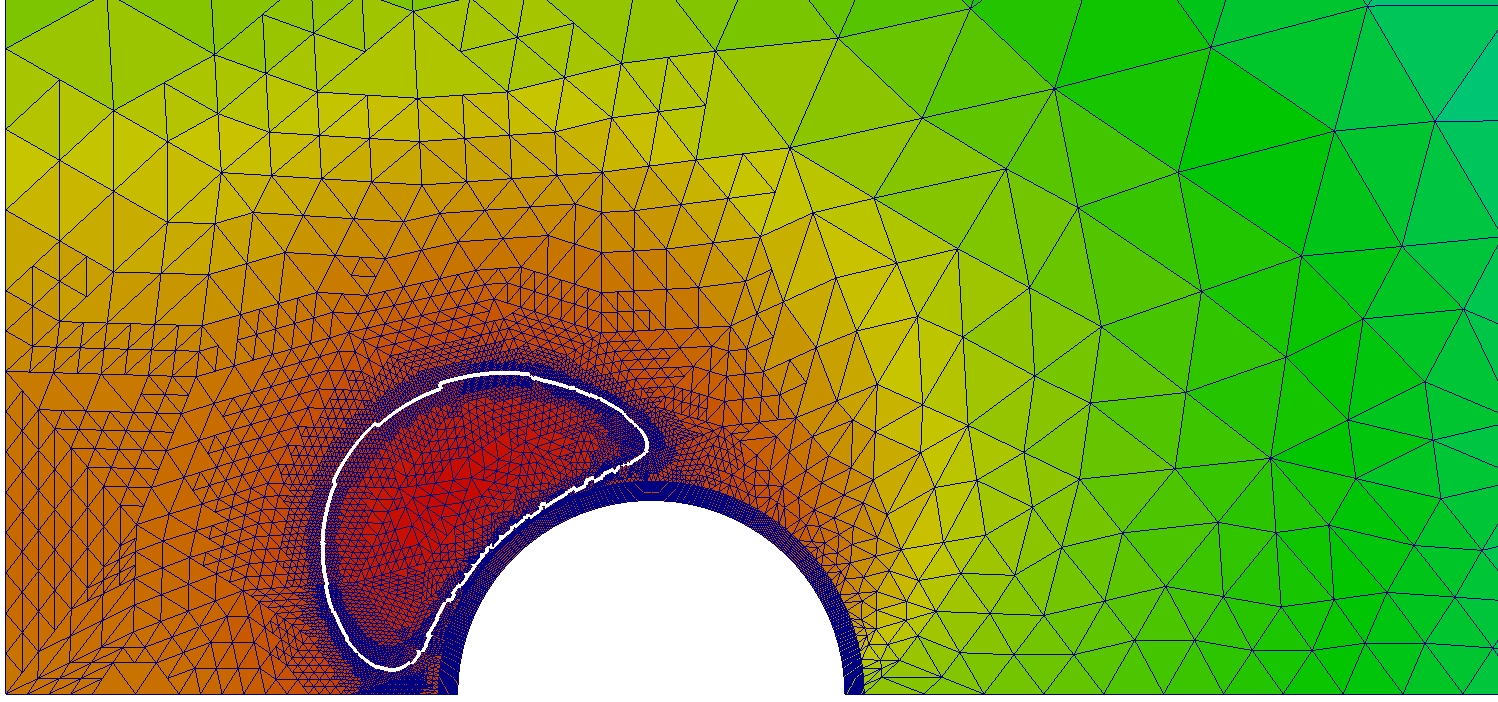}\\[1ex]
\includegraphics[width=0.49\textwidth]{./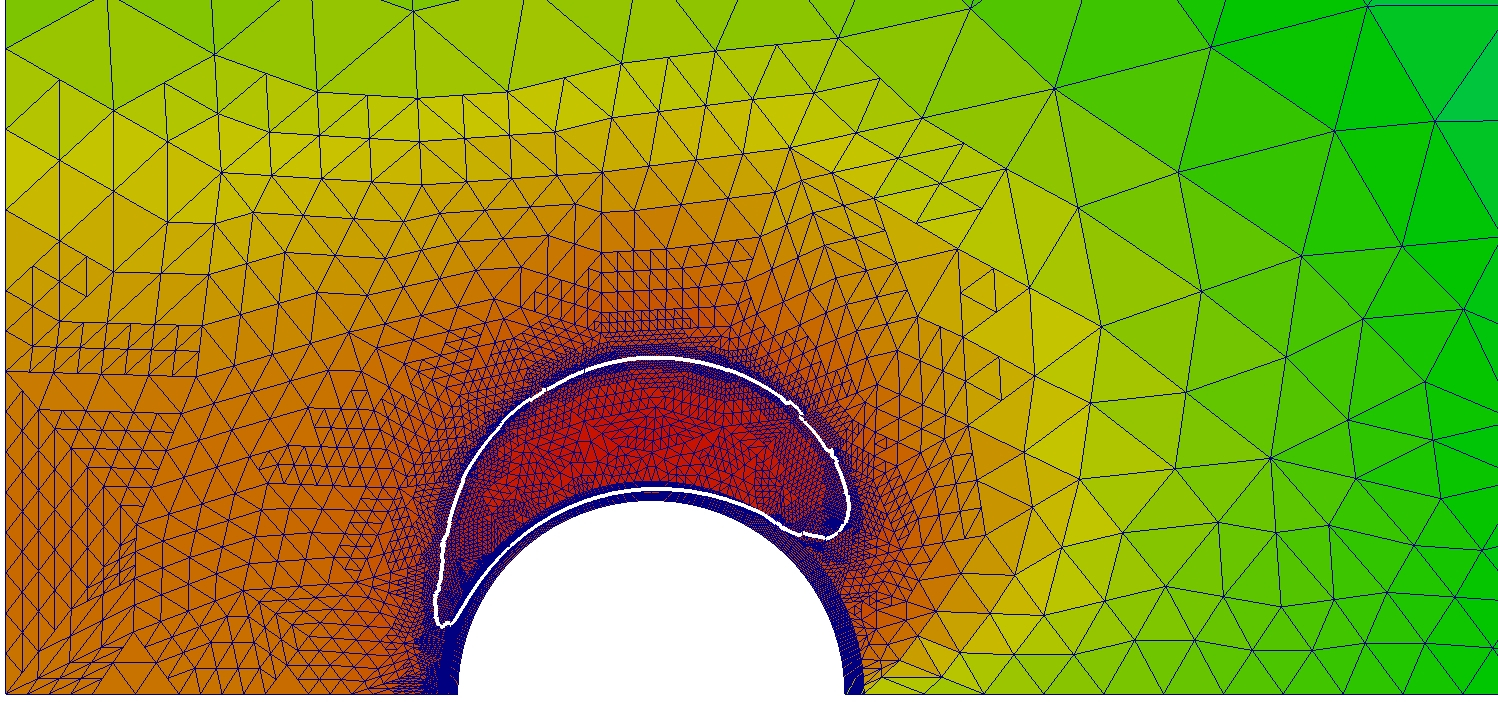}
  \includegraphics[width=0.49\textwidth]{./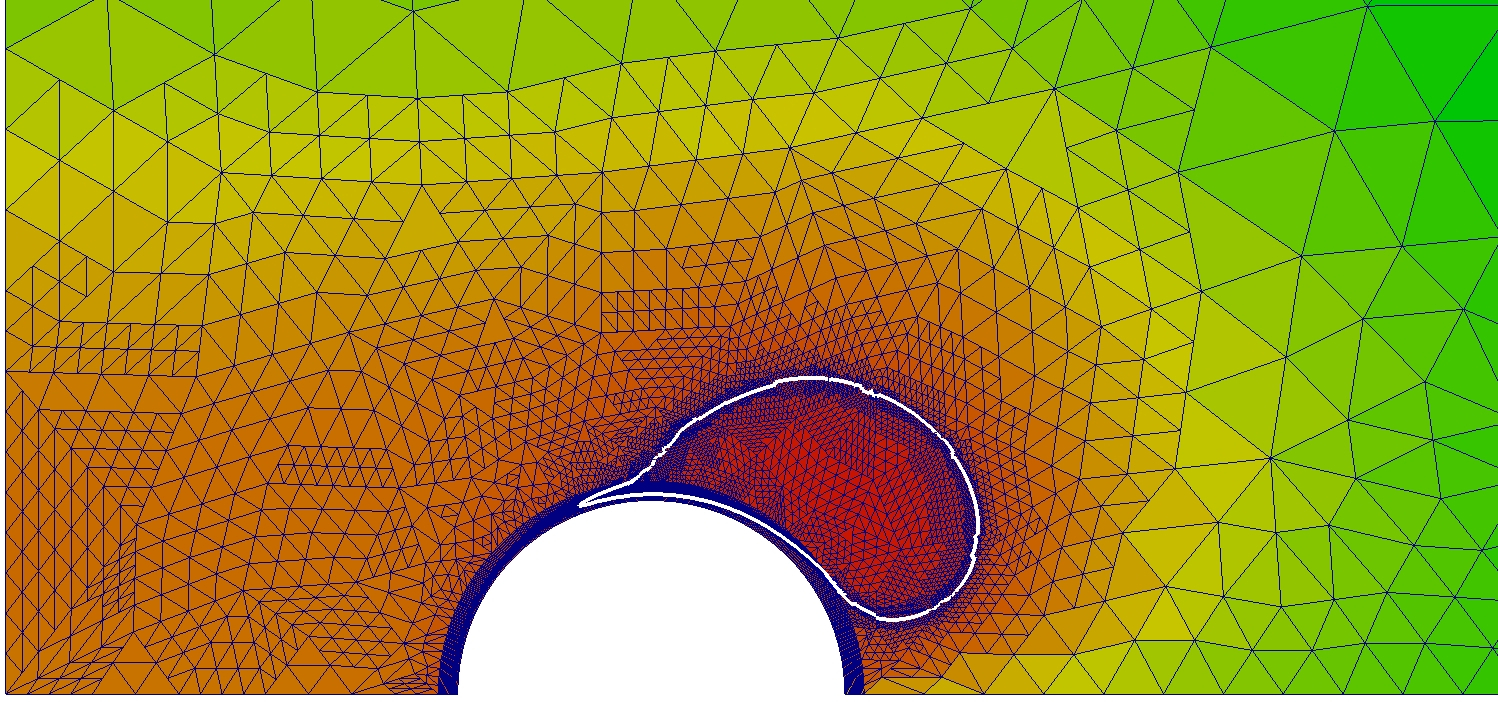}\\[1ex]
\includegraphics[width=0.49\textwidth]{./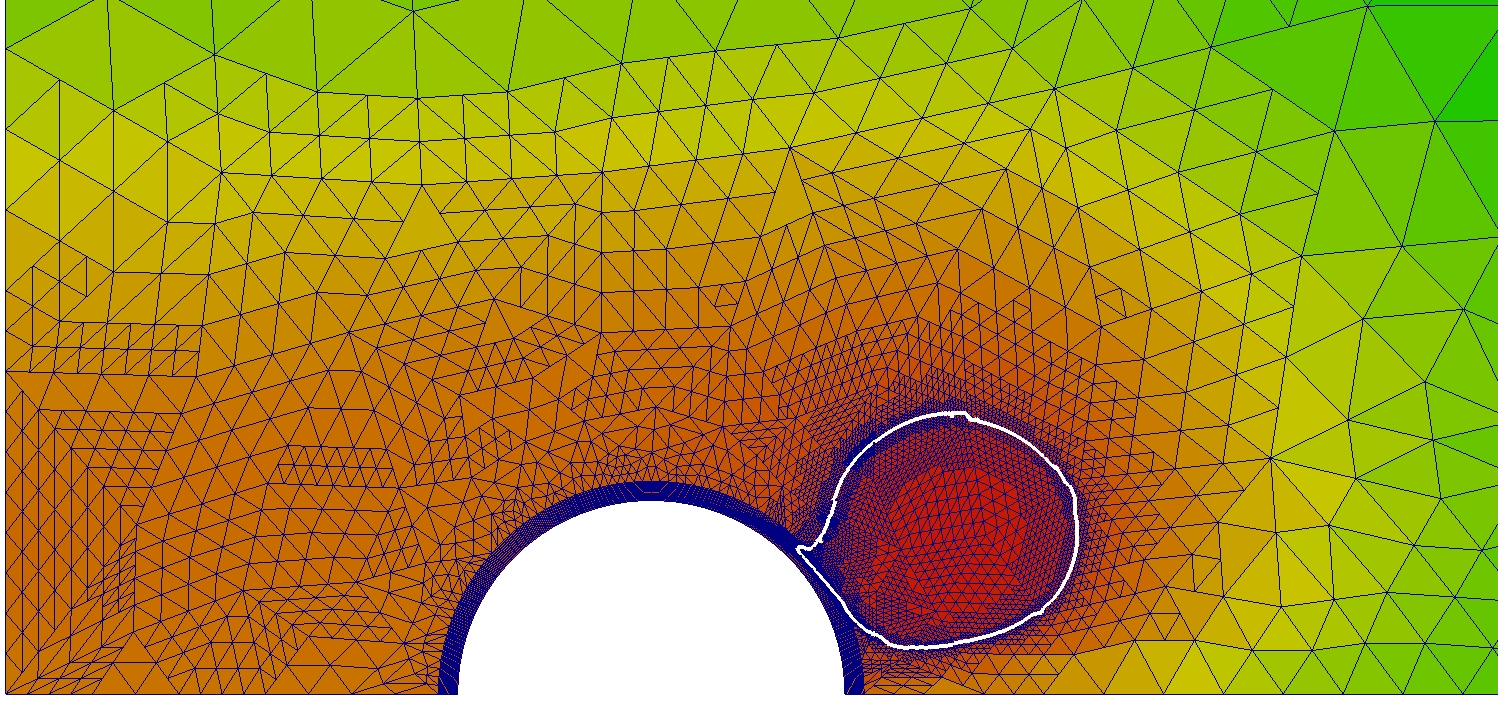}
  \includegraphics[width=0.49\textwidth]{./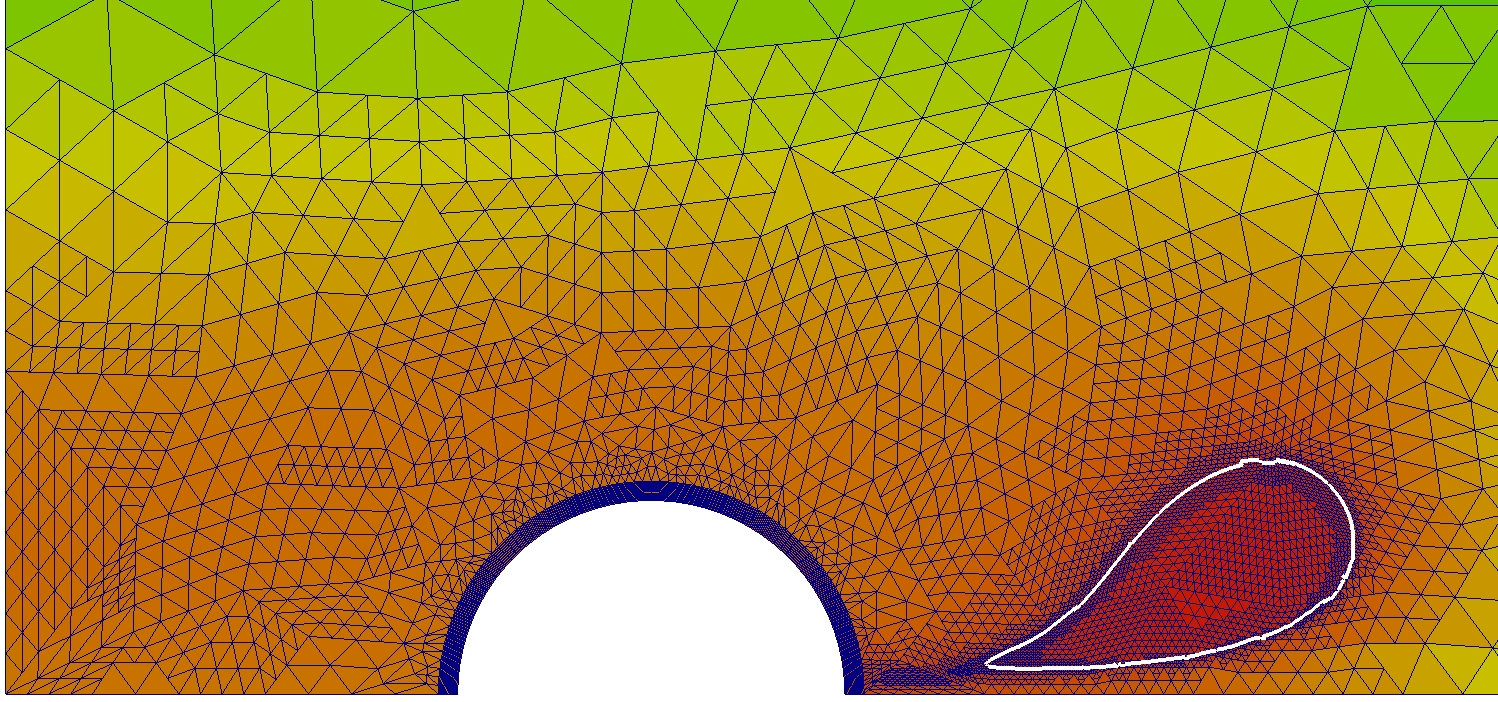}\\
\includegraphics[width=0.49\textwidth]{./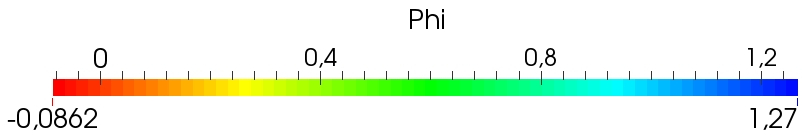}\\[3ex]
\caption[Solutions to the advection around a disk and the adapted mesh.]
{Six different time steps of the numerical solution
to the advection of a circular interface with the flow~\eqref{eq:flowaround}
  and maximum refinement level 5.
  The interface moves very close to the circular domain boundary. This movement
  is well resolved since we use flat quadrilaterals close to this boundary.
  In the more orange region to the left of the zero level-set we observe the 
  effect that $\phi$ looses its signed distance property, which we addressed in
  Remark~\ref{rem:reinit}. This results in elements keeping a finer level than necessary.}
\figlabel{fig:flowaround-sim}
\end{figure}
\chapter{Conclusion}
\label{ch:concl}
In this thesis we are concerned with the development of a new space-filling
curve (SFC) for parallel simplicial adaptive mesh refinement, as well as
scalable data structures and algorithms for parallel tree-based adaptive mesh
refinement (AMR) with general element types (such as for example triangles,
tetrahedra, quadrilaterals, hexahedra, and prisms) and support for hybrid
meshes.

Existing libraries for tree-based AMR focus on a single type of element and
SFC, such as for example \pforest~\cite{BursteddeWilcoxGhattas11} for
quadrilaterals/hexahedra with the Morton SFC. A major contribution of this
thesis is the development of an element-type independent approach by
separating the high-level AMR algorithms from the low-level element-type
implementation. By modifying and restructuring the existing meshing algorithms
we extend the AMR functionality to arbitrary element types.
In particular, to the best of our knowledge, we present the first discussion
and applications of tree-based AMR on tetrahedral and hybrid meshes.

While we applied our concept with little effort to mesh refinement and coarsening,
as well as partitioning, significant work was necessary to establish the
construction of a layer of ghost elements. 
Additionally, we devised a new communication reducing algorithm for coarse mesh
partitioning.

Besides a thorough theoretical investigation, we implemented the \tetcode AMR
library to demonstrate practicability and verified strong and weak scaling to
hundreds of thousands of parallel processes on current supercomputing systems.
To confirm that our algorithms provide all features required for a basic
numerical application, we implemented a finite volume solver for the advection
equation. We find that our implementation supports face-based numerical
applications in a modular and non-intrusive way.

\vspace{2ex}

A starting motivation for this thesis was to find an SFC for triangles and
tetrahedra that has similar properties to the Morton SFC for
quadrilaterals and hexahedra, such as fast computation via bitwise interleaving
and constant-time, level-independent low-level algorithms (constructing children,
face-neighbors, parents, etc.).
Though it was not certain that such an SFC exists, we were able to find a suitable
construction leading to the tetrahedral Morton (TM-) SFC.
In order to transfer the bitwise interleaving concept to simplices, we 
introduced the type of a simplex as additional information, and used the types
of an element's ancestors to interleave with the anchor node coordinates. 
Despite its similarities to the Morton SFC, segments of the TM-SFC can produce
more connected components when partioning the mesh in parallel, possibly
increasing the amount of parallel communication. 
We investigated this further and proved upper bounds for the possible counts and demonstrated that
in practical applications less than 7\% of the segments have more than three
connected components. Throughout this thesis we observed that the performance of
the TM-SFC is comparable to that of the Morton SFC.
Our discussions and examples show that the TM-SFC is an excellent choice for 
an SFC for triangles and tetrahedra and is a competitive addition to the 
multitude of existing SFCs.

In order to model complex domain geometries with arbitrarily large
numbers of trees, we investigated parallel partitioning of the coarse mesh.
We devised a new approach that reduces parallel communication while maintaining
the load-balance of the fine mesh. Because we duplicate a tree's data on all
processes that have local elements of this tree, repartitioning requires the
solution of an $n$-to-$m$ communication problem.
 We developed a sophisticated repartitioning algorithm 
to ensure that despite information being duplicated, no data is sent multiple
times. Additionally, we incorporated the communication of ghost trees
and demonstrated scaling on up to 917\e3 processes with tetrahedral and
hexahedral tree types and meshes of up to 371\e9 trees with faster
execution times than the fine mesh partitioning. 

For our development of element-type independent high-level AMR algorithms, the
\texttt{Ghost} algorithm to create a layer of ghost elements at each process
turned out to be the most challenging.
The most complex step in this algorithm is constructing
the face-neighbors of a given element for which we formulate a new element-type
independent algorithm.
With our approach we gain maximum flexibility with regards to the
implementations of the SFCs used in the neighboring trees. 
We optimized our \texttt{Ghost} routine by using recently developed recursive
search routines for tree-based AMR.
This algorithm has optimal strong and weak scaling behavior and we obtain
runtimes comparable to current state-of-the-art implementations with fixed
element-types~\cite{IsaacBursteddeWilcoxEtAl15}.

We concluded our investigation of element-type independent AMR by showing that
we can implement a scalable straightforward version of the 2:1 Balance routine based on
a ripple algorithm~\cite{TuOHallaronGhattas05,TuOHallaron04}, using the
available functionalities such as \texttt{Ghost}.

By implementing and investigating a finite volume solver we demonstrated that
\tetcode can easily be integrated in numerical solvers and maintains
excellent runtimes and scaling behavior in a practical application setting.
We underlined the flexibility in terms of element-types by performing
tests on different types of meshes including 2D and 3D hybrid meshes.
\vspace{2ex}

Concluding, we developed 
a new SFC for simplicial mesh refinement and 
a complete software library for element-type independent tree-based AMR that
supports different types of elements as well as hybrid meshes. 
We demonstrated excellent strong and weak scaling behavior on current
high-performance computing systems and showed that our software can be directly
used by numerical applications.

Our modular approach of separating the high-level and low-level algorithms
proved to be very succesful and
leads to a strong flexibility, allowing us to reuse the same high-level
algorithms for different element types and SFCs.
We demontrate this for example by testing \texttt{Ghost} and \texttt{Balance}
with hexehadral and with tetrahedral elements and
by applying the advection solver to various different meshes including 2D and
3D hybrid meshes.
With the definition of an API for the low-level algorithms, we enable users of
\tetcode to reuse existing implementations for an element-type, such as we did
throughout this thesis by using the \pforest implementation for the
quadrilateral and hexahedral Morton SFC.
This can greatly reduce the amount of work when extending existing applications
with the tree-based AMR functionality from \tetcode.
Another advantage is that additional element-types can be added to \tetcode
with little effort, such as shown in a Bachelor's thesis that includes the
support for prisms into \tetcode without any necessary changes to the
high-level algorithms~\cite{Knapp17}.

We are hopeful that the techniques presented in this thesis can lead to
tree-based AMR being used in areas that are currently dominated by unstructured
meshing, such as large-scale engineering applications with hybrid meshes; for
example~\cite{VinchurkarLongest08,MartineauStokesMundayEtAl06,KirbyBrazellYangEtAl17}.

\chapter{Outlook}
\label{ch:outlook}
Our results and algorithms provide a significant improvement in tree-based AMR
techniques, especially considering their generality and flexibility.
Nevertheless, there are various possibilities for further research and
improvement, of which we briefly discuss some examples here.

A first possible extension is to develop an SFC for pyramidal refinement
and establish the necessary low-level algorithms. This extends the
functionality to fully hybrid meshes consisting of tetrahedra, hexahedra,
prisms, and pyramids and thus allows for complex engineering applications
as for example described in~\cite{VinchurkarLongest08,KirbyBrazellYangEtAl17}. 
A similar possible extension are 4D elements such as hypercubes and 4D simplices.
These are for example used to model 4D space-time~\cite{FoteinosChrisochoides15}.

Also, if we consider the history of development of tree-based
quadrilateral/hexahedral AMR, with particular emphasis on the \pforest
library, we see several future research opportunities:

As we discuss in Sections~\ref{ch:balance} and~\ref{ch:app}, \texttt{Balance}
is the most expensive high-level algorithm in \tetcode,
which was also the case for the original implementation in \pforest.
Thus, developing a state-of-the art version along the lines
of~\cite{IsaacBursteddeGhattas12} would significantly reduce the runtime and
improve the overall scaling behavior of \texttt{Balance}.
A particular challenge in this development will be the generalization
of an element's insulation layer to arbitrary element types
and across trees of different types.

A second high-level algorithm to improve is \texttt{Iterate} to execute a user
provided callback on all mesh elements and element-to-element interfaces. In
this thesis we use a basic loop; however, a sophisticated implementation using
search algorithms that handles hanging nodes automatically is possible. Such a
universal mesh topology iterator for quadri\-late\-rals/hexa\-hedra is
described in~\cite{IsaacBursteddeWilcoxEtAl15}, and we believe these techniques
can be transferred to the general approach presented in this thesis.

Other points are the extension of the geometry handling to edge and vertex
neighbors and the development of a vertex numbering
scheme~\cite{IsaacBursteddeWilcoxEtAl15}. This can then be utilized by
application codes that are vertex based, such as for example finite element
solvers. A starting point could be the implementation of the \texttt{Nodes}
algorithm from~\cite{BursteddeWilcoxGhattas11}, which would require handling
of hanging edges and the computation of edge neighbors as a low-level
function and in the coarse mesh.

Furthermore, it would be interesting to use higher order geometry
representations for the coarse mesh. Each tree in the coarse mesh is equipped
with geometry interpolation points, which allows to use curved geometries for
each tree; see for
example~\cite{ZhangGuChenEtAl09,WilcoxStadlerBursteddeEtAl10}. This technique
can improve the overall geometric accuracy while decreasing the number of
coarse mesh trees. An implementation into \tetcode would be straightforward,
utilizing the existing data managment algorithms of the coarse mesh.

 \appendix
\chapter{The Low-Level API}
\label{ch:appendix}

In this appendix we list all low-level functions from the \tetcode library.  In
order to introduce a new element type, all of these functions have to be
implemented for this element type.
Currently, \tetcode supports the following element types.
\begin{itemize}
 \item 0D Vertices
 \item 1D Lines with Morton order
 \item 2D Quadrilaterals with Morton order using the \pforest library~\cite{Burstedde10a}
 \item 3D Hexahedra with Morton order using \pforest
 \item 2D Triangles with Tetrahedral-Morton order
 \item 3D Tetrahedra with Tetrahedral-Morton order
 \item 3D Prisms as a cross product of Lines and Triangles~\cite{Knapp17}
\end{itemize}
We list all low-level functions together with a brief description of their
effect in Table~\ref{tab:appendix}. For more details, we refer the reader
to the documentation of \tetcode~\cite{tetcode}.

\begin{longtable}{l|R{40ex}}
Function name & effect\\\hline
\texttt{t8\_element\_maxlevel}& Return the maximum possible refinement level\\%  (void)
\texttt{t8\_element\_child\_eclass}& Return the element type of the $i$-th child\\%  (int childid)
\texttt{t8\_element\_level}& Return an element's level\\%  (const\texttt{t8\_element\_t * elem)
\texttt{t8\_element\_copy}& Copy an element\\%  (const\texttt{t8\_element\_t * source,\texttt{t8\_element\_t * dest)
\texttt{t8\_element\_compare}& Compare the SFC indices of two elements\\\hline%
\texttt{t8\_element\_parent}& Compute the parent of an element\\%  (const\texttt{t8\_element\_t * elem,\texttt{t8\_element\_t * parent)
\texttt{t8\_element\_sibling}& Compute the $i$-th sibling (child of parent)\\%  (const\texttt{t8\_element\_t * elem,int sibid, \texttt{t8\_element\_t * sibling)
\texttt{t8\_element\_num\_corners}& Return the number of vertices of an element\\%  (const\texttt{t8\_element\_t * elem)
\texttt{t8\_element\_num\_faces}& Return the number of faces of an element\\%  (const\texttt{t8\_element\_t * elem)
\texttt{t8\_element\_max\_num\_faces}& Return the maximum number of faces of any descendant of an element\\\hline%
\texttt{t8\_element\_num\_children}& Return the number of children of an element\\%  (const\texttt{t8\_element\_t * elem)
\texttt{t8\_element\_num\_face\_children}& Return the number of children at a given face\\%  (const\texttt{t8\_element\_t *  elem, int face)
\texttt{t8\_element\_get\_face\_corner}& Given a face $f$ and an index $i$ return the element local index of the $i$-th corner of $f$\\%  (const\texttt{t8\_element\_t *  element, int face, int corner)
\texttt{t8\_element\_get\_corner\_face}& Given a corner $i$ and an index $j$ return
the $j$-th face sharing corner $i$\\%  (const\texttt{t8\_element\_t *element, int corner,int face)
\texttt{t8\_element\_child}& Construct the $i$-th child of an element\\\hline%
\texttt{t8\_element\_children}& Construct all children of an element\\%  (const\texttt{t8\_element\_t * elem,int length,\texttt{t8\_element\_t * c[])
\texttt{t8\_element\_child\_id}& Return the child-id of an element\\%  (const\texttt{t8\_element\_t * elem)
\texttt{t8\_element\_ancestor\_id}& Return the child-id in the level $\ell$ ancestor\\%  (const\texttt{t8\_element\_t * elem,int level)
\texttt{t8\_element\_is\_family}& Given a collection of elements, query whether 
they form a family (that is, they form the output of \texttt{t8\_element\_children}
for some elemnet)\\%  (\texttt{t8\_element\_t ** fam)
\texttt{t8\_element\_nca}& Construct the nearest common ancestor of two elements\\\hline%
\texttt{t8\_element\_face\_class}& Return the element type of the $i$-th face\\%  (const\texttt{t8\_element\_t * elem, int face)
\texttt{t8\_element\_children\_at\_face}& Construct all children that share the $i$-th face\\%  (const\texttt{t8\_element\_t * elem,int face,\texttt{t8\_element\_t * children[],int num\_children,int *child\_indices)
\texttt{t8\_element\_face\_child\_face}&  Given a face of an element and 
the $i$-th child of that face, return the face number
   of the child of the element that matches the child face.\\%  (const\texttt{t8\_element\_t * elem,int face, int face\_child)
\texttt{t8\_element\_face\_parent\_face}& Given a face of an element, return the face number of the parent of the element that matches the element's face\\%  (const\texttt{t8\_element\_t * elem,int face)
\texttt{t8\_element\_tree\_face}& If a given face lies on the tree boundary, return
the face number of the respective tree face\\\hline%
\texttt{t8\_element\_transform\_face}& Transform the coordinates of a $(d-1)$-dimensional 
element across a $d$-dimensional tree-to-tree face connection\\%  (const\texttt{t8\_element\_t * elem1,\texttt{t8\_element\_t * elem2,int orientation,int sign,int is\_smaller\_face)
\texttt{t8\_element\_extrude\_face}& From a $(d-1)$-dimensional element
at a $d$-dimensional tree boundary, construct the respective $d$-dimensional
element within the tree.\\%  (const\texttt{t8\_element\_t * face, const\texttt{t8\_eclass\_scheme\_c * face\_scheme, \texttt{t8\_element\_t * elem,int root\_face)
\texttt{t8\_element\_boundary\_face}& From a $d$-dimensional element
and a face $f$ construct the $(d-1)$-dimensional element representing
the face\\%  (const\texttt{t8\_element\_t * elem,int face,\texttt{t8\_element\_t * boundary,const\texttt{t8\_eclass\_scheme\_c *boundary\_scheme)
\texttt{t8\_element\_first\_descendant\_face}& Construct the
first descendant at a given face\\%  (const\texttt{t8\_element\_t * elem, int face,\texttt{t8\_element\_t *first\_desc,int level)
\texttt{t8\_element\_last\_descendant\_face}& Construct the last 
descendant at a given face\\\hline%
\texttt{t8\_element\_is\_root\_boundary}& Query whether an element lies at its
root tree's boundary across a given face\\%  (const\texttt{t8\_element\_t * elem,int face)
\texttt{t8\_element\_face\_neighbor\_inside}& Construct the face-neighbor
of an element within the same tree\\%  (const\texttt{t8\_element\_t * elem,\texttt{t8\_element\_t * neigh,int face,int *neigh\_face)
\texttt{t8\_element\_set\_linear\_id}& Construct an element given a SFC index\\%  (\texttt{t8\_element\_t * elem, int level,\texttt{t8\_linearidx\_t id)
\texttt{t8\_element\_get\_linear\_id}& Return the SFC index of a given element\\%  (const \texttt{t8\_element\_t *elem, int level)
\texttt{t8\_element\_first\_descendant}& Construct the first descendant
of an element\\\hline%
\texttt{t8\_element\_last\_descendant}& Construct the last descendant of an
element\\%  (const\texttt{t8\_element\_t * elem,\texttt{t8\_element\_t * desc,int level)
\texttt{t8\_element\_successor}& Construct the successor (w.r.t.\ the SFC index)
of a given element\\%  (const\texttt{t8\_element\_t * t, \texttt{t8\_element\_t * s, int level)
\texttt{t8\_element\_anchor}& Return the (integer) coordinates of
the anchor node of an element\\%  (const\texttt{t8\_element\_t * elem, int anchor[3])
\texttt{t8\_element\_root\_len}& Return the (integer) length of the root 
tree of an element\\%  (const\texttt{t8\_element\_t * elem)
\texttt{t8\_element\_vertex\_coords}& Return the (integer) coordinates\\
&of a given vertex of an element\\\hline%
\caption{All low-level functions provided by \tetcode with a brief description.}
\label{tab:appendix}
\end{longtable}
\bibliographystyle{plain}
\bibliography{./bib/ccgo_new,./bib/group,./bib/johannes}
\listoftables
\listoffigures
\end{document}